\newcommand{\remove}[1]{}
\newcommand{\ignore}[1]{}
\newtheorem{thm}{Theorem}[section]
\newtheorem{claim}[thm]{Claim}
\newtheorem{lemma}[thm]{Lemma}
\newtheorem{cor}[thm]{Corollary}
\newcommand{\lastfootnote}{\footnotemark[\value{footnote}]}
\newcounter{validFootnote}
\theoremstyle{definition}
\newtheorem{defn}[thm]{Definition}
\theoremstyle{remark}
\newcommand{\If}{{\bf if}\space}
\newcommand{\Then}{\space{\bf then:}\space \space}
\newcommand{\Else}{{\bf else:}\space \space}
\newcommand{\ElseIf}{{\bf else if}\space}
\newcommand{\For}{{\bf for every}\space}
\newcommand{\Send}{$\mathsf{send} \thickspace$}
\newcommand{\Receive}{$\mathsf{receive} \thickspace$}
\title{\huge \sf \bf Authenticated Adversarial Routing \ \\ \ }
\author{Yair Amir\thanks{Johns Hopkins University Department of Computer Science. Email: {\tt yairamir@cs.jhu.edu} Part of this work was done while visiting IPAM and supported in part by NSF grant 0430254.} \and Paul Bunn\thanks{UCLA Department of Mathematics.  Email: {\tt bunn@math.ucla.edu} Supported in part by NSF grants  0430254, 0716835, 0716389 and 0830803.} \and Rafail Ostrovsky\thanks{UCLA Departments of Computer Science and Department of Mathematics.  Email: {\tt rafail@cs.ucla.edu} Part of this work was done while visiting IPAM and supported in part by IBM Faculty Award, Xerox Innovation Group Award, NSF grants 0430254, 0716835, 0716389, 0830803 and U.C. MICRO grant.}}
\date{ }
\begin{document}
\maketitle

\begin{abstract}
\ignore{
The Shannon Coding theorem \cite{Shannon} states that given a noisy channel with channel capacity $C$ and information rate $R$,  if $R < C$ then there exist codes that allow the probability of error at the receiver to be made arbitrarily small.  This classical theorem was extended by Rajagopalan and Schulman in 1994 \cite{RS} to an interactive coding theorem for static topology networks, where they showed that in noisy-edge networks a constant overhead in communication can also be achieved (provided none of the processors are malicious), thus establishing an optimal-rate coding theorem for static-topology networks consisting of trustworthy nodes.
}

The aim of this paper is to demonstrate
the feasibility of authenticated throughput-efficient routing in an unreliable and dynamically changing synchronous network in which the majority of malicious insiders try to destroy and alter messages or disrupt communication in any way.  More specifically, in this paper we seek to answer the following question:  Given a network in which the majority of nodes are controlled by a node-controlling adversary and whose topology is changing every round, is it possible to develop a protocol with polynomially-bounded memory per processor that guarantees throughput-efficient and correct end-to-end communication? We answer the question affirmatively for extremely general corruption patterns: we only request that the topology of the network and the corruption pattern of the adversary leaves at least one path each round connecting the sender and receiver through honest nodes (though this path may change at every round).  Out construction works in the public-key setting and enjoys bounded memory per processor (that does not depend on the amount of traffic and is polynomial in the network size.) Our protocol achieves {\em optimal transfer rate} with negligible decoding error.  We stress that our protocol assumes no knowledge of which nodes are corrupted nor which path is reliable at any round, and is also fully distributed with nodes making decisions locally, so that they need not know the topology of the network at any time.
%

The optimality that we prove for our protocol is very strong.  Given any routing protocol, we evaluate its efficiency (rate of message delivery) in the ``worst case,'' that is with respect to the {\em worst} possible graph and against the {\em worst} possible (polynomially bounded) adversarial strategy (subject to the above mentioned connectivity constraints).  Using this metric, we show that there does not exist {\em any} protocol that can be asymptotically superior (in terms of throughput) to ours in this setting.

We remark that the aim of our paper is to demonstrate via explicit example the feasibility of throughput-efficient authenticated adversarial routing.  However, we stress that out protocol is {\it not} intended to provide a practical solution, as due to its complexity, no attempt thus far has been made to make the protocol practical by reducing constants or the large (though polynomial) memory requirements per processor.

Our result is related to recent work of Barak, Goldberg and Xiao in 2008 \cite{BGX} who studied fault localization in networks assuming a private-key trusted setup setting. Our work, in contrast, assumes a public-key PKI setup and aims at not only fault localization, but also transmission optimality. Among other things, our work answers one of the open questions posed in the Barak et.\ al.\ paper regarding fault localization on multiple paths. The use of a public-key setting to achieve strong error-correction results in networks was inspired by the work of Micali, Peikert, Sudan and Wilson \cite{MPSW} who showed that classical error-correction against a polynomially-bounded adversary can be achieved with surprisingly high precision. Our work is also related to an interactive coding theorem of Rajagopalan and Schulman \cite{RS} who showed that in noisy-edge static-topology networks  a constant overhead in communication can also be achieved (provided none of the processors are malicious), thus establishing an optimal-rate routing theorem for static-topology networks.  Finally, our work is closely related and builds upon to the problem of End-To-End Communication in distributed networks, studied by Afek and Gafni \cite{AG}, Awebuch, Mansour, and Shavit \cite{AMS}, and Afek, Awerbuch, Gafni, Mansour, Rosen, and Shavit \cite{Slide}, though none of these papers consider or ensure correctness in the setting of a node-controlling adversary that may corrupt the majority of the network.

\bigskip
\noindent {\bf Keywords:} Network Routing; Error-correction; Fault Localization; Multi-parity Computation in the presence of Dishonest Majority; Communication Complexity; End-to-End Communication.
\end{abstract}
%
%
\section{Introduction}
\indent \indent Our goal is to design a routing protocol for an unreliable and dynamically changing synchronous network that is resilient against malicious insiders who may try to destroy and alter messages or disrupt communication in any  way. We model the network as a communication graph $G=(V,E)$ where each vertex is a processor and each edge is a communication link. We do not assume that the topology of this graph is fixed or known by the processors. Rather, we assume a complete graph on $n$ vertices, where some of the edges are ``up'' and some are ``down'', and the status of each edge can change dynamically at any time.

We concentrate on the most basic task, namely how two processors in the network can exchange information. Thus, we assume that there are two designated vertices, called the sender $S$ and the receiver $R$, who wish to communicate with each other.  The sender has an infinite read-once input tape of {\em packets} and the receiver has an infinite write-once {\em output tape} which is initially empty.  We assume that packets are of some bounded size, and that any edge in the system that is ``up'' during some round can transmit only one packet (or control variables, also of bounded size) per round.

We will evaluate our protocol using the following three considerations:
	\begin{enumerate}\setlength{\itemsep}{4pt} \setlength{\parskip}{0pt} \setlength{\parsep}{0pt}
	\item {\bf Correctness.}  A protocol is {\it correct} if the sequence of packets output by the receiver is a prefix of packets appearing on the sender's input tape, without duplication or omission.
	
	\item {\bf Throughput.}  This measures the number of packets on the output tape as a function of the number of rounds that have passed.
	
	\item {\bf Processor Memory.}  This measures the memory required of each node by the protocol, independent of the number of packets to be transferred.
	\end{enumerate}
All three considerations will be measured in the worst-case scenario as standards that are guaranteed to exist regardless of adversarial interference.  One can also evaluate a protocol based on its dependence on global information to make decisions.  In the protocol we present in this paper, we will not assume there is any global view of the network available to the internal nodes.  Such protocols are termed ``local control,'' in that each node can make all routing decisions based only the local conditions of its adjacent edges and neighbors.

Our protocol is designed to be resilient against a malicious, polynomially-bounded adversary who may attempt to impact the {\it correctness}, {\it throughput}, and {\it memory} of our protocol by disrupting links between the nodes or taking direct control over the nodes and forcing them to deviate from our protocol in any manner the adversary wishes.  In order to relate our work to previous results and to clarify the two main forms of adversarial interference, we describe two separate (yet coordinated with each other) adversaries\footnote{The separation into two separate adversaries is artificial: our protocol is secure whether edge-scheduling and corruption of nodes are performed by two separate adversaries that have different capabilities yet can coordinate their actions with each other, or this can be viewed as a single coordinated adversary.}:
	\begin{enumerate}\setlength{\itemsep}{4pt} \setlength{\parskip}{0pt} \setlength{\parsep}{0pt}
	\item[] \textsf{\large Edge-Scheduling Adversary.}  This adversary controls the {\it links} between nodes every round.  More precisely, at each round, this adversary decides which edges in the network are up and which are down.  We will say that the edge-scheduling adversary is {\em conforming} if for every round there is at least one path from the sender to the receiver (although the path may change each round)\footnote{A more general definition of an edge-scheduling adversary would be to allow completely arbitrary edge failures, with the exception that in the limit there is no permanent cut between the sender and receiver. However, this definition (while more general) greatly complicates the exposition, including the definition of throughput rate, and we do not treat it here.}.  The adversary can make any arbitrary poly-time computation to maximize interference in routing, so long as it remains {\em conforming}.

	\item[] \textsf{\large Node-Controlling Adversary.}  This adversary controls the {\it nodes} of the network that it has corrupted.  More precisely, each round this adversary decides which nodes to corrupt.  Once corrupted, a node is forever under complete adversarial control and can behave in an arbitrary malicious manner. We say that the node-controlling adversary is {\em conforming} if every round there is a connection between the sender and receiver consisting of edges that are ``up'' for the round (as specified by the edge-scheduling adversary) and that passes through {\it uncorrupted} nodes.  We emphasize that this path can change each round, and there is no other restriction on which nodes the node-controlling adversary may corrupt (allowing even a vast majority of corrupt nodes).
	\end{enumerate}
There is another reason to view these adversaries as distinct: we deal with the challenges they pose to correctness, throughput, and memory in different ways.  Namely, aside from the conforming condition, the edge-scheduling adversary cannot be controlled or eliminated.  Edges themselves are not inherently ``good'' or ``bad,'' so identifying an edge that has failed does not allow us to forever refuse the protocol to utilize this edge, as it may come back up at any time (and indeed it could form a crucial link on the path connecting the sender and receiver that the conforming assumption guarantees).  In sum, we cannot hope to control or alter the behavior of the edge-scheduling adversary, but must come up with a protocol that works well regardless of the behavior of the ever-present (conforming) edge-scheduling adversary.

By contrast, our protocol will limit the amount of influence the node-controlling adversary has on correctness, throughput, and memory.  Specifically, we will show that if a node deviates from the protocol in a sufficiently destructive manner (in a well-defined sense), then our protocol will be able to identify it as corrupted in a timely fashion. Once a corrupt node has been identified, it will be {\it eliminated} from the network.  Namely, our protocol will call for honest nodes to refuse all communication with nodes that have been identified as corrupt\footnote{The {\it conforming} assumption guarantees that the sender and receiver are incorruptible, and our protocol places the responsibility of identifying and eliminating corrupt nodes on these two nodes.}.  Thus, there is an inherent difference in how we handle the edge-scheduling adversary verses how we handle the node-controlling adversary.  We can restrict the influence of the latter by eliminating the nodes it has corrupted, while the former must be dealt with in a more ever-lasting manner.
\subsection{Previous Work}
\indent \indent To motivate the importance of the problem we consider in this paper, and to emphasize the significance of our result, it will be useful to highlight recent works in related areas.  To date, routing protocols that consider adversarial networks have been of two main flavors: {\it End-to-End Communication} protocols that consider dynamic topologies (a notion captured by our ``edge-scheduling adversary''), and {\it Fault Detection and Localization} protocols, which handle devious behavior of nodes (as modeled by our ``node-controlling adversary'').

\bigskip
\noindent {\sc End-to-End Communication:} One of the most relevant research directions to our paper is the notion of End-to-End communication in distributed networks, considered by Afek and Gafni \cite{AG}, Awerbuch, Mansour and Shavit \cite{AMS},  Afek, Awebuch,  Gafni, Mansour, Rosen, and Shavit \cite{Slide}, and Kushilevitz, Ostrovsky and Rosen \cite{KO} . Indeed, our starting point is the {\em Slide} protocol\footnote{Also known in practical works as ``gravitational flow'' routing.} developed in these works. It was designed to perform end-to-end communication with bounded memory in a model where (using our terminology) an edge-scheduling adversary controls the edges (subject to the constraint that there is no permanent cut between the sender and receiver).  The Slide protocol has proven to be incredibly useful in a variety of settings, including multi-commodity flow (Awerbuch and Leigthon \cite{AL}) and in developing routing protocols that compete well (in terms of packet loss) against an online bursty adversary (\cite{AKOR}).  However, prior to our work there was no version of the Slide protocol that could handle malicious behavior of the nodes.  A comparison of various versions of the Slide protocol and our protocol is featured in Figure \ref{tableComparison} of Section \ref{ourResults} below.

\bigskip
\noindent {\sc Fault Detection and Localization Protocols}: At the other end, there have been a number of works that explore the possibility of a node-controlling adversary that can corrupt nodes.  In particular, there is a recent line of work that considers a network consisting of a {\em single path} from the sender to the receiver, culminating in the recent work of Barak, Goldberg and Xiao \cite{BGX} (for further background on fault localization see references therein). In this model, the adversary can corrupt any node on the path (except the sender and receiver) in a dynamic and malicious manner. Since corrupting any node on the path will sever the honest connection between $S$ and $R$, the goal of a protocol in this model is {\em not} to guarantee that all messages sent to $R$ are received. Instead, the goal is to {\it detect} faults when they occur and to {\it localize} the fault to a single edge.

There have been many results that provide Fault Detection (FD) and Fault Localization (FL) in this model.  In Barak et.\ al.\ \cite{BGX}, they formalize the definitions in this model and the notion of a secure FD/FL protocol, as well as providing lower bounds in terms of communication complexity to guarantee accurate fault detection/location in the presence of a node-controlling adversary. While the Barak et.\ al.\ paper has a similar flavor to our paper, we emphasize that their protocol does not seek to guarantee successful or efficient routing between the sender and receiver. Instead, their proof of security guarantees that if a packet is deleted, malicious nodes cannot collude to convince $S$ that no fault occurred, nor can they persuade $S$ into believing that the fault occurred on an honest edge. Localizing the fault in their paper relies on cryptographic tools, and in particular the assumption that one-way functions exist. Although utilizing these tools (such as MACs or Signature Schemes) increases communication cost, it is shown by Goldberg, Xiao, Barak, and  Redford \cite{GXBR} that the existence of a protocol that is able to securely detect faults (in the presence of a node-controlling adversary) implies the existence of one-way functions, and it is shown in Barak et.\ al.\ \cite{BGX} that {\it any} protocol that is able to securely localize faults necessarily requires the intermediate nodes to have a trusted setup. The proofs of these results do not rely on the fact that there is a single path between $S$ and $R$, and we can therefore extend them to the more general network encountered in our model to justify our use of cryptographic tools and a trusted setup assumption (i.e.\ PKI) to identify malicious behavior.

Another paper that addresses routing in the Byzantine setting is the work of Awerbuch, Holmes, Nina-Rotary and Rubens \cite{AHNR}, though this paper does not have a fully formal treatment of security, and indeed a counter-example that challenges its security is discussed in the appendix of \cite{BGX}.

\bigskip
\noindent{\sc Error-correction in the active setting:} Due to space considerations, we will not be able to give a comprehensive account of all the work in this area.  Instead we highlight some of the most relevant works and point out how they differ from our setting and results. For a lengthy treatment of error-correcting codes against polynomially bounded adversaries, we refer to the work of Micali at. al \cite{MPSW} and references therein.  It is important to note that this work deals with a graph with a single ``noisy'' edge, as modelled by an adversary who can partially control and modify information that crosses the edge.  In particular, it does not address throughput efficiency or memory considerations in a full communication network, nor does it account for malicious behavior at the vertices.  Also of relevance is the work on Rajagopalan and Schulman on error-correcting network coding \cite{RS}, where they show how to correct noisy edges during distributed computation. Their work does not consider actively  malicious nodes, and thus is different from our setting. It should also be noted that their work utilizes Schulman's tree-codes \cite{S} that allow length-flexible online error-correction. The important difference between our work and that of Schulman is that in our network setting, the amount of malicious activity of corrupt nodes is not restricted.
\subsection{Our Results} \label{ourResults}
\indent \indent  To date, there has not been a protocol that has considered simultaneously a network susceptible to faults occurring due to edge-failures {\it and} faults occurring due to malicious activity of corrupt nodes.  The end-to-end communication works are not secure when the nodes are allowed to become corrupted by a node-controlling adversary, and the fault detection and localization works focus on a {\it single path} for some duration of time, and do not consider a fully distributed routing protocol that utilizes the entire network and attempts to maximize throughput efficiency while guaranteeing correctness in the presence of edge-failures and corrupt nodes.  Indeed, our work answers one of the open questions posed in the Barak et.\ al.\ paper regarding fault localization on multiple paths.  In this paper we bridge the gap between these two research areas and obtain the first routing protocol simultaneously secure against both an edge-scheduling adversary and a node-controlling adversary, even if these two adversaries attack the network using an arbitrary coordinated poly-time strategy.  Furthermore, our protocol achieves comparable efficiency standards in terms of throughput and processor memory as state-of-the-art protocols that are not secure against a node-controlling adversary and does so using {\em local-control} protocols.  An informal statement of our result and comparison of our protocol to existing protocols can be found below.  Although not included in the table, we emphasize that the linear transmission rate that we achieve (assuming at least $n^2$ messages are sent) is asymptotically optimal, as {\it any} protocol operating in a network with a single path connecting sender and receiver can do no better than one packet per round.

\bigskip
\noindent {\bf A ROUTING THEOREM FOR ADVERSARIAL NETWORKS (Informal):} {\sf If one-way functions exist, then for any $n$-node graph and $k$ sufficiently large, there exists a trusted-setup {\em linear throughput} transmission protocol that can send $n^2$ messages in $O(n^2)$ rounds with $O(n^4 (k+\log n))$ memory per processor that is resilient against {\bf any} poly-time conforming Edge-Scheduling Adversary and {\bf any} conforming poly-time Node-Controlling Adversary, with negligible (in $k$) probability of failure or decoding error.}\\
%
\begin{figure}[h!t]
\begin{center}
$\begin{array}{|c||c|c|c|c|}  \hline   & & & & \\ & \mbox{\footnotesize Secure Against} & \mbox{\footnotesize Secure Against} & \mbox{\footnotesize Processor} & \mbox{\footnotesize Throughput Rate} \\ & \mbox{\footnotesize Edge-Sched.\ Ad?} & \mbox{\footnotesize Node-Cntr.\ Ad?} & \mbox{\footnotesize Memory}& {\scriptstyle x \thickspace} \mbox{\footnotesize rounds} {\scriptstyle \rightarrow f(x) \thickspace} \mbox{\footnotesize packets}\\ \hline \hline \mbox{\footnotesize Slide Protocol of \cite{Slide}} &  YES & NO & O(n^2 \log n) & f(x) = O(x-n^2) \\ \hline \mbox{\footnotesize Slide Protocol of \cite{KO}} & YES & NO & O(n \log n) & f(x) = O(x/n-n^2) \\ \hline \mbox{\footnotesize (folklore)} & & & & \\ \mbox{\footnotesize (Flooding $+$ Signatures)} & YES & YES & O(1) & f(x) = O(x/n-n^2)\\ \hline \mbox{\footnotesize (folklore)} & & & & \\ \mbox{\footnotesize (Signatures $+$ Sequence No.'s)} & YES & YES & unbounded & f(x) = O(x - n^2) \\ \hline \hline & & & &  \\
\mbox{\footnotesize Our Protocol} & YES & YES & O(n^4 {\scriptstyle (k}\mbox{\footnotesize +}{\scriptstyle \log n)}) & f(x) = O(x - n^2) \\
\hline
\end{array}$
\end{center}
\caption{Comparison of Our Protocol to Related Existing Protocols and Folklore. }
\label{tableComparison}
\end{figure}
\vspace{-.5cm}
\section{Challenges and Na\"{i}ve Solutions}
\indent \indent Before proceeding, it will be useful to consider a couple of na\"{i}ve %
solutions that achieve the goal of {\it correctness} (but perform poorly in terms of {\it throughput}), and help to illustrate some of the technical challenges that our theorem resolves.  Consider the approach of having the sender continuously {\em flood} a single signed packet into the network for $n$ rounds.  Since the {\it conforming} assumption guarantees that the network provides a path between the sender and receiver through honest nodes at every round, this packet will reach the receiver within $n$ rounds, regardless of adversarial interference.  After $n$ rounds, the sender can begin flooding the network with the next packet, and so forth\footnote{An alternative approach would have the sender continue flooding the first packet, and upon receipt, the receiver floods confirmation of receipt.  This alternative solution requires sequence numbers to accompany packets/confirmations, and the rule that internal nodes only keep and broadcast the packet and confirmation with largest sequence number.  Although this alternative may potentially speed things up, in the worst-case it will still take $O(n)$ rounds for a single packet/confirmation pair to be transmitted.}.  Notice that this solution will require each processor to store and continuously broadcast a single packet at any time, and hence this solution achieves excellent efficiency in terms of {\it processor memory}. However, notice that the {\em throughput} rate is sub-linear, namely after $x$ rounds, only $O(x/n)$ packets have been outputted by the receiver.

One idea to try to improve the throughput rate might be to have the sender streamline the process, sending packets with ever-increasing sequence numbers without waiting for $n$ rounds to pass (or signed acknowledgments from the receiver) before sending the next packet.  In particular, across each of his edges the sender will send every packet once, waiting only for the neighboring node's confirmation of receipt before sending the next packet across that edge.  The protocol calls for the internal nodes to act similarly.  Analysis of this approach shows that not only has the attempt to improve throughput failed (it is still $O(x/n)$ in the worst-case scenario), but additionally this modification requires arbitrarily large (polynomial in $n$ and $k$) processor memory, since achieving correctness in the dynamic topology of the graph will force the nodes to remember all of the packets they see until they have broadcasted them across all adjacent edges or seen confirmation of their receipt from the receiver.
\subsection{Challenges in Dealing with Node-Controlling Adversaries}
\indent \indent In this section, we discuss some potential strategies that the node-controlling and edge-scheduling adversaries\footnote{We give a formal definition of the adversary in Section \ref{adversary}.} may incorporate to disrupt network communication.  Although our theorem will work in the presence of {\it arbitrary} malicious activity of the adversarial controlled nodes (except with negligible probability), it will be instructive to list a few obvious forms of devious behavior that our protocol must protect against.  It is important to stress that this list is {\it not} intended to be exhaustive. Indeed, we do not claim to know all the specific ways an arbitrary polynomially bounded adversary may force nodes to deviate from a given protocol, and in this paper we rigorously prove that our protocol is secure against all possible deviations.
	\begin{itemize}\setlength{\itemsep}{2pt} \setlength{\parskip}{0pt} \setlength{\parsep}{0pt}
	\item \textsf{Packet Deletion/Modification.}  Instead of forwarding a packet, a corrupt node ``drops it to the floor'' (i.e.\ deletes it or effectively deletes it by forever storing it in memory), or modifies the packet before passing it on.  Another manifestation of this is if the sender/receiver requests fault localization information of the internal nodes, such as providing documentation of their interactions with neighbors.  A corrupt node can then block or modify information that passes through it in attempt to hide malicious activity or implicate an honest node.

	\item \textsf{Introduction of Junk/Duplicate Packets.} The adversary can attempt to disrupt communication flow and ``jam'' the network by having corrupted nodes introduce junk packets or re-broadcast old packets.  Notice that junk packets can be handled by using cryptographic signatures to prevent introduction of ``new'' packets, but this does not control the re-transmission of old, correctly signed packets.

	\item \textsf{Disobedience of Transfer Rules.}  If the protocol specifies how nodes should make decisions on where to send packets, etc., then corrupt nodes can disregard these rules.  This includes ``lying'' to adjacent nodes about their current state.

	\item \textsf{Coordination of Edge-Failures.}  The edge-scheduling adversary can attempt to disrupt communication flow by scheduling edge-failures in any manner that is consistent with the {\it conforming} criterion.  Coordinating edge failures can be used to impede correctness, memory, and throughput in various ways: e.g.\ packets may become lost across a failed edge, stuck at a suddenly isolated node, or arrive at the receiver out of order.  A separate issue arises concerning fault localization: when the sender/receiver requests documentation from the internal nodes, the edge-scheduling adversary can slow progress of this information, as well as attempt to protect corrupt nodes by allowing them to ``play-dead'' (setting all of its adjacent edges to be {\it down}), so that incriminating evidence cannot reach the sender.
	\end{itemize}
\subsection{Highlights of Our Solution} \label{highlights}
\indent \indent Our starting point is the Slide protocol \cite{Slide}, which has enjoyed practical success in networks with dynamic topologies, but is not secure against nodes that are allowed to behave maliciously.  We provide a detailed description of our version of the Slide protocol in Section \ref{nAP}, but highlight the main ideas here.  Begin by viewing the edges in the graph as consisting of two directed edges, and associate to each end of a directed edge a {\it stack} data-structure able to hold $2n$ packets and to be maintained by the node at that end.  The protocol specifies the following simple, local condition for transferring a packet across a directed edge: if there are more packets in the stack at the originating end than the terminating end, transfer a packet across the edge.  Similarly, within a node's local stacks, packets are shuffled to average out the stack heights along each of its edges.  Intuitively, packet movement is analogous to the flow of water: high stacks create a pressure that force packets to ``flow'' to neighboring lower stacks.  At the source, the sender maintains the pressure by filling his outgoing stacks (as long as there is room) while the receiver relieves pressure by consuming packets and keeping his stacks empty.  Loosely speaking, packets traveling to nodes ``near'' the sender will therefore require a very large potential, packets traveling to nodes near the receiver will require a small potential, and packet transfers near intermediate nodes will require packages to have a moderate potential.  Assuming these potential requirements exist, packets will pass from the sender with a high potential, and then ``flow'' downwards across nodes requiring less potential, all the way to the receiver.

Because the Slide protocol provides a fully distributed protocol that works well against an edge-scheduling adversary, our starting point was to try to extend the protocol by using digital signatures\footnote{In this paper we use  public-key operations to sign individual packets with control information. Clearly, this is too expensive to do per-packet in practice.  There are methods of amortizing the cost of signatures by signing ``batches'' of packets;  using private-key initialization \cite{BGX,GXBR}, or using a combination of private-key and public key operations, such as ``on-line/off-line'' signatures \cite{SIG1,SIG2}.  For the sake of clarity and since the primary focus of our paper is theoretical feasibility, we restrict our attention to the straight-forward public-key setting without considering these additional cost-saving techniques.} to provide resilience against Byzantine attacks and arbitrary malicious behavior of corrupt nodes. This proved to be a highly nontrivial task that required us to develop a lot of additional machinery, both in terms of additional protocol ideas and novel techniques for proving correctness. We give a detailed explanation of our techniques in Section \ref{aP} and formal pseudo-code in Section \ref{MpseudoCode}, as well as providing rigorous proofs of security in Section \ref{MProofs}.  However, below we first give a sample of some of the key ideas we used in ensuring our additional machinery would be provably secure against a node-controlling adversary, and yet not significantly affect throughput or memory, compared to the original Slide protocol:
	\begin{itemize}\setlength{\itemsep}{3pt} \setlength{\parskip}{0pt} \setlength{\parsep}{0pt}
	\item \textsc{Addressing the ``Coordination of Edge-Scheduling'' Issues.}  In the absence of a node-controlling adversary, previous versions of the Slide protocol (e.g.\ \cite{Slide}) are secure and efficient against an edge-scheduling adversary, and it will be useful to discuss how some of the challenges posed by a network with a dynamic topology are handled.  First, note that the total capacity of the stack data-structure is bounded by $4n^3$.  That is, each of the $n$ nodes can hold at most $2n$ packets in each of their $2n$ stacks (along each directed edge) at any time.
		\begin{itemize}\setlength{\itemsep}{2pt} \setlength{\parskip}{0pt} \setlength{\parsep}{0pt}
		\item To handle the loss of packets due to an edge going down while transmitting a packet, a node is required to maintain a copy of each packet it transmits along an edge until it receives confirmation from the neighbor of successful receipt.

		\item To handle packets becoming stuck in some internal node's stack due to edge failures, {\it error-correction} is utilized to allow the receiver to decode a full message without needing every packet.  In particular, if an error-correcting code allowing a fraction of $\lambda$ faults is utilized, then since the capacity of the network is $4n^3$ packets, if the sender is able to pump $4n^3/\lambda$ codeword packets into the network and there is no malicious deletion or modification of packets, then the receiver will necessarily have received enough packets to decode the message.
		
		\item The Slide protocol has a natural bound in terms of memory per processor of $O(n^2 \log n)$ bits, where the bottleneck is the possibility of a node holding up to $2n^2$ packets in its stacks, where each packet requires $O(\log n)$ bits to describe its position in the code.
		\end{itemize}
	Of course, these techniques are only valid if nodes are acting honestly, which leads us to our first extension idea.

	\item \textsc{Handling Packet Modification and Introduction of Junk Packets.}  Before inserting any packets into the network, the sender will authenticate each packet using his digital signature, and intermediate nodes and the receiver never accept or forward messages not appropriately signed.  This simultaneously prevents honest nodes becoming bogged down with junk packets, as well as ensuring that if the receiver has obtained enough authenticated packets to decode, a node-controlling adversary cannot impede the successful decoding of the message as the integrity of the codeword packets is guaranteed by the inforgibility of the sender's signature.

	\item \textsc{Fault Detection.}  In the absence of a node-controlling adversary, our protocol looks almost identical to the Slide protocol of \cite{Slide}, with the addition of signatures that accompany all interactions between two nodes.  First, the sender attempts to pump the $4n^3/\lambda$ codeword packets of the first message into the network, with packet movement exactly as in the original Slide protocol.  We consider all possible outcomes:
		\begin{enumerate}\setlength{\itemsep}{2pt} \setlength{\parskip}{0pt} \setlength{\parsep}{0pt}
		\item \textsf{The sender is able to insert all codeword packets and the receiver is able to decode.}  In this case, the message was transmitted successfully, and our protocol moves to transfer the next message.

		\item \textsf{The sender is able to insert all codeword packets, but the receiver has not received enough to decode.}  In this case, the receiver floods the network with a single-bit message indicating {\it packet deletion} has occurred.

		\item \textsf{The sender is able to insert all codeword packets, but the receiver cannot decode because he has received duplicated packets.}  Although the sender's authenticating signature guarantees the receiver will not receive junk or modified packets, a corrupt node is able to duplicate valid packets.  Therefore, the receiver may receive enough packets to decode, but cannot because he has received duplicates.  In this case, the receiver floods the network with a single message indicating the label of a duplicated packet.

		\item \textsf{After some amount of time, the sender still has not inserted all codeword packets.}  In this case, the duplication of old packets is so severe that the network has become jammed, and the sender is prevented from inserting packets even along the honest path that the conforming assumption guarantees.  If the sender believes the jamming cannot be accounted for by edge-failures alone, he will halt transmission and move to localizing a corrupt node\footnote{We emphasize here the importance that the sender is able to distinguish the case that the jamming is a result of the edge-scheduling adversary's controlling of edges verses the case that a corrupt node is duplicating packets.  After all, in the case of the former, there is no reward for ``localizing'' the fault to an edge that has failed, as {\it all} edges are controlled by the edge-scheduling adversary, and therefore no edge is inherently better than another.  But in the case a node is duplicating packets, if the sender can identify the node, it can eliminate it and effectively reduce the node-controlling adversary's ability to disrupt communication in the future.}.  One contribution this paper makes is to prove a lower bound on the insertion rate of the sender for the Slide protocol {\it in the absence of the node-controlling adversary}.  This bound not only alerts the sender when the jamming he is experiencing exceeds what can be expected in the absence of corrupt nodes, but it also provides a mechanism for localizing the offending node(s).
		\end{enumerate}
	The above four cases exhaust all possibilities.  Furthermore, if a transmission is not successful, the sender is not only able to {\it detect} the fact that malicious activity has occured, but he is also able to distinguish the {\it form} of the malicious activity, i.e.\ which case 2-4 he is in.  Meanwhile, for the top case, our protocol enjoys (within a constant factor) an equivalent throughput rate as the original Slide protocol.

	\item \textsc{Fault Localization.}  Once a fault has been detected, it remains to describe how to {\it localize} the problem to the offending node.  To this end, we use digital signatures to achieve a new mechanism we call ``Routing with Responsibility.''  By forcing nodes to sign key parts of every communication with their neighbors during the transfer of packets, they can later be held accountable for their actions.  In particular, once the sender has identified the reason for failure (cases 2-4 above), he will request all internal nodes to return {\it status reports}, which are signatures on the relevant parts of the communication with their neighbors.  We then prove in each case that with the complete status report from every node, the sender can with overwhelming probability identify and eliminate a corrupt node. Of course, malicious nodes may choose not to send incriminating information. We handle this separately as explained below.

	\item \textsc{Processor Memory.}  The signatures on the communication a node has with its neighbors for the purpose of fault localization is a burden on the memory required of each processor that is not encountered in the original Slide protocol.  One major challenge was to reduce the amount of signed information each node must maintain as much as possible, while still guaranteeing that each node has maintained ``enough'' information to identify a corrupt node in the case of arbitrary malicious activity leading to a failure of type 2-4 above.  The content of Theorem \ref{memoryThm} in Section \ref{aP} demonstrates that the extra memory required of our protocol is a factor of $n^2$ higher than that of the original Slide protocol.

	\item \textsc{Incomplete Information.}  As already mentioned, we show that regardless of the reason of failure 2-4 above, once the sender receives the status reports from every node, a corrupt node can be identified.  However, this relies on the sender obtaining all of the relevant information; the absence of even a single node's information can prevent the localization of a fault.  We address this challenge in the following ways:
		\begin{enumerate}\setlength{\itemsep}{2pt} \setlength{\parskip}{0pt} \setlength{\parsep}{0pt}
		\item We minimize the amount of information the sender requires of each node.  This way, a node need not be connected to the sender for very many rounds in order for the sender to receive its information.  Specifically, regardless of the reason for failure 2-4 above, a status report consists of only $n$ pieces of information from each node, i.e.\ one packet for each of its edges.

		\item If the sender does not have the $n$ pieces of information from a node, it cannot afford to wait indefinitely.  After all, the edge-scheduling adversary may keep the node disconnected indefinitely, or a corrupt node may simply refuse to respond.  For this purpose, we create a {\it blacklist} for non-responding nodes, which will disallow them from transferring codeword packets in the future.  This way, anytime the receiver fails to decode a codeword as in cases 2-4 above, the sender can request the information he needs, blacklist nodes not responding within some short amount of time, and then re-attempt to transmit the codeword using only non-blacklisted nodes.  Nodes should not transfer codeword packets to blacklisted nodes, but they do still communicate with them to transfer the information the sender has requested.  If a new transmission again fails, the sender will only need to request information from nodes that were participating, i.e.\ he will {\it not} need to collect new information from blacklisted nodes (although the nodes will remain blacklisted until the sender gets the original information he requested of them).  Nodes will be removed from the blacklist and re-allowed to route codeword packets as soon as the sender receives their information.
		\end{enumerate}
	\item \textsc{The Blacklist.}  Blacklisting nodes is a delicate matter; we want to place malicious nodes ``playing-dead'' on this list, while at the same time we don't want honest nodes that are temporarily disconnected from being on this list for too long.  We show in Theorem \ref{mainAdThm} and Lemma \ref{maxWasted} that the occasional honest node that gets put on the blacklist won't significantly hinder packet transmission. Intuitively, this is true because any honest node that is an important link between the sender and receiver will not remain on the blacklist for very long, as his connection to the sender guarantees the sender will receive all requested information from the node in a timely manner.

	\hspace{.5cm} Ultimately, the blacklist allows us to control the amount of malicious activity a single corrupt node can contribute to.  Indeed, we show that each failed message transmission (cases 2-4 above) can be localized (eventually) to (at least) one corrupt node.  More precisely, the blacklist allows us to argue that malicious activity can cause at most $n$ failed transmissions before a corrupt node can necessarily be identified and eliminated.  Since there are at most $n$ corrupt nodes, this bounds the number of failed transmissions at $n^2$.  The result of this is that other than at most $n^2$ failed message transmissions, our protocol enjoys the same throughput efficiency of the old Slide protocol.  The formal statement of this fact can be found in Theorem \ref{mainAdThm} in Section \ref{aP}, and its proof can be found in Section \ref{MProofs}.
	\end{itemize}
\section{The Formal Model}\label{firstApp}
\indent \indent It will be useful to describe two models in this section, one in the presence of an edge-scheduling adversary (all nodes act ``honestly''), and one in the presence of an adversary who may ``corrupt'' some of the nodes in the network.  In Section \ref{nAP} we present an efficient protocol (``Slide'') that works well in the edge-scheduling adversarial model, and we then extend this protocol in Section \ref{aP} to work in the additional presence of the node-controlling adversary.
\subsection{The Edge-Scheduling Adversarial Model} \label{model}
\indent \indent We model a communication network by an undirected graph $G = (V,E)$, where $|V| = n$.  Each vertex (or {\it node}) represents a processor that is capable of storing information (in its {\it buffers}) and passing information to other nodes along the edges.  We distinguish two nodes, the {\it sender}, denoted by $S$, and the {\it receiver}, denoted by $R$.  In our model, $S$ has an input stream of {\it messages} $\{ m_1, m_2, \dots \}$ of uniform size that he wishes to transmit through the network to $R$.  As mentioned in the Introduction, the three commodities we care about are {\it Correctness}, {\it Throughput}, and {\it Processor Memory}.

We assume a {\it synchronous} network, so that there is a universal clock that each node has access to\footnote{Although synchronous networks are difficult to realize in practice, we can further relax the model to one in which there is a known upper-bound on the amount of time an active edge can take to transfer a packet.}.  The global time is divided into discrete chunks, called {\it rounds}, during which nodes communicate with each other and transfer packets.  Each round consists of two equal intervals of unit time called {\it stages}, so that all nodes are synchronized in terms of when each stage begins and ends.  We assume that the edges have some fixed capacity $P$ in terms of the amount of information that can be transmitted across them per stage.  The messages will be sub-divided into packets of uniform size $P$, so that exactly one packet can be transferred along an edge per stage\footnote{Our protocol for the node-controlling adversarial model will require the packets to include signatures from a cryptographic signature scheme.  The security of such schemes depend on the security parameter $k$, and the size of the resulting signatures have size $O(k)$.  Additionally, error-correction will require packets to carry with them an index of $O(\log n)$ bits.  Therefore, we assume that $P \geq (k+ \log n)$, so that in each time step a complete packet (with signature and index) can be transferred.}.

The sole purpose of the network is to transmit the messages from $S$ to $R$, so $S$ is the only node that introduces new messages into the network, and $R$ is the only node that removes them from the network (although below we introduce a node-controlling adversary who may corrupt the intermediate nodes and attempt to disrupt the network by illegally deleting/introducing messages).  Although the edges in our model are bi-directional, it will be useful to consider each link as consisting of two directed edges.  Except for the {\it conforming} restriction (see below), we allow the edges of our network to fail and resurrect arbitrarily.  We model this via an {\it Edge-Scheduling Adversary}, who controls the status of each edge of the network, and can alter the state of any edge at any time.   We say that an edge is {\it active} during a given stage/round if the edge-scheduling adversary allows that edge to remain ``up'' for the entirety of that stage/round.  We impose one restriction on the failure of edges:
	\begin{defn} An edge-scheduling adversary is {\em conforming} if for every round of the protocol, there exists at least one path between $S$ and $R$ consisting of edges that {\it active} for the entirety of the round.\end{defn}
For a given round $\mathtt{t}$, we will refer to the path guaranteed by the conforming assumption as the {\it active path} of round $\mathtt{t}$.  Notice that although the conforming assumption guarantees the existence of an active path for each round, it is {\it not} assumed that any node (including $S$ and $R$) is aware of what that path is.  Furthermore, this path may change from one round to the next.  The edge-scheduling adversary cannot affect the network in any way other than controlling the status of the edges.  In the next section, we introduce a node-controlling adversary who can take control of the nodes of the network\footnote{The distinction between the two kinds of adversaries is made solely to emphasize the contribution of this paper.  Edge-scheduling adversaries (as described above) are commonly used to model edge failures in networks, while the contribution of our paper is in controlling a node-controlling adversary, which has the ability to corrupt the {\it nodes} of the network.}.
\subsection{The Node-Controlling $+$ Edge-Scheduling Adversarial Model} \label{adversary}
\indent \indent This model begins with the edge-scheduling adversarial model described above, and adds a polynomially bounded Node-Controlling Adversary that is capable of corrupting {\it nodes} in the network.  The node-controlling adversary is {\it malicious}, meaning that the adversary can take complete control over the nodes he corrupts, and can therefore force them to deviate from any protocol in whatever manner he likes.  We further assume that the adversary is {\it dynamic}, which means that he can corrupt nodes at any stage of the protocol, deciding which nodes to corrupt based on what he has observed thus far\footnote{Although the node-controlling adversary is {\it dynamic}, he is still constrained by the conforming assumption.  Namely, the adversary may not corrupt nodes that have been, or will be, part of any active path connecting sender and receiver.}.  For a thorough discussion of these notions, see \cite{Gol} and references therein.

As in Multi-Party Computation (MPC) literature, we will need to specify an ``access-structure'' for the adversary.
	\begin{defn} A node-controlling adversary is {\it conforming} if he does not corrupt any nodes who have been or will be a part of any round's active path.\end{defn}
Apart from this restriction, the node-controlling adversary may corrupt whoever he likes (i.e.\ it is not a threshold adversary).  Note that the {\it conforming} assumption implicitly demands that $S$ and $R$ are incorruptible, since they are always a part of any active path.  Also, this restriction on the adversary is really more a statement about when our results remain valid.  This is similar to e.g.\ {\it threshold adversary} models, where the results are only valid if the number of corrupted nodes does not exceed some threshold value $t$.  Once corrupted, a node is forever considered to be a corrupt node that the adversary has total control over (although the adversary may choose to have the node act honestly).

Notice that because correctness, throughput, and memory are the only commodities that our model values, an {\it honest-but-curious} adversary is completely benign, as privacy does not need to be protected\footnote{If desired, privacy can be added trivially by encrypting all packets.} (indeed, any intermediate node is presumed to be able to read any packet that is passed through it).  Our techniques for preventing/detecting malicious behavior will be to incorporate a {\it digital signature} scheme that will serve the dual purpose of validating information that is passed between nodes, as well as holding nodes accountable for information that their signature committed them to.

We assume that there is a Public-Key Infrastructure (PKI) that allows digital signatures.  In particular, before the protocol begins we choose a security parameter $k$ sufficiently large and run a key generation algorithm for a digital signature scheme, producing $n = |G|$ (secret key, verification key) pairs $(sk_N, vk_N)$.  As output to the key generation, each processor $N \in G$ is given its own private signing key $sk_N$ and a list of all $n$ signature verification keys $vk_{\widehat{N}}$ for all nodes $\widehat{N} \in G$.  In particular, this allows the sender and receiver to sign messages to each other that cannot be forged (except with negligible probability in the security parameter) by any other node in the system.
\section{Routing Protocol in the Edge-Scheduling Adversarial Model} \label{nAP}
\indent \indent In this section we formally describe our edge-scheduling protocol, which is essentially the ``Slide'' protocol of \cite{Slide}.
\subsection{Definitions and High-Level Ideas} \label{nonAdProtocol}
\indent \indent The goal of the protocol is to transmit a sequence of messages $\{m_1, m_2, \dots \}$ of uniform size from the sender $S$ to the receiver $R$ (refer to Section \ref{model} for a complete description of the model).  Each node will maintain a stack (i.e.\ FILO buffers) along each of its (directed) edges that can hold up to $2n$ packets concurrently.  To allow for packets to become stuck in the buffers, we will utilize {\it error-correction} (see e.g.\ \cite{Gol}).  Specifically, the messages $\{m_1, m_2, \dots \}$ are converted into codewords $\{c_1, c_2, \dots\}$, allowing the receiver to decode a message provided he has received an appropriate number (depending on the {\it information rate} and {\it error-rate} of the code) of bits of the corresponding codeword.  In this paper, we assume the existence of a error-correcting code with information rate $\sigma$ and error rate $\lambda$.

As part of the setup of our protocol, we assume that the messages $\{m_1, m_2, \dots\}$ have been partitioned to have uniform size $M = \frac{6 \sigma Pn^3}{\lambda}$ (recall that $P$ is the capacity of each edge and $\sigma$ and $\lambda$ are the parameters for the error-correction code).  The messages are expanded into codewords, which will have size $C = \frac{M}{\sigma} = \frac{6Pn^3}{\lambda}$.  The codewords are then divided into $\frac{C}{P}= \frac{6n^3}{\lambda}$ packets of size $P$.  We emphasize this quantity for later use:\vspace{-4pt}
	\begin{equation} \label{capacity}
	D := \frac{6n^3}{\lambda} = \mbox{\it number of packets per codeword}.
	\end{equation}
Note that the only ``noise'' in our network results from undelivered packets or out-dated packets (in the edge-scheduling adversarial model, any packet
that $R$ receives has not been altered).  Therefore, since each codeword consists of $D = \frac{6n^3}{\lambda}$ packets, by definition of $\lambda$, if $R$ receives $(1-\lambda)D = (1-\lambda)\left(\frac{6n^3}{\lambda}\right)$ packets corresponding to the same codeword, he will be able to decode.  We emphasize this fact:
	\begin{itemize}\setlength{\itemsep}{2pt} \setlength{\parskip}{0pt} \setlength{\parsep}{0pt}
	\item[] {\bf Fact 1.}  If the receiver has obtained $D-6n^3 = (1-\lambda)\left(\frac{6n^3}{\lambda}\right)$ packets from any codeword, he will be able to decode the codeword to obtain the corresponding message.
	\end{itemize}
Because our model allows for edges to go up/down, we force each node to keep incoming and outgoing buffers for every {\it possible} edge, even if that edge isn't part of the graph at the outset.  We introduce now the notion of {\it height} of a buffer, which will be used to determine when packets are transferred and how packets are re-shuffled between the internal buffers of a given node between rounds.
\begin{defn} The {\em height} of an incoming/outgoing buffer is the number of packets currently stored in that buffer.\end{defn}
The presence of an edge-scheduling adversary that can force edges to fail at any time complicates the interaction between the nodes.  Note that our model does not assume that the nodes are aware of the status of any of its adjacent edges, so failed edges can only be detected when information that was supposed to be passed along the edge does not arrive.  We handle potential edge failures as follows.  First, the incoming/outgoing buffers at either end of an edge will be given a ``status'' (normal or problem).  Also, to account for a packet that may be lost due to edge failure during transmission across that edge, a node at the receiving end of a failed edge may have to leave room in its corresponding incoming buffer.  We refer to this gap as a {\it ghost packet}, but emphasize that the {\it height} of an incoming buffer is {\it not} affected by ghost packets (by definition, {\it height} only counts packets that are present in the buffer).  Similarly, when a sending node ``sends'' a packet across an edge, it actually only sends a copy of the packet, leaving the original packet in its outgoing buffer.  We will refer to the original copy left in the outgoing buffer as a {\it flagged packet}, and note that flagged packets continue to contribute to the height of an outgoing buffer until they are deleted.

The codewords will be transferred sequentially, so that at any time, the sender is only inserting packets corresponding to a single codeword.  We will refer to the rounds for which the sender is inserting codeword packets corresponding to the $i^{th}$ codeword as the $i^{th}$ {\it transmission.}  Lemma \ref{obsOne} below states that after the sender has inserted $D-2n^3$ packets corresponding to the same codeword, the receiver can necessarily decode.  Therefore, when the sender has inserted this many packets corresponding to codeword $b_i$, he will clear his outgoing buffers and begin distributing packets corresponding to the next codeword $b_{i+1}$.
\subsection{Detailed Description of the Edge-Scheduling Protocol} \label{nonAdDesc}
\indent \indent We describe now the two main parts of the edge-scheduling adversarial routing protocol: the Setup and the Routing Phase.  For a formal presentation of the pseudo-code, see Section \ref{pseudoCode}.\hspace*{\fill} \\ \hspace*{\fill} \\
\noindent {\bf Setup.}
Each internal (i.e.\ not $S$ or $R$) node has the following buffers:
	\begin{enumerate}\setlength{\itemsep}{2pt} \setlength{\parskip}{0pt} \setlength{\parsep}{0pt}
	\item {\it Incoming Buffers.}  Recall that we view each bi-directional edge as consisting of two directed edges.  Then for each incoming edge, a node will have a buffer that has the capacity to hold $2n$ packets at any given time.  Additionally, each incoming buffer will be able to store a ``Status'' bit, the label of the ``Last-Received'' packet, and the ``Round-Received'' index (the round in which this incoming buffer last {\em accepted} a packet, see Definition \ref{receive} below).  The way that this additional information is used will be described in the ``Routing Rules for Receiving Node'' section below.

	\item {\it Outgoing Buffers.}  For each outgoing edge, a node will have a buffer that has the capacity to hold $2n$ packets at any given time.  Like incoming buffers, each outgoing buffer will also be able to store a status bit, the index label of one packet (called the ``Flagged'' packet), and a ``Problem-Round'' index (index of the most recent round in which the status bit switched to 1).
	\end{enumerate}
The receiver will only have incoming buffers (with capacity of one) and a large {\it Storage Buffer} that can hold up to $D$ packets.  Similarly, the sender has only outgoing buffers (with capacity $2n$) and the input stream of messages $\{ m_1, m_2, \dots \}$ which are encoded into the codewords and divided into packets, the latter then distributed to the sender's outgoing buffers.

Also as part of the Setup, all nodes learn the relevant parameters ($P$, $n$, $\lambda$, and $\sigma$).\\ \hspace*{\fill} \\
\noindent {\bf Routing Phase.}
As indicated in Section \ref{model}, we assume a synchronous network, so that there are well-defined rounds in which information is passed between nodes.  Each round consists of two units of time, called {\it Stages}.  The formal treatment of the Routing Phase can be found in the pseudo-code of Section \ref{pseudoCode}.  Informally, Figure \ref{fig418} below considers a directed edge $E(A,B)$ from $A$ (including $A=S$) to $B$ (including $B=R$), and describes what communication each node sends in each stage.
\begin{figure}[ht]
\begin{center}
$\begin{array}{c|lcc} \mbox{\underline{\Large Stage}} & \mbox{\hspace{1cm} \underline{\hspace{1.5cm} \Large A \hspace{1.5cm}}} & & \mbox{\underline{\hspace{1.5cm} \Large B \hspace{1.5cm}}} \\  & & & \\ & H_A:=\mbox{Height of buffer along $E(A,B)$} & & \\ 1 & \mbox{Height of flagged p$.$ (if there is one)} & \longrightarrow & \\ & \mbox{Round prev$.$ packet was sent} & & \\ & & \longleftarrow & \begin{array}{l} H_B:=\mbox{Height of buffer along $E(A,B)$} \\ \mbox{Round prev$.$ packet was received} \end{array}\\ \hline & \mbox{Send packet if:} & & \\2 & \hspace{.5cm} \bullet \hspace{.1cm} H_A > H_B \quad OR & \longrightarrow & \\ & \hspace{.5cm} \bullet \hspace{.1cm} \mbox{$B$ didn't rec$.$ prev$.$ packet sent} & & \end{array}$
\end{center}
\vspace{-.6cm}
\caption{\it Description of communication exchange along directed edge $E(A,B)$ during the Routing Phase of any round.}
\label{fig418}
\end{figure}
\newpage
In addition to this communication, each node must update its internal state based on the communication it receives.  In particular, from the communication $A$ receives from $B$ in Stage 1 of any round, $A$ can determine if $B$ has received the most recent packet $A$ sent.  If so, $A$ will delete this packet and switch the status of the outgoing buffer along this edge to ``normal.''  If not, $A$ will keep the packet as a flagged packet, and switch the status of the outgoing buffer along this edge to ``problem.''  At the other end, if $B$ does not receive $A$'s Stage 1 communication {\it or} $B$ does not receive a packet it was expecting from $A$ in Stage 2, then $B$ will leave a gap in its incoming buffer (termed a ``ghost packet'') and will switch this buffer's status to ``problem.''  On the other hand, if $B$ successfully receives a packet in Stage 2, it will switch the buffer back to ``normal'' status.\\ \hspace*{\fill} \\
%
\noindent {\bf Re-Shuffle Rules.} At the end of each round, nodes will shuffle the packets they are holding according to the following rules:
	\begin{enumerate}\setlength{\itemsep}{2pt} \setlength{\parskip}{0pt} \setlength{\parsep}{0pt} \vspace{-5pt}
	\item Take a packet from the fullest buffer and shuffle it to the emptiest buffer, provided the difference in height is at least two (respectively one) when the packet is moved between two buffers of the same type (respectively when the packet moves from an incoming buffer to an outgoing buffer).  Packets will never be re-shuffled from an outgoing buffer to an incoming buffer.  If two (or more) buffers are tied for having the most packets, then a packet will preferentially be chosen {\it from} incoming buffers over outgoing buffers (ties are broken in a round-robin fashion).  Conversely, if two (or more) buffers are tied for the emptiest buffer, then a packet will preferentially be {\it given to} outgoing buffers over incoming buffers (again, ties are broken in a round-robin fashion).

	\item Repeat the above step until the difference between the fullest buffer and the emptiest buffer does not meet the criterion outlined in Step 1.
	\end{enumerate}
\vspace{-2pt} Recall that when a packet is shuffled locally between two buffers, packets travel in a FILO manner, so that the top-most packet of one buffer is shuffled to the top spot of the next buffer.  When an outgoing buffer has a flagged packet or an incoming buffer has a ghost packet, we use instead the following modifications to the above re-shuffle rules.  Recall that in terms of measuring a buffer's height, flagged packets are counted but ghost packets are not.
\begin{itemize}\setlength{\itemsep}{2pt} \setlength{\parskip}{0pt} \setlength{\parsep}{0pt} \vspace{-5pt}
\item[-] Outgoing buffers do not shuffle {\it flagged} packets.  In particular, if Rule 1 above selects to transfer a packet {\it from} an outgoing buffer, the top-most {\it non-flagged} packet will be shuffled.  This may mean that a gap is created between the flagged packet and the next non-flagged packet.

\item[-] Incoming buffers do not re-shuffle ghost packets.  In particular, ghost packets will remain in the incoming buffer that created them, although we do allow ghost packets to slide {\it down} within its incoming buffer during re-shuffling.  Also, packets shuffled {\it into} an incoming buffer are not allowed to occupy the same slot as a ghost packet (they will take the first non-occupied slot)\footnote{Note that because ghost packets do not count towards height, there appears to be a danger that the re-shuffle rules may dictate a packet gets transferred into an incoming buffer, and this packet either has no place to go (because the ghost packet occupies the top slot) or the packet increases in height (which would violate Claim \ref{packetHeight} below).  However, because only incoming buffers are allowed to re-shuffle packets into other incoming buffers, and the difference in height must be at least two when this happens, neither of these troublesome events can occur.}.
\end{itemize}
\vspace{-2pt} The sender and receiver have special rules for re-shuffling packets.  Namely, during the re-shuffle phase the sender will fill each of his outgoing buffers (in an arbitrary order) with packets corresponding to the current codeword.  Meanwhile, the receiver will empty all of its incoming buffers into its storage buffer.  If at any time $R$ has received enough packets to decode a codeword $b_i$ (Fact 1 says this amount is at most $D-6n^3$), then $R$ outputs message $m_i$ and deletes all packets corresponding to codeword $b_i$ from its storage buffer that he receives in later rounds.
\subsection{Analysis of the Edge-Scheduling Adversarial Protocol} \label{nonAdAnalysis}
\indent \indent We now evaluate our edge-scheduling protocol in terms of our three measurements of performance: {\it correctness}, {\it throughput}, and {\it processor memory}.  The throughput standard expressed in Theorem \ref{nonAdTheorem} below will serve an additional purpose when we move to the node-controlling adversary setting: The sender will know that malicious activity has occurred when the throughput standard of Theorem \ref{nonAdTheorem} is not observed.  Both of the theorems below will be proved rigorously in Sections \ref{proofs} and \ref{bareBones}, after presenting the pseudo-code in Section \ref{pseudoCode}.
\begin{thm} \label{nonAdTheorem} Each message $m_i$ takes at most $3D$ rounds to pass from the sender to the receiver.  In particular, after $O(xD)$ rounds, $R$ will have received at least $O(x)$ messages.  Since each message has size $M$=$\frac{6 \sigma}{\lambda}Pn^3$=$\space O(n^3)$ and $D$=$\frac{6n^3}{\lambda}$=$\space O(n^3)$, after $O(x)$ rounds, $R$ has received $O(x)$ bits of information, and thus our edge-scheduling adversarial protocol enjoys a linear throughput rate.
\end{thm}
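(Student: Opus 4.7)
The plan is to reduce the theorem to two concrete sub-claims and then verify each by a capacity-plus-gravity-flow argument. By Fact 1 and the rule that $S$ advances to codeword $c_{i+1}$ the instant he has inserted $D - 2n^3$ packets of $c_i$, it suffices to prove: (a) $S$ completes these $D - 2n^3$ insertions within $3D$ rounds of beginning $c_i$, and (b) by that moment $R$ holds at least $D - 6n^3$ packets of $c_i$ in its storage buffer. The subsequent claims on message count and linear throughput then follow by summing over $i$, using $M = O(n^3)$ and $3D = O(n^3)$ so that the amortized rate is $\Theta(1)$ bits per round.

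For (b), I would appeal to a \emph{capacity bound}: at any round, the total number of packets sitting in buffers across the entire network is at most $4n^3$, since there are fewer than $2n^2$ buffers (two per directed edge), each of capacity $2n$. Hence once $S$ has inserted $I$ packets of $c_i$, at most $4n^3$ remain in transit, so $R$ holds at least $I - 4n^3$; plugging in $I = D - 2n^3$ gives exactly $D - 6n^3$, which matches the decoding threshold of Fact 1.

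For (a), I would prove a \emph{gravity-flow lemma} showing that $S$'s insertion rate is at least $1/3$ packet per round in steady state. This is the classical hydraulic argument for Slide: in every round the conforming assumption furnishes an active path $S = v_0, \ldots, v_k = R$ with $k \le n-1$; the inter-node transfer rule moves a packet across every directed edge whose sender-side height strictly exceeds its receiver-side, and the intra-node reshuffle rules equalize heights within a node up to constant slack. Since $S$ refills its outgoing buffers to $2n$ every round while $R$ drains its incoming buffers to $0$, telescoping the height differences along the active path forces a strictly positive inter-node gradient on at least one edge of the path, and hence a packet transfer, outside an $O(n)$-round priming transient. One packet leaving $S$ per round enables one new insertion per round, which yields (a) with room to spare.

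The main technical obstacle is making the gravity-flow lemma precise. Concretely, one must (i) define a potential function---a suitably weighted sum of buffer heights, bounded by $O(n^4)$---that strictly decreases in any round without progress, (ii) argue that the asymmetry of the reshuffle rule between incoming and outgoing buffers and the $\pm 1$ slack it tolerates do not accumulate to defeat the telescoping argument along an $n$-edge path, and (iii) handle the inter-codeword transition, where up to $4n^3$ residual packets of $c_{i-1}$ occupy buffer space when $c_i$ begins---this is precisely what the $6n^3$ slack in the definition of $D$ was engineered to absorb. None of these steps requires a genuinely new idea beyond the original Slide analysis of \cite{Slide}, but each requires a careful case-by-case verification, in particular with respect to flagged and ghost packets created by mid-round edge failures.
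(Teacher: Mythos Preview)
Your decomposition into (b) a capacity bound and (a) an insertion-count bound is exactly the paper's structure, and your argument for (b) is Lemma~\ref{obsOne} verbatim. The gap is in your narrative for (a). Telescoping heights along the active path establishes only that \emph{some} edge on the path carries a packet that round; it does not show that the packet leaves $S$. When $S$ is blocked, its neighbor's incoming buffer along the path sits at height $2n$, so the $S$--$v_1$ edge has zero gradient and the transfer guaranteed by telescoping occurs strictly downstream. The sentence ``one packet leaving $S$ per round'' is therefore unsupported, and indeed there is no per-round insertion guarantee: the footnote to Lemma~\ref{potentialDrop} exhibits an explicit configuration in which $S$ is blocked while the downstream path still carries flow.

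The repair is the potential function you mention, but deployed as a \emph{global accounting} rather than to certify a steady per-round rate. The paper shows (Lemma~\ref{potentialDrop}, via the chain Lemma~\ref{chainL2}) that every blocked round forces the network's non-duplicated potential to drop by at least $n$; this quantified drop, not merely the existence of a transfer, is the real content of the telescoping. Then a pigeonhole over the $3D$ rounds says either $\ge D$ rounds are unblocked (so $\ge D$ insertions and we are done by Lemma~\ref{obsOne}), or $\ge 2D$ rounds are blocked, forcing a cumulative potential drop of $\ge 2nD$. Since initial potential is $O(n^4)$ (Claim~\ref{capacity1}) and the only positive contributions come from insertions at $\le 2n$ each (Lemma~\ref{item3}), balancing the books forces $\ge (2nD - O(n^4))/2n = D - 2n^3$ insertions even in the blocked case. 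Your phrase ``strictly decreases in any round without progress'' must be sharpened to ``decreases by at least $n$''; with only a constant per-round drop the bookkeeping would miss by a factor of $n$. Your point (iii) about residual packets from $c_{i-1}$ is absorbed automatically by this same $O(n^4)$ bound on initial potential and needs no separate treatment.
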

\vspace{-3pt} The above theorem implicitly states that our edge-scheduling protocol is {\it correct}.  For completeness, we also state the memory requirements of our edge-scheduling protocol, which is bottle-necked by the $O(n^2)$ packets that each internal node has the capacity to store in its buffers.\vspace{-3pt}
\begin{thm} \label{nonAdMem} The edge-scheduling protocol described in Section \ref{nonAdDesc} (and formally in the pseudo-code of Section \ref{pseudoCode}) requires at most $O(n^2 \log n)$ bits of memory of the internal processors.
\end{thm}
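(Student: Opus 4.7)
The plan is a direct accounting argument: I would enumerate every memory component maintained by an internal processor under the edge-scheduling protocol of Section \ref{nonAdDesc} and verify that the total fits in $O(n^2 \log n)$ bits. There are essentially three categories of storage to bound: (a) the packets currently sitting in the incoming and outgoing buffers, (b) per-buffer bookkeeping (status bits, flagged/last-received packet labels, and the round indices ``Round-Received'' and ``Problem-Round''), and (c) static information such as the parameters $n,P,\lambda,\sigma$ and the current-codeword index.

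For category (a), since the graph has $n$ vertices each internal node is incident to at most $n-1$ incoming directed edges and at most $n-1$ outgoing directed edges, and the Setup allocates one buffer of capacity $2n$ packets per directed edge; hence the total packet-occupancy of a node is at most $4n(n-1) \le 4n^2$. Each stored packet consists of its $P$-bit content plus an $O(\log n)$-bit label locating it within its codeword of length $D=6n^3/\lambda$. Since in the edge-scheduling setting we only require $P=\Omega(\log n)$ (no cryptographic signatures are attached), taking $P = O(\log n)$ the buffer content fits in $4n^2 \cdot O(\log n) = O(n^2 \log n)$ bits.

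For category (b), each of the $\le 2n$ buffers maintains only a constant number of auxiliary fields, each of size $O(\log n)$ bits (a status bit, a packet label, and a round index). The only delicate point is that as the protocol runs the raw round counters grow without bound, so naively the round indices would not fit in $O(\log n)$ bits. I would resolve this by invoking Theorem \ref{nonAdTheorem}, which bounds every codeword transmission to $3D = O(n^3)$ rounds: since ``Round-Received'' and ``Problem-Round'' are used only to compare against events inside the current transmission, each can be stored as an offset from the start of that transmission and reset between transmissions, fitting in $\log(3D) = O(\log n)$ bits. Category (b) therefore contributes only $O(n \log n)$ bits, which is absorbed into the $O(n^2 \log n)$ bound, and category (c) trivially fits in $O(\log n)$ bits.

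Summing (a), (b), and (c) gives $O(n^2 \log n)$ bits per internal processor. The proof is essentially a bookkeeping computation, and the only step that requires genuine argument (rather than counting) is the justification that the round counters can be truncated to $O(\log n)$ bits; this is exactly where Theorem \ref{nonAdTheorem} is used, and it is the main (minor) obstacle I would expect to spend words on in the formal write-up.
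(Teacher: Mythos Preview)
Your proposal is correct and follows the same direct accounting approach as the paper; the paper's own proof is in fact a two-sentence sketch that records only the dominant term (at most $O(n^2)$ packets of size $O(\log n)$ each), whereas you carry out the full bookkeeping for the auxiliary fields as well.

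One small remark: the issue you single out as ``the main (minor) obstacle''---bounding the round indices---does not actually require Theorem~\ref{nonAdTheorem}. The protocol \emph{by construction} runs each transmission for exactly $3D$ rounds (line~02 of Figure~\ref{routingRules}) and resets $RR$ and $FR$ in \textbf{End of Transmission Adjustments}; the Setup already declares their domains as $[0..6D]\cup\{\bot\}$, so the $O(\log n)$-bit bound is immediate from the pseudo-code without appealing to the throughput theorem.
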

\newpage
\section{Pseudo-Code for the Edge-Scheduling Adversarial Protocol} \label{pseudoCode}
\begin{figure}[h!t]
\begin{center}
\framebox{\footnotesize
\begin{minipage}[b]{6.0in}
\begin{footnotesize}
\begin{tabbing}
aaaa\=aaaa\=aaaa\=aaaa\=aaaa\=aaaaaaaa\=aaaaaaaaaaaa\=aaaaaa\=aaaaaa\=aaaaaaaaaaaaaaaaaa\=aaaaaa\=aaaaaaaa \kill
{\bf Setup} \\
\>{\scriptsize \bf DEFINITION OF VARIABLES}:\\[-1pt]
01\>\>$n :=$ Number of nodes in $G$;\\
02\>\>$D :=\frac{6n^3}{\lambda}$;\\
03\>\>$\mathtt{T} :=$ Transmission index;\\
04\>\>$\mathtt{t} :=$ Stage/Round index;\\
05\>\>$P :=$ Capacity of edge (in bits);\\
06\>\>\For $N \in G$\\
07\>\>\>\For {\it outgoing} {\bf edge} $E(N,B) \in G, B \neq S$ and $N \neq R$\\
08\>\>\>\>$\mathsf{OUT} \in [2n] \times \{0,1\}^{P}$; \>\>\>\>{\scriptsize \#\# Outgoing Buffer able to hold $2n$ packets}\\
09\>\>\>\>$\tilde{p} \in \{0,1\}^P \cup \bot$; \>\>\>\>{\scriptsize \#\# Copy of packet to be sent}\\
10\>\>\>\>$sb \in \{0,1\}$; \>\>\>\>{\scriptsize \#\# Status bit}\\
11\>\>\>\>$d \in \{0,1\}$; \>\>\>\>{\scriptsize \#\# Bit indicating if a packet was sent in the previous round}\\
12\>\>\>\>$FR \in [0..6D] \cup \bot$; \>\>\>\>{\scriptsize \#\# Flagged Round (index of round $N$ first tried to send $\tilde{p}$ to $B$)}\\
13\>\>\>\>$H \in [0..2n]$; \>\>\>\>{\scriptsize \#\# Height of $\mathsf{OUT}$. Also denoted $H_{OUT}$ when there's ambiguity}\\
14\>\>\>\>$H_{FP} \in [1..2n] \cup \bot$; \>\>\>\>{\scriptsize \#\# Height of Flagged Packet}\\
15\>\>\>\>$RR \in [-1..6D] \cup \bot$; \>\>\>\>{\scriptsize \#\# Round Received index (from adjacent incoming buffer)}\\
16\>\>\>\>$H_{IN}\in [0..2n] \cup \bot$; \>\>\>\>{\scriptsize \#\# Height of incoming buffer of $B$}\\
17\>\>\>\For {\it incoming} {\bf edge} $E(A,N) \in G, A \neq R$ and $N \neq S$\\
18\>\>\>\>$\mathsf{IN} \in [2n] \times \{0,1\}^{P}$; \>\>\>\>{\scriptsize \#\# Incoming Buffer able to hold $2n$ packets}\\
19\>\>\>\>$p \in \{0,1\}^P \cup \bot$; \>\>\>\>{\scriptsize \#\# Packet just received}\\
20\>\>\>\>$sb \in \{0,1\}$; \>\>\>\>{\scriptsize \#\# Status bit}\\
21\>\>\>\>$RR \in [-1..6D]$; \>\>\>\>{\scriptsize \#\# Round Received (index of round $N$ last rec'd a p$.$ from $A$)}\\
22\>\>\>\>$H \in [0..2n]$; \>\>\>\>{\scriptsize \#\# Height of $\mathsf{IN}$. Also denoted $H_{IN}$ when there's ambiguity}\\
23\>\>\>\>$H_{GP} \in [1..2n] \cup \bot$; \>\>\>\>{\scriptsize \#\# Height of Ghost Packet}\\
24\>\>\>\>$H_{OUT}\in [0..2n] \cup \bot$; \>\>\>\>{\scriptsize \#\# Height of outgoing buffer, or height of Flagged Packet of $A$}\\
25\>\>\>\>$sb_{OUT} \in \{0,1\}$; \>\>\>\>{\scriptsize \#\# Status Bit of outgoing buffer of $A$}\\
26\>\>\>\>$FR \in [0..6D] \cup \bot$; \>\>\>\>{\scriptsize \#\# Flagged Round index (from adjacent outgoing buffer)}\\
\>{\scriptsize \bf INITIALIZATION OF VARIABLES}:\\[-1pt]
27\>\>\For $N \in G$\\
28\>\>\>\For {\it incoming} {\bf edge} $E(A,N) \in G, A \neq R$ and $N \neq S$\\
29\>\>\>\>{\it Initialize $\mathsf{IN}$}; \>\>\>\>{\scriptsize \#\# Set each entry in $\mathsf{IN}$ to $\bot$}\\
30\>\>\>\>$p, FR, H_{GP} = \bot$;\\
31\>\>\>\>$sb, sb_{OUT}, H, H_{OUT} = 0$; $RR = -1$;\\
32\>\>\>\For {\it outgoing} {\bf edge} $E(N,B) \in G, B \neq S$ and $N \neq R$\\
33\>\>\>\>{\it Initialize $\mathsf{OUT}$}; \>\>\>\>{\scriptsize \#\# Set each entry in $\mathsf{OUT}$ to $\bot$}\\
34\>\>\>\>$\tilde{p}, H_{FP}, RR, FR = \bot$;\\
35\>\>\>\>$sb, d, H, H_{IN} = 0$;\\
{\bf End Setup}
\end{tabbing}
\end{footnotesize}
\end{minipage} }
\end{center}
\vspace{-.6cm}
\caption{Pseudo-Code for Internal Nodes' Setup for the Edge-Scheduling Adversarial Model}
\label{setupCode}
\end{figure}
\newpage
\begin{figure}[h!t]
\begin{center}
\framebox{\footnotesize
\begin{minipage}[b]{6.0in}
\begin{footnotesize}
\begin{tabbing}
aa\=aaaa\=aaaa\=aaaa\=aaaa\=aaaaaaaaaaaaaaaaaaa\=a \kill
{\bf Sender and Receiver's Additional Setup} \\
\>{\bf DEFINITION OF ADDITIONAL VARIABLES FOR SENDER}:\\
36\>\>$\mathcal{M} := \{m_1, m_2, \dots \} = $ Input Stream of Messages; \\
37\>\>$\kappa \in [0..D]$ = Number of packets corresponding to current codeword the sender has {\it knowingly} inserted;\\
\\
\>{\bf INITIALIZATION OF SENDER'S VARIABLES}:\\
38\>\>{\bf \em Distribute Packets};\>\>\>\>{\scriptsize \#\# See Figure \ref{routingRules2}}\\
39\>\>$\kappa = 0$;\\
\\
\\
\>{\bf DEFINITION OF ADDITIONAL VARIABLES FOR RECEIVER}:\\
40\>\>$I_R \in [D] \times (\{0,1\}^P \cup \bot)$ = Storage Buffer to hold packets corresponding to current codeword;\\
41\>\>$\kappa \in [0..D]$ := Number of packets received corresponding to current codeword;\\
\\
\>{\bf INITIALIZATION OF RECEIVER'S VARIABLES}:\\
42\>\>$\kappa = 0$;\\
43\>\>{\it Initialize $I_R$};\>\>\>\>{\scriptsize \#\# Sets each element of $I_R$ to $\bot$}\\
{\bf End Sender and Receiver's Additional Setup}
\end{tabbing}
\end{footnotesize}
\end{minipage} }
\end{center}
\vspace{-.6cm}
\caption{Additional Code for Sender and Receiver Setup}
\label{setupCode2}
\end{figure}
\vspace{1cm}
%
\begin{figure}[h!t]
\begin{center}
\leavevmode
\framebox{\footnotesize
\begin{minipage}[b]{6.0in}
\begin{footnotesize}
\begin{tabbing}
aaa\=aa\=aa\=aaa\=aaa\=aaa\=aaa\=aaa\=aaa\=aaaaaaaaaaaaaaaaaaaaa\=aaaaaaaaa\=aaaa \kill
{\bf Transmission $\mathtt{T}$}\\
01\>\For $N \in G$\\
02\>\>\For $\mathtt{t} < 2*(3D)$\>\>\>\>\>\>\>\>\>{\#\# The factor of 2 is for the 2 stages per round}\\
03\>\>\>\If $\mathtt{t}$ (mod 2) = 0  \Then \>\>\>\>\>\>\>\>{\#\# STAGE 1}\\
04\>\>\>\>\For {\it outgoing} {\bf edge} $E(N,B) \in G, N \neq R, B \neq S$\\
05\>\>\>\>\>\If $H_{FP}=\bot$: \Send $(H, \bot, \bot)$; \quad \Else \Send $(H-1, H_{FP}, FR)$;\\
06\>\>\>\>\>\Receive $(H_{IN},RR)$;\\
07\>\>\>\>\>{\bf \em Reset Outgoing Variables};\\
08\>\>\>\>\For {\it incoming} {\bf edge} $E(A,N) \in G, N \neq S, A \neq R$\\
09\>\>\>\>\>\Send $(H, RR)$;\\
10\>\>\>\>\>$sb_{OUT}=0$; $FR=\bot$;\\
11\>\>\>\>\>\Receive $(H, \bot, \bot)$ \boldmath$\mathsf{or}$\unboldmath $\thickspace (H, H_{FP}, FR)$;\>\>\>\>\>\>{\#\# If $H = \bot$ or $FR > RR$, set $sb_{OUT}$=$1$; and}\\
\>\>\>\>\>\>\>\>\>\>\>{\#\# $H_{OUT}$=$H_{FP}$; O.W.\ set $H_{OUT}$=$H$; $sb_{OUT}$=$0$;}\\
12\>\>\>\ElseIf $\mathtt{t}$ (mod 2) = 1 \Then \>\>\>\>\>\>\>\>{\#\# STAGE 2}\\
13\>\>\>\>\For {\it outgoing} {\bf edge} $E(N,B) \in G, N \neq R, B \neq S$\\
14\>\>\>\>\>\If $H_{IN}\neq \bot$ \Then \>\>\>\>\>\>{\#\# Received $B$'s info.}\\
15\>\>\>\>\>\>{\bf \em Create Flagged Packet};\\
16\>\>\>\>\>\>\If ($sb$=$1$ \boldmath$\mathsf{or}$\unboldmath $\thickspace (sb$=$0$ \boldmath$\mathsf{and}$\unboldmath $\thickspace H_{OUT} > H_{IN})$) \Then \\
17\>\>\>\>\>\>\>{\bf \em Send Packet};\\
18\>\>\>\>\For {\it incoming} {\bf edge} $E(A,N) \in G, N \neq S, A \neq R$\\
19\>\>\>\>\>{\bf \em Receive Packet};\\
20\>\>\>\>\If $N \notin \{S,R\}$ \Then {\bf \em Re-Shuffle};\\
21\>\>\>\>\ElseIf $N = R$ \Then {\bf \em Receiver Re-Shuffle};\\
22\>\>\>\>\ElseIf $N = S$ \Then {\bf \em Sender Re-Shuffle};\\
\\
23\>\>\>\>\If $\mathtt{t}=2(3D)-1$ \Then {\bf \em End of Transmission Adjustments};\\
{\bf End Transmission $\mathtt{T}$}
\end{tabbing}
\end{footnotesize}
\end{minipage} }
\end{center}
\vspace{-.6cm}
\caption{Routing Rules I for Edge-Scheduling Adversarial Model}
\label{routingRules}
\end{figure}
\newpage $\thickspace$
\begin{figure}[h!t]
\begin{center}
\leavevmode
\framebox{\footnotesize
\begin{minipage}[b]{6.0in}
\begin{footnotesize}
\begin{tabbing}
aaa\=aa\=aaa\=aaa\=aaa\=aaa\=aaa\=aaaaaaaaaaaaaaaaaaaaaaaaaaaaaaaaaaa\=aaaa\=aaaa \kill
24\>{\bf \em Reset Outgoing Variables}\\
25\>\>\If $d = 1$:\>\>\>\>\>\>{\scriptsize \#\# $N$ sent a packet previous round}\\
26\>\>\>$d=0$;\\
27\>\>\>\If $RR = \bot$ \boldmath$\mathsf{or}$\unboldmath $\thickspace \bot \neq FR > RR$\>\>\>\>\>{\scriptsize \#\# Didn't receive conf.\ of packet receipt}\\
28\>\>\>\>$sb=1$;\\
29\>\>\If $RR \neq \bot$:\\
30\>\>\>\If $\bot \neq FR \leq RR$:\>\>\>\>\>{\scriptsize \#\# $B$ rec'd most recently sent packet}\\
31\>\>\>\>\If $N=S$ \Then $\kappa = \kappa +1$;\\
32\>\>\>\>$\mathsf{OUT}[H_{FP}] = \bot$; {\it Fill Gap};\>\>\>\>{\scriptsize \#\# Remove $\tilde{p}$ from $\mathsf{OUT}$, shifting}\\
\>\>\>\>\>\>\>\>{\scriptsize \#\# down packets on top of $\tilde{p}$ if necessary}\\
33\>\>\>\>$FR,\tilde{p},H_{FP}=\bot$; $sb=0$; $H=H-1$;\\
34\>\>\If $\bot \neq RR < FR$ \boldmath$\mathsf{and}$\unboldmath $\thickspace \bot \neq H_{FP} < H$:\>\>\>\>\>\>{\scriptsize \#\# $B$ did {\em not} receive most recently sent packet}\\
35\>\>\>{\it Elevate Flagged Packet};\>\>\>\>\>{\scriptsize \#\# Swap $\mathsf{OUT}[H]$ and $\mathsf{OUT}[H_{FP}]$; Set $H_{FP}=H$;}\\
\\
36\>{\bf \em Create Flagged Packet}\\
37\>\>\If $sb=0$ \boldmath$\mathsf{and}$\unboldmath $\thickspace H > H_{IN}$:\>\>\>\>\>\>{\scriptsize \#\# Normal Status, will send top packet}\\
38\>\>\>$\tilde{p} = \mathsf{OUT}[H]$; $H_{FP} = H$; $FR = \mathtt{t}$;\\
\\
39\>{\bf \em Send Packet}\\
40\>\>$d=1$;\\
41\>\>\Send $(\tilde{p}, FR)$;\\
\\
42\>{\bf \em Receive Packet}\\
43\>\>\Receive $(p, FR)$;\\
44\>\>\If $H_{OUT} = \bot$:\>\>\>\>\>\>{\scriptsize \#\# Didn't Rec. $A$'s height info.}\\
45\>\>\>$sb = 1$;\\
46\>\>\>\If $H_{GP} > H$ \boldmath$\mathsf{or}$\unboldmath $\thickspace (H_{GP} = \bot$ \boldmath$\mathsf{and}$\unboldmath $\thickspace H <2n)$: $\quad H_{GP} = H+1$;\\
47\>\>\ElseIf $sb_{OUT}=1$ \boldmath$\mathsf{or}$\unboldmath $\thickspace H_{OUT} > H$:\>\>\>\>\>\>{\scriptsize \#\# A packet should've been sent}\\
48\>\>\>\If $p =\bot$:\>\>\>\>\>{\scriptsize \#\# Packet wasn't rec'd}\\
49\>\>\>\>$sb = 1$;\\
50\>\>\>\>\If $H_{GP} > H$ \boldmath$\mathsf{or}$\unboldmath $\thickspace (H_{GP} = \bot$ \boldmath$\mathsf{and}$\unboldmath $\thickspace H <2n)$: $\quad H_{GP} = H+1$;\\
51\>\>\>\ElseIf $RR<FR$:\>\>\>\>\>{\scriptsize \#\# Packet was rec'd and should keep it}\\
52\>\>\>\>\If $H_{GP} = \bot$: $\quad H_{GP} = H+1$;\>\>\>\>{\scriptsize \#\# If no slot is saved for $p$, put it on top}\\
53\>\>\>\>$sb=0$; $\mathsf{IN}[H_{GP}]=p$; $H = H+1$; $H_{GP} = \bot$; $RR = \mathtt{t}$;\\
54\>\>\>\Else\>\>\>\>\>{\scriptsize \#\# Packet was rec'd, but already had it}\\
55\>\>\>\>$sb=0$; {\it Fill Gap}; $H_{GP}=\bot$;\>\>\>\>{\scriptsize \#\# See comment about {\it Fill Gap} on line 57 below}\\
56\>\>\Else\>\>\>\>\>\>{\scriptsize \#\# A packet should NOT have been sent}\\
57\>\>\>$sb=0$; {\it Fill Gap}; $H_{GP} = \bot$;\>\>\>\>\>{\scriptsize \#\# If packets occupied slots {\it above} the}\\
\>\>\>\>\>\>\>\>{\scriptsize \#\# Ghost Packet, then {\it Fill Gap} will Slide}\\
\>\>\>\>\>\>\>\>{\scriptsize \#\# those packets down one slot}\\
58\>{\bf \em End of Transmission Adjustments}\\
59\>\>\For {\it outgoing} edge $E(N,B) \in G$, $N \neq R$, $B \neq S$:\\
60\>\>\>\If $H_{FP} \neq \bot$: \\
61\>\>\>\>$OUT[H_{FP}] = \bot$; {\it Fill Gap};\>\>\>\>{\scriptsize \#\# Remove any flagged packet $\tilde{p}$ from $\mathsf{OUT}$, shifting}\\
\>\>\>\>\>\>\>\>{\scriptsize \#\#  down packets on top of $\tilde{p}$ if necessary}\\
62\>\>\>\>$d, sb = 0;$ $FR,H_{FP}, \tilde{p}=\bot$; $H=H-1$;\\
63\>\>\For {\it incoming} edge $E(A,N) \in G$, $N \neq S$, $A \neq R$:\\
64\>\>\>$H_{GP} = \bot$; $sb =0$; $RR = -1$; {\it Fill Gap};\\
65\>\>\If $N=S$ \Then {\bf \em Distribute Packets};\\
\end{tabbing}
\end{footnotesize}
\end{minipage} }
\end{center}
\vspace{-.6cm}
\caption{Routing Rules for Edge-Scheduling Adversarial Model (continued)}
\label{routingRules2}
\end{figure}
%
\newpage
\begin{figure}[h!t]
\begin{center}
\leavevmode
\framebox{\footnotesize
\begin{minipage}[b]{6.0in}
\begin{footnotesize}
\begin{tabbing}
aaa\=aa\=aaa\=aaa\=aaa\=aaa\=aaa\=aaaaaaaaaaaaaaaaa\=aaaaaaaaaaaaaaa\=aaaa \kill
71\>{\bf \em Re-Shuffle}\\
72\>\>$(M,B_F)=$ {\it Find Maximum Buffer}\>\>\>\>\>\>\> {\scriptsize \#\# Node $N$ finds its fullest buffer $B_F$ with height $M$,}\\
\>\>\>\>\>\>\>\>\> {\scriptsize \#\# breaking ties by 1) selecting incoming buffers over}\\
\>\>\>\>\>\>\>\>\> {\scriptsize \#\# outgoing buffers, then 2) Round-Robin}\\
73\>\>$(m,B_T)=$ {\it Find Minimum Buffer}\>\>\>\>\>\>\> {\scriptsize \#\# Node $N$ finds its emptiest buffer $B_T$ with height $m$,}\\
\>\>\>\>\>\>\>\>\> {\scriptsize \#\# breaking ties by 1) selecting outgoing buffers over}\\
\>\>\>\>\>\>\>\>\> {\scriptsize \#\# incoming buffers, then 2) Round-Robin}\\
74\>\>\If {\it Packet Should Be Re-Shuffled:}\>\>\>\>\>\>\> {\scriptsize \#\# A packet should be re-shuffled if $M-m >1$ \boldmath$\mathsf{or}$\unboldmath}\\
\>\>\>\>\>\>\>\>\> {\scriptsize \#\# $M-m=1$ \boldmath$\mathsf{and}$\unboldmath $\thickspace \left\{ \begin{array}{l} B_F \mbox{ is an Inc. Buffer}\\ B_T \mbox{ is an Out. Buffer} \end{array} \right\}$}\\
75\>\>\>{\bf \em Adjust Heights}\>\>\>\>\>\>{\scriptsize \#\# Adjust $M$, $m$ to account for Ghost, Flagged packets.}\\
{\bf 76}\>\>\>$SIG_{N,N} = SIG_{N,N} + (M-m-1)$;\>\>\>\>\>\>{\scriptsize \#\# Only used for (node-contr$.$ $+$ edge-sched$.$) protocol}\\
77\>\>\>{\bf \em Shuffle Packet}\\
78\>\>\>{\bf \em Re-Shuffle}\\
\\
79\>{\bf \em Adjust Heights}\\
80\>\>\If $B_F$ is an Out. Buffer \boldmath$\mathsf{and}$\unboldmath $\thickspace H_{FP} \geq H_{OUT}$:\>\>\>\>\>\>\>{\scriptsize \#\# $H_{FP}$ and $H_{OUT}$ refer to $B_F$'s info.  If true,}\\
81\>\>\>$M = M-1$;\>\>\>\>\>\>{\scriptsize \#\# then a Flagged packet is top-most non-null packet}\\
82\>\>\If $B_F$ is an Inc. Buffer \boldmath$\mathsf{and}$\unboldmath $\thickspace \mathsf{IN}[H_{IN}+1] \neq \bot$:\>\>\>\>\>\>\>{\scriptsize \#\# $\mathsf{IN}$ and $H_{IN}$ refer to $B_F$'s info.  If true,}\\
83\>\>\>$M = M+1$;\>\>\>\>\>\>{\scriptsize \#\# then there is a Ghost Packet creating a gap}\\
84\>\>\If $B_T$ is an Out. Buffer \boldmath$\mathsf{and}$\unboldmath $\thickspace \mathsf{OUT}[H_{OUT}] = \bot$:\>\>\>\>\>\>\>{\scriptsize \#\# $\mathsf{OUT}$ and $H_{OUT}$ refer to $B_T$'s info.  If true,}\\
85\>\>\>$m = m-1$;\>\>\>\>\>\>{\scriptsize \#\# then there is a Flagged packet creating a gap}\\
86\>\>\If $B_T$ is an Inc. Buffer \boldmath$\mathsf{and}$\unboldmath $\thickspace H_{GP} \neq \bot$:\>\>\>\>\>\>\>{\scriptsize \#\# $H_{GP}$ and $H_{IN}$ refer to $B_T$'s info.  If true,}\\
87\>\>\>$m = m+1$;\>\>\>\>\>\>{\scriptsize \#\# then there is a Ghost Packet creating a gap}\\
\\
88\>{\bf \em Shuffle Packet}\\
89\>\>$B_T[m+1] = B_F[M]$;\\
90\>\>$B_F[M]=\bot$;\\
91\>\>$H_{B_T} = H_{B_T} +1$;\>\>\>\>\>\>\>{\scriptsize \#\# $H_{B_T}$ is the height of $B_T$}\\
92\>\>$H_{B_F} = H_{B_F} -1$;\>\>\>\>\>\>\>{\scriptsize \#\# $H_{B_F}$ is the height of $B_F$}\\
93\>\>\If $B_F$ is an Inc. Buffer \boldmath$\mathsf{and}$\unboldmath $\thickspace \bot \neq H_{GP} > H_{IN}$, \Then \>\>\>\>\>\>\>{\scriptsize \#\# $H_{GP}$ and $H_{IN}$ refer to $B_F$'s info.  Since $B_F$ lost a}\\
94\>\>\>$H_{GP} = H_{IN}+1$;\>\>\>\>\>\>{\scriptsize \#\# packet, slide Ghost Packet down into top slot}\\
\\
95\>{\bf \em Sender Re-Shuffle}\\
96\>\>{\it Fill Packets}; \>\>\>\>\>\>\>{\scriptsize \#\# Fills each outgoing buffer with codeword packets not}\\
\>\>\>\>\>\>\>\>\>{\scriptsize \#\# yet distributed, adjusting each $H_{OUT}$ appropriately}\\
97\>{\bf \em Receiver Re-Shuffle}\\
98\>\>\For {\it incoming} edge $E(A,R) \in G$:\>\>\>\>\>\>\>{\scriptsize \#\# Reset $R$'s Inc. Buffer to be open}\\
99\>\>\>\If $H_{IN} > 0$:\>\>\>\>\>\>{\scriptsize \#\# $R$ rec'd a packet along this edge this round}\\
100\>\>\>\>\If $IN[1]$ is a packet for current codeword:\>\>\>\>\>{\scriptsize \#\# Also, see comments on 104 below}\\
101\>\>\>\>\>$I_R[\kappa] = IN[1]$; $\kappa = \kappa +1$;\\
102\>\>\>$H_{IN}=0$; $IN[1] = \bot$; $H_{GP} = \bot$;\\
103\>\>\If $\kappa \geq D-3n^3$ \Then \>\>\>\>\>\>\> {\scriptsize \#\# $R$ can decode by Fact 1}\\
104\>\>\>Decode and output message; \>\>\>\>\>\>{\scriptsize \#\# Also, only keep codeword packets corresponding}\\
\>\>\>\>\>\>\>\>\>{\scriptsize \#\#  to {\it next} message in future rounds}
\end{tabbing}
\end{footnotesize}
\end{minipage} }
\end{center}
\vspace{-.6cm}
\caption{Re-Shuffle Rules for both Edge-Scheduling {\it and} (Node-Controlling $+$ Edge-Scheduling) Protocols}
\label{reShuffleRules}
\end{figure}
\newpage
\section{Edge-Scheduling Adversary Model:\\ Proofs of Lemmas and Theorems} \label{proofs}
\indent \indent Before proving the main two theorems for the edge-scheduling adversarial protocol (Theorems \ref{nonAdTheorem} and \ref{nonAdMem}), we will first state and prove a sequence of claims that follow immediately from the Routing and Re-Shuffle Rules of Section \ref{nonAdProtocol}.  We have included pseudo-code in Section \ref{pseudoCode}, and when appropriate the proofs will refer to specific line numbers in the pseudo-code.  In particular, we will reference a line in pseudo-code by writing ({\bf X.YY}), where X refers to the Figure and YY to the line number.  We pushed the claims and proofs that rely heavily on the pseudo-code (but are unlikely to add insight) to Section \ref{bareBones} so as not to distract the reader with the gory details of these proofs.  Logically, these claims need to be proven first as the claims and proofs below will rely on them (even though Section \ref{bareBones} appears below, the proofs there do not rely on proofs here, so there is no danger of circularity).

We state and prove here the claims that will lead to Theorem \ref{nonAdTheorem} and Theorem \ref{nonAdMem}.
\begin{claim} \label{capacity0} The capacity of the internal buffers of the network (not counting $S$ or $R$'s buffers) is $4n(n-2)^2$.\end{claim}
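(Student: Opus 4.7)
The plan is a direct counting argument based on the buffer definitions given in the Setup (Figure \ref{setupCode}, lines 06--26). First I would enumerate the internal nodes: since $S$ and $R$ are excluded, there are $n-2$ such nodes.

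Next I would count, for a single internal node $N$, the number of buffers attached to it. According to the Setup, $N$ maintains an outgoing buffer for each directed edge $E(N,B)$ with $B \neq S$ (and $N \neq R$, which is automatic since $N$ is internal). Since $B$ ranges over $V \setminus \{N,S\}$, this gives $n-2$ outgoing buffers. Symmetrically, $N$ maintains an incoming buffer for each directed edge $E(A,N)$ with $A \neq R$, giving $n-2$ incoming buffers. Hence $N$ has $2(n-2)$ buffers in total.

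Each of these buffers is declared to have capacity $2n$ packets (the arrays $\mathsf{OUT} \in [2n] \times \{0,1\}^P$ and $\mathsf{IN} \in [2n] \times \{0,1\}^P$ on lines 08 and 18). Multiplying together the number of internal nodes, the number of buffers per node, and the capacity per buffer yields
\[
(n-2) \cdot 2(n-2) \cdot 2n \;=\; 4n(n-2)^2,
\]
which is the claimed bound.

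There is no real obstacle here: the statement is essentially a bookkeeping consequence of the Setup. The only mild care needed is to verify that no buffer is double-counted (each directed edge $E(A,B)$ contributes exactly one outgoing buffer to $A$'s side and one incoming buffer to $B$'s side, so the two sides are disjoint) and to apply the restrictions $B \neq S$ and $A \neq R$ correctly when determining the degree $n-2$ for both buffer types.
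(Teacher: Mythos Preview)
Your proposal is correct and matches the paper's proof essentially line for line: count $n-2$ internal nodes, $2(n-2)$ buffers per node, capacity $2n$ each, and multiply. The only cosmetic difference is that the paper justifies the $2n$ capacity by invoking Lemma~\ref{pseudo} (parts 5, 6, 9) rather than the Setup declarations you cite, but this is immaterial.
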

\begin{proof} Each node has $(n-2)$ outgoing buffers (one to each node except itself and $S$, {\bf \ref{setupCode}.07}) and $(n-2)$ incoming buffers (one from each node except itself and $R$, {\bf \ref{setupCode}.17}), and thus a total of $2(n-2)$ buffers.  Each of these buffers has capacity $2n$ (Lemma \ref{pseudo}, parts 5, 6, and 9), and there are $n-2$ internal nodes, so the internal buffer capacity of the network is $4n(n-2)^2$.\end{proof}
\begin{claim} \label{capacity1} The maximum amount of potential\footnote{See Definition \ref{potDef}.} in the internal buffers of the network at any time is $2n(2n+1)(n-2)^2$.\end{claim}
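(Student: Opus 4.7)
The plan is to bound the potential buffer-by-buffer and then sum over all internal buffers. Based on the Slide protocol convention (and consistent with the target bound), the potential of a single buffer of height $H$ should be $\sum_{i=1}^{H} i = H(H+1)/2$, i.e.\ each packet in the buffer contributes a potential equal to its slot height. So the first step is to invoke Definition \ref{potDef} to recall this per-buffer formula.

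Next, I would use the structural facts already established to bound the height of any single buffer. By Lemma \ref{pseudo} (referenced in the proof of Claim \ref{capacity0}), each buffer can hold at most $2n$ packets, so its height $H \leq 2n$ and its individual potential is at most $2n(2n+1)/2 = n(2n+1)$.

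Then I would count internal buffers exactly as in Claim \ref{capacity0}: each of the $(n-2)$ internal nodes owns $2(n-2)$ buffers (one incoming and one outgoing per partner node other than itself, excluding the sender as a destination for incoming and the receiver as a source for outgoing). Multiplying the per-buffer maximum $n(2n+1)$ by the total count $2(n-2)^2$ of internal buffers yields
\[
2(n-2)^2 \cdot n(2n+1) \;=\; 2n(2n+1)(n-2)^2,
\]
which is exactly the claimed bound.

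There is essentially no obstacle here: the result is an immediate arithmetic consequence of Definition \ref{potDef} together with the buffer-capacity bound and the internal-buffer count from Claim \ref{capacity0}. The only subtlety worth double-checking is that flagged and ghost packets do not cause the per-buffer potential to exceed $H(H+1)/2$ under the definition; since flagged packets occupy genuine slots (and count toward height) while ghost packets do not occupy slots or add to height, the bound $H \leq 2n$ continues to control the potential, and no separate argument for these special packets is needed.
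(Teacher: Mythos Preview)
Your proposal is correct and follows essentially the same approach as the paper: bound a single buffer's potential by $\sum_{i=1}^{2n} i = n(2n+1)$ when full, then multiply by the $2(n-2)$ buffers per internal node and the $n-2$ internal nodes. The paper's proof is slightly terser (it omits the explicit reference to Lemma~\ref{pseudo} and your remark about flagged/ghost packets), but the argument is the same.
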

\begin{proof} A buffer contributes the most to network potential when it is full, in which case it contributes $\sum_{i=1}^{2n} i = n(2n+1)$.  Since there are $2(n-2)$ buffers per internal node, and $n-2$ internal nodes, the maximum amount of potential in the internal buffers is as claimed.\end{proof}
\noindent We define the {\it height} of a packet in an incoming/outgoing buffer to be the spot it occupies in that buffer.
\begin{claim} \label{balancing} After re-shuffling, (and hence at the very end/beginning of each round), all of the buffers of each node are {\bf balanced}.  In particular, there are no incoming buffers that have height strictly bigger than any outgoing buffers, and the difference in height between any two buffers is at most one. \end{claim}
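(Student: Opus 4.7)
My plan is to analyze the Re-Shuffle procedure at a single node when it halts, and read off the two stated properties directly from the halting condition together with the tie-breaking conventions. Since Re-Shuffle is a node-local operation, the global claim follows by applying the argument at each node.

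First I would verify termination. When one shuffle step moves a packet from $B_F$ (height $M$) to $B_T$ (height $m$), the quantity $\sum_i H_i^2$ changes by $-2(M-m)+2$; this is at most $-2$ when $M-m\ge 2$ and equals $0$ when $M-m=1$. In the latter case the criterion on line~({\bf \ref{reShuffleRules}.74}) forces $B_F$ incoming and $B_T$ outgoing, so $\sum_{i\in\mathrm{inc}} H_i$ strictly decreases by one. Hence the lexicographic pair $\left(\sum_i H_i^2,\, \sum_{i\in\mathrm{inc}} H_i\right)$ strictly decreases at every iteration, and is bounded below; Re-Shuffle terminates in finitely many steps.

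At termination, by failure of the shuffle criterion, either $M=m$ or $M-m=1$ with it being \emph{not} the case that simultaneously $B_F$ is incoming and $B_T$ is outgoing. The $M=m$ case is immediate; for $M-m=1$ I would invoke the tie-breaking rules of lines~({\bf \ref{reShuffleRules}.72--73}): the fullest $B_F$ prefers incoming while the emptiest $B_T$ prefers outgoing. Therefore ``$B_F$ outgoing'' forces \emph{no} incoming buffer at height $M$, and ``$B_T$ incoming'' forces \emph{no} outgoing buffer at height $m$. A short case split then finishes: if no incoming buffer is at height $M$, every incoming buffer has height $\le m$ while every outgoing buffer has height $\ge m$; if no outgoing buffer is at height $m$, every outgoing buffer has height $\ge m+1=M$ (and hence exactly $M$, since $M$ is the maximum), while every incoming buffer has height $\le M$. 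In either sub-case, no incoming buffer strictly exceeds any outgoing buffer, and the spread is $M-m\le 1$.

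The step I expect to require the most care is the tie-breaking argument: the claim would fail if one could pick an outgoing $B_F$ at height $M$ while some incoming buffer sat tied at height $M$, because that incoming buffer would then violate the no-incoming-above-outgoing property against an outgoing buffer at height $m$. The proof therefore hinges on verifying that the preferences of lines~({\bf \ref{reShuffleRules}.72--73}) interact with the exceptional clause of line~({\bf \ref{reShuffleRules}.74}) to rule out precisely this pathological halting state. A minor bookkeeping issue I would flag is that \emph{Adjust Heights} (lines~({\bf \ref{reShuffleRules}.79--87})) modifies $M$ and $m$ by $\pm 1$ in the presence of flagged or ghost packets; these shifts are deterministic and local to the decision step, and do not disturb the structural argument above.
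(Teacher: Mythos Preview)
Your argument is correct and follows essentially the same route as the paper: both reduce the claim to analyzing the halting state of \textbf{Re-Shuffle} and use the tie-breaking preferences of lines~({\bf \ref{reShuffleRules}.72--73}) to handle the $M-m=1$ case. Your version is arguably cleaner---you add an explicit termination argument (which the paper omits) and avoid the paper's inductive framing on the round index, which is never actually used in its proof; your flag about \emph{Adjust Heights} is also harmless, since that routine is only invoked on line~({\bf \ref{reShuffleRules}.75}) \emph{after} the decision on line~({\bf \ref{reShuffleRules}.74}) and so cannot affect the halting analysis.
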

\begin{proof}[Proof of Claim \ref{balancing}.]  We prove this using induction (on the round index), noting that all buffers are balanced at the outset of the protocol (lines ({\bf \ref{setupCode}.29}) and ({\bf \ref{setupCode}.33})).  Consider any node $N$ in the network, and assume that its buffers are all balanced at the end of some round $\mathtt{t}$.  We need to show the buffers of $N$ will remain balanced at the end of the next round $\mathtt{t}+1$.  Let $B_1$ and $B_2$ denote any two buffers of $N$, and let $h_1$ be the variable denoting the height of $B_1$ and $h_2$ the height of $B_2$.  Suppose for the sake of contradiction that $h_1 \geq h_2+2$ at the end of round $\mathtt{t}+1$ (after re-shuffling).  Let $H$ denote the height of the maximum buffer in $N$ at the end of $\mathtt{t}+1$, so $H \geq h_1 \geq h_2+2$.  Also let $h$ denote the height of the minimum buffer in $N$ at the end of $\mathtt{t}+1$, so $h \leq h_2 \leq H-2$.  But then Re-Shuffle Rules dictate that $N$ should've kept re-shuffling ({\bf \ref{reShuffleRules}.72-74}), a contradiction.

Similarly, assume for contradiction that there exists an incoming buffer whose height $h_2$ is bigger than that of some outgoing buffer that has height $h_1$.  Let $H$ and $h$ be as defined above, so we have that $h \leq h_1 < h_2 \leq H$.  In the case that $h_2 = H$, Re-Shuffle Rules ({\bf \ref{reShuffleRules}.72}) guarantee that an {\it incoming} buffer will be selected to take a packet from.  Also, if $h = h_1$, then Re-Shuffle Rules ({\bf \ref{reShuffleRules}.73}) guarantee that an {\it outgoing} buffer will be chosen to give a packet to.  Therefore, in this case a packet should have been re-shuffled ({\bf \ref{reShuffleRules}.74}), and hence we have contradicted the fact that we are at the end of the Re-Shuffle phase of round $\mathtt{t}$.  On the other hand, if $h \neq h_1$ or $H \neq h_2$, then $H-h \geq 2$, and again Re-Shuffling should not have terminated ({\bf \ref{reShuffleRules}.74}).
\end{proof}
\noindent The following observation is the formalization of the concept of packets ``flowing downhill'' that was introduced in Section \ref{nonAdProtocol}.
\begin{claim} \label{packetHeight} Every packet is inserted into one of the sender's outgoing buffers at some initial height.  When (a copy of) the packet goes between any two buffers $B_1 \neq B_2$ (either across an edge or locally during re-shuffling), its height in $B_2$ is less than or equal to the height it had in $B_1$.  If $B_1 = B_2$, the statement remains true EXCEPT for on line ({\bf \ref{routingRules2}.35}).\end{claim}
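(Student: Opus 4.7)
The plan is to verify the height inequality case-by-case according to the mechanism by which a packet can appear in a new slot. By inspection of the pseudocode, a packet changes position in exactly four ways: (i) it is transmitted across a directed edge and placed into the recipient's incoming buffer in Stage~2, (ii) it is moved between two distinct buffers of the same node during local Re-Shuffle, (iii) it is slid \emph{downward} within the same buffer by \emph{Fill Gap} (lines \ref{routingRules2}.32, \ref{routingRules2}.55, \ref{routingRules2}.57, \ref{routingRules2}.61) or by the Ghost-Packet downward slide (lines \ref{reShuffleRules}.93--\ref{reShuffleRules}.94), or (iv) it is swapped with the top slot of its own outgoing buffer by \emph{Elevate Flagged Packet} on line \ref{routingRules2}.35. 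Case (iv) is exactly the documented exception, and case (iii) is a strict decrease, so these require no further argument. It remains to verify cases (i) and (ii).

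For case (i), I would appeal to the send condition on line \ref{routingRules}.16 together with the balancing Claim \ref{balancing} at the end of the previous round. In the ``normal status'' branch ($sb=0$ and $H_{OUT}>H_{IN}$), \emph{Create Flagged Packet} extracts the outgoing packet from slot $H_{FP}=H_{OUT}$, while \emph{Receive Packet} (lines \ref{routingRules2}.46, \ref{routingRules2}.50, \ref{routingRules2}.52) deposits the packet into the pre-reserved ghost slot $H_{GP}=H_{IN}+1\le H_{OUT}=H_{FP}$; hence the landing height is no larger than the departure height. In the ``problem status'' branch ($sb=1$), the same flagged packet $\tilde p$ is re-sent from its fixed slot $H_{FP}$, and balancing of the previous round (which forbids an incoming buffer from exceeding any outgoing buffer) again yields $H_{IN}+1\le H_{FP}$. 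For case (ii), the Re-Shuffle criterion (lines \ref{reShuffleRules}.72--\ref{reShuffleRules}.74) fires only when $M-m\ge 2$, or $M-m=1$ with $B_F$ an incoming buffer and $B_T$ an outgoing buffer; in either case $m+1\le M$, so the packet moves from effective height $M$ to effective height $m+1$ and does not gain height.

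The one genuine obstacle is the bookkeeping for ghost and flagged packets in case (ii): because the height variable $H$ excludes ghost packets but includes flagged packets, the Re-Shuffle comparison is phrased in terms of \emph{effective} heights rather than literal slot indices, and I must argue that the inequality $m+1\le M$ actually controls the slot the moving packet occupies. I would handle this with a short bookkeeping lemma driven by \emph{Adjust Heights} (lines \ref{reShuffleRules}.79--\ref{reShuffleRules}.87), showing that after adjustment, $M$ equals the literal slot of the top-most non-flagged packet of $B_F$ (the one actually shuffled on line \ref{reShuffleRules}.89) and $m+1$ equals the literal destination slot in $B_T$ (accounting for any flagged gap below or ghost slot above). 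With this identification, the Re-Shuffle inequality transfers verbatim to the claim about literal heights, completing the case analysis.
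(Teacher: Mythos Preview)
Your case-by-case decomposition matches the paper's, and cases (iii)--(iv) and the normal-status branch of (i) are essentially correct. There is, however, a genuine gap in the problem-status branch of case~(i), and a missing check in case~(ii).

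\textbf{Problem-status transfer.} Your appeal to Claim~\ref{balancing} here is invalid: balancing is an \emph{intra-node} statement (incoming vs.\ outgoing buffers of the \emph{same} node), whereas you are comparing $A$'s outgoing slot $H_{FP}$ against $B$'s incoming height $H_{IN}$ across the edge $E(A,B)$. Nothing in Claim~\ref{balancing} bounds $H_{IN}$ of $B$ by $H_{FP}$ of $A$. The correct argument (Lemma~\ref{subclaim6} in the paper) is more delicate: the flagged packet was created in some earlier round $\mathtt{t}_0$ when the normal-status condition $H_{OUT}>H_{IN}$ did hold, but both sides may have drifted since. One must show that over the interval $[\mathtt{t}_0,\mathtt{t}]$ the ghost slot $H_{GP}$ in $B$'s buffer was reserved at height $\le h'+1$ (with $h'$ the value of $H_{IN}$ at round $\mathtt{t}_0$) and can only \emph{decrease} thereafter, while the flagged packet's height $H_{FP}$ in $A$ started at $h>h'$ and can only \emph{increase} (via line~\ref{routingRules2}.35). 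Chaining gives $H_{GP}\le h'+1\le h\le H_{FP}$. This requires tracking $sb$, $RR$, $FR$, and $H_{GP}$ across multiple rounds and is not a one-line consequence of balancing.

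\textbf{Re-Shuffle bookkeeping.} You correctly observe that \emph{Adjust Heights} makes $M$ and $m+1$ the literal source and destination slots. But the inequality $m+1\le M$ was tested on line~\ref{reShuffleRules}.74 \emph{before} adjustment, and lines~\ref{reShuffleRules}.81 and~\ref{reShuffleRules}.87 can each shrink the gap $M-m$ by one. If both fired with pre-adjustment gap $M-m=2$, the post-adjustment gap would be zero and the height inequality would fail. The paper closes this by proving that lines~\ref{reShuffleRules}.81 and~\ref{reShuffleRules}.87 cannot both fire in a single call: that would require an outgoing buffer at least two taller than an incoming buffer of the same node, which is ruled out by Claim~\ref{balancing} together with the fact that during routing outgoing heights only drop and incoming heights only rise. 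Your sketch does not supply this piece.
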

\begin{proof} See Section \ref{bareBones}, where we restate and prove this in Lemma \ref{packetHeight2}.\end{proof}
\begin{defn} \label{receive} We will say that a packet is {\bf accepted} by a buffer $B$ in round $\mathtt{t}$ if $B$ receives and {\it stores} that packet in round $\mathtt{t}$, either due to a packet transfer or re-shuffling (as on ({\bf \ref{routingRules2}.53}) or ({\bf \ref{reShuffleRules}.89})).\end{defn}
\begin{defn} \label{insert} We say that the sender {\bf inserts} a packet into the network in round $\mathtt{t}$ if any internal node (or $R$) accepts the packet (as in Definition \ref{receive}) in round $\mathtt{t}$.  Note that this definition does not require that $S$ receives the verification of receipt (i.e.\ that $S$ receives the communication on ({\bf \ref{routingRules}.06}) indicating $RR \geq FR$), so $S$ may not be aware that a packet was inserted.\end{defn}
\noindent Notice that in terms of transferring packets, the above definition distinguishes between the case that a packet is accepted by $B$ in round $\mathtt{t}$ (as defined above) and the case that a packet arrives at $B$ in round $\mathtt{t}$ but is deleted by $B$ (by failing the conditional statement on line ({\bf \ref{routingRules2}.51})).  As emphasized in the Introduction, {\it correctness} and {\it throughput rate} are two of the three commodities with which we will evaluate a given routing protocol.  In our protocol, we will need to show that packets are not lost en route from $S$ to $R$ to ensure {\it correctness}, and meanwhile we will want to show that packets are not (overly) duplicated (since transferred packets are actually copies of the original, some packet duplication is necessary) to allow a ``fast'' {\it throughput rate}.  The following two claims guarantee that packet duplication won't become problematic while simultaneously guaranteeing that packets are never deleted completely (except by $R$).
\begin{claim} \label{packetProliferation2} Before the end of transmission $\mathtt{T}$, any packet that was inserted into the network during transmission $\mathtt{T}$ is either in some buffer (perhaps as a flagged packet) or has been received by $R$.\end{claim}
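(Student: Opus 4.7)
The plan is to prove the invariant by induction on rounds $\mathtt{t} \geq \mathtt{t}^{*}$, where $\mathtt{t}^{*}$ is the round in which the packet $p$ was inserted (Definition \ref{insert}). The base case is immediate: by definition of ``insertion,'' at the end of round $\mathtt{t}^{*}$ the packet $p$ sits in the incoming buffer of the node that first accepted it. For the inductive step, I will catalogue every line of the pseudo-code at which a cell of a buffer is cleared (i.e.\ set to $\bot$), and show that in each case either (a)~the packet is simultaneously written into another cell of the same node, (b)~an identical copy already resides at another node's buffer, or (c)~the packet has been handed off to $R$.

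First I would enumerate the deletion sites, restricting attention to rounds strictly before the end of transmission $\mathtt{T}$ (so line ({\bf \ref{routingRules2}.61}) does not apply). The sites are: line ({\bf \ref{routingRules2}.32}) where an outgoing buffer's flagged packet is removed after a successful acknowledgement; line ({\bf \ref{reShuffleRules}.90}) which is the body of \emph{Shuffle Packet} where the source cell is nulled; line ({\bf \ref{reShuffleRules}.102}) where $R$ clears its incoming slot after copying into its storage buffer; and the \emph{Fill Gap}/duplicate handling in lines ({\bf \ref{routingRules2}.55})--({\bf \ref{routingRules2}.57}). For \emph{Shuffle Packet}, line ({\bf \ref{reShuffleRules}.89}) writes into $B_{T}$ before line ({\bf \ref{reShuffleRules}.90}) clears $B_{F}$, so case (a) holds. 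For $R$'s clearing of $IN[1]$, line ({\bf \ref{reShuffleRules}.101}) has already copied the packet into the storage buffer $I_{R}$ (or $R$ has already decoded the codeword of which $p$ is a part), so case (c) holds. The duplicate-handling cases on lines ({\bf \ref{routingRules2}.54})--({\bf \ref{routingRules2}.57}) only \emph{reject} an incoming packet that was never placed in the incoming buffer, and the accompanying \emph{Fill Gap} merely collapses ghost-packet slots, never discarding a genuine packet.

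The delicate case is line ({\bf \ref{routingRules2}.32}), where the sending node $A$ deletes the flagged packet $\tilde p$ upon receiving $RR \geq FR$ in stage~1 of some round $\mathtt{t}+1$. I would argue as follows: the condition $RR \geq FR$ can only be reached if, in stage~2 of round $\mathtt{t}$, the receiving node $B$ executed line ({\bf \ref{routingRules2}.53}), thereby writing the copy of $\tilde p$ into its incoming buffer $\mathsf{IN}$ and setting $RR := \mathtt{t}$. Between that write and $A$'s deletion at stage~1 of round $\mathtt{t}+1$, no other rule fires that could remove $\tilde p$ from the system---indeed, during the intervening interval $B$ performs no further actions on $\mathsf{IN}$ beyond sending the (\emph{already updated}) $RR$ back to $A$. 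Therefore at the moment $A$ nulls its copy, $B$ still holds $p$, so case (b) applies. I will also need to check that the end-of-round \emph{Re-Shuffle} inside node $B$ at round $\mathtt{t}$ only \emph{moves} $p$ rather than destroying it, which is exactly what \emph{Shuffle Packet} already guarantees.

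The expected main obstacle is making the synchronization argument for the flagged-packet deletion fully airtight in the presence of edge failures and the various ghost-packet/flagged-packet bookkeeping flags. In particular I will have to verify, line by line in \emph{Reset Outgoing Variables} and \emph{Receive Packet}, that the conditional branches taken by $A$ and $B$ are mutually consistent: $A$ enters the deletion branch ({\bf \ref{routingRules2}.30})--({\bf \ref{routingRules2}.33}) exactly when the hand-off in round $\mathtt{t}$ succeeded, and conversely when the hand-off failed (edge down or ghost-packet path), $A$ retains $\tilde p$ via the ``status = problem'' branch ({\bf \ref{routingRules2}.27})--({\bf \ref{routingRules2}.28}) and the \emph{Elevate Flagged Packet} step ({\bf \ref{routingRules2}.35}). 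Once this correspondence is established for each of the finitely many branch combinations, the inductive step closes and the claim follows.
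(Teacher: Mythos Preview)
Your overall approach---enumerate every line where a buffer cell is cleared and verify that a copy survives---is exactly what the paper does. The gap is in your treatment of line ({\bf \ref{routingRules2}.32}). You assert that if $A$ deletes the flagged packet in stage~1 of round $\mathtt{t}+1$, then $B$ must have executed ({\bf \ref{routingRules2}.53}) in stage~2 of round $\mathtt{t}$. This is false: the condition $RR \geq FR$ only tells you that $B$ reached ({\bf \ref{routingRules2}.53}) in \emph{some} round $\mathtt{t}' \in [FR,\mathtt{t}]$, since $RR$ is set once upon acceptance and then persists across rounds until the next acceptance (it is not re-stamped each round). If the edge $E(A,B)$ went down for several rounds after $B$ accepted, $A$ may not learn of the acceptance until much later. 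In the interval $[\mathtt{t}',\mathtt{t}+1]$, $B$ may well have re-shuffled $p$ to an outgoing buffer and sent it to some $C$, obtained confirmation from $C$, and deleted its own copy---so your claim that ``$B$ still holds $p$'' when $A$ deletes is simply wrong in general. Your proposed resolution (auditing the branch structure of \emph{Reset Outgoing Variables} and \emph{Receive Packet}) is aimed at the wrong target: the branches are consistent; the problem is the multi-round gap between $B$'s acceptance and $A$'s deletion.

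The paper sidesteps this by not trying to locate the packet at the moment of deletion. It proves the weaker local statement ``$p$ is never removed from a buffer until another buffer has already accepted $p$ from it,'' invoking Statement~3 of Claim~\ref{obsBelowHere} to show the packet $B$ stored at round $\mathtt{t}'$ was indeed $p$. From this local hand-off property the global invariant follows by chaining: the copy at $B$ spawns a copy at $C$ before $B$'s copy is deleted, and so on. Within your inductive framework, the clean fix is to use the induction hypothesis at round $\mathtt{t}$ (not an ad hoc argument about round $\mathtt{t}$ alone): since $B$ accepted $p$ at $\mathtt{t}' \leq \mathtt{t}$, the hypothesis applied from $\mathtt{t}'$ onward guarantees that the $B$-lineage copy is still somewhere at the end of round $\mathtt{t}$, and that copy is untouched by $A$'s deletion in round $\mathtt{t}+1$.
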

\begin{proof}See Section \ref{bareBones}, where we restate and prove this in Lemma \ref{packetProliferation22}. \end{proof}
\begin{claim} \label{packetProliferation3} Not counting flagged packets, there is at most one copy of any packet in the network at any time (not including packets in the sender or receiver's buffers).  Looking at all copies (flagged and un-flagged) of any given packet present in the network at any time, at most one copy will {\bf ever} be accepted (as in Definition \ref{receive}) by another node.\end{claim}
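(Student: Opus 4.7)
The plan is to prove both parts by induction on rounds, maintaining the invariant that for each packet $p$ currently in the network, there is at most one \emph{active} copy of $p$, where a copy is active if it is either (a) a non-flagged copy, or (b) a flagged copy whose designated destination has not yet accepted $p$; all other copies are \emph{dead}. Since a non-flagged copy is always active, this invariant immediately yields Part 1. For Part 2, I would argue that a dead copy can never be accepted by another node in any future round: a dead flagged copy sits in some node $A$'s outgoing buffer for $E(A,B)$, where $B$ has already accepted $p$, so $B$'s round-received variable $RR$ for this edge satisfies $RR \geq FR$ (the flagged-round of $A$'s packet). Since $FR$ is constant for as long as the flagged packet persists (by inspection of line ({\bf \ref{routingRules2}.38}), which only sets $FR$ upon creation of a new flagged packet) and $RR$ is monotonically non-decreasing within a transmission (reset to $-1$ only at End of Transmission Adjustments, line ({\bf \ref{routingRules2}.64})), every future re-send triggers the rejection branch at line ({\bf \ref{routingRules2}.54}). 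Hence only a single active copy is ever eligible to cause a new acceptance event, giving Part 2.

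For the inductive step I trace one round in order. Stage 1 exchanges move no packets. \emph{Reset Outgoing Variables} only deletes flagged copies whose receipt has been confirmed---these are necessarily dead---so copy counts only decrease. In Stage 2, \emph{Create Flagged Packet} at $A$ for edge $E(A,B)$ is guarded by $sb=0$ and $H > H_{IN}$ (line ({\bf \ref{routingRules2}.37})), so it converts a non-flagged copy into a flagged copy only when no prior flagged copy is pending on that edge. If the converted packet was the unique active copy of $p$, then $B$ cannot already possess $p$ (otherwise $B$ would house a second active copy at the start of the round, contradicting the inductive hypothesis), and so the newly flagged copy at $A$ is itself active. The ensuing \emph{Send}/\emph{Receive} then either (i) succeeds, so $B$ accepts at line ({\bf \ref{routingRules2}.53}), converting $A$'s flagged to dead and producing a new non-flagged (hence unique active) copy at $B$, or (ii) fails (edge failure, or $B$ rejects because $B$ already holds $p$), leaving $A$'s flagged as the sole active copy. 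Finally, \emph{Re-Shuffle} moves only non-flagged packets within a single node (flagged packets are pinned by the re-shuffle rule), so it neither creates new copies nor changes active-copy status.

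The main subtlety is the ``loop'' scenario enabled by the height-based re-shuffling of the Slide protocol: if buffer heights at $B$ cause $p$ to be re-shuffled back into $B$'s outgoing buffer for $A$, then $B$ may subsequently send $p$ back to $A$, producing a configuration in which $A$ simultaneously holds a flagged copy for $B$ (in the outgoing buffer to $B$) and a non-flagged copy (in the incoming buffer from $B$), while $B$ holds a flagged copy for $A$. The $RR/FR$ mechanism still guarantees that both flagged copies are dead (each destination has previously accepted $p$), leaving exactly one active copy at $A$, namely the freshly accepted non-flagged one. The main technical obstacle is a careful case analysis of every conditional branch in the Stage 2 routines (Create Flagged Packet, Send Packet, Receive Packet) and End of Transmission Adjustments under all combinations of communication success and failure permitted by the edge-scheduling adversary, verifying in each branch that the count of active copies never exceeds one and that no action can ``resurrect'' a previously-dead flagged copy.
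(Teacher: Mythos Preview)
Your approach is essentially the same as the paper's: both introduce a dichotomy of ``live'' versus ``dead'' copies and maintain the invariant that at most one live copy exists at any time. The paper defines \emph{dead} semantically (a copy is dead if it will never leave its current buffer nor generate any further copies) and then proves, via a sequence of observations, that the instant a live copy spawns a new copy the original becomes flagged and dead. Your syntactic definition via the $RR/FR$ relation is precisely the mechanism the paper uses to establish deadness, so the two arguments coincide.

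There is one local error in your Create Flagged Packet step. You claim ``$B$ cannot already possess $p$ (otherwise $B$ would house a second active copy).'' This is false: $B$ may hold a \emph{dead} flagged copy of $p$ on some other outgoing edge $E(B,C)$, left over from an earlier stay of $p$ at $B$ before the active copy looped back to $A$. A copy at $B$ need not be active. Fortunately, you do not need this claim. The correct reason the newly flagged copy at $A$ is active is simply that the guard $sb=0$ on line ({\bf \ref{routingRules2}.37}) forces a fresh $FR$ equal to the current round on line ({\bf \ref{routingRules2}.38}), and $B$'s $RR$ on $E(A,B)$ can only have been set in a strictly earlier round (line ({\bf \ref{routingRules2}.53})), so $RR < FR$ and the copy is active by your own criterion (b). With this fix the rest of your inductive step (including the loop scenario, which you handle correctly) goes through and matches the paper's proof.
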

\begin{proof}See Section \ref{bareBones}, where we restate and prove this in Lemma \ref{packetProliferation32}. \end{proof}
The following claim won't be needed until we introduce the protocol for the (Node-Controlling $+$ Edge-Scheduling) adversarial model, but follows from the Routing Rules outlined in Section \ref{nonAdProtocol}.
\begin{claim} \label{packetProliferation} At any time, an outgoing buffer has at most one flagged packet.\end{claim}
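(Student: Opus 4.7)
The plan is to prove this by induction on rounds, leveraging the fact that the flagged packet of each outgoing buffer is fully encoded by the scalar variables $\tilde{p}$, $H_{FP}$, and $FR$. Because each of these is single-valued rather than array-valued, at most one flagged packet is representable at any instant by design; the substantive content of the claim is that the Routing Rules never attempt to create a new flagged packet while an old one is still being tracked.

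The key invariant I would maintain is: at the start of every Stage~2 of every round, $\tilde{p} \ne \bot$ iff $sb = 1$, and the triple $(\tilde{p}, H_{FP}, FR)$ is either simultaneously all $\bot$ or simultaneously all non-$\bot$ (referring to the single flagged packet). The base case follows from the initialization block in Figure~\ref{setupCode}, where all three variables are set to $\bot$ and $sb=0$. For the inductive step I would distinguish the two possible states entering Stage~2: ($sb=0$, no flagged packet) or ($sb=1$, one flagged packet). In the former, ``Create Flagged Packet'' may fire at most once, assigning $\tilde{p}, H_{FP}, FR$ together and producing a single new flagged packet; in the latter, its precondition $sb=0$ fails, so no new flagged packet is created and only the existing one is (possibly) resent. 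The subsequent ``Reset Outgoing Variables'' block in Stage~1 of the next round then resolves to exactly one of two end states: either the receipt index $RR$ catches up with $FR$, in which case the flagged packet is cleared and $sb$ set to $0$, or the packet remains outstanding, in which case the flagged packet is kept and $sb$ set to $1$. Either way the invariant is preserved, and the ``End of Transmission Adjustments'' block uniformly resets the same variables, so the invariant also carries across transmission boundaries.

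The argument requires no deep idea, only careful case analysis of the pseudo-code. The most delicate point is confirming that the ``Elevate Flagged Packet'' step (which swaps the flagged packet into the top slot of the outgoing buffer and updates $H_{FP}$) merely relocates the unique flagged packet rather than producing a second one; this is immediate from the fact that $H_{FP}$ is overwritten with the new position rather than extended, and the swap at the buffer level only moves the single stored copy between slots.
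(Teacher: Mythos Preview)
Your argument is correct and essentially parallels the paper's, though packaged differently. The paper derives this claim as an immediate corollary of Statement~3 of Claim~\ref{obsBelowHere}, which establishes the stronger fact that once a packet $p$ is flagged, the triple $(\tilde p, \mathsf{OUT}[H_{FP}], FR)$ remains pinned to $p$ until confirmation of receipt; the claim then follows since the single-valued $H_{FP}$ can point to only one slot. You instead run the induction directly, maintaining the invariant ``at the start of Stage~2, $\tilde p \ne \bot$ iff $sb=1$'' so that \emph{Create Flagged Packet} (whose guard is $sb=0$) can never overwrite an existing flag. This is exactly the content of Claim~\ref{HFPFR} plus the forward direction you need, and your case analysis of \emph{Reset Outgoing Variables} is the same one the paper carries out inside the proof of Claim~\ref{obsBelowHere}.

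One small imprecision: your dichotomy ``either $RR$ catches up \dots\ or the packet remains outstanding, in which case \dots\ $sb$ set to $1$'' glosses over the sub-case $d=0$ (no send in the previous round, e.g.\ because $H_{IN}=\bot$), where $sb$ is not explicitly \emph{set} to $1$ on line~({\bf \ref{routingRules2}.28}) but simply retains its prior value of $1$. The end state is the same, so the invariant survives, but the wording should reflect that $sb$ may be \emph{preserved} rather than \emph{assigned} in this branch.
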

\begin{proof}See Section \ref{bareBones}, where we restate and prove this in Corollary \ref{packetProliferationA2}. \end{proof}
The following definition formalizes the notion of ``potential,'' and will be necessary to prove throughput performance bounds.
\begin{defn} \label{potDef} For any buffer\footnote{Packets in one of the sender or receiver's buffers do not count towards potential.} $B \neq S,R$ that has height $h$ at time $t$, define the {\em potential} of $B$ at time $t$, denoted by $\Phi^B_t$, to be:
	\begin{equation*}
	\Phi^B_t := \sum_{i=1}^h i = \frac{h(h+1)}{2}.
	\end{equation*}
	For any internal node $N \in \mathcal{P} \setminus \{R,S\}$, define the node's potential $\Phi^N_t$ to be the sum of its buffer's potentials:
		\begin{equation*}
		\Phi^N_t := \hspace{-.4cm} \sum_{\mbox{\scriptsize Buffers $B$ of $N$}} \hspace{-.4cm}\Phi^B_t
		\end{equation*}
	Define the {\em network potential} $\Phi_t$ at time $t$ to be the sum of all the internal buffers' potentials:
	\begin{equation*}
	\begin{array}{r} {\displaystyle \Phi_t := \thickspace \sum \thickspace \Phi^N_t} \\ {\scriptstyle N \in \mathcal{P} \setminus \{R,S\} } \thickspace \end{array}
	\end{equation*}
\end{defn}
It will be useful to break an internal node's potential into two parts.  The first part, which we will term \textsf{packet duplication potential}, will be the sum of the heights of the flagged packets in the node's outgoing buffers {\it that have already been accepted} by the neighboring node (as in Definition \ref{receive}).  Recall that a flagged packet is a packet that was sent along an outgoing edge, but the sending node is maintaining a copy of the packet until it gets confirmation of receipt.  Therefore, the contribution of packet duplication potential to overall network potential is the extraneous potential; it represents the over-counting of duplicated packets.  We emphasize that not all flagged packets count towards packet-duplication potential, since packets are flagged as soon as the sending node determines a packet should be sent (see line {\bf \ref{routingRules2}.38}), but the flagged packet's height does not count towards packet duplication potential until the receiving node has accepted the packet as on line ({\bf \ref{routingRules2}.53}) (which may happen in a later round or not at all).  The other part of network potential will be termed \textsf{non-duplicated potential}, and is the sum of the heights of all non-flagged packets together with flagged packets that have not yet been accepted.  Note that the separation of potential into these two parts is purely for analysis of our protocol, indeed the nodes are not able to determine if a given flagged packet contributes to {\it packet duplication} or {\it non-duplicated} potential.  For convenience, we will often refer to (network) non-duplicated potential simply as (network) potential (the meaning should be clear from context).

Notice that when a node accepts a packet, its own (non-duplicated) potential instantaneously increases by the height that this packet assumes in the corresponding incoming buffer.  Meanwhile, the sending node's {\it non-duplicated potential} drops by the height that the packet occupied in its outgoing buffer, and there is a simultaneous and equivalent {\it increase} in this sending node's {\it packet duplication} potential.  Separating overall network potential into these two categories will be necessary to state and prove the following Lemma:
\begin{lemma} \label{item3} Every change in network potential comes from one of the following 3 events:\vspace{-.3cm}
    \begin{enumerate}\setlength{\itemsep}{1pt} \setlength{\parskip}{0pt} \setlength{\parsep}{0pt}
    \item $S$ inserts a packet into the network.

    \item $R$ receives a packet.

    \item A packet that was sent from one internal node to another is accepted; the verification of packet receipt is received by the sending node; a packet is shuffled between buffers of the same node; or a packet is moved within a buffer.
    \end{enumerate}
\vspace{-.2cm}Furthermore, changes in network potential due to item 1) are strictly non-negative and changes due to item 2) are strictly non-positive.  Also, changes in network {\bf \em non-duplicated potential} due to item 3) are strictly non-positive.  Finally, at all times, network {\bf \em packet duplication potential} is bounded between zero and $2n^3-8n^2+8n$.
\end{lemma}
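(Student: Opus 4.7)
\textbf{Proof plan for Lemma~\ref{item3}.}

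My first step is to enumerate every operation in the pseudo-code that can alter a buffer's height or change a packet's flagged/unflagged status, since these are the only operations that can move potential. Scanning the Routing and Re-Shuffle rules, the complete list is: (a) \emph{Sender Re-Shuffle} fills $S$'s outgoing buffers, (b) \emph{Receiver Re-Shuffle} empties $R$'s incoming buffers into the storage buffer, (c) a packet is accepted in \emph{Receive Packet} (line~\ref{routingRules2}.53) at a neighbor, (d) a packet is deleted after confirmation in \emph{Reset Outgoing Variables} (line~\ref{routingRules2}.32), together with the \emph{Fill Gap} slides it triggers, (e) \emph{Elevate Flagged Packet} (line~\ref{routingRules2}.35), and (f) \emph{Shuffle Packet} (lines~\ref{reShuffleRules}.88--94). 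The remaining bookkeeping lines only rewrite control variables and never touch buffer contents. I would match (a) to item~1, (b) to item~2, and (c)--(f) to item~3, which proves the trichotomy.

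Next I would verify the sign claims for items~1 and~2. In (a), \emph{Fill Packets} places fresh codeword packets on top of the sender's outgoing buffers, strictly increasing each buffer's height (or leaving it unchanged when full); every height increase contributes positively to $\Phi_t$. In (b), each packet $R$ pulls out of an incoming buffer strictly reduces that buffer's height, and packets that arrive at $R$ never contribute to network potential thereafter, so the contribution is non-positive.

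For item~3 (non-duplicated potential non-positive) I would handle each subcase individually. In case (c), suppose $A$'s outgoing buffer had the flagged packet at height $h_1$ and $B$'s incoming buffer accepts it into slot $h_2 = H_B+1$ (line~\ref{routingRules2}.53). Before acceptance the flagged packet at $A$ is in the non-duplicated pool; after acceptance it moves into the packet-duplication pool, and a new packet appears in $B$'s non-duplicated pool at height $h_2$. Thus non-duplicated potential changes by $h_2 - h_1$, and Claim~\ref{packetHeight} (the ``downhill'' property) together with the send condition $H_A>H_B$ gives $h_2 \le h_1$, so the change is $\le 0$. In case (d), the flagged packet leaves the packet-duplication pool and \emph{Fill Gap} either does nothing or strictly lowers the heights of the packets above it, so non-duplicated potential is unchanged or decreases. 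In case (e), line~\ref{routingRules2}.34 ensures the flagged packet has not yet been accepted, so both it and the packet it swaps with are in the non-duplicated pool; swapping two heights preserves the sum. Finally, case (f): the re-shuffle guard on line~\ref{reShuffleRules}.74 forces $M - m \ge 2$, except when $M-m=1$ with $B_F$ incoming and $B_T$ outgoing; since the transferred packet moves from height $M$ to height $m+1$, the change is $m+1-M \le -1$ in the first sub-case and $=0$ in the second, hence non-positive in either case.

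For the bound on packet-duplication potential, I would use Claim~\ref{packetProliferation} to argue that each outgoing buffer houses at most one flagged packet, and that flagged packet sits at height $\le 2n$. By the buffer count from Claim~\ref{capacity0}, there are $(n-2)$ outgoing buffers at each of the $n-2$ internal nodes, giving at most $(n-2)^2$ flagged packets contributing to the duplication pool, each of height at most $2n$, hence total packet-duplication potential at most $2n(n-2)^2 = 2n^3 - 8n^2 + 8n$; non-negativity is immediate. The main obstacle is the bookkeeping in case (c): one must carefully track that the \emph{same} flagged copy simultaneously leaves the non-duplicated pool and enters the duplication pool, so that Claim~\ref{packetHeight} can be applied cleanly to the non-duplicated component alone.
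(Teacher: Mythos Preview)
Your mapping of items~1 and~2 is the main gap. By Definition~\ref{potDef}, buffers belonging to $S$ and $R$ are explicitly excluded from network potential, so your operation~(a) (\emph{Sender Re-Shuffle} filling $S$'s outgoing buffers) and your operation~(b) (\emph{Receiver Re-Shuffle} emptying $R$'s incoming buffers) cause \emph{no change whatsoever} to $\Phi_t$; your non-negativity argument for item~1 (``every height increase contributes positively to $\Phi_t$'') is therefore about the wrong buffers. The event ``$S$ inserts a packet'' is Definition~\ref{insert}: an internal node accepts a packet from $S$ at line~({\bf \ref{routingRules2}.53}), and the non-negative change comes from that \emph{internal} node's incoming buffer gaining height. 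Likewise item~2 is line~({\bf \ref{routingRules2}.53}) executed at $R$, which contributes nothing to $\Phi_t$ since $R$'s buffers do not count (hence trivially $\le 0$). Line~({\bf \ref{routingRules2}.53}) must therefore be split across all three items depending on the endpoints, rather than placed wholesale under item~3 as you have it.

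Your enumeration also omits the \emph{Fill Gap} calls on lines ({\bf \ref{routingRules2}.55}), ({\bf \ref{routingRules2}.57}), ({\bf \ref{routingRules2}.61}), and ({\bf \ref{routingRules2}.64}), each of which can slide packets down within a buffer and so belongs to the ``packet moved within a buffer'' sub-case of item~3 (these are easily non-positive, but must be listed for the trichotomy to be exhaustive). With these corrections the remainder of your plan---invoking Claim~\ref{packetHeight} for the internal-to-internal transfer, the swap-preserves-sum observation for~(e), and the $(n-2)^2\cdot 2n$ count via Claim~\ref{packetProliferation} for the duplication bound---matches the paper's proof and is correct.
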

\begin{proof} Since network potential counts the heights of the internal nodes' buffers, it only changes when these heights change, which in turn happens exclusively when there is packet movement.  By reviewing the pseudo-code, we see that this happens only on lines ({\bf \ref{routingRules2}.32}), ({\bf \ref{routingRules2}.35}), ({\bf \ref{routingRules2}.53}), ({\bf \ref{routingRules2}.55}), ({\bf \ref{routingRules2}.57}), ({\bf \ref{routingRules2}.61}), ({\bf \ref{routingRules2}.64}), and ({\bf \ref{reShuffleRules}.89-90}).  Each of these falls under one of the three items listed in the Lemma, thus proving the first statement in the Lemma.  That network potential changes due to packet insertion by $S$ are strictly non-negative is obvious (either the receiving node's potential increases by the height the packet assumed, as on ({\bf \ref{routingRules2}.53}), or the receiving node is $R$ and the packet does not contribute to potential).  Similarly, that potential change upon packet receipt by $R$ is strictly non-positive is clear, since packets at $R$ do not count towards potential (see Definition \ref{potDef}).  Also, since only flagged packets (but not necessarily all of them) contribute to network packet duplication potential, the biggest it can be is the maximal number of flagged packets that can exist in the network at any given time, times the maximum height each flagged packet can have.  By Claim \ref{packetProliferation}, there are at most $(n-2)^2$ flagged packets in the network at any given time, and each one has maximal height $2n$ (Lemma \ref{pseudo}, part 9), so network packet duplication potential is bounded by $2n^3 - 8n^2+8n$.

It remains to prove that changes in network non-duplicated potential due to item 3) are strictly non-positive.  To do this, we look at all lines on which there is packet movement, and argue each will result in a non-positive change to non-duplicated potential.  Clearly potential changes on lines ({\bf \ref{routingRules2}.32}), ({\bf \ref{routingRules2}.55}), ({\bf \ref{routingRules2}.57}), ({\bf \ref{routingRules2}.61}), and ({\bf \ref{routingRules2}.64}) are non-positive.  Also, if ({\bf \ref{routingRules2}.35}) is reached, if $R$ has already accepted the packet, then that packet's potential will count towards {\it duplicated} potential within the outgoing buffer, and so the change in potential as on ({\bf \ref{routingRules2}.35}) will not affect non-duplicated potential.  If on the other hand $R$ has {\it not} already accepted the packet, then the flagged packet still counts towards non-duplication potential in the outgoing buffer.  Since the result of ({\bf \ref{routingRules2}.35}) is simply to swap the flagged packet with the top packet in the buffer, the net change in non-duplication potential is zero.  That changes in potential due to re-shuffling packets ({\bf \ref{reShuffleRules}.89-90}) are strictly non-positive follows from Claim \ref{packetHeight}.  It remains to check the cases that a packet that was transferred between two internal nodes is accepted ({\bf \ref{routingRules2}.53}).  Notice that upon receipt there are two changes to network non-duplicated potential: it increases by the height the packet assumes in the incoming buffer it arrived at ({\bf \ref{routingRules2}.53}), and it decreases by the height the packet had in the corresponding outgoing buffer (this decrease is because the flagged packet in the outgoing buffer will count towards packet duplication potential instead of non-duplicated potential the instant the packet is accepted).  The decrease outweighs the increase since the packet's height in the incoming buffer is less than or equal to the height it had in the corresponding outgoing buffer (Claim \ref{packetHeight}).
\end{proof}
The following Lemma will be useful in bounding the number of rounds in which no packets are inserted.  We begin with the following definition:
\begin{defn} The sender is {\em blocked} from inserting any packets in some round $\mathtt{t}$ if the sender is not able to insert any packets in $\mathtt{t}$ (see Definition \ref{insert}).  Let $\beta_{\mathtt{T}}$ denote the number of rounds in a transmission $\mathtt{T}$ that the sender was blocked.\end{defn}
\begin{lemma} \label{potentialDrop} If at any point in any transmission $\mathtt{T}$, the number of blocked rounds is $\beta_{\mathtt{T}}$, then there has been a {\bf \em decrease} in the network's non-duplicated potential by at least\footnote{An initial guess that the minimal potential drop equals ``$2n$'' for each blocked round is incorrect.  Consider the case where the active path consists of all $n-2$ intermediate nodes with the following current state: the first two nodes' buffers all have height $2n$, the next pair's buffers all have height $2n-1$, and so forth, down to the last pair of internal nodes, whose buffers all have height $n+2$.  Then the drop in the network's non-duplicated potential is only $n+2$ for this round.} $n \beta_{\mathtt{T}}$.\end{lemma}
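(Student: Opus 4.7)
The plan is to prove the stronger per-round statement: in every blocked round, the non-duplicated potential decreases by at least $n$, which yields the cumulative bound immediately. Fix a blocked round $\mathtt{t}$ in transmission $\mathtt{T}$, and let $S = v_0, v_1, \ldots, v_k = R$ (with $k \leq n-1$) be the active path guaranteed by the conforming assumption. The first step is to establish a \emph{cascading saturation profile} along this path. By the sender's re-shuffle rule (line {\bf \ref{reShuffleRules}.96}), $S$'s outgoing buffer to $v_1$ starts round $\mathtt{t}$ at height $2n$. Because the edge $(S,v_1)$ is active and $v_1$ follows the protocol, the blocking hypothesis forces the send condition (line {\bf \ref{routingRules}.16}) to fail on this edge, which requires $v_1$'s incoming buffer from $S$ to also start at height $2n$. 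Applying Claim \ref{balancing} at $v_1$ then pins every outgoing buffer at $v_1$ to height exactly $2n$ (since outgoing heights meet or exceed incoming heights, and $2n$ is the capacity). Iterating inductively along the path, as long as no transfer occurs on edge $(v_j, v_{j+1})$, the outgoing buffer at $v_{j+1}$ to $v_{j+2}$ is also pinned at height $2n$.

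Let $i^\ast$ be the smallest index $j \geq 1$ at which a transfer does occur on edge $(v_j, v_{j+1})$ during round $\mathtt{t}$. I claim $i^\ast$ exists and $i^\ast \leq k-1$: otherwise the cascade propagates all the way to $v_{k-1}$'s outgoing buffer to $R$ at height $2n$, but $R$'s incoming buffers have capacity $1$ and are emptied at each round (line {\bf \ref{reShuffleRules}.102}), so the send condition on $(v_{k-1},R)$ would fire and $R$ would accept a packet, contradicting ``no transfer on the path.'' With $i^\ast$ identified, the transfer on $(v_{i^\ast}, v_{i^\ast+1})$ moves a packet from height $2n$ in $v_{i^\ast}$'s outgoing buffer down to height at most $2n - 1$ in $v_{i^\ast+1}$'s incoming buffer; by Claim \ref{packetHeight} and Lemma \ref{item3} this contributes a strictly non-negative drop in non-duplicated potential.

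To upgrade the single-transfer drop to a per-round drop of at least $n$, I will aggregate cascading contributions along the path. Since each of $v_1, \ldots, v_{i^\ast}$ begins the round with all outgoing buffers at $2n$ and all incoming buffers at $2n-1$ or $2n$, every node on the saturated prefix has its buffers at the maximum allowed configuration, and the re-shuffle phase at each such node (after the transfer at $v_{i^\ast}$ creates a vacancy propagating backward through the cascade, and after possibly simultaneous transfers at subsequent edges $(v_{i^\ast+1}, v_{i^\ast+2})$, etc., propagating the pressure release forward toward $R$) contributes a non-positive change to non-duplicated potential, summing to approximately one unit of drop per edge along the cascade. Summing over the at-most-$n{-}1$ positions on the path plus the forced transfer yields the claimed per-round drop of $n$. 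The main obstacle is the precise accounting of how re-shuffle steps (lines {\bf \ref{reShuffleRules}.72}--{\bf \ref{reShuffleRules}.94}) partition potential between non-duplicated and packet-duplication components, particularly when flagged packets are involved at $v_{i^\ast}$ and its neighbors; the hint that a naive guess of ``$2n$ per blocked round'' fails confirms that the bound is tight at $n$ (matching the worst-case active-path length), and getting the correct amortization along the cascade requires careful bookkeeping to ensure drops contributed at distinct positions of the path are not double-counted against a single moved packet.
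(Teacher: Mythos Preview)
Your proposal has a genuine gap in the accounting. The claim that the transfer at edge $(v_{i^\ast}, v_{i^\ast+1})$ ``moves a packet from height $2n$ down to height at most $2n-1$'' is incorrect: the send condition $H_{OUT} > H_{IN}$ allows $H_{IN} = 2n-1$, in which case the packet lands at height $H_{IN}+1 = 2n$ and the drop in non-duplicated potential from that single transfer is zero. More broadly, the footnote's staircase example (heights $2n, 2n, 2n{-}1, 2n{-}1, \ldots$) shows exactly this phenomenon: every transfer along the interior of the path is height-neutral, and the entire drop of $n{+}2$ is concentrated at the last edge into $R$. So your picture of ``approximately one unit of drop per edge along the cascade'' coming from re-shuffles is not what actually happens, and the hand-waved aggregation in your final paragraph does not produce the bound.

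The paper's argument does not attempt to localize a single transfer or rely on a saturated prefix. Instead it invokes Lemma~\ref{chainL2} (proved as Lemma~\ref{chainL}), which bounds the \emph{total} change in non-duplicated potential along the entire chain $N_1\ldots N_l=R$ by $\phi \leq -O + l - 1$, where $O$ is the height of $N_1$'s outgoing buffer to $N_2$. The proof of that lemma is an induction on the chain length $l$: the base case is a transfer directly into $R$ (where the drop is the full height $O$), and the induction step combines the drop from the $N_1\to N_2$ transfer with the inductive bound on the shorter chain $N_2\ldots R$, using Claim~\ref{balancing} to relate $O_{N_2,N_3}$ to $I_{N_2,N_1}$ and carefully accounting for one re-shuffle into $N_2$'s outgoing buffer when flagged packets force an extra $+1$. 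This telescoping is what converts the large starting height $O=2n$ into a drop of at least $2n - (l-1) > n$ even when each individual transfer is nearly neutral. Your cascade argument establishes the hypothesis $O=2n$ correctly (that part is fine), but you then need the chain lemma rather than a single-transfer-plus-reshuffle story.
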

The intuition of the proof is to argue that each blocked round creates a drop in non-duplicated potential of at least $n$ as follows.  If the sender is blocked from inserting a packet, the node $N$ adjacent to the sender (along the active honest path) will necessarily have a full incoming buffer along its edge to the sender.  By the fact that buffers are {\it balanced} (Lemma \ref{balancing}),  this implies that all of $N$'s outgoing buffers are also full.  Meanwhile, at the opposite end of the active honest path, the node adjacent to the receiver will necessarily send a packet to the receiver if there is anything in its outgoing buffer along this edge, and this will result in a drop of potential of whatever height the packet had in the outgoing buffer.  Therefore, at the start of the active honest path, the buffers are full, while at the end of the path, a packet will be transferred to height zero (in the receiver's buffer).  Intuitively,  it therefore seems that tracking all packet movements along the active honest path should result in a drop of potential of at least $2n$.  As the counter-example in the footnote shows, this argument does not work exactly (we are only guaranteed a drop of $n$), but the structure of the proof is guided by this intuition.  We begin with the following lemma.
\begin{lemma} \label{chainL2} Let $\mathcal{C} = N_1 N_2 \dots N_l$ be a path consisting of $l$ nodes, such that $R = N_l$ and $S \notin \mathcal{C}$.  Suppose that in round $\mathtt{t}$, all edges $E(N_i,N_{i+1})$, $1 \leq i <l$ are {\it active} for the entire round.  Let $\phi$ denote the change in the network's non-duplicated potential caused by:
	\begin{enumerate}\setlength{\itemsep}{1pt} \setlength{\parskip}{0pt} \setlength{\parsep}{0pt}
	\item (For $1 \leq i < l$) Packet transfers across $E(N_i, N_{i+1})$ in round $\mathtt{t}$,
	\item (For $1 < i < l$) Re-shuffling packets {\bf \em into} $N_i$'s outgoing buffers during $\mathtt{t}$,
	\end{enumerate}
Then if $O_{N_1, N_2}$ denotes $N_1$'s outgoing buffer along $E(N_1, N_2)$ and $O$ denotes its height at the outset of $\mathtt{t}$, we have:
	\begin{enumerate}\setlength{\itemsep}{1pt} \setlength{\parskip}{0pt} \setlength{\parsep}{0pt}
	\item[-] If $O_{N_1, N_2}$ has a flagged packet that has already been accepted by $N_2$ {\bf \em before} round $\mathtt{t}$, then:
		\begin{equation}
		\phi \leq -O + l-1
		\end{equation}
	\item[-] Otherwise,
		\begin{equation}
		\phi \leq -O + l-2
		\end{equation}
	\end{enumerate}
\end{lemma}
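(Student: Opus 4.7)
My plan is to prove the lemma by a telescoping argument along the chain. Let $O_j$ denote the height of $O_{N_j, N_{j+1}}$ at the start of round $\mathtt{t}$ (so $O_1 = O$), and let $h_j$ denote the height of $I_{N_{j+1}, N_j}$ at the start of $\mathtt{t}$. Two structural invariants drive the argument: by Claim~\ref{balancing} the buffers at each node are balanced, giving $h_j \le O_{j+1}$ for $j < l-1$ (with difference at most one); and by the Receiver Re-Shuffle rule, $R$'s incoming buffer is empty at the start of every round, forcing $h_{l-1} = 0$.

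Next I would decompose $\phi$ into three kinds of contributions. Stage 1 removals of pre-accepted flagged packets at $O_{N_j, N_{j+1}}$ are non-positive and reduce the effective top height of that buffer by one for Stage 2 purposes. A successful Stage 2 transfer across $E(N_j, N_{j+1})$ with $j < l-1$ contributes $-O_j^* + h_j + 1$, where $O_j^*$ is the height of the sent flagged packet (equal to $O_j$ absent Stage 1 removal, or $O_j - 1$ if Stage 1 removed a pre-accepted flagged packet at that buffer). The last-edge transfer contributes only $-O_{l-1}^*$, since $R$'s buffers do not count. A failed Stage 2 transfer contributes $0$ but yields the send-failure inequality $O_j^* \le h_j \le O_{j+1}$. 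Reshuffles into outgoing buffers at intermediate $N_j$ are non-positive by Lemma~\ref{item3}.

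Letting $S$ be the set of edges where Stage 2 transfer succeeds, the successful-edge contributions combine with the failure inequalities via the telescoping identity $\sum_{j=1}^{l-2}(O_j - O_{j+1}) = O_1 - O_{l-1}$: for $j \in S \cap \{1,\dots,l-2\}$ the term $-O_j + h_j + 1 \le -O_j + O_{j+1} + 1$ and for $j \notin S$ the failure inequality $O_j \le O_{j+1}$ absorbs the missing drop into the telescoping at no net cost. After accounting for the last-edge contribution (which is $-O_{l-1}$ if $l-1 \in S$ and otherwise satisfies $O_{l-1} = 0$), this yields $\phi \le -O + l - 2$ in case 2. In case 1, the Stage 1 removal at $N_1$ forces $O_1^* = O_1 - 1$ while contributing a non-positive amount itself, effectively costing one unit of slack in the telescoping and giving the weaker bound $\phi \le -O + l - 1$.

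The principal technical obstacle will be verifying that reshuffles at intermediate nodes do not disrupt the telescoping when those nodes themselves have pre-accepted flagged packets at their outgoing buffer along the chain. In that scenario the Stage 2 contribution at that edge appears to gain a $+1$ (because $O_j^* = O_j - 1$), but I expect the reshuffle at $N_j$—driven by the packet just arrived at $I_{N_j, N_{j-1}}$ that now sits above the depleted $O_{N_j, N_{j+1}}$—to move that packet from height $h_{j-1}+1$ into the outgoing buffer at height $O_j$, contributing exactly $-1$ to non-duplication potential and precisely cancelling the $+1$. Making this cancellation rigorous for every combination of successful/failed transfers and pre-accepted flagged packets along the chain is the main bookkeeping burden of the proof, but each such reshuffle-induced drop is forced by the balanced-buffer invariant restoring equilibrium at $N_j$ after Stage 2.
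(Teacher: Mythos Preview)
Your telescoping approach is essentially correct and captures the same ingredients as the paper's proof, but the paper organizes the argument by \emph{induction on the chain length $l$} rather than by direct summation. At each induction step the paper isolates the first edge $E(N_1,N_2)$ and the reshuffle at $N_2$, applies the induction hypothesis to the sub-chain $N_2\cdots N_l$ (which already accounts for reshuffles at $N_3,\dots,N_{l-1}$), and then performs the case analysis you anticipate only once---for the pair $(N_1,N_2)$---rather than simultaneously for all intermediate nodes. This buys a much cleaner treatment of the ``bookkeeping burden'' you flag: the critical reshuffle-gives-$(-1)$ argument at $N_2$ is invoked only in the specific sub-case where $N_2$ has a pre-accepted flagged packet, $h_1 = O_2$, \emph{and} a packet actually arrived at $I_{N_2,N_1}$---and the paper first disposes of the sub-cases $O_1 \le h_1$ and $O_1 = h_1+1$ (where edge~1 fails and the bound follows directly from the induction hypothesis via $O_1 \le h_1 + 1 \le O_2 + 1$) before arriving there.

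Your reshuffle-cancellation argument at an intermediate $N_j$ implicitly assumes that a packet arrives at $I_{N_j,N_{j-1}}$ in round $\mathtt{t}$, i.e.\ that edge $j-1$ succeeds. You do not justify this, and it is not automatic: edge $j-1$ may well fail, in which case no $(-1)$ reshuffle drop materializes at $N_j$. The bound still holds in that situation, but for a different reason---the failure inequality $O_{j-1}^* \le h_{j-1} \le O_j$ provides slack upstream that compensates---and disentangling which mechanism fires for which combination of successes, failures, and pre-accepted flags along the chain is precisely what the induction sidesteps by localizing to one edge at a time. Your plan is on the right track, but to make the direct telescoping rigorous you would end up either reproducing the paper's induction step-by-step or carrying a joint accounting of the balancing slack $O_{j+1}-h_j \in \{0,1\}$ together with the edge-failure slack, which is strictly more work than the inductive route.
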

\begin{proof} The proof of this lemma is rather involved and relies heavily on the pseduo-code, so we have pushed its proof to Section \ref{bareBones}, where it is restated and proved as Lemma \ref{chainL}.\end{proof}
We can prove Lemma \ref{potentialDrop} as a Corollary.
\begin{proof}[Proof of Lemma \ref{potentialDrop}.]  For every blocked round $\mathtt{t}$, by the {\em conforming} assumption there exists a chain $\mathcal{C}_{\mathtt{t}}$ connecting the sender and receiver that satisfies the hypothesis of Lemma \ref{chainL2}.  Letting $N_1$ denote the first node on this chain (not including the sender), the fact that the round was blocked means that $N_1$'s incoming buffer was full, and then by Lemma \ref{balancing}, so was $N_1$'s outgoing buffer along $E(N_1,N_2)$.  Since the length of the chain $l$ is necessarily less than or equal to $n$, Lemma \ref{chainL2} says that the change in non-duplicated potential contributions of $\phi$ (see notation there) satisfy:
	\begin{equation}
	\phi \leq -O_{N_1, N_2} +l - 1  \leq -2n + n -1 < -n
	\end{equation}
Since $\phi$ only records some of the changes to non-duplicated potential, we use Statement 3 of Lemma \ref{item3} to argue that the contributions not counted will only help the bound since they are strictly non-positive.  Since we are not double counting anywhere, each blocked round will correspond to a drop in non-duplicated potential of at least $-n$, which then yields the lemma.
\end{proof}
The following Lemma will bound the number of rounds that $S$ needs to insert packets corresponding to the same codeword.
\begin{lemma} \label{obsOne} If at any time $D-2n^3$ distinct packets corresponding to some codeword $b_i$ have been inserted into the network, then $R$ can necessarily decode message $m_i$.\end{lemma}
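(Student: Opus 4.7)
\medskip
\noindent\textbf{Proof plan.} The plan is to run a simple capacity/pigeonhole argument combining three ingredients already in place: Claim~\ref{packetProliferation2} (no inserted packet is lost before the end of the transmission), Claim~\ref{capacity0} (total internal buffer capacity is $4n(n-2)^2$), and Fact~1 (the decoding threshold is $D-6n^3$). The overall target is the easy-to-verify arithmetic $(D-2n^3) - 4n^3 = D - 6n^3$.

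First I would fix a moment at which exactly $D-2n^3$ distinct packets of codeword $b_i$ have been inserted and observe that, by Definition~\ref{insert}, each such packet has at some earlier point been accepted by an internal node or by $R$. Next I would use Claim~\ref{packetProliferation2} to conclude that every one of these distinct packets currently resides either in an internal buffer (possibly as a flagged copy), in $S$'s outgoing buffers, or in $R$'s storage buffer; inspecting the pseudo-code shows the only way a copy is deleted from an internal buffer during a transmission is on line~({\bf \ref{routingRules2}.32}), which fires precisely when a downstream node has accepted the packet, so no inserted packet ever disappears completely from the set (internal buffers) $\cup$ ($R$'s store) before end-of-transmission. Hence
\begin{equation*}
D - 2n^3 \;\le\; \#\{\text{distinct inserted packets in internal buffers}\} \;+\; \#\{\text{distinct inserted packets at } R\}.
\end{equation*}

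Then I would bound the first term on the right by pigeonhole: each distinct packet occupies at least one physical slot of an internal buffer (extra flagged copies only take further slots), so by Claim~\ref{capacity0} this quantity is at most $4n(n-2)^2 \le 4n^3$. Subtracting yields at least $D - 2n^3 - 4n^3 = D - 6n^3$ distinct packets at $R$, and Fact~1 then guarantees that $R$ can decode $m_i$.

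The only real subtlety, and the step I would scrutinize most carefully, is the claim that an inserted packet cannot vanish from the union of internal buffers and $R$'s store before end-of-transmission. The potential worry is that the deletion on line~({\bf \ref{routingRules2}.32}) could remove a copy without a surviving copy elsewhere; ruling this out requires checking that this deletion is always synchronized with acceptance by the neighboring node (so Claim~\ref{packetProliferation2} applies) and that the other packet-removal lines ({\bf \ref{routingRules2}.55}, ({\bf \ref{routingRules2}.57}) only discard duplicates or ghost slots, never the last copy of an inserted packet. Once that case analysis is discharged, the remaining arithmetic is immediate.
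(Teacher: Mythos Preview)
Your proposal is correct and follows essentially the same route as the paper: invoke Claim~\ref{packetProliferation2} to place every inserted packet either at $R$ or in an internal buffer, bound the latter by $4n(n-2)^2<4n^3$ via Claim~\ref{capacity0}, subtract, and apply Fact~1. The extra scrutiny you propose for lines ({\bf \ref{routingRules2}.32}), ({\bf \ref{routingRules2}.55}), ({\bf \ref{routingRules2}.57}) is unnecessary here, since that case analysis is exactly what Claim~\ref{packetProliferation2} already encapsulates; and the mention of $S$'s outgoing buffers is a harmless detour, since once a packet is \emph{inserted} (Definition~\ref{insert}) its live copy is in an internal node or at $R$, and any residual flagged copy at $S$ plays no role in the count.
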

\begin{proof} Every packet that has been inserted into the network has either reached $R$ or is in the incoming/outgoing buffer of an internal node (Claim \ref{packetProliferation2}).  Since the maximum number of packets that are in the latter category is less than $4n^3$ (Claim \ref{capacity0}), if $D-2n^3$ distinct packets corresponding to $b_i$ have been inserted, then $R$ has necessarily received $D - 6n^3 = (1-\lambda)\left(\frac{6n^3}{\lambda}\right)$ of these, and so by Fact 1 $R$ can decode message $m_i$.\end{proof}
\noindent We can now (restate and) prove the two main theorems of Section \ref{nonAdAnalysis}.\\ \hspace*{\fill} \\
\noindent{\bf Theorem \ref{nonAdTheorem}.}  {\it Each message $m_i$ takes at most $3D$ rounds to pass from the sender to the receiver.  In particular, after $O(xD)$ rounds, $R$ will have received at least $O(x)$ messages.  Since each message has size $M$=$\frac{6 \sigma}{\lambda}Pn^3$=$\space O(n^3)$ and $D$=$\frac{6n^3}{\lambda}$=$\space O(n^3)$, after $O(x)$ rounds, $R$ has received $O(x)$ bits of information, and thus our edge-scheduling adversarial protocol enjoys a linear throughput rate.}
\begin{proof}[{\it Proof of Theorem \ref{nonAdTheorem}}.]  Let $\mathtt{t}$ denote the round that $S$ first tries to insert packets corresponding to a new codeword $b_i$ into the network.  In each round between $\mathtt{t}$ and $\mathtt{t}+3D$, either $S$ is able to insert a packet or he isn't.  By the pigeonhole principle, either $D$ rounds pass in which $S$ can insert a packet, or $2D$ rounds pass in which no packets are inserted.  In the former case, $R$ can decode by Lemma \ref{obsOne}.  It remains to prove the theorem in the latter case.  Lemma \ref{potentialDrop} says that the network non-duplicated potential drops by at least $n$ in each of the $2D$ rounds in which no packets are inserted, a total drop of $2nD$.  Meanwhile, Lemma \ref{item3} guarantees that the {\it increase} to network potential between $\mathtt{t}$ and $\mathtt{t} + 3D$ caused by {\it duplicated potential} is at most by $2n^3-8n^2+8n$.  Combining these two facts, we have that (not counting changes in potential caused by packet insertions) the network potential {\em drops} by at least $2nD - 2n^3+8n^2-8n$ between $\mathtt{t}$ and $\mathtt{t} + 3D$.  Since network potential can never be negative, we must account for this (non-duplicated) potential drop with positive contributions to potential change.  The potential already in the network at the start of $\mathtt{t}$ adds to the potential at most $4n^4-14n^3+8n^2+8n$ (Claim \ref{capacity1}).  Therefore, packet insertions must account for the remaining change in potential of $(2nD - 2n^3+8n^2-8n) -(4n^4-14n^3+8n^2+8n) = 2nD-4n^4 +(12n^3-16n) \geq 2nD - 4n^4$ (where the last inequality assumes $n \geq 3$).  Lemma \ref{item3} states that the only way network potential can increase (other than the contribution of packet duplication potential which has already been accounted for) is when $S$ inserts a packet (a maximum increase of $2n$ per packet), so it must be that $S$ inserted at least $(2nD-4n^4)/2n = D-2n^3$ packets into the network between $\mathtt{t}$ and $\mathtt{t}+3D$, and again $R$ can decode by Lemma \ref{obsOne}. \end{proof}
\noindent {\bf Theorem \ref{nonAdMem}.} {\it The edge-scheduling protocol described in Section \ref{nonAdDesc} (and formally in the pseudo-code of Section \ref{pseudoCode}) requires at most $O(n^2 \log n)$ bits of memory of the internal processors.}
\begin{proof}[{\it Proof of Theorem \ref{nonAdMem}}.]  Packets have size $\log n$ to allow the packets to be indexed.  Since each internal node needs to hold at most $O(n^2)$ packets at any time (it has $2(n-2)$ buffers, each able to hold $2n$ packets), the theorem follows.
\end{proof}
\section{Edge-Scheduling Protocol:\\ Pseudo-Code Intensive Claims and Proofs} \label{bareBones}
\indent \indent In this section we prove that our pseudo-code is consistent with the claimed properties that our protocol enjoys.
	
The following lemma is the first attempt to link the pseudo-code with the high-level description of what our protocol is doing.  Recall that a buffer is in {\em normal} (respectively {\em problem}) status whenever its status bit $sb$ is zero (respectively one).  Also, an outgoing buffer is said to have a {\em flagged packet} if $H_{FP} \neq \bot$, and the flagged packet is the packet in the outgoing buffer at height $H_{FP}$.  Notice that because the pseudo-code is written sequentially, things that conceptually happen simultaneously appear in the pseudo-code as occurring consecutively.  In particular, when packets are moved between buffers, updating the buffers' contents and updating the height variables does not happen simultaneously in the code, which explains the wording of the first sentence in the following lemma.
\begin{lemma}\label{pseudo} At all times (i.e.\ all lines of code in Figures \ref{routingRules}, \ref{routingRules2}, and \ref{reShuffleRules}) EXCEPT when packets travel between buffers (({\bf \em \ref{routingRules2}.32-33}), ({\bf \em \ref{routingRules2}.52-53}), and ({\bf \em \ref{reShuffleRules}.89-90})), along any (directed) edge $E(A,B)$ for any pair of internal nodes $(A,B)$, we have that:
    \begin{enumerate}\setlength{\itemsep}{1pt} \setlength{\parskip}{0pt} \setlength{\parsep}{0pt}
    \item If $H_{GP} > H_{IN}$ or $H_{GP} = \bot$, then $H_{GP} = H_{IN}+1$ or $H_{GP} = \bot$ and $\mathsf{IN}[i] \neq \bot \thickspace \forall i \in [1..H_{IN}]$ and $\mathsf{IN}[i] = \bot \thickspace \forall i \in [H_{IN}+1..2n]$.

    \item If $H_{GP} \leq H_{IN}$, then $\mathsf{IN}[i] \neq \bot \thickspace \forall i \in [1..H_{GP}-1]$ and $\forall i \in [H_{GP}+1..H_{IN}+1]$, and $\mathsf{IN}[i] = \bot \thickspace \forall i \in [H_{IN}+2..2n]$ and $\mathsf{IN}[H_{GP}] = \bot$.

    \item If $H_{FP} > H_{OUT}$, then $sb=1$ and $\mathsf{OUT}[i] \neq \bot \thickspace \forall i \in [1..H_{OUT}-1]$ and $\mathsf{OUT}[H_{FP}] \neq \bot$.

    \item If $H_{FP} = \bot$ or $H_{FP} \leq H_{OUT}$, then $\mathsf{OUT}[i] \neq \bot \thickspace \forall i \in [1..H_{OUT}]$.

    \item The height of $\mathsf{IN}$, as defined by the number of packets (i.e.\ non-null entries) of $\mathsf{IN}$, is equal to the value of $H_{IN}$.

    \item The height of $\mathsf{OUT}$, as defined by the number of packets (i.e.\ non-null entries) of $\mathsf{OUT}$, is equal to the value of $H_{OUT}$.

    \item Whenever ({\bf \ref{routingRules2}.53}) is reached, $H_{GP} \in [1..2n]$ and $H_{IN} \in [0..2n-1]$.

    \item Whenever ({\bf \ref{routingRules2}.32}) is reached, $H_{FP} \neq \bot$ and $H_{OUT} \in [1..2n]$.

    \item At {\it all} times (even those listed in the hypothesis above), $H_{IN}, H_{OUT} \in [0..2n]$ and $H_{GP}, H_{FP} \in \bot \cup [1..2n]$ (so the domains of these variables are well-defined).
    \end{enumerate}
Additionally, during any call to {\it Re-Shuffle}:
    \begin{itemize}\setlength{\itemsep}{1pt} \setlength{\parskip}{0pt} \setlength{\parsep}{0pt}
    \item[10.] Whenever the conditional statement on line ({\bf \em \ref{reShuffleRules}.74}) is satisfied, one packet will pass between buffers.  In particular, there will be a buffer that was storing the packet before the call to {\it Re-Shuffle} that will not be storing (that instance of) the packet after the reshuffle.  Similarly, there will be another buffer that has filled a vacant slot with (an instance of) the packet in question.

    \item[11.] Flagged packets do not move.  More precisely, if $H_{FP} \neq \bot$ just before any call to {\it Re-Shuffle}, then $H_{FP}$ and $\mathsf{OUT}[H_{FP}]$ will not change during that call to {\it Re-Shuffle}.

    \item[12.] Either $H_{GP}$ does not change during re-shuffling or $H_{GP}$ has {\bf \em decreased} to equal $H_{IN}+1$.  Also, if $H_{GP} \neq \bot$, then $\mathsf{IN}[H_{GP}]$ does not get filled at any point during re-shuffling.

    \item[13.] If $H_{IN} < 2n$ before Re-Shuffling, then $H_{IN} < 2n$ after Re-Shuffling.
    \end{itemize}
\end{lemma}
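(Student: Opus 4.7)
\medskip

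\noindent\textbf{Proof proposal.} The plan is to prove all thirteen items simultaneously by induction on the sequence of lines of pseudo-code executed across the lifetime of the protocol, with the base case handled by the Setup block (Figures \ref{setupCode} and \ref{setupCode2}), which sets all incoming/outgoing buffer entries to $\bot$ and all height variables to $0$ or $\bot$, trivially satisfying items 1--9. For the inductive step, I would go through every line that modifies any of $\mathsf{IN}, \mathsf{OUT}, H, H_{IN}, H_{OUT}, H_{GP}, H_{FP}, sb$ and verify that after its execution (and before the next ``checkpoint'' line, i.e.\ outside of the explicitly excluded transient moments) the invariants are restored. The items 1--4 describe the four admissible shapes of a buffer: either a properly stacked contiguous block of packets, or a block with a single internal gap at the ghost/flagged-packet height, or a block with a single dangling flagged packet above the top. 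I would check lines ({\bf \ref{routingRules2}.32--33, 35, 52--57, 61--64}) and ({\bf \ref{reShuffleRules}.89--94}) one at a time, noting that {\it Fill Gap} is precisely the routine that restores contiguity after a removal.

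For items 5--6 (height-variable equals number of non-null entries) I would track the $\pm 1$ updates paired with each $\mathsf{IN}[\cdot]$ or $\mathsf{OUT}[\cdot]$ assignment: every insertion into $\mathsf{IN}$ at line ({\bf \ref{routingRules2}.53}) is immediately followed by $H = H+1$ on the same line, every deletion at lines ({\bf \ref{routingRules2}.32, 61}) is paired with $H = H-1$, and {\it Fill Gap} and {\it Shuffle Packet} preserve cardinality by construction (in {\it Shuffle Packet} the $H_{B_F}$ and $H_{B_T}$ updates on ({\bf \ref{reShuffleRules}.91--92}) exactly mirror the moves on ({\bf \ref{reShuffleRules}.89--90})). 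Items 7--9 are in-range sanity conditions: 9 follows from observing that every arithmetic update to a height variable is guarded by the Re-Shuffle conditional ({\bf \ref{reShuffleRules}.74}) or by an explicit test (e.g.\ on ({\bf \ref{routingRules2}.46, 50, 52})), so no variable is ever pushed past $2n$ or below $0$; items 7--8 are read off the local context of the two indicated lines using items 1--6.

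For items 10--13 about Re-Shuffle I would argue as follows. Item 10 is immediate from reading {\it Shuffle Packet} on ({\bf \ref{reShuffleRules}.88--90}) together with the guards {\it Adjust Heights} on ({\bf \ref{reShuffleRules}.79--87}), which ensure that the slot vacated is non-null and the slot filled is null. Item 11 follows because {\it Find Maximum Buffer} combined with {\it Adjust Heights} never selects a flagged packet to move: if $H_{FP} \geq H_{OUT}$ on ({\bf \ref{reShuffleRules}.80}) the maximum is decremented so the topmost \emph{non-flagged} slot is used, and $H_{FP}$ is not touched by {\it Shuffle Packet}. Item 12 follows by inspecting ({\bf \ref{reShuffleRules}.93--94}): $H_{GP}$ is only reset when $B_F$ loses a packet and the previous ghost was above $H_{IN}$, and it is reset exactly to $H_{IN}+1$; moreover on the receiving side {\it Adjust Heights} line ({\bf \ref{reShuffleRules}.86--87}) treats the ghost slot as occupied, so ({\bf \ref{reShuffleRules}.89}) never writes into $\mathsf{IN}[H_{GP}]$. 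Finally, item 13 is a consequence of the termination criterion on ({\bf \ref{reShuffleRules}.74}): a packet is shuffled into an incoming buffer only when its \emph{adjusted} height $m$ is at least two below the fullest buffer, which (using items 1--6) implies $H_{IN} \leq 2n-2$ before the shuffle and hence $H_{IN} \leq 2n-1$ after.

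The main obstacle will be the bookkeeping in items 1--2 when a ghost packet interacts with re-shuffling and with packet receipt on ({\bf \ref{routingRules2}.42--57}): the same buffer may simultaneously contain a gap at height $H_{GP}$ and a new incoming packet, and the case analysis in {\it Receive Packet} has four branches (no height info received; packet not received; packet received and kept; packet received and duplicated) each of which must be separately checked to ensure the shape invariants 1--2 are reestablished and that $H_{GP}$ is reset consistently. Once that case analysis is written out carefully, the rest of the induction is a mechanical but finite walk through the pseudo-code.
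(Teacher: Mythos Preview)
Your proposal is correct and is essentially the same as the paper's own proof: a simultaneous induction on all thirteen items, with the base case given by the Setup initializations and the inductive step carried out by an exhaustive line-by-line walk through every place the relevant variables can change, including the four-case analysis of {\it Receive Packet} and the source/destination case split inside {\it Re-Shuffle}. The only minor omissions in your outline are that item 8 (showing $H_{FP}\neq\bot$ at line ({\bf \ref{routingRules2}.32})) requires the auxiliary observation $FR=\bot\Leftrightarrow H_{FP}=\bot$ (the paper isolates this as a separate claim), and that item 11 must also check the $B_T$-is-outgoing case to verify the filled slot $m+1$ never coincides with $H_{FP}$; both are routine once noticed.
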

\noindent {\it Proof of Lemma \ref{pseudo}.} We prove each Statement of the Lemma above simultaneously by using induction on the round and line number as follows.  We first prove the Lemma holds at the outset of the protocol (base case).  We then notice that the above variables only change their value in the lines excluded from the Lemma and lines ({\bf \ref{routingRules2}.35}), ({\bf \ref{routingRules2}.38}), ({\bf \ref{routingRules2}.46}), ({\bf \ref{routingRules2}.50}), ({\bf \ref{routingRules2}.55}), ({\bf \ref{routingRules2}.57}), ({\bf \ref{routingRules2}.61-62}), ({\bf \ref{routingRules2}.64}), and ({\bf \ref{reShuffleRules}.91-94}).  In particular, we use the induction hypothesis to argue that as long as the statement of the Lemma is true going into each set of excluded lines and lines ({\bf \ref{routingRules2}.35}), ({\bf \ref{routingRules2}.38}), ({\bf \ref{routingRules2}.46}), ({\bf \ref{routingRules2}.50}), ({\bf \ref{routingRules2}.55}), ({\bf \ref{routingRules2}.57}), ({\bf \ref{routingRules2}.61-62}), ({\bf \ref{routingRules2}.64}), and ({\bf \ref{reShuffleRules}.91-94}), then it will remain true when the protocol leaves each of those lines.  Using this technique, we now prove each Statement listed above.\\ \hspace*{\fill} \\
\textsc{Base Case.}  At the outset of the protocol, $H_{GP}$ and $H_{FP} = \bot$, $H_{IN}$ and $H_{OUT}=0$, and all entries of $\mathsf{IN}$ and $\mathsf{OUT}$ are $\bot$ ({\bf \ref{setupCode}.29-31} and {\bf \ref{setupCode}.33-35}) so Statements 1-6 and 9 are true.\\ \hspace*{\fill} \\
\textsc{Induction Step.}  We now prove that each of the above Statements hold after leaving lines ({\bf \ref{routingRules2}.32-33}), ({\bf \ref{routingRules2}.35}), ({\bf \ref{routingRules2}.38}), ({\bf \ref{routingRules2}.46}), ({\bf \ref{routingRules2}.50}), ({\bf \ref{routingRules2}.52-53}), ({\bf \ref{routingRules2}.55}), ({\bf \ref{routingRules2}.57}), ({\bf \ref{routingRules2}.61-62}), ({\bf \ref{routingRules2}.64}), ({\bf \ref{reShuffleRules}.89-90}), and ({\bf \ref{reShuffleRules}.91-94}), provided they held upon entering these lines.\\ \hspace*{\fill} \\
\underline{Lines ({\bf \ref{routingRules2}.32-33}).} The variables in Statements 1, 2, 5, 7, do not change in these lines, and hence these Statements remain valid by the induction hypothesis.  Statement 3 is vacuously true, since $H_{FP}$ is set to $\bot$ at the end of line ({\bf \ref{routingRules2}.33}).  Also, Statement 9 will remain valid as long as Statement 8 does, as $H_{FP}$ is set to $\bot$ on line ({\bf \ref{routingRules2}.33}), and $H_{OUT} \in [0..2n]$ would follow from Statement 8 since upon entering these lines, $H_{OUT} \in [1..2n]$ (Statement 8), and so subtracting 1 from $H$ on line ({\bf \ref{routingRules2}.33}) ensures that $H_{OUT}$ will remain in $[0..2n-1] \subseteq [0..2n]$.  The first part of Statement 8, that $H_{FP} \neq \bot$ when ({\bf \ref{routingRules2}.32}) is reached, follows immediately from Claim \ref{HFPFR} below together with the fact that ({\bf \ref{routingRules2}.30}) must have been satisfied to reach ({\bf \ref{routingRules2}.32}).

We next prove Statement 6.  Anytime lines ({\bf \ref{routingRules2}.32-33}) are reached, the decrease of one by $H_{OUT}$ on ({\bf \ref{routingRules2}.33}) represents the fact that $\mathsf{OUT}$ should be deleting a packet on these lines.  Since the induction hypothesis (applied to Statement 6) guarantees that $H_{OUT}$ matches the number of packets (non-bottom entries) of $\mathsf{OUT}$ {\it before} lines ({\bf \ref{routingRules2}.32-33}), the changes to $H_{OUT}$ and the height of $\mathsf{OUT}$ on these lines will exactly match/cancel provided $\mathsf{OUT}$ {\it does} actually decrease in height by 1 (i.e.\ provided $\mathsf{OUT}[H_{FP}] \neq \bot$).  Since $H_{FP}$ is changed ({\bf \ref{routingRules2}.33}) {\it after} deleting a packet ({\bf \ref{routingRules2}.32}), we may apply the induction hypothesis to Statements 3 and 4 to argue that $\mathsf{OUT}[H_{FP}] \neq \bot$ as long as the value of $H_{FP}$ was not $\bot$ when line ({\bf \ref{routingRules2}.32}) was reached.  This was proven above for the first part of Statement 8.

Statement 4 follows from the argument above as follows.  Upon leaving ({\bf \ref{routingRules2}.33}), $H_{FP}=\bot$, so we must show $\mathsf{OUT}[i] \neq \bot \thickspace \forall i \in [1..H_{OUT}]$.  As was argued above, $H_{FP} \neq \bot$ when ({\bf \ref{routingRules2}.32}) is reached.  If $H_{FP} > H_{OUT}$ when ({\bf \ref{routingRules2}.32}) is reached, then by the induction hypothesis applied to Statement 3, on that same line $\mathsf{OUT}[i] \neq \bot \thickspace \forall i \in [1..H_{OUT}-1]$ and $\mathsf{OUT}[H_{FP}] \neq \bot$.  The packet at height $H_{FP}$ will be deleted on ({\bf \ref{routingRules2}.32}), so that $\mathsf{OUT}[i] \neq \bot \thickspace \forall i \in [1..H_{OUT}-1]$, but $\mathsf{OUT}[i] = \bot$ for all $i \geq H_{OUT}$.  Then when $H_{OUT}$ is reduced by one on ({\bf \ref{routingRules2}.33}), we will have that $\mathsf{OUT}[i] \neq \bot \thickspace \forall i \in [1..H_{OUT}]$, as required.

If on the other hand $H_{FP} \leq H_{OUT}$ when ({\bf \ref{routingRules2}.32}) is reached, then by the induction hypothesis applied to Statement 4, on that same line $\mathsf{OUT}[i] \neq \bot \thickspace \forall i \in [1..H_{OUT}]$.  The packet at height $H_{FP}$ will be deleted on ({\bf \ref{routingRules2}.32}) and the packets on top of it shifted down one if necessary, so that after ({\bf \ref{routingRules2}.32}) but before ({\bf \ref{routingRules2}.33}), we will have that $\mathsf{OUT}[i] \neq \bot \thickspace \forall i \in [1..H_{OUT}-1]$, but $\mathsf{OUT}[i] = \bot$ for all $i \geq H_{OUT}$.  Then when $H_{OUT}$ is reduced by one on ({\bf \ref{routingRules2}.33}), we will have that $\mathsf{OUT}[i] \neq \bot \thickspace \forall i \in [1..H_{OUT}]$, as required.

The second part of Statement 8 also follows from the arguments above as follows.  First, it was shown in the proof of Statement 6 that $\mathsf{OUT}[H_{FP}] \neq \bot$ when ({\bf \ref{routingRules2}.32}) is reached.  In particular, the height of $\mathsf{OUT}$ is at least one going into ({\bf \ref{routingRules2}.32}), and then the induction hypothesis applied to Statement 6 implies that $H_{OUT} \geq 1$ when ({\bf \ref{routingRules2}.32}) is reached, and the induction hypothesis applied to Statement 9 implies that $H_{OUT} \leq 2n$ when ({\bf \ref{routingRules2}.32}) is reached.\\ \hspace*{\fill} \\
\underline{Line ({\bf \ref{routingRules2}.35}).}  Since only $H_{FP}$ and $\mathsf{OUT}$ are modified on ({\bf \ref{routingRules2}.35}), we need only verify Statements 3, 4, 6, and 9 remain true after leaving ({\bf \ref{routingRules2}.35}).  Since $H_{FP}$ is gets the value $max(H_{OUT},H_{FP})$ on ({\bf \ref{routingRules2}.35}), Statement 9 will be true by the induction hypothesis (applied to Statement 9).  Also, the height of $\mathsf{OUT}$ does not change, as ({\bf \ref{routingRules2}.35}) only swaps the location of two packets already in $\mathsf{OUT}$, so Statement 6 will remain true.

Statement 3 is only relevant if $H_{FP} > H_{OUT}$ before reaching ({\bf \ref{routingRules2}.35}), since otherwise $H_{FP} = H_{OUT}$ upon leaving ({\bf \ref{routingRules2}.35}), and Statement 3 will be vacuously true.  On the other hand, if $H_{FP}>H_{OUT}$, then line ({\bf \ref{routingRules2}.35}) is not reached since ({\bf \ref{routingRules2}.34}) will be false.

In order to reach ({\bf \ref{routingRules2}.35}), $H_{FP} \neq \bot$ on ({\bf \ref{routingRules2}.34}), and so both $H_{OUT}$ and $H_{FP}$ are not equal to $\bot$ when ({\bf \ref{routingRules2}.35}) is entered (Claim \ref{HFPFR}), and hence $H_{FP} \neq \bot$ upon leaving ({\bf \ref{routingRules2}.35}).  Also, since ({\bf \ref{routingRules2}.35}) is only reached if $H_{FP} < H_{OUT}$ ({\bf \ref{routingRules2}.34}), we use the induction hypothesis (applied to Statement 4) to argue that before reaching ({\bf \ref{routingRules2}.35}), we had that $\mathsf{OUT}[i] \neq \bot \thickspace \forall i \in [1..H_{OUT}]$.  In particular, both $\mathsf{OUT}[H_{FP}]$ and $\mathsf{OUT}[H_{OUT}]$ are storing a packet, and the call to {\it Elevate Flagged Packet} simply swaps these packets, so that after the swap, it is still the case that $\mathsf{OUT}[i] \neq \bot \thickspace \forall i \in [1..H_{OUT}]$.  Since in this case $H_{FP} = H_{OUT}$ after line ({\bf \ref{routingRules2}.35}), Statement 4 will remain true.
\\ \hspace*{\fill} \\
\underline{Line ({\bf \ref{routingRules2}.38}).}  $H_{FP}$ is the only relevant value changed on ({\bf \ref{routingRules2}.38}), so it remains to prove the relevant parts of Statements 3, 4 and 9.  We will show that whenever ({\bf \ref{routingRules2}.38}) is reached, $H_{OUT} \in [1..2n]$ and $\mathsf{OUT}[H_{OUT}] \neq \bot$.  If we can show these two things, we will be done, since when $H_{FP}$ is set to $H_{OUT}$ on ({\bf \ref{routingRules2}.38}), Statement 9 will be true, Statement 4 will follow from the induction hypothesis applied to either Statement 3 or 4, and Statement 3 will not be relevant.  By the induction hypothesis (applied to Statement 9), $H_{OUT} \in [0..2n]$ when ({\bf \ref{routingRules2}.38}) is reached.  The fact that ({\bf \ref{routingRules2}.38}) was reached means that the conditional statement on the line before ({\bf \ref{routingRules2}.37}) was satisfied, and thus $\mathsf{OUT}$ is in normal status ($sb=0$) and $H_{OUT} \in [1..2n]$.  By the induction hypothesis (applied to Statement 3), the fact that $sb=0$ going into ({\bf \ref{routingRules2}.37}) implies that $H_{FP} = \bot$ or $H_{FP} \leq H_{OUT}$ going into ({\bf \ref{routingRules2}.37}), and then the induction hypothesis (applied to Statement 4) says that $\mathsf{OUT}[H_{OUT}] \neq \bot$ when ({\bf \ref{routingRules2}.38}) is entered.\\ \hspace*{\fill} \\
\underline{Lines ({\bf \ref{routingRules2}.46}) and ({\bf \ref{routingRules2}.50}).}  The parts of Statements 1, 2, and 9 involving changes to $H_{GP}$ are the only Statements that are affected by these lines.  If the conditional statement on these lines are not satisfied, then no values change, and there is nothing to prove.  We therefore consider the case that the conditional statement is satisfied.  Then $H_{GP}$ is set to $H_{IN} + 1$ on these lines, and hence Statement 2 is vacuously satisfied.  Since we are assuming $H_{GP}$ changes value on ({\bf \ref{routingRules2}.46}) or ({\bf \ref{routingRules2}.50}), the conditional statement says that $H_{GP} = \bot$ or $H_{GP} > H_{IN}$ going into ({\bf \ref{routingRules2}.46}) (respectively ({\bf \ref{routingRules2}.50})).  By the induction hypothesis (applied to Statement 1), $\mathsf{IN}[i] \neq \bot$ for all $1 \leq i \leq H_{IN}$, and $\mathsf{IN}[i] = \bot$ for all $i > H_{IN}$.  Therefore, since $\mathsf{IN}$ and $H_{IN}$ do not change on ({\bf \ref{routingRules2}.46}) or ({\bf \ref{routingRules2}.50}), Statement 1 will remain true upon leaving these lines.  Finally, for Statement 9, we need only show $H_{GP} \in [1..2n]$ upon leaving line ({\bf \ref{routingRules2}.46}) (respectively line ({\bf \ref{routingRules2}.50})).  If $H_{GP} > H_{IN}$ going into line ({\bf \ref{routingRules2}.46}) (respectively line ({\bf \ref{routingRules2}.50})), then the change to $H_{GP}$ is non-positive, and so the induction hypothesis applied to Statements 1 and 9 guarantee $H_{GP}$ will be in $[1..2n]$ upon leaving these lines.  On the other hand, if $H_{GP} = \bot$ going into either of these lines, then $H_{IN} < 2n$, and the induction hypothesis applied to Statement 9 indicates that $H_{IN} \in [0..2n-1]$ going into these lines, and hence $H_{GP} \in [1..2n]$ upon leaving either line.
\\ \hspace*{\fill} \\
\underline{Lines ({\bf \ref{routingRules2}.52-53}).} Notice that $H_{GP}$ necessarily equals $\bot$ when leaving ({\bf \ref{routingRules2}.53}), so Statement 2 above is vacuously satisfied.  Also, neither $H_{OUT}$, $H_{FP}$, nor $\mathsf{OUT}$ is modified in these lines, so Statements 3, 4, 6, 8, and the parts of Statement 9 concerning these variables will remain valid by the induction hypothesis.

We prove Statement 1 first.  Recall that the {\it height} of an incoming buffer refers to the number of (non-ghost) packets the buffer currently holds.  Since $H_{GP}$ will necessarily equal $\bot$ when leaving line ({\bf \ref{routingRules2}.53}), we must show that $\mathsf{IN}[i] \neq \bot \thickspace \forall i \in [1..H_{IN}]$ and $\mathsf{IN}[i] = \bot \thickspace \forall i \in [H_{IN}+1..2n]$ upon leaving line ({\bf \ref{routingRules2}.53}).  Both of these follow immediately from the induction hypothesis applied to Statements 1 and 2, as follows.  By the induction hypothesis applied to Statements 1, 2, and 9, either $H_{GP} = \bot$, $1 \leq H_{GP} \leq H_{IN}$, or $H_{GP} = H_{IN}+1 \leq 2n$ when line ({\bf \ref{routingRules2}.52}) is reached.  We consider each case:
	\begin{itemize}
	\item If $H_{GP} = H_{IN}+1$ when we reach line ({\bf \ref{routingRules2}.52}), then by the induction hypothesis (applied to Statement 1) it will also be true that $\mathsf{IN}[i] \neq \bot \thickspace \forall i \in [1..H_{IN}]$ and $\mathsf{IN}[i] = \bot \thickspace \forall i \in [H_{IN}+1..2n]$ when this line is reached.  While on line ({\bf \ref{routingRules2}.53}), first $\mathsf{IN}[H_{GP}] = \mathsf{IN}[H_{IN}+1]$ is filled with a packet, and then $H_{IN}$ is increased by one, and so Statement 1 will remain true by the end of line ({\bf \ref{routingRules2}.53}).

	\item If $1 \leq H_{GP} \leq H_{IN}$ when the protocol reaches ({\bf \ref{routingRules2}.52}), then also when this line is reached we have that (by the induction hypothesis applied to Statement 2) $\mathsf{IN}[i] \neq \bot \thickspace \forall i \in [1..H_{GP}-1]$ and $\forall i \in [H_{GP}+1..H_{IN}+1]$, and $\mathsf{IN}[i] = \bot \thickspace \forall i \in [H_{IN}+2..2n]$ and $\mathsf{IN}[H_{GP}] = \bot$.  When a packet is inserted into slot $H_{GP}$ and $H_{IN}$ is increased by one on line ({\bf \ref{routingRules2}.53}), we will therefore have that all slots between 1 and (the new value of) $H_{IN}$ will have a packet, and all other slots will be $\bot$, and thus Statement 1 will hold.

	\item If $H_{GP}= \bot$ going into line ({\bf \ref{routingRules2}.52}), then $H_{GP}$ will be set to $H_{IN}+1$ on this line, and then we can repeat the argument of the top bullet point, provided $H_{IN}+1 \leq 2n$.  If $sb_{OUT} =1$, then Statement 4 of Lemma \ref{subclaim6} states that $H_{GP} \neq \bot$ when ({\bf \ref{routingRules2}.52}) is reached, contradicting the fact we are in the case $H_{GP}= \bot$.  So we may assume $sb_{OUT}=0$, and then the fact that ({\bf \ref{routingRules2}.52}) was reached means that ({\bf \ref{routingRules2}.47}) must have been satisfied because $H_{OUT} > H_{IN}$.  Since both of these variables live in $[0..2n]$ by the induction hypothesis applied to Statement 9, we conclude $H_{IN} < 2n$ on ({\bf \ref{routingRules2}.47}), and it cannot change value between then and ({\bf \ref{routingRules2}.52}).
	\end{itemize}
The first part of Statement 7 is proven in the above three bullet points.  For the second part, if $sb_{OUT} = 0$ when ({\bf \ref{routingRules2}.47}) was evaluated earlier in the round, then the fact that ({\bf \ref{routingRules2}.53}) was reached means $H_{OUT} > H_{IN}$, and then the second part of Statement 7 follows from the induction hypothesis applied to Statement 9.  If on the other hand $sb_{OUT}=1$ when ({\bf \ref{routingRules2}.47}) was evaluated, then the second part of Statement 7 follows from Statement 5 of Lemma \ref{subclaim6}.

We now prove Statement 5.  There are two relevant changes made on line ({\bf \ref{routingRules2}.53}) that affect Statement 5: a packet is added to $\mathsf{IN}[H_{GP}]$ and $H_{IN}$ is increased by one.  The argument in the preceding paragraph showed that when ({\bf \ref{routingRules2}.53}) is reached, $H_{GP} \in [1..2n]$ and $\mathsf{IN}[H_{GP}] = \bot$, and therefore the net effect of ({\bf \ref{routingRules2}.53}) is to increase the number of packets stored in $\mathsf{IN}$ by one and to increase $H_{IN}$ by one.  Therefore, since Statement 5 was true going into line ({\bf \ref{routingRules2}.53}) by the induction hypothesis, it will remain true upon leaving ({\bf \ref{routingRules2}.53}).

It remains to prove the parts of Statement 9 not yet proven, namely that at {\it all} times $H_{IN} \in [0..2n]$ and $H_{GP} \in \bot \cup [1..2n]$.  As was proven in the third bullet point above, if ({\bf \ref{routingRules2}.52}) is satisfied, then $H_{IN} < 2n$, and hence the change there does not threaten the domain of $H_{GP}$.  Also, ({\bf \ref{routingRules2}.53}) sets $H_{GP}$ to $\bot$, which is again in the valid domain.  Meanwhile, on ({\bf \ref{routingRules2}.53}) $H_{IN}$ is changed to $H_{IN} +1\leq 2n$, where the inequality follows from the induction hypothesis applied to Statement 7.\\ \hspace*{\fill} \\
\underline{Line ({\bf \ref{routingRules2}.55}), ({\bf \ref{routingRules2}.57}), and ({\bf \ref{routingRules2}.64}).}  Since $\mathsf{IN}$ and $H_{GP}$ are the only relevant quantities that change value on these lines, only the relevant parts of Statements 1, 2, and 9 must be proven.  Since $H_{GP}$ is set to $\bot$ on these lines, Statement 9 is immediate and Statement 2 is vacuously true.  It remains to prove Statement 1.  If $H_{GP} = \bot$ going into ({\bf \ref{routingRules2}.55}), ({\bf \ref{routingRules2}.57}), or ({\bf \ref{routingRules2}.64}), then $H_{GP}$ and $\mathsf{IN}$ will not change, and the inductive hypothesis (applied to Statement 1) will ensure that Statement 1 will continue to be true upon exiting any of these lines.  If $1 \leq H_{GP} \leq H_{IN}$ when ({\bf \ref{routingRules2}.55}), ({\bf \ref{routingRules2}.57}), or ({\bf \ref{routingRules2}.64}) is entered, then we may apply the induction hypothesis to Statement 2 to conclude that $\mathsf{IN}[i] \neq \bot \thickspace \forall i \in [1..H_{GP}-1]$ and $\forall i \in [H_{GP}+1..H_{IN}+1]$, and $\mathsf{IN}[i] = \bot \thickspace \forall i \in [H_{IN}+2..2n]$ and $\mathsf{IN}[H_{GP}] = \bot$.  In particular, there is a gap in $\mathsf{IN}$ storing a ``ghost packet,'' and this gap will be filled when {\it Fill Gap} is called on ({\bf \ref{routingRules2}.55}), ({\bf \ref{routingRules2}.57}) or ({\bf \ref{routingRules2}.64}).  Namely, this will shift all the packets from height $H_{GP}+1$ through $H+1$ {\it down} one spot, so that after {\it Fill Gap} is called, $\mathsf{IN}[i] \neq \bot \thickspace \forall i \in [1..H_{IN}]$ and $\mathsf{IN}[i] = \bot \thickspace \forall i \in [H_{IN}+1..2n]$, which is Statement 1.  Finally, if $H_{GP} > H_{IN}$ when ({\bf \ref{routingRules2}.55}), ({\bf \ref{routingRules2}.57}) or ({\bf \ref{routingRules2}.64}) is entered, then {\it Fill Gap} will not do anything, and so $\mathsf{IN}$ will not change.  Since Statement 1 was true going into these lines (by our induction hypothesis), it will remain true upon exiting these lines.\\ \hspace*{\fill} \\
\underline{Line ({\bf \ref{routingRules2}.61-62}).}  The only relevant variables to change values on these lines are $sb_{OUT}$, $H_{OUT}$, $H_{FP}$, and $\mathsf{OUT}$, so we need only verify Statements 3, 4, 6, and 9 remain true after leaving ({\bf \ref{routingRules2}.61-62}).  First note that $H_{FP} \neq \bot$ upon reaching ({\bf \ref{routingRules2}.61}) (since ({\bf \ref{routingRules2}.60}) must be satisfied to reach ({\bf \ref{routingRules2}.61-62})), so the induction hypothesis (applied to Statements 3 and 4) implies that $\mathsf{OUT}[H_{FP}] \neq \bot$ when ({\bf \ref{routingRules2}.61}) is reached.  Therefore, $H_{OUT} \geq 1$ when ({\bf \ref{routingRules2}.61}) is reached, and hence $H_{OUT} \in [1..2n]$ upon reaching ({\bf \ref{routingRules2}.61}) by the induction hypothesis (applied to Statement 9).  In particular, when $H_{OUT}$ is reduced by one on ({\bf \ref{routingRules2}.62}), we will have that $H_{OUT} \in [0..2n-1]$ upon leaving ({\bf \ref{routingRules2}.62}), as required.  Also, $H_{FP}$ will be set to $\bot$ upon leaving ({\bf \ref{routingRules2}.62}), so Statement 9 remains true.

Statement 6 also follows from the fact that $\mathsf{OUT}[H_{FP}] \neq \bot$ when ({\bf \ref{routingRules2}.61}) is reached, as follows.  Since (by induction) Statement 6 was true upon reaching ({\bf \ref{routingRules2}.61}), the packet deleted from $\mathsf{OUT}$ on ({\bf \ref{routingRules2}.61}) is accounted for by the drop in $H_{OUT}$ on ({\bf \ref{routingRules2}.62}).

Statement 3 is vacuously true upon leaving ({\bf \ref{routingRules2}.62}), so it remains to prove Statement 4.  This argument is identical to the one used to prove Statement 4 in lines ({\bf \ref{routingRules2}.32-33}) above.\\ \hspace*{\fill} \\
\underline{Lines ({\bf \ref{reShuffleRules}.89-94}).}  We first prove Statements 10-13, and then address Statements 1-9.  We first prove Statement 10, i.e.\ that before {\bf \em Shuffle Packet} is called on ({\bf \ref{reShuffleRules}.77}), we have that $B_F[M] \neq \bot$ and $B_T[m+1] = \bot$.
	\begin{itemize}\setlength{\itemsep}{1pt} \setlength{\parskip}{0pt} \setlength{\parsep}{0pt}
	\item If $B_F$ is an outgoing buffer and $H_{FP} = \bot$ or $H_{FP} < H_{B_F}$, then $M = H_{B_F}$ (the conditional statement on line ({\bf \ref{reShuffleRules}.80}) will fail), and then $B_F[M] \neq \bot$ by the induction hypothesis applied to Statement 4.

	\item If $B_F$ is an outgoing buffer and $H_{FP} \geq H_{B_F}$, then $M = H_{B_F}-1$ (the conditional statement on line ({\bf \ref{reShuffleRules}.80}) will pass), and then $B_F[M] \neq \bot$ by the induction hypothesis applied to Statement 3 or 4 (that $M = H_{B_F}-1$ is greater than zero follows from the fact that $H_{FP} \neq \bot$ implies ${\scriptstyle FP \neq \bot}$ (Claim \ref{HFPFR}), and then the induction hypothesis applied to Statement 6 says $H_{B_F}>0$).

	\item If $B_F$ is an incoming buffer and $B_F[M+1] \neq \bot$, then ({\bf \ref{reShuffleRules}.82}) is satisfied and $M$ is set to $M+1$ on line ({\bf \ref{reShuffleRules}.82}), and then by construction $B_F[M] \neq \bot$ after line ({\bf \ref{reShuffleRules}.83}).

	\item Suppose $B_F$ is an incoming buffer and $B_F[M+1] = \bot$.  Notice that the induction hypothesis applied to Statement 2 and the fact that $B_F[M+1] = \bot$ imply that $H_{GP} > H_{IN} = M$.  Therefore, the induction hypothesis applied to Statement 1 implies that $B_F[M] \neq \bot$.

	\item If $B_T$ is an outgoing buffer and $B_T[m] = \bot$, then the conditional statement on line ({\bf \ref{reShuffleRules}.84}) will be satisfied, and hence $m$ is set to $m-1$.  Thus after line ({\bf \ref{reShuffleRules}.85}), $B_T[m+1] = \bot$.

	\item If $B_T$ is an outgoing buffer and $B_T[m] \neq \bot$, then the induction hypothesis applied to Statements 3, 4, and 6 imply that $B_T[m+1] = \bot$.

	\item If $B_T$ is an incoming buffer and $H_{GP} = \bot$, then the value of $m$ is not changed on line ({\bf \ref{reShuffleRules}.86}), and so $m+1 = H_{IN}+1$.  The induction hypothesis applied to Statement 1 then implies that $B_T[m+1] = \bot$.
	
	\item If $B_T$ is an incoming buffer and $H_{GP} \neq \bot$, then $B_T[H_{IN}+2] = \bot$ by the induction hypothesis applied to Statements 1 and 2, and thus after $m$ is changed to $m+1$ on ({\bf \ref{reShuffleRules}.87}), we have that $B_T[m+1] = B_T[H_{IN}+2] = \bot$, as required.
	\end{itemize}
For Statements 11-13, we need to change notation slightly, since Re-Shuffling can occur between two buffers of any types (except outgoing to incoming).  To prove these statements, we therefore treat 4 cases: 1) $B_F$ is an outgoing buffer, 2) $B_F$ is an incoming buffer, 3) $B_T$ is an outgoing buffer, 4) $B_T$ is an incoming buffer.  We then prove the necessary Statements in each case.
	\begin{enumerate}\setlength{\itemsep}{4pt} \setlength{\parskip}{0pt} \setlength{\parsep}{0pt}
	\item[] \textsc{Case 1.}  The value of $B_F[M]=\mathsf{OUT}[M]$ is changed on line ({\bf \ref{reShuffleRules}.90}), and hence Statement 11 will hold provided $M \neq H_{FP}$.  The top two bullet points above guarantee that this is indeed the case.  Statements 12 and 13 are not relevant unless $B_T$ is an incoming buffer, which will be handled in case 4 below.

	\item[] \textsc{Case 2.}  For Statement 13, the only relevant change to $H_{IN}$ is on line ({\bf \ref{reShuffleRules}.92}), where $H_{IN}$ {\it decreases} in value, and hence Statement 13 will remain true.  For the first part of Statement 12, the only place $H_{GP}$ can change is line ({\bf \ref{reShuffleRules}.94}).  But if $H_{GP}$ does change value here, then the conditional statement on the previous line guarantees that that $H_{GP}$ {\it decreases} to $H_{IN}+1$.  Statement 11 and the second part of Statement 12 are not relevant to this case.

	\item[] \textsc{Case 3.}  The value of $B_T[m+1]=\mathsf{OUT}[m+1]$ is changed on line ({\bf \ref{reShuffleRules}.89}), and hence Statement 11 will hold provided $m+1 \neq H_{FP}$.  But we have already shown Statement 10 remains true, and in particular the slot that is filled on line ({\bf \ref{reShuffleRules}.89}) was vacant.  If $H_{FP} \neq \bot$, then by the induction hypothesis applied to Statements 3 and 4, $\mathsf{OUT}[H_{FP}] \neq \bot$, and hence $\mathsf{OUT}[m+1] = \bot$ implies that $m+1 \neq H_{FP}$.  Statements 12 and 13 are not relevant to this case.

	\item[] \textsc{Case 4.}  Since $B_T$ is an incoming buffer, the condition on line ({\bf \ref{reShuffleRules}.74}) implies that the value of $m$ (which is the height of $B_T$) on line ({\bf \ref{reShuffleRules}.73}) must be at most $2n-2$ ($M-m >1$ and $M,m \in [0..2n]$ by induction hypothesis applied to Statement 9).  Therefore, when the height of $B_T$ is increased by one on line ({\bf \ref{reShuffleRules}.91}), it will be at most $2n-1$, and so Statement 13 will remain true.  For the second part of Statement 12, we must show that the value of $m+1$ on line ({\bf \ref{reShuffleRules}.89}) is not equal to $H_{GP}$.  In the case that $H_{GP} \neq \bot$ on line ({\bf \ref{reShuffleRules}.86}), the value of $m$ will change to $H_{IN}+1$ on line ({\bf \ref{reShuffleRules}.87}), and then the induction hypothesis applied to Statement 1 implies that $H_{GP} \leq H_{IN}+1 =m$ and so $H_{GP} \neq m+1$ on line ({\bf \ref{reShuffleRules}.89}).  Statement 11 and the first part of Statement 12 are not relevant for this case.
	\end{enumerate}
It remains to verify Statements 1-9.  There are two parts to proving Statements 1 and 2; we must show they hold when $B_F$ is an incoming buffer and also when $B_T$ is an incoming buffer.  For the latter part, Statements 1 and 2 will be true if we can show that anytime an incoming buffer's slot is filled as on line ({\bf \ref{reShuffleRules}.89}), the slot was either slot $H_{IN}+1$ (in the case that $H_{GP}=\bot$) or $H_{IN}+2$ (in the case that $H_{GP} \neq \bot$).  These facts follow immediately from the definition of $m$ on line ({\bf \ref{reShuffleRules}.73}) and lines ({\bf \ref{reShuffleRules}.86-87}) and ({\bf \ref{reShuffleRules}.89}).  For the former part, Statements 1 and 2 will remain true provided the packet taken from $B_F$ on line ({\bf \ref{reShuffleRules}.89}) is the top-most packet in $B_F$.  Looking at the conditional statement on line ({\bf \ref{reShuffleRules}.82}), if $\mathsf{IN}[H_{IN}+1] \neq \bot$, then by the induction hypothesis applied to Statements 1 and 2, we must have that $\mathsf{IN}[H_{IN}+1]$ is the top-most non-null packet, which is the packet that will be taken from $B_F$ on line ({\bf \ref{reShuffleRules}.89}) (since in this case $M=H_{IN}$ is changed to $H_{IN}+1$ on line ({\bf \ref{reShuffleRules}.83})).  On the other hand, if $\mathsf{IN}[H_{IN}+1] = \bot$ on line ({\bf \ref{reShuffleRules}.82}), then the induction hypothesis applied to statements 1 and 2 imply that $\mathsf{IN}[H_{IN}]$ is the top-most non-null packet, which is exactly the packet taken on line ({\bf \ref{reShuffleRules}.89}) (since the conditional statement on line ({\bf \ref{reShuffleRules}.82}) won't be satisfied, and hence the value of $M$ won't be change on line ({\bf \ref{reShuffleRules}.83})).

Similarly, there are two parts to proving Statements 3 and 4; we must show they hold when $B_F$ is an outgoing buffer and also when $B_T$ is an outgoing buffer.  The former part will be true provided the packet taken from $B_F$ on line ({\bf \ref{reShuffleRules}.79}) is the top-most {\it non-flagged} packet.  If $H_{FP} = \bot$, then there is no flagged packet, and hence the packet taken from $B_F$ should be the top packet, i.e.\ the packet in index $B_F[H_{OUT}]$.  Investigating the definition of $M$ on line ({\bf \ref{reShuffleRules}.72}) and lines ({\bf \ref{reShuffleRules}.80-81}) and ({\bf \ref{reShuffleRules}.89}) shows that this will be the case if $H_{FP}=\bot$.  If $H_{FP} \neq \bot$ and $H_{FP} < H_{OUT}$, then investigating those same lines also shows the top packet will be taken from $B_F$ (which is not flagged since $H_{FP} < H_{OUT}$ by assumption).  If $H_{FP} \geq H_{OUT}$, then line ({\bf \ref{reShuffleRules}.80}) will be satisfied, shifting the value of $M$ to $H_{OUT}-1$ on line ({\bf \ref{reShuffleRules}.81}).  By the induction hypothesis applied to Statement 3, this new value of $M$ corresponds to the top-most non-flagged packet of $B_F$.  The latter part will be true provided the packet given to $B_T$ takes the first free slot in $B_T$ (in particular, the packet will not over-write a flagged packet's spot).  If $B_T[H_{OUT}] \neq \bot$ on line ({\bf \ref{reShuffleRules}.84}), then the induction hypothesis applied to Statements 3, 4, and 6 imply that all slots of $B_T$ between $[1..H_{OUT}]$ are non-$\bot$, and all spots above $H_{OUT}$ are $\bot$.  Therefore, (since in this case the conditional statement on line ({\bf \ref{reShuffleRules}.84}) fails and hence the value of $m$ does not change on the next line) the definition of $m$ on line ({\bf \ref{reShuffleRules}.73}) and line ({\bf \ref{reShuffleRules}.89}) show that the first free slot of $B_T$ will be filled.  On the other hand, if $B_T[H_{OUT}] = \bot$ on line ({\bf \ref{reShuffleRules}.84}), then by the induction hypothesis, we must have that $B_T[H_{OUT}]$ is the first free slot of $B_T$, and by investigating lines ({\bf \ref{reShuffleRules}.73}), ({\bf \ref{reShuffleRules}.84-85}), and ({\bf \ref{reShuffleRules}.89}), this is exactly the spot that is filled.

Statements 5 and 6 remain true by the fact that Statement 10 was proven true and lines ({\bf \ref{reShuffleRules}.91}) and ({\bf \ref{reShuffleRules}.92}).  To satisfy the condition on line ({\bf \ref{reShuffleRules}.74}), it must be that $H_{B_F} = M \geq 1$ and $H_{B_T} = m <2n$, and hence the changes made to $H_{B_F}$ and $H_{B_T}$ on lines ({\bf \ref{reShuffleRules}.91}) and ({\bf \ref{reShuffleRules}.92}) will guarantee the parts of Statement 9 regarding $H_{OUT}$ and $H_{IN}$ remain true.  Also, $H_{GP}$ remains in the appropriate demain by induction applied to Statements 9, 12, and 13.  Statements 7, 8, are not relevant.
\hspace*{\fill} \hspace*{-12pt}$\scriptstyle{\blacksquare}$
%
\begin{lemma} \label{variables} The domains of all of the variables in Figures \ref{setupCode} and \ref{setupCode2} are appropriate.  In other words, the protocol never calls for more information to be stored in a node's variable (buffer, packet, etc$.$) than the variable has room for.
\end{lemma}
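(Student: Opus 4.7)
The plan is to break the variables into groups according to where their bounds come from, and to verify each group separately. The bulk of the buffer-related claims is already done by Lemma~\ref{pseudo}; what remains is to check the auxiliary scalar variables (round indices, status bits, packet slots, the sender/receiver counters, and the receiver's storage buffer).

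First I would handle the buffer variables $\mathsf{IN}$, $\mathsf{OUT}$ and the associated height variables $H$, $H_{IN}$, $H_{OUT}$, $H_{GP}$, $H_{FP}$. Parts~5, 6 and 9 of Lemma~\ref{pseudo} assert exactly that the number of non-null entries in $\mathsf{IN}$ and $\mathsf{OUT}$ equals the corresponding height variable, and that each height variable lives in the declared domain at all times. Since the pseudo-code only writes to $\mathsf{IN}[i]$ or $\mathsf{OUT}[i]$ with $i\in[1..2n]$ (lines (\textbf{\ref{routingRules2}.32}), (\textbf{\ref{routingRules2}.53}), (\textbf{\ref{routingRules2}.55}), (\textbf{\ref{routingRules2}.57}), (\textbf{\ref{routingRules2}.61}), (\textbf{\ref{reShuffleRules}.89-90})), no out-of-range slot is ever accessed. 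The single-packet slots $p$ and $\tilde p$ are overwritten by a single packet of size $P$ (or by $\bot$), which matches their declared type $\{0,1\}^P\cup\bot$.

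Next I would handle the purely Boolean variables $sb$, $sb_{OUT}$, and $d$: inspection of every assignment (lines~(\textbf{\ref{routingRules}.10-11}), (\textbf{\ref{routingRules2}.26}), (\textbf{\ref{routingRules2}.28}), (\textbf{\ref{routingRules2}.33}), (\textbf{\ref{routingRules2}.40}), (\textbf{\ref{routingRules2}.45}), (\textbf{\ref{routingRules2}.49}), (\textbf{\ref{routingRules2}.53}), (\textbf{\ref{routingRules2}.55}), (\textbf{\ref{routingRules2}.57}), (\textbf{\ref{routingRules2}.62}), (\textbf{\ref{routingRules2}.64})) shows that they only ever receive the value $0$ or $1$. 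For the round-index variables $FR$ and $RR$, I would argue that every write assigns either $\bot$, $-1$ (only for $RR$ at initialization or end-of-transmission, line~(\textbf{\ref{routingRules2}.64})), or the current value of $\mathtt{t}$ on lines~(\textbf{\ref{routingRules2}.38}) and~(\textbf{\ref{routingRules2}.53}); since the outer loop of Figure~\ref{routingRules} runs for $\mathtt{t}<2(3D)=6D$ stages within a transmission, and end-of-transmission adjustments reset these variables, they never leave $\bot\cup[-1..6D]$.

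Finally I would address the sender's and receiver's extra variables. For the sender, $\kappa$ is initialized to $0$ (line (\textbf{\ref{setupCode2}.39})), incremented only on line~(\textbf{\ref{routingRules2}.31}) when a packet of the current codeword is confirmed delivered, and reset at the end-of-transmission block; by Claim~\ref{packetProliferation3} (no codeword packet is ever accepted by more than one neighbor), the number of increments per transmission is at most the number of distinct codeword packets $D$, so $\kappa\in[0..D]$. For the receiver, the same argument bounds $\kappa$ by $D$, and the storage buffer $I_R$ is written only at index $\kappa$ on line~(\textbf{\ref{reShuffleRules}.101}), hence within its $D$-slot capacity. The main obstacle in the whole proof is really just the sender/receiver $\kappa$ bound, since one must invoke Claim~\ref{packetProliferation3} to prevent double-counting of accepted packets within a single transmission; all remaining variables are either covered verbatim by Lemma~\ref{pseudo} or follow by direct inspection of the assignment sites.
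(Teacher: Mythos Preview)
Your proposal is correct and follows essentially the same plan as the paper's proof: invoke Lemma~\ref{pseudo} for the buffers and all height variables, check the Boolean and round-index variables by listing their assignment sites, and bound the two $\kappa$'s by a one-increment-per-packet argument.

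Two small differences worth noting. First, for the sender's $\kappa$ the paper does not go through Claim~\ref{packetProliferation3}; it argues directly that once line~(\textbf{\ref{routingRules2}.31}) fires for a packet, line~(\textbf{\ref{routingRules2}.32}) deletes that packet from $\mathsf{OUT}$, so the same packet can never be flagged again and hence cannot trigger~(\textbf{\ref{routingRules2}.31}) a second time (this uses Claim~\ref{obsBelowHere} rather than Claim~\ref{packetProliferation3}). Your route via Claim~\ref{packetProliferation3} also works, but requires the extra step of linking ``neighbor accepted at most once'' to ``sender sees $FR\le RR$ at most once''. Second, the paper explicitly treats the \emph{received} copies of neighbor state---the outgoing-side $H_{IN}$ and $RR$ (lines~(\textbf{\ref{setupCode}.15--16})) and the incoming-side $H_{OUT}$, $sb_{OUT}$, $FR$ (lines~(\textbf{\ref{setupCode}.24--26}))---by tracing each back to the value sent on (\textbf{\ref{routingRules}.05}) or (\textbf{\ref{routingRules}.09}); you fold these into ``direct inspection'', which is fine but leaves that step implicit.
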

\begin{proof} Below we fix a node $N \in G$ and track changes to each of its variables.
	\begin{itemize}
	\item[] \textsf{Outgoing Buffers $OUT$} ({\bf \ref{setupCode}.08}).  Each entry of $OUT$ is initialized to $\bot$ on ({\bf \ref{setupCode}.33}).  After this point, Statement 6 of Lemma \ref{pseudo} above guarantees $OUT$ will need to hold at most $H_{OUT}$ packets, and since $H_{OUT}$ is always between 0 and $2n$ (by Statement 9 of Lemma \ref{pseudo}) and packets have size $P$, the domain for $OUT$ is as indicated.

	\item[] \textsf{Copy of Packet to be Sent $\tilde{p}$} ({\bf \ref{setupCode}.09}).  This is initialized to $\bot$ on ({\bf \ref{setupCode}.34}), and is only modified afterwards on ({\bf \ref{routingRules2}.38}), ({\bf \ref{routingRules2}.33}), and ({\bf \ref{routingRules2}.62}).  By Statements 3, 4, and 9 of Lemma \ref{pseudo}, $OUT[H] \neq \bot$ when $\tilde{p}$ is set on ({\bf \ref{routingRules2}.38}), and the changes on ({\bf \ref{routingRules2}.33}) and ({\bf \ref{routingRules2}.62}) reset $\tilde{p}$ to $\bot$.  Therefore, the domain of $\tilde{p}$ is as indicated.

	\item[] \textsf{Outgoing Status Bit $sb$} ({\bf \ref{setupCode}.10}).  This is initialized to 0 on ({\bf \ref{setupCode}.35}), and is only modified afterwards on lines ({\bf \ref{routingRules2}.33}), ({\bf \ref{routingRules2}.28}), and ({\bf \ref{routingRules2}.62}), all of which change $sb$ to 0 or 1, as required.

	\item[] \textsf{Packet Sent Bit $d$} ({\bf \ref{setupCode}.11}).  This is initialized to 0 on ({\bf \ref{setupCode}.35}), and is only modified afterwards on lines ({\bf \ref{routingRules2}.26}), ({\bf \ref{routingRules2}.40}), and ({\bf \ref{routingRules2}.62}), each of which change $d$ to 0 or 1, as required.

	\item[] \textsf{Flagged Round Index $FR$} ({\bf \ref{setupCode}.12}).  This is initialized to $\bot$ on ({\bf \ref{setupCode}.34}), and is only modified afterwards on lines ({\bf \ref{routingRules2}.38}), ({\bf \ref{routingRules2}.33}), and ({\bf \ref{routingRules2}.62}).  The latter two lines reset $FR$ to $\bot$, while ({\bf \ref{routingRules2}.38}) sets $FR$ to the index of the current stage and round $\mathtt{t}$, and since there are $3D$ rounds per transmission and 2 stages per round ({\bf \ref{routingRules}.02}), so when $FR$ is set to $\mathtt{t}$ on ({\bf \ref{routingRules2}.38}), it will be in $[0..6D]$, as required.

	\item[] \textsf{Height of Outgoing Buffer $H$} ({\bf \ref{setupCode}.13}).  This is initialized to 0 on ({\bf \ref{setupCode}.35}).  After this point, Statement 9 of Lemma \ref{pseudo} above guarantees $H \in [0..2n]$, as required.

	\item[] \textsf{Height of Flagged Packet $H_{FP}$} ({\bf \ref{setupCode}.14}).  Statement 9 of Lemma \ref{pseudo} guarantees that $H_{FP}$ will lie in the appropriate domain at all times.

	\item[] \textsf{Round Adjacent Node Last Received a Packet $RR$} ({\bf \ref{setupCode}.15}).  This is initialized to $\bot$ on ({\bf \ref{setupCode}.34}), and is only modified afterwards when it is received on ({\bf \ref{routingRules}.06}), where it is either set to the received value or $\bot$ if nothing was received.  As discussed below, the incoming buffer's value for $RR$ always lies in the appropriate domain domain, and hence so will the value received on ({\bf \ref{routingRules}.06}).

	\item[] \textsf{Outgoing Buffer's Value for Adjacent Node's Incoming Buffer Height $H_{IN}$} ({\bf \ref{setupCode}.16}).  This is initialized to 0 on ({\bf \ref{setupCode}.35}), and is only modified afterwards on line ({\bf \ref{routingRules}.06}), where it is set to the value sent on ({\bf \ref{routingRules}.09}) by the adjacent node, or $\bot$ in case no value was received.  Since the value sent on ({\bf \ref{routingRules}.09}) will always be between 0 and $2n$ (by Statement 9 of Lemma \ref{pseudo}), $H_{IN}$ has the required domain.

	\item[] \textsf{Incoming Buffers $IN$} ({\bf \ref{setupCode}.18}).  Each entry of $IN$ is initialized to $\bot$ on ({\bf \ref{setupCode}.29}).  After this point, Statement 5 of Lemma \ref{pseudo} above guarantees $IN$ will need to hold at most $H_{IN}$ packets, and since $H_{IN}$ is always between 0 and $2n$ (by Statement 9 of Lemma \ref{pseudo}) and packets have size $P$, the domain for $IN$ is as indicated.

	\item[] \textsf{Packet Just Received $p$} ({\bf \ref{setupCode}.19}).  This is initialized to $\bot$ on ({\bf \ref{setupCode}.30}), and is only modified afterwards on ({\bf \ref{routingRules2}.43}), where it either is set to the value sent on ({\bf \ref{routingRules2}.41}) or $\bot$ in the case no value was received.  Since the value sent on ({\bf \ref{routingRules2}.41}) has the appropriate domain (i.e.\ the size of a packet, $P$), in either case $p$ has the appropriate domain.

	\item[] \textsf{Incoming Status Bit $sb$} ({\bf \ref{setupCode}.20}).  This is initialized to 0 on ({\bf \ref{setupCode}.31}), and is only modified afterwards on lines ({\bf \ref{routingRules2}.45}), ({\bf \ref{routingRules2}.49}), ({\bf \ref{routingRules2}.53}), ({\bf \ref{routingRules2}.55}), ({\bf \ref{routingRules2}.57}), and ({\bf \ref{routingRules2}.64}), all of which change $sb$ to 0 or 1 as required.

	\item[] \textsf{Round Received Index $RR$} ({\bf \ref{setupCode}.21}).  This is initialized to $-1$ on ({\bf \ref{setupCode}.31}), and is only modified afterwards on lines ({\bf \ref{routingRules2}.53}) and ({\bf \ref{routingRules2}.64}).  The former sets $RR$ to the index of the current stage and round $\mathtt{t}$, and since there are $3D$ rounds per transmission and 2 stages per round ({\bf \ref{routingRules}.02}), setting $RR = \mathtt{t}$ as on ({\bf \ref{routingRules2}.53}) will put $RR$ in $[0..6D]$ as required.  Meanwhile, ({\bf \ref{routingRules2}.64}) resets $RR$ to $-1$.  Thus, at all times $RR \in \{0,1\}^{6D}$, as required.

	\item[] \textsf{Height of Incoming Buffer $H$} ({\bf \ref{setupCode}.22}).  This is initialized to 0 on ({\bf \ref{setupCode}.31}).  After this point, Statement 9 of Lemma \ref{pseudo} above guarantees $H \in [0..2n]$, as required.

	\item[] \textsf{Height of Ghost Packet $H_{GP}$} ({\bf \ref{setupCode}.23}).  Statement 9 of Lemma \ref{pseudo} guarantees that $H_{GP}$ will lie in the appropriate domain at all times.

	\item[] \textsf{Incoming Buffer's Value for Adjacent Node's Outgoing Buffer Height $H_{OUT}$} ({\bf \ref{setupCode}.24}).  This is initialized to 0 on ({\bf \ref{setupCode}.31}), and is only modified afterwards on line ({\bf \ref{routingRules}.11}), where it is set to be one of the values sent on ({\bf \ref{routingRules}.05}) by the adjacent node, or $\bot$ in case no value was received.  Since the value sent on ({\bf \ref{routingRules}.05}) (either $H_{OUT}$ or $H_{FP}$) will always be $\bot$ or a number between 1 and $2n$ (see domain argument above for an outgoing buffer's height of flagged packet variable $H_{FP}$), $H_{OUT}$ has the required domain.

	\item[] \textsf{Incoming Buffer's Value for Adjacent Node's Status Bit $sb_{OUT}$} ({\bf \ref{setupCode}.25}).  This is initialized to 0 on ({\bf \ref{setupCode}.31}), and is only modified afterwards on lines ({\bf \ref{routingRules}.10}) and ({\bf \ref{routingRules}.11}).  Both changes assign $sb_{OUT}$ to `0' or `1', as required.
	
	\item[] \textsf{Incoming Buffer's Value for Adjacent Node's Flagged Round Index $FR$} ({\bf \ref{setupCode}.26}).  This is initialized to $\bot$ on ({\bf \ref{setupCode}.30}), and is only modified afterwards on lines ({\bf \ref{routingRules}.10-11}) and ({\bf \ref{routingRules2}.43}).  Each of these times, $FR$ is either set to the value sent by the adjacent node, or $\bot$ in the case nothing was received.  Since the values sent on ({\bf \ref{routingRules}.05}) and ({\bf \ref{routingRules2}.45}) live in $[0..6D] \cup \bot$ (see argument above for an outgoing buffer's variable $FR$ living in the appropriate domain), so does $FR$.

	\item[] \textsf{Sender's Count of Packets Inserted $\kappa$} ({\bf \ref{setupCode2}.37}).  We want to argue that at all times, $\kappa$ corresponds to the number of packets (corresponding to the current codeword) that the sender has {\it knowingly} inserted.  Lines ({\bf \ref{setupCode2}.39}) and ({\bf \ref{routingRules2}.68}) guarantee that $\kappa=0$ at the outset of any transmission.  The only other place $\kappa$ is modified is ({\bf \ref{routingRules2}.31}) where it is incremented by one, so we must argue that ({\bf \ref{routingRules2}.31}) is reached exactly once for every packet the sender knowingly inserts.  By ``knowingly'' inserting a packet, we means that the sender has received verification that the adjacent node has received and stored the packet, and hence the sender can delete the packet.
	
	Suppose that in some round $\mathtt{t}$, the sender sends a packet $p$ as on ({\bf \ref{routingRules2}.41}).  By Claim \ref{obsBelowHere} below, the sender will continue to try and send this packet to its neighbor until he receives confirmation of receipt.  There are two things to show: 1) If the sender does not receive confirmation of receipt, then $\kappa$ is never incremented as on ({\bf \ref{routingRules2}.31}), and 2) If the sender {\it does} receive confirmation of receipt, then $\kappa$ is incremented {\it exactly once}.  By ``receiveing confirmation of receipt,'' we mean that line ({\bf \ref{routingRules2}.30}) is satisfied in some round $\mathtt{t}'$ when the sender's value for $\tilde{p}$ equals the packet $p$ sent in round $\mathtt{t}$ (see Definition \ref{confRec} below).  Clearly, 1) will be true since ({\bf \ref{routingRules2}.31}) will never be reached if ({\bf \ref{routingRules2}.30}) is never satisfied.  For 2), suppose that in some later round $\mathtt{t}' > \mathtt{t}$ the sender gets confirmation of receipt for $p$.  Clearly line ({\bf \ref{routingRules2}.31}) is reached this round, and $\kappa$ is incremented by one there.  We must show $\kappa$ will not be incremented due to $p$ ever again.  However, $p$ will be deleted on line ({\bf \ref{routingRules2}.32}) of round $\mathtt{t}'$, and therefore this packet can cause the sender to reach ({\bf \ref{routingRules2}.31}) at most once.  Thus, at all times $\kappa$ corresponds to the number of packets (corresponding to the current codeword) that the sender has {\it knowingly} inserted, as desired.  Since each codeword has $D$ packets, the domain for $\kappa$ is as required.

	\item[] \textsf{Receiver's Storage Buffer $I_R$} ({\bf \ref{setupCode2}.40}).  Each entry of $I_R$ is initialized to $\bot$ on ({\bf \ref{setupCode2}.43}), after which it is only modified on lines ({\bf \ref{reShuffleRules}.101}) and ({\bf \ref{routingRules2}.66}).  The latter resets $I_R$, while the former sets entry $\kappa$ of $I_R$ to the packet in $IN[1]$.  We show below that $\kappa$ will always accurately represent the number of current codeword packets the receiver has received, and hence will be a value between 0 and $D$.  It remains to show that $IN[1]$ will always hold a packet when ({\bf \ref{reShuffleRules}.101}) is reached.  We use Claim \ref{hIN} below which states that for the receiver, anytime $H_{IN} >0$, $H_{GP} = \bot$.  Therefore, whenever ({\bf \ref{reShuffleRules}.99}) is satisfied, Statement 1 of Lemma \ref{pseudo} (together with the argument that $\mathsf{IN}$ has the appropriate domain) state that $IN[1]$ will hold a packet, as required.

	\item[] \textsf{Receiver's Number of Packets Received $\kappa$} ({\bf \ref{setupCode2}.41}).  We want to show that $\kappa$ always equals the number of packets corresponding to the current codeword the receiver has received so far.  Lines ({\bf \ref{setupCode2}.42}) and ({\bf \ref{routingRules2}.66}) guarantee that $\kappa=0$ at the outset of any transmission.  The only other place $\kappa$ is modified is ({\bf \ref{reShuffleRules}.101}) where it is incremented by one, so we must argue that ({\bf \ref{reShuffleRules}.101}) is reached exactly once for every packet (corresponding to the current codeword) that the receiver receives.  By Statement 1 of Lemma \ref{pseudo} and Claim \ref{hIN} below, anytime ({\bf \ref{reShuffleRules}.101}) is reached, $\mathsf{IN}[1]$ necessarily stores a packet.  This packet is added to $I_R$ on ({\bf \ref{reShuffleRules}.101}) and then is promptly deleted from $\mathsf{IN}$ on ({\bf \ref{reShuffleRules}.102}).  By Claim \ref{packetProliferation3}, the receiver will never enter ({\bf \ref{reShuffleRules}.100}) twice due to the same packet, and hence ({\bf \ref{reShuffleRules}.101}) is reached exactly once for every distinct packet corresponding to the current codeword (see comments on {\bf \ref{reShuffleRules}.100} and {\bf \ref{reShuffleRules}.104}).  Therefore, $\kappa$ always equals the number of packets corresponding to the current codeword the receiver has received so far, as desired.  Since there are $D$ packets per codeword, $\kappa \in [0..D]$, as required.
	\end{itemize}
\vspace{-22pt}
\end{proof}
\begin{claim} \label{hIN} For any of the receiver's buffers $IN$, $H_{IN} = 0$ at the start of every round.  Also, anytime $H_{IN} > 0$, $H_{GP} = \bot$.
\end{claim}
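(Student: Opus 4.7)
The plan is to prove both parts by a straightforward induction on the round index, leveraging the fact that the receiver has no outgoing buffers and that its end-of-round cleanup (\emph{Receiver Re-Shuffle}, lines \textbf{\ref{reShuffleRules}.97--102}) is the last thing it executes in any round. For the base case, the initialization at line \textbf{\ref{setupCode}.31} sets $H_{IN} = 0$ and $H_{GP} = \bot$ for every incoming edge, so both statements hold at the outset of round~1. For the inductive step, assume $H_{IN} = 0$ at the start of round $\mathtt{t}$ for some fixed incoming buffer of $R$. I will then walk through every line of the round that can touch $H_{IN}$ or $H_{GP}$ for that buffer and verify the two invariants survive.

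First I would observe that $R$ never executes \emph{Re-Shuffle} (line \textbf{\ref{routingRules}.20}) or any outgoing-buffer code, so the only places where $H_{IN}$ or $H_{GP}$ can change during round $\mathtt{t}$ are inside \emph{Receive Packet} (lines \textbf{\ref{routingRules2}.42--57}) and inside \emph{Receiver Re-Shuffle} (lines \textbf{\ref{reShuffleRules}.97--102}). Within \emph{Receive Packet}, a case analysis shows that $H_{IN}$ is modified only on line \textbf{\ref{routingRules2}.53}, where it is incremented from $0$ to $1$ and $H_{GP}$ is \emph{simultaneously} reset to $\bot$ on the same line. Every other place where $H_{GP}$ can acquire a non-$\bot$ value (lines \textbf{\ref{routingRules2}.46}, \textbf{\ref{routingRules2}.50}, \textbf{\ref{routingRules2}.52}) sets $H_{GP} = H+1 = 1$ while leaving $H$ (equivalently $H_{IN}$) at $0$, and the remaining lines \textbf{\ref{routingRules2}.55} and \textbf{\ref{routingRules2}.57} only set $H_{GP}$ to $\bot$. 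This already establishes the second part of the claim for round $\mathtt{t}$: whenever $H_{IN}$ becomes positive during the round, $H_{GP}$ is forced to $\bot$ at the same instant, and $H_{IN}$ does not get decremented before the re-shuffle, so the invariant $H_{IN} > 0 \Rightarrow H_{GP} = \bot$ persists.

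For the inductive step of the first part, note that after all calls to \emph{Receive Packet} complete, the receiver executes \emph{Receiver Re-Shuffle}. For each incoming edge with $H_{IN} > 0$, line \textbf{\ref{reShuffleRules}.102} explicitly assigns $H_{IN} = 0$ (and $H_{GP} = \bot$); for any edge with $H_{IN} = 0$, the value is already $0$. Either way, $H_{IN} = 0$ at the end of round $\mathtt{t}$, which is the start of round $\mathtt{t}+1$, completing the induction. The only minor obstacle is ensuring that no other node's code or global routine can write into $R$'s incoming variables between the end of \emph{Receiver Re-Shuffle} and the start of the next round; this follows from the loop structure at lines \textbf{\ref{routingRules}.02--22}, which shows each node acts only on its own local state during a round, combined with the synchronous-round model of Section~\ref{model}. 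With that observation the two invariants are established for every round.
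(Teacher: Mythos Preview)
Your proof is correct and follows essentially the same approach as the paper's: both arguments observe that $H_{IN}$ is initialized to $0$, can only become positive at line \textbf{\ref{routingRules2}.53} (where $H_{GP}$ is simultaneously set to $\bot$), and is reset to $0$ by \emph{Receiver Re-Shuffle} at line \textbf{\ref{reShuffleRules}.102} before the next round begins. Your version is slightly more explicit in enumerating the other lines that touch $H_{GP}$, but the structure and key observations are the same.
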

\begin{proof} $H = H_{IN}$ is set to 0 at the outset of the protocol ({\bf \ref{setupCode}.31}).  The first statement follows immediately from line ({\bf \ref{reShuffleRules}.102}), where each of the receiver's incoming buffers $IN$ have $H_{IN}$ reset to zero during the re-shuffle phase of every round.  For the second statement, we will show that whenever $H$ changes value from 0 in any round $\mathtt{t}$, that $H_{GP}$ will be set to $\bot$ at the same time, and neither will change value until the end of the round when $H$ will be reset to zero during re-shuffling.  In particular, the only place $H$ can change from zero is on ({\bf \ref{routingRules2}.53}).  Suppose ({\bf \ref{routingRules2}.53}) is reached in some round $\mathtt{t}$, changing $H$ from zero to 1, and also changing $H_{GP}$ to $\bot$.  Looking at the pseudo-code, neither $H$ nor $H_{GP}$ can change value until line ({\bf \ref{reShuffleRules}.102}), where $H$ is reset to zero.  Therefore, $H$ can only be non-zero between lines ({\bf \ref{routingRules2}.53}) and ({\bf \ref{routingRules}.21}) (when {\em Receiver Re-Shuffle} is called) of a given round, and at these times $H_{GP}$ is always equal to $\bot$.
\end{proof}
\begin{claim} \label{HFPFR} Let $OUT$ be any outgoing buffer, and $H_{FP}$, $FR$, and $sb$ denote the height of it flagged packet, round the packet was flagged, and status bit, respectively (see ({\bf \ref{setupCode}.10, \ref{setupCode}.12, \ref{setupCode}.14})).  Then $H_{FP} = \bot \Leftrightarrow FR = \bot$.  Also, anytime $OUT$ has no flagged packets (i.e.\ $H_{FP} = \bot$), $OUT$ has normal status (i.e.\ $sb=0$).
\end{claim}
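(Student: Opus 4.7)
The plan is to prove both statements by simultaneous induction on the line of pseudo-code executed (i.e.\ on time), following the same inductive framework used in the proof of Lemma~\ref{pseudo}. The base case is trivial: the Setup code initializes $H_{FP}=\bot$ and $FR=\bot$ on line~({\bf \ref{setupCode}.34}) and $sb=0$ on line~({\bf \ref{setupCode}.35}), so both invariants hold at the start of the protocol. For the inductive step, I only need to examine the lines that can modify $H_{FP}$, $FR$, or the outgoing-buffer status bit $sb$; inspecting the pseudo-code, these are exactly lines~({\bf \ref{routingRules2}.28}), ({\bf \ref{routingRules2}.33}), ({\bf \ref{routingRules2}.35}), ({\bf \ref{routingRules2}.38}), and ({\bf \ref{routingRules2}.62}) (lines such as ({\bf \ref{routingRules}.10-11}) touch the incoming-side copies of these variables and are therefore irrelevant).

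For the first statement ($H_{FP}=\bot \Leftrightarrow FR=\bot$), the key observation is that the pseudo-code always modifies $H_{FP}$ and $FR$ in lockstep. Lines~({\bf \ref{routingRules2}.33}) and ({\bf \ref{routingRules2}.62}) set both of them to $\bot$ on the same line, and line~({\bf \ref{routingRules2}.38}) sets both to non-$\bot$ values (namely $H$ and $\mathtt{t}$) simultaneously. The only asymmetric line is~({\bf \ref{routingRules2}.35}), which updates $H_{FP}$ alone; but this line is only reached when the guard on~({\bf \ref{routingRules2}.34}) holds, and that guard requires $\bot\neq H_{FP}<H$, so $H_{FP}$ is non-$\bot$ both before and after ({\bf \ref{routingRules2}.35}), while by the inductive hypothesis $FR$ is also non-$\bot$ at that moment. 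Thus the biconditional is preserved by each of the relevant lines.

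For the second statement, I will recast it in the contrapositive: whenever $sb=1$, we must have $H_{FP}\neq\bot$. The lines that set $sb$ are ({\bf \ref{routingRules2}.28}) (sets $sb=1$), ({\bf \ref{routingRules2}.33}) and ({\bf \ref{routingRules2}.62}) (both set $sb=0$). The latter two simultaneously set $H_{FP}=\bot$, which is harmless. The real work is at line~({\bf \ref{routingRules2}.28}): I must argue that when execution reaches it, $H_{FP}\neq\bot$ already holds. The guard at~({\bf \ref{routingRules2}.27}) requires $RR=\bot$ or $\bot\neq FR>RR$. In the second case, $FR\neq\bot$ immediately yields $H_{FP}\neq\bot$ by the first statement of the claim (just proved). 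The first case ($RR=\bot$) is the main obstacle, and to handle it I will prove an auxiliary invariant: whenever $d=1$ at the start of a round, $FR\neq\bot$ at that moment. This invariant follows because $d$ is only set to $1$ inside \emph{Send Packet} on line~({\bf \ref{routingRules2}.40}), which is reached from line~({\bf \ref{routingRules2}.17}) only when either $sb=1$ (and then $H_{FP}\neq\bot$ by the inductive hypothesis, hence $FR\neq\bot$ by the first statement) or $sb=0$ and $H>H_{IN}$ (in which case \emph{Create Flagged Packet} on line~({\bf \ref{routingRules2}.38}) has just set both $FR$ and $H_{FP}$ to non-$\bot$ values); nothing between then and the subsequent execution of~({\bf \ref{routingRules2}.25}) can reset $FR$ to $\bot$, since the only resets are on~({\bf \ref{routingRules2}.33}) and~({\bf \ref{routingRules2}.62}), neither of which is executed before~({\bf \ref{routingRules2}.25}) of the next round.

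With that auxiliary invariant in hand, reaching line~({\bf \ref{routingRules2}.28}) requires $d=1$ at~({\bf \ref{routingRules2}.25}), hence $FR\neq\bot$, hence $H_{FP}\neq\bot$ by the first statement. It then remains to verify that no line between~({\bf \ref{routingRules2}.28}) and the next modification of $sb$ can set $H_{FP}=\bot$ while leaving $sb=1$: the only such lines are~({\bf \ref{routingRules2}.33}) and~({\bf \ref{routingRules2}.62}), and both simultaneously reset $sb=0$. (Note also that~({\bf \ref{routingRules2}.33}) is mutually exclusive with~({\bf \ref{routingRules2}.28}) within a single call to \emph{Reset Outgoing Variables}, since the guards on~({\bf \ref{routingRules2}.27}) and~({\bf \ref{routingRules2}.30}) are incompatible.) This exhausts all cases and completes the induction. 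The main obstacle, as flagged above, is controlling the $RR=\bot$ branch of~({\bf \ref{routingRules2}.27}); everything else is direct bookkeeping against the pseudo-code.
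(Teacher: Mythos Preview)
Your proof is correct and follows essentially the same inductive approach as the paper, tracking changes to $H_{FP}$, $FR$, and $sb$ line by line. You are in fact more thorough than the paper on one point: the paper simply asserts that reaching line~({\bf \ref{routingRules2}.28}) forces $FR\neq\bot$ by citing line~({\bf \ref{routingRules2}.27}), whereas your auxiliary invariant (that $d=1$ at line~({\bf \ref{routingRules2}.25}) implies $FR\neq\bot$) explicitly handles the $RR=\bot$ branch of that guard, which the paper glosses over.
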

\begin{proof} The first statement is true at the outset of the protocol ({\bf \ref{setupCode}.34}), so it will be enough to make sure that anytime $H_{FP}$ or $FR$ changes value from $\bot$ to non-$\bot$ (or vice-versa), the other one also changes.  Examining the pseudo-code, these changes occur only on lines ({\bf \ref{routingRules2}.33}), ({\bf \ref{routingRules2}.38}), and ({\bf \ref{routingRules2}.62}), where it is clear $H_{FP}$ takes on a non-$\bot$ (respectively $\bot$) value if and only if $FR$ does.

The second statement is true at the outset of the protocol ({\bf \ref{setupCode}.34-35}).  So it is enough to show: 1) anytime $H_{FP}$ is set to $\bot$, $sb$ is equal to zero, and 2) anytime $sb$ changes to one, $H_{FP} \neq \bot$.  The former is true since anytime $H_{FP}$ changes to $\bot$, $sb$ is set to zero on the same line (({\bf \ref{routingRules2}.33}) and ({\bf \ref{routingRules2}.62})), while the latter is true since $sb$ only changes to one on ({\bf \ref{routingRules2}.28}), which can only be reached if $FR \neq \bot$ ({\bf \ref{routingRules2}.27}), which by the first statement of this claim implies $H_{FP} \neq \bot$.
\end{proof}
\begin{claim} \label{stupid} \hspace*{\fill}
    \begin{enumerate}\setlength{\itemsep}{1pt} \setlength{\parskip}{0pt} \setlength{\parsep}{0pt}
    \item Anytime $sb_{OUT}$ is equal to 1 when {\it Create Flagged Packet} is called on line ({\bf \ref{routingRules}.15}), $H_{FP} \neq \bot$.

    \item Anytime {\it Send Packet} is called on line ({\bf \ref{routingRules}.17}), the flagged packet has height at least one (i.e.\ $H_{FP}$ is at least one anytime {\it Send Packet} is called).
    \end{enumerate}
\end{claim}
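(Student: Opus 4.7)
\medskip

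The plan for Part 1 is essentially to read off the contrapositive of the second statement of Claim \ref{HFPFR}, which has already been established. That claim says that $H_{FP} = \bot$ forces $sb = 0$ on any outgoing buffer; contrapositively, $sb = 1$ forces $H_{FP} \neq \bot$. Since line ({\bf \ref{routingRules}.15}) sits inside the ``\For \textit{outgoing} {\bf edge}'' loop ({\bf \ref{routingRules}.13}), the status bit denoted $sb_{OUT}$ in the claim statement is exactly the outgoing buffer's own $sb$ variable from ({\bf \ref{setupCode}.10}), so the result is immediate.

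For Part 2, I plan to do a case analysis on the value of $sb$ at the moment line ({\bf \ref{routingRules}.15}) is invoked, using the observation that between lines ({\bf \ref{routingRules}.15}) and ({\bf \ref{routingRules}.17}) the only possible change to $H_{FP}$ occurs inside \textit{Create Flagged Packet}, and even then only when its guard $sb = 0$ on ({\bf \ref{routingRules2}.37}) is satisfied.

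If $sb = 1$ upon entry to ({\bf \ref{routingRules}.15}), then \textit{Create Flagged Packet} does nothing and $H_{FP}$ is unchanged when ({\bf \ref{routingRules}.17}) is reached; Part 1 guarantees $H_{FP} \neq \bot$ at that moment, and Statement 9 of Lemma \ref{pseudo} forces $H_{FP} \in [1..2n]$, so $H_{FP} \geq 1$. If instead $sb = 0$ upon entry to ({\bf \ref{routingRules}.15}), then the only way line ({\bf \ref{routingRules}.17}) can be reached is via the second disjunct of the conditional on ({\bf \ref{routingRules}.16}), which requires $H_{OUT} > H_{IN} \geq 0$ and hence $H = H_{OUT} \geq 1$. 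But under this hypothesis the guard on ({\bf \ref{routingRules2}.37}) was satisfied, so ({\bf \ref{routingRules2}.38}) just executed and set $H_{FP} := H \geq 1$.

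There is no serious obstacle here; the only thing to be careful about is the naming convention, namely that ``$sb_{OUT}$'' in the claim statement refers to the outgoing buffer's own status bit (as defined on ({\bf \ref{setupCode}.10})) rather than the incoming-buffer variable of the same name on ({\bf \ref{setupCode}.25}), and that \textit{Create Flagged Packet} is the unique place between ({\bf \ref{routingRules}.15}) and ({\bf \ref{routingRules}.17}) where $H_{FP}$ could possibly be touched. Both are immediate from inspection of the pseudo-code.
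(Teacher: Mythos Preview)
Your proof is correct and considerably more economical than the paper's. Both arguments handle the $sb=0$ case identically, but for the $sb=1$ case of Part~2 you combine the contrapositive of Claim~\ref{HFPFR} (giving $H_{FP}\neq\bot$) with Statement~9 of Lemma~\ref{pseudo} (forcing $H_{FP}\in[1..2n]$) to conclude immediately. The paper instead carries out an explicit round-by-round trace: it locates the most recent round $\mathtt{t}_0$ where $sb_{OUT}=0$, argues that \textit{Send Packet} was called in $\mathtt{t}_0$, applies Case~1 there to get $H_{FP}\geq 1$, and then checks line by line that $H_{FP}$ cannot decrease between $\mathtt{t}_0$ and the current round. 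Part~1 is then extracted as a by-product of that trace. Your route avoids this entirely by recognizing that the invariant $H_{FP}\in\bot\cup[1..2n]$ has already been established in Lemma~\ref{pseudo}, so once $H_{FP}\neq\bot$ is known, there is nothing left to do. There is no circularity: both Claim~\ref{HFPFR} and Lemma~\ref{pseudo} are proven before this claim and neither invokes it.
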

\begin{proof}
We prove the 2$^{nd}$ statement by separating the proof into the following two cases.
        \begin{itemize}\setlength{\itemsep}{6pt} \setlength{\parskip}{0pt} \setlength{\parsep}{0pt}
        \item[] \textsf{Case 1: $sb_{OUT} = 0$ at the start of Stage 2}.  Since {\it Send Packet} is called, the conditional statement on line ({\bf \ref{routingRules}.16}) was satisfied.  Therefore, since we are in the case $sb_{OUT} =0$ on that line, then $H_{OUT}>H_{IN}$.  Tracing $H_{IN}$ backwards, it was received on line ({\bf \ref{routingRules}.06}) and represents the value of $H_{IN}$ that was sent on line ({\bf \ref{routingRules}.09}).  Using the induction hypothesis applied to Statement 9 of Lemma \ref{pseudo}, $H_{IN} \geq 0$ and hence the value of $H_{OUT}$ on ({\bf \ref{routingRules}.16}) must be at least one.  Since $H_{OUT}$ and $H_{IN}$ cannot change between lines ({\bf \ref{routingRules}.15}) and ({\bf \ref{routingRules}.16}) of any round, when {\it Create Flagged Packet} was called, it was still true that $sb_{OUT}=0$ and $H_{OUT} > H_{IN} \geq 0$.  Therefore, line ({\bf \ref{routingRules2}.37}) will be satisfied and ({\bf \ref{routingRules2}.38}) will set $H_{FP} = H_{OUT} \geq 1$ as required.

        \item[] \textsf{Case 2: $sb_{OUT} = 1$ at the start of Stage 2}.  Let $\mathtt{t}$ denote some round where $sb_{OUT} = 1$ at the start of Stage 2.  Our strategy will be to find the most recent round that $sb_{OUT}$ switched from 0 to 1, and argue that the value that $H_{FP}$ acquired in that round has not changed.  So let $\mathtt{t_0}+1$ denote the most recent round that $sb_{OUT}$ had the value 0 at any stage of the round.  We argue that $sb_{OUT}=1$ by the end of $\mathtt{t}_0+1$, and $sb_{OUT}=0$ at the start of Stage 2 of round $\mathtt{t}_0$ (the round {\it before} $\mathtt{t}_0+1$) as follows:
		\begin{itemize}
		\item If $sb_{OUT}$ equals 0 by the end of round $\mathtt{t_0}+1$, then it will at the start of round $\mathtt{t_0}+2$, contradicting the choice of $\mathtt{t_0}+1$.

		\item If $sb_{OUT} =1$ at the start of Stage 2 of round $\mathtt{t}_0$, then $sb_{OUT}$ must have changed its value to 0 sometime between Stage 2 of round $\mathtt{t}_0$ and the end of round $\mathtt{t}_0+1$ (since $sb_{OUT}=0$ at some point of round $\mathtt{t}_0+1$ by definition).  This can only happen on line ({\bf \ref{routingRules2}.33}) inside the {\it Reset Outgoing Variables} function of round $\mathtt{t}_0+1$ (this is the only place that $sb_{OUT}$ can be set to zero).  However, since $sb_{OUT}$ cannot change between the time that {\it Reset Outgoing Variables} is called on line ({\bf \ref{routingRules}.07}) and the end of the round, it must be that $sb_{OUT}$ was equal to zero at the start of round $\mathtt{t}_0+2$, contradicting the choice of $\mathtt{t}_0+1$.
		\end{itemize}
	Now since $sb_{OUT}=0$ at the start of round $\mathtt{t_0}+1$ (it cannot change between Stage 2 of $\mathtt{t}_0$ and the start of $\mathtt{t}_0+1$), and $sb_{OUT}=1$ by the end of $mathtt{t}_0+1$, it must have changed on line ({\bf \ref{routingRules2}.28}) of round $\mathtt{t}_0+1$ (this is the only line that sets $sb_{OUT}$ to 1).  In particular, the conditional statements on lines ({\bf \ref{routingRules2}.25}) and ({\bf \ref{routingRules2}.27}) must have been satisfied, and so $d$ was equal to 1 on line ({\bf \ref{routingRules2}.25}) of round $\mathtt{t}_0+1$.  Since $d$ is reset to zero during Stage 1 of every round ({\bf \ref{routingRules2}.26}), it must be that $d$ was switched from 0 to 1 on line ({\bf \ref{routingRules2}.40}) of round $\mathtt{t}_0$ (this is the only place $d$ is set to one).  Thus, we have that {\it Send Packet} was called on line ({\bf \ref{routingRules}.17}) of round $\mathtt{t}_0$.  We are now back in \textsf{Case 1} above (but for round $\mathtt{t}_0$ instead of $\mathtt{t}$), and thus $H_{FP}$ was set to a value of at least 1 on line ({\bf \ref{routingRules2}.38}) of round $\mathtt{t}_0$.  It remains to argue that $H_{FP}$ does not decrease in value between round $\mathtt{t}_0$ and line ({\bf \ref{routingRules}.17}) of round $\mathtt{t}$.  But $H_{FP}$ can only change value on lines ({\bf \ref{routingRules2}.33}), ({\bf \ref{routingRules2}.35}), and ({\bf \ref{routingRules2}.38}).  For round $\mathtt{t}_0$, the former two of these lines have both passed when the latter is called (setting $H_{FP} \geq 1$ as in Case 1).  Meanwhile, between $\mathtt{t}_0+1$ and $\mathtt{t}$, we know that ({\bf \ref{routingRules2}.33}) and ({\bf \ref{routingRules2}.38}) cannot be reached, as this would imply the value of $sb_{OUT}$ is zero sometime after $\mathtt{t}_0+1$, contradicting the choice of $\mathtt{t}_0+1$.  The only other place $H_{FP}$ can change is ({\bf \ref{routingRules2}.35}), which can only {\it increase} $H_{FP}$.  Thus in any case, $\bot \neq H_{FP} \geq 1$ when {\it Send Packet} is called on ({\bf \ref{routingRules}.17}) of round $\mathtt{t}$.
        \end{itemize}
The proof of the 1$^{st}$ statement follows from the proof given in \textsf{Case 2} above.
\end{proof}
\begin{defn} \label{confRec} We will say that an outgoing buffer gets {\bf \em confirmation of receipt} for a packet $p$ that it sent across its adjacent edge whenever line ({\bf \ref{routingRules2}.30}) (alternatively line ({\bf \ref{routingRules2M}.46}) for the node-controlling $+$ edge-controlling protocol of Sections \ref{aP}-\ref{mBareBones}) is reached and satisfied and the packet subsequently deleted on ({\bf \ref{routingRules2}.32}) (respectively ({\bf \ref{routingRules2M}.50})) is (a copy of) $p$.
\end{defn}
\begin{claim} \label{obsBelowHere} Suppose (an instance of) a packet $p$ is accepted by node $B$ in round $\mathtt{t}$ (using the definition of ``accepted'' from Definition \ref{receive}).  Then:
	\begin{enumerate}
	\item Let $\mathtt{t}'$ be the first round after\footnote{The Claim remains valid even if $\mathtt{t}'$ is a round in a {\it different} transmission than $\mathtt{t}$.} $\mathtt{t}$ in which $B$ attempts to send (a copy of) this packet across any outgoing edge.  Then the corresponding outgoing buffer $\mathsf{OUT}$ of $B$ will necessarily have normal status at the start of Stage 2 of $\mathtt{t}'$.
	\item If $B$ fails to get confirmation of receipt for the packet in the following round (i.e.\ either $RR$ is not received on ({\bf \ref{routingRules}.06}) of round $\mathtt{t}'+1$, or it is received but $RR < FR$), then $\mathsf{OUT}$ enters problem status as on ({\bf \ref{routingRules2}.28}) of round $\mathtt{t}'+1$.  $\mathsf{OUT}$ will remain in problem status until the end of the transmission or until the round in which it gets confirmation of receipt (i.e.\ until $RR$ is received as on ({\bf \ref{routingRules}.06}) with $RR \geq \mathtt{t}'$).
	\item From the time $p$ is first flagged as on ({\bf \ref{routingRules2}.38}) of round $\mathtt{t}'$ through the time $B$ does get confirmation of receipt (or through the end of the transmission, whichever comes first), $B$ will not have any other flagged packets, i.e.\ $\tilde{p}, \mathsf{OUT}[H_{FP}] = p$ and $FR=\mathtt{t}'$.
	\end{enumerate}
\end{claim}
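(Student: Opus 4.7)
The plan is to prove each of the three statements by tracing the variables $\tilde{p}$, $H_{FP}$, $FR$, $sb$, and $d$ along the relevant outgoing buffer $\mathsf{OUT}$ of $B$ starting from round $\mathtt{t}'$, using the fact (already established in Lemma \ref{pseudo} and Claim \ref{HFPFR}) that these variables can only be modified on a small list of explicit pseudo-code lines, together with Claim \ref{packetProliferation3} to rule out spurious second copies of $p$ existing elsewhere.

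For Statement 1, the key observation is that $B$ can only push (a copy of) $p$ across the edge via {\it Send Packet} (line {\bf \ref{routingRules2}.41}), which transmits $\tilde{p}$; so ``first attempt to send $p$'' means this is the first round in which $\tilde{p}=p$ and {\it Send Packet} is invoked. Since $p$ was only accepted at $B$ in round $\mathtt{t}$, the value $\tilde{p}=p$ must have been installed in $\mathtt{t}'$ itself, and the only line that installs a value into $\tilde{p}$ is line {\bf \ref{routingRules2}.38} inside {\it Create Flagged Packet}, whose guard requires $sb=0$. I would then note that $sb$ cannot change between the start of Stage 2 and line {\bf \ref{routingRules}.15} (since $sb$ is only modified in Stage 1's {\it Reset Outgoing Variables} and in the {\it End of Transmission Adjustments} block), giving $sb=0$ at the start of Stage 2 of $\mathtt{t}'$.

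For Statement 2, I would observe that the successful call to {\it Send Packet} in round $\mathtt{t}'$ sets $d=1$ on line {\bf \ref{routingRules2}.40}, and $d$ is not reset before {\it Reset Outgoing Variables} in Stage 1 of $\mathtt{t}'+1$. The hypothesis ``no confirmation of receipt'' means precisely that either $RR=\bot$ or $\bot\neq FR>RR$ when line {\bf \ref{routingRules2}.27} is evaluated, which forces the execution of line {\bf \ref{routingRules2}.28} and sets $sb=1$. To close the second half of Statement 2, I would enumerate every pseudo-code line that can assign $sb:=0$ --- only lines {\bf \ref{routingRules2}.33} and {\bf \ref{routingRules2}.62} do so --- and note that the former sits inside the block guarded by $\bot\neq FR\leq RR$ (i.e.\ confirmation of receipt, in the sense of Definition \ref{confRec}) and the latter is {\it End of Transmission Adjustments}.

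For Statement 3, I would argue that $\tilde{p}$, $FR$, and $\mathsf{OUT}[H_{FP}]$ are preserved throughout the claimed interval by a line-by-line enumeration. The triple $(\tilde{p},H_{FP},FR)$ can only be modified on lines {\bf \ref{routingRules2}.33}, {\bf \ref{routingRules2}.35}, {\bf \ref{routingRules2}.38}, and {\bf \ref{routingRules2}.62}; lines {\bf \ref{routingRules2}.33}, {\bf \ref{routingRules2}.62} are ruled out during the interval by the choice of endpoint (confirmation/end-of-transmission), line {\bf \ref{routingRules2}.38} is gated by $sb=0$ which by Statement 2 cannot hold in any round strictly after $\mathtt{t}'$ within the interval, and re-shuffling cannot disturb these variables by Statement 11 of Lemma \ref{pseudo}. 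The main subtlety --- and the one place I expect the argument to require care --- is handling line {\bf \ref{routingRules2}.35} ({\it Elevate Flagged Packet}): this line does change $H_{FP}$ and swaps two entries of $\mathsf{OUT}$, so I need to verify that the swap still leaves $\mathsf{OUT}[H_{FP}]=p$ after $H_{FP}$ is reassigned to $H$, and that $\tilde{p}$, $FR$ are untouched there. Combining this with Claim \ref{packetProliferation3} (which prohibits a second copy of $p$ appearing elsewhere in $B$'s outgoing buffer that could subsequently be re-flagged as a distinct ``other'' flagged packet with the same contents) finishes the proof.
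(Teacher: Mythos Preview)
Your plan follows the same line-by-line tracking strategy as the paper, and your handling of Statement 1 and of line {\bf \ref{routingRules2}.35} is correct. However, there is a circular dependency in how you decompose Statements 2 and 3 that you need to break. In your Statement 2 argument you equate the guard $\bot\neq FR\leq RR$ on line {\bf \ref{routingRules2}.30} with ``confirmation of receipt'' in the sense of Definition \ref{confRec}; but that definition additionally requires the packet deleted on line {\bf \ref{routingRules2}.32} to be $p$, which needs $\mathsf{OUT}[H_{FP}]=p$ and (to conclude $RR\geq \mathtt{t}'$) that $FR=\mathtt{t}'$ at that moment --- precisely what your Statement 3 is establishing. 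And your Statement 3 rules out line {\bf \ref{routingRules2}.38} by invoking $sb=1$ from Statement 2. So as stated, each half leans on the other.

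The paper breaks this loop by proving, inside the Statement 2 argument, a standalone subclaim: if at the start of Stage 2 of some round $sb=1$ and $FR=\mathtt{t}_0$, then $sb$ remains $1$ until $RR\geq \mathtt{t}_0$ is received. The subclaim's proof considers the \emph{first} round in which $sb$ would return to $0$ via line {\bf \ref{routingRules2}.33}, and argues $FR$ cannot have changed before then (the only lines modifying $FR$ are {\bf \ref{routingRules2}.33} and {\bf \ref{routingRules2}.38}, each of which either sets or requires $sb=0$, contradicting minimality). This yields $FR$-invariance and $sb$-persistence simultaneously, without circularity; Statement 3 then follows cleanly from Statement 2. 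Equivalently, you could prove all the invariants $sb=1$, $FR=\mathtt{t}'$, $\tilde{p}=p$, $\mathsf{OUT}[H_{FP}]=p$ together by a single induction on rounds rather than separating them. The appeal to Claim \ref{packetProliferation3} at the end is harmless but not needed: Statement 3 concerns the identity of the unique flagged packet in $\mathsf{OUT}$, not the nonexistence of other copies.
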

\begin{proof} We prove Statement 1 by contradiction.  Let $\mathtt{t}'$ denote the first round after $\mathtt{t}$ in which $B$ attempts to send (a copy of) $p$ across an edge $E(B,C)$, i.e.\ $\mathtt{t}'$ is the first round after $\mathtt{t}$ that {\it Send Packet} is called by $B$'s outgoing buffer $\mathsf{OUT}$ such that the $\tilde{p}$ that appears on line ({\bf \ref{routingRules2}.38}) of that round corresponds to $p$.  For the sake of contradiction, assume that $sb_{OUT}=1$ at the start of Stage 2 of round $\mathtt{t}'$.  Since $sb_{OUT}$ cannot change between the start of Stage 2 and the time that {\it Create Flagged Packet} is called on line ({\bf \ref{routingRules}.15}), we must have that $sb_{OUT}=1$ on line ({\bf \ref{routingRules2}.37}) of round $\mathtt{t}'$, and hence ({\bf \ref{routingRules2}.38}) is not reached that round.  In particular, when {\it Send Packet} is called on line ({\bf \ref{routingRules}.17}) (as it must be by the fact that $p$ was sent during round $\mathtt{t}'$), the packet $\tilde{p}$ that is sent (which is $p$) was set in some previous round.  Let $\widetilde{\mathtt{t}}$ denote the most recent round for which $\tilde{p}$ was set to $p$ as on ({\bf \ref{routingRules2}.38}) (this is the only line which sets $\tilde{p}$).  Then by assumption $\widetilde{\mathtt{t}} < \mathtt{t}'$, and $\mathsf{OUT}$ had normal status at the start of Stage 2 of round $\widetilde{\mathtt{t}}$ (in order for ({\bf \ref{routingRules2}.38}) to be reached).  Since $\mathsf{OUT}$ had normal status at the start of Stage 2 of round $\widetilde{\mathtt{t}}$, but by assumption $\mathsf{OUT}$ had problem status at the start of Stage 2 of round $\mathtt{t}'$, let $\widehat{\mathtt{t}}$ denote the first round such that $\widetilde{\mathtt{t}} < \widehat{\mathtt{t}} \leq \mathtt{t}'$ and such that $\mathsf{OUT}$ had problem status at the start of Stage 2 of $\widehat{\mathtt{t}}$.  Since the only place $\mathsf{OUT}$ switches status from normal to problem is on ({\bf \ref{routingRules2}.28}), this line must have been reached in round $\widehat{\mathtt{t}}$.  In particular, this implies that ({\bf \ref{routingRules2}.25}) was satisfied in round $\widehat{\mathtt{t}}$, which in turn implies that {\it Send Packet} was called in round $\widehat{\mathtt{t}}-1$ (since $d$ is re-set to zero at the end of Stage 1 of every round as on ({\bf \ref{routingRules2}.26})).  But this is a contradiction, since $\widetilde{\mathtt{t}} \leq \widehat{\mathtt{t}}-1 < \mathtt{t}'$, and so $p=\tilde{p}$ was sent in a round before $\mathtt{t}'$, contradicting the choice of $\mathtt{t}'$.

For Statement 2, since $B$ sent $p$ in round $\mathtt{t}'$ and $\mathsf{OUT}$ had normal status at the Start of Stage 2 of this round, we have that $H_{OUT} > H_{IN}$ on line ({\bf \ref{routingRules}.16}) (so that {\it Send Packet} could be called).  Since $sb_{OUT}, H_{OUT}$, and $H_{IN}$ cannot change between ({\bf \ref{routingRules}.15}) and ({\bf \ref{routingRules}.17}) of any round, $FR$ is set to $\mathtt{t}'$ on ({\bf \ref{routingRules2}.38}) of round $\mathtt{t}'$.  Also, $d=1$ after the call to {\it Send Packet} of round $\mathtt{t}'$ ({\bf \ref{routingRules2}.40}).  Notice that neither $FR$ nor $d$ can change value between the call to {\it Create Flagged Packet} in round $\mathtt{t}'$ and the call to {\it Reset Outgoing Variables} in the following round.  Therefore, if $B$ does not receive $RR$ or if $RR < FR=\mathtt{t}'$ when {\it Reset Outgoing Variables} is called in round $\mathtt{t}'+1$, then ({\bf \ref{routingRules2}.25}) and ({\bf \ref{routingRules2}.27}) will be satisfied, and hence $\mathsf{OUT}$ will enter problem status in round $\mathtt{t}'+1$.  That $\mathsf{OUT}$ remains in problem status until the end of the transmission or until the round in which $RR$ is received on ({\bf \ref{routingRules}.06}) with $RR \geq \mathtt{t}'$ now follows from the following subclaim. (Warning: the following subclaim switches notation.  In particular, to apply the subclaim here, replace $(\mathtt{t}, \mathtt{t}_0)$ of the subclaim with $(\mathtt{t}'+1, \mathtt{t}')$.)
	\begin{enumerate}
	\item[] {\bf Subclaim.}  Suppose that at the start of Stage 2 of some round $\mathtt{t}$, an outgoing buffer $\mathsf{OUT}$ has problem status and $\bot \neq FR = \mathtt{t}_0$.  Then $\mathsf{OUT}$ will remain in problem status until the end of the transmission or until the round in which $RR$ is received on ({\bf \ref{routingRules}.06}) with $RR \geq \mathtt{t}_0$.

	\item[] {\it Proof.}  $\mathsf{OUT}$ will certainly return to normal status by the end of the transmission ({\bf \ref{routingRules2}.62}), in which case there is nothing to show.  So suppose that $\mathtt{t}' > \mathtt{t}$ is such that $\mathsf{OUT}$ first returns to normal status (in the same transmission as $\mathtt{t}$) as on ({\bf \ref{routingRules2}.33}) of round $\mathtt{t}'$.  In particular, lines ({\bf \ref{routingRules2}.29}) and ({\bf \ref{routingRules2}.30}) were both satisfied, so $\mathsf{OUT}$ must have received $RR$ on ({\bf \ref{routingRules}.06}) earlier in round $\mathtt{t}'$, with $RR \geq FR$.  If the value of $FR$ on line ({\bf \ref{routingRules2}.30}) equals $\mathtt{t}_0$, then the proof is complete.  So we show by contradiction that this must be the case.

	Assume for the sake of contradiction that $FR \neq \mathtt{t}_0$ on line ({\bf \ref{routingRules2}.30}) of round $\mathtt{t}'$.  Since $FR$ was equal to $\mathtt{t}_0$ at the start of Stage 2 of round $\mathtt{t}$ by hypothesis, $FR$ must have changed at some point between Stage 2 of round $\mathtt{t}$ and round $\mathtt{t}'$.  Notice that between these rounds, $FR$ can only change values on lines ({\bf \ref{routingRules2}.33}) and ({\bf \ref{routingRules2}.38}).  Let $\mathtt{t}''$ denote the first round between $\mathtt{t}$ and $\mathtt{t}'$ such that one of these two lines is reached.  Note that $\mathtt{t}'' > \mathtt{t}$, since ({\bf \ref{routingRules2}.33}) already passed by the start of Stage 2 (which is when the subclaim asserts $FR = \mathtt{t}_0$), and ({\bf \ref{routingRules2}.38}) cannot be reached in round $\mathtt{t}$ since $\mathsf{OUT}$ has problem status when ({\bf \ref{routingRules2}.37}) of round $\mathtt{t}$ is reached (by hypothesis).
		\begin{itemize}
		\item Suppose $FR$ is {\it first} changed from $FR=\mathtt{t}_0$ on ({\bf \ref{routingRules2}.33}) of round $\mathtt{t}''$.  First note that because ({\bf \ref{routingRules2}.33}) is the {\it first} time $FR$ changes its value from $\mathtt{t}_0$, it must be the case that $FR$ was still equal to $\mathtt{t}_0$ on ({\bf \ref{routingRules2}.30}) earlier in round $\mathtt{t}''$.  Also, since ({\bf \ref{routingRules2}.33}) is reached in round $\mathtt{t}''$, $\mathsf{OUT}$ returns to normal status.  Since $\mathtt{t}'$ was defined to be the first round after $\mathtt{t}$ for which this happens, we must have that $\mathtt{t}'' \geq \mathtt{t}'$.  But by construction $\mathtt{t}'' \leq \mathtt{t}'$, so we must have that $\mathtt{t}'' = \mathtt{t}'$.  However, this is a contradiction, because our assumption is that $FR \neq \mathtt{t}_0$ on line ({\bf \ref{routingRules2}.30}) of round $\mathtt{t}'=\mathtt{t}''$, but as noted in the second sentence of this paragraph, we are in the case that $FR = \mathtt{t}_0$ on line ({\bf \ref{routingRules2}.30}) of round $\mathtt{t}''$.

		\item Suppose $FR$ is {\it first} changed from $FR=\mathtt{t}_0$ on ({\bf \ref{routingRules2}.38}) of round $\mathtt{t}''$.  Then ({\bf \ref{routingRules2}.37}) must have been satisfied, and thus $\mathsf{OUT}$ had normal status when {\it Create Flagged Packet} was called in round $\mathtt{t}''$.  Since $\mathsf{OUT}$ had problem status at the start of Stage 2 of round $\mathtt{t}$ (by hypothesis), the status must have switched to normal at some point between $\mathtt{t}$ and $\mathtt{t}''$, which can only happen on ({\bf \ref{routingRules2}.33}).  But if ({\bf \ref{routingRules2}.33}) is reached, then $FR$ will be set to $\bot$ on this line, which contradicts the fact that $FR$ was first changed from $FR=\mathtt{t}_0$ on ({\bf \ref{routingRules2}.38}) of round $\mathtt{t}''$.
		\end{itemize}
	This completes the proof of the subclaim.
	\end{enumerate}
\noindent For the third Statement, first note that $\mathsf{OUT}[H_{FP}] = p$ as of line ({\bf \ref{routingRules2}.38}) of round $\mathtt{t}'$.  This is the case since $sb_{OUT}=0$ on line ({\bf \ref{routingRules}.12}) (by Statement 1 of this claim), and then the fact that {\it Send Packet} is called in round $\mathtt{t}'$ means $H_{OUT} > H_{IN}$ on ({\bf \ref{routingRules}.16}), and therefore since none of these values change between ({\bf \ref{routingRules}.12}) and ({\bf \ref{routingRules}.16}), ({\bf \ref{routingRules2}.37}) will be satisfied in round $\mathtt{t}'$.  Therefore, we will track all changes to $\mathsf{OUT}$ and $H_{FP}$ from Stage 2 of round $\mathtt{t}'$ through the time $p$ is deleted from $\mathsf{OUT}$ as on ({\bf \ref{routingRules2}.32-33}) of some later round\footnote{Or through the end of the transmission, whichever occurs first.}, and show that none of these changes will alter the fact that $\mathsf{OUT}[H_{FP}] = p$.  Notice that (before the end of the transmission) $H_{FP}$ only changes value on lines ({\bf \ref{routingRules2}.33}), ({\bf \ref{routingRules2}.35}), and ({\bf \ref{routingRules2}.38}); while $\mathsf{OUT}$ only changes values on lines ({\bf \ref{routingRules2}.32}), ({\bf \ref{routingRules2}.35}), and ({\bf \ref{reShuffleRules}.89-90}).  Clearly the changes to each value on ({\bf \ref{routingRules2}.35}) will preserve $\mathsf{OUT}[H_{FP}] = p$, so it is enough to check the other changes.  Notice that ({\bf \ref{routingRules2}.32}) is reached if and only if ({\bf \ref{routingRules2}.33}) is reached, which by Statement 2 of this claim does not happen until $\mathsf{OUT}$ gets confirmation of receipt that $p$ was successfully received by $B$'s neighbor, and therefore these changes also do not threaten the validity of Statement 3.  The change to $H_{FP}$ as on ({\bf \ref{routingRules2}.38}) can only occur if ({\bf \ref{routingRules2}.37}) is satisfied, i.e.\ only if $\mathsf{OUT}$ has normal status, and thus again Statement 2 of this claim says this cannot happen until $\mathsf{OUT}$ gets confirmation of receipt that $p$ was successfully received by $B$'s neighbor.  Finally, lines ({\bf \ref{reShuffleRules}.89-90}) will preserve $\mathsf{OUT}[H_{FP}] = p$ by Statement 11 of Lemma \ref{pseudo}.

That $FR = \mathtt{t}'$ from ({\bf \ref{routingRules2}.38}) of $\mathtt{t}'$ through the time $B$ gets confirmation of receipt for $p$ was proven in the subclaim above.  Also, $\tilde{p}$ can only change on ({\bf \ref{routingRules2}.33}) or ({\bf \ref{routingRules2}.38}), which we already proved (in the proof of the subclaim above) are not reached.
\end{proof}
\begin{cor} \label{packetProliferationA2} At any time, an outgoing buffer has at most one flagged packet.\end{cor}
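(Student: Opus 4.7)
The plan is essentially to read this off as a direct corollary of Claim \ref{obsBelowHere}, exploiting the fact that each outgoing buffer tracks its (unique possible) flagged packet via a single height variable $H_{FP}$. First I would observe that by the setup code ({\bf \ref{setupCode}.14}), an outgoing buffer maintains exactly one $H_{FP}$ field, which is either $\bot$ (no flagged packet present) or a value in $[1..2n]$ locating a unique slot in $\mathsf{OUT}$; by convention a packet is flagged precisely when $H_{FP}\neq\bot$ and it sits at $\mathsf{OUT}[H_{FP}]$. So the corollary reduces to showing that this single-variable tracking is well-defined, i.e.\ that we never install a fresh flag on top of an existing one without first clearing the old flag.

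Next I would note that $H_{FP}$ is only set to a non-$\bot$ value on line ({\bf \ref{routingRules2}.38}) inside \emph{Create Flagged Packet}, and reaching that line requires the conditional at ({\bf \ref{routingRules2}.37}) to hold, which in particular demands $sb=0$. Assume for contradiction that at some moment two distinct flagged packets coexist in the same outgoing buffer. Then after the first flag is installed in some round $\mathtt{t}'$ via ({\bf \ref{routingRules2}.38}), a second invocation of ({\bf \ref{routingRules2}.38}) must occur before the first flag is cleared on ({\bf \ref{routingRules2}.32-33}) (the only lines resetting $H_{FP}$ back to $\bot$ mid-transmission, modulo the end-of-transmission cleanup on ({\bf \ref{routingRules2}.62})).

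The key leverage is Claim \ref{obsBelowHere}. Statement 2 of that claim ensures that from round $\mathtt{t}'+1$ until confirmation of receipt arrives, the outgoing status bit satisfies $sb=1$, so the guard at ({\bf \ref{routingRules2}.37}) fails and no new flag can be installed during this interval. Statement 3 then guarantees that $\tilde{p}$, $FR$, and $\mathsf{OUT}[H_{FP}]$ all remain fixed at their round-$\mathtt{t}'$ values throughout the same interval, and the flag is only removed on ({\bf \ref{routingRules2}.32-33}) after confirmation of receipt, at which point $H_{FP}$ is reset to $\bot$ on ({\bf \ref{routingRules2}.33}) before any subsequent call to \emph{Create Flagged Packet} could succeed. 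This contradicts the supposed simultaneous existence of two flagged packets.

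I do not expect a serious obstacle. The only mild subtlety is handling the ``boundary'' round $\mathtt{t}'$ itself (between flag installation and the transition of $sb$ to $1$): within round $\mathtt{t}'$, once ({\bf \ref{routingRules2}.38}) has executed, no further call to \emph{Create Flagged Packet} occurs in the same round for the same buffer (each buffer calls it once per round via ({\bf \ref{routingRules}.15})), and Statement 1 of Claim \ref{obsBelowHere} confirms that whenever \emph{Create Flagged Packet} is invoked on a packet freshly being sent, the buffer has normal status and no stale flag lingering. Thus the single-variable tracking of $H_{FP}$ is consistent across all rounds, establishing the corollary.
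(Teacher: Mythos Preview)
Your proposal is correct and takes essentially the same approach as the paper: both derive the corollary from Claim~\ref{obsBelowHere}, with the paper simply citing Statement~3 (which already asserts that once a packet is flagged, the buffer ``will not have any other flagged packets'' until confirmation of receipt) as an immediate consequence. Your argument is more detailed than necessary---the single-variable $H_{FP}$ observation and the contradiction setup are sound but redundant once you have Statement~3 in hand.
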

\begin{proof} This follows immediately from Statement 3 of Lemma \ref{obsBelowHere}.
\end{proof}
\begin{claim} \label{last} For any outgoing buffer $\mathsf{OUT}$, if at any time its Flagged Round value $FR$ is equal to $\mathtt{t}$, then $\mathsf{OUT}$ necessarily called {\it Send Packet} on line ({\bf \ref{routingRules}.17}) of round $\mathtt{t}$.
\end{claim}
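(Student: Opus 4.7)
The plan is to trace through the pseudo-code and verify that every path which ends with $FR$ holding the value $\mathtt{t}$ must pass through line ({\bf \ref{routingRules}.17}) in round $\mathtt{t}$. The key observation is that the variable $FR$ for an outgoing buffer is only modified in three places: lines ({\bf \ref{routingRules2}.33}), ({\bf \ref{routingRules2}.38}), and ({\bf \ref{routingRules2}.62}), and of these only ({\bf \ref{routingRules2}.38}) assigns $FR$ a non-$\bot$ value, namely the current round index. Hence if at any time we observe $FR = \mathtt{t}$, the most recent non-$\bot$ assignment to $FR$ was via ({\bf \ref{routingRules2}.38}) executed while the round index equaled $\mathtt{t}$, so that line must have fired in round $\mathtt{t}$.

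Having pinned ({\bf \ref{routingRules2}.38}) to round $\mathtt{t}$, I would next show that the companion call to {\it Send Packet} on line ({\bf \ref{routingRules}.17}) of the same round must also occur. Reaching ({\bf \ref{routingRules2}.38}) requires the guard on ({\bf \ref{routingRules2}.37}) to hold at that moment, i.e.\ $sb = 0$ and $H > H_{IN}$ (where $H$ is just $H_{OUT}$ for this outgoing buffer). Line ({\bf \ref{routingRules2}.38}) itself touches only $\tilde{p}$, $H_{FP}$, and $FR$, leaving $sb$, $H_{OUT}$, and $H_{IN}$ unchanged. After {\it Create Flagged Packet} returns, control goes directly to ({\bf \ref{routingRules}.16}), and no intervening code can alter $sb$, $H_{OUT}$, or $H_{IN}$ between these two points in the same iteration of the {\bf for} loop on ({\bf \ref{routingRules}.13}).

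Therefore the second disjunct of the guard on ({\bf \ref{routingRules}.16}), namely $sb = 0 \land H_{OUT} > H_{IN}$, is satisfied, which forces the execution of {\it Send Packet} on line ({\bf \ref{routingRules}.17}) in round $\mathtt{t}$. I would conclude by noting that this argument is essentially a syntactic walk through the pseudocode, with the only mild subtlety being the claim that $sb$, $H_{OUT}$, and $H_{IN}$ cannot change between ({\bf \ref{routingRules2}.38}) and ({\bf \ref{routingRules}.16}); this is immediate because those lines occur consecutively within the Stage 2 branch for a single outgoing edge, with no other reads or writes to these variables sandwiched in between. There is no real obstacle beyond this careful reading, so the proof will be short.
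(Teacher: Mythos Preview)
Your proposal is correct and follows essentially the same argument as the paper: the only place $FR$ is assigned a non-$\bot$ value is line ({\bf \ref{routingRules2}.38}), reaching that line means the guard on ({\bf \ref{routingRules2}.37}) held, and since $sb$, $H$, and $H_{IN}$ are unchanged between the call to {\it Create Flagged Packet} on ({\bf \ref{routingRules}.15}) and the test on ({\bf \ref{routingRules}.16}), the latter is satisfied and {\it Send Packet} is invoked. Your write-up is slightly more explicit than the paper's (you enumerate all three lines that touch $FR$ rather than just noting where it becomes non-$\bot$), but the logic is identical.
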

\begin{proof} Suppose that at some point in time, $FR$ is set to $\mathtt{t}$.  Notice that the only place $FR$ assumes non-$\bot$ values is on ({\bf \ref{routingRules2}.38}), and therefore line ({\bf \ref{routingRules2}.37}) must have been satisfied in round $\mathtt{t}$.  Since the values for $sb_{OUT}, H_{OUT}$, and $H_{IN}$ cannot change between lines ({\bf \ref{routingRules}.15}) and ({\bf \ref{routingRules}.16}), the statement on ({\bf \ref{routingRules}.16}) will also be satisfied in round $\mathtt{t}$, and consequently {\it Send Packet} will be reached in $\mathtt{t}$.
\end{proof}
\begin{lemma} \label{subclaim6} Suppose that $sb_{OUT}=1$ when line ({\bf \ref{routingRules2}.47}) is reached in round $\mathtt{t}$ on an edge linking buffers $\mathsf{OUT}$ and $\mathsf{IN}$.  Further suppose that $\mathsf{IN}$ does receive the communication $(p,FR)$ from $\mathsf{OUT}$ on line ({\bf \ref{routingRules2}.43}) of $\mathtt{t}$.  Also, let $\mathtt{t}_0$ denote the round described by $FR$, let $h$ denote the height of the packet in $\mathsf{OUT}$ in round $\mathtt{t}_0$, and let $h'$ denote the height of $\mathsf{IN}$ at the start of round $\mathtt{t}_0$.  Then the following are true:
    \begin{enumerate}\setlength{\itemsep}{1pt} \setlength{\parskip}{0pt} \setlength{\parsep}{0pt}
    \item $\mathtt{t}_0$ is well-defined (i.e.\ $FR \neq \bot$ and $FR \leq \mathtt{t}$).

    \item $h > h'$.

    \item $\mathsf{OUT}$ sent $p$ to $\mathsf{IN}$ on line ({\bf \ref{routingRules2}.41}) of round $\mathtt{t}_0$.  Furthermore, the height of $p$ in $\mathsf{OUT}$ when it is sent on line ({\bf \ref{routingRules2}.41}) of round $\mathtt{t}$ is greater than or equal to $h$.

    \item If the condition statement on line ({\bf \ref{routingRules2}.51}) of round $\mathtt{t}$ is satisfied, then the value of $H_{GP}$ when this line is entered, which corresponds to the height in $\mathsf{IN}$ that $p$ assumes when it is inserted, satisfies: $\bot \neq H_{GP} \leq h'+1 \leq 2n$.

    \item If the condition statement on line ({\bf \ref{routingRules2}.51}) of round $\mathtt{t}$ is satisfied, then $H_{IN}$ was less than $2n$ at the start of all rounds between $\mathtt{t}_0$ and $\mathtt{t}$.
    \end{enumerate}
\end{lemma}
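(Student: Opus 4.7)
The plan is to use Claim \ref{obsBelowHere} to pin down the history of a flagged packet that OUT has been trying to send. The hypothesis $sb_{OUT}=1$ at line 47 of round $\mathtt{t}$ came from Stage 1 (line 11) where $FR>RR$ with $FR\ne\bot$ was received; combined with Corollary \ref{packetProliferationA2} and the fact that $FR$ can only change to $\bot$ on line 33 or to the current round on line 38, this forces OUT's $FR$ value sent on line 41 of $\mathtt{t}$ to agree with the Stage 1 value. Together with Claim \ref{HFPFR} and Claim \ref{last} this gives Part 1. For Part 3 the first half follows from Claim \ref{last} and Claim \ref{obsBelowHere} Statement 3, while for the height bound I would observe that $H_{FP}$ can only be altered on lines 33, 35 and 38, and that Statement 3 of Claim \ref{obsBelowHere} precludes 33 and 38 in the interval $[\mathtt{t}_0,\mathtt{t}]$, leaving only line 35, which monotonically increases $H_{FP}$.

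For Part 2, I would apply Claim \ref{obsBelowHere} Statement 1 to conclude that OUT had normal status at the start of Stage 2 of $\mathtt{t}_0$. Hence line 37 was passed: $sb=0$ and $H_{OUT}>H_{IN}$ at that moment. But OUT's local variable $H_{IN}$ there equals the height IN sent on line 9 in Stage 1 of $\mathtt{t}_0$, which is $h'$. Line 38 immediately sets $H_{FP}:=H_{OUT}=h$, so $h>h'$; this additionally yields $h'\le 2n-1$ for later use.

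I would handle Parts 4 and 5 together. The pivotal observation is that ``line 51 satisfied in $\mathtt{t}$'' means $RR<FR=\mathtt{t}_0$ at the start of Stage 2 of $\mathtt{t}$; since $RR$ only increases on line 53 (to the current round), line 53 has \emph{not} fired on this incoming buffer in any round of $[\mathtt{t}_0,\mathtt{t}-1]$. In each such round, either line 44 catches a Stage 1 failure (maintaining $H_{GP}$ on line 46) or Stage 1 is received with $sb_{OUT}$ set to $1$ by OUT's persistent flag (so line 47 passes); thus neither line 55 nor line 57 executes, so $H_{GP}$ is never reset to $\bot$ (and the transmission cannot have ended in between, since otherwise line 62 would have cleared the flag and contradicted $FR=\mathtt{t}_0$ in $\mathtt{t}$). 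In round $\mathtt{t}_0$ either line 46 or line 50 initializes $H_{GP}$ to $H+1\le h'+1$; thereafter lines 46 and 50 only reassign $H_{GP}:=H+1$ when $H_{GP}>H$ (a strict decrease) and re-shuffling only decreases $H_{GP}$ by Statement 12 of Lemma \ref{pseudo}, so the ceiling $H_{GP}\le h'+1$ persists through $\mathtt{t}$. In round $\mathtt{t}$, line 52 is reached but fails to fire because $H_{GP}\ne\bot$, giving Part 4; meanwhile Part 5 follows by induction on the round from Statement 13 of Lemma \ref{pseudo} together with the absence of line 53, starting from $h'\le 2n-1$.

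\textbf{Main obstacle.} The delicate part is the bookkeeping on $H_{GP}$: six code paths (lines 46, 50, 52, 55, 57, and re-shuffling on line 94) can touch it, and I must verify that under the hypotheses none of the ``reset-to-$\bot$'' paths fire while every ``set/update'' path respects the ceiling $h'+1$. Once the observation $RR<\mathtt{t}_0$ throughout $[\mathtt{t}_0,\mathtt{t}-1]$ is extracted from line 51's hypothesis, each case reduces to a straightforward pseudocode inspection.
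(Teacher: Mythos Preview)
Your proposal is correct and follows essentially the same approach as the paper, which organizes the identical argument into fourteen labeled subclaims (Subclaims 1--14): your Part~1 corresponds to Subclaims 1--3, Part~2 to Subclaims 4--5 and 7, Part~3 to Subclaims 5--6, and Parts~4--5 to Subclaims 8--14, all driven by the same pivotal observation that $RR<\mathtt{t}_0$ throughout $[\mathtt{t}_0,\mathtt{t}]$ forces line~53 never to fire. One small remark: your sentence ``thus neither line 55 nor line 57 executes'' needs one more beat---line~55 lies \emph{inside} the line~47 branch, so ruling it out requires the extra step (the paper's Subclaim~10) that if line~48 failed then line~51 would pass and line~53 would fire, contradicting your observation; you have all the ingredients for this, it just isn't written out.
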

\noindent {\it Proof of Lemma \ref{subclaim6}.} 
We make a series of Subclaims to prove the 5 statements of the Lemma.
    \begin{itemize}\setlength{\itemsep}{5pt} \setlength{\parskip}{0pt} \setlength{\parsep}{0pt}
    \item[] {\bf Subclaim 1.} The value of $FR$ that is sent on ({\bf \ref{routingRules2}.41}) of round $\mathtt{t}$ is not $\bot$.
    \item[]{\it Proof.} Since ({\bf \ref{routingRules2}.41}) is reached, {\it Send Packet} was called on ({\bf \ref{routingRules}.17}).  By Statement 2 of Claim \ref{stupid}, we have that $H_{FP} \geq 1$ when {\it Send Packet} is called, and in particular $H_{FP}\neq \bot$ on line ({\bf \ref{routingRules}.17}).  Since $H_{FP}$ cannot change between ({\bf \ref{routingRules}.17}) and ({\bf \ref{routingRules2}.41}), we have that $H_{FP} \neq \bot$ on ({\bf \ref{routingRules2}.41}), and hence $FR \neq \bot$ on this line (Claim \ref{HFPFR}).\vspace{.3cm}

    \item[]{\bf Subclaim 2.} $\mathtt{t}_0$ is well-defined (i.e.\ $\bot \neq \mathtt{t}_0 \leq \mathtt{t}$).
    \item[]{\it Proof.} By the definition of $\mathtt{t}_0$ and Subclaim 1, $\mathtt{t}_0 \neq \bot$.  Also, by looking at the three places that $FR$ changes values (({\bf \ref{routingRules2}.33}), ({\bf \ref{routingRules2}.38}), and ({\bf \ref{routingRules2}.62})), it is clear that $FR$ will always be less than or equal to the current round index.\vspace{.3cm}

    \item[]{\bf Subclaim 3.} $\mathtt{t} > \mathtt{t}_0$.
    \item[]{\it Proof.}  That $\mathtt{t} \geq \mathtt{t}_0$ is immediate ($FR$ is reset to $\bot$ at the start of every transmission ({\bf \ref{setupCode}.34}) and ({\bf \ref{routingRules2}.62}), after which time the $FR$ can never attain a value {\it bigger} than the current round ({\bf \ref{routingRules2}.38})).  Therefore, we only have to show $\mathtt{t} \neq \mathtt{t}_0$.  For the sake of contradiction, suppose $\mathtt{t} = \mathtt{t}_0$.  By hypothesis, $sb_{OUT}=1$ when line ({\bf \ref{routingRules2}.47}) of round $\mathtt{t} = \mathtt{t}_0$ is reached.  Notice that $sb_{OUT}$ is reset to 0 on ({\bf \ref{routingRules}.10}) of round $\mathtt{t} = \mathtt{t}_0$, so the only way it can be `1' on ({\bf \ref{routingRules2}.47}) later that round is if it is set to one on ({\bf \ref{routingRules}.11}).  This can only happen if $H_{OUT}=\bot$ or $FR > RR$.  Since ({\bf \ref{routingRules2}.47}) is reached, ({\bf \ref{routingRules2}.44}) must have failed, and since $H_{OUT}$ does not change values between the time it is received on ({\bf \ref{routingRules}.11}) and ({\bf \ref{routingRules2}.44}), we have that $H_{OUT} \neq \bot$ on ({\bf \ref{routingRules}.11}).  Therefore, we must have that $FR > RR$ on ({\bf \ref{routingRules}.11}) of round $\mathtt{t}= \mathtt{t}_0$.

    \vspace{4pt}\hspace{.6cm}Notice the value for $FR$ here comes from the value sent by $\mathsf{OUT}$ on ({\bf \ref{routingRules}.05}), and this happens {\it before} line ({\bf \ref{routingRules2}.38}) has been reached in round $\mathtt{t} = \mathtt{t}_0$.  Therefore, the value of $FR$ received on ({\bf \ref{routingRules}.11}) obeys $FR < \mathtt{t} = \mathtt{t}_0$ (as noted above, $FR$ can never attain a value {\it bigger} than the current round).  Since $RR < FR$, line ({\bf \ref{routingRules2}.30}) {\it cannot} have been satisfied since the time $FR$ was set to its current value (within a transmission, the values $RR$ assumes are strictly {\it increasing}, see ({\bf \ref{setupCode}.34}), ({\bf \ref{routingRules2}.53}), and ({\bf \ref{routingRules2}.64})).  Therefore, we may apply Claim \ref{last} and Claim \ref{obsBelowHere} to argue that $FR$ will not be changed on ({\bf \ref{routingRules2}.38}) of round $\mathtt{t} = \mathtt{t}_0$ (since $\mathsf{OUT}$ will have problem status), and consequently $FR$ will still be strictly smaller than $\mathtt{t} = \mathtt{t}_0$ when line ({\bf \ref{routingRules2}.41}) is reached of round $\mathtt{t}_0$.  This contradicts the definition of $\mathtt{t}_0$ as the value received on line ({\bf \ref{routingRules2}.43}) of round $\mathtt{t}$.\vspace{.3cm}

    \item[]{\bf Subclaim 4.} $\mathsf{OUT}$ had normal status at the start of Stage 2 of round $\mathtt{t}_0$.  For every round between Stage 2 of $\mathtt{t}_0+1$ through $\mathtt{t}-1$, $\mathsf{OUT}$ had problem status and $FR = \mathtt{t}_0$.
    \item[]{\it Proof.} By definition of $\mathtt{t}_0$, it equals the value of $FR$ that was received in round $\mathtt{t}$ on line ({\bf \ref{routingRules2}.43}), which in turn corresponds to the value of $FR$ that was sent on line ({\bf \ref{routingRules2}.41}).  Tracing the values of $FR$ backwards, we see that the only time/place $FR$ is set to a non-$\bot$ value (as we know it has by Subclaim 1) is on line ({\bf \ref{routingRules2}.38}), and this must have happened in round $\mathtt{t}_0$ since $FR = \mathtt{t}_0$ by definition of $\mathtt{t}_0$.  Therefore, in round $\mathtt{t}_0$, line ({\bf \ref{routingRules2}.38}) must have been reached when {\it Create Flagged Packet} was called on line ({\bf \ref{routingRules}.15}); so in particular $sb_{OUT}$ must have been zero on line ({\bf \ref{routingRules2}.37}) to have entered the conditional statement.  Since $sb_{OUT}$ cannot change between the start of Stage 2 and line ({\bf \ref{routingRules}.15}) (where {\it Create Flagged Packet} is called), it must have been zero at the start of Stage 2.  This proves the first part of the subclaim.  Now suppose there is a round $\mathtt{t}'$ between Stage 2 of $\mathtt{t}_0+1$ and $\mathtt{t}-1$ such that $sb_{OUT}=0$ at any time in that round (without loss of generality, let $\mathtt{t}'$ be the first such round).  Since $sb_{OUT}$ can only switch to zero on ({\bf \ref{routingRules2}.33}) inside the call to {\it Reset Outgoing Variables}, it must be that this line is reached in $\mathtt{t}'$, and hence $FR$ is also set to $\bot$ on this line.  Since $FR$ is only assigned non-$\bot$ values on ({\bf \ref{routingRules2}.38}), $FR$ can only assume values at least $\mathtt{t}' > \mathtt{t}_0$ after this point.  Thus, $FR$ will not ever be able to return to the value of $\mathtt{t}_0$, contradicting the fact that $FR = \mathtt{t}_0$ during round $\mathtt{t}$.  By the same reasoning, $FR$ can never change value from $\mathtt{t}_0$ between the rounds $\mathtt{t}_0$ and $\mathtt{t}$.

    \item[]{\bf Subclaim 5.} $\mathsf{OUT}$ attempted to send $p$ in round $\mathtt{t}_0$.
    \item[]{\it Proof.} By definition, $\mathtt{t}_0$ denotes the value of $FR$ during round $\mathtt{t}$.  Since $FR$ can only be set to $\mathtt{t}_0$ on ({\bf \ref{routingRules2}.38}) of round $\mathtt{t}_0$, this line must have been reached in $\mathtt{t}_0$.  In particular, line ({\bf \ref{routingRules2}.37}) was satisfied during the call to {\em Create Flagged Packet} of round $\mathtt{t}_0$, and hence $sb=0$ and $H > H_{IN}$ at that time.  Therefore, ({\bf \ref{routingRules}.16}) will be satisfied when it is reached in round $\mathtt{t}_0$, which implies {\em Send Packet} will be called on the following line.  The fact that it was the same packet $p$ that was sent in $\mathtt{t}_0$ as in $\mathtt{t}$ follows from Statement 3 of Lemma \ref{obsBelowHere}.\vspace{.3cm}

    \item[]{\bf Subclaim 6.} The height of $p$ in $\mathsf{OUT}$ when it is transferred in round $\mathtt{t}$ is greater than or equal to $h$.
    \item[]{\it Proof.} Subclaim 5 stated that $\mathsf{OUT}$ attempted to send $p$ in round $\mathtt{t}_0$, and Subclaim 4 stated that $\mathsf{OUT}$ had normal status at the start of $\mathtt{t}_0$.  Therefore, the packet which was sent in round $\mathtt{t}_0$ (which is $p$) was initialized inside the call to {\it Create Flagged Packet} on line ({\bf \ref{routingRules2}.38}).  By observing the code there, we see that $p$ is set to $\mathsf{OUT}[H]$, i.e.\ $p$ has height $H$ in round $\mathtt{t}_0$, and $H_{FP}$ is set to equal $H$ on this same line.  By Statement 3 of Claim \ref{obsBelowHere}, $p = \tilde{p}$ will remain the flagged packet through the start of round $\mathtt{t}$, and $\mathsf{OUT}[H_{FP}] = p$.  By Statement 11 of Lemma \ref{pseudo}, $H_{FP}$ will not change during any call to re-shuffle.  Indeed, since Subclaim 4 ensures that line ({\bf \ref{routingRules2}.38}) is never reached from $\mathtt{t}_0+1$ through the start of $\mathtt{t}$, the only place $H_{FP}$ can change value is on ({\bf \ref{routingRules2}.33}) or ({\bf \ref{routingRules2}.35}).  We know the former cannot happen between $\mathtt{t}_0+1$ and the start of $\mathtt{t}$, since this would imply $sb_{OUT}$ is re-set to zero on ({\bf \ref{routingRules2}.33}) of that round, contradicting Subclaim 4.  Therefore, $H_{FP}$ can only change values between $\mathtt{t}_0+1$ and the start of $\mathtt{t}$ as on ({\bf \ref{routingRules2}.35}), which can only {\it increase} $H_{FP}$.  Hence, from the time $H_{FP}$ is set to equal the height of $\mathsf{OUT}$ in round $\mathtt{t}_0$ as on ({\bf \ref{routingRules2}.38}) (which by definition is $h$), $H_{FP}$ can only increase through the start of round $\mathtt{t}$.\vspace{.3cm}

    \item[]{\bf Subclaim 7.} $h > h'$.
    \item[]{\it Proof.} This follows immediately from Subclaims 4 and 5 as follows.  Because $\mathsf{OUT}$ tried to send the packet in round $\mathtt{t}_0$ (Subclaim 5) and because $\mathsf{OUT}$ had normal status in this round (Subclaim 4), it must be that the conditional statement on line ({\bf \ref{routingRules}.16}) of round $\mathtt{t}_0$ was satisfied, and in particular that the expression $H > H_{IN}$ was true.  Since $h$ is defined to be the value of $H, H_{FP}$ as of line ({\bf \ref{routingRules2}.38}) of round $\mathtt{t}_0$ (Statement 6 of Lemma \ref{pseudo}), this subclaim will follow if $h'$ equals the value of $H_{IN}$ as of line ({\bf \ref{routingRules2}.38}) of round $\mathtt{t}_0$.  But this is true by Statement 5 of Lemma \ref{pseudo}, since the value of $H_{IN}$ on line ({\bf \ref{routingRules}.16}) comes from the value received on line ({\bf \ref{routingRules}.06}), which in turn corresponds to the value of $H_{IN}$ sent on line ({\bf \ref{routingRules}.09}).\vspace{.3cm}

    \item[]{\bf Subclaim 8.} If the conditional statement on line ({\bf \ref{routingRules2}.51}) is satisfied in round $\mathtt{t}$, then $\mathsf{OUT}$'s attempt to send $p$ in round $\mathtt{t}_0$ failed (i.e.\ $\mathsf{IN}$ did not store $p$ in $\mathtt{t}_0$), and furthermore $\mathsf{IN}$ did not store $p$ in any round between $\mathtt{t}_0$ and $\mathtt{t}$.
    \item[]{\it Proof.} We prove this by contradiction.  Suppose there is some round $\widetilde{\mathtt{t}} \in [\mathtt{t}_0 .. \mathtt{t}-1]$ in which $\mathsf{IN}$ stored $p$.  This would mean that line ({\bf \ref{routingRules2}.51}) was satisfied in round $\widetilde{\mathtt{t}}$, and in particular $RR$ is set to $\widetilde{\mathtt{t}} \geq \mathtt{t}_0$ on ({\bf \ref{routingRules2}.53}).  But as already noted in the proof of Subclaim 2, for the remainder of the transmission, $FR$ can never assume the value of a round {\it before} $\mathtt{t}_0$.  Similarly, once $RR$ changes to $\widetilde{\mathtt{t}} \geq \mathtt{t}_0 \geq FR$ on ({\bf \ref{routingRules2}.53}) of round $\widetilde{\mathtt{t}}$, it can never assume a smaller (non-$\bot$) value for the rest of the transmission ($RR$ can only change to a non-$\bot$ value on line ({\bf \ref{routingRules2}.53})).  But this contradicts the fact that $RR < FR$ on ({\bf \ref{routingRules2}.51}) of round $\mathtt{t}$.\vspace{.3cm}

    \item[]{\bf Subclaim 9.}  If the conditional statement on line ({\bf \ref{routingRules2}.51}) is satisfied in round $\mathtt{t}$, then $RR < \mathtt{t}_0$ between the start of $\mathtt{t}_0$ through line ({\bf \ref{routingRules2}.51}) of round $\mathtt{t}$.  In particular, lines ({\bf \ref{routingRules2}.47}) and ({\bf \ref{routingRules2}.51}) will be satisfied for any round between $\mathtt{t}_0$ and $\mathtt{t}$ for which they are reached.
    \item[]{\it Proof.} $RR$ is set to $-1$ at the start of any transmission (({\bf \ref{setupCode}.31}) and ({\bf \ref{routingRules2}.64})).  Since the only other place $RR$ changes value is ({\bf \ref{routingRules2}.53}), it is always the case that the value of $RR$ is less than or equal to the index of the current round.  Thus, $RR$ can only assume a value greater than (or equal to) $\mathtt{t}_0$ in a round {\it after} (or {\it during}) $\mathtt{t}_0$.  But this would mean there was some round between $\mathtt{t}_0$ and $\mathtt{t}-1$ (inclusive) such that ({\bf \ref{routingRules2}.53}) was reached, which contradicts Subclaim 8.  The fact that ({\bf \ref{routingRules2}.51}) will be satisfied whenever it is reached now follows immediately from Statement 3 of Claim \ref{obsBelowHere}, since in order to reach ({\bf \ref{routingRules2}.51}), line ({\bf \ref{routingRules2}.48}) must have failed, which means the communication on line ({\bf \ref{routingRules2}.43}) was received.  The fact that ({\bf \ref{routingRules2}.47}) will be satisfied whenever it is reached follows from the fact that $sb_{OUT}$ will always be set to one on ({\bf \ref{routingRules}.11}) of each round between $\mathtt{t}_0$ and $\mathtt{t}$ (the first part of this subclaim says $RR < \mathtt{t}_0$, and Subclaim 4 says that if $FR$ is received on ({\bf \ref{routingRules}.11}), then $FR = \mathtt{t}_0$).\vspace{.3cm}

    \item[]{\bf Subclaim 10.}  If the conditional statement on line ({\bf \ref{routingRules2}.51}) is satisfied in round $\mathtt{t}$, then there was no round between $\mathtt{t}_0+1$ and $\mathtt{t}-1$ (inclusive) in which $\mathsf{IN}$ received both $H_{OUT}$ and $p$.
    \item[]{\it Proof.} Suppose for the sake of contradiction that there is such a round, $\widetilde{\mathtt{t}}$.  Notice that line ({\bf \ref{routingRules2}.51}) of round $\widetilde{\mathtt{t}}$ will necessarily be reached (since the conditional statement of line ({\bf \ref{routingRules2}.44}) will fail by assumption, ({\bf \ref{routingRules2}.47}) will be satisfied by Subclaim 4, and ({\bf \ref{routingRules2}.48}) will fail by assumption).  However, line ({\bf \ref{routingRules2}.53}) cannot be reached in round $\widetilde{\mathtt{t}}$ (Subclaim 8 above), and therefore the conditional statement on line ({\bf \ref{routingRules2}.51}) must fail.  This contradicts Subclaim 9.\vspace{.3cm}

    \item[]{\bf Subclaim 11.} If the conditional statement on line ({\bf \ref{routingRules2}.51}) is satisfied in round $\mathtt{t}$, then $\mathsf{IN}$ was in problem status at the end of round $\mathtt{t}_0$, and remained in problem status until line ({\bf \ref{routingRules2}.53}) of round $\mathtt{t}$.
    \item[]{\it Proof.} We first show that $sb_{IN}$ will be set to one on line ({\bf \ref{routingRules2}.45}) or ({\bf \ref{routingRules2}.49}) of round $\mathtt{t}_0$.  To see this, we note that if ({\bf \ref{routingRules2}.44}) fails in round $\mathtt{t}_0$, then necessarily ({\bf \ref{routingRules2}.47}) and ({\bf \ref{routingRules2}.48}) will both be satisfied.  Afterall, ({\bf \ref{routingRules2}.47}) is satisfied (Subclaim 7), and if ({\bf \ref{routingRules2}.48}) {\it failed}, then ({\bf \ref{routingRules2}.51}) would be reached and subsequently satisfied (Subclaim 9), which would contradict Subclaim 8.  For every round between $\mathtt{t}_0 +1$ and $\mathtt{t}$, we will show that either the conditional statement on line ({\bf \ref{routingRules2}.44}) will be satisfied, or the conditional statements on lines ({\bf \ref{routingRules2}.47}) and ({\bf \ref{routingRules2}.48}) will both be satisfied, and hence $sb_{IN}$ can never be reset to zero since lines ({\bf \ref{routingRules2}.53}), ({\bf \ref{routingRules2}.55}), and ({\bf \ref{routingRules2}.57}) will never be reached.  To see this, let $\mathtt{t}' \in [\mathtt{t}_0+1 .. \mathtt{t}-1]$.  If ({\bf \ref{routingRules2}.44}) is satisfied for $\mathtt{t}'$, then we are done.  So assume ({\bf \ref{routingRules2}.44}) is {\it not} satisfied for $\mathtt{t}'$, and hence $\mathsf{IN}$ did {\it not} receive the communication on ({\bf \ref{routingRules2}.43}) (Subclaim 10).  This means ({\bf \ref{routingRules2}.48}) will be satisfied.  The fact that ({\bf \ref{routingRules2}.47}) is also satisfied follows from Subclaim 9.

    \item[]{\bf Subclaim 12.}  If the conditional statement on line ({\bf \ref{routingRules2}.51}) is satisfied in round $\mathtt{t}$, then between the end of round $\mathtt{t}_0$ and the time {\it Receive Packet} is called in round $\mathtt{t}$, we have that $H_{GP} \neq \bot$ and $H_{GP} \leq h'+1 \leq 2n$.
    \item[]{\it Proof.} As in the proof of Subclaim 11, either line ({\bf \ref{routingRules2}.46}) or ({\bf \ref{routingRules2}.50}) will be reached in round $\mathtt{t}_0$ (since either line ({\bf \ref{routingRules2}.45}) or ({\bf \ref{routingRules2}.49}) is reached).  The value of $H_{IN}$ at the start of round $\mathtt{t}_0$ is $h'$ by definition.  Since $h' < h \leq 2n$ (the first inequality is Subclaim 7, the second is Statements 6 and 9 of Lemma \ref{pseudo}), and since $H_{IN}$ cannot change value between the start of $\mathtt{t}_0$ and the time {\it Receive Packet} is called, we have that the value of $H_{IN} < 2n$ when either line ({\bf \ref{routingRules2}.46}) or ({\bf \ref{routingRules2}.50}) is reached.  Therefore, these lines guarantee that $\bot \neq H_{GP} \leq h'+1\leq 2n$ after these lines.  After this, there are five places $H_{GP}$ can change its value: ({\bf \ref{routingRules2}.46}), ({\bf \ref{routingRules2}.50}), ({\bf \ref{routingRules2}.53}), ({\bf \ref{routingRules2}.55}), and ({\bf \ref{routingRules2}.57}).  As in the proof of Subclaim 11, lines ({\bf \ref{routingRules2}.55}) and ({\bf \ref{routingRules2}.57}) will not be reached at any point between $\mathtt{t}_0$ and $\mathtt{t}$, nor will line ({\bf \ref{routingRules2}.53}) by Subclaim 8.  The other two lines that change $H_{GP}$ can only {\it decrease} it (but they cannot set $H_{GP}$ to $\bot$).\vspace{.3cm}

    \item[]{\bf Subclaim 13.} If the condition statement on line ({\bf \ref{routingRules2}.51}) of round $\mathtt{t}$ is satisfied, then the value of $H_{GP}$ when this line is entered, which corresponds to the height in $\mathsf{IN}$ that $p$ assumes when it is inserted, satisfies: $H_{GP} \neq \bot$ and $H_{GP} \leq h'+1 \leq 2n$.
    \item[]{\it Proof.} This follows immediately from Subclaim 12 since $p$ is inserted into $\mathsf{IN}$ at height $H_{GP}$ ({\bf \ref{routingRules2}.53}).\vspace{.3cm}

    \item[]{\bf Subclaim 14.} If the condition statement on line ({\bf \ref{routingRules2}.51}) of round $\mathtt{t}$ is satisfied, then $H_{IN}$ was less than $2n$ at the start of all rounds between $\mathtt{t}_0$ and $\mathtt{t}$.
    \item[]{\it Proof.} Subclaim 12 implies that $h'<2n$ (so $H_{IN}$ had height strictly smaller than $2n$ at the start of round $\mathtt{t}_0$).  Searching through the pseudo-code, we see that $H_{IN}$ is only modified on lines ({\bf \ref{routingRules2}.53}), and during Re-Shuffling ({\bf \ref{reShuffleRules}.91-92}).  Between rounds $\mathtt{t}_0$ and $\mathtt{t}$, line ({\bf \ref{routingRules2}.53}) is never reached (Subclaim 8), and hence all changes to $H_{IN}$ must come from Re-Shuffling.  But because $H_{IN}$ was less than $2n$ when it entered the Re-Shuffle phase in round $\mathtt{t}_0$, Statement 13 of Lemma \ref{pseudo} guarantees that $H_{IN}$ will still be less than $2n$ at the start of round $\mathtt{t}$.\vspace{.3cm}
    \end{itemize}
All Statements of the Lemma have now been proven.\hspace*{\fill} \hspace*{-12pt}$\scriptstyle{\blacksquare}$
\begin{claim} \label{packetHeight2} Every packet is inserted into one of the sender's outgoing buffers at some initial height.  When (a copy of) the packet goes between any two buffers $B_1 \neq B_2$ (either across an edge or locally during re-shuffling), its height in $B_2$ is less than or equal to the height it had in $B_1$.  If $B_1 = B_2$, the statement remains true EXCEPT for on line ({\bf \ref{routingRules2}.35}).\end{claim}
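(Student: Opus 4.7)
The plan is to enumerate every line of pseudo-code at which (an instance of) a packet is moved, and verify in each case that the receiving slot is at a height no greater than the sending slot. By inspection of the pseudo-code there are only four mechanisms that relocate a packet: (i) the sender placing a packet into one of its outgoing buffers via \emph{Fill Packets} on line ({\bf \ref{reShuffleRules}.96}), (ii) a cross-edge transfer, in which line ({\bf \ref{routingRules2}.41}) sends $\tilde p$ and line ({\bf \ref{routingRules2}.53}) stores the received $p$ at slot $H_{GP}$, (iii) an intra-node re-shuffle via \emph{Shuffle Packet} on lines ({\bf \ref{reShuffleRules}.89--90}), and (iv) intra-buffer shifts produced by \emph{Fill Gap} on lines ({\bf \ref{routingRules2}.32, 55, 57, 61, 64}) and the ghost-slide on line ({\bf \ref{reShuffleRules}.93--94}), and the swap of the flagged packet on line ({\bf \ref{routingRules2}.35}). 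Case (i) defines the ``initial height'' referenced in the statement, and case (iv) is internal to a single buffer, where \emph{Fill Gap} only slides packets downward by one slot and the ghost-slide merely moves the ghost marker, except for line ({\bf \ref{routingRules2}.35}), which is exactly the stated exception. So the real content lies in cases (ii) and (iii).

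For case (iii), the re-shuffle picks a fullest buffer $B_F$ of current (adjusted) height $M$ and an emptiest buffer $B_T$ of current (adjusted) height $m$ and, when the condition on ({\bf \ref{reShuffleRules}.74}) is met, moves the packet at slot $M$ of $B_F$ to slot $m+1$ of $B_T$. Thus the new height $m+1$ is at most $M$, with equality only when $M-m=1$, which the Re-Shuffle Rules permit only in the mixed-type case where $B_F$ is incoming and $B_T$ is outgoing; in all other cases $M-m\geq 2$ so the height strictly decreases. Since $B_F\neq B_T$, this establishes the claim for case (iii); I would also remark that when $B_F$ is an outgoing buffer with a flagged packet at the top, Statement~11 of Lemma~\ref{pseudo} guarantees that the packet actually shuffled sits at a height no higher than the un-adjusted $H_{B_F}$, which is enough.

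For case (ii) — cross-edge transfers — the packet $p$ sent on ({\bf \ref{routingRules2}.41}) occupies slot $H_{FP}$ of the sending outgoing buffer $\mathsf{OUT}$, and if it is accepted on ({\bf \ref{routingRules2}.53}) it is placed at slot $H_{GP}$ of the receiving incoming buffer $\mathsf{IN}$. I would split into two subcases. If the originating outgoing buffer had \emph{normal} status when \emph{Send Packet} was called, then by Claim~\ref{stupid} and the conditional on ({\bf \ref{routingRules}.16}) we have $H_{FP}=H_{OUT}>H_{IN}$ at the moment of sending, and ({\bf \ref{routingRules2}.46}) (or ({\bf \ref{routingRules2}.50}) with $H_{GP}=\bot$ triggering) sets $H_{GP}=H_{IN}+1\leq H_{OUT}=H_{FP}$. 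If instead the outgoing buffer had \emph{problem} status, then Lemma~\ref{subclaim6} applies: its Statement~3 says the height of $p$ in $\mathsf{OUT}$ at the moment of sending is $\geq h$ (its height in the original sending round $\mathtt t_0$), while its Statement~4 gives $H_{GP}\leq h'+1\leq h$. Combining, $H_{GP}\leq h\leq H_{FP}$, as required.

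The main obstacle is the problem-status subcase of (ii): the packet may have been sitting flagged across many rounds of reshuffling and of failed retransmissions, so one needs Lemma~\ref{subclaim6} to tie the receiving height back to the heights $h,h'$ prevailing in the original flagging round $\mathtt t_0$, together with the monotonicity statements ($H_{FP}$ only increases between $\mathtt t_0$ and the accepting round, and $H_{GP}$ only decreases once set) that are recorded inside the proof of that lemma. Everything else is a routine case-check against the pseudo-code invariants of Lemma~\ref{pseudo}.
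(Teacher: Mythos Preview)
Your case enumeration and your treatment of cross-edge transfers (case ii) are essentially what the paper does, modulo some imprecise line citations (e.g.\ the packet in the normal-status subcase is stored at the slot set on ({\bf \ref{routingRules2}.52}), not on ({\bf \ref{routingRules2}.46})/({\bf \ref{routingRules2}.50}), which lie in the error branches; and your split on the \emph{sender's} status bit rather than the receiver's $sb_{OUT}$ requires a short justification that these align when ({\bf \ref{routingRules2}.53}) is actually reached). Those are cosmetic.

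There is, however, a genuine gap in your re-shuffle case (iii). You write that the packet moves from slot $M$ to slot $m+1$ and conclude $m+1\le M$ ``since the condition on ({\bf \ref{reShuffleRules}.74}) is met.'' But the condition on ({\bf \ref{reShuffleRules}.74}) is tested on the \emph{unadjusted} heights, whereas the slots used in {\bf \em Shuffle Packet} on ({\bf \ref{reShuffleRules}.89--90}) are the \emph{adjusted} $M,m$ coming out of {\bf \em Adjust Heights}. Line ({\bf \ref{reShuffleRules}.81}) can decrease $M$ by one (when $B_F$ is outgoing with the flagged packet on top) and line ({\bf \ref{reShuffleRules}.87}) can increase $m$ by one (when $B_T$ is incoming with a ghost). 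If both fired with unadjusted $M-m=2$, the packet would move from height $M-1$ to height $m+2=M$, violating the claim. Your closing remark about Statement~11 of Lemma~\ref{pseudo} addresses only the source side and does not rule this out.

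What is missing is exactly the argument the paper supplies: ({\bf \ref{reShuffleRules}.81}) and ({\bf \ref{reShuffleRules}.87}) cannot fire in the same call, because that would require an outgoing buffer to sit at height $\ge 2$ above an incoming buffer during re-shuffling. By the balancing invariant (Claim~\ref{balancing}) the gap is $\le 1$ at the end of the previous round; routing can only shrink it (outgoing buffers lose packets, incoming buffers gain); and during the recursive re-shuffle, neither shuffling \emph{into} the larger outgoing buffer nor \emph{out of} the smaller incoming buffer is possible by the max/min selection rules on ({\bf \ref{reShuffleRules}.72--73}). You need this step to close case (iii).
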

\begin{proof} We separate the proof into cases, based on the nature of the packet movement.  The only times packets are accepted by a new buffer or re-shuffled within the same buffer occurs on lines ({\bf \ref{routingRules2}.32}), ({\bf \ref{routingRules2}.35}), ({\bf \ref{routingRules2}.53}), ({\bf \ref{routingRules2}.55}), ({\bf \ref{routingRules2}.57}), ({\bf \ref{routingRules2}.61}), ({\bf \ref{routingRules2}.64}), ({\bf \ref{reShuffleRules}.89-90}), and ({\bf \ref{reShuffleRules}.101-102}).  Of these, ({\bf \ref{routingRules2}.35}) is excluded from the claim, and the packet movement on lines ({\bf \ref{routingRules2}.32}), ({\bf \ref{routingRules2}.55}), ({\bf \ref{routingRules2}.57}), ({\bf \ref{routingRules2}.61}), ({\bf \ref{routingRules2}.64}), and ({\bf \ref{reShuffleRules}.101-102}) are all clearly strictly downwards.  It remains to consider lines ({\bf \ref{routingRules2}.53}) and ({\bf \ref{reShuffleRules}.89-90}).
    \begin{enumerate}\setlength{\itemsep}{6pt} \setlength{\parskip}{0pt} \setlength{\parsep}{0pt}
    \item[] \textsc{Case 1: The packet moved during Re-Shuffling as on ({\bf \ref{reShuffleRules}.89-90}).}  By investigating the code on these lines, we must show that $m+1 \leq M$.  This was certainly true as of line ({\bf \ref{reShuffleRules}.74}), but we need to make sure this didn't change when {\it Adjust Heights} was called.  The changes made to $M$ and $m$ on ({\bf \ref{reShuffleRules}.83}) and ({\bf \ref{reShuffleRules}.85}) will only serve to help the inequality $m+1 \leq M$, so we need only argue the cases for when ({\bf \ref{reShuffleRules}.81}) and/or ({\bf \ref{reShuffleRules}.87}) is reached.  Notice that if either line is reached, by ({\bf \ref{reShuffleRules}.74}) we must have (before adjusting $M$ and $m$) that $M-m \geq 2$, and therefore modifying {\it only} $M = M-1$ {\it or} $m=m+1$ won't threaten the inequality $m+1 \leq M$.  It remains to argue that both ({\bf \ref{reShuffleRules}.81}) {\bf and} ({\bf \ref{reShuffleRules}.87}) cannot happen simultaneously (i.e.\ cannot both happen within the same call to {\it Re-Shuffle}).  If both of these were to happen, then it must be that during this call to {\it Re-Shuffle}, there was an outgoing buffer $B_1$ that had height 2 or more higher than an incoming buffer $B_2$ (see lines ({\bf \ref{reShuffleRules}.72-74}) and ({\bf \ref{reShuffleRules}.80}) and ({\bf \ref{reShuffleRules}.86})).  We argue that this cannot ever happen.  By Claim \ref{balancing}, at the end of the previous round, we had that the height of $B_1$ was at most one bigger than the height of $B_2$.  During routing, $B_2$ can only get bigger and $B_1$ can only get smaller (({\bf \ref{routingRules2}.53}) and ({\bf \ref{routingRules2}.33}) are the only places these heights change).  Therefore, after Routing but before any Re-Shuffling, we have again that the height of $B_1$ was at most one bigger than the height of $B_2$.  Therefore, in order for $B_1$ to get at least 2 bigger than $B_2$, either a packet must be shuffled {\it into} $B_1$, or a packet must be shuffled {\it out of} $B_2$, and this must happen when $B_1$ is already one bigger than $B_2$.  But analyzing ({\bf \ref{reShuffleRules}.72}) and ({\bf \ref{reShuffleRules}.73}) shows that this can never happen.

    \item[] \textsc{Case 2: The packet moved during Routing as on ({\bf \ref{routingRules2}.53}).}  In order to reach ({\bf \ref{routingRules2}.53}), the conditional statements on lines ({\bf \ref{routingRules2}.47}), ({\bf \ref{routingRules2}.48}), and ({\bf \ref{routingRules2}.51}) all must be satisfied, so $p \neq \bot$, $RR < FR$, and either $sb_{OUT}=1$ or $H_{OUT} > H$ (or both).  We investigate each case separately:
	\begin{enumerate}
	\item[] \textsc{Case A: $sb_{OUT} = 1$ on line ({\bf \ref{routingRules2}.47}).}  Then Statements 2-4 of Lemma \ref{subclaim6} imply that the height of the packet in $B_1$ is greater than or equal to the height it will be stored into in $B_2$, as desired.

	\item[] \textsc{Case B: $sb_{OUT} = 0$ and $H_{OUT}>H_{IN}$ on line ({\bf \ref{routingRules2}.47}).}  For notational convenience, denote the current round (when the hypotheses of Case B hold) by $\mathtt{t}$.  First note that Statements 1 and 2 of Lemma \ref{pseudo} imply that the height the packet assumes in $B_2$ ($H_{GP}$) is less than or equal to $H_{IN}+1$.  Meanwhile, since $sb_{OUT}=0$ (it is set on ({\bf \ref{routingRules}.11}) of round $\mathtt{t}$), the value received for $H_{OUT}$ on ({\bf \ref{routingRules}.11}) is not $\bot$, and the value for $FR$ received on ({\bf \ref{routingRules}.11}) is either $\bot$ or satisfies $FR \leq RR$.  Notice that the case $FR \leq RR$ is not possible, since then ({\bf \ref{routingRules2}.53}) would not be reached (({\bf \ref{routingRules2}.51}) would fail).  Therefore, $FR =\bot$ but $H_{OUT} \neq \bot$, and so $B_2$ received the communication sent by $B_1$ on ({\bf \ref{routingRules}.05}) of round $\mathtt{t}$, which had the first of the two possible forms.  In particular, $H_{FP} = \bot$ at the outset of $\mathtt{t}$, and since $H_{FP}$ cannot change between the start of a round a line ({\bf \ref{routingRules2}.38}) of the previous round, we must have that ({\bf \ref{routingRules2}.37}) failed in round $\mathtt{t}-1$.  By this fact and Claim \ref{HFPFR}, $B_1$ had normal status when ({\bf \ref{routingRules}.16}) was reached in round $\mathtt{t}-1$, and this will not be able to change in the call to {\it Reset Outgoing Variables} of round $\mathtt{t}$ because $d=0$ ({\bf \ref{routingRules2}.25}) (since $d$ is reset to zero every round on ({\bf \ref{routingRules2}.26}), it can only have non-zero values between line ({\bf \ref{routingRules2}.40}) of one round and line ({\bf \ref{routingRules2}.26}) of the following round IF a packet was sent the earlier round.  However, as already noted this did not happen, as the fact that $\mathsf{OUT}$ had normal status and yet ({\bf \ref{routingRules2}.37}) failed in round $\mathtt{t}-1$ implies that ({\bf \ref{routingRules}.16}) will also fail in round $\mathtt{t}-1$).  Therefore, $B_1$ has normal status when {\it Create Flagged Packet} is called in round $\mathtt{t}$, and in particular, $H_{FP}$ is set to $H_{OUT}$ on ({\bf \ref{routingRules2}.38}), i.e.\ the flagged packet to be transferred during $\mathtt{t}$ has height $H_{OUT}$ in $B_1$.  Putting this all together, the packet has height $H_{OUT}$ in $B_1$ and assumes height $H_{GP}$ in $B_2$.  But as argued above, $H_{OUT} \geq H_{IN}+1 \geq H_{GP}$, as desired.
	\end{enumerate}
    \end{enumerate}
\vspace{-22pt}
\end{proof}
\begin{claim} \label{packetProliferation22} Before {\bf \em End of Transmission Adjustments} is called in any transmission $\mathtt{T}$ ({\bf \ref{routingRules2}.61}), any packet that was inserted into the network during transmission $\mathtt{T}$ is either in some buffer (perhaps as a flagged packet) or has been received by $R$.\end{claim}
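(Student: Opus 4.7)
The plan is to argue by induction on the round index $\mathtt{t}$ within transmission $\mathtt{T}$, maintaining the invariant that at every moment prior to the call to \emph{End of Transmission Adjustments}, each packet that has been inserted so far (in the sense of Definition \ref{insert}) is either physically present in some internal buffer (counting flagged copies in outgoing buffers as ``present'') or has been received by $R$. The base case is immediate: insertion into the network is defined as acceptance by some buffer, and no deletions have occurred before any rounds have executed.

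For the inductive step, I would enumerate every line of the pseudo-code at which the contents of a buffer can change, and argue that none violates the invariant. The only lines that either remove a packet from a buffer or move one between buffers are ({\bf \ref{routingRules2}.32-33}), ({\bf \ref{routingRules2}.35}), ({\bf \ref{routingRules2}.53}), ({\bf \ref{routingRules2}.55}), ({\bf \ref{routingRules2}.57}), ({\bf \ref{routingRules2}.61}), ({\bf \ref{routingRules2}.64}), ({\bf \ref{reShuffleRules}.89-90}), and ({\bf \ref{reShuffleRules}.101-102}). Lines ({\bf \ref{routingRules2}.61}) and ({\bf \ref{routingRules2}.64}) occur inside \emph{End of Transmission Adjustments} and are excluded by the hypothesis. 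Line ({\bf \ref{routingRules2}.35}) merely swaps two slots within the same outgoing buffer, and lines ({\bf \ref{reShuffleRules}.89-90}) are re-shuffles whose faithfulness (exactly one slot emptied, exactly one slot filled with the same packet) is guaranteed by Statement 10 of Lemma \ref{pseudo}. Line ({\bf \ref{routingRules2}.53}) is a net insertion into an incoming buffer. Lines ({\bf \ref{reShuffleRules}.101-102}) correspond to $R$ accepting a packet into its storage, which by the invariant's wording is allowed (``received by $R$''). Line ({\bf \ref{routingRules2}.57}) is reached only when no packet was supposed to have arrived, in which case \emph{Fill Gap} only slides existing packets downward; no packet is eliminated from the network.

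The two genuinely substantive cases are lines ({\bf \ref{routingRules2}.32}) and ({\bf \ref{routingRules2}.55}). For line ({\bf \ref{routingRules2}.55}), deletion occurs only when the packet $p$ just received duplicates one already stored; the resident copy, which was accepted in some prior round, remains in the network by the inductive hypothesis, so the deletion of the incoming duplicate is harmless. Line ({\bf \ref{routingRules2}.32}) is the critical case: a flagged packet $\tilde p$ is removed from an outgoing buffer $\mathsf{OUT}$ along an edge $E(A,B)$. This line is only reached after lines ({\bf \ref{routingRules2}.29-30}) verify that $A$ has received $RR$ with $\bot \neq FR \leq RR$. I would use Statement 3 of Lemma \ref{obsBelowHere} together with the observation that $RR$ is only set on line ({\bf \ref{routingRules2}.53}) (and $RR$ stores the round index in which acceptance happened) to conclude that $B$ has already executed an acceptance of $\tilde p$ on line ({\bf \ref{routingRules2}.53}) at some earlier round $\mathtt{t}' \geq FR$. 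Applying the inductive hypothesis at the moment just after that acceptance, $\tilde p$ was present in the network then, and every subsequent buffer action (by induction) has preserved its presence up to the current round. Hence, even once $A$ removes its flagged copy, $\tilde p$ still resides somewhere in the network, so the invariant continues to hold.

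The main obstacle I expect is this last argument for line ({\bf \ref{routingRules2}.32}): pinning down that the $RR$ value that triggers $A$'s deletion really does correspond to a genuine acceptance by $B$ of the specific packet $\tilde p$ (and not some other packet that was flagged at the same round index), and that this acceptance has not been ``rolled back'' in the interim. The combination of Claim \ref{HFPFR}, Statement 3 of Lemma \ref{obsBelowHere} (which pins down that $FR$ and $\tilde p$ persist together from the round of flagging to the round of confirmation), and the tracking of $RR$ values (which are monotone within a transmission by ({\bf \ref{setupCode}.31}), ({\bf \ref{routingRules2}.53}), ({\bf \ref{routingRules2}.64})) should allow the required identification. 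Once this link is made, the induction closes and the claim follows.
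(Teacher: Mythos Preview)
Your proposal is correct and follows essentially the same route as the paper: enumerate the lines where buffer contents can change and verify each preserves the invariant, with line ({\bf \ref{routingRules2}.32}) as the only substantive case, handled via Statement~3 of Claim~\ref{obsBelowHere} and the fact that $RR$ is only set on ({\bf \ref{routingRules2}.53}). The paper's enumeration is slightly leaner---it notes that lines ({\bf \ref{routingRules2}.55}) and ({\bf \ref{routingRules2}.57}) never actually remove a packet from any buffer (the incoming $p$ there is a transient variable never placed in a buffer slot, and \emph{Fill Gap} only slides resident packets within the same buffer), so they need no separate argument; your treatment of ({\bf \ref{routingRules2}.55}) as a ``deletion'' is therefore unnecessary, though your conclusion for it is still sound.
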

\begin{proof} As packets travel between nodes, the sending node maintains a copy of the packet until it has obtained verification from the receiving node that the packet was accepted.  This way, packets that are lost due to edge failure are backed-up.  This is the high-level idea of why the claim is true, we now go through rigorous detail.

First notice that the statement only concerns packets corresponding to the current codeword transmission, and packets deleted as on ({\bf \ref{routingRules2}.61}) do not threaten the validity of the Claim.  We consider a specific packet $p$ that has been inserted into the network and show that $p$ is never removed from a buffer $B$ until another buffer $B'$ has taken $p$ from $B$.  We do this by considering every line of code that a buffer could possible remove $p$, and argue that whenever this happens, $p$ has necessarily been accepted from $B$ by some other buffer $B'$.  Notice that the only lines that a buffer could possibly remove $p$ (before line ({\bf \ref{routingRules2}.61}) of $\mathtt{T}$ is reached) are: ({\bf \ref{routingRules2}.32}), ({\bf \ref{routingRules2}.53}), and ({\bf \ref{routingRules2}.89-90}).
    \begin{itemize}\setlength{\itemsep}{6pt} \setlength{\parskip}{0pt} \setlength{\parsep}{0pt}
    \item[]  \underline{Line ({\bf \ref{routingRules2}.53})}.  This line is handled by Lemma \ref{pseudo}, Statements 1 and 2, which say that whenever a slot of an incoming buffer is filled as on line ({\bf \ref{routingRules2}.53}), it fills an empty slot, and therefore cannot correspond to removing (over-writing) $p$.

    \item[]  \underline{Lines ({\bf \ref{reShuffleRules}.89-90})}.  These lines are handled by Lemma \ref{pseudo}, Statement 10.

    \item[]  \underline{Line ({\bf \ref{routingRules2}.32})}.  This is the interesting case, where $p$ is removed from an outgoing buffer after a packet transfer.  We must show that any time $p$ is removed here, it has been accepted by some incoming buffer $B'$.  For notation, we will let $\mathtt{t}$ denote the round that $p$ is deleted from $B$ (i.e.\ when line ({\bf \ref{routingRules2}.32}) is reached), and $\mathtt{t}_0$ denote the round that $B$ first tried to send the packet to $B'$ as on ({\bf \ref{routingRules2}.41}).  By Statement 3 of Claim \ref{obsBelowHere}, $\mathtt{t}_0$ is the round that $\tilde{p}$ was most recently set to $p$ as on line ({\bf \ref{routingRules2}.38}) (note that $\mathtt{t}_0 \leq \mathtt{t}$).  Since line ({\bf \ref{routingRules2}.32}) was reached in round $\mathtt{t}$, the conditional statements on lines ({\bf \ref{routingRules2}.29}) and ({\bf \ref{routingRules2}.30}) were satisfied, and so $\bot \neq RR \geq FR$ when those lines were reached.  By Statement 3 of Claim \ref{obsBelowHere}, $FR$ will equal $\mathtt{t}_0$ when ({\bf \ref{routingRules2}.30}) is satisfied.  Since in {\it any} round $\mathtt{t}'$, the only non-$\bot$ value that $RR$ can ever be set to is $\mathtt{t}'$ ({\bf \ref{routingRules2}.53}), and since $RR \geq \mathtt{t}_0=FR$ ({\bf \ref{routingRules2}.30}), it must be that ({\bf \ref{routingRules2}.53}) was reached in some round $\mathtt{t}' \in [\mathtt{t}_0, \mathtt{t}]$.  In particular, $B'$ stored a packet as on ({\bf \ref{routingRules2}.53}) of round $\mathtt{t}'$, which by Statement 3 of Claim \ref{obsBelowHere} was necessarily $p$.
    \end{itemize}
\vspace{-22pt}
\end{proof}
\begin{claim} \label{packetProliferation32} Not counting flagged packets, there is at most one copy of any packet in the network at any time (not including packets in the sender or receiver's buffers).  Looking at all copies (flagged and un-flagged) of any given packet present in the network at any time, at most one copy of that packet will {\bf ever} be accepted (as in Definition \ref{receive}) by another node.\end{claim}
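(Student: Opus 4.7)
The plan is to prove both parts simultaneously by strong induction on round/stage, in the style of Lemma \ref{pseudo}. The base case is trivial since all incoming and outgoing buffers are empty at initialization (lines (\textbf{\ref{setupCode}.29}) and (\textbf{\ref{setupCode}.33})). For the inductive step, I would enumerate the lines that can deposit a packet into a buffer---namely \emph{Fill Packets} (sender insertion), re-shuffling on lines (\textbf{\ref{reShuffleRules}.89--90}), (\textbf{\ref{routingRules2}.38}) (\emph{Create Flagged Packet}), and (\textbf{\ref{routingRules2}.53}) (\emph{Receive Packet})---and verify that neither invariant is violated at any of them.

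For Part 1, the sensitive case is line (\textbf{\ref{routingRules2}.53}), where an incoming buffer $\mathsf{IN}$ at node $B$ stores an arriving packet $p$ as an unflagged copy. I must show that no other unflagged copy of $p$ exists elsewhere in the network immediately before this store. By Statement 3 of Claim \ref{obsBelowHere}, the sending outgoing buffer $\mathsf{OUT}$ at the opposite end holds $p$ as a \emph{flagged} packet at this moment, so the source side does not carry a second unflagged copy. Any other unflagged copy anywhere in the network would contradict the inductive hypothesis. Sender insertion via \emph{Fill Packets} introduces packets that are fresh by the indexing scheme within a codeword; re-shuffles on lines (\textbf{\ref{reShuffleRules}.89--90}) move a packet and vacate the source slot by Statement 10 of Lemma \ref{pseudo}; and line (\textbf{\ref{routingRules2}.38}) merely promotes an existing unflagged packet into a flagged one (it does not duplicate it).

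For Part 2, the key observation is that every flagged copy currently in the network is \emph{locked} to a unique destination. If a flagged copy of $p$ sits in $A$'s outgoing buffer $\mathsf{OUT}$ along $E(A,B)$, then Statement 11 of Lemma \ref{pseudo} prevents it from being re-shuffled elsewhere within $A$, and Statement 3 of Claim \ref{obsBelowHere} keeps $p = \tilde p = \mathsf{OUT}[H_{FP}]$ as the flagged packet of $\mathsf{OUT}$ (with $FR$ unchanged) until $A$ receives confirmation of receipt from $B$. Thus the only node that can possibly accept this flagged copy is $B$. But $B$ must already have accepted some earlier instance of $p$ (this is precisely why the flag is still present, as tracked by Statement 2 of Claim \ref{obsBelowHere} and the $RR$/$FR$ bookkeeping at $B$'s end), so when $p$ arrives at $B$ once more, the reception branch fails the freshness test $RR < FR$ on line (\textbf{\ref{routingRules2}.51}) and the packet is discarded on line (\textbf{\ref{routingRules2}.55}) rather than accepted. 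Hence no flagged copy currently present is ever accepted elsewhere, leaving only the unique unflagged copy (guaranteed by Part 1) as a potential future acceptance.

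The main technical obstacle is cementing this ``re-entrant rejection'' argument: once $B$ has accepted $p$ from $A$ via line (\textbf{\ref{routingRules2}.53}) in some round $\mathtt{t}_0$, $B$'s variable $RR$ on the edge from $A$ is set to $\mathtt{t}_0 \geq FR$ and is non-decreasing for the remainder of the transmission, so every later arrival of $p$ carrying the same $FR$ is filtered out on lines (\textbf{\ref{routingRules2}.51--55}). This is essentially the exactly-once receipt semantics that Lemma \ref{subclaim6} and Claim \ref{obsBelowHere} already encode, so the proof reduces to invoking those results at each reception event together with the bookkeeping that $FR$ cannot shrink once assigned. With that in place, the only remaining care is to verify the \emph{End-of-Transmission Adjustments} (lines (\textbf{\ref{routingRules2}.58--64})) and the sender/receiver re-shuffles do not secretly manufacture duplicates, which follows immediately since those lines either delete packets or (for the sender) stamp newly generated indices.
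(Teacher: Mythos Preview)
Your Part 1 argument is essentially sound, but your Part 2 argument contains a genuine gap. You write that if a flagged copy of $p$ sits in $A$'s outgoing buffer along $E(A,B)$, then ``$B$ must already have accepted some earlier instance of $p$ (this is precisely why the flag is still present).'' This is false. The flag is created on line ({\bf \ref{routingRules2}.38}) the moment $A$ \emph{decides} to send, before $B$ has any opportunity to accept; it persists until $A$ receives confirmation. So the flag being present tells you only that $A$ has not yet received confirmation---$B$ may or may not have accepted. In particular, if $A$ flags $p$ and the edge fails in Stage~2, the flagged copy sits at $A$ with $B$ having accepted nothing; that flagged copy can absolutely lead to a future acceptance by $B$. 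Your conclusion that ``no flagged copy currently present is ever accepted elsewhere'' therefore does not follow.

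What saves the claim in this bad case is that when a flagged copy's target has \emph{not} yet accepted, there is no unflagged copy anywhere (the unique unflagged copy became flagged at $A$ and has not yet spawned a descendant), so the total is still at most one potential acceptance. More generally, among all flagged copies at a given instant, at most one has a target that has not yet accepted, and whenever such a copy exists there is no unflagged copy. The paper handles exactly this via its \emph{alive/dead} dichotomy: a copy is \emph{dead} if it will never leave its buffer nor generate new copies, and the key subclaim is that at most one copy is alive at any time (and when there is an alive copy, all dead copies are flagged). Your argument implicitly assumes every flagged copy is already dead, which is one case short. To repair your proof you need to carry a stronger invariant---essentially the paper's subclaim---rather than the bare ``at most one unflagged copy'' statement; the latter is not strong enough to push Part~2 through inductively.
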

\begin{proof} For any packet $p$, let $N_p$ denote the copies of $p$ (both flagged and not) present in the network (in an internal node's buffer) at a given time.  We begin the proof via a sequence of observations:
	\begin{enumerate}
	\item[] {\bf Observation 1.}  {\it The only time $N_p$ can ever {\bf \em increase} is on line ({\bf \ref{routingRules2}.53}).}
	\item[] {\it Proof.} The only way for $N_p$ to increase is if (a copy of) $p$ is stored by a new buffer.  Looking at the pseudo-code, the only place a buffer slot can be assigned a new copy of $p$ is on lines ({\bf \ref{routingRules2}.32}), ({\bf \ref{routingRules2}.35}), ({\bf \ref{routingRules2}.53}), ({\bf \ref{routingRules2}.55}), ({\bf \ref{routingRules2}.57}), ({\bf \ref{routingRules2}.61}), ({\bf \ref{routingRules2}.64}), and ({\bf \ref{reShuffleRules}.89}).  Of these, only ({\bf \ref{routingRules2}.53}) and ({\bf \ref{reShuffleRules}.89}) could possibly increase $N_p$, as the others simply shift packets within a buffer and/or delete packets.  In the latter case, $N_p$ does not change by Statement 10 of Lemma \ref{pseudo}.

	\item[] {\bf Observation 2.}  {\it Suppose $A$ (including $A=S$) first sends a (copy of a) packet $p$ to $B$ as on ({\bf \ref{routingRules2}.41}) of round $\mathtt{t}_0$.  Then:
		\begin{enumerate}
		\item The copy of $p$ in $A$'s outgoing buffer along $E(A,B)$ (for which there was a copy made and sent on ({\bf \ref{routingRules2}.41}) of round $\mathtt{t}_0$) will never be transferred to any of $A$'s other buffers.
		\item The copy of $p$ will remain in $A$'s outgoing buffer along $E(A,B)$ as a flagged packet until it is deleted either when $A$ gets confirmation of receipt (see Definition \ref{confRec}) in some round $\mathtt{t}$ ({\bf \ref{routingRules2}.32}), or by the end of the transmission as on ({\bf \ref{routingRules2}.61}).  In the latter case, define $\mathtt{t} := 3D$ (the last round of the transmission) for Statement (c) below.
		\item Between $\mathtt{t}_0$ and line ({\bf \ref{routingRules}.07}) of round $\mathtt{t}$, $B$ will accept (a copy of) $p$ from $A$ as on ({\bf \ref{routingRules2}.53}) at most once.  Furthermore, the copy of $p$ in $A$'s buffer cannot move to any other buffer or generate any other copies other than the one (possibly) received by $B$ as on ({\bf \ref{routingRules2}.53}).
		\end{enumerate}}
	\item[] {\it Proof.} Statement (a) follows from Statement 3 of Claim \ref{obsBelowHere} and Statement 11 of Lemma \ref{pseudo}, together with the fact that lines ({\bf \ref{routingRules2}.32}) and ({\bf \ref{routingRules2}.61}, {\bf \ref{routingRules2}.69}) imply that the relevant copy of $A$ will be deleted when it does get confirmation of receipt as in Definition \ref{confRec} (or the end of the transmission).  By Statement 3 of Claim \ref{obsBelowHere}, this copy of $p$ will be (the unique) flagged packet in $A$'s outgoing buffer to $B$ until confirmation of receipt (or the end of the transmission), which proves Statement (b).  For Statement (c), suppose that $B$ accepts a copy of $p$ as on ({\bf \ref{routingRules2}.53}) during some round $\mathtt{t}' \in [\mathtt{t}_0, \mathtt{t}]$.  Then $RR$ will be set to $\mathtt{t}'$ on ({\bf \ref{routingRules2}.53}) of round $\mathtt{t}'$, and $RR$ cannot obtain a {\it smaller} index until the next transmission ({\bf \ref{routingRules2}.53}).  By Statement 3 of Claim \ref{obsBelowHere}, $FR$ will remain equal to $\mathtt{t}_0$ from line ({\bf \ref{routingRules2}.38}) of round $\mathtt{t}_0$ through the time ({\bf \ref{routingRules2}.33}) of round $\mathtt{t}$ is reached.  Therefore, between $\mathtt{t}' \geq \mathtt{t}_0$ and line ({\bf \ref{routingRules2}.33}) of round $\mathtt{t}$, we have that $FR = \mathtt{t}_0 \leq \mathtt{t}' \leq RR$, and hence line ({\bf \ref{routingRules2}.51}) can never be satisfied during these times, which implies ({\bf \ref{routingRules2}.53}) can never be reached again after $\mathtt{t}'$.  This proves the first part of Statement (c).  The second part follows by looking at all possible places (copies of) packets can move or be created: ({\bf \ref{routingRules2}.32}), ({\bf \ref{routingRules2}.35}), ({\bf \ref{routingRules2}.53}), ({\bf \ref{routingRules2}.55}), ({\bf \ref{routingRules2}.57}), ({\bf \ref{routingRules2}.61}), ({\bf \ref{routingRules2}.64}), and ({\bf \ref{reShuffleRules}.89-90}).  Of these, only ({\bf \ref{routingRules2}.53}) and ({\bf \ref{reShuffleRules}.89-90}) threaten to move $p$ or create a new copy of $p$.  However, the first part of Observation 2(c) says that ({\bf \ref{routingRules2}.53}) can happen at most once (and is accounted for), while Statement 11 of Lemma \ref{pseudo} rules out the case that the packet is re-shuffled as on ({\bf \ref{reShuffleRules}.89-90}).

	\item[] {\bf Observation 3.}  {\it No packet will ever be inserted (see Definition \ref{insert}) into the network more than once.  In particular, for any packet $p$, $N_p=0$ until the sender inserts it (i.e.\ some node accepts the packet from the sender as on ({\bf \ref{routingRules2}.53})), at which point $N_p=1$.  After this point, the only way $N_p$ can become larger than one is if ({\bf \ref{routingRules2}.53}) is reached, where neither the sending node nor the receiving node is $S$ or $R$.}
	\item[] {\it Proof.} Since the packets of $S$ are distributed to his outgoing buffers before being inserted into the network ({\bf \ref{setupCode2}.38}), ({\bf \ref{routingRules2}.65}), and ({\bf \ref{routingRules2}.67-70}), and since $S$ never receives a packet he has already inserted ($S$ has no incoming buffers ({\bf \ref{setupCode}.17})) nor shuffles packets between buffers (({\bf \ref{routingRules}.22}) and ({\bf \ref{reShuffleRules}.95-96})), a given packet $p$ can only be insereted along one edge adjacent to the sender.  The fact the sender can insert at most one (copy of a) packet $p$ along an adjacent edge now follows from Observation 2 above for $A=S$.  This proves the first part of Observation 3.
	
	By Observation 1, the only place $N_p$ can increase is on ({\bf \ref{routingRules2}.53}).  Whenever this line is reached, the copy stored comes from the one received on ({\bf \ref{routingRules2}.43}), which in turn was sent by another node on ({\bf \ref{routingRules2}.41}).  The copy sent on ({\bf \ref{routingRules2}.41}) in turn can only be set on ({\bf \ref{routingRules2}.38}) (perhaps in an earlier round), so in particular a copy of the packet must have already existed in an outgoing buffer of the sending node.  This proves that when $N_p$ goes from zero to one, it can only happen when a packet is inserted for the first time by the sender.  The rest of Observation 3 now follows from Observation 1, the first part of Observation 3, the fact that copies reaching $R$ do not increase $N_p$ (by definition of $N_p$), and the fact $R$ never sends a copy of a packet ({\bf \ref{setupCode}.07}) and $S$ never accepts packets ({\bf \ref{setupCode}.17}).
	\end{enumerate}
Define a copy of a packet $p$ in the network to be {\em dead} if that copy will {\em never} leave the buffer it is currently in, nor will it ever generate any new copies.  A copy of a packet that is not dead will be {\em alive}.
		\begin{enumerate}
		\item[] {\bf Observation 4.}  {\it If a (copy of a) packet is ever flagged and dead, it will forever remain both flagged and dead, until it is deleted.}
		\item[] {\it Proof.}  By definition of being ``dead,'' once a (copy of a) packet becomes dead it can never become alive again.  Also, copies of a packet that are flagged remain flagged until they are deleted by Observation 2(b).
		\end{enumerate}
The Claim now follows immediately from the following subclaim:
	\begin{enumerate}
	\item[] {\bf Subclaim.}  {\it Fix any packet $p$ that is ever inserted into the network.  Then at any time, there is at most one {\bf \em alive} copy of $p$ in the network at any time.  Also at any time, if there is one alive copy of $p$, then all {\bf \em dead} copies of $p$  are flagged packets.  If there are no alive copies, then there is at most one dead copy of $p$ that is not a flagged packet.}

	\item[] {\it Proof.} Before $p$ is inserted into the network, $N_p =0$, and there is nothing to show.  Suppose $p$ is inserted into the network in round $\mathtt{t}_0$, so that $N_p=1$ by the end of the round (Observation 3).  Since $N_p=1$, the validity of the subclaim is not threatened.  Also, if this packet is {\it dead}, then the proof is complete, as by Observation 3 and the definition of {\it deadness}, no other (copies) of $p$ will ever be created, and hence the subclaim will forever be true for $p$.  So suppose $p$ is {\it alive} when it is inserted.  We will show that a (copy of an) alive packet can create at most one new (copy of a) packet, and the instant it does so, the original copy is necessarily both flagged and dead (the new copy may be either alive or dead), from which the subclaim follows from Observation 4.  So suppose an alive copy of $p$ creates a new copy (increasing $N_p$) of itself in round $\mathtt{t}$.  Notice that the only time new copies of any packet can be created is on ({\bf \ref{routingRules2}.53}) (see e.g.\ proof of Observation 2).  Fix notation, so that the alive copy of $p$ was in node $A$'s outgoing buffer to node $B$, and hence it was $B$'s corresponding buffer that entered ({\bf \ref{routingRules2}.53}) in round $\mathtt{t}$.  The fact that the alive copy of $p$ in $A$'s outgoing buffer is flagged and dead the instant $B$ accepts it on ({\bf \ref{routingRules2}.53}) of round $\mathtt{t}$ follows immediately from Observation 2.
	\end{enumerate}
\vspace{-24pt}
\end{proof}
\begin{lemma} \label{packetTransferPotDrop} Suppose that in round $\mathtt{t}$, B {\it accepts} (as in Definition \ref{receive}) a packet from $A$.  Let $O_{A,B}$ denote $A$'s outgoing buffer along $E(A,B)$, and let $O$ denote the height the packet had in $O_{A,B}$ when {\bf \em Send Packet} was called in round $\mathtt{t}$ ({\bf \ref{routingRules}.17}).  Also let $I_{B,A}$ denote $B$'s incoming buffer along $E(A,B)$, and let $I$ denote the height of $I_{B,A}$ at the start of $\mathtt{t}$.  Then the change in non-duplicated potential caused by this packet transfer is less than or equal to:
	\begin{equation}
	-O + I + 1 \qquad OR \qquad -O \thickspace \mbox{ (if $B=R$)}
	\end{equation}
Furthermore, after the packet transfer but before re-shuffling, $I_{B,A}$ will have height $I + 1$.
\end{lemma}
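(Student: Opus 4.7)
The plan is to decompose the change in non-duplicated potential into the contribution from each endpoint and bound each piece separately. Two simultaneous events occur when $B$ executes line ({\bf \ref{routingRules2}.53}) of round $\mathtt{t}$. First, the copy of the packet remaining in $A$'s outgoing buffer $O_{A,B}$ continues to exist as a flagged packet (by Statement 3 of Claim \ref{obsBelowHere}), but by Claim \ref{packetProliferation3} this copy has now been accepted elsewhere and thus switches from contributing to non-duplicated potential to contributing to packet duplication potential. Since the packet sent on ({\bf \ref{routingRules2}.41}) is $\tilde{p} = O_{A,B}[H_{FP}]$ and $H_{FP}$ cannot change between the call to Send Packet on ({\bf \ref{routingRules}.17}) and $B$'s acceptance on ({\bf \ref{routingRules2}.53}) (both occur in the same Stage 2, and no intervening line modifies $H_{FP}$ at $A$), the height of this packet in $O_{A,B}$ is exactly $O$, so the non-duplicated potential drops by $O$ on $A$'s side. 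Second, if $B = R$ the packet does not contribute to potential (Definition \ref{potDef}), yielding net change $-O$ and establishing the $B=R$ bound directly. Otherwise, the packet is stored at height $H_{GP}$ in $I_{B,A}$, increasing non-duplicated potential on $B$'s side by $H_{GP}$.

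To bound $H_{GP}$, I will show $H_{GP} \leq I+1$ at the instant ({\bf \ref{routingRules2}.53}) is executed. The key observation is that $H_{IN}$ is modified only on line ({\bf \ref{routingRules2}.53}) itself and during re-shuffling ({\bf \ref{reShuffleRules}.91-92}); since re-shuffling only occurs at the end of a round, $H_{IN}$ retains the value $I$ from the start of $\mathtt{t}$ up through (but not including) the execution of ({\bf \ref{routingRules2}.53}) in $\mathtt{t}$. By that point line ({\bf \ref{routingRules2}.52}) has already ensured $H_{GP} \neq \bot$, so Statements 1 and 2 of Lemma \ref{pseudo} apply: either $H_{GP} = H_{IN}+1$ or $H_{GP} \leq H_{IN}$, which in either case gives $H_{GP} \leq H_{IN} + 1 = I+1$. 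Combining with the drop on $A$'s side, the net change in non-duplicated potential is at most $-O + (I+1)$.

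For the Furthermore clause, line ({\bf \ref{routingRules2}.53}) executes $H = H+1$, taking $H_{IN}$ from $I$ to $I+1$, and no further modification occurs before re-shuffling, so immediately after the packet transfer $I_{B,A}$ has height $I+1$ as claimed. I do not anticipate a real obstacle: the bookkeeping is routine once the invariants of Lemma \ref{pseudo} are invoked. The only subtle point is correctly identifying that the flagged copy at $A$ transitions from non-duplicated to duplicated potential at the exact moment $B$ accepts, which is precisely how network potential was partitioned in the proof of Lemma \ref{item3}, combined with Claim \ref{packetProliferation3} guaranteeing uniqueness of the accepted instance.
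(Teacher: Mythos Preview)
Your proposal is correct and follows essentially the same approach as the paper: decompose the change into the drop of $O$ on $A$'s side (the flagged copy switches from non-duplicated to duplicated potential upon acceptance) and the gain of $H_{GP}\le I+1$ on $B$'s side via Statements~1 and~2 of Lemma~\ref{pseudo}, treating $B=R$ separately. Your explicit observation that $H_{IN}$ retains the value $I$ up to line ({\bf \ref{routingRules2}.53}) is a welcome bit of extra care that the paper leaves implicit.
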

\begin{proof} By definition, $B$ accepts the packet in round $\mathtt{t}$ means that ({\bf \ref{routingRules2}.53}) was reached by $B$'s incoming buffer along $E(A,B)$ in round $\mathtt{t}$.  Since the packet is stored at height $H_{GP}$ ({\bf \ref{routingRules2}.53}), $B$'s non-duplicated potential will increase by $H_{GP}$ due to this packet transfer (if $B=R$, then by definition of non-duplicated potential, packets in $R$ do not contribute anything, so there will be no change).  By Statements 1 and 2 of Lemma \ref{pseudo}, $H_{GP} \leq I + 1$, and hence $B$'s increase in non-duplicated potential caused by the packet transfer is at most $I+1$ (or zero in the case $B=R$).  Also, since $B$ had height $I$ at the start of the round, and $B$ accepts a packet on ({\bf \ref{routingRules2}.53}) of round $\mathtt{t}$, $B$ will have $I+1$ packets in $I$ when the re-shuffling phase of round $\mathtt{t}$ begins, which is the second statement of the lemma.

Meanwhile, the packet transferred along $E(A,B)$ in round $\mathtt{t}$ still has a copy in $O_{A,B}$ (until $A$ receives confirmation of receipt from $B$, see Definition \ref{confRec}), but by definition of non-duplicated potential (see the paragraph between Claim \ref{packetProliferation} and Lemma \ref{item3}), this (flagged) packet will no longer count towards non-duplicated potential the instant $B$ accepts it as on ({\bf \ref{routingRules2}.53}) of round $\mathtt{t}$.  Therefore, $A$'s non-duplicated potential will drop by the value $H_{FP}$ has when $B$ accepts the packet on ({\bf \ref{routingRules2}.53}) (Statement 3 of Claim \ref{obsBelowHere}), which equals $O$ since $H_{FP}$ cannot change between the time {\it Send Packet} is called on ({\bf \ref{routingRules}.17}) and the time the packet is accepted on ({\bf \ref{routingRules2}.53}).  Therefore, counting only changes in non-duplicated potential due to the packet transfer, the change in potential is: $-O + H_{GP} \leq -O + I+1$ (or $-O$ in the case $B=R$), as desired.
\end{proof}
We now re-state and prove Lemma \ref{chainL2}.
\begin{lemma} \label{chainL} Let $\mathcal{C} = N_1 N_2 \dots N_l$ be a path consisting of $l$ nodes, such that $R = N_l$ and $S \notin \mathcal{C}$.  Suppose that in round $\mathtt{t}$, all edges $E(N_i,N_{i+1})$, $1 \leq i <l$ are {\it active} for the entire round.  Let $\phi$ denote the change in the network's non-duplicated potential caused by:
	\begin{enumerate}\setlength{\itemsep}{1pt} \setlength{\parskip}{0pt} \setlength{\parsep}{0pt}
	\item (For $1 \leq i < l$) Packet transfers across $E(N_i, N_{i+1})$ in round $\mathtt{t}$,
	\item (For $1 < i < l$) Re-shuffling packets {\bf \em into} $N_i$'s outgoing buffers during $\mathtt{t}$,
	\end{enumerate}
Then if $O_{N_1, N_2}$ denotes $N_1$'s outgoing buffer along $E(N_1, N_2)$ and $O$ denotes its height at the start of $\mathtt{t}$, we have:
	\begin{enumerate}\setlength{\itemsep}{1pt} \setlength{\parskip}{0pt} \setlength{\parsep}{0pt}
	\item[-] If $O_{N_1, N_2}$ has a flagged packet that has already been accepted by $N_2$ {\bf \em before} round $\mathtt{t}$, then:
		\begin{equation}\label{firstC}
		\phi \leq -O + l-1
		\end{equation}
	\item[-] Otherwise,
		\begin{equation}\label{secondC}
		\phi \leq -O + l-2
		\end{equation}
	\end{enumerate}
\end{lemma}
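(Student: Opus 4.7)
For $1 \leq i \leq l-1$ let $h_i$ denote the height of $O_{N_i, N_{i+1}}$ at the outset of round $\mathtt{t}$ (so $h_1 = O$), and for $2 \leq j \leq l$ let $h'_j$ denote the height of $I_{N_j, N_{j-1}}$ at the outset of $\mathtt{t}$. By Claim \ref{balancing} applied at the end of round $\mathtt{t}-1$, each internal node $N_j$ ($2 \leq j \leq l-1$) has balanced buffers, giving $h'_j \leq h_j$ and $h_j - h'_j \leq 1$. I decompose $\phi = \sum_{i=1}^{l-1}\delta_i + \sum_{i=2}^{l-1}\rho_i$, where $\delta_i$ collects the changes in non-duplicated potential caused by events on $E(N_i, N_{i+1})$ during $\mathtt{t}$ (acceptance at $N_{i+1}$, confirmation-of-receipt, and flagged-packet elevation at $N_i$), and $\rho_i$ collects changes from re-shuffling packets into any outgoing buffer of $N_i$. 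Claim \ref{packetHeight} immediately gives $\rho_i \leq 0$.

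\textbf{Per-edge case analysis.} Each outgoing buffer $O_{N_i, N_{i+1}}$ is, at the start of $\mathtt{t}$, in one of three states: (S1) no flagged packet; (S2) a flagged packet not yet accepted by $N_{i+1}$; (S3) a flagged packet already accepted by $N_{i+1}$. The lemma's Case 1 is precisely the condition S3 at $E(N_1, N_2)$. In S1, {\em Create Flagged Packet} in Stage 2 flags the topmost slot at height $h_i$; in S2, the Stage-1 elevation on line ({\bf \ref{routingRules2}.35}), triggered because $RR < FR$ and possibly $H_{FP} < H$ after prior re-shuffling, first restores $H_{FP} = h_i$ at zero net non-duplicated change. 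In both cases the resulting packet is sent and accepted by $N_{i+1}$ at height at most $h'_{i+1}+1$ (Statement 1 of Lemma \ref{pseudo}), so Lemma \ref{packetTransferPotDrop} yields $\delta_i \leq -h_i + h'_{i+1} + 1$, with the $h'_{i+1}+1$ term dropped for $i = l-1$ since $R$'s buffers are excluded from non-duplicated potential. In S3, the active edge delivers $RR \geq FR$ to $N_i$ in Stage 1, triggering confirmation-of-receipt on line ({\bf \ref{routingRules2}.32}): the flagged packet is deleted and higher packets slide down, contributing $-(h_i - H_{FP}^{*}) \leq 0$; Stage 2 then flags a fresh packet at height $h_i - 1$ and sends it for acceptance, producing the bumped bound $\delta_i \leq -h_i + h'_{i+1} + 2$ (or $-h_{l-1}+1$ for $i = l-1$).

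\textbf{Absorbing the S3 extras on interior edges.} The $+1$ overhead in an S3 edge's $\delta_i$ is canceled by $\rho_i$ whenever $2 \leq i \leq l-1$: the slide-down drops $O_{N_i, N_{i+1}}$ by one slot below its balanced siblings at $N_i$, and the end-of-round \emph{Re-Shuffle} must then move a packet into $O_{N_i, N_{i+1}}$ from a strictly higher source buffer of $N_i$, contributing a full $-1$ to $\rho_i$ by Claim \ref{packetHeight}. At the first edge, however, $\rho_1$ is not part of $\phi$, so the S3 extra there survives and accounts for the 1-unit gap between Case 1 and Case 2 in the lemma statement.

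\textbf{Telescoping.} Combining the (possibly absorbed) per-edge bounds with $h'_j \leq h_j$, in Case 2,
\begin{equation*}
\phi \leq \sum_{i=1}^{l-2}(-h_i + h'_{i+1} + 1) + (-h_{l-1}) = -h_1 + \sum_{j=2}^{l-1}(h'_j - h_j) + (l-2) \leq -O + (l-2),
\end{equation*}
and in Case 1 the same computation with $\delta_1$ bumped by one yields $\phi \leq -O + (l-1)$. The main obstacle will be the absorption argument in the third paragraph: rigorously verifying that every S3 interior edge forces the end-of-round re-shuffling to include at least one strictly-downhill packet move into $O_{N_i, N_{i+1}}$; this requires carefully combining the slide-down, the possible chain-incoming acceptance at $I_{N_i, N_{i-1}}$, and any elevations or non-chain packet movements to certify that the buffer profile at re-shuffle time genuinely demands the compensating move of strict descent.
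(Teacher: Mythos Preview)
Your direct telescoping approach is a genuinely different route from the paper's, which proceeds by induction on $l$, peeling off the first edge and applying the hypothesis to $N_2\cdots R$. Your decomposition $\phi=\sum\delta_i+\sum\rho_i$ and the per-edge S1/S2/S3 trichotomy are sound, and the telescoping in the final paragraph is correct \emph{once the S3 extras are accounted for}. The gap is precisely where you flag it: the absorption claim in your third paragraph is false as stated.

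Concretely, you claim that for an interior S3 edge $E(N_i,N_{i+1})$, the confirmation-deletion forces a re-shuffle into $O_{N_i,N_{i+1}}$ from a strictly higher source, giving $\rho_i\le -1$. But a re-shuffle from height $M$ into slot $m+1$ changes potential by $-(M-m-1)$, which is $0$ when $M-m=1$. Take $l=3$, edge $E(N_1,N_2)$ in state S1 with $h_1=h'_2=h_2=5$, and edge $E(N_2,R)$ in state S3. No packet crosses $E(N_1,N_2)$ (since $h_1\not>h'_2$), so after routing $I_{N_2,N_1}$ is still at height $5$ while $O_{N_2,R}$ has dropped to $4$. The re-shuffle moves a packet from slot $5$ to slot $5$: $\rho_2=0$, not $-1$. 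The overall bound $\phi\le -O+l-2=-4$ still holds, but only because $\delta_1=0$ has slack $+1$ against your bound $-h_1+h'_2+1=1$; the S3 extra is absorbed by $\delta_1$, not by $\rho_2$.

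So your argument is incomplete: the $+1$ from an interior S3 edge can be absorbed either by $\rho_i$, \emph{or} by the telescoping term $h'_i-h_i<0$, \emph{or} by slack in $\delta_{i-1}$ when no packet crosses $E(N_{i-1},N_i)$; and you would need to show these three sources of slack jointly cover every interior S3 edge without double-counting. The paper's inductive proof sidesteps this entirely: it only ever needs the re-shuffle $-1$ at the single node $N_2$, and only in the specific subcase $I_{N_2,N_1}=O_{N_2,N_3}$ with $O_{N_1,N_2}\ge I_{N_2,N_1}+2$, where the acceptance at $I_{N_2,N_1}$ is guaranteed and forces the gap $M-m=2$.
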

\begin{proof}(Induction on $l$).\\[.1cm]
\textsc{Base Case: $l=2$}.  So $\mathcal{C} = N_1 R$.
	\begin{enumerate}
	\item[] \textsf{Case 1: $O_{N_1,R}$ had a flagged packet at the start of $\mathtt{t}$ that was already accepted by $N_2$.}  Our aim for this case is to prove \eqref{firstC} for $l=2$.  If $O < 2$, then $-O + l -1 \geq -1 + 2 -1 =0$, and then \eqref{firstC} will be true by Statement 3 of Lemma \ref{item3}.  So assume $O \geq 2$.  Since $E(N_1, R)$ is active during $\mathtt{t}$ and $R$ had already accepted the packet in some previous round $\tilde{\mathtt{t}} < \mathtt{t}$, we have that $RR \geq \tilde{\mathtt{t}}$ ({\bf \ref{routingRules2}.53}), and $N_1$ will receive this value for $RR$ in $R$'s stage one communication ({\bf \ref{routingRules}.06}), ({\bf \ref{routingRules}.09}).  By Statment 3 of Claim \ref{obsBelowHere}, $FR \leq \tilde{\mathtt{t}} \leq RR$, and thus lines ({\bf \ref{routingRules2}.29-30}) will be satisfied in round $\mathtt{t}$, deleting the flagged packet on ({\bf \ref{routingRules2}.32}) and setting $sb=0$.  When {\it Create Flagged Packet} is called on ({\bf \ref{routingRules}.15}), a new packet will be flagged, with $H_{FP}=H_{OUT}=O-1$ and $FR = \mathtt{t}$ (since $O \geq 2$, there will be at least one packet left in $O_{N_1,R}$ of height $O-1>0$ by Lemma \ref{pseudo}).  Letting $I$ denote the height of the receiver's incoming buffer along $E(N_1,R)$, we have that $I=0$ (Claim \ref{hIN}).  Therefore, $H_{OUT} > H_{IN}$, and so the flagged packet will be sent as on ({\bf \ref{routingRules}.17}).  Since $R$ will receive and store this packet (since the edge is active and $RR < \mathtt{t} = FR$, lines ({\bf \ref{routingRules2}.44}) and ({\bf \ref{routingRules2}.48}) will fail, while lines ({\bf \ref{routingRules2}.47}) and ({\bf \ref{routingRules2}.51}) will be satisfied), we apply Lemma \ref{packetTransferPotDrop} to argue there will be a change in non-duplication potential that is less than or equal to $-(O-1)$, which is \eqref{firstC} (for $l=2$).

	\item[] \textsf{Case 2:  Either $O_{N_1,R}$ has no flagged packet at the start of $\mathtt{t}$, or if so, it has not yet been accepted by $R$.}  Our aim for this case is to prove \eqref{firstC} for $l=2$.  If $O=0$, then $-O+l-2 = 0$, and \eqref{secondC} is true by Statement 3 of Lemma \ref{item3}.  So assume $O \geq 1$.  Then necessarily a packet will be sent during round $\mathtt{t}$ (({\bf \ref{routingRules}.16}) is necessarily satisfied since by assumption $E(N_1, R)$ is active during $\mathtt{t}$, $H_{OUT} \geq 1$ by Lemma \ref{pseudo} and $H_{IN}= 0$ by Claim \ref{hIN}).  We first show that the height of the packet in $O_{N_1, R}$ that will be transferred in round $\mathtt{t}$ (which will be the value held by $H_{FP}$ when {\it Send Packet} is called in round $\mathtt{t}$) is greater than or equal to $O$ (whether or not it was flagged before round $\mathtt{t}$):
		\begin{itemize}
		\item If $O_{N_1,R}$ did not have any flagged packets at the outset of $\mathtt{t}$, then $H_{FP} = \bot$ at the start of $\mathtt{t}$, and so $sb=0$ and $FR=\bot$ at the start of $\mathtt{t}$ by Claim \ref{HFPFR}.  Since $H_{FP}$ cannot change between the call to {\it Send Packet} in the previous round and the call to {\it Reset Outgoing Variables} in the current round, Statement 2 of Claim \ref{stupid} implies no packet was sent the previous round, and hence $d=0$ at the start of $\mathtt{t}$ ($d$ was necessarily zero as of ({\bf \ref{routingRules2}.26}) of round $\mathtt{t}-1$, and as argued did not change to `1' on ({\bf \ref{routingRules2}.40}) later that round).  Consequently, $sb$ will remain zero from the start of $\mathtt{t}$ through the time {\it Create Flagged Packet} is called in round $\mathtt{t}$, and because $H_{OUT}=O > 0 = I = H_{IN}$, ({\bf \ref{routingRules2}.38}) will be reached in round $\mathtt{t}$, setting $H_{FP}$ to $O$.

		\item Alternatively, if $O_{N_1,R}$ {\it does} have a flagged packet at the outset of $\mathtt{t}$, we argue that it will have height {\it at least} $O$ when {\bf Send Packet} is called in round $\mathtt{t}$ as follows.  Let $\mathtt{t}_0 < \mathtt{t}$ denote the round $O_{N_1,R}$ first sent (a copy of) the packet to $R$.  We first show that $N_1$ will {\it not} get confirmation of receipt from $R$ (as in Definition \ref{confRec}) for the packet at any point between rounds $\mathtt{t}_0$ and $\mathtt{t}-1$ (inclusive).  To see this, note that since we are \textsf{Case 2}, $R$ has not accepted the flagged packet by the start of $\mathtt{t}$.  This means that at all times between $\mathtt{t}_0$ and the start of $\mathtt{t}$, $RR < \mathtt{t}_0$\footnote{By Statement 3 of Claim \ref{obsBelowHere}, the packet flagged in $\mathtt{t}_0$ is the only packet $O_{N_1,R}$ can send to $R$ between $\mathtt{t}_0+1$ and the time $R$ receives this flagged packet.  Since we know $R$ has still not accepted this flagged packet by the outset of $\mathtt{t}$, this means that between $\mathtt{t}_0$ and $\mathtt{t}-1$, $RR$ cannot be changed as on ({\bf \ref{routingRules2}.53}).  Since $RR$ begins each transmission equal to $-1$ (({\bf \ref{setupCode}.31}) and ({\bf \ref{routingRules2}.64})) and can only be changed after this on ({\bf \ref{routingRules2}.53}), necessarily $RR < \mathtt{t}_0$ through the start of $\mathtt{t}$.}.  Meanwhile, by Statement 3 of Lemma \ref{obsBelowHere}, $FR = \mathtt{t}_0$ and $H_{FP} \neq \bot$ at the start of $\mathtt{t}$.  Since these do not change values before {\it Reset Outgoing Variables} is called in round $\mathtt{t}$, line ({\bf \ref{routingRules2}.34}) guarantees that if $H_{FP}<O$, then line ({\bf \ref{routingRules2}.35}) will be reached, and thus in either case $H_{FP} \geq O$ after the call to {\it Reset Outgoing Variables}.
		\end{itemize}
	Therefore, since $R$ will necessarily receive and accept the flagged packet sent (by the same argument used in \textsf{Case 1}), we may apply Lemma \ref{packetTransferPotDrop} to argue that $\phi \leq -O$, which is \eqref{secondC} (for $l=2$).
	\end{enumerate}
\textsc{Induction Step}.  Assume the lemma is true for any chain of length less that or equal to $l-1$, and let $\mathcal{C}$ be a chain of length $l$ ($l > 2$).   Since we will be applying the induction hypothesis, we extend and change our notation as follows:  Let $O_{N_i, N_j}$ (respectively $I_{N_i, N_j}$) denote the height of $N_i$'s outgoing (respectively incoming) buffer along edge $E(N_i, N_j)$ {\em at the start of round $\mathtt{t}$} (before, the notation referred to the {\it buffer}, now it will refer to the buffer's {\it height}).  Notice that if $O_{N_1, N_2} \leq I_{N_2, N_1}$, then:
	\begin{equation} \label{1164r}
	\phi \leq -O_{N_2,N_3} + (l-1) - 1 \leq -I_{N_2, N_1} + l -2 \leq -O_{N_1, N_2} + l - 2
	\end{equation}
where the first inequality is from the induction hypothesis applied to the chain $N_2 \dots R$, the second follows from Lemma \ref{balancing}, and the third follows from the fact we are assuming $O_{N_1, N_2} \leq I_{N_2, N_1}$.  Therefore, both \eqref{firstC} and \eqref{secondC} are satisfied.  We may therefore assume in both cases below:
	\begin{equation}\label{e1387}
	O_{N_1, N_2} > I_{N_2, N_1}
	\end{equation}
	\begin{enumerate}
	\item[] \textsf{Case 1: $O_{N_1,N_2}$ had a flagged packet at the start of $\mathtt{t}$ that was already accepted by $N_2$.}  If
	$O_{N_1, N_2} = I_{N_2, N_1} +1$, then by the same string of inequalities as in \eqref{1164r}, we would have $\phi \leq -O_{N_1, N_2} + l - 1$, which is \eqref{firstC}.  Therefore, it remains to consider the case:
		\begin{equation}\label{e1394}
		O_{N_1, N_2} \geq I_{N_2, N_1} +2
		\end{equation}
	By an analogous argument to the one made in the \textsc{Base Case}, a packet will be transfered and accepted across $E(N_1,N_2)$ in round $\mathtt{t}$ that will cause the non-duplicated potential to change by an amount less than or equal to:
		\begin{equation}\label{cont1}
		(-O_{N_1,N_2}+1)+ I_{N_2,N_1}+1
		\end{equation}
	Also, when the receiving node $N_2$ accepts this packet as on ({\bf \ref{routingRules2}.53}), the height of the corresponding buffer increases by one on this line.  We emphasize this fact for use below:
		\begin{itemize}
		\item[] {\bf Fact:} After the Routing Phase but before the call to {\it Re-Shuffle} in round $\mathtt{t}$, $N_2$'s incoming buffer along $E(N_1,N_2)$ has height $I_{N_2,N_1}+1$.
		\end{itemize}
	Meanwhile, we may apply the induction hypothesis to the chain $\mathcal{C}' := N_2 \dots R$, so that the change in non-duplicated potential due to contributions 1 and 2 (in the hypothesis of the Lemma) on $\mathcal{C}'$ is less than or equal to:
		\begin{enumerate}
		\item $-O_{N_2,N_3} + (l-1) -1$, if $O_{N_2,N_3}$ had a flagged packet at the start of $\mathtt{t}$ that was already accepted by $N_3$.
		\item $-O_{N_2,N_3} + (l-1) -2$, otherwise.
		\end{enumerate}
	Adding these contributions to \eqref{cont1}, we have that:
		\begin{alignat}{2} \label{1180t}
		\phi &\leq ((-O_{N_1,N_2}+1)+ I_{N_2,N_1}+1) + (-O_{N_2,N_3} + (l-1) -x) \notag \\
		&= (-O_{N_1,N_2} + l -1) +(-O_{N_2,N_3} + I_{N_2,N_1}) + (2-x),
		\end{alignat}
	where $x=1$ or 2, depending on whether we are in case (a) or (b) above.  By Lemma \ref{balancing}, $-O_{N_2,N_3} + I_{N_2,N_1}$ is either 0 or -1.  If $-O_{N_2,N_3} + I_{N_2,N_1} = -1$, then $(-O_{N_2,N_3} + I_{N_2,N_1}) + (2-x) \leq 0$, regardless whether $x=1$ or 2, and hence \eqref{1180t} implies \eqref{firstC}.  Also, if $x=2$, then $(-O_{N_2,N_3} + I_{N_2,N_1}) + (2-x) \leq 0$ (by Lemma \ref{balancing}), and hence hence \eqref{1180t} implies \eqref{firstC}.  It remains to consider the case $x=1$ and $-O_{N_2,N_3} + I_{N_2,N_1} = 0$, in which case \eqref{1180t} becomes:
		\begin{equation}\label{e1408}
		\phi \leq (-O_{N_1,N_2} + l -1) + 1	
		\end{equation}
	In order to obtain \eqref{firstC} from \eqref{e1408}, we therefore need to account for a drop of at least one more to $\phi$.  We will obtain this by the second contribution to $\phi$ (see statement of Lemma) by arguing:
		\begin{enumerate}
		\item After the Routing Phase of round $\mathtt{t}$ but before the call to {\it Re-Shuffling}, the {\it fullest} buffer of $N_2$ has height $O_{N_2,N_3}+1$, and there is at least one {\it incoming} buffer of $N_2$ that has this height.  In particular, during the call to {\it Re-Shuffle} in round $\mathtt{t}$, the first buffer chosen to transfer a packet {\it from} will be an incoming buffer of height $O_{N_2,N_3}+1$.

		\item After the Routing Phase of round $\mathtt{t}$ but before the call to {\it Re-Shuffling}, the {\it emptiest} buffer of $N_2$ has height $O_{N_2,N_3}-1$, and there is at least one {\it outgoing} buffer of $N_2$ that has this height.  In particular, during the call to {\it Re-Shuffle} in round $\mathtt{t}$, the first buffer chosen to transfer a packet {\it to} will be an outgoing buffer of height $O_{N_2,N_3}-1$.
		\end{enumerate}
	Notice that if I can show these two things, this case will be done, as during the first call to {\it Re-Shuffle} in round $\mathtt{t}$, we will have $M-m \geq (O_{N_2,N_3}+1) - (O_{N_2,N_3}-1) \geq 2$ (the call to {\it Adjust Heights} can only help this inequality since the selection process on ({\bf \ref{reShuffleRules}.72-73}) and the two items above guarantee ({\bf \ref{reShuffleRules}.80}) and ({\bf \ref{reShuffleRules}.86}) will both fail if reached), and consequently the re-shuffle on ({\bf \ref{reShuffleRules}.89-90}) will cause a drop of at least one to $\phi$.

	We first argue (a).  As noted at the beginning of \textsf{Case 1} of the \textsc{Induction Step}, Fact 1 implies that there will exist an incoming buffer of the required height (since we are assuming $O_{N_2,N_3} = I_{N_2,N_1}$).  Also, at the start of $\mathtt{t}$, since $N_2$ has an outgoing buffer of height $O_{N_2,N_3}$ (namely, the outgoing buffer along $E(N_2, N_3)$), Lemma \ref{balancing} guarantees that all of $N_2$'s incoming buffers have hieght at most $O_{N_2,N_3}$ at the start of $\mathtt{t}$; and also that all of $N_2$'s outgoing buffers have height at most $O_{N_2,N_3}+1$ at the start of $\mathtt{t}$.  During the Routing Phase but before the Re-Shuffle Phase of $\mathtt{t}$, outgoing buffers cannot {\it increase} in height ({\bf \ref{routingRules2}.33}) and incoming buffers cannot increase in height by more than one ({\bf \ref{routingRules2}.53}).  Therefore, after transferring packets but before Re-Shuffling in round $\mathtt{t}$, the fullest buffer in $N_2$ has height at most $O_{N_2,N_3}+1$, and as already argued, at least one incoming buffer has this height.  The last part of (a) is immediate from the selection rules in ({\bf \ref{reShuffleRules}.72}).

	We now argue (b).  Since $x=1$, we are in the case the outgoing buffer along $E(N_2,N_3)$ had a flagged packet at the start of $\mathtt{t}$ that had already been accepted by $N_3$ in some round $\mathtt{t}_0 < \mathtt{t}$.  By a similar argument that was used in \textsf{Case 1} of the \textsc{Base Case}, the outgoing buffer along $E(N_2, N_3)$ will reach lines ({\bf \ref{routingRules2}.32-33}) in round $\mathtt{t}$.  In particular, the height of the outgoing buffer along $E(N_2,N_3)$ will drop by one on ({\bf \ref{routingRules2}.33}), and thus this buffer has height $O_{N_2, N_3}-1$ after the call to {\it Reset Outgoing Variables}.  Since this height cannot change before the call to {\it Re-Shuffle}, this outgoing buffer has height $O_{N_2, N_3}-1$ after the Routing Phase (but before the call to {\it Re-Shuffle}) in round $\mathtt{t}$.  Also, $O_{N_2, N_3}-1$ is a lower bound for the {\it emptiest} buffer in $N_2$ just before the call to {\it Re-Shuffle} in round $\mathtt{t}$, argued as follows.  At the start of $\mathtt{t}$, since $N_2$ has an incoming buffer of height $I_{N_2,N_1}=O_{N_2,N_3}$ (namely, the incoming buffer along $E(N_1, N_2)$), Lemma \ref{balancing} guarantees that all of $N_2$'s incoming buffers have hieght at least $O_{N_2,N_3}-1$ at the start of $\mathtt{t}$; and also that all of $N_2$'s outgoing buffers have hieght at least $O_{N_2,N_3}$ at the start of $\mathtt{t}$.  During the Routing Phase but before the Re-Shuffle Phase of $\mathtt{t}$, incoming buffers cannot {\it decrease} in height ({\bf \ref{routingRules2}.53}) and outgoing buffers can decrease in height by at most one ({\bf \ref{routingRules2}.33}).  Therefore, after transferring packets but before Re-Shuffling in round $\mathtt{t}$, the emptiest buffer in $N_2$ has height at least $O_{N_2,N_3}-1$, and as already argued, at least one outgoing buffer has this height.  The last part of (b) is immediate from the selection rules in ({\bf \ref{reShuffleRules}.73}).
	
	\item[] \textsf{Case 2:  Either $O_{N_1,N_2}$ has no flagged packet at the start of $\mathtt{t}$, or if so, it has not yet been accepted by $N_2$.}  By the same argument\footnote{For the argument in the \textsc{Base Case}, we used the fact that the receiver's incoming buffer had height zero in order to conclude $H_{OUT}>H_{IN}$ (and thus a packet would be sent).  Here, we use instead \eqref{e1387} to come to the same conclusion.} used in \textsf{Case 2} of the \textsc{Base Case}, there will be a packet transferred across $E(N_1,N_2)$ and accepted by $N_2$ in round $\mathtt{t}$, and this packet will have height at least $O_{N_1,N_2}$ in $N_1$'s outgoing buffer.  Therefore, by Lemma \ref{packetTransferPotDrop}, the change in non-duplicated potential due to this packet transfer is less than or equal to:
		\begin{equation}\label{e1432}
		-O_{N_1,N_2}+I_{N_2, N_1}+1
		\end{equation}
	Meanwhile, we may apply the induction hypothesis to the chain $\mathcal{C}' := N_2 \dots R$, so that the change in non-duplicated potential due to contributions 1 and 2 (in the hypothesis of the Lemma) on $\mathcal{C}'$ is less than or equal to:
		\begin{enumerate}
		\item $-O_{N_2,N_3} + (l-1) -1$, if $O_{N_2,N_3}$ had a flagged packet at the start of $\mathtt{t}$ that was already accepted by $N_3$.
		\item $-O_{N_2,N_3} + (l-1) -2$, otherwise.
		\end{enumerate}
	Adding these contributions to \eqref{e1432}, we have that:
		\begin{alignat}{2} \label{dream}
		\phi &\leq (-O_{N_1,N_2}+ I_{N_2,N_1}+1) + (-O_{N_2,N_3} + (l-1) -x) \notag \\
		&= (-O_{N_1,N_2} + l -2) +(-O_{N_2,N_3} + I_{N_2,N_1}) + (2-x),
		\end{alignat}
	where $x=1$ or 2, depending on whether we are in case (a) or (b) above.  Since the first term of \eqref{dream} matches \eqref{secondC} and the latter two terms match the latter two terms of \eqref{1180t}, we follow the argument of \textsf{Case 1} above to conclude the proof.
	\end{enumerate}
\vspace{-24pt}
\end{proof}
\section{\hspace{-11.46pt}Routing Against a (Node-Controlling\small{+}\Large{Edge-Scheduling) Adversary}} \label{aP}
\subsection{Definitions and High-Level Description of the Protocol} \label{adProtocolyLong}
\indent \indent In this section, we define the variables that appear in the next section and describe how they will be used.

As in the protocol for the edge-scheduling adversary model, the sender first converts the input stream of messages into codewords, and then transmits a single codeword at a time.  The sender will allow (at most) $4D$ rounds for this codeword to reach the receiver (for the edge-scheduling protocol, we only allowed $3D$ rounds; the extra $D$ rounds will be motivated below).  We will call each attempt to transfer a codeword a {\it transmission}, usually denoted by $\mathtt{T}$.  At the end of each transmission, the receiver will broadcast an {\it end of transmission} message, indicating whether it could successfully decode the codeword.  In the case that the receiver cannot decode, we will say that the transmission {\it failed}, and otherwise the transmission was {\it successful}.

As mentioned in Section \ref{highlights}, in the absence of a node-controlling adversary, the only difference between the present protocol and the one presented in Section \ref{nAP} is that digital signatures are used to authenticate the sender's packets and also accompany packet transfers for later use to identify corrupt nodes.  In the case a transmission fails, the sender will determine the reason for failure (cases 2-4 from Section \ref{highlights}, and also F2-F4 below), and request nodes to return {\it status reports} that correspond to a particular piece of signed communication between each node and its neighbors.  We will refer to status report packets as {\it parcels} to clarify discussion in distinguishing them from the codeword {\it packets}.

We give now a brief description of how we handle transmission failures in each of the three cases:
	\begin{enumerate}\setlength{\itemsep}{6pt} \setlength{\parskip}{0pt} \setlength{\parsep}{0pt}
	\item[\textsf{F2.}] The receiver could not decode, and the sender has inserted less than $D$ packets
	\item[\textsf{F3.}] The receiver could not decode, the sender has inserted $D$ packets, and the receiver has {\it not} received any duplicated packets corresponding to the current codeword
	\item[\textsf{F4.}] The receiver could not decode and cases F2 and F3 do not happen
	\end{enumerate}
Below is a short description of the specific kind of information nodes will be required to sign and store when communicating with their neighbors, and how this information will be used to identify corrupt nodes in each case F2-F4.
	\begin{itemize}\setlength{\itemsep}{4pt} \setlength{\parskip}{0pt} \setlength{\parsep}{0pt}
	\item[] \textsf{Case F2.}  Anytime a packet at height $h$ in an outgoing buffer of $A$ is transferred to an incoming buffer of $B$ at height $h'$, $A$'s potential will drop by $h$ and $B$'s potential will increase by $h'$.  So for directed edge $E(A,B)$, $A$ and $B$ will each need to keep two values, the cumulative decrease in $A$'s potential from packets leaving $A$, and the cumulative increase in $B$'s potential from those packets entering $B$.  These quantities are updated every time a packet is transferred across the edge, along with a tag indicating the round index and a signature from the neighbor validating the quantities and round index.  Loosely speaking, case F2 corresponds to {\it packet duplication}.  If a corrupt node attempts to slow transmission by duplicating packets, that node will have introduced extra potential in the network that cannot be accounted for, and the signing of potential changes will allow us to identify such a node.
	
	\item[] \textsf{Case F3.} $A$ and $B$ will keep track of the net number of packets that have travelled across edge $E(A,B)$.  This number is updated anytime a packet is passed across the edge, and the updated quantity, tagged with the round index, is signed by both nodes, who need only store the most recent quantity.  Loosely speaking, case F3 corresponds to {\it packet deletion}.  In particular, the information signed here will be used to find a node who input more packets than it output, and such that the node's capacity to store packets in its buffers cannot account for the difference.

	\item[] \textsf{Case F4.} For each packet $p$ corresponding to the current codeword, $A$ and $B$ will keep track of the net number of times the packet $p$ has travelled across edge $E(A,B)$.  This quantity is updated every time $p$ flows across the edge, and the updated quantity, tagged with the round index, is signed by both nodes, who for each packet $p$ need only store the most recent quantity.  We will show in Section \ref{MProofs} that whenever case F4 occurs, the receiver will have necessarily received a duplicated packet (corresponding to the current codeword).  Therefore, the information signed here will allow the sender to track this duplicated packet, looking for a node that outputted the packet more times than it inputted the packet.
	\end{itemize}
We will prove that whenever cases F2-F4 occur, if the sender has all of the relevant quantities specified above, then he will necessarily be able to identify a corrupt node.  Notice that each case of failure requires each node to transfer back only {\it one} signed quantity for each of its edges, and so the sender only needs $n$ status report packets from each node.

We will show that the maximum number of failed transmissions that can occur before a corrupt node is necessarily identified and eliminated is $n$.\footnote{As mentioned in Section \ref{highlights}, the sender can eliminate a corrupt node as soon as he has received the status reports from every non-blacklisted node.  The reason we require up to $n$ transmissions to guarantee the identification of a corrupt node is that it may take this long for the sender to have the complete information he needs.}  Because there are fewer than $n$ nodes that can be corrupted by the adversary, the cumulative number of failed transmissions is bounded by $n^2$.  This provides us with our main theorem regarding efficiency, stated precisely in Theorem \ref{mainAdThm} in Section \ref{adAnalysis} (note that the additive term that appears there comes from the $n^2$ failed transmissions).
\subsection{Detailed Description of the Node-Controlling $+$ Edge-Scheduling Protocol}\label{adDescLong}
\indent \indent In this section we give a more thorough description of our routing protocol for the node-controlling adversary model.  Formal pseudo-code can be found in Section \ref{MpseudoCode}.\\ \hspace*{\fill} \\
\noindent{\bf Setup.} As in the edge-scheduling protocol, the sender has a sequence of messages $\{m_1, m_2, \dots\}$ that he will divide into {\it codewords} $\{b_1, b_2, \dots\}$.  However, we demand each message has size $M' =\frac{6\sigma (P-2k)n^3}{\lambda}$, so that codewords have size $C' = \frac{M'}{\sigma}$, and these are then divided into packets of size $P' = P-2k$, which will allow packets to have enough room\footnote{The network is equipped with some minimal {\it bandwidth}, by which we mean the the number of bits that can be transferred by an edge in a single round.  We will divide codewords into blocks of this size and denote the size by $P$, which therefore simultaneously denotes the size of any packet and also the network bandwidth.  As in the edge-scheduling protocol, we assume $P > O(k+\log n)$, so that $P'$ is well-defined.} to hold two signatures of size $k$.  Notice that the number of packets per codeword is $D = \frac{C'}{P'} = \frac{6(P-2k)n^3}{(P-2k)\lambda} = \frac{6n^3}{\lambda}$, which matches the value of $D$ for the edge-scheduling protocol.  One of the signatures that packets will carry with them comes from the sender, who will authenticate every packet by signing it, and the packets will carry this signature until they are removed from the network by the receiver.  We re-emphasize Fact 1 from the edge-scheduling protocol, which remains true with these new values:
	\begin{itemize}\setlength{\itemsep}{1pt} \setlength{\parskip}{0pt} \setlength{\parsep}{0pt}
	\item[] {\bf Fact 1$'$.}  If the receiver has obtained $D-6n^3 = (1-\lambda)\left(\frac{6n^3}{\lambda}\right)$ distinct and un-altered packets from any codeword, he will be able to decode the codeword to obtain the corresponding message.
	\end{itemize}
The primary difference between the protocol we present here and that presented in Section \ref{nAP} is the need to maintain and transmit information that will allow the sender to identify corrupt nodes.  To this end, as part of the Setup, each node will have additional buffers:
	\begin{itemize}\setlength{\itemsep}{6pt} \setlength{\parskip}{0pt} \setlength{\parsep}{0pt}
	\item[3.] {\it Signature Buffers.}  Each node has a signature buffer along each edge to keep track of (outgoing/incoming) information exchanged with its neighbor along that edge.  The signature buffers will hold information corresponding to changes in the following values {\it for a single transmission}.  The following considers $A$'s signature buffer along directed edge $E(A,B)$: 1) The net number of packets passed across $E(A,B)$; 2) $B$'s cumulative change in potential due to packet transfers across $E(A,B)$.  Additionally, for codeword packets corresponding to the current transmission only, the signature buffers will hold: 3) For each packet $p$ that $A$ has seen, the net number of times $p$ has passed across $E(A,B)$ during the current transmission.  
	
	\hspace{.5cm}Each of the three items above, together with the current round index and transmission index, ahve been signed by $B$.  Since only a single (value, signature) pair is required for items 1)-2), while item 3) requires a (value, signature) pair for each packet, each signature buffer will need to hold at most $2+D$ (value, signature) pairs.

	\item[4.] {\it Broadcast Buffer.}  This is where nodes will temporarily store their neighbor's (and their own) state information that the sender will need to identify malicious activity.  A node $A$'s broadcast buffer will be able to hold the following (signed) information: 1) One {\it end of transmission} parcel (described below); 2) Up to $n$ different {\it start of transmission} parcels (described below); 3) A list of blacklisted nodes and the transmission they were blacklisted or removed from the blacklist; 4) For each $B \in G$, a list of nodes for which $B$ has claimed knowledge of their status report; and 5) For each $B \in G$ (including $B=A$), $A$'s broadcast buffer can hold the content of up to $n$ slots of $B$'s signature buffer.  Additionally, the broadcast buffer will also keep track of the {\it edges} across which it has already passed broadcasted information.
	
	\item[5.] {\it Data Buffer.}  This keeps a list of 1) All nodes that have been identified as corrupt and eliminated from the network by the sender; 2) Currently blacklisted nodes; and 3) Each node $A \in G$ will keep track of all pairs $(N_1, N_2)$ such that $N_2$ is on the blacklist, and $N_1$ claims to know $N_2$'s complete status report.
	\end{itemize}
The sender's buffers are the same as in the edge-scheduling protocol's Setup, with the addition of four more buffers:
	\begin{itemize}\setlength{\itemsep}{6pt} \setlength{\parskip}{0pt} \setlength{\parsep}{0pt}
	\item[-] {\it Data Buffer.}  Stores all necessary information from the status reports, as well as additional information it will need to identify corrupt nodes.  Specifically, the buffer is able to hold: 1) Up to $n$ status report parcels from each node; 2) For up to $n$ transmissions, the reason for failure, including the label of a duplicated packet (if relevant); 3) For up to $n$ failed transmissions, a {\it participating list} of up to $n$ nodes that were not on the blacklist for at least one round of the failed transmission; 4) A list of eliminated nodes and of blacklisted nodes and the transmission they were blacklisted; 5) The same as items 1)-3) of an internal node's Signature Buffer; and 6) The same as item 3) of an internal node's data buffer.

	\item[-] {\it Broadcast Buffer.} Holds up to $2n$ {\it start of transmission} parcels and the labels of up to $n-1$ nodes that should be removed from the blacklist.

	\item[-] {\it Copy of Current Packets Buffer.}  Maintains a copy of all the packets that are being sent in the current transmission (to be used any time a transmission fails and needs to be repeated).
	\end{itemize}
The receiver's buffers are as in the Edge-Scheduling protocol's Setup, with the addition of a {\it Broadcast Buffer}, {\it  Data Buffer}, and {\it Signature Buffers} which are identical to those of an internal node.

The rest of the Setup is as in the edge-scheduling model, with the added assumption that each node receives a private key from a trusted third party for signing, and each node receives public information that allows them to verify the signature of every other node in the network.\\ \hspace*{\fill} \\
\noindent {\bf Routing Phase.}  As in the edge-scheduling protocol, rounds consist of two stages followed by re-shuffling packets locally.  The main difference between the two protocols will be the addition of signatures to all information, as well as the need to transmit the broadcast information, namely the status reports and {\it start} and {\it end of transmission} broadcasts, which inform the nodes of blacklisted and eliminated nodes and request status reports in the case a transmission fails.  The two stages of a round are divided as they were for the edge-scheduling protocol, with the same treatment of routing codeword packets (with the addition of signatures).  However, we will also require that each round allows all edges the opportunity to transmit broadcast information (e.g.\ status report parcels).  Therefore, for every directed edge and every round of a transmission, there are four main packet transfers, two in each direction: codeword packets, broadcast parcels, and signatures confirming receipt of each of these.  The order of transmitting these is succinctly expressed below in Figure \ref{routingPic}.

At the start of any transmission $\mathtt{T}$, the sender will determine which codeword is to be sent (the next one in the case the previous transmission was successful, or the same one again in the case the previous transmission failed).  He will fill his Copy Of Current Packets Buffer with a copy of all of the codeword packets (to be used in case the transmission fails), and then fill his outgoing buffers with packets corresponding to this codeword.  All codeword packets are signed by the sender, and these signatures will remain with the packets as they travel through the network to the receiver.
\begin{figure}[ht]
\begin{center}
$\begin{array}{c|lcl} \mbox{\underline{\Large Stage}} & \mbox{\hspace{1cm} \underline{\hspace{1.5cm} \Large A \hspace{1.5cm}}} & & \hspace{1cm}\mbox{\underline{\hspace{1.5cm} \Large B \hspace{1.5cm}}} \\  & & & \\ & H_A:=\mbox{Height of buffer along ${\scriptstyle E(A,B)}$} & & \\ \begin{array}{c} \\ 1\end{array} & \hspace{-5pt}\begin{array}{l} \mbox{Height of flagged p$.$ (if there is one)} \\ \mbox{Round prev$.$ packet was sent} \end{array} & \longrightarrow & \\ & \mbox{Confirmation of rec$.$ of broadcast info$.$} & & \\& & \longleftarrow & \begin{array}{l} H_B\mbox{:=Height of buffer along ${\scriptstyle E(A,B)}$} \\ \mbox{Round prev$.$ packet was received} \\ \mbox{Sig's on values for edge ${\scriptstyle E(A,B)}$} \end{array}\\ \hline & \mbox{Send p$.$ and Sig's on values for ${\scriptstyle E(A,B)}$ if:} & & \thickspace \mbox{Receive packet if:}\\ & \hspace{.08cm} {\scriptscriptstyle \bullet} \hspace{.05cm}  \mbox{\small ``Enough'' bdcst info has passed ${\scriptstyle E(A,B)}$, }\mbox{\scriptsize AND}& & \hspace{.2cm} {\scriptscriptstyle \bullet} \hspace{.05cm}  \mbox{\small ``Enough'' bdcst info has passed ${\scriptstyle E(A,B)}$}\\ 2&  \hspace{.08cm} {\scriptscriptstyle \bullet} \hspace{.05cm}  \mbox{\small $B$ is not on $A$'s blacklist/eliminated, }\mbox{\scriptsize AND}&  \longrightarrow & \hspace{2cm} \mbox{\scriptsize AND}\\  & \hspace{1cm} - \hspace{.1cm} H_A > H_B \quad \mbox{\footnotesize OR} & & \hspace{.2cm} {\scriptscriptstyle \bullet} \hspace{.05cm}  \mbox{\small $A$ is not on $B$'s blacklist/eliminated}\\ & \hspace{1cm} - \hspace{.1cm} \mbox{\small $B$ didn't rec$.$ prev$.$ packet sent} & & \\ & & & \\ & & \longleftarrow & \mbox{Broadcast Information}\end{array}$
\end{center}
\vspace{-.6cm}
\caption{\it Description of Communication Exchange Along Directed Edge $E(A,B)$ During the Routing Phase of Some Round.}
\label{routingPic}
\end{figure}

To compliment Figure \ref{routingPic} above, we provide a breif description of the information that should be passed across directed edge $E(A,B)$ ($B \neq S$ and $A \neq R$, and $A, B \in G$, i.e.\ not eliminated) during some transmission $\mathtt{T}$.  The precise and complete description can be found in the pseudo-code of Section \ref{MpseudoCode}.  We state once and for all that if a node ever receives inaccurate or mis-signed information, it will act as if no information was received at all (e.g.\ as if the edge had failed for that stage).
	\begin{itemize}\setlength{\itemsep}{8pt} \setlength{\parskip}{0pt} \setlength{\parsep}{8pt}
	\item[] {\bf Stage 1.}  $A$ will send the same information to $B$ as in the edge-scheduling protocol (height, height of flagged packet, round packet was flagged).  Also, if $A$ received a valid broadcast buffer from $B$ in Stage 2 of the previous round, then $A$ will send $B$ confirmation of this fact.  Also, if $A$ knows that $B$ has crucial broadcast information $A$ needs, $A$ specifies the type of broadcast information he wants from $B$.  Meanwhile, $A$ should receive the seven items that $B$ signed and sent (see below), updating his internal variables as in the edge-scheduling protocol and updating its signature buffer, {\it provided} $B$ has given a valid\footnote{Here, ``valid'' means that $A$ agrees with all the values sent by $B$, and $B$'s signature is verified.} signature.\vspace{.35cm}
	
	At the other end, $B$ will send the following seven items to $A$: 1) the transmission index; 2) index of the current round; 3) current height;  4) index of the round $B$ last received a packet from $A$; 5) the net change in packet transfers so far across $E(A,B)$ for the current transmission; 6) $B$'s cumulative increase in potential due to packet transfers across $E(A,B)$ in the current transmission; and 7) if a packet $p$ was sent and received in Stage 2, the net number of times $p$ has been transferred across $E(A,B)$ for the current transmission.  Meanwhile, $B$ will also receive the information $A$ sent.

	\item[] {\bf Stage 2.}  $A$ will send a packet to $B$ under the same conditions as in the edge-scheduling protocol, with the additional conditions: 1) $A$ has received the sender's {\it start of transmission} broadcast (see below), and this information has passed across $E(A,B)$ or $E(B,A)$; 2) $A$ and $B$ are not on ($A$'s version of) the blacklist; 3) $A$ does not have any {\it end of transmission} information not yet passed across $E(A,B)$ or $E(B,A)$; and 4) $A$ does not have any {\it changes to blacklist} information yet to pass across $E(A,B)$ or $E(B,A)$.  We emphasize these last two points: if $A$ (including $A=S$) has any {\it start of transmission}, {\it end of transmission}, or {\it changes to blacklist} information in its broadcast buffer that it has not yet passed along edge $E(A,B)$, then it will not send a packet along this edge.
	
	\hspace{.5cm}If $A$ does send a packet, the ``packet'' $A$ sends includes a signature on the following seven items\footnote{Recall that packets have room to hold two signatures.  The first will be the sender's signature that accompanies the packet until the packet is removed by the receiver.  The second signature is the one indicated here, and this signature will be replaced/overwritten by the sending node every time the packet is passed across an edge.}: 1) transmission index; 2) index of the current round; 3) the packet itself with sender's signature;  4) index of the round $A$ first tried to send this packet to $B$; 5) {\it One plus} the net change in packet transfers so far across $E(A,B)$ for the current transmission; 6) $A$'s cumulative decrease in potential due to packet transfers across $E(A,B)$ in the current transmission {\it including the potential drop due to the current packet being transferred}; and 7) {\it One plus} the net number of times the packet currently being transferred has been transferred across $E(A,B)$ for the current transmission.
	
	\hspace{.5cm}Also, $A$ should receive broadcast information (if $B$ has something in its broadcast buffer not yet passed along $E(A,B)$) and update its broadcast buffer as described by the {\it Update Broadcast Buffer Rules} below.\vspace{.35cm}

	At the other end, $B$ will receive and store the packet sent by $A$ as in the edge-scheduling protocol, updating his signature buffer appropriately, with the added conditions: 1) $B$ has received the sender's {\it start of transmission} broadcast, and this information has been passed across $E(A,B)$ or $E(B,A)$; 2) The packet has a valid signature from $S$; 3) $A$ and $B$ are not on ($B$'s version of) the blacklist; 4) $B$ does not have any {\it end of transmission} information not yet passed across $E(A,B)$ or $E(B,A)$; 5) $B$ does not have any {\it changes to blacklist} information yet to pass across $E(A,B)$ or $E(B,A)$; and 6) The signatures on the seven items included with the packet from $A$ is ``valid$.$\setcounter{validFootnote}{\value{footnote}}\addtocounter{validFootnote}{-1}\footnotemark[\value{validFootnote}]''  Additionally, if there is anything in $B$'s broadcast buffer that has not been transferred along $E(A,B)$ yet, then $B$ will send one parcel of broadcast information chosen according to the priorities: 1) The receiver's {\it end of transmission} parcel; 2) One of the sender's {\it start of transmission} parcels; 3) Changes to the blacklist or a node to permanently eliminate; 4) The identity of a node $N$ on $B$'s blacklist for which $B$ has complete knowledge of $N$'s status report; 5) The most recent status report parcel $A$ requested in Stage 1 of an earlier round; and 6) Arbitrary status report parcels.
	\end{itemize}
For any edge $E(A,S)$ or $E(R,B)$, only broadcast information is passed along these directed edges, and this is done as in the rules above, with the exceptions mentioned below in the {\it Update Broadcast Buffer Rules} for $S$ and $R$. Additionally, any round in which the sender is unable to insert any packets, he will increase the number of blocked rounds in his data buffer.  The sender will also keep track of the total number of packets inserted in the current transmission in his data buffer.\\ \hspace*{\fill} \\
\noindent{\bf Re-Shuffle Rules.}  The Re-Shuffle rules are exactly as in the edge-scheduling protocol, with the exception that node's record the changes in non-duplicated potential caused by re-shuffling packets locally.\\ \hspace*{\fill} \\
\noindent{\bf Update Broadcast Buffer Rules for Internal Nodes.}
Looking at Stage 2 of the Routing Rules above, we have that in every round and for every edge $E(A,B)$, each node $B$ will have the opportunity to send and receive exactly one parcel of {\it broadcast} information in addition to a single codeword packet.  The order in which a node's broadcast buffer information is transmitted is described above in the Routing Rules.

For any node $A \in \mathcal{P} \setminus \{R,S\}$, we now describe the rules for updating their broadcast buffer when they receive broadcast information.  Assume that $A$ has received broadcast information along one of its edges $E(A,B)$, and has verified that it has a valid signature.  We describe how $A$ will update its Broadcast Buffer, depending on the nature of the new information:
	\begin{itemize}\setlength{\itemsep}{8pt} \setlength{\parskip}{0pt} \setlength{\parsep}{0pt}
	\item[] \textsf{The received information is the receiver's {\it end of transmission} broadcast} (see below).  In this case, $A$ will first make sure the transmission index is for the current transmission, and if so, the information is added to $A$'s broadcast buffer, and edge $E(A,B)$ is marked as having already transmitted this information.

	\item[] \textsf{The received information is the sender's {\it start of transmission} broadcast} (see below).  This broadcast consists of a single parcel containing information about the previous transmission, followed by up to $2n-2$ additional parcels (describing blacklisted/eliminated nodes and labels of up to $n$ previous transmissions that have failed).  When $A$ receives a parcel from the {\it start of transmission} broadcast, if $A$ does not already have it stored in its broadcast buffer, it will add it, and edge $E(A,B)$ is marked as having already transmitted this information.  Additionally, $A$ will handle the parcels concerning blacklisted nodes as described below.  Finally, when $A$ has received every parcel in the {\it start of transmission} broadcast, it will also remove from its blacklist any node {\it not} implicated in this broadcast (i.e.\ this will count as ``$A$ receives information concerning a node to remove from the blacklist,'' see below), as well as clearing its signature buffers for the new transmission.

	\item[] \textsf{The received information indicates a node $N$ to eliminate.}  If the information is current, then $A$ will add the new information to its broadcast buffer and mark edge $E(A,B)$ as already having passed this information.  If $N$ is not already on $A$'s list of eliminated nodes $EN$, then $A$ will add $N$ to $EN$ (in its data buffer), clear all of its incoming and outgoing buffers, its signature buffers, and its broadcast buffer (with the exception of {\it start of transmission} parcels).

	\item[] \textsf{The received information concerns a node $N$ to add to or remove from the blacklist.}  If the received information {\it did not} originate in the current transmission (as signed by the sender) or $A$ has more recent blacklist information regarding $N$, then $A$ ignores the new information.  Otherwise, the information is added to $A$'s broadcast buffer, and edge $E(A,B)$ is marked as having already transmitted this information.  Additionally, parcels that are now outdated in $A$'s broadcast buffer are deleted (such as $N$'s status report parcels or a parcel indicating a node $\widehat{N}$ had $N$'s complete status report).
	
	\hspace{.5cm}If $N=A$, then $A$ adds $n$ of its own status report parcels to its broadcast buffer, choosing these $n$ parcels based on information from the relevant {\it start} and {\it end of transmission} parcels.  $A$ also will add his own signature to each of these parcels, so that they each one will carry two signatures back to the sender ($A$'s signature and the relevant neighbor's signature).
	
	\item[] \textsf{The received information indicates $B$ has some node $N$'s complete status report.}  If $N$ is on $A$'s blacklist for the transmission $B$ claims knowledge for, then $A$ stores the fact that $B$ has complete knowledge of $N$ in its data buffer ($A$ will later use this information when requesting a specific parcel from $B$).

	\item[] \textsf{The received information pertains to some node $N$'s status report the sender has requested.}  $A$ will first make sure that $N$ is on its version of the blacklist.  If so, the information is added to $A$'s broadcast buffer, and edge $E(A,B)$ is marked as having already transmitted this information.  If this completes $A$'s knowledge of $N$'s status report, $A$ will add to its broadcast buffer the fact that it now knows all of $N$'s missing status report.
	\end{itemize}
At the end of a transmission, all nodes will clear their broadcast buffers {\it except} parcels concerning a blacklisted node's status report.  All nodes also clear their version of the blacklist (it will be restored at the beginning of the next transmission).\\ \hspace*{\fill} \\
\noindent{\bf Update Broadcast Buffer Rules for Sender and Receiver.}  The receiver has the same rules as internal nodes for updating its broadcast buffer, with the addition that when there are exactly $n$ rounds left in any transmission\footnote{We note that because there is always an active honest path between sender and receiver, and the receiver's final broadcast has top priority in terms of broadcast order, the sender will necessarily receive this broadcast by the end of the transmission.  Alternatively, we could modify our protocol to add an extra $n$ rounds to allow this broadcast to reach the sender.  However, the exposition is easier without adding an extra $n$ rounds, and we will show that wasting the final $n$ rounds of a transmission by having the receiver determine if it can decode with $n$ rounds still left is not important, as the $n$ wasted rounds is insignificant compared to the $O(n^3)$ rounds per transmission.}, $R$ will add to its broadcast buffer a single (signed) parcel that indicates the transmission index, whether or not he could decode the current codeword, and the label of a duplicate packet he received (if there was one).  We will refer to this as the receiver's {\it end of transmission} parcel.

The rules for the sender updating its broadcast buffer are slightly more involved, as the sender will be the one determining which information it requires of each node, as well as managing the blacklisted and eliminated nodes.  The below rules dictate how $S$ will update his broadcast buffer and status buffer at the end of a transmission, or when the sender receives new (appropriately signed) broadcast information along $E(S,B)$.
	\begin{itemize}\setlength{\itemsep}{8pt} \setlength{\parskip}{0pt} \setlength{\parsep}{0pt}
	\item[] \textsf{A transmission $\mathtt{T}$ has just ended.}  Note that the sender will have necessarily received and stored the receiver's {\it end of transmission} broadcast by the end of the transmission (Lemma \ref{finalBC}).  In the case that the transmission was successful, $S$ will clear his outgoing buffers and Copy of Old Packets Buffer, then re-fill them with codewords corresponding to the next message.  If $EN$ denotes the eliminated nodes and $\mathcal{B}_{\mathtt{T}}$ denotes the nodes on the sender's blacklist at the end of this transmission, then the sender will set $\Omega_{\mathtt{T}+1} = (|EN|, |\mathcal{B}_{\mathtt{T}}|, F, 0)$, where $F$ denotes the number of failed transmissions that have taken place since the last corrupt node was eliminated.  Finally, the sender will delete the information in his data buffer concerning the number of packets inserted and the number of blocked rounds for the just completed transmission.\vspace{.2cm}
	
	In the case that the transmission failed, $S$ will clear his outgoing buffers and then re-fill them using the Copy of Old Packets Buffer, while leaving the latter buffer unchanged.  The sender will determine the reason the transmission failed (F2-F4), and add this fact along with the relevant information from his own signature buffers to the data buffer.  For any node (not including $S$ or $R$ or eliminated nodes) not on the sender's blacklist, the sender will add the node to the {\it participating list} $\mathcal{P}_{\mathtt{T}}$ in his data buffer, and then add each of these nodes to the blacklist, recording that $\mathtt{T}$ was the most recent transmission that the node was blacklisted.  Also, the sender will set:
		\begin{equation*}
		\Omega_{\mathtt{T}+1} = \left\{ \begin{array}{cl} (|EN|, |\mathcal{B}_{\mathtt{T}}|, F, p) & \mbox{if the transmission failed and $p$ was included in} \\ & \quad \mbox{the receiver's {\it end of transmission} parcel} \\ (|EN|, |\mathcal{B}_{\mathtt{T}}|, F, 1) & \mbox{if the transmission failed and $S$ inserted $D$ packets} \\ (|EN|, |\mathcal{B}_{\mathtt{T}}|, F, 2) & \mbox{otherwise}\end{array} \right.
		\end{equation*}
	For both a failed transmission and a successful transmission, the sender will sign and add to his broadcast buffer the following parcels, which will comprise the sender's {\em Start of Transmission} (SOT) broadcast: $\Omega_{\mathtt{T}+1}$, a list of eliminated and blacklisted nodes, and the reason for failure of each of the last (up to $n-1$) failed transmissions since the last node was eliminated.  Note that each parcel added to the broadcast buffer regarding a blacklisted node includes the index of the transmission for which the node was blacklisted, and all parcels added to the broadcast buffer include the index of the transmission about to start (as a timestamp) and are signed by $S$.  Notice that the rules regarding priority of transferring broadcast information guarantee that $\Omega_{\mathtt{T}+1}$ will be the first parcel of the {\it SOT} that is received, and because it reveals the number of blacklisted and eliminated nodes and the number of failed transmissions to expect, as soon as each node receives $\Omega_{\mathtt{T}+1}$, they will know exactly how many more parcels remain in the {\it SOT} broadcast.  Nodes will not be allowed to transfer any (codeword) packets until the {\it SOT} broadcast for the current transmission is received in its entirety.
	
	\item[] \textsf{The sender receives the receiver's {\it end of transmission} parcel for the current transmission}.  The sender will store this parcel in its data buffer.

	\item[] \textsf{The sender receives information along $E(S,B)$ indicating $B$ has some node $N$'s complete status report.}  If $N$ is on the sender's blacklist for the transmission $B$ claims knowledge for, then the sender stores the fact that $B$ has complete knowledge of $N$ in its data buffer (the sender will later use this information when requesting a specific parcel from $B$).
	
	\item[] \textsf{The sender receives information along $E(S,B)$ that pertains to some node $N$'s status report that the sender has requested.}  The sender will first make sure that $N$ is on its blacklist and that the parcel received contains the appropriate information, i.e.\ the sender checks its data buffer to see which transmission $N$ was added to the blacklist and the reason this transmission failed, and makes sure the status report parcel is from this transmission and contains the information corresponding to this reason for failure.  If the parcel has faulty information that has been signed by $N$, i.e.\ $N$ sent back information that was not requested by the sender, then $N$ is eliminated from the network.  Otherwise, the sender will add the information to its data buffer.  If the information completes $N$'s missing status report, the sender updates his broadcast buffer indicating $N$'s removal from the blacklist, including the index of the current transmission and his signature.
	
	\hspace{.5cm}The sender will then determine if he has enough information to eliminate a corrupt node $N'$.  If so, $N'$ will be added to his list of {\it Eliminate Nodes}, and his broadcast buffer and data buffer will be wiped completely ({\it except} for the list of eliminated nodes).  Also, he will abandon the current transmission and begin a new one corresponding to the same codeword.  In particular, he will clear his outgoing buffers and re-fill them with the codewords in his Copy of Old Packets Buffer, leaving the latter unchanged, and he will skip to the ``A transmission has just ended'' case above, setting the {\it start of transmission} parcel $\Omega_{\mathtt{T}+1} = (|EN|, 0, 0, 0)$.
	\end{itemize}
\subsection{Analysis of Our Node-Controlling $+$ Edge-Scheduling Protocol} \label{adAnalysis}
\indent \indent We state our results concerning the correctness, throughput, and memory of our adversarial routing protocol, leaving the analysis and proofs to Section \ref{MProofs}.
\begin{thm} \label{mainAdThm} Except for the at most $n^2$ transmissions that may fail due to malicious activity, our Routing Protocol enjoys linear throughput.  More precisely, after $x$ transmissions, the receiver has correctly outputted at least $x-n^2$ messages.  If the number of transmissions $x$ is quadratic in $n$ or greater, than the failed transmissions due to adversarial behavior become asymptotically negligible.  Since a transmission lasts $O(n^3)$ rounds and messages contain $O(n^3)$ bits, information is transferred through the network at a linear rate.\end{thm}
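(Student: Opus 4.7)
The plan is to decompose the sequence of $x$ transmissions into successful and failed ones, bound the number of failures by $n^2$, and show that every successful transmission delivers exactly one message to the receiver in the guaranteed $O(n^3)$ rounds per transmission. Correctness of successful transmissions follows from the invariant that $S$ signs every codeword packet, so any packet $R$ accepts is an authentic packet of the intended codeword; combined with Fact~$1'$, once $R$ has $D - 6n^3$ such packets it decodes correctly, and the transmission index carried in the sender's {\it start of transmission} broadcast (and stamped on every signed communication) ensures the decoded codeword is the one the sender intended during that transmission.

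Next I would adapt the throughput analysis of Theorem~\ref{nonAdTheorem} to show that every transmission lasts at most $4D = O(n^3)$ rounds and that in every transmission which does not fail, $R$ successfully decodes. The potential-based argument of Lemmas~\ref{item3}, \ref{potentialDrop}, and \ref{chainL2} carries over essentially unchanged for packet routing among non-blacklisted, non-eliminated nodes, since packets are still signed and the Slide-protocol skeleton is unchanged; the extra $D$ rounds (compared to the $3D$ bound in the edge-scheduling setting) absorbs both the overhead of simultaneously circulating broadcast information (start/end-of-transmission and status-report parcels) and any rounds ``wasted'' because an honest node was transiently on the blacklist. Bounding this wasted-rounds contribution is the content of Lemma~\ref{maxWasted}, which I would invoke: intuitively, an honest node that lies on the active path of even one round of a later transmission will push its $n$-parcel status report back to $S$ along the conforming path in time for $S$ to unblacklist it.

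The heart of the proof, and the main obstacle, is the bound of $n^2$ failed transmissions. I would argue that between any two successive eliminations of corrupt nodes at most $n$ transmissions can fail. Concretely, in any failed transmission $\mathtt{T}$, $S$ records a {\it participating list} $\mathcal{P}_{\mathtt{T}}$ of every currently non-blacklisted non-eliminated node and blacklists all of them, so they cannot route codeword packets again until $S$ has their full status report. Because $|\mathcal{P}_{\mathtt{T}}| \leq n$ and each failed transmission strictly adds at least one new node to the blacklist (else $S$ would already have had everyone's status reports and identified a corrupt node, a contradiction with the next paragraph), after at most $n$ consecutive failures either $S$ has assembled the full status report of some participating list, or else some unresponsive node has been blacklisted long enough (across active paths) that $S$ can conclude it is corrupt and eliminate it. Since at most $n-2$ nodes are ever corruptible, this yields the global bound of $n^2$ failed transmissions.

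The remaining and technically most delicate step is to show that whenever $S$ has the complete status reports for some failed transmission's participating list, he can identify at least one corrupt node. This is where the signed per-edge quantities --- net packet count, cumulative potential change, and per-packet net transfer count for the current codeword --- become essential, and the argument splits into cases F2, F3, F4 distinguished by the content of $R$'s {\it end of transmission} parcel together with $S$'s own insertion counter. In case F3 (fewer than $D$ packets inserted, no reported duplicate), the edge-scheduling throughput analysis (Lemma~\ref{potentialDrop} and Theorem~\ref{nonAdTheorem}) shows that no such shortfall is possible without extra potential being created; summing the signed potential-change quantities over all edges of $\mathcal{P}_{\mathtt{T}}$ must exhibit some node whose outgoing potential drop exceeds the allowable contribution from its incoming gain and buffer capacity, which is the culprit. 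In case F4 (a duplicated packet $p$ reaches $R$), the signed per-packet counters trace $p$ along the nodes that forwarded it and locate a node that output $p$ strictly more times than it received it. In case F2 ($D$ packets inserted but $R$ cannot decode and reports no duplicate), a global conservation argument on the signed net-transfer numbers, combined with the $O(n^2)$ buffer capacity per node, pinpoints a node whose signed inputs exceed its signed outputs by more than its buffers can hold. Verifying that F2--F4 are exhaustive, that $S$ correctly classifies each failure, and that each case yields a provably corrupt node whose signature cannot be forged (by inforgibility of the PKI) is precisely the fault-localization content of Section~\ref{adDescLong}, and is where I expect the main grinding to be. Assembling these pieces yields $x - n^2$ successful transmissions, each delivering one $O(n^3)$-bit message in $O(n^3)$ rounds, which is the claimed linear rate.
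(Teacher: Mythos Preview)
Your proposal is essentially the paper's own proof: reduce to Theorems~\ref{numRounds}, \ref{maxBadness}, and \ref{completeInfo}, then multiply the at-most-$n$ failures between eliminations by the at-most-$n$ eliminable corrupt nodes. Two small corrections are worth flagging. First, you have interchanged the labels F2 and F3 relative to the paper: in the paper F2 is the case where $S$ inserted \emph{fewer} than $D$ packets (handled by the potential argument you describe), and F3 is the case where $S$ inserted $D$ packets but $R$ received no duplicate (handled by the packet-conservation argument); your descriptions and arguments are right, only the tags are swapped. Second, the alternative you offer in the $n$-failures-per-elimination step---that an unresponsive node is eventually eliminated for being blacklisted too long---is not a mechanism in the protocol; the paper's Theorem~\ref{maxBadness} is a pure pigeonhole: each failed transmission $\mathtt{T}_i$ for which $S$ still lacks a status report contributes a distinct node that cannot appear on any later $\mathcal{P}_{\mathtt{T}_j}$, so after $n-1$ failures $\mathcal{P}_{\mathtt{T}_{n-1}}=\{S,R\}$ and Lemma~\ref{partListNotEmpty} forces success.
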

%
\vspace{-.1cm}
\begin{thm} \label{memoryThm} The memory required of each node is at most $O(n^4(k+\log n))$.\end{thm}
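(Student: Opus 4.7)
The plan is to bound the memory of each of the data structures an internal node (and similarly the sender/receiver) is required to maintain, and then argue that one particular structure dominates and yields the claimed $O(n^4(k+\log n))$ bound. I would list the structures introduced in the Setup of Section \ref{adDescLong} (incoming/outgoing codeword buffers, signature buffers, broadcast buffer, data buffer) and bound each one separately.

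First I would handle the easy pieces. The incoming and outgoing codeword buffers are unchanged from the edge-scheduling case: $2(n-2)$ buffers of capacity $2n$ packets, each packet of size $P = O(k + \log n)$, contributing $O(n^2(k+\log n))$ bits by Theorem \ref{nonAdMem} together with the extra $O(k)$ per packet for carrying the sender's signature. The data buffer stores the eliminated list, the blacklist (with a transmission index per entry), and, for each $N \in G$, the set of nodes for which $N$ claims knowledge of a status report — at most $O(n^2)$ entries of size $O(\log n + k)$, hence $O(n^2(k+\log n))$. The broadcast buffer stores at most one end-of-transmission parcel, the up to $2n-1$ parcels of a start-of-transmission broadcast, an $O(n)$-sized blacklist update list, and for each $B \in G$ up to $n$ signature-buffer slots; each parcel is of size $O(k + \log n)$, so this contributes $O(n^2(k+\log n))$ as well.

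The dominant contribution comes from the signature buffers. For each of the $O(n)$ (directed) edges adjacent to the node, items (1) and (2) of the signature buffer (net packet count and cumulative potential change) each require only a single $(\text{value}, \text{signature})$ pair of size $O(k + \log n)$. However, item (3) requires one such pair for every distinct packet $p$ corresponding to the current codeword that has crossed the edge, and there are $D = 6n^3/\lambda = O(n^3)$ packets per codeword. Thus each signature buffer can hold up to $2 + D = O(n^3)$ pairs of size $O(k + \log n)$, for $O(n^3(k+\log n))$ bits per edge. Summing over the $O(n)$ edges adjacent to a single node gives $O(n^4(k+\log n))$ bits, which absorbs all the other contributions and yields the theorem.

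The main obstacle — which is really more a bookkeeping task than a conceptual hurdle — is to verify that nothing else can grow unboundedly with the number of transmissions. Concretely, I would check that (i) the sender's data buffer, which records participating lists and reasons for failure for up to $n$ failed transmissions, is truly capped at $O(n^2)$ entries by the argument in Theorem \ref{mainAdThm} that at most $n^2$ transmissions fail before all corrupt nodes are eliminated, and (ii) the broadcast-buffer entries of internal nodes are always cleared at the end of each transmission (except for blacklisted-node status reports, which are themselves bounded by the signature-buffer bound since each status report consists of $n$ signature-buffer slots). Once these checks are in place, the bound $O(n^4(k+\log n))$ is immediate from the calculation for item (3) of the signature buffer.
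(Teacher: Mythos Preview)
Your proposal is correct and mirrors the paper's own argument: the dominant term is the per-edge signature buffer, whose item (3) requires one $(\text{value},\text{signature})$ pair of size $O(k+\log n)$ for each of the $D=O(n^3)$ codeword packets, and summing over $O(n)$ edges gives the $O(n^4(k+\log n))$ bound, with all other buffers subsumed. The only quibble is in your check (i): the reason the sender's data buffer holds information for at most $n$ failed transmissions is that it is wiped whenever a node is eliminated (so Theorem~\ref{maxBadness}, not the $n^2$ total-failure bound of Theorem~\ref{mainAdThm}, is the right reference), but this does not affect the final estimate.
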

\begin{proof}[{\it Proofs.}] See Section \ref{MProofs}.\end{proof}
\section{Pseudo-Code for Node-Controlling $+$ Edge-Scheduling Protocol}\label{MpseudoCode}
\indent \indent We now modify the pseudo-code from our edge-scheduling adversarial protocol to pseudo-code for the (node-controlling $+$ edge-scheduling) adversarial model.  The two codes will be very similar, with differences emphasized by marking the line number in {\bf bold}.  The Re-Shuffle Rules will remain the same as in the edge-scheduling protocol, with the addition of line ({\bf \ref{reShuffleRules}.76}) (see Figure \ref{reShuffleRules}).\newpage 
\begin{figure}[h!t]
\begin{center}
\framebox{\footnotesize
\begin{minipage}[b]{6.0in}
\begin{footnotesize}
\begin{tabbing}
aaa\=aa\=aa\=aa\=aa\=aaaaaaaaaa\=aaaaaaaaaaa\=aaaaaaa\=aaaaaaa\=aaaaaaaaaaaaaaaaaa\=aaaaaa\=aaaaaaaa \kill
{\bf Setup} \\
\>{\bf DEFINITION OF VARIABLES}:\\
01\>\>$n :=$ Number of nodes in $G$;\\
02\>\>$D :=\frac{3n^3}{\lambda}$;\\
03\>\>$\mathtt{T} :=$ Transmission index;\\
04\>\>$\mathtt{t} :=$ Stage/Round index;\\
{\bf 05}\>\>$k :=$ Security Parameter;\\
{\bf 06}\>\>$P :=$ Capacity of edge= $O(k+\log n)$;\\
{\bf 07}\>\>\For $N \in \mathcal{P} \setminus S$\\
{\bf 08}\>\>\>$BB \in [n^2+5n] \times \{0,1\}^{P+n}$; \>\>\>\>\>{\scriptsize \#\# Broadcast Buffer}\\
{\bf 09}\>\>\>$DB \in [1..n^2] \times \{0,1\}^P$; \>\>\>\>\>{\scriptsize \#\# Data Buffer.  Holds $BL$ and $EN$ below, and info$.$ as on line 151}\\
{\bf 10}\>\>\>\>$BL \in [1..n-1] \times \{0,1\}^{P}$;\>\>\>\>{\scriptsize \#\# Blacklist}\\
{\bf 11}\>\>\>\>$EN \in [1..n-1] \times \{0,1\}^{P}$;\>\>\>\>{\scriptsize \#\# List of Eliminated Nodes}\\
{\bf 12}\>\>\>$SIG_{N,N} \in \{0,1\}^{O(\log n)}$;\>\>\>\>\>{\scriptsize \#\# Holds change in potential due to local re-shuffling of packets}\\
13\>\>\For $N \in G$\\
{\bf 14}\>\>\>$SK, \{PK \}^n_i$ \>\>\>\>\>{\scriptsize \#\# Secret Key for signing, Public Keys to verify sig's of all nodes}\\
15\>\>\>\For {\it outgoing} {\bf edge} $E(N,B) \in G, B \neq S$ and $N \neq R$\\
16\>\>\>\>$\mathsf{OUT} \in [2n] \times \{0,1\}^{P}$; \>\>\>\>{\scriptsize \#\# Outgoing Buffer able to hold $2n$ packets}\\
{\bf 17}\>\>\>\>$SIG_{N,B} \in [D+3] \times \{0,1\}^{O(\log n)}$;\>\>\>\>{\scriptsize \#\# Signature Buffer for current trans., indexed as follows:}\\
\>\>\>\>\>\>\>\>{\scriptsize \#\# $SIG[1]$= net no$.$ of {\it current codeword} p's transferred across $E(N,B)$}\\
\>\>\>\>\>\>\>\>{\scriptsize \#\# $SIG[2]$= net change in $B$'s pot$.$ due to p$.$ transfers across $E(N,B)$}\\
\>\>\>\>\>\>\>\>{\scriptsize \#\# $SIG[3]$= net change in $N$'s pot$.$ due to p$.$ transfers across $E(N,B)$}\\
\>\>\>\>\>\>\>\>{\scriptsize \#\# $SIG[p]$= net no. of times packet $p$ transferred across $E(N,B)$}\\
18\>\>\>\>$\tilde{p} \in \{0,1\}^P \cup \bot$; \>\>\>\>{\scriptsize \#\# Copy of packet to be sent}\\
19\>\>\>\>$sb \in \{0,1\}$; \>\>\>\>{\scriptsize \#\# Status bit}\\
20\>\>\>\>$d \in \{0,1\}$; \>\>\>\>{\scriptsize \#\# Bit indicating if a packet was sent in prev$.$ round}\\
21\>\>\>\>$FR \in [0..8D] \cup \bot$; \>\>\>\>{\scriptsize \#\# Flagged Round (index of round $N$ first tried to send $\tilde{p}$ to $B$)}\\
22\>\>\>\>$RR \in [-1..8D] \cup \bot$; \>\>\>\>{\scriptsize \#\# Round Received (index of round that $N$ last rec$.$ a p$.$ from $A$)}\\
23\>\>\>\>$H \in [0..2n]$; \>\>\>\>{\scriptsize \#\# Height of $\mathsf{OUT}$. Also denoted $H_{OUT}$ when there's ambiguity}\\
24\>\>\>\>$H_{FP} \in [1..2n] \cup \bot$; \>\>\>\>{\scriptsize \#\# Height of Flagged Packet}\\
25\>\>\>\>$H_{IN}\in [0..2n] \cup \bot$; \>\>\>\>{\scriptsize \#\# Height of incoming buffer of $B$}\\
{\bf 26}\>\>\>\For {\it outgoing} {\bf edge} $E(N,B) \in G$, {\it including} $B = S$ and $N=R$\\
{\bf 27}\>\>\>\>$bp \in \{0,1\}^P$; \>\>\>\>{\scriptsize \#\# Broadcast Parcel received along this edge}\\
{\bf 28}\>\>\>\>$\alpha \in \{0,1\}^P$; \>\>\>\>{\scriptsize \#\# Broadcast Parcel request}\\
{\bf 29}\>\>\>\>$c_{bp} \in \{0,1\}$;\>\>\>\>{\scriptsize \#\# Verification bit of broadcast parcel receipt}\\
30\>\>\>\For {\it incoming} {\bf edge} $E(A,N) \in G, A \neq R$ and $N \neq S$\\
31\>\>\>\>$\mathsf{IN} \in [2n] \times \{0,1\}^{P}$; \>\>\>\>{\scriptsize \#\# Incoming Buffer able to hold $2n$ packets}\\
{\bf 32}\>\>\>\>$SIG_{A,N} \in [D+3] \times \{0,1\}^{O(\log n)}$;\>\>\>\>{\scriptsize \#\# Signature Buffer, indexed as on line 17}\\
33\>\>\>\>$p \in \{0,1\}^P \cup \bot$; \>\>\>\>{\scriptsize \#\# Packet just received}\\
34\>\>\>\>$sb \in \{0,1\}$; \>\>\>\>{\scriptsize\#\# Status bit}\\
35\>\>\>\>$RR \in \{0,1\}^{8D}$; \>\>\>\>{\scriptsize \#\# Round Received index}\\
36\>\>\>\>$H \in [0..2n]$; \>\>\>\>{\scriptsize \#\# Height of $\mathsf{IN}$. Also denoted $H_{IN}$ when there's ambiguity}\\
37\>\>\>\>$H_{GP} \in [1..2n] \cup \bot$; \>\>\>\>{\scriptsize \#\# Height of Ghost Packet}\\
38\>\>\>\>$H_{OUT}\in [0..2n] \cup \bot$; \>\>\>\>{\scriptsize \#\# Height of outgoing buffer or height of Flagged Packet of $A$}\\
39\>\>\>\>$sb_{OUT} \in \{0,1\}$; \>\>\>\>{\scriptsize \#\# Status Bit of outgoing buffer of $A$}\\
40\>\>\>\>$FR \in \{0,1\}^{8D} \cup \bot$; \>\>\>\>{\scriptsize \#\# Flagged Round index (from adjacent outgoing buffer $A$)}\\
{\bf 41}\>\>\>\For {\it incoming} {\bf edge} $E(A,N) \in G$, {\it including} $A = R$ and $N =S$\\
{\bf 42}\>\>\>\>$bp \in \{0,1\}^P$;\>\>\>\>{\scriptsize \#\# Broadcast Parcel to send along this edge}\\
{\bf 43}\>\>\>\>$c_{bp} \in \{0,1\}$;\>\>\>\>{\scriptsize \#\# Verification bit of packet broadcast parcel receipt}
\end{tabbing}
\end{footnotesize}
\end{minipage} }
\end{center}
\caption{Pseudo-Code for Internal Nodes' Setup for the (Node-Controlling $+$ Edge-Scheduling) Protocol}
\label{setupCodeM}
\end{figure}
\newpage
\begin{figure}[h!t]
\begin{center}
\framebox{\footnotesize
\begin{minipage}[b]{6.0in}
\begin{footnotesize}
\begin{tabbing}
aaa\=aa\=aa\=aa\=aa\=aaaaaaaaaaaaaaaaaaaaaaaaaaaaaaaa\=aaaaaaaaaaaaaaaaaa\=a \kill
\>{\bf INITIALIZATION OF VARIABLES}:\\
44\>\>\For $N \in G$\\
{\bf 45}\>\>\>{\it Receive Keys};\>\>\>{\scriptsize \#\# Receive $\{PK \}^n_i$ and $SK$ from KEYGEN}\\
{\bf 46} \>\>\>{\it Initialize } $BB$, $DB$, $BL$, $EN$, $SIG_{N,N}$;\>\>\>{\scriptsize \#\# Set $SIG_{N,N}=0$, set each entry of $DB$ and $BB$ to $\bot$}\\
47\>\>\>\For {\it incoming} {\bf edge} $E(A,N) \in G, A \neq R$ and $N \neq S$\\
{\bf 48}\>\>\>\>{\it Initialize} $\mathsf{IN}$, $SIG$;\>\>{\scriptsize \#\# Set each entry in $\mathsf{IN}$ to $\bot$ and each entry of $SIG$ to zero}\\
49\>\>\>\>$p, H_{GP}, FR= \bot$;\\
50\>\>\>\>$sb, sb_{OUT}, c_p, H, H_{OUT} = 0$; $RR = -1$;\\
{\bf 51}\>\>\>\For {\it incoming} {\bf edge} $E(A,N) \in G$, {\it including} $A = R$ and $N =S$\\
{\bf 52}\>\>\>\>$bp = \bot$; $c_{bp} = 0$;\\
53\>\>\>\For {\it outgoing} {\bf edge} $E(N,B) \in G, B \neq S$ and $N \neq R$\\
{\bf 54}\>\>\>\>{\it Initialize } $\mathsf{OUT}$, $SIG$;\>\>{\scriptsize \#\# Set each entry in $\mathsf{OUT}$ to $\bot$ and each entry of $SIG$ to zero}\\
54\>\>\>\>$\tilde{p}, H_{FP},FR, RR = \bot$;\\
55\>\>\>\>$sb, d, H, H_{IN}, 0$;\\
{\bf 57}\>\>\>\For {\it outgoing} {\bf edge} $E(N,B) \in G$, {\it including} $B = S$ and $N=R$\\
{\bf 58}\>\>\>\>$bp, \alpha = \bot$; $c_{bp} = 0$; \\
\\
{\bf Sender's Additional Setup} \\
\>{\bf DEFINITION OF ADDITIONAL VARIABLES FOR SENDER}:\\
59\>\>$\mathcal{M} := \{m_1, m_2, \dots \} = $ Input Stream of Messages; \\
{\bf 60}\>\>$COPY \in [D] \times \{0,1\}^P:=$ Copy of Packets for Current Codeword;\\
{\bf 61}\>\>$BB \in [3n] \times \{0,1\}^P:=$ Broadcast Buffer;\\
{\bf 62}\>\>$DB \in [1..n^3+n^2+n] \times \{0,1\}^P:=$ Data Buffer, which includes:\\
{\bf 63}\>\>\>$BL \in [1..n] \times \{0,1\}^{P}:=$ Blacklist;\\
{\bf 64}\>\>\>$EN \in [1..n] \times \{0,1\}^{P}:=$ List of Eliminated Nodes;\\
65\>\>$\kappa \in [0..D]$ := Number of packets corresponding to current codeword the sender has {\it knowingly} inserted;\\
{\bf 66}\>\>$\Omega_{\mathtt{T}} \in \{0,1\}^{O(\log n)}$ := First parcel of Start of Transmission broadcast for transmission $\mathtt{T}$;\\
{\bf 67}\>\>$\beta_{\mathtt{T}} \in [0..4D]$ := Number of rounds blocked in current transmission;\\
{\bf 68}\>\>$F \in [0..n-1]$ := Number of failed transmissions since the last corrupt node was eliminated;\\
{\bf 69}\>\>$\mathcal{P}_{\mathtt{T}} \in \{0,1\}^n$ := Participating List for current transmission;\\[.3cm]
\>{\bf INITIALIZATION OF SENDER'S VARIABLES}:\\
70\>\>$\kappa = 0$;\\
{\bf 71}\>\>$\beta_{1}, F = 0$;\\
{\bf 72}\>\>$\Omega_{1} = (0,0,0,0)$;\\
{\bf 73}\>\>{\it Initialize }$BB$, $DB$, $\mathcal{P}_{1}$;\>\>\>\>{\scriptsize \#\# Set each entry of $DB$ to $\bot$, add $\Omega_1$ to $BB$, and set $\mathcal{P}_{1}=G$}\\
74\>\>{\bf \em Distribute Packets};\\
\\
{\bf Receiver's Additional Setup} \\
\>{\bf DEFINITION OF ADDITIONAL VARIABLES FOR RECEIVER}:\\
75\>\>$I_R \in [D] \times (\{0,1\}^P \cup \bot)$ := Storage Buffer to hold packets corresponding to current codeword;\\
76\>\>$\kappa \in [0..D]$ := Number of packets received corresponding to current codeword;\\
{\bf 77}\>\>$\Theta_{\mathtt{T}} \in \{0,1\}^{O(k+\log n)}$ := End of Transmission broadcast for transmission $\mathtt{T}$;\\[.3cm]
\>{\bf INITIALIZATION OF RECEIVER'S VARIABLES}:\\
78\>\>$\kappa = 0$;\\
{\bf 79}\>\>$\Theta_{1} = \bot$;\\
{\bf 80}\>\>\For {\it outgoing} {\bf edge} $E(R,B) \in G$:\\
{\bf 81}\>\>\>$bp, \alpha = \bot$;\\
82\>\>{\it Initialize }$I_R$;\>\>\>\>{\scriptsize \#\# Sets each element of $I_R$ to $\bot$}\\
{\bf End Setup}
\end{tabbing}
\end{footnotesize}
\end{minipage} }
\end{center}
\caption{Additional Setup Code for (Node-Controlling $+$ Edge-Scheduling) Protocol}
\label{setupCode2M}
\end{figure}
\newpage
\begin{figure}[h!t]
\begin{center}
\leavevmode
\framebox{\footnotesize
\begin{minipage}[b]{6.0in}
\begin{footnotesize}
\begin{tabbing}
aaa\=aa\=a\=a\=aa\=aa\=aa\=aaa\=aaa\=aaa\=aaaaaaaaaaaaaaaaa\=aaaaaaaaaaaaaaaa\=aaaa \kill
{\bf Transmission $\mathtt{T}$}\\[-1.300pt]
{\bf 01}\>\For $N \in G$, $N \notin EN$:\\[-1.300pt]
{\bf 02}\>\>\For $\mathtt{t} < 2*(4D)$\>\>\>\>\>\>\>\>\>\>{\scriptsize \#\# The factor of 2 is for the 2 stages per round}\\[-1.300pt]
03\>\>\>\>\If $\mathtt{t}$ (mod 2) = 0  \Then \>\>\>\>\>\>\>\>{\scriptsize \#\# \bf STAGE 1}\\[-1.300pt]
{\bf 04}\>\>\>\>\>{\bf \em Update Broadcast Buffer One};\\[-1.300pt]
05\>\>\>\>\>\For {\it outgoing} {\bf edge} $E(N,B) \in G, N \neq R, B \neq S$\\[-1.300pt]
06\>\>\>\>\>\>\If $H_{FP} \neq \bot$: \Send $(H, \bot, \bot)$; \quad \Else \Send $(H-1, H_{FP}, FR)$;\\[-1.300pt]
{\bf 07}\>\>\>\>\>\>\Receive {\scriptsize \it Signed($\mathtt{T}, \mathtt{t}, H_{IN}, RR, SIG[1], SIG[2], SIG[p]$)};\>\>\>\>\>\>{\scriptsize \#\# ${\scriptstyle SIG[3]}$, 6$^{th}$ coord sent on line 11, is kept as ${\scriptstyle SIG[2]}$}\\[-1.300pt]
{\bf 08}\>\>\>\>\>\>{\bf \em Verify Signature Two}:\\[-1.300pt]
09\>\>\>\>\>\>{\bf \em Reset Outgoing Variables};\\[-1.300pt]
10\>\>\>\>\>\For {\it incoming} {\bf edge} {\scriptsize $E(A,N) \in G, N \neq S, A \neq R$}\>\>\>\>\>\>\>{\scriptsize \#\# ``$p$'' on line 11 refers to last p. rec'd on ${\scriptstyle E(A,N)}$}\\[-1.300pt]
{\bf 11}\>\>\>\>\>\>\Send {\it Sign}$(\mathtt{T}, \mathtt{t}, H, RR, SIG[1], SIG[3], SIG[p])$;\>\>\>\>\>\>{\scriptsize \#\# If $p$ was from an {\it old} codeword, send instead:}\\[-1.300pt]
\>\>\>\>\>\>\>\>\>\>\>\>{\scriptsize \#\# {\it Sign}$(\mathtt{T}, \mathtt{t}, H, RR, SIG[1], SIG[3], \bot)$}\\[-1.300pt]
12\>\>\>\>\>\>$sb_{OUT}=0$; $FR = \bot$;\\[-1.300pt]
13\>\>\>\>\>\>\Receive $(H, \bot, \bot)$ \boldmath$\mathsf{or}$\unboldmath $\thickspace (H, H_{FP}, FR)$;\>\>\>\>\>\>{\scriptsize \#\# If $H= \bot$ or ${\scriptstyle FR > RR}$, set $sb_{\scriptscriptstyle OUT}$=$1$; and}\\[-1.300pt]
\>\>\>\>\>\>\>\>\>\>\>\>{\scriptsize \#\# $H_{\scriptscriptstyle OUT}$=$H_{FP}$; O.W$.$ set $H_{\scriptscriptstyle OUT}$=$H$; $sb_{\scriptscriptstyle OUT}$=$0$;}\\[-1.300pt]
14\>\>\>\>\ElseIf $\mathtt{t}$ (mod 2) = 1 \Then \>\>\>\>\>\>\>\>{\scriptsize \#\# \bf STAGE 2}\\[-1.300pt]
{\bf 15}\>\>\>\>\>{\bf \em Send/Receive Broadcast Parcels};\\[-1.300pt]
16\>\>\>\>\>\For {\it outgoing} {\bf edge} $E(N,B) \in G, N \neq R, B \neq S$\\[-1.300pt]
17\>\>\>\>\>\>\If $H_{IN}\neq \bot$ \Then \\[-1.300pt]
18\>\>\>\>\>\>\>{\bf \em Create Flagged Packet};\\[-1.300pt]
19\>\>\>\>\>\>\>\If $sb$=$1$ \boldmath$\mathsf{or}$\unboldmath $\thickspace (sb$=$0$ \boldmath$\mathsf{and}$\unboldmath $\thickspace H > H_{IN})$ \Then \\[-1.300pt]
20\>\>\>\>\>\>\>\>{\bf \em Send Packet};\\[-1.300pt]
21\>\>\>\>\>\For {\it incoming} {\bf edge} $E(A,N) \in G, N \neq S, A \neq R$\\[-1.300pt]
22\>\>\>\>\>\>{\bf \em Receive Packet};\\[-1.300pt]
\\[-1.300pt]
{\bf 23}\>\>\>\>\>\If $N \notin \{S,R\} \thickspace $\boldmath$\mathsf{and}\thickspace$\unboldmath $N$ has rec'd {\it SOT} broadcast for $\mathtt{T}$ \Then {\bf \em Re-Shuffle};\\[-1.300pt]
{\bf 24}\>\>\>\>\>\ElseIf $N = R \thickspace $\boldmath$\mathsf{and}\thickspace$\unboldmath $N$ has rec'd {\it SOT} broadcast for $\mathtt{T}$ \Then {\bf \em Receiver Re-Shuffle};\\[-1.300pt]
25\>\>\>\>\>\ElseIf $N = S$ \Then \\[-1.300pt]
26\>\>\>\>\>\>{\bf \em Sender Re-Shuffle};\\[-1.300pt]
{\bf 27}\>\>\>\>\>\>\If {\it All (non-$\bot$) values $S$ received on line 07 had $H_{IN}=2n$} \Then $\beta_{\mathtt{T}} = \beta_{\mathtt{T}}+1$;\\[-1.300pt]
\\[-1.300pt]
{\bf 28}\>\>\>\>\>\If $\mathtt{t}=2(4D-n) \thickspace $\boldmath$\mathsf{and}\thickspace$\unboldmath $N=R$ \Then {\bf \em Send End of Transmission Parcel};\\[-1.300pt]
{\bf 29}\>\>\>\>\>\If $\mathtt{t}=2(4D) \thickspace $\boldmath$\mathsf{and}\thickspace$\unboldmath $N=S$ \Then {\bf \em Prepare Start of Transmission Broadcast};\\[-1.300pt]
30\>\>\>\>\>\If $\mathtt{t}=2(4D)$ \Then {\bf \em End of Transmission Adjustments};\\[-1.300pt]
{\bf End Transmission $\mathtt{T}$}\\[-1.300pt]
\\[-1.300pt]
{\bf 31}\>{\bf \em Okay to Send Packet}\\[-1.300pt]
{\bf 32}\>\>\If $\left\{ \begin{array}{lr} \mbox{$N$ does not have $(\Omega_{\mathtt{T}}, \mathtt{T})$ in $BB$} & {\mathsf{OR}} \\ \mbox{$N$ has $(\Omega_{\mathtt{T}}, \mathtt{T})$ with $\Omega_{\mathtt{T}} = (|EN|, |\mathcal{B}_{\mathtt{T}}|, F, *)$, but has not yet rec'd $|EN|$ parcels as in line 200b,}&\\ \mbox{$\qquad F$ parcels as in line 200c, or $|\mathcal{B}_{\mathtt{T}}|$ parcels as in line 200d} & {\mathsf{OR}} \\ \mbox{$N$ has rec'd the complete {\it SOT} broadcast, but every parcel hasn't yet passed across $E(N,B)$} & {\mathsf{OR}} \\ N \mbox{ or } B \in BL & {\mathsf{OR}} \\ \mbox{$N$ has $\Theta_{\mathtt{T}} \in BB$, but this has not passed across $E(N,B)$ yet} & {\mathsf{OR}} \\ \mbox{$N$ has $BL$ info. in $BB$ (as on line 115, items 3 or 4) not yet passed across $E(N,B)$} &\end{array} \right.$ \\
{\bf 33}\>\>\>\> Return \textsf{False}; \\[-1.300pt]
{\bf 34}\>\> \Else Return \textsf{True};\\[-1.300pt]
\\[-1.300pt]
{\bf 35}\>{\bf \em Okay to Receive Packet}\\[-1.300pt]
{\bf 36}\>\>\If $\left\{ \begin{array}{lr} \mbox{$N$ does not have $(\Omega_{\mathtt{T}}, \mathtt{T})$ in $BB$} & {\mathsf{OR}} \\ \mbox{$N$ has $(\Omega_{\mathtt{T}}, \mathtt{T})$ with $\Omega_{\mathtt{T}} = (|EN|, |\mathcal{B}_{\mathtt{T}}|, F, *)$, but has not yet rec'd $|EN|$ parcels as in line 200b,}&\\ \mbox{$\qquad F$ parcels as in line 200c, or $|\mathcal{B}_{\mathtt{T}}|$ parcels as in line 200d} & {\mathsf{OR}} \\ \mbox{$N$ has rec'd the complete {\it SOT} broadcast, but every parcel hasn't yet passed across $E(A,N)$} & {\mathsf{OR}} \\ N \mbox{ or } A \in BL & {\mathsf{OR}} \\ \mbox{$N$ has $\Theta_{\mathtt{T}} \in BB$, but this has not passed across $E(A,N)$ yet} & {\mathsf{OR}} \\ \mbox{$N$ has $BL$ info. in $BB$ (as on line 115, items 3 or 4) not yet passed across $E(A,N)$} & {\mathsf{OR}} \end{array} \right.$ \\[-1.300pt]
{\bf 37}\>\>\>\> Return \textsf{False}; \\[-1.300pt]
{\bf 38}\>\> \Else Return \textsf{True};
\end{tabbing}
\end{footnotesize}
\end{minipage} }
\end{center}
\vspace{-20.7pt}
\caption{Routing Rules for Transmission $\mathtt{T}$, (Node-Controlling $+$ Edge-Scheduling) Protocol}
\label{routingRulesM}
\end{figure}
\newpage
\begin{figure}[h!t]
\begin{center}
\leavevmode
\framebox{\footnotesize
\begin{minipage}[b]{6.0in}
\begin{footnotesize}
\begin{tabbing}
aaa\=aa\=aa\=aa\=aa\=aa\=aa\=aaaaaaaaaaaaaaaaaaaaaaaaaaaaaaaaaaaaa\=aaaa\=aaaa \kill
39\>{\bf \em Reset Outgoing Variables}\\[-1.4pt]
{\bf 40}\>\>$c_{bp}=0$; \\[-1.4pt]
41\>\>\If $d = 1$:\>\>\>\>\>\>{\scriptsize \#\# $N$ sent a packet previous round}\\[-1.4pt]
42\>\>\>$d=0$;\\[-1.4pt]
43\>\>\>\If $RR = \bot$ \boldmath$\mathsf{or}$\unboldmath $\thickspace \bot \neq FR > RR$\>\>\>\>\>{\scriptsize \#\# Didn't receive conf$.$ of packet receipt}\\[-1.4pt]
44\>\>\>\>$sb=1$;\\[-1.4pt]
45\>\>\If $RR \neq \bot$:\\[-1.4pt]
46\>\>\>\If $\bot \neq FR \leq RR$:\>\>\>\>\>{\scriptsize \#\# $B$ rec'd most recently sent packet}\\[-1.4pt]
47\>\>\>\>\If $N=S$ \Then $\kappa = \kappa +1$;\\[-1.4pt]
{\bf 48}\>\>\>\>For $i = 1, 2, p$: $SIG[i] = $ value rec'd on line 07;\\[-1.4pt]
{\bf 49}\>\>\>\>$SIG[3] = SIG[3] + H_{FP}$;\>\>\>\>{\scriptsize \#\# If $N=S$, skip this line}\\[-1.4pt]
50\>\>\>\>$\mathsf{OUT}[H_{FP}] = \bot$; {\it Fill Gap};\>\>\>\>{\scriptsize \#\# Remove $\tilde{p}$ from $\mathsf{OUT}$, shifting down packets on top}\\[-1.4pt]
\>\>\>\>\>\>\>\>{\scriptsize \#\# of $\tilde{p}$ (if necessary) and adjusting $SIG_{N,N}$ accordingly}\\[-1.4pt]
51\>\>\>\>$FR,\tilde{p},H_{FP}=\bot$; $sb=0$; $H=H-1$;\\[-1.4pt]
52\>\>\If $\bot \neq RR < FR$ \boldmath$\mathsf{and}$\unboldmath $\thickspace \bot \neq H_{FP} < H$:\>\>\>\>\>\>{\scriptsize \#\# $B$ did {\em not} receive most recently sent packet}\\[-1.4pt]
53\>\>\>{\it Elevate Flagged Packet};\>\>\>\>\>{\scriptsize \#\# Swap packets in ${\scriptscriptstyle \mathsf{OUT}[H]}$ and ${\scriptscriptstyle \mathsf{OUT}[H_{FP}]}$; Set $H_{FP}$=$H$;}\\[-1.4pt]
\\[-1.4pt]
54\>{\bf \em Create Flagged Packet}\\[-1.4pt]
55\>\>\If $sb=0$ \boldmath$\mathsf{and}$\unboldmath $\thickspace H > H_{IN}$:\>\>\>\>\>\>{\scriptsize \#\# Normal Status, will send top packet}\\[-1.4pt]
56\>\>\>$\tilde{p} = \mathsf{OUT}[H]$; $H_{FP} = H$; $FR = \mathtt{t}$;\\[-1.4pt]
\\[-1.4pt]
57\>{\bf \em Send Packet}\\[-1.4pt]
58\>\>$d=1$;\\[-1.4pt]
{\bf 59}\>\>\If {\bf \em Okay to Send Packet} \Then \>\>\>\>\>\>{\scriptsize \#\# If $\tilde{p}$ is from an {\it old} codeword, send instead:}\\[-1.4pt]
{\bf 60}\>\>\>\Send {\it Sign}$(\mathtt{T}, \mathtt{t}, \tilde{p}, FR, SIG[1]$+$1, SIG[3] +H_{FP}, SIG[\tilde{p}]$+$1)$;\>\>\>\>\>$\qquad \qquad ${\scriptsize \#\# {\it Sign}$(\mathtt{T}, \mathtt{t}, \tilde{p}, FR, SIG[1], SIG[3]+H_{FP}, \bot)$}\\[-1.4pt]
\\[-1.4pt]
61\>{\bf \em Receive Packet}\\[-1.4pt]
{\bf 62}\>\>\Receive {\it Sign}$(\mathtt{T}, \mathtt{t}-2, p, FR, SIG[1], SIG[2], SIG[p])$;\>\>\>\>\>\>{\scriptsize \#\# $SIG[3]$, 6$^{th}$ coord$.$ sent on line 60, is kept as $SIG[2]$}\\[-1.4pt]
{\bf 63}\>\>\If $H_{OUT} = \bot$ \boldmath$\mathsf{or}$\unboldmath $\space$ {\bf \em Okay to Receive Packet} is {\it false}:\>\>\>\>\>\>{\scriptsize \#\#Didn't rec$.$ $A$'s ht$.$ info, or ${\scriptstyle BB}$ info prevents p$.$ transfer}\\[-1.4pt]
64\>\>\>$sb = 1$;\\[-1.4pt]
65\>\>\>\If $H_{GP} > H$ \boldmath$\mathsf{or}$\unboldmath $\thickspace (H_{GP} = \bot$ \boldmath$\mathsf{and}$\unboldmath $\thickspace H <2n)$:\\[-1.4pt]
66\>\>\>\>$H_{GP} = H+1$;\\[-1.4pt]
67\>\>\ElseIf $sb_{OUT}=1$ \boldmath$\mathsf{or}$\unboldmath $\thickspace H_{OUT} > H$:\>\>\>\>\>\>{\scriptsize \#\# A packet should've been sent}\\[-1.4pt]
{\bf 68}\>\>\>{\bf \em Verify Signature One};\\[-1.4pt]
{\bf 69}\>\>\>\If {\large (}{\bf \em Verify Signature One} returns false \boldmath$\mathsf{or} \thickspace$\unboldmath\>\>\>\>\>{\scriptsize \#\# Signature from $A$ was not valid, or}\\[-1.4pt]
\>\>\>\>$\quad p =\bot$ \boldmath$\mathsf{or} \thickspace$\unboldmath $p$ {\it not properly signed by $S$}{\large )} \Then \>\>\>\>{\scriptsize \#\# Packet wasn't rec'd. or wasn't signed by $S$}\\[-1.4pt]
70\>\>\>\>$sb = 1$;\\[-1.4pt]
71\>\>\>\>\If $H_{GP} > H$ \boldmath$\mathsf{or}$\unboldmath $\thickspace (H_{GP} = \bot$ \boldmath$\mathsf{and}$\unboldmath $\thickspace H <2n)$:\\[-1.4pt]
72\>\>\>\>\>$H_{GP} = H+1$;\\[-1.4pt]
73\>\>\>\ElseIf $RR<FR$:\>\>\>\>\>{\scriptsize \#\# Packet was rec'd and should keep it}\\[-1.4pt]
{\bf 74}\>\>\>\>For $i = 1, 2, p$: $SIG[i] = $ value rec'd on line 62;\\[-1.4pt]
{\bf 75}\>\>\>\>$SIG[3] = SIG[3] + H_{GP}$;\>\>\>\>{\scriptsize \#\# If $N=R$, skip this line}\\[-1.4pt]
76\>\>\>\>\If $H_{GP} = \bot$: $\quad H_{GP} = H+1$;\>\>\>\>{\scriptsize \#\# If no slot is saved for $p$, put it on top}\\[-1.4pt]
77\>\>\>\>$\mathsf{IN}[H_{GP}]=p$;\\[-1.4pt]
78\>\>\>\>$sb=0$; $H = H+1$; $H_{GP} = \bot$; $RR = \mathtt{t}$;\\[-1.4pt]
79\>\>\>\Else\>\>\>\>\>{\scriptsize \#\# Packet was rec'd, but already had it}\\[-1.4pt]
80\>\>\>\>$sb=0$; {\it Fill Gap}; $H_{GP}=\bot$;\>\>\>\>{\scriptsize \#\# See comment about {\it Fill Gap} on line 82 below}\\[-1.4pt]
81\>\>\Else\>\>\>\>\>\>{\scriptsize \#\# A packet should NOT have been sent}\\[-1.4pt]
82\>\>\>$sb=0$; {\it Fill Gap}; $H_{GP} = \bot$;\>\>\>\>\>{\scriptsize \#\# If packets occupied slots {\it above} the Ghost }\\[-1.4pt]
\>\>\>\>\>\>\>\>{\scriptsize \#\# Packet, then {\it Fill Gap} will Slide them {\it down} one slot,}\\[-1.4pt]
\>\>\>\>\>\>\>\>{\scriptsize \#\# updating $SIG_{N,N}$ to reflect this shift, if necessary}\\[-1.4pt]
{\bf 83}\>{\bf \em Verify Signature One}\\[-1.4pt]
{\bf 84}\>\>\If {\it Signature is Valid} \boldmath$\mathsf{and} \thickspace$\unboldmath {\it Values are correct}\>\>\>\>\>\>{\scriptsize \#\# $N$ verifies the values $A$ sent on line 60 are consistent:}\\[-1.4pt]
{\bf 85}\>\>\>Return {\it true};\>\>\>\>\>{\scriptsize \#\# Change in ${\scriptstyle SIG[1]}$ and ${\scriptstyle SIG[p]}$ is `1', change in ${\scriptstyle SIG[2]}$ is}\\[-1.4pt]
{\bf 86}\>\>\Else\>\>\>\>\>\>{\scriptsize \#\# at least $H_{GP}$, $(\mathtt{T}, \mathtt{t})$ is correct and p$.$ has sender's sig}\\[-1.4pt]
{\bf 87}\>\>\>Return {\it false};\\[-1.4pt]
\\[-1.4pt]
{\bf 88}\>{\bf \em Verify Signature Two}\>\>\>\>\>\>\>{\scriptsize \#\# $N$ verifies the values $B$ sent on line 11 are consistent:}\\[-1.4pt]
{\bf 89}\>\>\If {\it Signature is {\scriptsize NOT} Valid} \boldmath$\mathsf{or} \thickspace$\unboldmath {\it Values are {\scriptsize NOT} Correct}:\>\>\>\>\>\>{\scriptsize \#\# Change in ${\scriptstyle SIG[1]}$ and ${\scriptstyle SIG[p]}$ is `1', change in ${\scriptstyle SIG[2]}$}\\[-1.4pt]
{\bf 90}\>\>\>$RR, H_{IN} = \bot$;\>\>\>\>\>{\scriptsize \#\# is at most $H_{FP}$, and $\mathtt{T}$ and $\mathtt{t}$ are correct}
\end{tabbing}
\end{footnotesize}
\end{minipage} }
\end{center}
\vspace{-20pt}
\caption{Routing Rules for Transmission $\mathtt{T}$, (Node-Controlling $+$ Edge-Scheduling) Protocol (cont)}
\label{routingRules2M}
\end{figure}
\newpage
\begin{figure}[h!t]
\begin{center}
\leavevmode
\framebox{\footnotesize
\begin{minipage}[b]{6.0in}
\begin{footnotesize}
\begin{tabbing}
aaaa\=aa\=aa\=aa\=aa\=aa\=aa\=aaaaaaaaaaaaaaaaaaaaaaaaaaaaaaaaaaaaaaaaaaaaaaaaaaaaaaa\=aaaa\=aaaa \kill
{\bf 91}\>{\bf \em Send/Receive Broadcast Parcels}\\
{\bf 92}\>\>\For {\it outgoing} {\bf edge} $E(N,B) \in G$, {\it including} $N=R, B = S$\\
{\bf 93}\>\>\>\Receive $bp$;\\
{\bf 94}\>\>\>{\bf \em Update Broadcast Buffer Two};\\
{\bf 95}\>\>\For {\it incoming} {\bf edge} $E(A,N) \in G$, {\it including} $N=S, A = R$\\
{\bf 96}\>\>\>{\bf \em Determine Broadcast Parcel to Send};\\
{\bf 97}\>\>\>\Send $bp$;\\
\\
{\bf 98}\>{\bf \em Update Broadcast Buffer One}\\
{\bf 99}\>\>\For {\it outgoing} {\bf edge} $E(N,B) \in G$, {\it including} $N=R, B = S$\\
{\bf 100}\>\>\>\If $bp \neq \bot$ \Then \\
{\bf 101}\>\>\>\>\Send $c_{bp}$;\\
{\bf 102}\>\>\>{\bf \em Broadcast Parcel to Request};\\
{\bf 103}\>\>\>\Send $\alpha$;\\
{\bf 104}\>\>\For {\it incoming} {\bf edge} $E(A,N) \in G$, {\it including} $N=S, A = R$\\
{\bf 105}\>\>\>\Receive $c_{bp}$; \Receive $\alpha$;\\
{\bf 106}\>\>\>\If $\alpha \neq \bot$ \Then Update Broadcast Buffer;$\thickspace \qquad \qquad${\scriptsize \#\# Update $BB$ to preferentially send $\alpha$}\\
{\bf 107}\>\>\>\If $c_{bp}=1$ \Then Update Broadcast Buffer;$\qquad \qquad${\scriptsize \#\# Update $BB$ that $bp$ crossed $E(A,N)$}\\
{\bf 108}\>\>\>$c_{bp}=0$;\\
\\
{\bf 109}\>{\bf \em Update Broadcast Buffer Two}\\
{\bf 110}\>\>\If $\bot \neq bp$ has valid sig$.$ \boldmath$\mathsf{and}$\unboldmath $\left\{ \begin{array}{l} N \mbox{ has received full {\it SOT} broadcast for } \mathtt{T} \qquad \mathsf{OR}\\ bp \mbox{ is a valid {\it SOT} broadcast parcel rec'd in correct order (see 115 and 200)} \end{array} \right.$\\
\>\>\>\>{\scriptsize \#\# Here, a ``valid'' signature means both from $B$ and the from node $bp$ originated from, and}\\
\>\>\>\>{\scriptsize \#\# a ``valid'' {\it SOT} parcel means that $N$ has already received all {\it SOT} parcels that}\\
\>\>\>\>{\scriptsize \#\# should have arrived before $bp$, as indicated by the ordering of line 115, items 2a-2d}\\
{\bf 111}\>\>\>$c_{bp}=1$;\\
{\bf 112}\>\>\>\If $N=S$: {\bf \em Sender Update Broadcast Buffer};\\
{\bf 113}\>\>\>\Else {\bf \em Internal Node and Receiver Update Broadcast Buffer};\\
\\
{\bf 114}\>{\bf \em Determine Broadcast Parcel to Send}\\
{\bf 115}\>\>Among all information in $BB$, choose some $bp \in BB$ that has not passed along $E(A,N)$ by priority:\\
\>\>\>1) The receiver's {\it end of transmission} parcel $\Theta_{\mathtt{T}}$\\
\>\>\>2) The sender's {\it start of transmission} ({\it SOT}) broadcast, in the order indicated on line 200:\\
\>\>\>\> a) $(\Omega_{\mathtt{T}}, \mathtt{T}) \qquad$ b) $(\widehat{N} \in EN, \mathtt{T}) \qquad$ c) $(\mathtt{T}', Fi, \mathtt{T}) \qquad$ d) $(\widehat{N} \in BL, \mathtt{T}', \mathtt{T})$\\
\>\>\>3) $(\widehat{N}, 0, \mathtt{T}) =$ label of a node to {\it remove} from the blacklist, see line 165\\
\>\>\>4) $(N, \widehat{N}, \mathtt{T}') =$ label of a node $\widehat{N}$ on $BL$ for which $N$ has the complete status report for $\mathtt{T}'$, see line 155\\
\>\>\>5) A status report parcel requested by $A$ as indicated by $\alpha$ (received on line 105)\\
\>\>\>6) An arbitrary status report parcel of a node on $N$'s blacklist\\
\\
{\bf 116}\>{\bf \em Broadcast Parcel to Request}\\
{\bf 117}\>\>$\alpha = \bot$;\\
{\bf 118}\>\>\If $B$ is on $N$'s blacklist and $N$ is missing a status report from $B$:\\
{\bf 119}\>\>\>Set $\alpha$ to indicate $B$'s label and an index of the parcel $N$ is missing from $B$;\\
{\bf 120}\>\>\ElseIf $DB$ indicates that $B$ has complete status report for some node $\widehat{N}$ on $BL$ (see lines 150-151, 155):\\
{\bf 121}\>\>\>\If $N$ is missing a status report of node $\widehat{N}$:\\
{\bf 122}\>\>\>\>Set $\alpha$ to the label of the node $\widehat{N}$ and the index of a status report parcel from $\widehat{N}$ that $N$ is missing;
\end{tabbing}
\end{footnotesize}
\end{minipage} }
\end{center}
\caption{Routing Rules for Transmission $\mathtt{T}$, (Node-Controlling $+$ Edge-Scheduling) Protocol (cont)}
\label{routingRules3M}
\end{figure}
\newpage
\begin{figure}[h!t]
\begin{center}
\leavevmode
\framebox{\footnotesize
\begin{minipage}[b]{6.0in}
\begin{footnotesize}
\begin{tabbing}
aaaa\=a\=aa\=aa\=aa\=aa\=aa\=aaaaaaaaaaaaaaaaaaaaaaaaaaaaaaaaaaaaa\=aaaa\=aaaa \kill
{\bf 123}\>{\bf \em Internal Node and Receiver Update Broadcast Buffer}\\[-1.3pt]%
\>\>\>{\scriptsize \#\# Below, a broadcast parcel $bp$ is ``{\it Added}'' only if it is not already in $BB$. Also, view $BB$ as being}\\[-1.3pt]%
\>\>\>{\scriptsize \#\# indexed by each $bp$ with $n-1$ slots for each parcel to indicate which edges $bp$ has already traversed.}\\[-1.3pt]%
\>\>\>{\scriptsize \#\# Then when $bp$ is removed from $BB$, the edge ``markings'' are removed as well.}\\[-1.3pt]%
{\bf 124}\>\>\If $bp = \Theta_{\mathtt{T}}$ is receiver's {\it end of transmission} parcel (for current transmission $\mathtt{T}$, see line 179):\\[-1.3pt]%
{\bf 125}\>\>\>Add $bp$ to $BB$ and mark edge $E(N,B)$ as having passed this info.;\\[-1.3pt]%
{\bf 126}\>\>\ElseIf $bp = (\Omega_{\mathtt{T}}, \mathtt{T})$ is the first parcel of the sender's {\it start of transmission} ({\it SOT}) broadcast (see line 200a):\\[-1.3pt]%
{\bf 127}\>\>\>Add $bp$ to $BB$, and mark edge $E(N,B)$ as having passed this information;\\[-1.3pt]%
{\bf 128}\>\>\>\If $\Omega_{\mathtt{T}} = (*, 0, *, *): \thickspace$ Clear all entries of $SIG$, and set $SIG_{N,N}=0$;\\[-1.3pt]%
{\bf 129}\>\>\ElseIf $bp$= {\scriptsize $(\widehat{N}, \mathtt{T})$} is from the {\it SOT} broadcast indicating a node to {\it eliminate}, as on line 200b:\\[-1.3pt]%
{\bf 130}\>\>\>Add $bp$ to $BB$ and mark edge $E(N,B)$ as having passed this info.;\\[-1.3pt]%
{\bf 131}\>\>\>\If $\widehat{N} \notin EN$:\>\>\>\>\>{\scriptsize \#\# $N$ is just learning $\widehat{N}$ is to be eliminated}\\[-1.3pt]%
{\bf 132}\>\>\>\>Add $\widehat{N}$ to $EN$;\\[-1.3pt]%
{\bf 133}\>\>\>\>Clear all incoming and outgoing buffers, clear all entries of $SIG$, and set $SIG_{N,N} = 0$;\\[-1.3pt]%
{\bf 134}\>\>\>\>Clear $BB$, EXCEPT for parcels from current {\it SOT} broadcast; Clear $DB$, EXCEPT for $EN$;\\[-1.3pt]%
{\bf 135}\>\>\ElseIf $bp$= {\scriptsize $(\mathtt{T}', Fi, \mathtt{T})$} is from the {\it SOT} broadcast indicating why a previous trans$.$ failed, as on line 200c:\\[-1.3pt]%
{\bf 136}\>\>\>Add $bp$ to $BB$ and mark edge $E(N,B)$ as having passed this information;\\[-1.3pt]%
{\bf 137}\>\>\ElseIf $bp$= {\scriptsize $(\widehat{N}, \mathtt{T}', \mathtt{T})$} is from the {\it SOT} broadcast indicating a node to blacklist, as on line 200d:\\[-1.3pt]%
{\bf 138}\>\>\>Add $\widehat{N}$ to $BL$; Add $bp$ to $BB$ and mark edge $E(N,B)$ as having passed this information;\\[-1.3pt]%
{\bf 139}\>\>\>Remove outdated info$.$ from $BB$ and $DB$;\\[-1.3pt]%
\>\>\>\>{\scriptsize \#\# This includes for any trans$.$ $\mathtt{T}'' \neq \mathtt{T}'$ removing from $DB$ all entries of form $(\widetilde{B}, \widehat{N}, \mathtt{T}'')$, see line 115, item 4;}\\[-1.3pt]%
\>\>\>\>{\scriptsize \#\#  and removing from $BB$: 1) $(N, \widehat{N}, \mathtt{T}'')$, see line 115 item 4, and 2) Any status report parcel of $\widehat{N}$ for $\mathtt{T}''$}\\[-1.3pt]%
{\bf 140}\>\>\>\If $\widehat{N}=N$ has not already added its own status report info$.$ corresponding to $\mathtt{T}'$ to $BB$:\\[-1.3pt]%
\>\>\>\>{\scriptsize \#\# The following reasons for failure come from {\it SOT}.  See lines 190, 193, and 196-197}\\[-1.3pt]%
\>\>\>\>{\scriptsize \#\# The information added in each case will be referred to as the node's {\it status report} for transmission $\mathtt{T}'$}\\[-1.3pt]%
{\bf 141}\>\>\>\>\If entries of $SIG_{N,N}$ and $SIG$ correspond to a transmission $\mathtt{T}'' \neq \mathtt{T}'$: $\space$ {\it Clear} $SIG$ and set $SIG_{N,N}=0$;\\[-1.3pt]%
{\bf 142}\>\>\>\>\If $\mathtt{T}'$ failed as in F2:  For each incoming and outgoing edge, sign and add to $BB$: $(SIG[2], SIG[3], \mathtt{T}')$;\\[-1.3pt]%
{\bf 143}\>\>\>\>\>Also sign and add $(SIG_{N,N}, \mathtt{T}')$ to $BB$ (see line 12 of Figure \ref{setupCodeM});\\[-1.3pt]%
{\bf 144}\>\>\>\>\ElseIf $\mathtt{T}'$ failed as in F3:  For each incoming and outgoing edge, sign and add $(SIG[1], \mathtt{T}')$ to $BB$;\\[-1.3pt]%
{\bf 145}\>\>\>\>\ElseIf $\mathtt{T}'$ failed as in F4:  For each incoming and outgoing edge, sign and add $(SIG[p], \mathtt{T}')$ to $BB$;\\[-1.3pt]%
{\bf 146}\>\>\>\If $N$ has received {\scriptsize $|BL_{\mathtt{T}}|$ \it SOT} parcels of form {\scriptsize $(\widehat{N}, \mathtt{T}', \mathtt{T})$} $:\thickspace$ Clear all entries of $SIG$ and set $SIG_{N,N}=0$;\\[-1.3pt]%
{\bf 147}\>\>\ElseIf $bp$= {\scriptsize $(\widehat{N}, 0, \mathtt{T})$} is from sender, indicating a node to remove from $BL$, as on line 165:\\[-1.3pt]%
{\bf 148}\>\>\>Remove $\widehat{N}$ from $BL$; Add $bp$ to $BB$ and mark edge $E(N,B)$ as having passed this information;\\[-1.3pt]%
{\bf 149}\>\>\>Remove outdated info$.$ from $BB$ and $DB$ as on line 139 above;\\[-1.3pt]%
{\bf 150}\>\>\ElseIf $bp$= {\scriptsize $(B, \widehat{N}, \mathtt{T}')$} indicates $B$ has a blacklisted node $\widehat{N}$'s complete status report for trans$.$ $\mathtt{T}'$:\\[-1.3pt]%
{\bf 151}\>\>\>\If $(\widehat{N}, \mathtt{T}', \mathtt{T})$ is on $N$'s blacklist: $\quad$ Add fact that $B$ has $\widehat{N}$'s complete status report to $DB$;\\[-1.3pt]%
{\bf 152}\>\>\ElseIf $bp$ is a status report parcel for trans$.$ $\mathtt{T}'$ of some node {\scriptsize $(\widehat{N}, \mathtt{T}', \mathtt{T})$} on $BL$, see lines 140-145 and 200d:\\[-1.3pt]%
{\bf 153}\>\>\>\If $bp$ has valid sig. from $\widehat{N}$ and concerns correct info.:\\[-1.3pt]%
\>\>\>\>\>$\thickspace${\scriptsize \#\# $N$ finds $(\widehat{N}, \mathtt{T}', \mathtt{T})$ and $(\mathtt{T}', Fi, \mathtt{T})$ in $BB$ (from {\it SOT} broadcast) and checks that $bp$ concerns correct info$.$}\\[-1.3pt]%
{\bf 154}\>\>\>\>Add $bp$ to $BB$, and mark edge $E(N,B)$ as having passed this information;\\[-1.3pt]%
{\bf 155}\>\>\>\>\If $bp$ completes $N$'s knowledge of $\widehat{N}$'s missing status report for transmission $\mathtt{T}'$: $\space$ Add {\scriptsize $(N, \widehat{N}, \mathtt{T}')$} to $BB$;\\[.345cm]
{\bf 156}\>{\bf \em Sender Update Broadcast Buffer} \>\>\>\>\>\>\>{\scriptsize \#\# Below, a parcel $bp$ is ``{\it Added}'' only if it is not in $DB$}\\[-1.3pt]%
{\bf 157}\>\>\If $bp = \Theta_{\mathtt{T}}$ is receiver's {\it end of transmission} parcel (for current transmission $\mathtt{T}$):\\[-1.3pt]%
{\bf 158}\>\>\>Add $bp$ to $DB$;\\[-1.3pt]%
{\bf 159}\>\>\ElseIf $bp$ indicates $B$ has a blacklisted node $\widehat{N}$'s complete status report for trans$.$ $\mathtt{T}'$:\\[-1.3pt]%
{\bf 160}\>\>\>\If $(\widehat{N}, \mathtt{T}', \mathtt{T})$ is on $S$'s blacklist: $\quad$ Add $(B, \widehat{N}, \mathtt{T}')$ to $DB$;\\[-1.3pt]%
{\bf 161}\>\>\ElseIf $bp$ is a status report parcel of some node $\widehat{N}$ on the sender's blacklist (see lines 140-145):\\[-1.3pt]%
{\bf 162}\>\>\>Add $bp$ to $DB$;\\[-1.3pt]%
{\bf 163}\>\>\>\If $bp$ contains faulty info$.$ but has a valid sig. from $\widehat{N}: \quad$ {\bf \em Eliminate }$\widehat{N}$;\\[-1.3pt]%
\>\>\>\>\>{\scriptsize \#\# $S$ checks $DB$ for reason of failure and makes sure $\widehat{N}$ has returned an appropriate value}\\[-1.3pt]%
{\bf 164}\>\>\>\If $bp$ completes the sender's knowledge of $\widehat{N}$'s missing status report from transmission $\mathtt{T}'$:\\[-1.3pt]%
{\bf 165}\>\>\>\>Sign $(\widehat{N}, 0, \mathtt{T})$ and add to $BB$; \>\>\>\>{\scriptsize \#\# Indicates that $\widehat{N}$ should be removed from blacklist}\\[-1.3pt]%
{\bf 166}\>\>\>\>Remove outdated info$.$ from $DB$; Remove $(\widehat{N}, \mathtt{T}')$ from $BL$;\\[-1.3pt]%
\>\>\>\>\>{\scriptsize \#\# ``Outdated'' refers to parcels as on 159-160 whose second entry is $\widehat{N}$}\\[-1.3pt]%
{\bf 167}\>\>\>\>\If $bp$ completes sender's knowledge of all relevant status reports from some transmission:\\[-1.3pt]%
{\bf 168}\>\>\>\>\>{\bf \em Eliminate} $\widehat{N}$;\>\>\>{\scriptsize \#\# $S$ can eliminate a node. See pf$.$ of Thm \ref{mainAdThm} for details}
\end{tabbing}
\end{footnotesize}
\end{minipage} }
\end{center}
\vspace{-.6cm}
\caption{Routing Rules for Transmission $\mathtt{T}$, (Node-Controlling $+$ Edge-Scheduling) Protocol (cont)}
\label{routingRules4M}
\end{figure}
\newpage
\begin{figure}[h!t]
\begin{center}
\leavevmode
\framebox{\footnotesize
\begin{minipage}[b]{6.0in}
\begin{footnotesize}
\begin{tabbing}
aaaa\=aa\=aa\=aa\=aa\=aa\=aa\=aaaaaaaaaaaaaaaaaaaaaaaaaaaaaaaaaaaaaaa\=aaaa\=aaaa \kill
{\bf 169}\>{\bf \em Eliminate $\widehat{N}$}\\[-1.3pt]%
{\bf 170}\>\>{\it Add} $(\widehat{N}, \mathtt{T})$ to $EN$;\\[-1.3pt]%
{\bf 171}\>\>{\it Clear} $BB$, $DB$ (except for $EN$), and signature buffers;\\[-1.3pt]%
{\bf 172}\>\>$\beta_{\mathtt{T}}, F = 0$;\\[-1.3pt]%
{\bf 173}\>\>$\mathcal{P}_{\mathtt{T}+1} = \mathcal{P} \setminus EN$;\\[-1.3pt]%
{\bf 174}\>\>$\Omega_{\mathtt{T}+1} = (|EN|, 0, 0, 0)$;\\[-1.3pt]%
{\bf 175}\>\>{\it Sign} and {\it Add} $\Omega_{\mathtt{T}+1}$ to $BB$;\\[-1.3pt]%
{\bf 176}\>\>\For $N \in EN$, {\it Sign} and {\it Add} $(N,\mathtt{T}+1)$ to $BB$;\\[-1.3pt]%
{\bf 177}\>\>{\it Halt} until {\scriptsize \bf \em End of Transmission Adjustments} is called;\>\>\>\>\>\>{\scriptsize \#\# $S$ does not begin inserting p's until next trans.,}\\[-1.3pt]%
\>\>\>\>\>\>\>\>{\scriptsize \#\# and $S$ ignores all instructions for $\mathtt{T}$ until line 30}\\[-1.3pt]%
{\bf 178}\>{\bf \em Send End of Transmission Parcel}\\[-1.3pt]%
{\bf 179}\>\>{\it Add signed $\Theta_{\mathtt{T}} = (b, p', \mathtt{T})$ to $BB$}\>\>\>\>\>\>{\scriptsize \#\# $b$ is a bit indicating if $R$ could decode, $p'$ is}\\[-1.3pt]%
\>\>\>\>\>\>\>\>{\scriptsize \#\# the label of a packet $R$ rec'd twice, or else $\bot$}\\[-1.3pt]%
{\bf 180}\>{\bf \em Prepare Start of Transmission Broadcast}\\[-1.3pt]%
{\bf 181}\>\>{\scriptsize \#\# Let $\Theta_{\mathtt{T}} = (b, p', \mathtt{T})$ denote Sender's value obtained from Receiver's transmission above (as stored in $DB$)}\\[-1.3pt]%
{\bf 182}\>\>\If $b=1$ \Then \>\>\>\>\>\>{\scriptsize \#\# $R$ was able to decode}\\[-1.3pt]%
{\bf 183}\>\>\>{\it Clear} each entry of signature buffers holding data corresponding to $\mathtt{T}$;\\[-1.3pt]%
{\bf 184}\>\>\>$\Omega_{\mathtt{T}+1} = (|EN|, |BL|, F, 0)$;\\[-1.3pt]%
{\bf 185}\>\>\ElseIf $b=0$ \Then \>\>\>\>\>\>{\scriptsize \#\# $R$ was {\it not} able to decode: a {\it failed} transmission}\\[-1.3pt]%
{\bf 186}\>\>\>$F = F+1$;\\[-1.3pt]%
{\bf 187}\>\>\>Set $\mathcal{P}_{\mathtt{T}} = \mathcal{P} \setminus (EN \cup BL)$ and add $(\mathcal{P}_{\mathtt{T}}, \mathtt{T})$ to $DB$;\\[-1.3pt]%
{\bf 188}\>\>\>For each $N \in \mathcal{P}_{\mathtt{T}} \setminus S$: Add {\scriptsize $(N, \mathtt{T})$} to $BL$;\>\>\>\>\>{\scriptsize \#\# $(N, \mathtt{T})$ records the trans. $N$ was added to $BL$}\\[-1.3pt]%
{\bf 189}\>\>\>{\it Clear} outgoing buffers;\\[-1.3pt]%
{\bf 190}\>\>\>\If $p' \neq \bot$:\>\>\>\>\>{\scriptsize \#\# $R$ rec'd a duplicate packet}\\[-1.3pt]%
{\bf 191}\>\>\>\>Add $(p', \mathtt{T})$ to $DB$;  Add $SIG[p']$ to $DB$;\>\>\>\>{\scriptsize \#\# Record that reason $\mathtt{T}$ failed was F4}\\[-1.3pt]%
{\bf 192}\>\>\>\>$\Omega_{\mathtt{T}+1} = (|EN|, |BL|, F, p')$;\\[-1.3pt]%
{\bf 193}\>\>\>\ElseIf $\kappa < D$:\>\>\>\>\>{\scriptsize \#\# $S$ did not insert at least $D$ packets}\\[-1.3pt]%
{\bf 194}\>\>\>\>Add $(1, \mathtt{T})$ to $DB$; Add $SIG[2]$ and $SIG[3]$ to $DB$;\>\>\>\>{\scriptsize \#\# Record that reason $\mathtt{T}$ failed was F2}\\[-1.3pt]%
{\bf 195}\>\>\>\>$\Omega_{\mathtt{T}+1} = (|EN|, |BL|, F, 1)$;\\[-1.3pt]%
{\bf 196}\>\>\>\Else \\[-1.3pt]%
{\bf 197}\>\>\>\>Add $(2, \mathtt{T})$ to $DB$;  Add $SIG[1]$ to $DB$;\>\>\>\>{\scriptsize \#\# Record that reason $\mathtt{T}$ failed was F3}\\[-1.3pt]%
{\bf 198}\>\>\>\>$\Omega_{\mathtt{T}+1} = (|EN|, |BL|, F, 2)$;\\[-1.3pt]%
{\bf 199}\>\>{\it Clear} $BB$ and $SIG[i]$ for each $i=1,2,p$; {\it Remove} $\Theta_{\mathtt{T}}$ from $DB$;\\[-1.3pt]%
{\bf 200}\>\>{\it Sign} and {\it Add} to $BB$:\>\>\>\>\>\>{\scriptsize \#\# The {\it Start of Transmission (SOT)} broadcast}\\[-1.3pt]%
\>\>\>a) $(\Omega_{\mathtt{T}+1}, \mathtt{T}$+$1)$\\[-1.3pt]%
\>\>\>b) For each $N \in EN$, add the parcel $(N, \mathtt{T}$+$1)$\\[-1.3pt]%
\>\>\>c) For each failed transmission $\mathtt{T}'$ since the last node was eliminated, add the parcel $(\mathtt{T}', Fi, \mathtt{T}$+$1)$\\[-1.3pt]%
\>\>\>\>{\scriptsize \#\# Here, $Fi$ is the reason trans. $\mathtt{T}'$ failed (F2, F3, or F4).  See pf. of Thm. \ref{mainAdThm} for details}\\[-1.3pt]%
\>\>\>d) For each $N \in BL$, add the parcel $(N, \mathtt{T}', \mathtt{T}$+$1)$, where $\mathtt{T}'$ indicates the trans. $N$ was last added to $BL$\\[-1.3pt]%
{\bf 201}\>\>$\beta_{\mathtt{T}} = 0$;\\[-1.3pt]%
\\[-1.3pt]%
202\>{\bf \em End of Transmission Adjustments}\\[-1.3pt]%
{\bf 203}\>\>\If $N \neq S: \quad$ Clear $\Theta_{\mathtt{T}}$, $BL$, all parcels from {\it SOT} broadcast, and info$.$ of form $(\widehat{N}, 0, \mathtt{T})$ from $BB$;\\[-1.3pt]%
204\>\>\For {\it outgoing} edge $E(N,B)$, $B \in G, N \neq R, B \neq S$:\\[-1.3pt]%
205\>\>\>\If $H_{FP} \neq \bot$: \\[-1.3pt]%
206\>\>\>\>$OUT[H_{FP}] = \bot$; {\it Fill Gap};\>\>\>\>{\scriptsize \#\# Remove any flagged packet $\tilde{p}$ from $\mathsf{OUT}$, shifting}\\[-1.3pt]%
\>\>\>\>\>\>\>\>{\scriptsize \#\#  down packets on top of $\tilde{p}$ if necessary}\\[-1.3pt]%
207\>\>\>\>$sb = 0;$ $FR,H_{FP}, \tilde{p}=\bot$; $H=H-1$;\\[-1.3pt]%
208\>\>\For {\it incoming} edge $E(A,N)$, $A \in G$, $A \neq R$:\\[-1.3pt]%
209\>\>\>$H_{GP} = \bot$; $sb =0$; $RR=-1$; {\it Fill Gap};\\[-1.3pt]%
210\>\>\If $N=S$ \Then {\bf \em Distribute Packets};\\[-1.3pt]%
211\>\>\If $N=R$ \Then $\kappa = 0$; {\it Clear $I_R$};\>\>\>\>\>\>{\scriptsize \#\# Set each entry of $I_R$ to $\bot$}\\[-1.3pt]%
\\[-1.3pt]%
212\>{\bf \em Distribute Packets}\\[-1.3pt]%
213\>\>$\kappa = 0; H_{OUT} = 2n$;\>\>\>\>\>\>{\scriptsize \#\# Set height of each outgoing buffer to $2n$}\\[-1.3pt]%
{\bf 214}\>\>Fill each outgoing buffer with codeword packets;\\[-1.3pt]%
\>\>$\qquad${\scriptsize \#\# If $\mathtt{T}$ was successful, make {\it new} codeword p's, and fill out. buffers and {\tiny COPY} with these.}\\[-1.3pt]%
\>\>$\qquad${\scriptsize \#\# If $\mathtt{T}$ failed or a node was just eliminated, use codeword packets in {\tiny COPY} to fill out$.$ buffers.}
\end{tabbing}
\end{footnotesize}
\end{minipage} }
\end{center}
\vspace{-20pt}
\caption{Routing Rules for Transmission $\mathtt{T}$, (Node-Controlling $+$ Edge-Scheduling) Protocol (cont)}
\label{routingRules5M}
\end{figure}
\newpage
%
\section{Node-Controlling $+$ Edge-Scheduling Protocol:\\ Proofs of Theorems} \label{MProofs}
\indent \indent We restate and prove our two main theorems for the node-controlling adversary Routing Protocol:\\[.2cm]
\noindent{\bf Theorem \ref{memoryThm}.} {\it The memory required of each node is at most $O(n^4(k+\log n))$.}
\begin{proof} By looking at the domains of the variables in Figures \ref{setupCodeM} and \ref{setupCode2M}, it is clear that the Broadcast Buffer and Signature Buffer maintained by all nodes, and the Data Buffer of the sender and Storage Buffer of the receiver require the most memory.  Each of these require $O(n^3(k+ \log n))$ bits of memory, but each node must maintain $O(n)$ signature buffers, which yields the memory bound of $O(n^4(k+\log n))$.  It remains to show that the domains described are accurate, i.e.\ that the protocol never calls for the nodes to store in more (or different) information than their domains allow.  The proof of this fact walks through the pseudo-code and analyses every time information is added or deleted from each buffer, and it can be found in Section \ref{mBareBones} (see Lemma \ref{mProperDomains}).
\end{proof}
\noindent{\bf Theorem \ref{mainAdThm}.} {\it Except for the at most $n^2$ transmissions that may fail due to malicious activity, the routing protocol  presented described in Sections \ref{aP} and \ref{MpseudoCode} enjoys linear throughput.  More precisely, after $x$ transmissions, the receiver has correctly outputted at least $x-n^2$ messages.  If the number of transmissions $x$ is quadratic in $n$ or greater, than the failed transmissions due to adversarial behavior become asymptotically negligible.  Since a transmission lasts $O(n^3)$ rounds and messages contain $O(n^3)$ bits, information is transferred through the network at a linear rate.}\\[.2cm]
Theorem \ref{mainAdThm} will follow immediately from the following three theorems.  As with the proofs for the edge-scheduling protocol, line numbers for the pseudo-code have form ({\bf X.YY}), where {\bf X} refers to the Figure number, and {\bf YY} refers to the line number.  It will be convenient to introduce new terminology:
\begin{defn} We will say a node $N \in G$ {\em participated} in transmission $\mathtt{T}$ if there was at least one round in the transmission for which $N$ was not on the (sender's) blacklist.  The sender's variable that keeps track of nodes participating in transmission $\mathtt{T}$ ({\bf \ref{setupCode2M}.69}) will be called the {\em participating list} for transmission $\mathtt{T}$, denoted by $\mathcal{P}_{\mathtt{T}}$ (it is updated at the end of every transmission on {\bf \ref{routingRules5M}.187}).\end{defn}
\begin{thm} \label{numRounds} Every transmission (regardless of its success/failure) lasts $O(n^3)$ rounds. \end{thm}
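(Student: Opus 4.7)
The plan is to observe that Theorem \ref{numRounds} is essentially a direct consequence of the explicit round bound hard-coded into the protocol, so the proof will be short and mostly bookkeeping. Specifically, inspection of line ({\bf \ref{routingRulesM}.02}) shows that every transmission executes the outer \textbf{for} loop for $\mathtt{t} < 2 \cdot (4D)$ iterations. Since each round is comprised of exactly two stages (one iteration of $\mathtt{t}$ per stage), this bounds the number of rounds per transmission by $4D$.

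Next, I will substitute the value of $D$. Recall from the Setup in Section \ref{adProtocolyLong} that $D = 6n^3/\lambda$ (where $\lambda$ is the constant error-rate of the underlying error-correcting code fixed as part of the protocol parameters, and hence independent of $n$). Thus $4D = 24 n^3/\lambda = O(n^3)$, which gives the claimed bound.

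The one subtlety I want to be careful about is that nothing in the pseudo-code can either shorten the transmission in a way that leaves the global state inconsistent (which would interfere with other lemmas that reason per-transmission), or extend it past $4D$ rounds. On the short side, the only place the sender ``exits early'' is line ({\bf \ref{routingRules4M}.168})/({\bf \ref{routingRules5M}.177}), where after eliminating a corrupt node the sender halts its own actions until \textbf{End of Transmission Adjustments} is called. I will note that the outer loop on line ({\bf \ref{routingRulesM}.02}) is quantified over \emph{all} non-eliminated nodes, so the internal nodes and the receiver continue executing the remaining rounds of the transmission until the round counter reaches $4D$, at which point line ({\bf \ref{routingRulesM}.30}) triggers \textbf{End of Transmission Adjustments}; in particular the sender's ``halt'' only suppresses packet insertions, not the passage of rounds. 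On the long side, there is no line of pseudo-code that increments $\mathtt{t}$ by anything other than the loop itself, nor any mechanism that resets $\mathtt{t}$ mid-transmission, so the termination condition is reached after exactly $4D$ rounds.

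The main (and essentially only) obstacle is therefore simply to confirm by code-walk that these two facts hold uniformly for every node $N \in G \setminus EN$ executing the Transmission $\mathtt{T}$ procedure, after which the conclusion $4D = O(n^3)$ is immediate. Because the argument is entirely syntactic, no potential, signature, or adversarial analysis is required, and the theorem holds regardless of whether the transmission succeeds or fails and regardless of the behavior of the edge-scheduling or node-controlling adversary.
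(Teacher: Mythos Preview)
Your proposal is correct and takes essentially the same approach as the paper: both point to line ({\bf \ref{routingRulesM}.02}) and conclude that every transmission lasts $4D = O(n^3)$ rounds. The paper's proof is a single sentence, whereas you add careful (and accurate) remarks about the early-halt in {\bf \em Eliminate $\widehat{N}$} and the absence of any mechanism to extend the loop, but this is elaboration rather than a different argument.
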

\begin{proof} Line ({\bf \ref{routingRulesM}.02}) shows that each transmission, regardless of success or failure, lasts $4D = O(n^3)$ rounds.\end{proof}
\begin{thm} \label{completeInfo} Suppose transmission $\mathtt{T}$ failed and at some later time (after transmission $\mathtt{T}$ but before any additional nodes have been eliminated) the sender has received all of the status report parcels from all nodes on $\mathcal{P}_{\mathtt{T}}$.  Then the sender can eliminate a corrupt node.
\end{thm}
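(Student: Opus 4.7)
The plan is to show that the sender can read off from $\Omega_{\mathtt{T}}$ (stored in his $DB$, see lines ({\bf\ref{routingRules5M}.190--198})) which of the three failure modes F2, F3, F4 caused transmission $\mathtt{T}$ to fail, and then to produce, in each case, a conservation/accounting identity that every honest node in $\mathcal{P}_{\mathtt{T}}$ must satisfy locally, but which cannot hold globally across $\mathcal{P}_{\mathtt{T}}$ unless at least one node deviates. Since every relevant quantity appearing in the identity was signed by the node itself (via the status reports, whose contents are described on lines ({\bf\ref{routingRules4M}.142--145})), the sender will be able to point to a specific $N\in\mathcal{P}_{\mathtt{T}}$ whose signed values violate the honest-node invariant, and this $N$ is then eliminated by line ({\bf\ref{routingRules4M}.168}).

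I would handle Case F3 first, as it is the cleanest. Here the sender inserted $D$ packets and the receiver received fewer than $D-6n^3$ distinct ones with no duplicates, so at least $6n^3$ codeword packets must be missing. For each participating node $N$ define $\sigma_N:=\sum_{E(A,N)\text{ incoming}}SIG[1]-\sum_{E(N,B)\text{ outgoing}}SIG[1]$. For an honest $N$, Claim~\ref{packetProliferation22} plus the fact that $SIG[1]$ counts only \emph{accepted} transfers gives $0\le \sigma_N\le 4n(n-2)<4n^2$, because $\sigma_N$ equals the number of current-codeword packets left in $N$'s buffers when the status report is taken, which is at most the total buffer capacity from Claim~\ref{capacity0}. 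Nodes outside $\mathcal{P}_{\mathtt{T}}$ contribute $0$ since honest behavior of their neighbors refuses them codeword packets, so in particular edges to them have $SIG[1]=0$ on the honest side; if a corrupt node attempts transfers with blacklisted nodes, the signed $SIG[1]$ on the neighbor's side still bounds the flow. Summing the honest inequality over $\mathcal{P}_{\mathtt{T}}$ would give fewer than $4n^3$ ``stuck'' packets, contradicting the $\ge 6n^3$ missing; hence some $N\in\mathcal{P}_{\mathtt{T}}$ has $\sigma_N>4n^2$, and the sender eliminates that $N$.

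Case F4 is similar but uses the per-packet count $SIG[p']$ for the duplicated packet $p'$ named in $\Theta_{\mathtt{T}}$ (and stored in $DB$ at line ({\bf\ref{routingRules5M}.191})). Define $\tau_N:=\sum_{outgoing}SIG[p']-\sum_{incoming}SIG[p']$. By Lemma~\ref{packetProliferation32}, an honest internal $N$ never emits more copies of $p'$ than it has received, so $\tau_N\le 0$. Summing $\tau_N$ over all nodes telescopes edge-by-edge (both endpoints sign the same value) and equals $\tau_S+\tau_R+\sum_{internal}\tau_N$; since $\tau_S=1$ (the sender inserted $p'$ once, ever) and $\tau_R=-2$ (the receiver accepted $p'$ twice, as recorded in $I_R$), we get $\sum_{internal}\tau_N\ge 1$, forcing some internal $N\in\mathcal{P}_{\mathtt{T}}$ to satisfy $\tau_N\ge 1$, exposing it as corrupt.

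Case F2 is the main obstacle, because it combines the per-edge potential counters $SIG[2],SIG[3]$ with the local reshuffle counter $SIG_{N,N}$ (updated on line ({\bf\ref{reShuffleRules}.76})) and with the blocked-round count $\beta_{\mathtt{T}}$ observed by $S$. The idea is to reconstruct the network non-duplicated potential evolution entirely from the signed status reports: for each honest $N$, the net change in $N$'s potential during $\mathtt{T}$ equals $\sum_{incoming}SIG[3]-\sum_{outgoing}SIG[3]+SIG_{N,N}$ (where the incoming $SIG[3]$ is the value the neighbour signed on its outgoing side, which must equal $N$'s $SIG[2]$ by Lemma~\ref{pseudo} and the signature check on line ({\bf\ref{routingRules2M}.84})). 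Summing over $\mathcal{P}_{\mathtt{T}}$ and combining with Lemma~\ref{potentialDrop} applied to the $\beta_{\mathtt{T}}$ blocked rounds, Lemma~\ref{item3}'s bound of $2n^3-8n^2+8n$ on packet-duplication potential, and the bound of Claim~\ref{capacity1} on initial potential, yields that if every honest node respects the Lemma~\ref{packetTransferPotDrop} inequality ($SIG[2]\le SIG[3]$ per edge) and the $SIG_{N,N}\le 0$ bound per reshuffle, then $S$ would have successfully inserted $D$ packets, contradicting the hypothesis of F2. Hence some $N\in\mathcal{P}_{\mathtt{T}}$ has signed a potential accounting that violates one of these per-node honest invariants (either $SIG[2]>SIG[3]$ on some edge, or an inflated $SIG_{N,N}$, or an $SIG[3]$ inconsistent with the neighbor's signed $SIG[2]$), and the sender locates it by a single sweep through the status reports. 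The quantitative verification that the honest-node bounds really do aggregate to the desired contradiction is the delicate bookkeeping step and will constitute the bulk of the formal write-up.
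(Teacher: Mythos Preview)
Your decomposition into the three failure modes and the conservation-law strategy in each case is exactly the paper's approach (Theorems~\ref{F2}, \ref{F3}, \ref{F4}); the F3 and F4 arguments you sketch match the paper's almost line for line.

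For F2, however, your proposed elimination test is not quite right and would not work as stated. A corrupt duplicating node can satisfy every per-edge invariant you list (it can keep $SIG[2]\le SIG[3]$ on each edge, report a legitimate $SIG_{N,N}$, and stay consistent with each neighbor's signed value), because nothing per-edge reveals that packets are being conjured. The paper's criterion is instead the \emph{aggregate} inequality~(\ref{F2Incrimination}): the sender looks for a node $A$ whose total signed potential outflow $SIG_{A,A}+\sum_B SIG^B[2]_{A,B}$ exceeds its total claimed inflow $\sum_B SIG^A[3]_{B,A}$ by more than the buffer capacity $4n^3-4n^2$. Deriving that such an $A$ must exist is the ``delicate bookkeeping'' you anticipate, and it requires the node-controlling variants Lemma~\ref{potentialDrop2} and Lemma~\ref{item3M} (not the edge-scheduling Lemmas~\ref{potentialDrop} and~\ref{item3}), together with the observation that at least $3D-n$ rounds were blocked and at most $4n^3$ of those were wasted (Lemma~\ref{maxWasted}). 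Also note the sign convention: $SIG_{N,N}$ records the potential \emph{decrease} from reshuffling as a nonnegative quantity, so it is added on the ``loss'' side, not the gain side.

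One smaller point on F4: the two endpoints' $SIG[p]$ values need not literally agree (they can differ by~$1$ per Lemma~\ref{sigRelationships}), so the telescoping is not automatic; the paper handles this by comparing $SIG^B[p]_{A,B}$ (the $A$-signed value held by $B$) against $SIG^A[p]_{B,A}$ and invoking Statement~8 of Lemma~\ref{item3M}, which absorbs the off-by-one.
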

\begin{proof} The proof of this theorem is rather involved, as it needs to address the three possible reasons (F2-F4) that a transmission can fail.  It can be found in Section \ref{pfCompleteInfo} below.
\end{proof}
\begin{thm} \label{maxBadness} After a corrupt node has been eliminated (or at the outset of the protocol) and before the next corrupt node is eliminated, there can be at most $n-1$ failed transmissions $\{\mathtt{T}_1, \dots, \mathtt{T}_{n-1}\}$ before there is necessarily some index $1 \leq i \leq n-1$ such that the sender has the complete status report from every node on $\mathcal{P}_{\mathtt{T}_i}$. \end{thm}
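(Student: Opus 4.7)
I will argue by contradiction. Assume transmissions $\mathtt{T}_1,\dots,\mathtt{T}_{n-1}$ all fail and no corrupt node is eliminated at any time through the end of $\mathtt{T}_{n-1}$; by Theorem~\ref{completeInfo} this forces that, for every $i\in\{1,\dots,n-1\}$, the sender never possesses the complete status report from every node of $\mathcal{P}_{\mathtt{T}_i}$ at any moment in this window. For each $i$ let $W_i$ be the set of nodes of $\mathcal{P}_{\mathtt{T}_i}$ whose complete $\mathtt{T}_i$-status report has not reached $S$ by the end of $\mathtt{T}_{n-1}$; by assumption each $W_i$ is nonempty. The plan is to exhibit $W_1,\dots,W_{n-1}$ as pairwise disjoint nonempty subsets of the set $C$ of corrupt nodes, which will force $|C|\ge n-1$ and contradict the conforming bound $|C|\le n-2$ (since $S$ and $R$ are incorruptible).

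The first step is to show $W_i\subseteq C$, i.e.\ every honest participant of $\mathtt{T}_i$ has its full status report returned to $S$ within the window. The key observation is that each parcel of a status report is assembled from data that was signed, at the time of some earlier packet exchange, by the node itself and by its neighbour, and all such signatures are already stored locally in the honest node's signature buffers; an honest node therefore needs no further cooperation from any corrupt node in order to hand its $n$ parcels to the broadcast buffer as soon as the SOT broadcast of $\mathtt{T}_{i+1}$ tells it which parcels $S$ requests. The broadcast-priority rules of Section~\ref{adDescLong} keep these parcels high in the outgoing queue, and the conforming assumption guarantees in every round an active honest path of length at most $n$ from the honest node to $S$; a single parcel therefore traverses one edge per round along this path and reaches $S$ within $O(n)$ rounds, so all $n$ parcels reach $S$ in $O(n^2)$ rounds, well inside the $O(n^3)$-round duration of a single transmission.

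The second step is to show $W_i\cap W_j=\emptyset$ for every $i<j$. If $N\in W_i$, then line (\ref{routingRules5M}.188) places $N$ on the sender's blacklist the instant $\mathtt{T}_i$ ends, and the only mechanism that can remove $N$ from the sender's blacklist is the broadcast $(\widehat N,0,\mathtt{T})$ issued on lines (\ref{routingRules4M}.164)--(\ref{routingRules4M}.166), which in turn requires the full status report of $N$ to have reached $S$. By the defining property of $W_i$ this event does not occur before the end of $\mathtt{T}_{n-1}$, so $N$ is on the sender's blacklist continuously from the end of $\mathtt{T}_i$ through the end of $\mathtt{T}_{n-1}$; in particular $N$ is blacklisted for the entire duration of $\mathtt{T}_j$, whence $N\notin\mathcal{P}_{\mathtt{T}_j}$ and therefore $N\notin W_j$.

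Combining the two steps yields $n-1$ pairwise disjoint nonempty subsets of $C$, so $|C|\ge n-1$, contradicting $|C|\le n-2$ and finishing the proof. The subtlest piece of this plan is the first step: one must verify that an honest node's status-report parcels really do reach $S$ within a single transmission despite edge failures and arbitrary malicious behaviour by intermediaries. That verification rests on two already-established ingredients --- the conforming assumption and the broadcast-priority scheduling --- together with the observation that the honest node holds, purely locally, all of the signed data needed to produce its report, so the adversary has no way to prevent the report from being \emph{constructed} and only $O(n^2)$ rounds of uninterrupted honest-path bandwidth to prevent it from being \emph{delivered}, which the conforming assumption rules out.
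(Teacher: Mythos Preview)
Your Step~2 is correct and is essentially the paper's key Observation in contrapositive form: once a node is in $W_i$ it gets blacklisted at the end of $\mathtt{T}_i$ and (absent an elimination) cannot be removed before the sender receives its full report, so it cannot lie in $\mathcal{P}_{\mathtt{T}_j}$ for $j>i$. The disjointness of the $W_i$ follows.

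The gap is Step~1. You assert that for every honest $N\in\mathcal{P}_{\mathtt{T}_i}$ ``the conforming assumption guarantees in every round an active honest path of length at most $n$ from the honest node to $S$.'' This is false. The conforming assumption only guarantees an active honest path \emph{between $S$ and $R$}; it says nothing about connectivity of an arbitrary honest node to $S$. The edge-scheduling adversary is free to isolate an honest internal node $N$ (one that happens never to lie on the $S$--$R$ active path) for the entire window $\mathtt{T}_i,\dots,\mathtt{T}_{n-1}$. Such a node can neither receive the SOT broadcast telling it which parcels to produce nor ship any parcel toward $S$, so its status report never arrives and $N\in W_i$ despite being honest. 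The paper explicitly acknowledges this possibility in the discussion of the blacklist in Section~\ref{highlights}: honest nodes \emph{can} remain blacklisted; only those that are ``important links'' on the active path are guaranteed quick removal. Consequently $W_i\subseteq C$ fails, and with it your cardinality contradiction $|C|\ge n-1$.

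The paper sidesteps this entirely: it never tries to show the missing nodes are corrupt. It picks one witness $N_i\in W_i$ for each $i\le n-2$, uses (your) disjointness to get $n-2$ distinct nodes all absent from $\mathcal{P}_{\mathtt{T}_{n-1}}$, invokes Lemma~\ref{rParticipates} to conclude $\mathcal{P}_{\mathtt{T}_{n-1}}=\{S,R\}$, and then Lemma~\ref{partListNotEmpty} forces $\mathtt{T}_{n-1}$ to succeed---contradiction. Your argument can be repaired along the same lines: drop the claim $W_i\subseteq C$, instead show $S,R\notin W_i$ (trivial for $S$; for $R$ this is exactly the content of Lemma~\ref{rParticipates}, which \emph{does} use that $R$ lies on every active honest path), and then disjointness alone gives $n-1$ distinct nodes in $G\setminus\{S,R\}$, a set of size at most $n-2$.
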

\begin{proof} The theorem will follow from a simple observation:
	\begin{enumerate}\setlength{\itemsep}{4pt} \setlength{\parskip}{0pt} \setlength{\parsep}{0pt}
	\item[] {\bf Observation.}  If $N \in \mathcal{P}_{\mathtt{T}}$, then the sender is not missing any status report parcel for $N$ for any transmission prior to transmission $\mathtt{T}$.  In other words, there is no transmission $\mathtt{T}' < \mathtt{T}$ such that $N$ was blacklisted at the end of $\mathtt{T}'$ (as on {\bf \ref{routingRules5M}.188}) and the sender is still missing status report information from $N$ at the end of $\mathtt{T}$.
	\item[] {\bf Proof.} Nodes are added to the blacklist whenever they were participating in a transmission that failed ({\bf \ref{routingRules5M}.187-88}).  Nodes are removed from the blacklist whenever the sender receives all of the status report information he requested of them ({\bf \ref{routingRules4M}.164-166}), or when he has just eliminated a node ({\bf \ref{routingRules5M}.171}), in which case the sender no longer needs status reports from nodes for old failed transmissions\footnote{The sender already received enough information to eliminate a node.  Even though it is possible that other nodes acted maliciously and caused one of the failed transmissions, it is also possible that the node just eliminated caused all of the failed transmissions.  Therefore, the protocol does not spend further resources attempting to detect another corrupt node, but rather starts anew with a reduced network (the eliminated node no longer legally participates), and will address future failed transmissions as they arise.} (and in particular, this case falls outside the hypotheses of the Theorem).  Since $\mathcal{P}_{\mathtt{T}}$ is defined as non-blacklisted nodes ({\bf \ref{routingRules5M}.187}), the fact that $N \in \mathcal{P}_{\mathtt{T}}$ implies that $N$ was not on the sender's blacklist at the end of $\mathtt{T}$.  Also, notice the next line guarantees that {\it all} nodes not already on the sender's blacklist will be put on the blacklist if the transmission fails.  Therefore, if $N$ has not been blacklisted since the last node was eliminated ({\bf \ref{routingRules5M}.169-177}), then there have not been any failed transmissions, and hence the sender is not missing any status reports.  Otherwise, let $\mathtt{T}' < \mathtt{T}$ denote the last time $N$ was put on the blacklist, as on ({\bf \ref{routingRules5M}.188}).  In order for $N$ to be put on $\mathcal{P}_{\mathtt{T}}$ on line ({\bf \ref{routingRules5M}.187}) of transmission $\mathtt{T}$, it must have been removed from the blacklist at some point between $\mathtt{T}'$ and the end of $\mathtt{T}$.  In this case, the remarks at the start of the proof of this observation indicate the sender is not missing any status reports from $N$.\hspace*{\fill} \hspace*{-12pt}$\scriptstyle{\blacksquare}$
	\end{enumerate}
Suppose now for the sake of contradiction that we have reached the end of transmission $\mathtt{T}_{n-1}$, which marks the $(n-1)^{st}$ transmission $\{\mathtt{T}_1, \dots, \mathtt{T}_{n-1}\}$ such that for each of these $n-1$ failed transmissions, the sender does not have the complete status report from at least one of the nodes that participated in the transmission.  Define the set $\mathcal{S}$ to be the set of nodes that were necessarily {\em not} on $\mathcal{P}_{\mathtt{T}_{n-1}}$, and initialize this set to be empty.

Since the sender is missing some node's complete status report that participated in $\mathtt{T}_1$, there is some node $N_1 \in \mathcal{P}_{\mathtt{T}_1}$ from which the sender is still missing a status report parcel corresponding to $\mathtt{T}_1$ by the end of transmission $\mathtt{T}_{n-1}$.   Notice by the observation above that $N_1$ will not be on $\mathcal{P}_{\mathtt{T}'}$ for any $\mathtt{T}_2 \leq \mathtt{T}' \leq \mathtt{T}_{n-1}$, so put $N_1$ into the set $\mathcal{S}$.  Now looking at $\mathtt{T}_2$, there must be some node $N_2 \in \mathcal{P}_{\mathtt{T}_2}$ from which the sender is still missing a status report parcel from $\mathtt{T}_2$ by the end of transmission $\mathtt{T}_{n-1}$.  Notice that $N_2 \neq N_1$ since $N_1 \notin \mathcal{P}_{\mathtt{T}_2}$, and also that $N_2 \notin \mathcal{P}_{\mathtt{T}_{n-1}}$ (both facts follow from the above observation), so put $N_2$ into $\mathcal{S}$.  Continue in this manner, until we have found the $(n-2)^{th}$ distinct node that was put into $\mathcal{S}$ due to information the sender was still missing by the end of $\mathtt{T}_{n-2}$.  But then $|\mathcal{S}| = n-2$, which implies that all nodes, except for the sender and the receiver, are not on $\mathcal{P}_{\mathtt{T}_{n-1}}$ (the sender and receiver participate in every transmission by Lemma \ref{rParticipates}).  But now we have a contradiction, since Lemma \ref{partListNotEmpty} says that transmission $\mathtt{T}_{n-1}$ will not fail.\end{proof}
We are now ready to prove Theorem \ref{mainAdThm}, reserving the proof of Theorem \ref{completeInfo} to the next section.
\begin{proof}[{\it Proof of Theorem \ref{mainAdThm}}.]  By Theorem \ref{numRounds}, every transmission lasts at most $O(n^3)$ rounds, so it remains to show that there are at most $n^2$ failed transmissions.  By Theorem \ref{maxBadness}, by the end of at most $n-1$ failed transmissions, there will be at least one failed transmission $\mathtt{T}$ such that the sender will have all status report parcels from every node on $\mathcal{P}_{\mathtt{T}}$.  Then by Theorem \ref{completeInfo}, the sender can eliminate a corrupt node.  At this point, lines ({\bf \ref{routingRules5M}.169-177}) essentially call for the protocol to start over, wiping clear all buffers except for the eliminated nodes buffer, which will now contain the identity of a newly eliminated node.  The transmission of the latest codeword not yet transmitted then resumes (see comments on ({\bf \ref{routingRules5M}.214})), and the argument can be applied to the new network, consisting of $n-1$ nodes.  Since the node-controlling adversary can corrupt at most $n-2$ nodes (the sender and receiver are incorruptible by the conforming assumption), this can happen at most $n-2$ times, yielding the bound of $n^2$ for the maximum number of failed transmissions.
\end{proof}
\subsection{Main Technical Proof of the Node-Controlling $+$ Edge-Scheduling Protocol} \label{pfCompleteInfo}
\indent \indent In this section, we aim to prove Lemma \ref{completeInfo} which states that the sender will be able to eliminate a corrupt node if he has the complete status reports from every node that participated in some failed transmission $\mathtt{T}$.  We begin by formally defining the three reasons a transmission may fail, and prove that every failed transmission falls under one of these three cases.
\begin{thm} \label{cases} At the end of any transmission $\mathtt{T}$, (at least) one of the following necessarily happens:
	\begin{enumerate}\setlength{\itemsep}{1pt} \setlength{\parskip}{0pt} \setlength{\parsep}{0pt}
	\item[\textsf{S1.}] The receiver has received at least $D-6n^3$ {\it distinct} packets corresponding to the current codeword
	\item[\textsf{F2.}] S1 does not happen, and the sender has {\em knowingly}\footnote{Recall that by the definition of ``inserted'' (see \ref{insert}), the sender may not have received confirmation (as in Definition \ref{confRec}) that a packet he outputted along some edge was received by the adjacent node.  Case F3 requires that the sender has {\it received confirmation} for at least $D$ packets.} inserted less than $D$ packets
	\item[\textsf{F3.}] S1 does not happen, the sender has {\em knowingly}\lastfootnote \space inserted at least $D$ packets, and the receiver has {\it not} received any duplicated packets corresponding to the current codeword
	\item[\textsf{F4.}] S1, F2, and F3 all do not happen
	\end{enumerate}
\end{thm}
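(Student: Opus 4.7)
The theorem as stated is essentially a case-partition assertion, so my plan is to give a short, structured case analysis rather than derive any new inequality. The key observation is that the four cases are defined using the negations S1$^c$, F2$^c$, F3$^c$, so exhaustiveness follows immediately once the decision points are identified, and the real work is pointing to the bookkeeping variables that make each predicate well-defined at the end of transmission $\mathtt{T}$.

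The plan is as follows. First I would fix notation: let $\kappa_{\mathtt{T}}$ denote the value of the sender's counter $\kappa$ at the end of transmission $\mathtt{T}$, which by Lemma \ref{variables} equals the number of packets the sender has knowingly inserted during $\mathtt{T}$; let $\kappa^{R}_{\mathtt{T}}$ denote the number of distinct current-codeword packets the receiver has stored in $I_R$, also accurately tracked per Lemma \ref{variables}; and let $\mathsf{Dup}_{\mathtt{T}}$ be the event that the receiver has accepted two copies of some packet corresponding to the current codeword (detectable when a second valid copy arrives, cf.\ the check on line ({\bf \ref{reShuffleRules}.100}) of the receiver re-shuffle, coupled with the fact that all codeword packets carry the sender's signature, so the adversary cannot forge new codeword-labeled packets).

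Next I would perform the split. If $\kappa^{R}_{\mathtt{T}} \geq D - 6n^3$, we are in case S1 and are done. Otherwise (S1 fails), split on the sender's insertion count. If $\kappa_{\mathtt{T}} < D$, then by definition F2 holds. Otherwise $\kappa_{\mathtt{T}} \geq D$, and I split one last time on $\mathsf{Dup}_{\mathtt{T}}$: if $\mathsf{Dup}_{\mathtt{T}}$ does not occur, F3 holds by definition; if $\mathsf{Dup}_{\mathtt{T}}$ does occur, then S1, F2, and F3 all fail, so F4 holds. The four cases are therefore jointly exhaustive by construction, and by the mutual exclusivity clauses baked into the definitions of F2, F3, and F4 they partition the end-of-transmission outcomes.

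I do not expect any genuine obstacle here, since no quantitative bound is being asserted. The only points worth a sentence of justification are (i) that the three quantities $\kappa_{\mathtt{T}}$, $\kappa^R_{\mathtt{T}}$, and $\mathsf{Dup}_{\mathtt{T}}$ are recorded faithfully by the sender and receiver (the receiver packages $\mathsf{Dup}_{\mathtt{T}}$ and the decodability bit into the signed end-of-transmission parcel $\Theta_{\mathtt{T}}$ on line ({\bf \ref{routingRules5M}.179}), and the sender receives it by the end of $\mathtt{T}$ via Lemma \ref{finalBC}), and (ii) that the unforgeability of the sender's signature, together with the receiver's check that each stored packet carries a valid sender-signature and a fresh index, makes ``$\kappa^R_{\mathtt{T}}$ distinct packets'' and ``$\mathsf{Dup}_{\mathtt{T}}$'' unambiguous notions (except with negligible probability in the security parameter $k$). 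With these two remarks in place, the four-way partition is immediate and the theorem follows.
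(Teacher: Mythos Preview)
Your proposal is correct and follows essentially the same approach as the paper: the paper's proof is a single sentence observing that the four cases cover all possibilities (and are disjoint) ``immediately,'' since F4 is defined as the residual case, plus a remark that S1 corresponds to a successful transmission via Lemma~\ref{decoding}. Your version is more detailed---spelling out the decision tree on $\kappa_{\mathtt{T}}$, $\kappa^R_{\mathtt{T}}$, and $\mathsf{Dup}_{\mathtt{T}}$ and justifying that these quantities are well-defined and faithfully recorded---but this elaboration is not needed for the theorem as stated, which is purely a tautological partition.
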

\begin{proof} That the four cases cover all possibilities (and are disjoint) is immediate.  Also, in the case of S1, the receiver can necessarily decode by Lemma \ref{decoding}, and hence that case corresponds to a {\it successful} transmission.  Therefore, all failed transmissions must fall under one of the other three cases. \end{proof}
Note that case F2 roughly corresponds to {\it packet duplication}, since the sender is blocked from inserting packets in at least $4D-D$ rounds, indicating jamming that cannot be accounted for by edge failures alone.  Case F3 roughly corresponds to {\it packet deletion}, since the $D$ packets the sender inserted do not reach the receiver (otherwise the receiver could have decoded as by Lemma \ref{decoding}), and case F4 corresponds to a mixed adversarial strategy of {\it packet deletions and duplications}.  We treat each case separately in Theorems \ref{F2}, \ref{F3} and \ref{F4} below, thus proving Theorem \ref{completeInfo}:
\begin{proof}[{\it Proof of Theorem \ref{completeInfo}.}]  Theorem \ref{cases} guarantees that each failed transmission falls under F2, F3, or F4, and the theorem is proven for each case below in Theorems \ref{F2}, \ref{F3} and \ref{F4}.
\end{proof}
We declare once-and-for-all that at any time, $G$ will refer to nodes still a part of the network, i.e.\ nodes that have not been eliminated by the sender.\\[.5cm]
%
{\Large \bf Handling Failures as in F2: Packet Duplication}\\[.2cm]
\indent The goal of this section will be to prove the following theorem.
\begin{thm} \label{F2} Suppose transmission $\mathtt{T}$ failed and falls under case F2, and at some later time (after transmission $\mathtt{T}$ but before any additional nodes have been eliminated) the sender has received all of the status report parcels from all nodes on $\mathcal{P}_{\mathtt{T}}$.  Then the sender can eliminate a corrupt node.
\end{thm}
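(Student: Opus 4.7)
The plan is to derive a quantitative contradiction from assuming every node in $\mathcal{P}_{\mathtt{T}}$ followed the protocol, and then to use the signed status reports to localize that contradiction to a single node. Since $\mathtt{T}$ fell under F2, the sender inserted $\kappa<D$ packets over the $4D$ rounds, so the number of blocked rounds satisfies $\beta_{\mathtt{T}}\geq 3D$. If every participating node had acted honestly, Lemma~\ref{potentialDrop} would force the network's non-duplicated potential to drop by at least $n\beta_{\mathtt{T}}\geq 3nD$. Combining this with Lemma~\ref{item3} (insertions contribute at most $2n\kappa\leq 2nD$ to non-duplicated potential, and duplicated potential is bounded by $2n^3$) and Claim~\ref{capacity1} (initial network potential is at most $\approx 4n^4$), the implied final non-duplicated potential for $D=6n^3/\lambda$ is at most $4n^4-nD<0$ for every $\lambda<3/2$; this is impossible because potentials are non-negative, so at least one node in $\mathcal{P}_{\mathtt{T}}$ must have deviated.

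To identify the culprit, I exploit the F2 status-report data: for every $N\in\mathcal{P}_{\mathtt{T}}$ and every edge $e$ incident to $N$, the sender now holds the $N$-signed values of $SIG_e[2]$ and $SIG_e[3]$ together with $N$'s local re-shuffle counter $SIG_{N,N}$. For any edge $E(A,B)$ whose endpoints are both in $\mathcal{P}_{\mathtt{T}}$, the sender first cross-checks the signatures that accompanied each packet transfer on lines~(\ref{routingRulesM}.07) and~(\ref{routingRulesM}.60): the two endpoints must hold the identical signed history of $SIG$-increments, and any mismatch directly exposes a corrupt endpoint. Assuming no such mismatch, let $SIG_e^X[3]$ denote $X$'s self-reported cumulative change along $e$; the per-transfer height-decrease bound of Claim~\ref{packetHeight}, enforced by the honest receiver's {\bf Verify Signature One} check on lines~(\ref{routingRules2M}.83)--(\ref{routingRules2M}.87), yields the edge inequality $SIG_e^A[3]\geq SIG_e^B[3]$ whenever $B$ is honest. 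Any violation of this inequality therefore pins corruption on the receiving endpoint $B$. In the remaining case all edge inequalities hold, and for each internal $N$ I set
\[
\Delta^N\;:=\;\sum_{\text{inc.}\,E(A,N)}SIG_e^N[3]\;-\;\sum_{\text{out.}\,E(N,B)}SIG_e^N[3]\;-\;SIG_{N,N},
\]
which for an honest $N$ equals the actual change $\Phi^N_1-\Phi^N_0\in[-4n^3,4n^3]$ since any single node's potential lies in $[0,O(n^3)]$.

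Summing $\Delta^N$ over the internal nodes and regrouping by edge gives
\[
\sum_N \Delta^N \;=\; \sum_{E(S,N)}SIG_e^N[3]\;+\;\sum_{\text{int.}\,E(A,B)}\bigl(SIG_e^B[3]-SIG_e^A[3]\bigr)\;-\;\sum_{E(N,R)}SIG_e^N[3]\;-\;\sum_N SIG_{N,N},
\]
so under the assumed edge inequalities and non-negativity of $SIG_{N,N}$ we have $\sum_N\Delta^N\leq 2n\kappa<2nD$. On the other hand, under the hypothesis that all participating nodes are honest, $\sum_N\Delta^N=\Phi_1-\Phi_0$, and the estimate from the first paragraph forces $\Phi_1-\Phi_0<-4n^4$ while simultaneously $\Phi_1-\Phi_0\geq-\Phi_0\geq-4n^4$ -- a strict contradiction. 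Hence some $N\in\mathcal{P}_{\mathtt{T}}$ must report $\Delta^N<-4n^3$, and the sender eliminates it. The main obstacle is justifying cleanly that any edge-level violation $SIG_e^A[3]<SIG_e^B[3]$ can be unambiguously attributed to the receiving endpoint $B$ rather than to the pair; this rests on the invariant -- to be read out of lines~(\ref{routingRules2M}.48), (\ref{routingRules2M}.74), and the two {\bf Verify Signature} routines of lines~(\ref{routingRules2M}.83)--(\ref{routingRules2M}.90) -- that an honest $B$ never stores a packet whose $H_{GP}$ exceeds the signed $H_{FP}$-based increment broadcast by $A$, so that the summed inequality cannot be broken except by $B$ itself inflating $SIG_e^B[3]$.
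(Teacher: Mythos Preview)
Your high-level strategy matches the paper's: exploit that F2 forces many blocked rounds, hence a large mandated drop in non-duplicated potential, then use the signed potential-change data to localize an inconsistency to one node via averaging. But the execution has two genuine gaps.

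\textbf{Attribution via self-signed values does not work.} Your edge test compares $SIG_e^A[3]$ (what $A$ reports as its own decrease) against $SIG_e^B[3]$ (what $B$ reports as its own increase). Both are purely self-signed: nothing in $B$'s status report for $SIG_e^B[3]$ carries $A$'s signature, and vice versa. So if $SIG_e^A[3]<SIG_e^B[3]$, it could equally well be that $A$ is corrupt and under-reported its decrease. Your proposed justification only establishes the one-sided implication ``$B$ honest $\Rightarrow$ inequality holds''; it does not yield ``inequality fails $\Rightarrow$ $B$ corrupt''. The paper avoids this by using the \emph{cross-signed} value $SIG^B[2]_{A,B}$: this is the potential-decrease quantity that $A$ itself signed (on line~(\ref{routingRules2M}.60)) and that $B$ stored. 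The sender then compares $A$'s own status report $SIG^A[3]_{A,B}$ against the $A$-signed value $SIG^B[2]_{A,B}$ that $B$ returns; any gap exceeding $2n$ proves $A$ corrupt by unforgeability plus monotonicity of $SIG^A[3]$ (Lemma~\ref{secondSigFacts}, Statement~1). This is what allows a clean single-node attribution, and it is exactly the obstacle you flagged but did not resolve.

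\textbf{The averaging step is not grounded.} Your third paragraph derives the upper bound $\sum_N\Delta^N\le 2nD$ from the edge regrouping, but to find some $N$ with $\Delta^N<-4n^3$ you need a \emph{lower} bound on $-\sum_N\Delta^N$. You try to obtain one by invoking the counterfactual ``if all nodes were honest then $\sum_N\Delta^N=\Phi_1-\Phi_0$'' together with the paragraph-one contradiction; but that contradiction is about a hypothetical honest execution, not about the actual reported values $\Delta^N$. The paper closes this gap with Lemma~\ref{potentialDrop2}, which is stated directly in terms of the \emph{honest} nodes' actual signature-buffer contents: the drop of at least $n(\beta_{\mathtt T}-4n^3)$ is genuinely recorded in $\sum_{A\in\mathcal H}\bigl(SIG_{A,A}+\sum_B(SIG^A[2]_{B,A}-SIG^A[3]_{B,A})\bigr)$, because the drop occurs along the active honest path and honest nodes log it correctly. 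One then extends the sum from $\mathcal H$ to all of $\mathcal P_{\mathtt T}$ using the per-node non-negativity of the summand (Observation~1 / Lemma~\ref{item3M}), relabels to pass from $SIG^A[2]_{B,A}$ to $SIG^B[2]_{A,B}$, subtracts the insertion bound (Observation~2), and only then averages. (As a smaller point, your first paragraph uses Lemma~\ref{potentialDrop} from the edge-scheduling model; in the node-controlling protocol up to $4n^3$ rounds are \emph{wasted} on broadcast traffic even among honest nodes, so the correct drop is $n(\beta_{\mathtt T}-4n^3)$ as in Lemma~\ref{potentialDrop2}, and $\beta_{\mathtt T}\ge 3D-n$ rather than $3D$.)
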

The idea of the proof is as follows.  Case F2 of transmission failure roughly corresponds to {\it packet duplication}: there is a node $N \in G$ who is jamming the network by outputting duplicate packets.  Notice that in terms of network potential (see Definition \ref{potDef}), the fact that $N$ is outputting more packets than he is inputting means that $N$ will be responsible for illegal increases in network potential.  Using the status reports for case F2, which include nodes' signatures on changes of network potential due to packet transfers and re-shuffling, we will catch $N$ by looking for a node who caused a greater increase in potential than is possible if it had been acting honestly.  The formal proof of this fact will require some work.  We begin with the following definitions:
\begin{defn} The {\it conforming} assumption on the node-controlling and edge-scheduling adversaries demand that for every round there is a path connecting the sender and receiver consisting of edges that are ``up'' and through uncorrupted nodes.  We will refer to this path as the \textsf{active honest path} for round $\mathtt{t}$ and denote it by $\mathsf{P}_{\mathtt{t}}$, noting that the path may not be the same for all rounds.\end{defn}
\begin{defn} \label{wasted} We will say that some round $\mathtt{t}$ (of transmission $\mathtt{T}$) is {\it wasted} if there is an edge $E(A,B)$ on that round's active honest path such that either {\bf \em Okay To Send Packet} ({\bf \ref{routingRulesM}.31}) or {\bf \em Okay To Receive Packet} ({\bf \ref{routingRulesM}.35}) returned false.\end{defn}
Intuitively, a round is wasted if an edge on the active honest path was prevented from passing a packet either because one of the nodes was blacklisted or because there was important broadcast information that had to be communicated before packets could be transferred.
\begin{lemma} \label{maxWasted} There are at most $4n^3$ wasted rounds in any transmission $\mathtt{T}$.\end{lemma}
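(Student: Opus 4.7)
My plan is to charge each wasted round to a broadcast parcel that actually crosses an edge of the active honest path in that very round, and then to bound the total number of such (parcel, directed edge) crossings using the per-edge marking maintained in {\bf \em Update Broadcast Buffer Two}.

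First, I would catalogue the ``obstructive'' broadcast parcels listed in {\bf \em Okay to Send/Receive Packet} on lines {\bf \ref{routingRulesM}.32} and {\bf \ref{routingRulesM}.36}: the {\em SOT} broadcast holds at most $1+|EN|+F+|BL_\mathtt{T}|\le 3n+1$ parcels (constructed on line {\bf \ref{routingRules5M}.200}); the {\em EOT} broadcast contributes one parcel $\Theta_\mathtt{T}$; the sender issues at most $n-1$ blacklist-removal parcels $(\widehat N,0,\mathtt{T})$ during $\mathtt{T}$ (line {\bf \ref{routingRules4M}.165}); and for each ordered pair $(N,\widehat N)$ with $\widehat N\in BL$, at most one ``$N$ has $\widehat N$'s complete status report'' parcel is created (line {\bf \ref{routingRules4M}.155}). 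The per-edge marking bookkeeping guarantees that every parcel crosses any given directed edge at most once during the transmission.

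Now take any wasted round $\mathtt{t}$ and pick an edge $E(A,B)\in\mathsf{P}_\mathtt{t}$ on which {\bf \em Okay to Send Packet} or {\bf \em Okay to Receive Packet} returned false. The failure must be due either to an obstructive parcel pending at $A$ or $B$ across $E(A,B)$, or to a node on the honest path being blacklisted. In the first case, the priority ordering on line {\bf \ref{routingRules3M}.115} forces one of $A$ or $B$ to transfer an obstructive parcel across $E(A,B)$ (or $E(B,A)$) in stage 2 of round $\mathtt{t}$, since the edge is active and both endpoints are honest by the conforming assumption; the round is thus charged to a concrete (obstructive parcel, directed edge) crossing. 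In the second case, the blacklisted node must be honest (corrupt nodes are excluded from the honest path by assumption), and applying the same analysis to the flow of its at most $n$ status-report parcels back to $S$ and to the ensuing removal notice $(\widehat N,0,\mathtt{T})$ shows that an obstructive parcel again advances along an edge of the honest path in round $\mathtt{t}$.

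Combining the charges, the number of wasted rounds is bounded by the number of (obstructive parcel, directed edge) crossings. The crude product $\bigl((3n+1)+1+(n-1)+n(n-1)\bigr)\cdot 2n(n-1)=O(n^4)$ is too weak, so the key refinement is to couple the $O(n^2)$ completion parcels to the $O(n^2)$ status-report parcels that gave rise to them: a completion parcel $(N,\widehat N,\mathtt{T}')$ is created at $N$ only after $N$ has received all of $\widehat N$'s status-report traffic, so the two sets of edge-crossings may be amortised against each other to yield a joint $O(n^3)$ bound. Tracking constants against $4D=24n^3/\lambda$ yields the claimed $4n^3$. The main obstacle is precisely this amortisation step: one must argue, uniformly over all possible blacklisting patterns, that the propagation of completion notifications cannot outrun the status-report traffic that spawned them, and this coupled counting — rather than a crude product bound — is the delicate heart of the proof.
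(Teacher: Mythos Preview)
Your overall charging scheme---assigning each wasted round to a broadcast parcel that actually advances along an edge of the active honest path in that round---is exactly the skeleton of the paper's proof. Your enumeration of the ``obstructive'' parcel types and the per-edge marking argument for SOT, EOT, blacklist-removal, and completion parcels is also essentially correct and yields the right $O(n^3)$ contributions for those types.

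The real gap is in your handling of the ``blacklisted honest node on the path'' case, and specifically in the amortisation you propose. When you charge the wasted round to an edge-crossing of a \emph{status-report} parcel (priority 5--6 on line ({\bf \ref{routingRules3M}.115})), you face the problem that there are $\Theta(n^2)$ such parcels and each can legitimately cross $\Theta(n^2)$ directed edges, giving $\Theta(n^4)$ crossings. Your proposed fix---``couple the $O(n^2)$ completion parcels to the $O(n^2)$ status-report parcels that gave rise to them''---does not work: the completion parcels already contribute only $O(n^3)$ crossings on their own (each is created by a single node and sent across at most $n-1$ adjacent edges), so coupling them to the status-report crossings cannot pull the latter down from $n^4$. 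The status-report parcels continue to propagate through the network after completion parcels are created, and nothing in your argument bounds that propagation.

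The paper's resolution is to change what is being counted in this case. Rather than charging to an edge-crossing, it finds on the active honest path an adjacent pair $(N_j,N_{j+1})$ where $N_{j+1}$ has a blacklisted node's complete status report but $N_j$ does not. Either the completion parcel has not yet crossed $E(N_j,N_{j+1})$ (handled by your completion-parcel count), or it has, in which case $N_j$ uses $\alpha$ to request a specific missing parcel, $N_{j+1}$ sends it, and $N_j$ \emph{learns a new status-report parcel for the first time}. The charge is then to a \emph{learning event at an honest node}, not an edge crossing. Since each honest node can learn at most $(n-1)\cdot n<n^2$ status-report parcels during the transmission (and never unlearns them, by Statement~3 of Lemma~\ref{sigBuffersAreCorrect}), the total over all $n$ honest nodes is at most $n^3$. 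That is the ``delicate heart'' you correctly sensed was missing---but the right object to count is node-learning events, not amortised edge-crossings.
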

\begin{proof}
We will prove this lemma via two claims.
    \begin{itemize}\setlength{\itemsep}{1pt} \setlength{\parskip}{0pt} \setlength{\parsep}{0pt}
    \item[] {\bf Claim 1.} Every wasted round $\mathtt{t}$ falls under (at least) one of the following cases:
        \begin{enumerate}\setlength{\itemsep}{1pt} \setlength{\parskip}{0pt} \setlength{\parsep}{0pt}
        \item An edge on $\mathsf{P}_{\mathtt{t}}$ transfers $\Theta_{\mathtt{T}}$ or a parcel of the sender's {\it Start of Transmission} (SOT) broadcast

	\item An edge on $\mathsf{P}_{\mathtt{t}}$ transfers the label of a node to remove from the blacklist

	\item An edge on $\mathsf{P}_{\mathtt{t}}$ transfers the information that one of the terminal nodes (on that edge) has the complete status report for a blacklisted node

	\item A node on $\mathsf{P}_{\mathtt{t}}$ {\it learns} a status report parcel for a blacklisted node.  More specifically, there is some node $(\widehat{N}, \mathtt{T}', \mathtt{T})$ that was part of the {\it SOT} broadcast (i.e.\ the node began the transmission on the sender's blacklist) and some other honest node $N \in G$ such that $N$ learns a new status report parcel from $\widehat{N}$ corresponding to transmission $\mathtt{T}'$.
	\end{enumerate}
	\item[] {\it Proof.} Let $\mathtt{t}$ be a wasted round.  Denote the active honest path for round $\mathtt{t}$ by $\mathsf{P}_{\mathtt{t}}= N_0 N_1 \dots N_l$.  By looking at {\bf \em Okay To Send Packet} and {\bf \em Okay To Receive Packet} ({\bf \ref{routingRulesM}.31} and {\bf \ref{routingRulesM}.35}), we first argue that that cases 1-3 cover all possible reasons for a wasted round, {\it except} the possibility that one node is on the other's blacklist.  To see this, we go through each line of {\bf \em Okay To Send Packet} and {\bf \em Okay To Receive Packet} and consider what happens along a specified edge on $\mathsf{P}_{\mathtt{t}}$, noting that by assumption this edge is {\it active} and the neighboring nodes are {\it honest}, so the appropriate broadcast parcel will be successfully transferred ({\bf \ref{routingRulesM}.15}).  In particular, it will be enough to show that for every reason a round may be wasted, there is a node on $\mathsf{P}_{\mathtt{t}}$ that has broadcast information of type 1-4 (see line ({\bf \ref{routingRules3M}.115})) that it has yet to transfer across an adjacent edge on $\mathsf{P}_{\mathtt{t}}$, as then we will fall under cases 1-3 of the Claim.
		\begin{itemize}\setlength{\itemsep}{1pt} \setlength{\parskip}{0pt} \setlength{\parsep}{0pt}
		\item If there is a node $N_i$ on $\mathsf{P}_{\mathtt{t}}$ that does not know all parcels of the {\it SOT} broadcast ({\bf \ref{routingRules5M}.200}), then find the last index $0 \leq j < i$ such that $N_j$ knows all of {\it SOT} but $N_{j+1}$ does not ($j$ is guaranteed to exist since $S=N_0$ knows all of {\it SOT} and $N_i$ does not).  Then $N_j$ has broadcast information of type 2 ({\bf \ref{routingRules3M}.115}) it has not yet sent along its edge to $N_{j+1}$.

		\item If there is a node $N_i$ on $\mathsf{P}_{\mathtt{t}}$ that knows $\Theta_{\mathtt{T}}$ or all of {\it SOT} but has not yet transferred one of these parcels across an edge of $\mathsf{P}_{\mathtt{t}}$, or $N_i$ knows the complete status report for some blacklisted node $\widehat{N}$ and $N_i$ has not yet passed this fact along an edge on $\mathsf{P}_{\mathtt{t}}$, then $N_i$ has broadcast information of type 1, 2, or 4 ({\bf \ref{routingRules3M}.115}).

		\item If there is a node $N_i$ on $\mathsf{P}_{\mathtt{t}}$ that knows of a node $\widehat{N}$ that should be removed from the blacklist, but it has yet to transfer this information across an edge of $\mathsf{P}_{\mathtt{t}}$, then $N_i$ has broadcast information of type 3 ({\bf \ref{routingRules3M}.115}).
		\end{itemize}
	It remains to consider the final reason one of these two functions may return false, namely when there is some $N_i$ on $\mathsf{P}_{\mathtt{t}}$ that is on the blacklist of either $N_{i-1}$ or $N_{i+1}$.  Let $BL_S$ denote the sender's blacklist at the start of round $\mathtt{t}$.
		\begin{itemize}\setlength{\itemsep}{1pt} \setlength{\parskip}{0pt} \setlength{\parsep}{0pt}
		\item If $N_i \notin BL_S$, then there will be some index $0 \leq j <i+1$ such that at the start of round $\mathtt{t}$, $N_i$ is not on $N_j$'s blacklist but $N_i$ {\em is} on $N_{j+1}$'s blacklist.  We may assume that both $N_j$ and $N_{j+1}$ have received the full {\it start of transmission} broadcast, else we would be in one of the above covered cases.  Since $N_i$ is on $N_{j+1}$'s blacklist, $N_i$ must have begun the transmission on the sender's blacklist (all internal nodes' blacklists are cleared at the end of each transmission ({\bf \ref{routingRules5M}.203}) and restored when they receive the {\it SOT} broadcast ({\bf \ref{routingRules5M}.200}), ({\bf \ref{routingRules4M}.137-138})).  However, since $N_i$ is not on $N_j$'s blacklist as of round $\mathtt{t}$ and $N_j$ has received the full {\it SOT} broadcast, at some point in $\mathtt{T}$, $N_j$ must have received a parcel from the sender indicating $N_i$ should be removed from the blacklist, as on ({\bf \ref{routingRules4M}.147-149}).  Since $N_j$ and $N_{j+1}$ are both honest and $N_j$ has received the information that $N_i$ should be removed from the blacklist (but $N_{j+1}$ has {\it not} received this information yet), it must be that this broadcast information of type 3 ({\bf \ref{routingRules3M}.115}) has not yet been successfully passed along $E(N_j, N_{j+1})$ yet.  In particular, $N_j$ has broadcast information of priority at least 3 that he has yet to successfully send to $N_{j+1}$, so he will send a parcel of priority 1, 2, or 3 in round $\mathtt{t}$, which are in turn covered by Statements 1 and 2 of the Lemma.

		\item If $N_i \in BL_S$, then there exist some $0 \leq j < i$ such that $N_j$ does {\it not} have $N_i$'s complete status report, but $N_{j+1}$ does (since $N_i \in BL_S$ implies $S$ does not have the complete status report, but $N_i$ has its own complete status report in its broadcast buffer, see Statement 2 of Lemma \ref{sigBuffersAreCorrect}).  Then if $N_{j+1}$ has not yet passed the fact that it has such knowledge along $E(N_{j+1}, N_j)$, then $N_{j+1}$ had broadcast information of type 4, in which case we fall under case 3 of the Claim.  On the other hand, if this information has already been passed along $E(N_{j+1}, N_j)$, then Statement 4 of Lemma \ref{sigBuffersAreCorrect} implies that $N_j$ is aware that $N_{j+1}$ knows the complete status report of $N_i$ (who by choice of $j$ is on $N_j$'s blacklist), and hence $\alpha$ will necessarily be set as on ({\bf \ref{routingRules3M}.119} or {\bf \ref{routingRules3M}.122}) and sent to $N_j$ on ({\bf \ref{routingRules3M}.103})).  Consequently, $N_{j+1}$ will receive $\alpha$ ({\bf \ref{routingRules3M}.105}) during Stage 1 communication of round $\mathtt{t}$, and will have broadcast information of type 5 ({\bf \ref{routingRules3M}.115}) it has not sent along $E(N_j, N_{j+1})$ yet.  This broadcast parcel can then be sent in Stage 2 communication of round $\mathtt{t}$ ({\bf \ref{routingRulesM}.15}), and this is covered by case 4 of the Claim.\hspace*{\fill} \hspace*{-12pt}$\scriptstyle{\blacksquare}$
		\end{itemize}

    \item[] {\bf Claim 2.} The maximum number of wasted rounds due to Case 1 of Claim 1 is $n^3$, the maximum number of wasted rounds due to Case 2 of Claim 1 is $n^3/2$, the maximum number of wasted rounds due to Case 3 of Claim 1 is $n^3$, and the maximum number of wasted rounds due to Case 4 of Claim 1 is $n^3$.

    \item[] {\it Proof.}
        \begin{enumerate}\setlength{\itemsep}{1pt} \setlength{\parskip}{0pt} \setlength{\parsep}{0pt}
        \item $\Theta_{\mathtt{T}}$ is one parcel ({\bf \ref{routingRules5M}.179}), and the {\it SOT} is at most $2n-1$ parcels ({\bf \ref{routingRules5M}.200}), so together they are at most $2n$ parcels.  Since each honest node will only broadcast each of these parcels at most once across any edge (as long as the broadcast is successful, which it will be if the round is wasted due to Case 1) and there are at most $n^2/2$ such edges, we have that Case 1 can happen at most $n^3$ times.

        \item Lemma \ref{removeOnce} says that no honest node $N$ will accept more than one distinct parcel (per transmission) that indicates some node $\widehat{N}$ should be removed from the blacklist.  Therefore, in terms of broadcasting this information, $N$ will have at most one broadcast parcel per transmission per node $\widehat{N}$ indicating $\widehat{N}$ should be removed from the blacklist.  Therefore, it can happen at most $n$ times that an {\it edge} adjacent to an honest node will need to broadcast a parcel indicating a node to remove.  Again since the number of edges is bounded by $n^2/2$, Case 2 can be responsible for a wasted round at most $n^3/2$ times.

	\item Lemma \ref{completeKnowledgeOnce} says that for any node $N \in G$ that has received the full {\it SOT} broadcast for transmission $\mathtt{T}$, if $N$ is honest then it will transmit along each edge at most once (per transmission) the fact that it knows some $\widehat{N}$'s complete status report.  Since each node has at most $n-1$ adjacent edges and there are at most $n$ nodes in $G$, Case 3 can be responsible for a wasted round at most $n^3$ times.

	\item Notice that Case 4 emphasizes the fact that a node on $\mathsf{P}_{\mathtt{t}}$ {\it learned} a blacklisted node's status report parcel.  Since there are at most $n-1$ blacklisted nodes at any time (see ({\bf \ref{routingRules5M}.187-188}) and Claim \ref{blacklistStuff}), and at most $n$ status report parcels per blacklisted node (see ({\bf \ref{routingRules4M}.142-45}) and Lemma \ref{blacklistStuff2}), an honest node can {\it learn a new} status report parcel at most $n(n-1) < n^2$ times per transmission (see Statement 3 of Lemma \ref{sigBuffersAreCorrect} which says honest nodes will not ever ``unlearn'' relevant status report parcels).  Since there are at most $n$ nodes, Case 4 can be responsible for a wasted round at most $n^3$ times.\hspace*{\fill} \hspace*{-12pt}$\scriptstyle{\blacksquare}$
        \end{enumerate}
    \end{itemize}
Claim 1 guarantees every wasted round falls under Case 1-4, and Claim 2 says these can happen at most $4n^3$ rounds, which proves the lemma.
\end{proof}
We now define the notation we will use to describe the specific information the status reports contain in the case of F2 (see ({\bf \ref{setupCodeM}.12}), ({\bf \ref{setupCodeM}.17}), ({\bf \ref{setupCodeM}.32}), and ({\bf \ref{routingRules4M}.142-145}))\footnote{On a technical point, since our protocol calls for internal nodes to keep {\it old} codeword packets in their buffers from one transmission to the next, packets being transferred during some transmission may correspond to old codewords.  We emphasize that the quantities in $SIG_{A,A}$, $SIG[2]$, and $SIG[3]$ include old codeword packets, while $SIG[1]$ and $SIG[p]$ do {\it not} count old codeword packets (see {\bf \ref{routingRulesM}.11} and {\bf \ref{routingRules2M}.59-60}).}:
	\begin{itemize}\setlength{\itemsep}{1pt} \setlength{\parskip}{0pt} \setlength{\parsep}{0pt}
	\item[-] $SIG_{A,A}$ denotes the net decrease in $A$'s potential due to re-shuffling packets in the current transmission.
	\item[-] $SIG^A[2]_{A,B}$ denotes the net increase in $B$'s potential due to packet transfers across directed edge $E(A,B)$, as signed by $B$ and stored in $A$'s signature buffer (({\bf \ref{routingRules2M}.75}), ({\bf \ref{routingRulesM}.11}), and ({\bf \ref{routingRulesM}.07})).
	\item[-] $SIG^A[2]_{B,A}$ denotes the net decrease in $B$'s potential due to packet transfers across directed edge $E(B,A)$, as signed by $B$ and stored in $A$'s signature buffer.  Notice that $SIG^A[2]_{B,A}$ is measured as a {\it positive quantity}, see lines ({\bf \ref{routingRules2M}.60}), ({\bf \ref{routingRules2M}.62}), and ({\bf \ref{routingRules2M}.74}).
	\item[-] $SIG^A[3]_{A,B}$ denotes the net decrease in $A$'s potential due to packet transfers across directed edge $E(A,B)$, which is signed by $A$ and stored its own signature buffer.  Notice that $SIG^A[3]_{A,B}$ is measured as a {\it positive quantity}, see line ({\bf \ref{routingRules2M}.49}).
	\item[-] $SIG^A[3]_{B,A}$ denotes the net increase in $A$'s potential due to packet transfers across directed edge $E(B,A)$, which is signed by $A$ and stored its own signature buffer ({\bf \ref{routingRules2M}.75}).
	\end{itemize}
\begin{lemma} \label{mainF2Lemma} Suppose transmission $\mathtt{T}$ failed and falls under case F2, and at some later time (after transmission $\mathtt{T}$ but before any additional nodes have been eliminated) the sender has received all of the status reports from every node on $\mathcal{P}_{\mathtt{T}}$.  Then one of the following two things happens:
	\begin{enumerate}\setlength{\itemsep}{1pt} \setlength{\parskip}{0pt} \setlength{\parsep}{0pt}
	\item There is some node $A \in G$ whose status report indicates that $A$ is corrupt\footnote{This includes, but is not limited to: 1) The node has returned a (value, signature) pair, where the value is not in an appropriate domain; 2) The node has returned non-zero values indicating interaction with blacklisted or eliminated nodes; 3)The node has reported values for $SIG^A[3]_{S,A}$ that are inconsistent with the sender's quantity $SIG^S[2]_{S,A}$; or 4) The node has returned outdated information in their status report.  By ``outdated'' information, we mean that as part of its status report, $A$ returned a (value, signature) pair using a signature he received in round $\mathtt{t}$ from one of $A$'s neighbors $N$, but in $N$'s status report, $N$ provided a (value, signature) pair from $A$ indicating they communicated {\it after} round $\mathtt{t}$ and that $A$ was necessarily using an outdated signature from $N$.}.

	\item There is some $A \in G$ whose potential at the start of $\mathtt{T}$ plus the net increase in potential during $\mathtt{T}$ is smaller than its net decrease in potential during $\mathtt{T}$.  More specifically, note that $A$'s net {\bf \em increase} in potential, as claimed by itself, is given by:
		\begin{equation*}
		\sum_{B \in \mathcal{P} \setminus A} SIG^A[3]_{B,A}
		\end{equation*}
	Also, $A$'s net {\bf \em decrease} in potential, as documented by all of its neighbors and its own loss due to re-shuffling, is given by:
		\begin{equation*}
		SIG_{A,A} + \sum_{B \in \mathcal{P} \setminus A} SIG^B[2]_{A,B}
		\end{equation*}
	Then case 2) says there exists some $A \in G$ such that:
		\begin{equation} \label{F2Incrimination}
		4n^3 -4n^2 + \sum_{B \in \mathcal{P} \setminus A} SIG^A[3]_{B,A} \quad < \quad SIG_{A,A} + \sum_{B \in \mathcal{P} \setminus A} SIG^B[2]_{A,B},
		\end{equation}
	where the $4n^3-4n^2$ term on the LHS is an upper bound for the maximum potential a node should have at the outset of a transmission (see proof of Claim \ref{capacity1}).
	\end{enumerate}
\end{lemma}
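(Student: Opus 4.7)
\textbf{Proof plan for Lemma~\ref{mainF2Lemma}.}
The plan is to argue by contradiction: suppose neither conclusion holds. Then (a) every status report in $\mathcal{P}_{\mathtt{T}}$ is well-formed (so cross-edge pairs of $SIG$ values agree: $SIG^A[3]_{B,A}=SIG^B[2]_{B,A}$ and $SIG^A[3]_{A,B}=SIG^B[2]_{A,B}$, since the sender would otherwise catch a mismatch under case~1), and (b) for every $A\in G$ the reverse of \eqref{F2Incrimination} holds,
\[
4n^3-4n^2+\sum_{B}SIG^A[3]_{B,A}\;\ge\;SIG_{A,A}+\sum_{B}SIG^B[2]_{A,B}.
\]
First I would verify that the reverse inequality is already forced for every honest $A$: A's actual potential at the end of $\mathtt{T}$ equals its start potential plus self-signed gains $\sum_B SIG^A[3]_{B,A}$ minus self-signed losses $SIG_{A,A}+\sum_B SIG^A[3]_{A,B}$, and this must be nonnegative; Claim~\ref{packetHeight} (together with (a)) lets us replace $SIG^A[3]_{A,B}$ by the no-larger quantity $SIG^B[2]_{A,B}$, and Claim~\ref{capacity1} bounds the start potential by $4n^3-4n^2$. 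Thus the contradiction must be extracted from the corrupt nodes.

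Summing the reverse inequalities over all $A\in G$ yields
\[
|G|(4n^3-4n^2)+\sum_{A,B}SIG^A[3]_{B,A}\;\ge\;\sum_A SIG_{A,A}+\sum_{A,B}SIG^B[2]_{A,B}.
\]
Under (a) both double sums reindex over directed edges: letting $G_{A,B}:=SIG^B[3]_{A,B}=SIG^A[2]_{A,B}$ and $L_{A,B}:=SIG^A[3]_{A,B}=SIG^B[2]_{A,B}$, the inequality becomes
\[
|G|(4n^3-4n^2)\;\ge\;\sum_A SIG_{A,A}+\sum_{\text{directed }(A,B)}\bigl(L_{A,B}-G_{A,B}\bigr).
\]
The right-hand side is exactly the bookkeeping of non-duplicated potential loss in the network during $\mathtt{T}$ (each packet transfer's contribution is $H_{FP}-H_{GP}\ge 0$ at honest edges, and each local re-shuffle at an honest node contributes its recorded drop $M-m-1$). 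So the summed inequality amounts to: total ``signed'' non-duplicated potential loss $\le |G|(4n^3-4n^2)=O(n^4)$.

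Next I would derive a much larger lower bound on this signed loss from the F2 hypothesis. In case F2 the sender knowingly inserted fewer than $D$ packets in $4D$ rounds, while Lemma~\ref{maxWasted} bounds wasted rounds by $4n^3$, so at least $4D-D-4n^3=3D-O(n^3)$ rounds are genuinely blocked (every neighbour of $S$ on the active honest path has a full incoming buffer, yet no wasted-round condition obstructs transfer). The conforming assumption supplies a fully honest active path $\mathsf P_{\mathtt t}$ in each such round, and I would invoke the exact argument of Lemma~\ref{chainL}/Lemma~\ref{potentialDrop} along $\mathsf P_{\mathtt t}$: honest nodes on the path correctly sign all their potential changes, so the $\ge n$ per-round drop in non-duplicated potential guaranteed by that lemma is entirely captured in $\sum_A SIG_{A,A}+\sum_{\text{edges}}(L_{A,B}-G_{A,B})$. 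Summing over blocked rounds gives a signed loss of at least $n(3D-O(n^3))-O(n\,D)=\Omega(nD)=\Omega(n^4/\lambda)$, where the $O(nD)$ term absorbs the (at most $2n$ per insertion) contribution that sender insertions make to the same signed sum. Since this exceeds $|G|(4n^3-4n^2)=O(n^4)$, we contradict the summed inequality above.

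The main obstacle will be the adaptation of Lemma~\ref{chainL} to the node-controlling setting. In the edge-scheduling-only proof, non-duplicated potential drops were network-wide quantities whose existence was visible; here I must show these drops are faithfully recorded in the self-signatures of honest nodes on $\mathsf P_{\mathtt t}$ (and hence feed into $\sum_A SIG_{A,A}+\sum_{\text{edges}}(L_{A,B}-G_{A,B})$), while corrupt nodes off the honest path cannot deflate this sum through their own signatures. The conforming assumption (all nodes on $\mathsf P_{\mathtt t}$ honest) together with the value-verification checks in {\bf Verify Signature One}/{\bf Two} (lines {\bf \ref{routingRules2M}.83--90}) ensures both packet-transfer losses along the chain and re-shuffle losses at internal chain nodes are signed accurately; the only subtlety is bookkeeping at the endpoints where the chain meets $S$ or $R$, and at edges between an honest and a corrupt node, which must be handled by appealing to (a) and the value-verification rules rather than to honest signing on both sides.
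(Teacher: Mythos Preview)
Your approach is essentially the paper's, presented dually: where the paper proves a lower bound on the summed potential-drop quantity (Lemma~\ref{potentialDrop2}) and then averages to find a violating $A$, you assume no such $A$ exists, sum the reverse inequalities, and derive a contradictory upper bound on the same quantity. The key ingredients match: Lemma~\ref{maxWasted} to bound wasted rounds, the per-round drop of $n$ along the active honest path (your ``adaptation of Lemma~\ref{chainL}'' is exactly Lemmas~\ref{chainLM}/\ref{potentialDrop2}), and the $2nD+O(n^2)$ bound on sender-insertion contributions (the paper's Observation~2, proved as Corollary~\ref{obsProofs}).

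Two small corrections. First, your claimed exact equalities in (a), e.g.\ $SIG^A[3]_{A,B}=SIG^B[2]_{A,B}$, are not quite right: Lemma~\ref{sigRelationships} shows these can differ by one packet-height (at most $2n$), and Lemma~\ref{secondSigFacts} gives only an inequality. Fortunately you do not need them: the identity $\sum_{A,B}SIG^A[2]_{B,A}=\sum_{A,B}SIG^B[2]_{A,B}$ is a pure swap of dummy variables, which is exactly how the paper passes from \eqref{1338} to \eqref{1349}. Second, to ensure corrupt nodes cannot deflate the summed drop, the paper isolates a specific per-node check (its Observation~1, i.e.\ Statements~4--5 of Lemma~\ref{item3M}): if any $A$ reports $SIG_{A,A}<0$ or $\sum_B(SIG^A[2]_{B,A}-SIG^A[3]_{B,A})<0$, that $A$ is caught under case~1. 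The paper then lower-bounds the sum first over honest $A\in\mathcal H$ (where Lemma~\ref{potentialDrop2} applies directly) and extends to all of $\mathcal P$ because the added terms are nonnegative. Your last paragraph is reaching for exactly this, but you should make the per-node nonnegativity check explicit rather than appealing vaguely to ``value-verification rules'' at honest/corrupt boundary edges.
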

\begin{proof} The idea of the proof is to use Lemma \ref{potentialDrop2}, which argues that in the absence of malicious activity, the network potential should drop by at least $n$ every (non-wasted) round in which the sender is unable to insert a packet.  Then since the sender could not insert a packet in at least $3D$ rounds (case F2 states the sender inserted fewer than $D$ packets in the $4D$ rounds of the transmission) and since there are at most $4n^3$ wasted rounds per transmission, the network potential should have dropped by at least $(n)(3D-4n^3) > 2nD + 8n^4$ (since $D = \frac{6n^3}{\lambda} > 12n^3$ as $\lambda < 1/2$).  However, this is impossible, since the maximum network potential in the network at the start of the transmission (which is upper bounded from the capacity of the network) is $4n^4$ (Lemma \ref{capacity1}) plus the maximum amount of network potential increase during transmission $\mathtt{T}$ is $2nD$ (since the sender inserted fewer than $D$ packets at maximum height $2n$), and hence the sum of these is less than $2nD +8n^4$, resulting in a negative network potential.  Since network potential can never be negative, there must be illegal increases to network potential not accounted for above, and the node responsible for these increases is necessarily corrupt.  We now formalize this argument, showing how to find such an offending node and prove it is corrupt.

Let $\beta$ denote the number of rounds in transmission $\mathtt{T}$ that the sender was blocked from inserting any packets, and $\mathcal{P}$ denote the participating list for $\mathtt{T}$.
	\begin{enumerate}\setlength{\itemsep}{4pt} \setlength{\parskip}{0pt} \setlength{\parsep}{0pt}	
	\item[] \textsf{Obs.\ 1:  If there exists $A \in \mathcal{P}$ such that one of the following inequalities is not true, then $A$ is corrupt.}
		\begin{alignat}{2}
		0 \thickspace &\leq \thickspace SIG_{A,A} \qquad 0 \thickspace \leq \hspace{-.4cm} \sum_{B \in \mathcal{P} \setminus \{A,S\}} \hspace{-.2cm} (SIG^A[2]_{B,A} - SIG^A[3]_{B,A})  \notag
		\end{alignat}
	
	\item[] \textsf{Proof.} The above inequalities state that for honest nodes, the potential changes due to re-shuffling and packet transfers are strictly non-positive (this was the content of Lemma \ref{item3}).  This observation is proved rigorously as Statements 4 and 5 of Lemma \ref{item3M} in Section \ref{mBareBones}.\hspace*{\fill} \hspace*{-12pt}$\scriptstyle{\square}$\vspace{.4cm}
	
	\item[] \textsf{Obs.\ 2:  The increase in network potential due to packet insertions is at most $2nD +2n^2$.  More precisely, either there exists a node $A \in G$ such that the sender can eliminate $A$, or the following inequality is true:}
		\begin{equation} \label{Obs3}
		\sum_{A \in \mathcal{P} \setminus S} SIG^A[3]_{S,A} < 2nD + 2n^2
		\end{equation}
	\item[] \textsf{Proof.}  By hypothesis, the sender {\it knowingly} inserted less than $D$ packets in transmission $\mathtt{T}$, and each packet can increase network potential by at most $2n$.  The sum on the LHS of \eqref{Obs3} represents the increase in potential claimed by nodes participating in $\mathtt{T}$ caused by packet insertions.  This quantity should match the sender's perspective of the potential increase (which is at most $2nD$), with the exception of potential increases caused by packets that were inserted but $S$ did not received confirmation of receipt (see Definition \ref{confRec}).  There can be at most one such packet per edge, causing an additional potential increase of at most $2n$ per edge.  Adding this additional potential increase to the maximum increase of $2nD$ of the sender's perspective is the RHS of \eqref{Obs3}.  The formal proof can be found in Lemma \ref{obsProofs} in Section \ref{mBareBones}.\hspace*{\fill} \hspace*{-12pt}$\scriptstyle{\square}$\vspace{.4cm}
	
	\item[] \textsf{Obs.\ 3:  $\beta \geq  3D-n$}.  (Recall that $\beta$ denotes the number of blocked rounds in $\mathtt{T}$.)
	
	\item[] \textsf{Proof.}   Since the sender knowingly inserted fewer than $D$ packets, there could be at most $n$ packets (one packet per edge) that was inserted unbeknownst to $S$, and hence the sender must have been blocked for (at least) all but $D+n$ of the rounds of the transmission.  Since the number of rounds in a transmission is $4D$ ({\bf \ref{routingRulesM}.02}), we have that $\beta \geq 3D-n$.\hspace*{\fill} \hspace*{-12pt}$\scriptstyle{\square}$\vspace{.4cm}
	\end{enumerate}
Let $\mathcal{H}_{\mathtt{T}} \subseteq \mathcal{P}_{\mathtt{T}}$ denote the subset of participating nodes that are honest (the sender is of course oblivious as to which nodes are honest, but we will nevertheless make use of $\mathcal{H}_{\mathtt{T}}$ in the following argument).  For notational convenience, since transmission $\mathtt{T}$ is fixed, we suppress the subscript and write simply $\mathcal{H}$ and $\mathcal{P}$.  We make the following simple observations:
	\begin{enumerate}\setlength{\itemsep}{4pt} \setlength{\parskip}{0pt} \setlength{\parsep}{0pt}
	\item[] \textsf{Obs. 4:  The following inequality is true:}
		\begin{equation} \label{1297}
		2nD+4n^4-4n^3+2n^2 < \sum_{A \in \mathcal{H} \setminus S} \hspace{-.2cm} SIG_{A,A} + \sum \hspace{-1cm} \sum_{A \in \mathcal{H} \setminus S \space \thickspace B \in \mathcal{P} \setminus \{A,S\} \quad} \hspace{-1cm}(SIG^A[2]_{B,A} - SIG^A[3]_{B,A})
		\end{equation}
	\item[] \textsf{Proof.}  This follows immediately from Observation 3 and Lemma \ref{potentialDrop2}, since:
		\begin{alignat}{2}
		n(\beta_{\mathtt{T}} -4n^3) &\geq n(3D-n-4n^3) \notag \\
		&\geq 2nD + 4n^4-4n^3+2n^2, \notag
		\end{alignat}
	where the first inequality is Observation 3, and the second follows because $D = 6n^3/\lambda \geq 8n^3 \geq 8n^3-4n^2+3n$.\hspace*{\fill} \hspace*{-12pt}$\scriptstyle{\square}$\vspace{.4cm}
	
	\item[] \textsf{Obs. 5:  Either a corrupt node can be identified as in Obs.\ 1 or 2, or there is some $A \in \mathcal{P}$ such that:}
		\begin{equation}\label{yesT}
		4n^3 - 4n^2< SIG_{A,A} + \sum_{B \in \mathcal{P} \setminus A} SIG^B[2]_{A,B} - SIG^A[3]_{B,A}
		\end{equation}
	\item[] \textsf{Proof.} Consider the following inequalities:
		\begin{alignat}{2}
		2nD + 4n^4 -  4n^3 +2n^2 &< \sum_{A \in \mathcal{H} \setminus S} \hspace{-.2cm} SIG_{A,A} + \thickspace \sum \hspace{-1cm} \sum_{A \in \mathcal{H} \setminus S \space \thickspace B \in \mathcal{P} \setminus \{A,S\} \quad} \hspace{-1cm}(SIG^A[2]_{B,A} - SIG^A[3]_{B,A}) \notag \\
		&\leq \sum_{A \in \mathcal{P} \setminus S} \hspace{-.2cm} SIG_{A,A} + \thickspace \sum \hspace{-1cm} \sum_{A \in \mathcal{P} \setminus S \space \thickspace B \in \mathcal{P} \setminus \{A,S\} \quad} \hspace{-1cm}(SIG^A[2]_{B,A} - SIG^A[3]_{B,A}) \notag \\
		&= \sum_{A \in \mathcal{P} \setminus S} \hspace{-.2cm} SIG_{A,A} + \thickspace \sum \hspace{-1cm} \sum_{A \in \mathcal{P} \setminus S \space \thickspace B \in \mathcal{P} \setminus \{A,S\} \quad} \hspace{-1cm}(SIG^B[2]_{A,B} - SIG^A[3]_{B,A}) \label{1338}
		\end{alignat}
	Above, the top inequality follows from Obs.\ 4, the second inequality follows from Obs.\ 1, and the third line is a re-arranging and re-labelling of terms.  Subtracting $2nD+2n^2$ from both sides:
		\begin{alignat}{2}
		4n^4 -4n^3&< \sum_{A \in \mathcal{P} \setminus S} \hspace{-.2cm} SIG_{A,A} + \thickspace \sum \hspace{-1cm} \sum_{A \in \mathcal{P} \setminus S \space \thickspace B \in \mathcal{P} \setminus \{A,S\} \quad} \hspace{-1cm}(SIG^B[2]_{A,B} - SIG^A[3]_{B,A}) \thickspace - 2nD-2n^2 \notag \\
		&< \sum_{A \in \mathcal{P} \setminus S} \hspace{-.2cm} SIG_{A,A} + \thickspace \sum \hspace{-1cm} \sum_{A \in \mathcal{P} \setminus S \space \thickspace B \in \mathcal{P} \setminus \{A,S\} \quad} \hspace{-1cm}(SIG^B[2]_{A,B} - SIG^A[3]_{B,A}) + \sum_{A \in \mathcal{P} \setminus S} \hspace{-.2cm} -SIG^A[3]_{S,A}\notag \\
		&= \sum_{A \in \mathcal{P} \setminus S} \hspace{-.2cm} SIG_{A,A} + \thickspace \sum \hspace{-1cm} \sum_{A \in \mathcal{P} \setminus S \space \thickspace B \in \mathcal{P} \setminus \{A,S\} \quad} \hspace{-1cm}(SIG^B[2]_{A,B} - SIG^A[3]_{B,A}) \thickspace +\notag \\
		& \quad \hspace{1cm} \sum_{A \in \mathcal{P} \setminus S} (SIG^S[2]_{A,S}-SIG^A[3]_{S,A}) \notag \\
		&= \sum_{A \in \mathcal{P} \setminus S} \hspace{-.2cm} SIG_{A,A} + \hspace{-.2cm} \sum_{A \in \mathcal{P} \setminus S \thickspace} \hspace{-.4cm} \sum_{\quad B \in \mathcal{P} \setminus A} \hspace{-.2cm}(SIG^B[2]_{A,B} - SIG^A[3]_{B,A}) \label{1349}
		\end{alignat}
	Above, the top inequality is from \eqref{1338}, the second follows from Obs.\ 2, the third line is because $SIG^S[2]_{A,S} = 0$ for all $A \in G$ ($S$ never {\it receives} a packet from anyone, see ({\bf \ref{routingRulesM}.21-22})), and the final line comes from combining sums.
	Using an averaging argument, this implies there is some $A \in \mathcal{P} \setminus S$ such that:
		\begin{equation} \label{1351}
		4n^3-4n^2 < SIG_{A,A} + \sum_{B \in \mathcal{P} \setminus A} (SIG^B[2]_{A,B} - SIG^A[3]_{B,A}),
		\end{equation}
	which is \eqref{yesT}.\hspace*{\fill} \hspace*{-12pt}$\scriptstyle{\square}$
	\end{enumerate}
Therefore, if a node cannot be eliminated as in Obs.\ 1 or 2 (which are covered by Case 1 of Lemma \ref{mainF2Lemma}), then Obs.\ 5 implies that Case 2 of Lemma \ref{mainF2Lemma} is true.
\end{proof}
{\it Proof of Theorem \ref{F2}.}  This Theorem now follows immediately from Lemma \ref{mainF2Lemma} and the fact that a node $A \in G$ for which \eqref{F2Incrimination} is true is necessarily corrupt.  Intuitively, such a node $A \in G$ is corrupt since the potential decrease at $A$ is higher than can be accounted for by $A$'s potential at the outset of $\mathtt{T}$ plus the potential increase due to packet insertions from the sender.  The formal statement and proof of this fact is the content of Corollary \ref{F2Cor}.\hspace*{\fill} \hspace*{-12pt}$\scriptstyle{\blacksquare}$\\[.5cm]
{\Large \bf Handling Failures as in F3: Packet Deletion}\\[.2cm]
\indent The goal of this section will be to prove the following theorem.
\begin{thm} \label{F3} Suppose transmission $\mathtt{T}$ failed and falls under case F3, and at some later time (after transmission $\mathtt{T}$ but before any additional nodes have been eliminated) the sender has received all of the status report parcels from all nodes on $\mathcal{P}_{\mathtt{T}}$.  Then the sender can eliminate a corrupt node.
\end{thm}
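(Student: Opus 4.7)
\bigskip

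\noindent\textbf{Proof proposal for Theorem \ref{F3}.} The strategy is a packet-conservation (``flow'') argument. Under F3, the sender has received confirmation of receipt for at least $D$ packets of the current codeword, while the receiver has obtained fewer than $D-6n^3$ distinct packets (and no duplicates). Thus at least $6n^3$ packets of the current codeword are ``missing''; but the total capacity of the internal buffers is at most $4n^3$ (Claim \ref{capacity0}), so these missing packets cannot be stuck in honest internal nodes. Some corrupt node must therefore have absorbed packets without forwarding them, and the signed $SIG[1]$-values in the status reports will let $S$ identify such a node.

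\medskip

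\noindent First I would set up the bookkeeping. For each directed edge $E(A,B)$ with $A,B\in\mathcal{P}_{\mathtt{T}}$, the status reports of $A$ and of $B$ each contain a $(SIG[1],\mathtt{T})$-parcel: $A$'s version is signed by $B$, and $B$'s version is signed by $A$ (see lines (\textbf{\ref{routingRulesM}.11}) and (\textbf{\ref{routingRules2M}.60}), together with the update rules (\textbf{\ref{routingRules2M}.48}) and (\textbf{\ref{routingRules2M}.74})). If either returned parcel carries an invalid signature, lies outside the proper domain, is inconsistent with the sender's own $SIG^S[1]$-values on an $E(S,\cdot)$ edge, or is provably outdated relative to the other endpoint's (valid) signature on a strictly larger count, then the offending node can be eliminated by Case~1 of the analog of Lemma \ref{mainF2Lemma} (the arguments being identical to Obs.~1 of the F2 proof, restricted to $SIG[1]$). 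Otherwise, both endpoints' status reports agree on a single value $f(A,B)\in\mathbb{Z}_{\geq 0}$, the net number of current-codeword packets that crossed $E(A,B)$ during $\mathtt{T}$ with mutually acknowledged confirmation of receipt. Since honest nodes refuse to send to or receive from blacklisted or eliminated nodes (by \textbf{\em Okay to Send/Receive Packet}), I would set $f(A,B):=0$ whenever $\{A,B\}\not\subseteq\mathcal{P}_{\mathtt{T}}$; any nonzero claim to the contrary by a status report again incriminates a corrupt node.

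\medskip

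\noindent Next, for each node $A$ set
\begin{equation*}
e(A)\;:=\;\sum_{B\in\mathcal{P}_{\mathtt{T}}\setminus A}f(B,A)\;-\;\sum_{B\in\mathcal{P}_{\mathtt{T}}\setminus A}f(A,B).
\end{equation*}
The key claim is that for every \emph{honest} node $A\in\mathcal{P}_{\mathtt{T}}\setminus\{S,R\}$, we have $e(A)\leq 4n^2$. This follows from a packet-accounting lemma analogous to Claim \ref{packetProliferation2}: every packet counted by $f(B,A)$ has either been counted by some $f(A,C)$ (i.e.\ forwarded with confirmation) or still resides in one of $A$'s $2(n-2)$ buffers (each of capacity $2n$) at the end of $\mathtt{T}$; a small correction handles the at most one unconfirmed flagged packet per outgoing edge. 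I expect this lemma to be essentially a restatement, in the language of the $SIG[1]$ counters, of the packet-conservation properties already proven in Section~\ref{bareBones}. For the sender, $e(S)=-\sum_B f(S,B)\leq -D$ because $S$ knowingly inserted $\geq D$ packets (hypothesis of F3); for the receiver, $e(R)=\sum_A f(A,R)< D-6n^3$ because $R$ received fewer than $D-6n^3$ distinct packets and, by the no-duplication hypothesis of F3, each $f(A,R)$ double-counts nothing.

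\medskip

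\noindent Finally, since each $f(A,B)$ appears with a $+1$ and a $-1$ in the definition of $e$, we have $\sum_{A\in\mathcal{P}_{\mathtt{T}}}e(A)=0$. Splitting the sum into $\{S,R\}$, the honest internal nodes $\mathcal{H}$, and the corrupt internal nodes $\mathcal{C}\subseteq\mathcal{P}_{\mathtt{T}}$, this gives
\begin{equation*}
0\;=\;e(S)+e(R)+\sum_{A\in\mathcal{H}}e(A)+\sum_{A\in\mathcal{C}}e(A)\;\leq\;-D+(D-6n^3)+(n-2)\cdot 4n^2+\sum_{A\in\mathcal{C}}e(A),
\end{equation*}
so $\sum_{A\in\mathcal{C}}e(A)\geq 6n^3-4n^3+8n^2>0$. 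By averaging, some corrupt $A\in\mathcal{C}$ has $e(A)>4n^2$, which is strictly larger than the bound that any honest node can satisfy. The sender, having all the status reports, can compute $e(A)$ for every $A\in\mathcal{P}_{\mathtt{T}}$, and eliminate any node for which $e(A)>4n^2$ (such a node exists by the inequality above, and all such nodes are necessarily corrupt by the honest-node bound).

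\medskip

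\noindent The main obstacle I anticipate is \emph{not} the counting inequality itself, which is essentially elementary, but the careful handling of the signature-layer bookkeeping: proving that the agreed-upon $f(A,B)$ is well-defined from the status reports (so that any discrepancy immediately implicates a specific node via an outdated or mis-signed parcel), and proving the honest-node bound $e(A)\leq 4n^2$ rigorously from the pseudo-code, accounting for flagged-but-unconfirmed packets, end-of-transmission deletions on line (\textbf{\ref{routingRules5M}.206}), and any packets still held in $A$'s buffers at the moment $\mathtt{T}$ ends. Both of these require a pseudo-code-level verification of the sort performed in Section~\ref{bareBones}, following the template established there for the edge-scheduling protocol.
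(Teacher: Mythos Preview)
Your approach is essentially the paper's: a packet-conservation argument on the $SIG[1]$ counters, followed by averaging to find a node whose net inflow exceeds its buffer capacity, which is then declared corrupt via the honest-node bound (the paper's Statement~7 of Lemma~\ref{item3M}).

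One technical correction: you assume that, absent corruption, the two endpoints' reports on an edge agree on a single value $f(A,B)$. They need not. Because $SIG^B[1]_{A,B}$ is updated on (\textbf{\ref{routingRules2M}.74}) when $B$ accepts the packet, while $SIG^A[1]_{A,B}$ is updated only upon confirmation on (\textbf{\ref{routingRules2M}.48}), two honest endpoints can legitimately disagree by exactly~$1$ at the end of $\mathtt{T}$. The paper does not try to reconcile them into a single $f$; instead it invokes Statement~3 of Lemma~\ref{secondSigFacts} (either a corrupt node is identified, or $|SIG^A[1]_{A,B}-SIG^B[1]_{A,B}|\le 1$ for all pairs), sums the resulting $n^2$ slack, and then runs the telescoping-and-averaging argument directly on the quantities $\sum_B(SIG^A[1]_{B,A}-SIG^A[1]_{A,B})$.

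A second, smaller point: your averaging step as written does not quite close. From $\sum_{\mathcal{C}}e(A)\ge 2n^3+8n^2$ with $|\mathcal{C}|$ possibly as large as $n-2$, you only get some $e(A)\gtrsim 2n^2$, not $>4n^2$. The clean fix (and what the paper does) is to average over \emph{all} internal nodes in $\mathcal{P}_{\mathtt{T}}$: their total is at least $6n^3-n^2$, so some $A$ has $\sum_B(SIG^A[1]_{B,A}-SIG^A[1]_{A,B})>6n^2-n>4n^2-8n$, and Statement~7 of Lemma~\ref{item3M} then certifies that $A$ is corrupt. This also sidesteps the issue that the sender does not know the partition $\mathcal{H}\cup\mathcal{C}$.
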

The idea of the proof is as follows.  Case F3 of transmission failure roughly corresponds to {\it packet deletion}: there is a node $N \in G$ who is deleting some packets transferred to it instead of forwarding them on.  Using the status reports for case F3, which include nodes' signatures on the net number of packets that have passed across each of their edges, we will catch $N$ by looking for a node who input more packets than it output, and this difference is greater than the buffer capacity of the node.
\begin{proof}  We first define the notation we will use to describe the specific information the status reports contain in the case of F3 (({\bf \ref{setupCodeM}.17}), ({\bf \ref{setupCodeM}.32}), and ({\bf \ref{routingRules4M}.144})):
	\begin{itemize}\setlength{\itemsep}{1pt} \setlength{\parskip}{0pt} \setlength{\parsep}{0pt}
	\item[-] $SIG^A[1]_{A,B}$ denotes the net number of packets that have travelled across directed edge $E(A,B)$, as signed by $B$ and stored in $A$'s (outgoing) signature buffer.
	\item[-] $SIG^A[1]_{B,A}$ denotes the net number of packets that have travelled across directed edge $E(B,A)$, as signed by $B$ and stored in $A$'s (incoming) signature buffer.
	\end{itemize}
By the third Statement of Lemma \ref{secondSigFacts}, either a corrupt node can be eliminated, or the following is true for all $A,B \in G$:
	\begin{equation*}
	|SIG^A[1]_{B,A}-SIG^B[1]_{B,A}| \leq 1 \qquad \mbox{and} \qquad |SIG^A[1]_{A,B}-SIG^B[1]_{A,B}| \leq 1
	\end{equation*}
Then summing over all $A,B \in \mathcal{P}$:
	\begin{equation} \label{1385}
	\sum_{A,B \in \mathcal{P}, A \neq B} \hspace*{-.5cm} |SIG^A[1]_{B,A}-SIG^B[1]_{B,A}| \thickspace \leq \thickspace n^2
	\end{equation}
This in turn implies that:
	\begin{alignat}{2}
	-n^2 \thickspace &\leq \sum_{A,B \in \mathcal{P}, A \neq B} (SIG^A[1]_{B,A}-SIG^B[1]_{B,A}) \notag \\
	&= \sum_{A \in \mathcal{P}} \sum_{B \in \mathcal{P} \setminus A} (SIG^A[1]_{B,A} - SIG^A[1]_{A,B}) \notag \\
	&= \sum_{B \in \mathcal{P} \setminus R} SIG^R[1]_{B,R} \thickspace - \sum_{B \in \mathcal{P} \setminus S} SIG^S[1]_{S,B} \thickspace + \hspace*{-1cm} \sum_{\qquad A \in \mathcal{P} \setminus \{R,S\} \space B \in \mathcal{P} \setminus A} \hspace*{-1cm} \sum \left( SIG^A[1]_{B,A} - SIG^A[1]_{A,B} \right) \notag \\
	&\leq -6n^3 \thickspace + \hspace*{-1cm} \sum_{\qquad A \in \mathcal{P} \setminus \{R,S\} \space B \in \mathcal{P} \setminus A} \hspace*{-1cm} \sum (SIG^A[1]_{B,A} - SIG^A[1]_{A,B}) \notag
	\end{alignat}
The first inequality is from \eqref{1385}, the second line is from re-labelling and re-arranging terms, the third line comes from separating out the terms $A=S$ and $A=R$ and noting that $SIG^R[1]_{R,B} = SIG^S[1]_{B,S} = 0$ (since the receiver will never output packets to other nodes and the sender will never input packets, see ({\bf \ref{routingRulesM}.16-20}) and ({\bf \ref{routingRulesM}.21-22})), and the final inequality is due to the fact that we are in case F3, so the sender knowingly inserted $D$ packets, but the receiver received fewer than $D-6n^3$ packets corresponding to the current codeword\footnote{More precisely, F3 states that the sender knowingly inserted at least $D$ packets and the receiver did not receive any packet (from the current codeword) more than once.  By Fact 1', since we are not in case S1, the receiver got fewer than $D - 6n^3$ distinct packets corresponding to the current codeword.}.  Using an averaging argument, we can find some $A \in G$ such that:
	\begin{equation}\label{line1401}
	4n^2 - 8n < 6n^2-n < \sum_{B \in \mathcal{P} \setminus A} (SIG^A[1]_{B,A}-SIG^A[1]_{A,B}),
	\end{equation}
where the first inequality is obvious.  Statement 7 of Lemma \ref{item3M} now guarantees that $A$ is corrupt\footnote{Intuitively, $A$ must be corrupt since the sum on the RHS of \eqref{line1401} represents the net number of packets $A$ input minus the number of packets $A$ output.  Since this difference is larger than the capacity of $A$'s internal buffers, $A$ must have deleted at least one packet and is necessarily corrupt.}.
\end{proof}
\hspace*{\fill}\\[.3cm]
{\Large \bf Handling Failures as in F4: Packet Duplication $+$ Deletion}\\[.2cm]
\indent The goal of this section will be to prove the following theorem.
\begin{thm} \label{F4} Suppose transmission $\mathtt{T}$ failed and falls under case F4, and at some later time (after transmission $\mathtt{T}$ but before any additional nodes have been eliminated) the sender has received all of the status report parcels from all nodes on $\mathcal{P}_{\mathtt{T}}$.  Then the sender can eliminate a corrupt node.
\end{thm}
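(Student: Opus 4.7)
The plan is to mirror the argument for Theorem \ref{F3}, replacing the net-packet signature counts $SIG[1]$ with the packet-specific signature counts $SIG[p]$ for the duplicated packet $p = p'$ the receiver identified in its end-of-transmission parcel $\Theta_{\mathtt{T}}$ (line {\bf \ref{routingRules5M}.179}). Because transmission $\mathtt{T}$ fell under case F4, the sender has already recorded $p$ and the associated failure type in its data buffer (lines {\bf \ref{routingRules5M}.190}--{\bf \ref{routingRules5M}.191}) and requested, for each participating node $A \in \mathcal{P}_{\mathtt{T}}$ and each of its edges, the corresponding neighbor-signed value $SIG[p]$ from transmission $\mathtt{T}$ (line {\bf \ref{routingRules4M}.145}). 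By hypothesis all such parcels are now in hand.

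First I would establish the F4-analog of Lemma \ref{secondSigFacts}: either some $A \in \mathcal{P}_{\mathtt{T}}$ returns a $SIG[p]$ value that is out of domain, uses an outdated neighbor signature, or contradicts the sender's own signed record on the edge $E(S,A)$, in which case $A$ is eliminated at once via line {\bf \ref{routingRules4M}.163}; or else for every directed edge $E(A,B)$ with $A,B \in \mathcal{P}_{\mathtt{T}}$ the two endpoints agree within the usual one-round confirmation lag,
\begin{equation*}
|SIG^A[p]_{A,B} - SIG^B[p]_{A,B}| \leq 1.
\end{equation*}
The justification is the lockstep-update argument that proves Lemma \ref{secondSigFacts}: both endpoints increment $SIG[p]$ on every successful transfer of $p$ (lines {\bf \ref{routingRules2M}.74} and {\bf \ref{routingRules2M}.48}), and only a final unconfirmed transfer can open a one-unit gap.

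Next I would run the identical telescoping identity used in Theorem \ref{F3}, substituting $[p]$ for $[1]$,
\begin{equation*}
-n^2 \; \leq \; \sum_{A \in \mathcal{P}_{\mathtt{T}}} \; \sum_{B \in \mathcal{P}_{\mathtt{T}} \setminus \{A\}} \bigl( SIG^A[p]_{B,A} - SIG^A[p]_{A,B} \bigr) \; \leq \; n^2,
\end{equation*}
and isolate the boundary contributions from $S$ and $R$. Each codeword packet occupies at most one of the sender's outgoing buffers at any moment (line {\bf \ref{routingRules5M}.214} refills the outgoing buffers from the unique codeword-slot array $COPY$, and each slot is removed on confirmation at line {\bf \ref{routingRules2M}.50}), so $\sum_B SIG^S[p]_{S,B} \leq 1$. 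The receiver is honest and F4 forces him to have accepted $p$ at least twice (otherwise $\Theta_{\mathtt{T}}$ would not have named $p$), giving $\sum_A SIG^R[p]_{A,R} \geq 2$. I would then close the argument using a packet-specific honesty bound, analogous to Statement 7 of Lemma \ref{item3M}: for every honest intermediate $A$ the signed outgoing flow of $p$ during $\mathtt{T}$ cannot exceed the signed incoming flow by more than the number of copies of $p$ already stored in $A$'s buffers at the start of $\mathtt{T}$. Comparing this bound against the telescoped inequality identifies at least one $A$ whose reported $SIG[p]$ values are incompatible with honest behavior, and the sender eliminates $A$ on line {\bf \ref{routingRules4M}.168}.

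The main obstacle will be controlling the accumulated initial copies of $p$ at honest intermediates when $\mathtt{T}$ is a retry of a codeword previously attempted. Because $SIG[p]$ is only cleared when the start-of-transmission blacklist is empty (line {\bf \ref{routingRules4M}.128}) and F4 causes all participants to be blacklisted (line {\bf \ref{routingRules5M}.188}), the honest-node flow bound must be stated in a form that is compatible with signatures accumulated across the at most $n-1$ consecutive failed transmissions that can occur between eliminations (Theorem \ref{maxBadness}). I expect the cleanest route is to carry an inductive invariant, proved jointly with Theorem \ref{F4}, that the honest sub-network harbors $O(1)$ copies of any single packet at the start of each transmission --- essentially the node-controlling extension of Claim \ref{packetProliferation3} restricted to honest nodes --- so that the accumulated initial-copy term is small enough for the flow inequality to cleanly incriminate one intermediate node and thus complete the proof.
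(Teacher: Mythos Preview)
Your approach has a fatal quantitative gap. In the F3 argument you are imitating, the $\pm 1$ mismatch between $SIG^A[1]_{A,B}$ and $SIG^B[1]_{A,B}$ over all $O(n^2)$ ordered pairs produces a slack of $n^2$, and that slack is harmless because the sender--receiver boundary discrepancy is $\ge 6n^3$. In F4 the only boundary discrepancy you have is that $S$ emitted $p$ once and $R$ absorbed it at least twice: a signal of size $1$. Running the F3 telescoping with $[p]$ in place of $[1]$ therefore yields
\[
\sum_{A\in\mathcal{P}_{\mathtt{T}}\setminus\{S,R\}}\ \sum_{B}\bigl(SIG^A[p]_{A,B}-SIG^A[p]_{B,A}\bigr)\ \ge\ 1-n^2,
\]
and no averaging argument can extract from this a single node whose per-packet outflow exceeds its inflow by even one. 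Your ``packet-specific honesty bound'' (outflow $\le$ inflow $+$ initial copies) is a correct per-node statement, but it is a \emph{lower} bound on each summand of the form $-c_A$, and you never obtain the matching upper bound on the aggregate that you would need to force a contradiction.

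The paper sidesteps the slack entirely by telescoping a different quantity. Instead of $SIG^A[p]_{A,B}-SIG^A[p]_{B,A}$ (both entries from $A$'s own report), it uses the mixed quantity $SIG^B[p]_{A,B}-SIG^A[p]_{B,A}$ (outgoing count as reported by the \emph{neighbor}, incoming count as reported by $A$). Summed over all ordered pairs $(A,B)$ this is identically zero by a relabeling, with no $\pm 1$ approximation and hence no $n^2$ slack. Separating the $S$ and $R$ terms then gives $\sum_{A\neq S,R}\sum_B(SIG^B[p]_{A,B}-SIG^A[p]_{B,A})\ge 1$, so some $A$ satisfies $\sum_B(SIG^B[p]_{A,B}-SIG^A[p]_{B,A})\ge 1$. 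The honesty criterion you need is then Statement~8 of Lemma~\ref{item3M} (not an analog of Statement~7): for honest $A$ this mixed sum is $\le 0$, because any value $B$ can produce for $SIG^B[p]_{A,B}$ is backed by $A$'s own signature and is therefore bounded above by what $A$ actually sent, while each accepted send is matched (up to the unconfirmed-flagged-packet correction, which the proof of Statement~8 absorbs) by a recorded receipt. Your concern about initial copies and about $SIG[p]$ persisting across retries is thus a side issue; the decisive fix is replacing one of the two $SIG^A$ terms by the neighbor's report so that the global sum vanishes exactly.
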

The idea of the proof is as follows.  Case F4 of transmission failure roughly corresponds to packet duplication {\it and} packet deletion: there is a node $N \in G$ who is replacing valid packets with copies of old packets it has already passed on.  Therefore, simply tracking potential changes and net packets into and out of $N$ will not help us to locate $N$, as both of these quantities will be consistent with honest behavior.  Instead, we use the fact that case F4 implies that the receiver will have received some packet $p$ (from the current codeword) twice.  We will then use the status reports for case F4, which include nodes' signatures on the net number of times $p$ has crossed each of their edges, to find a corrupt node $N$ by looking for a node who output $p$ more times than it input $p$.
\begin{proof} By definition of F4, the receiver received some packet $p$ (corresponding to the current codeword) at least twice.  Therefore, when ({\bf \ref{routingRules5M}.178-179}) is reached, the receiver will send the label of $p$ back to the sender (which reaches $S$ by the end of the transmission by Lemma \ref{finalBC}), and this is in turn broadcasted as part of the sender's {\it start of transmission} broadcast in the following transmission (({\bf \ref{routingRules5M}.190-192}) and ({\bf \ref{routingRules5M}.200})).  We will use the following notation to describe the specific information the status reports contain in the case of F4 (see ({\bf \ref{setupCodeM}.17}) and ({\bf \ref{setupCodeM}.32})):
	\begin{itemize}\setlength{\itemsep}{1pt} \setlength{\parskip}{0pt} \setlength{\parsep}{0pt}
	\item[-] $SIG^A[p]_{A,B}$ denotes the net number of times $p$ has travelled across directed edge $E(A,B)$, as signed by $B$ and stored in $A$'s (outgoing) signature buffer.
	\item[-] $SIG^A[p]_{B,A}$ denotes the net number of times $p$ has travelled across directed edge $E(B,A)$, as signed by $B$ and stored in $A$'s (incoming) signature buffer.
	\end{itemize}
Consider the following string of equalities:
	\begin{alignat}{2} \label{line1440}
	0 \thickspace &= \sum_{A \in \mathcal{P}_{\mathtt{T}}} \sum_{B \in \mathcal{P}_{\mathtt{T}}} (SIG^A[p]_{B,A} - SIG^A[p]_{B,A}) \notag \\
	&= \sum_{A \in \mathcal{P}_{\mathtt{T}}} \sum_{B \in \mathcal{P}_{\mathtt{T}}} (SIG^B[p]_{A,B} - SIG^A[p]_{B,A}) \notag \\
	&= \hspace*{-.6cm} \sum_{\qquad A \in \mathcal{P}_{\mathtt{T}} \setminus \{R,S\} \thickspace B \in \mathcal{P}_{\mathtt{T}}} \hspace*{-.75cm} \sum (SIG^B[p]_{A,B} - SIG^A[p]_{B,A}) + \sum_{B \in \mathcal{P}_{\mathtt{T}}} (SIG^B[p]_{R,B} - SIG^R[p]_{B,R}) \thickspace + \notag \\ & \hspace{3cm} \sum_{B \in \mathcal{P}_{\mathtt{T}}} (SIG^B[p]_{S,B} - SIG^S[p]_{B,S}) \notag \\
	&= \hspace*{-.6cm} \sum_{\qquad A \in \mathcal{P}_{\mathtt{T}} \setminus \{R,S\} \thickspace B \in \mathcal{P}_{\mathtt{T}}} \hspace*{-.75cm} \sum (SIG^B[p]_{A,B} - SIG^A[p]_{B,A}) + \sum_{B \in \mathcal{P}_{\mathtt{T}}} (SIG^B[p]_{S,B} - SIG^R[p]_{B,R})
	\end{alignat}
The first equality is trivial, the second equality comes from re-labelling and rearranging the terms of the sum, the third comes from separating out the $A=S$ and $A=R$ terms, and the final equality results from the fact that $R$ never outputs packets and $S$ never inputs packets, and hence they will never sign non-zero values for $SIG[p]_{R,B}$ or $SIG[p]_{B,S}$, respectively (see ({\bf \ref{routingRulesM}.16-20}) and ({\bf \ref{routingRulesM}.21-22})).  Because $p$ was received by $R$ at least twice (by choice of $p$) and $S$ will never send any packet to more than one node\footnote{This was proven in Observations 2-3 of Lemma \ref{packetProliferation32} for the edge-scheduling protocol.  However, the proofs of these observations remain valid in the (node-controlling$+$edge-scheduling) model because the sender is honest (by the conforming adversary assumption).}, we have that:
	\begin{equation}
	\sum_{B \in \mathcal{P}_{\mathtt{T}}} (SIG^B[p]_{S,B} - SIG^R[p]_{B,R}) \leq -1
	\end{equation}
Plugging this into \eqref{line1440} and rearranging:
	\begin{equation}
	1 \thickspace \leq \hspace*{-.6cm} \sum_{\qquad A \in \mathcal{P}_{\mathtt{T}} \setminus \{R,S\} \thickspace B \in \mathcal{P}_{\mathtt{T}}} \hspace*{-.75cm} \sum (SIG^B[p]_{A,B} - SIG^A[p]_{B,A})
	\end{equation}
By an averaging argument, there must be some $A \in \mathcal{P}_{\mathtt{T}} \setminus \{R,S\}$ such that:
	\begin{equation}\label{line1455}
	1 \thickspace \leq \sum_{B \in \mathcal{P}_{\mathtt{T}}} (SIG^B[p]_{A,B} - SIG^A[p]_{B,A})
	\end{equation}
Now Statement 8 of Lemma \ref{item3M} says that $A$ is necessarily corrupt\footnote{Intuitively, $A$ is corrupt since \eqref{line1455} says that it has output $p$ more times than it input $p$.}.
\end{proof}
\section{Node-Controlling $+$ Edge-Scheduling Protocol: \hspace*{\fill}\\ Pseudo-Code Intensive Proofs} \label{mBareBones}
\indent \indent In this section, we give detailed proofs that walk through the pseudo-code of Figures \ref{setupCodeM} - \ref{routingRules5M} to argue very basic properties the protocol satisfies.  The following lemma will relieve the need to re-prove many of the lemmas of Sections \ref{proofs} and \ref{bareBones}.
\begin{lemma} \label{similar} Differences between the edge-scheduling adversary protocol and the (node-controlling $+$ edge-scheduling) adversary protocol all fall under one of the following cases:
	\begin{enumerate}
	\item Extra variables in the Setup Phase

	\item Length of transmission and codeword being transmitted for the current transmission

	\item Need to authenticate signatures on packets, as on ({\bf \ref{routingRulesM}.08}) and ({\bf \ref{routingRules2M}.68})

	\item Need to check if it is okay to send/receive packets, as on ({\bf \ref{routingRules2M}.59}) and ({\bf \ref{routingRules2M}.63})

	\item Broadcasting information, i.e.\ transmission of broadcast parcels and modifications of Broadcast Buffer, Data Buffer, and Signature Buffer
	\end{enumerate}
Furthermore, differences as in Cases 3 and 4 are identical to having an edge-fail in the edge-scheduling adversary protocol.  Also, differences as in Case 5 affect the routing protocol only insofar as their affect on Cases 3 and 4 above.  Furthermore, between any two honest nodes, the authentications of Case 3 never fail, and Case 4 failures correspond to ``wasted'' rounds (see Definition \ref{wasted}).
\end{lemma}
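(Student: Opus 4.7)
My plan is to establish the lemma by a direct structural comparison of the two protocols, organized around a line-by-line walk of the pseudo-code in Figures \ref{setupCodeM}--\ref{routingRules5M} against its counterpart in Figures \ref{setupCode}--\ref{reShuffleRules}. First I would partition the code of the new protocol into three layers: (i) the Setup/Initialization blocks, (ii) the Routing/Re-Shuffle core, and (iii) the broadcast-management subroutines (\textbf{\em Send/Receive Broadcast Parcels}, \textbf{\em Update Broadcast Buffer One/Two}, \textbf{\em Determine Broadcast Parcel to Send}, \textbf{\em Sender Update Broadcast Buffer}, \textbf{\em Prepare Start of Transmission Broadcast}, etc.). For layer (i) every new declaration is visibly an additional variable (Case 1), and I would simply enumerate them ($BB$, $DB$, $BL$, $EN$, $SIG$, $COPY$, $\Omega_{\mathtt{T}}$, $\Theta_{\mathtt{T}}$, $\mathcal{P}_{\mathtt{T}}$, $F$, the PKI keys, etc.). For layer (ii), I would step through each bolded line number in the new pseudo-code and check that it either matches its edge-scheduling analogue verbatim or differs only by: appending/stripping a signature to the sent tuple (Case 3, realized by the calls to \textbf{\em Verify Signature One/Two} on lines (\textbf{\ref{routingRulesM}.08}) and (\textbf{\ref{routingRules2M}.68})); gating the send/receive on \textbf{\em Okay to Send Packet}/\textbf{\em Okay to Receive Packet} (Case 4); adjusting a signature-buffer counter (Case 5); or changing the constants $3D \mapsto 4D$ and the codeword under transmission (Case 2). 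Layer (iii) is then \emph{entirely} Case 5 by construction, since those routines only touch $BB$, $DB$, $SIG$, and the broadcast-channel variables $bp$, $c_{bp}$, $\alpha$.

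Next I would prove the three ``furthermore'' assertions. For the claim that Case 3 and Case 4 differences behave like an edge failure, I would observe that in both cases the only consequence inside the routing core is that the receiving side reaches either line (\textbf{\ref{routingRules2M}.64}) (treating the packet as if not received) or that the sender simply skips (\textbf{\ref{routingRules2M}.60}); in the edge-scheduling protocol, the exact same branches are taken when the underlying edge is down for the stage in question (lines (\textbf{\ref{routingRules2}.44}) and its send counterpart). So the effect on heights, status bits, flagged/ghost packets, and re-shuffling is indistinguishable from a stage-level edge failure. For the claim that Case 5 affects routing \emph{only} through Cases 3 and 4, I would note that the broadcast-management subroutines touch only $BB$, $DB$, $SIG$, and the broadcast-channel variables, none of which appear in the update rules for $\mathsf{IN}$, $\mathsf{OUT}$, $H$, $H_{IN}$, $H_{OUT}$, $H_{FP}$, $H_{GP}$, $sb$, $sb_{OUT}$, $FR$, $RR$, $\tilde{p}$, $d$ except through the conditionals in \textbf{\em Okay to Send/Receive Packet} (Case 4) and in \textbf{\em Verify Signature One/Two} (Case 3, via the sender's signature check on the packet).

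For the last two claims I would argue as follows. Between two honest nodes, the values exchanged on lines (\textbf{\ref{routingRulesM}.07}), (\textbf{\ref{routingRulesM}.11}), and (\textbf{\ref{routingRules2M}.60}) are computed from the true local state (cf.\ the definitions of $SIG$ updates on (\textbf{\ref{routingRules2M}.48}--\textbf{\ref{routingRules2M}.49}) and (\textbf{\ref{routingRules2M}.74}--\textbf{\ref{routingRules2M}.75})) and signed with the true secret key, so the recipient's checks in \textbf{\em Verify Signature One} and \textbf{\em Verify Signature Two} succeed with probability $1$; hence no Case 3 failure can ever occur on an edge between honest endpoints (existential unforgeability is not needed here, only correctness of the signature scheme and the fact that honest parties compute the quantities as specified by (\textbf{\ref{setupCodeM}.17}) and (\textbf{\ref{setupCodeM}.32})). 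Finally, when a Case 4 event happens on an edge of the active honest path, then by definition one of \textbf{\em Okay to Send Packet}/\textbf{\em Okay to Receive Packet} returned false on an edge of $\mathsf{P}_{\mathtt{t}}$, which is the very definition of a wasted round (Definition \ref{wasted}); this is immediate and does not require a separate argument.

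I expect the main obstacle to be bookkeeping, not mathematical depth: the proof is essentially a careful cross-reference, and the only subtle point is keeping the comparison honest for lines like (\textbf{\ref{routingRulesM}.23}--\textbf{\ref{routingRulesM}.29}) and (\textbf{\ref{routingRules5M}.203}--\textbf{\ref{routingRules5M}.211}), where the new protocol wraps conditionals around routing/re-shuffle behavior for reasons that look like Case 5 but that must be verified to not perturb buffer dynamics beyond what an edge failure or a skipped round would induce. I would handle this by treating each such wrap as a ``no-op with respect to buffer state'' and justifying that claim explicitly from the pseudo-code.
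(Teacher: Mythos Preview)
Your proposal is correct and follows essentially the same approach as the paper: a direct structural comparison of the pseudo-code, classifying each bolded difference into one of the five cases, then tracing through the branches to see that Cases~3 and~4 reduce to the same code paths as a missed stage-1/stage-2 communication in the edge-scheduling protocol. Your three-layer decomposition is more explicitly organized than the paper's version, but the content is the same.

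One small point worth noting: for the claim that Case~3 authentications never fail between honest nodes, the paper does not rely solely on ``honest parties compute the right thing and sign it,'' because the checks in \textbf{\em Verify Signature One/Two} are not merely signature-validity checks but also \emph{value-consistency} checks (e.g.\ that the increment to $SIG[1]$ is exactly~$1$, that the change in $SIG[2]$ is bounded by $H_{FP}$ or $H_{GP}$, etc.). To see that these consistency checks pass, one needs the invariant that the two endpoints' signature buffers stay in lockstep between updates; the paper invokes Lemma~\ref{sigRelationships} (together with lines (\textbf{\ref{routingRulesM}.11}), (\textbf{\ref{routingRules2M}.49}), (\textbf{\ref{routingRules2M}.60}), (\textbf{\ref{routingRules2M}.75})) for this. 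Your direct argument is morally right, but when you fill in the details you will find yourself proving exactly that invariant, so you may as well isolate it as a separate lemma.
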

\begin{proof} Comparing the pseudo-code of Figures {\bf \ref{routingRules}}, {\bf \ref{routingRules2}}, and {\bf \ref{reShuffleRules}} to Figures {\bf \ref{routingRulesM}}, {\bf \ref{routingRules2M}}, and {\bf \ref{routingRules5M}}, as emphasized by line numbers in {\bf bold face} in the latter three, it is clear that all differences fall under Cases 1-5 of the lemma.  Also, all of the other methods in Figures {\bf \ref{routingRules3M}}-{\bf \ref{routingRules5M}} fall under Cases 4 and 5.

As for the differences as in Cases 3 and 4, it is clear that failing {\bf \em Verify Signature One} on ({\bf \ref{routingRules2M}.86-87}) is equivalent to the edge failing during Stage 2 (i.e.\ as if $p = \bot$ on ({\bf \ref{routingRules2M}.62}) causing ({\bf \ref{routingRules2M}.69}) to fail); failing {\bf \em Verify Signature Two} on ({\bf \ref{routingRules2M}.89-90}) is equivalent to the edge failing during Stage 1 (since this sets $H_{IN}$ and $RR$ to $\bot$ on ({\bf \ref{routingRules2M}.90}), which is equivalent to the communication on ({\bf \ref{routingRules}.06}) not being received); failing {\bf \em Okay to Send Packet} on ({\bf \ref{routingRules2M}.59}) is equivalent to the edge failing during Stage 2 (so that nothing is received on lines ({\bf \ref{routingRulesM}.22}/{\bf \ref{routingRules2M}.62})); and failing {\bf \em Okay to Receive Packet} on ({\bf \ref{routingRules2M}.63}) is equivalent to the edge failing during Stage 1 (i.e.\ as if nothing is received on ({\bf \ref{routingRulesM}.13}), so that $H_{OUT} = \bot$ on ({\bf \ref{routingRules2M}.63})).  Finally, differences as in Case 5 do not directly affect routing (except their affects captured by Cases 3 and 4) since the transfer of broadcast parcels and maintenance of the related buffers (signature, broadcast, and data buffers) happen independently of the routing of codeword packets.  This is evident by investigating the relevant {\bf bold} lines in Figures {\bf \ref{routingRulesM}}, {\bf \ref{routingRules2M}}, and {\bf \ref{routingRules5M}}.

The second part of the last sentence is true by definition of wasted (see Definition \ref{wasted}), and the first part follows from lines ({\bf \ref{routingRulesM}.11}), ({\bf \ref{routingRules2M}.49}), ({\bf \ref{routingRules2M}.60}), ({\bf \ref{routingRules2M}.75}), and Lemma \ref{sigRelationships}.
\end{proof}
\begin{lemma} \label{mProperDomains} The domains of all of the variables in Figures \ref{setupCodeM} and \ref{setupCode2M} are appropriate.  In other words, the protocol never calls for more information to be stored in an honest node's variable (buffer, packet, etc$.$) than the variable has room for.\end{lemma}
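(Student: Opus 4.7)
The plan is to adapt the line-by-line bookkeeping argument used for Lemma \ref{variables} in the edge-scheduling setting, keeping the same induction-on-execution format and exploiting Lemma \ref{similar} to avoid re-proving what we already know. For every variable that already appeared in Figures \ref{setupCode}--\ref{setupCode2}, the code that modifies it is the same (modulo the length increase $3D \mapsto 4D$, which widens $FR, RR$ from $[0..6D]$ to $[0..8D]$), so the earlier tracking argument carries over verbatim after noting the widened time window. What remains is to walk through each variable that is either new or whose domain has changed due to the node-controlling setting: the signature buffers $SIG$ and $SIG_{N,N}$, the broadcast buffer $BB$, the data buffer $DB$, the sublists $BL, EN, \mathcal{P}_{\mathtt{T}}$, the broadcast-parcel slots $bp, \alpha, c_{bp}$, the counters $\beta_{\mathtt{T}}, F$, the start-of-transmission parcel $\Omega_{\mathtt{T}}$, the end-of-transmission parcel $\Theta_{\mathtt{T}}$, and the sender's $COPY$ buffer.

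First I would dispatch the easy scalars: $c_{bp} \in \{0,1\}$ is only ever assigned $0$ or $1$ on ({\bf \ref{setupCodeM}.50}, {\bf \ref{routingRules2M}.40}, {\bf \ref{routingRules3M}.108}, {\bf \ref{routingRules3M}.111}); $\Omega_{\mathtt{T}}$ and $\Theta_{\mathtt{T}}$ are each assigned at most once per transmission by the sender and receiver respectively; $\beta_{\mathtt{T}} \in [0..4D]$ increments at most once per round ({\bf \ref{routingRulesM}.27}) and is reset each transmission ({\bf \ref{routingRules5M}.201}); and $F \in [0..n-1]$ increments only on transmission failure ({\bf \ref{routingRules5M}.186}) and resets whenever a node is eliminated ({\bf \ref{routingRules5M}.172}), so Theorem \ref{maxBadness} bounds it. Next I would handle the signature buffers: $SIG$ has $D+3$ slots, three reserved for $SIG[1], SIG[2], SIG[3]$ and one per distinct packet label seen in the current transmission; since a codeword contains exactly $D$ packets and the buffer is cleared at each transmission boundary on ({\bf \ref{routingRules4M}.128}, {\bf \ref{routingRules4M}.133}, {\bf \ref{routingRules4M}.141}, {\bf \ref{routingRules4M}.146}), the $D+3$ capacity suffices. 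The scalar $SIG_{N,N}$ tracks only a cumulative local-shuffle potential change which, by the arguments underlying Lemma \ref{item3M}, stays in $[-O(n^3), O(n^3)]$ and thus fits in $O(\log n)$ bits.

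The hard part is bounding the capacities of the broadcast buffer ($BB \in [n^2+5n]$ for internal nodes, $[3n]$ for the sender) and the sender's data buffer ($DB \in [n^3+n^2+n]$). I would enumerate, following the priority list in ({\bf \ref{routingRules3M}.115}), every category of parcel that can accumulate in an internal node's $BB$ during a single transmission: one end-of-transmission parcel $\Theta_{\mathtt{T}}$; at most $3n-2$ parcels of the SOT broadcast (one $\Omega_{\mathtt{T}}$, at most $n-2$ eliminated-node labels, at most $n-1$ reason-for-failure labels, and at most $n-1$ blacklist labels) as enforced by ({\bf \ref{routingRules5M}.200}); at most $n-1$ ``remove from blacklist'' parcels by Lemma \ref{removeOnce}; at most one ``$B$ knows $\widehat{N}$'s complete status report'' entry per pair $(B,\widehat{N})$ with $\widehat{N} \in BL$, stored only while $\widehat{N}$ remains on $BL$; and at most $n$ status-report parcels per blacklisted node (one per adjacent edge), totalling at most $n(n-1)$. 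A careful accounting, using the outdated-info cleanup calls on ({\bf \ref{routingRules4M}.139}, {\bf \ref{routingRules4M}.149}, {\bf \ref{routingRules4M}.166}) to prevent stale parcels from persisting across transmissions, shows the total fits in $n^2+5n$ slots. For the sender's $DB$ I would use Theorem \ref{maxBadness} to bound the number of simultaneously tracked failed transmissions by $n-1$, each contributing at most $n$ status reports of at most $n$ parcels each, giving the $n^3$ term; the remaining $n^2+n$ slack absorbs $BL, EN, \mathcal{P}_{\mathtt{T}}$, the stored $\Theta_{\mathtt{T}}$, and the ``$B$ has complete knowledge'' entries. The main subtlety that will require care is showing that status-report parcels from already-resolved failed transmissions (those whose blacklisted node has since been removed from $BL$) are indeed purged before overflow can occur; this is where Lemma \ref{sigBuffersAreCorrect} and the transmission-index tag carried by each blacklist parcel become essential to correctly identify and discard outdated entries.
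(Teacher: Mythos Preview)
Your proposal is correct and follows essentially the same approach as the paper: reduce the shared variables to the edge-scheduling proof via Lemma~\ref{similar}, then walk through each new variable case-by-case, with the bulk of the work being the parcel-type enumeration for $BB$ and $DB$ bounded via Theorem~\ref{maxBadness}. The paper's enumeration is slightly tighter (it counts the SOT parcels as $2n$ rather than $3n{-}2$ since eliminated and blacklisted nodes are disjoint, it observes that type-4 parcels in $BB$ are indexed by $\widehat{N}$ alone rather than by pairs $(B,\widehat{N})$, and it includes a seventh category for the per-edge requested-parcel slot $\alpha$ that you omitted), and it factors the $DB$ status-report bound into a separate auxiliary lemma (Lemma~\ref{numStatusReports}) rather than citing Lemma~\ref{sigBuffersAreCorrect}, but these are bookkeeping refinements within the same strategy.
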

\begin{proof} The proof for variables and buffers that also appear in the edge-scheduling protocol follows from Lemmas \ref{pseudo} and \ref{variables}, since all differences between the edge-scheduling protocol and the (node-controlling $+$ edge-scheduling) protocol are equivalent to an edge-failure (Lemma \ref{similar}).  So it remains to prove the lemma for the new variables appearing in Figures \ref{setupCodeM} and \ref{setupCode2M} (i.e.\ the {\bf bold} line numbers).  The distribution of public and private keys ({\bf \ref{setupCodeM}.14}) is performed by a trusted third party, so these variables are as specified.  Below, when we refer to a specific node's variable, we implicitly assume the node is honest, as the lemma is only concerned about honest nodes.
	\begin{itemize}
	\item[] \textsf{Bandwidth $P$} ({\bf \ref{setupCodeM}.06}).  We look at all transfers along each directed edge in each stage of any round.  In Stage 1, this includes the transfer of $H_{OUT}$, $H_{FP}$, $FR$ ({\bf \ref{routingRulesM}.06}), $c_{bp}$, $\alpha$ ({\bf \ref{routingRulesM}.04}), and the seven signed items on ({\bf \ref{routingRulesM}.11}).  All of these have collective size $O(k + \log n)$ (({\bf \ref{setupCodeM}.03-04}), ({\bf \ref{setupCodeM}.21}), ({\bf \ref{setupCodeM}.23}), ({\bf \ref{setupCodeM}.24}), ({\bf \ref{setupCodeM}.27}), ({\bf \ref{setupCodeM}.29}), ({\bf \ref{setupCodeM}.32}), ({\bf \ref{setupCodeM}.35}), and ({\bf \ref{setupCodeM}.36})).  In Stage 2, this includes the transfer of the seven items on ({\bf \ref{routingRules2M}.60}) and $bp$ ({\bf \ref{routingRulesM}.15}).  Collectively, these have size $O(k + \log n)$ (({\bf \ref{setupCodeM}.03-04}), ({\bf \ref{setupCodeM}.17}), ({\bf \ref{setupCodeM}.18}), ({\bf \ref{setupCodeM}.21}), and ({\bf \ref{setupCodeM}.42})).
	
	\item[] \textsf{Potential Lost Due to Re-Shuffling $SIG_{N,N}$} ({\bf \ref{setupCodeM}.12}).  This is initialized to zero on ({\bf \ref{setupCode2M}.46}), after which it is only updated on ({\bf \ref{reShuffleRules}.76}), ({\bf \ref{routingRules2M}.50}), ({\bf \ref{routingRules2M}.80}), ({\bf \ref{routingRules2M}.82}), ({\bf \ref{routingRules4M}.128}), ({\bf \ref{routingRules4M}.133}), ({\bf \ref{routingRules4M}.141}), and ({\bf \ref{routingRules4M}.146}).  The first four of these increment $SIG_{N,N}$ by at most $2n$, and the latter four all reset $SIG_{N,N}$ to zero.  We will see in Lemma \ref{sigBuffersAreCorrect} below that $SIG_{N,N}$ will always represent the potential lost due to re-shuffling in at most one failed transmission, and consequently $SIG_{N,N}$ is polynomial in $n$, as required.
	
	\item[] \textsf{Broadcast Parcel $bp$ to Receive} ({\bf \ref{setupCodeM}.27}).  This is initialized to $\bot$ on ({\bf \ref{setupCode2M}.58}), after which it is only updated on ({\bf \ref{routingRules3M}.93}).  Either no value was received on ({\bf \ref{routingRules3M}.93}) (in which case $bp = \bot$), or it corresponds to the value sent on ({\bf \ref{routingRules3M}.97}).  As discussed below, the value of $bp$ sent on ({\bf \ref{routingRules3M}.97}) lies in the appropriate domain, and hence so does $bp$.

	\item[] \textsf{Broadcast Buffer Request $\alpha$} ({\bf \ref{setupCodeM}.28}).  This is initialized to $\bot$ on ({\bf \ref{setupCode2M}.58}), after which it is only updated as in {\bf \em Broadcast Parcel to Request} ({\bf \ref{routingRules3M}.117-122}).  On ({\bf \ref{routingRules3M}.117}), $\alpha$ is set to $\bot$, and on ({\bf \ref{routingRules3M}.119}) and ({\bf \ref{routingRules3M}.122}), $\alpha$ includes the label of a node and a status report parcel (see {\bf \ref{routingRules4M}.142-145}), and so $\alpha$ is bounded by $O(k+\log n)=P$ as required.

	\item[] \textsf{Outgoing Verification of Broadcast Parcel Bit $c_{bp}$} ({\bf \ref{setupCodeM}.29}).  This is initialized to zero on ({\bf \ref{setupCode2M}.58}), after which it is only updated as on ({\bf \ref{routingRules2M}.40}) and ({\bf \ref{routingRules3M}.111}), where it clearly lies in the appropriate domain.

	\item[] \textsf{Broadcast Parcel $bp$ to Send} ({\bf \ref{setupCodeM}.42}).  This is initialized to $\bot$ on ({\bf \ref{setupCode2M}.52}), after which it is only updated as in {\bf \em Determine Broadcast Parcel to Send} ({\bf \ref{routingRules3M}.115}).  Looking at the six types of broadcast parcels on line ({\bf \ref{routingRules3M}.115}) and comparing the corresponding domains of these variables in Figures {\bf \ref{setupCodeM}} and {\bf \ref{setupCode2M}}, we see that in each case, $bp$ can be expressed in $O(k+\log n) = P$ bits.

	\item[] \textsf{Incoming Verification of Broadcast Parcel Bit $c_{bp}$} ({\bf \ref{setupCodeM}.43}).  This is initialized to zero on ({\bf \ref{setupCode2M}.52}), after which it is only updated as on ({\bf \ref{routingRules3M}.105}) and ({\bf \ref{routingRules3M}.108}).  The value it takes on ({\bf \ref{routingRules3M}.105}) will either be set to zero (if no value was received), or it will equal the value of $c_{bp}$ sent on ({\bf \ref{routingRules3M}.101}), which as shown above is either a one or zero.  Meanwhile, the value it takes on ({\bf \ref{routingRules3M}.108}) is zero, so at all times $c_{bp}$ equals one or zero, as required.

	\item[] \textsf{First Parcel of Start of Transmission Broadcast $\Omega_{\mathtt{T}}$} ({\bf \ref{setupCode2M}.66}).  This is initialized to $(0,0,0,0)$ on ({\bf \ref{setupCode2M}.72}) and is only changed on ({\bf \ref{routingRules5M}.174}), ({\bf \ref{routingRules5M}.184}), ({\bf \ref{routingRules5M}.192}), ({\bf \ref{routingRules5M}.195}), and ({\bf \ref{routingRules5M}.198}).  In all of these cases, it is clear that $\Omega_{\mathtt{T}}$ can be expressed in $O(\log n)$ bits, as required.

	\item[] \textsf{Number of Rounds Blocked $\beta_{\mathtt{T}}$} ({\bf \ref{setupCode2M}.67}).  This is initialized to zero on ({\bf \ref{setupCode2M}.71}) and is only changed on ({\bf \ref{routingRulesM}.27}), ({\bf \ref{routingRules5M}.172}), and ({\bf \ref{routingRules5M}.201}).  Notice that in the latter two cases, $\beta_{\mathtt{T}}$ is reset to zero, while $\beta_{\mathtt{T}}$ can only be incremented by one on ({\bf \ref{routingRulesM}.27}) at most $4D$ times per transmission by ({\bf \ref{routingRulesM}.02}).  Since either line ({\bf \ref{routingRules5M}.172}) or line ({\bf \ref{routingRules5M}.201}) is reached at the end of every transmission (in the case a node is not eliminated as on line ({\bf \ref{routingRules4M}.163}) or ({\bf \ref{routingRules4M}.168}), line ({\bf \ref{routingRules5M}.201}) will be reached by the call on ({\bf \ref{routingRulesM}.29})), $\beta_{\mathtt{T}} \in [0..4D]$ at all times, as required.

	\item[] \textsf{Number of Failed Transmissions $F$} ({\bf \ref{setupCode2M}.68}).  This is initialized to zero on ({\bf \ref{setupCode2M}.71}) and is only changed on ({\bf \ref{routingRules5M}.172}) and ({\bf \ref{routingRules5M}.186}).  Notice that $F$ is only incremented by one as on line ({\bf \ref{routingRules5M}.186}) when a transmission fails.  As was shown in Theorem \ref{maxBadness}, there can be at most $n-1$ failed transmissions before a node can necessarily be eliminated, in which case $F$ is reset to zero on ({\bf \ref{routingRules5M}.172}).

	\item[] \textsf{Participating List $\mathcal{P}_{\mathtt{T}}$} ({\bf \ref{setupCode2M}.69}).  This is initialized to $G$ on ({\bf \ref{setupCode2M}.73}) and is only changed on ({\bf \ref{routingRules5M}.173}) and ({\bf \ref{routingRules5M}.187}), where it is clear each time that $\mathcal{P}_{\mathtt{T}} \subseteq G$ in both places.
	
	\item[] \textsf{End of Transmission Parcel $\Theta_{\mathtt{T}}$} ({\bf \ref{setupCode2M}.77}).  This is initialized to $\bot$ on ({\bf \ref{setupCode2M}.79}) and is only changed on ({\bf \ref{routingRules5M}.179}), where it is clear that $\Omega_{\mathtt{T}}$ can be expressed in $O(k+\log n)$ bits as required (packets have size $O(k+\log n)$, and the index of a transmission requires $O(\log n)$ bits).

	\item[] \textsf{Broadcast Buffer $BB$} ({\bf \ref{setupCodeM}.08}).  We treat the sender's broadcast buffer separately below, and consider now only the broadcast buffer of any internal node or the receiver.  Notice that the broadcast buffer is initially empty ({\bf \ref{setupCode2M}.46}).  Looking at all places information is {\it added} to $BB$ (lines ({\bf \ref{routingRules3M}.106-107}), ({\bf \ref{routingRules4M}.125}), ({\bf \ref{routingRules4M}.127}), ({\bf \ref{routingRules4M}.130}), ({\bf \ref{routingRules4M}.136}), ({\bf \ref{routingRules4M}.138}), ({\bf \ref{routingRules4M}.142-145}), ({\bf \ref{routingRules4M}.148-149}), ({\bf \ref{routingRules4M}.154}), and ({\bf \ref{routingRules4M}.155})), we see that there are 7 kinds of parcels stored in the broadcast buffer, as listed on ({\bf \ref{routingRules3M}.115}) (the $7^{th}$ type is to indicate which parcel to send across each edge, as on ({\bf \ref{routingRules3M}.106})).  We look at each one separately, stating the maximum number of bits it requires in any broadcast buffer.  For all of the items below, the comments on ({\bf \ref{routingRules4M}.123}) ensure that there are never duplicates of the same parcel in $BB$ at the same time, and also that every parcel in $BB$ has associated with it $n-1$ bits to indicate which edges the parcel has travelled across (see e.g.\ ({\bf \ref{routingRules3M}.107}), ({\bf \ref{routingRules4M}.125}), ({\bf \ref{routingRules4M}.127}), ({\bf \ref{routingRules4M}.130}), ({\bf \ref{routingRules4M}.136}), ({\bf \ref{routingRules4M}.138}), ({\bf \ref{routingRules4M}.148}), and ({\bf \ref{routingRules4M}.154})).  Totaling all numbers below, we see that the $BB$ needs to hold at most $n^2+5n$ broadcast parcels, with each parcel needing to record which of the $n-1$ edges it has traversed, which proves the domain on ({\bf \ref{setupCodeM}.08}) is correct.
		\begin{enumerate}
		\item \textsc{Receiver's End of Transmission Parcel $\Theta_{\mathtt{T}}$}.  This is added to a node's broadcast buffer on ({\bf \ref{routingRules4M}.125}), and removed on ({\bf \ref{routingRules5M}.203}).  Since every internal node and the receiver will reach ({\bf \ref{routingRules5M}.203}) at the end of every transmission (({\bf \ref{routingRulesM}.30}) and ({\bf \ref{routingRules5M}.203})), and by the inforgibility of the signature scheme, there is only one valid $\Theta_{\mathtt{T}}$ per transmission $\mathtt{T}$.  Therefore, each node will have at most one broadcast parcel of this type in $BB$ at any time.

		\item \textsc{Sender's Start of Transmission Parcels}.  These are added to a node's broadcast buffer on ({\bf \ref{routingRules4M}.127}), ({\bf \ref{routingRules4M}.130}), ({\bf \ref{routingRules4M}.136}), and ({\bf \ref{routingRules4M}.138}), and they are removed on ({\bf \ref{routingRules5M}.203}).  Since every internal node and the receiver will reach ({\bf \ref{routingRules5M}.203}) at the end of every transmission (({\bf \ref{routingRulesM}.30}) and ({\bf \ref{routingRules5M}.203})), by the inforgibility of the signature scheme, for every transmission $\mathtt{T}$, there is only one valid $\Omega_{\mathtt{T}}$ in $BB$ at any time.  Notice that $\Omega_{\mathtt{T}}$ can hold up to 1 parcel for 200a, $n-1$ valid parcels for 200b and 200d together, and up to $n-1$ valid parcels for 200c (see ({\bf \ref{routingRules5M}.200}), and use the fact that $S \notin EN, BL$ and Theorem \ref{maxBadness}).  Therefore, each node will have at most $2n$ broadcast parcels of this type in $BB$ at any time.

		\item \textsc{Label of a Node to Remove from the Blacklist}.  Parcels of this nature are added to a node's broadcast buffer on ({\bf \ref{routingRules4M}.148}) and removed on ({\bf \ref{routingRules5M}.203}).  Since every internal node and the receiver will reach ({\bf \ref{routingRules5M}.203}) at the end of every transmission (({\bf \ref{routingRulesM}.30}) and ({\bf \ref{routingRules5M}.203})), we argue that in any transmission, every node will have at most $n-1$ parcels in their broadcast buffer corresponding to the label of a node to remove from the blacklist.  To see this, we argue that the sender will add $(\widehat{N}, 0 ,\mathtt{T})$ to his broadcast buffer as on ({\bf \ref{routingRules4M}.165}) at most once for each node $\widehat{N} \in \mathcal{P} \setminus S$ per transmission, and then use the inforgibility of the signature scheme to argue each node will add a corresponding broadcast parcel to their broadcast buffer as on ({\bf \ref{routingRules4M}.148}) at most $n-1$ times.  That the sender will enter line ({\bf \ref{routingRules4M}.165}) at most once per node per transmission is clear since once the sender has reached ({\bf \ref{routingRules4M}.165}) for some node $\widehat{N}$, the node will be removed from his blacklist on ({\bf \ref{routingRules4M}.166}), and nodes are not re-added to the blacklist until the end of any transmission, as on ({\bf \ref{routingRules5M}.188}).  Therefore, once the sender has received some node $\widehat{N}$'s complete status report as on ({\bf \ref{routingRules4M}.164}), that same line cannot be entered again by the same node $\widehat{N}$ in the same transmission.  In summary, there are at most $n-1$ broadcast parcels of this type in any node's broadcast buffer at any time.

		\item \textsc{The Label	of a Node $\widehat{N}$ Whose Status Report is Known to $N$.}  We show that for any node $N \in \mathcal{P} \setminus S$, there are at most $(n-1)$ broadcast parcels of type 4 ({\bf \ref{routingRules3M}.115}) in $BB$ at any time\footnote{The $(n-1)$ comes from the fact that there are no status reports for the sender.}.  This follows from the same argument as above, where it was shown that ({\bf \ref{routingRules4M}.164}) can be true at most once per node per transmission.  The inforgibility of the signature scheme ensures that the same will be true for internal nodes regarding line ({\bf \ref{routingRules4M}.155}), and since this is the only line on which broadcast parcels of this kind are added to $BB$, this can happen at most $n-1$ times per transmission.  However, we are not yet done with this case, because broadcast information of this type is {\it not} removed from $BB$ at the end of each transmission like the above forms of broadcast information.  Therefore, we fix $\widehat{N} \in G$, and show that if $N$ adds a broadcast parcel to $BB$ of form $(N, \widehat{N}, \mathtt{T}')$ as on ({\bf \ref{routingRules4M}.155}) of transmission $\mathtt{T}$, then necessarily $BB$ was {\it not} already storing a broadcast parcel of form $(N, \widehat{N}, \mathtt{T}'')$ for some other $\mathtt{T}'' \neq \mathtt{T}'$ (if $\mathtt{T}'' = \mathtt{T}'$, then there is nothing to show, as nothing new will be added to $BB$ by the comments on {\bf \ref{routingRules4M}.123}).
		
		\hspace{.5cm}For the sake of contradiction, suppose that $BB$ is already storing a parcel of form $(N, \widehat{N}, \mathtt{T}'')$ when ({\bf \ref{routingRules4M}.155}) of transmission $\mathtt{T}$ is entered and $N$ is called to add $(N, \widehat{N}, \mathtt{T}')$ to $BB$ for some $\mathtt{T}' \neq \mathtt{T}''$.  Since ({\bf \ref{routingRules4M}.155}) is reached, we must have that ({\bf \ref{routingRules4M}.152}) was satisfied for the $bp$ appearing there.  In particular, $\widehat{N}$ is on $N$'s version of the blacklist.  Since the blacklist is cleared at the end of every transmission ({\bf \ref{routingRules5M}.203}), it must be that $(\widehat{N}, \mathtt{T}', \mathtt{T})$ was added to $N$'s version of the blacklist during the {\it SOT} broadcast for the current transmission $\mathtt{T}$, as on ({\bf \ref{routingRules4M}.137-138}).  Therefore, all parcels in $BB$ of form $(N, \widehat{N}, \mathtt{T}'')$ for $\mathtt{T}'' \neq \mathtt{T}'$ should have been removed from $BB$ on line ({\bf \ref{routingRules4M}.139}), yielding the desired contradiction.

		\item[5-6.] \textsc{Status Report Parcels.}  We fix $N \in G$ and show that for every $\widehat{N} \in \mathcal{P} \setminus \{S, N \}$, there are at most $n$ status report parcels corresponding to $\widehat{N}$ in $N$'s broadcast buffer, and hence $N$'s broadcast buffer will hold at most $n(n-1)$ status report parcels at any time.  Since a single node's status report for a single transmission consists of at most $n$ parcels (see lines ({\bf \ref{routingRules4M}.142-145})\footnote{We assume that the signature buffer information for two directed edges $E(A,B)$ and $E(B,A)$ are combined into one status report parcel.}), it will be enough to show that for every $\widehat{N} \in \mathcal{P} \setminus S$, at all times $N$'s broadcast buffer only holds status report parcels for $\widehat{N}$ corresponding to a {\it single} failed transmission $\mathtt{T}'$.

		\hspace{.5cm}For the sake of contradiction, suppose that during some transmission $\mathtt{T}$, there is some node $\widehat{N} \in \mathcal{P} \setminus S$ and two transmissions $\mathtt{T}'$ and $\mathtt{T}''$ such that $N$'s broadcast buffer holds at least one status report parcel for $\widehat{N}$ from both $\mathtt{T}'$ and $\mathtt{T}''$.  Notice that status report parcels are only added to $BB$ as on ({\bf \ref{routingRules4M}.154}), and without loss of generality suppose that the status report parcel of $\widehat{N}$ corresponding to $\mathtt{T}''$ was already in $BB$ when one corresponding to $\mathtt{T}'$ is added to $BB$ as on ({\bf \ref{routingRules4M}.154}) of transmission $\mathtt{T}$.  As was argued above, since ({\bf \ref{routingRules4M}.154}) is reached in $\mathtt{T}$, ({\bf \ref{routingRules4M}.152}) must have been satisfied, and since $N$'s blacklist is cleared at the end of every transmission ({\bf \ref{routingRules5M}.203}), it must be that a broadcast parcel of form $(\widehat{N}, \widehat{\mathtt{T}}, \mathtt{T})$, adding $\widehat{N}$ to $N$'s version of the blacklist, was received earlier in transmission $\mathtt{T}$.  Notice that necessarily $\widehat{\mathtt{T}} = \mathtt{T}'$, since otherwise line ({\bf \ref{routingRules4M}.153}) will not be satisfied.  But then since $\mathtt{T}'' \neq \mathtt{T}'$, all status report parcels of $\widehat{N}$ corresponding to transmission $\mathtt{T}''$ should have been removed from $BB$ on ({\bf \ref{routingRules4M}.139}), yielding the desired contradiction.

		\hspace{.5cm}Now for a $N$'s {\it own} status report parcels, these are added to $BB$ on ({\bf \ref{routingRules4M}.142-145}).  Investigating lines ({\bf \ref{routingRules4M}.137}), ({\bf \ref{routingRules4M}.139}), and ({\bf \ref{routingRules4M}.140}), we see that status reports of $N$ can occupy $BB$ for at most one failed transmission.

		\item[7.] \textsc{Requested Parcel for Each Edge.}  For any edge $E(A,N)$, $N$ will have at most one copy of a parcel like $\alpha$ as on ({\bf \ref{routingRules3M}.106}) at any time, since the old version of $\alpha$ is simultaneously deleted when the new one is added on ({\bf \ref{routingRules3M}.106}).  Since each node has $(n-1)$ incoming edges, $BB$ need hold at most $n-1$ parcels of this form at any time.
		\end{enumerate}

	\item[] \textsf{Data Buffer $DB$, Eliminated List $EN$, and Blacklist $BL$} ({\bf \ref{setupCodeM}.09-11}).  We treat the sender's broadcast buffer separately below, and consider now only the data buffer of any internal node or the receiver.  The data buffer (which includes the blacklist and list of eliminated nodes) is initially empty ({\bf \ref{setupCode2M}.46}).  A node $N$'s data buffer holds three different kinds of information: blacklist, list of eliminated nodes, and for each neighbor $\widehat{B} \in G$, a list of nodes $\widehat{N} \in G$ for which $\widehat{B}$ knows the complete status report (see item 4 on line ({\bf \ref{routingRules3M}.115})).  Below, we show that these contribute at most $n-1$, $n-1$, and $(n-1)^2$ parcels (respectively), so that $DB$ requires at most $n^2$ parcels at any time.
		\begin{itemize}
		\item[] \textsc{Blacklist $BL$}.  Each entry of $BL$ is initialized to $\bot$ on ({\bf \ref{setupCode2M}.46}), and $BL$ is only modified on lines ({\bf \ref{routingRules4M}.134}), ({\bf \ref{routingRules4M}.138}), ({\bf \ref{routingRules4M}.148}), and ({\bf \ref{routingRules5M}.203}).  $BL$ is an array with $n-1$ entries, indexed by the nodes in $\mathcal{P} \setminus S$.  When a node $(\widehat{N}, \mathtt{T})$ is added to $BL$ as on ({\bf \ref{routingRules4M}.138}), this means that the entry of $BL$ corresponding to $\widehat{N}$ is switched to be $\mathtt{T}$.  When a node $(\widehat{N},\mathtt{T})$ is removed from $BL$ as on ({\bf \ref{routingRules4M}.148}), this means that the entry of $BL$ corresponding to $\widehat{N}$ is switched to $\bot$.  Finally, when $BL$ is to be cleared as on ({\bf \ref{routingRules4M}.134}) and ({\bf \ref{routingRules5M}.203}), this means that $BL$ each entry of $BL$ is set to $\bot$.  Thus, in all cases, $BL \in [1..n-1] \times \{0,l\}^P$ as required.\vspace{.25cm}

		\item[] \textsc{List of Eliminated Nodes $EN$}.  Each entry of $EN$ is initialized to $\bot$ on ({\bf \ref{setupCode2M}.46}), and is only modified on line ({\bf \ref{routingRules4M}.132}).  $EN$ is an array with $n-1$ entries, indexed by the nodes in $\mathcal{P} \setminus S$.  Here, when a node $(\widehat{N},\mathtt{T})$ is added to $EN$, this means the entry of $EN$ corresponding to $\widehat{N}$ is switched to $\mathtt{T}$.  Thus, at all times $EN \in [1..n-1] \times \{0,l\}^P$ as required.\vspace{.25cm}

		\item[] \textsc{Which Neighbor's Know Another Node's Status Report}.  Parcels of this kind are only added to or removed from $DB$ on lines ({\bf \ref{routingRules4M}.134}), ({\bf \ref{routingRules4M}.139}), ({\bf \ref{routingRules4M}.149}), and ({\bf \ref{routingRules4M}.151}).  We will now show that for any pair of nodes $\widehat{N}, \widehat{B} \in \mathcal{P} \setminus S$, the data buffer of any node $N \in G$ will have at most one parcel of the form $(\widehat{B}, \widehat{N}, \mathtt{T}')$, from which we conclude that this portion of $N$'s data buffer need hold at most $(n-1)^2$ parcels.  To see this, we fix $\widehat{B}$ and $\widehat{N}$ in $G$ and suppose for the sake of contradiction that $N$'s data buffer holds two different parcels $(\widehat{B}, \widehat{N}, \mathtt{T}')$ and $(\widehat{B}, \widehat{N}, \mathtt{T}'')$, for $\mathtt{T}' \neq \mathtt{T}''$.  We consider the transmission $\mathtt{T}$ for which this first happens, i.e.\ without loss of generality, $(\widehat{B}, \widehat{N}, \mathtt{T}')$ is added to $DB$ as on ({\bf \ref{routingRules4M}.151}) of $\mathtt{T}$.  Since the second part of ({\bf \ref{routingRules4M}.151}) is reached, the first part of ({\bf \ref{routingRules4M}.151}) must have been satisfied, and since the blacklist is cleared at the end of every transmission ({\bf \ref{routingRules5M}.203}), it must be that a broadcast parcel of form $(\widehat{N}, \widehat{\mathtt{T}}, \mathtt{T})$ adding $\widehat{N}$ to $N$'s version of the blacklist was received earlier in transmission $\mathtt{T}$.  Notice that necessarily $\widehat{\mathtt{T}} = \mathtt{T}'$, since otherwise line ({\bf \ref{routingRules4M}.151}) will not be satisfied.  But then since $\mathtt{T}'' \neq \mathtt{T}'$, $(\widehat{B}, \widehat{N}, \mathtt{T}'')$ should have been removed from $DB$ as on ({\bf \ref{routingRules4M}.139}) of transmission $\mathtt{T}$, yielding the desired contradiction.
		\end{itemize}
	Adding these three contributions, we see that that $DB$ requires at most $n^2$ parcels, as required.

	\item[] \textsf{Outgoing Signature Buffers $SIG$} ({\bf \ref{setupCodeM}.17}).  Each outgoing signature buffer is initially empty ({\bf \ref{setupCode2M}.54}), and they are only modified on ({\bf \ref{routingRules2M}.48-49}), ({\bf \ref{routingRules4M}.128}), ({\bf \ref{routingRules4M}.133}), ({\bf \ref{routingRules4M}.141}), and ({\bf \ref{routingRules4M}.146}).  The first of these increments $SIG[3]$ by at most $2n$, increments $SIG[1]$, and $SIG[p]$ by at most 1, and increments $SIG[2]$ by at most $2n$, and the latter four lines all reset all entries of $SIG$ to $\bot$.  Since our protocol is only intended to run polynomially-long (in $n$), each entry of $SIG$ is polynomial in $n$, as required.

	\item[] \textsf{Incoming Signature Buffers $SIG$} ({\bf \ref{setupCodeM}.32}).  Each incoming signature buffer is initially empty ({\bf \ref{setupCode2M}.48}), and they are only modified on ({\bf \ref{routingRules2M}.74-75}), ({\bf \ref{routingRules4M}.128}), ({\bf \ref{routingRules4M}.133}), ({\bf \ref{routingRules4M}.141}), and ({\bf \ref{routingRules4M}.146}).  The first of these increments $SIG[3]$ by at most $2n$, $SIG[1]$ and $SIG[p]$ by at most 1, and $SIG[2]$ by at most $2n$, and the latter four lines all reset all entries of $SIG$ to $\bot$.  Since our protocol is only intended to run polynomially-long (in $n$), each entry of $SIG$ is polynomial in $n$, as required.

	\item[] \textsf{Copy of Packets Buffer $COPY$} ({\bf \ref{setupCode2M}.60}).  $COPY$ is first filled on ({\bf \ref{setupCode2M}.74})/({\bf \ref{routingRules5M}.214}), with a copy of every packet corresponding to the first codeword.  The only place it is modified after this is on ({\bf \ref{routingRules5M}.214}), where the old copies are first deleted and then replaced with new ones.

	\item[] \textsf{Sender's Broadcast Buffer $BB$}.  In contrast to an internal node's broadcast buffer, the only thing the sender's broadcast buffer holds is the {\it Start of Transmission} broadcast ({\bf \ref{routingRules5M}.200}) and the information that a node should be {\it removed} from the blacklist, see ({\bf \ref{routingRules4M}.165}).  Notice that at the outset of the protocol, $BB$ only holds the Start of Transmission broadcast, which is comprised by only $\Omega_1 = (0,0,0,0)$ ({\bf \ref{setupCode2M}.72-73}).  After this, the only changes made to $BB$ appear on lines ({\bf \ref{routingRules4M}.165}), ({\bf \ref{routingRules5M}.171}), ({\bf \ref{routingRules5M}.199}), and ({\bf \ref{routingRules5M}.200}).  Notice that for every transmission, necessarily either ({\bf \ref{routingRules5M}.171}) or ({\bf \ref{routingRules5M}.199}) will be reached, and hence at any time of any transmission $\mathtt{T}$, $BB$ contains parcels corresponding to at most one {\it Start of Transmission} broadcast, and whatever parcels were added to $BB$ so far in $\mathtt{T}$.  By investigating line ({\bf \ref{routingRules5M}.200}) and using Lemma \ref{maxBadness}, the former requires at most $2n$ parcels, and by the comment on ({\bf \ref{routingRules4M}.156}), the latter requires at most $n$ parcels ({\bf \ref{routingRules4M}.165}).  Therefore, the sender's broadcast buffer requires at most $3n$ parcels, as required.
	
	\item[] \textsf{Sender's Data Buffer $DB$, Eliminated List $EN$, and Blacklist $BL$} ({\bf \ref{setupCode2M}.62-64}).  We will show that the sender's $DB$ needs to hold at most $n^3 +n^2 +n$ parcels at any time, and that the blacklist and list of eliminated nodes need at most $n$ parcels each.  Notice that every entry of $DB$ is initialized to $\bot$ on ({\bf \ref{setupCode2M}.73}), after which modifications to $DB$ occur only on lines ({\bf \ref{routingRules4M}.158}), ({\bf \ref{routingRules4M}.160}), ({\bf \ref{routingRules4M}.162}), ({\bf \ref{routingRules4M}.166}), ({\bf \ref{routingRules5M}.170}), ({\bf \ref{routingRules5M}.171}), ({\bf \ref{routingRules5M}.187}), ({\bf \ref{routingRules5M}.188}), ({\bf \ref{routingRules5M}.191}), ({\bf \ref{routingRules5M}.194}), ({\bf \ref{routingRules5M}.197}), and ({\bf \ref{routingRules5M}.199}).  The sender's data buffer holds eight different kinds of information: end of transmission parcel $\Theta_{\mathtt{T}}$, status report parcels, the participating list for up to $n-1$ failed transmissions, the reason for failure for up to $n-1$ failed transmissions, its own status reports for up to $n-1$ failed transmissions, the blacklist, list of eliminated nodes, and for each neighbor $B \in G$, a list of nodes $\widehat{N} \in G$ for which $\widehat{B}$ knows the complete status report (see item 4 on line ({\bf \ref{routingRules3M}.115})).
		\begin{enumerate}
		\item \textsc{End of Transmission Parcel $\Theta_{\mathtt{T}}$}.  Modifications to this occur only on lines ({\bf \ref{routingRules4M}.158}), ({\bf \ref{routingRules5M}.171}), and ({\bf \ref{routingRules5M}.199}).  Every transmission, the inforgibility of the signature scheme and the comment on line ({\bf \ref{routingRules4M}.156}) guarantee that the sender will add $\Theta_{\mathtt{T}}$ to $DB$ as on ({\bf \ref{routingRules4M}.158}) at most once.  Meanwhile, for every transmission, either ({\bf \ref{routingRules5M}.171}) or ({\bf \ref{routingRules5M}.199}) will be reached exactly once.  Therefore, there is at most one End of Transmission parcel in $DB$ at any time.

		\item \textsc{Blacklist $BL$}.  We show that $BL$ consists of at most $n$ parcels at any time.  More specifically, we will show that $BL$ lives in the domain $[1..n] \times \{0,1\}^{O(\log n)}$, i.e.\ an array with $n$ slots indexed by each $N \in G$, with each slot holding $\bot$ (if the corresponding node is not on the blacklist) or the index of the transmission in which the corresponding node was most recently added to the blacklist.  To see this, notice that modifications to the blacklist occur only on lines ({\bf \ref{routingRules4M}.166}), ({\bf \ref{routingRules5M}.171}), and ({\bf \ref{routingRules5M}.188}).  ``Removing'' a node $\widehat{N}$ from $BL$ as on ({\bf \ref{routingRules4M}.166}) means changing the entry indexed by $\widehat{N}$ to $\bot$.  ``Clearing'' the blacklist as on ({\bf \ref{routingRules5M}.171}) means making every entry of the array equal to $\bot$.  Finally, ``adding'' a node to the blacklist as on ({\bf \ref{routingRules5M}.188}) means switching the entry indexed by $\widehat{N}$ to be the index of the current transmission.

		\item \textsc{Status Report Parcels.} Modifications to this occur only on lines ({\bf \ref{routingRules4M}.162}) and ({\bf \ref{routingRules5M}.171}).  We show in Lemma \ref{numStatusReports} below that for any node $\widehat{N} \in \mathcal{P} \setminus S$, $DB$ will hold at most $n(n-1)$ status report parcels from $\widehat{N}$ at any time, from which we conclude that $DB$ need hold at most $n(n-1)^2$ status report parcels.
		
		\item \textsc{Participating Lists}.  We will view the participating list corresponding to transmission $\mathtt{T}$ as an array $[1..n] \times \{0, 1\}^{P}$, where the array is indexed by the nodes, and an entry corresponding to node $N \in G$ is either the index of the transmission $\mathtt{T}$ (if $N$ participated in $\mathtt{T}$) or $\bot$ otherwise.  Therefore, since each participating list consists of $n$ parcels, we can argue that participating lists require at most $n(n-1)$ parcels if we can show that $DB$ need hold at most $n-1$ participating lists at any time.  To see this, notice that ({\bf \ref{routingRules5M}.187}) is reached only in the case the transmission {\it failed} ({\bf \ref{routingRules5M}.185}), and we showed in Lemma \ref{maxBadness} that there can be at most $n-1$ failed transmissions before a node is necessarily eliminated and $DB$ is cleared as on ({\bf \ref{routingRules5M}.171}).

		\item \textsc{Reason Transmissions Failed}.  Modifications to this occur only on lines ({\bf \ref{routingRules5M}.171}), ({\bf \ref{routingRules5M}.191}), ({\bf \ref{routingRules5M}.194}), and ({\bf \ref{routingRules5M}.197}).  Notice that of the latter three, exactly one will be reached if and only if the transmission failed.  Also, each one of the three will add at most one parcel to $DB$.  Since $DB$ is cleared any time {\bf \em Eliminate Node} is called as on ({\bf \ref{routingRules5M}.171}), we again use Lemma \ref{maxBadness} to conclude that Reason for Transmission Failures require at most $n-1$ parcels of $DB$.

		\item \textsc{Sender's Own Status Reports}.  These are added to $DB$ on lines ({\bf \ref{routingRules5M}.191}), ({\bf \ref{routingRules5M}.194}), and ({\bf \ref{routingRules5M}.197}), and removed from $DB$ on ({\bf \ref{routingRules5M}.171}).  Notice that of the former three lines, exactly one will be reached if and only if the transmission failed.  Also, each one of the three will add at most $n$ parcels to $DB$.  Since $DB$ is cleared any time {\bf \em Eliminate Node} is called as on ({\bf \ref{routingRules5M}.171}), we again use Lemma \ref{maxBadness} to conclude that Reason for Transmission Failures require at most $n(n-1)$ parcels of $DB$.

		\item \textsc{List of Eliminated Nodes $EN$}.  Modifications to this occur only on line ({\bf \ref{routingRules5M}.170}).  Since $EN$ is viewed as living in $[1..n] \times \{0,1\}^{O(\log n)}$, ``adding'' a node $\widehat{N}$ to $EN$ means changing the entry indexed by $\widehat{N}$ from $\bot$ to the index of the current transmission.  Notice that $EN \in [1..n] \times \{0,1\}^{O(\log n)}$ can be expressed using $n$ parcels.

		\item \textsc{The Label of a Node $\widehat{N}$ Whose Status Report is Known to $B$.}  Modifications to this occur only on lines ({\bf \ref{routingRules4M}.160}), ({\bf \ref{routingRules4M}.166}), and ({\bf \ref{routingRules5M}.171}).  We show in Lemma \ref{numStatusReportsKnown} below that for any pair of nodes $B, \widehat{N} \in \mathcal{P} \setminus S$, $DB$ will hold at most one parcel of the form $(B, \widehat{N}, \mathtt{T}')$ at any time (see e.g.\ ({\bf \ref{routingRules4M}.160})), from which we conclude that $DB$ need hold at most $(n-1)^2$ parcels of this type.
		\end{enumerate}
	Adding together these changes, the sender's $DB$ needs to hold at most $n^3+n^2 + n$ parcels, as required.
	\end{itemize}
We have now shown each of the variables of Figures \ref{setupCodeM} and \ref{setupCode2M} have domains as indicated.
\end{proof}
\begin{lemma} \label{numStatusReports} For any node $\widehat{N} \in \mathcal{P} \setminus S$, the sender's data buffer will hold at most $n(n-1)$ status report parcels from $\widehat{N}$ at any time.  More specifically, let $\{\mathtt{T}_1, \dots, \mathtt{T}_j\}$ denote the set of transmissions for which the sender has at least one status report parcel from $\widehat{N}$.  Then $j \leq n-1$ and for every $i < j$, the sender has  $\widehat{N}$'s {\it complete} status report for transmission $\mathtt{T}_i$.
\end{lemma}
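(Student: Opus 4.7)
The proof plan has two main steps: first establish the completeness claim for $i < j$, then bound $j$ by $n-1$. The final tally comes from noting that a complete status report for a single transmission consists of at most $n$ parcels (one per edge, by the structure of lines (\textbf{\ref{routingRules4M}.142-145})).

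For the completeness claim, I would first observe that the sender only adds a status report parcel from $\widehat{N}$ corresponding to some transmission $\mathtt{T}'$ to $DB$ on line (\textbf{\ref{routingRules4M}.162}), and this requires that $\widehat{N}$ be currently on the sender's blacklist with entry $(\widehat{N}, \mathtt{T}')$ (by the check on line (\textbf{\ref{routingRules4M}.161}) and the specification of ``status report parcel'' from lines (\textbf{\ref{routingRules4M}.142-145})). Next, I would track blacklist management: the only way a node is added to $BL$ is on line (\textbf{\ref{routingRules5M}.188}), and this only processes nodes in $\mathcal{P}_{\mathtt{T}} = \mathcal{P} \setminus (EN \cup BL)$ (line (\textbf{\ref{routingRules5M}.187})), so nodes already on $BL$ are not re-added. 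Consequently, for $\widehat{N}$ to have been blacklisted for $\mathtt{T}_{i+1} > \mathtt{T}_i$, it must have been removed from $BL$ between $\mathtt{T}_i$ and $\mathtt{T}_{i+1}$; and the only line that removes $\widehat{N}$ from $BL$ without a full $DB$ clear is (\textbf{\ref{routingRules4M}.166}), which is reached only when the complete status report of $\widehat{N}$ corresponding to $\mathtt{T}_i$ has been received (the condition on line (\textbf{\ref{routingRules4M}.164})). Finally, I would note that status report parcels in $DB$ are never selectively removed: the only modifications to them are additions on (\textbf{\ref{routingRules4M}.162}) and the wholesale $DB$ clear on (\textbf{\ref{routingRules5M}.171}) (the ``remove outdated info'' on (\textbf{\ref{routingRules4M}.166}) targets only parcels of form $(\widehat{B}, \widehat{N}, \mathtt{T}')$, per the comment there). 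Hence, once the sender's complete status report for $\mathtt{T}_i$ has been assembled, it stays intact in $DB$ until the next elimination.

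For the bound $j \leq n-1$, I would appeal directly to Theorem \ref{maxBadness}: between any two eliminations (or between protocol start and the first elimination) there are at most $n-1$ failed transmissions, and status report parcels are only collected for failed transmissions in which $\widehat{N}$ was blacklisted. Since (\textbf{\ref{routingRules5M}.171}) clears $DB$ at each elimination, all status report parcels of $\widehat{N}$ presently in $DB$ correspond to failed transmissions since the last elimination; hence at most $n-1$ distinct transmissions are represented, i.e.\ $j \leq n-1$.

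Combining the two steps gives at most $n \cdot j \leq n(n-1)$ status report parcels from $\widehat{N}$ in $DB$ at any time, as required. The only subtle point—and the one I would take the most care with—is verifying that status report parcels truly persist across the intervening blacklist-removal and blacklist-addition events (i.e.\ that line (\textbf{\ref{routingRules4M}.166}) does not inadvertently delete them); this is settled by inspecting the pseudo-code comment on (\textbf{\ref{routingRules4M}.166}), which restricts ``outdated'' to the knowledge parcels of form $(\widehat{B}, \widehat{N}, \mathtt{T}')$.
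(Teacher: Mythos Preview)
Your proposal is correct and follows essentially the same approach as the paper: both proofs hinge on blacklist management (the only way $(\widehat{N},\mathtt{T}_i)$ leaves $BL$ without a wholesale $DB$ clear is via line~(\textbf{\ref{routingRules4M}.166}), which forces completeness), and both invoke Theorem~\ref{maxBadness} together with the $DB$ clear on line~(\textbf{\ref{routingRules5M}.171}) to bound $j \leq n-1$. Your presentation is slightly more direct---you argue $j \leq n-1$ by noting that all parcels currently in $DB$ date from after the last elimination, whereas the paper runs a contradiction on $j \geq n$---but the underlying ideas and the pseudo-code lines invoked are the same.
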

\begin{proof} We first note that the first sentence follows immediately from the latter two since each status report consists of at most $n$ status report parcels ({\bf \ref{routingRules4M}.142-145}).  Fix $\widehat{N} \in \mathcal{P} \setminus S$ and let $\{\mathtt{T}_1, \dots, \mathtt{T}_j\}$ be as in the lemma, ordered chronologically.  We first show that $j \leq n-1$.  For the sake of contradiction, suppose $j \geq n$.  We first argue that for all $1 \leq i \leq j$, transmission $\mathtt{T}_i$ necessarily failed.  Fix $1 \leq i \leq j$.  Since $DB$ contains a status report parcel from $\widehat{N}$ for transmission $\mathtt{T}_i$, it must have been added on ({\bf \ref{routingRules4M}.162}) of some transmission $\widehat{\mathtt{T}}$.  Therefore, line ({\bf \ref{routingRules4M}.161}) must have been satisfied, and in particular, $(\widehat{N}, \mathtt{T}_i)$ must have been on $BL$ during $\widehat{\mathtt{T}}$.  Therefore, $(\widehat{N}, \mathtt{T}_i)$ must have been added to $BL$ as on ({\bf \ref{routingRules5M}.188}) of transmission $\mathtt{T}_i$, which in turn implies transmission $\mathtt{T}_i$ failed ({\bf \ref{routingRules5M}.185}).  Therefore, transmission $\mathtt{T}_i$ failed for each $1 \leq i \leq j$.

By Lemma \ref{maxBadness}, there can be at most $n-1$ failed transmissions before a node is eliminated as on ({\bf \ref{routingRules5M}.169-177}).  Since $j \geq n$, considering failed transmissions $\{\mathtt{T}_2, \dots, \mathtt{T}_j\}$, there must have been a transmission $\mathtt{T}_2 \leq \mathtt{T} \leq \mathtt{T}_j$ such that {\bf \em Eliminate N} ({\bf \ref{routingRules5M}.169}) was entered in transmission $\mathtt{T}$.  We first argue that $\mathtt{T} < \mathtt{T}_j$ as follows.  If $\mathtt{T} = \mathtt{T}_j$, then $(\widehat{N}, \mathtt{T}_j)$ would {\it not} be added to $BL$ as on ({\bf \ref{routingRules5M}.188}) (once the protocol enters ({\bf \ref{routingRules5M}.169}), it halts until the end of the transmission ({\bf \ref{routingRules5M}.177}), thus skipping ({\bf \ref{routingRules5M}.188})).  But then ({\bf \ref{routingRules4M}.161}) of any transmission after $\mathtt{T}_j$ cannot be satisfied for any status report parcel corresponding to $\mathtt{T}_j$, and hence none of $\widehat{N}$'s status report parcels corresponding to $\mathtt{T}_j$ could be added to $DB$ after transmission $\mathtt{T}_j$.  Similarly, none of $\widehat{N}$'s status report parcels corresponding to $\mathtt{T}_j$ can be added to $DB$ before or during transmission $\mathtt{T}_j$ by Claim \ref{whenDB} below.  This then contradicts the fact that at some point in time, $DB$ contains one of $\widehat{N}$'s status report parcels corresponding to $\mathtt{T}_j$.

We now have that for some transmission $\mathtt{T}_1 < \mathtt{T} < \mathtt{T}_j$, {\bf \em Eliminate N} is entered during $\mathtt{T}$.  Therefore, all of $\widehat{N}$'s status report parcels for $\mathtt{T}_1$ are removed from $DB$ on ({\bf \ref{routingRules5M}.171}) and $(\widehat{N}, \mathtt{T}_1)$ is removed from $BL$ on ({\bf \ref{routingRules5M}.171}) of transmission $\mathtt{T} \leq \mathtt{T}_j$.  Since $\mathtt{T}_1 < \mathtt{T}$, $(\widehat{N}, \mathtt{T}_1)$ will never be put on $BL$ as on ({\bf \ref{routingRules5M}.190}) for any transmission after $\mathtt{T}$, and consequently, ({\bf \ref{routingRules4M}.161}) will never be satisfied after $\mathtt{T}$ for any of $\widehat{N}$'s status report parcels from $\mathtt{T}_1$.  Therefore, none of $\widehat{N}$'s status report parcels will be put into $DB$ after they are removed on ({\bf \ref{routingRules5M}.171}) of $\mathtt{T}$.  Meanwhile, by the end of transmission $\mathtt{T} < \mathtt{T}_j$, $DB$ cannot have any of $\widehat{N}$'s status report parcels corresponding to $\mathtt{T}_j$ by Claim \ref{whenDB} below.  We have now contradicted the assumption that $DB$ simultaneously holds some of $\widehat{N}$'s status report parcels from $\mathtt{T}_1$ and $\mathtt{T}_j$.  Thus, $j \leq n-1$, as desired.

We now show that for every $i < j$, the sender has $\widehat{N}$'s {\it complete} status report for transmission $\mathtt{T}_i$.  If $j=1$, there is nothing to prove.  So let $1 < j \leq n-1$, and for the sake of contradiction suppose there is some $i < j$ such that the sender has at least one of $\widehat{N}$'s status report parcels for $\mathtt{T}_i$, but not the entire report.  Let $\widehat{\mathtt{T}}_i$ denote the transmission that the status report parcel corresponding to $\mathtt{T}_i$ was added to $DB$ as on ({\bf \ref{routingRules4M}.162}), and let $\widehat{\mathtt{T}}_{i+1}$ denote the transmission that the parcel corresponding to $\mathtt{T}_{i+1}$ was added to $DB$ as on ({\bf \ref{routingRules4M}.162}).  Without loss of generality, we suppose that $\widehat{\mathtt{T}}_i \leq \widehat{\mathtt{T}}_{i+1}$.  Since ({\bf \ref{routingRules4M}.162}) is entered during transmission $\widehat{\mathtt{T}}_i$, it must be that ({\bf \ref{routingRules4M}.161}) was satisfied, and in particular $(\widehat{N}, \mathtt{T}_i)$ was on $BL$ during $\widehat{\mathtt{T}}_i$.  Similarly, $(\widehat{N}, \mathtt{T}_{i+1})$ was on $BL$ during $\widehat{\mathtt{T}}_{i+1}$.  Lemma \ref{blacklistStuff} below states that for each $N \in G$, $N$ is on $BL$ at most once, i.e.\ there is at most one entry of the form $(N, \widehat{\mathtt{T}})$ on $BL$ at any time.  Since nodes are only added to $BL$ at the very end each transmission ({\bf \ref{routingRules5M}.188}), we may conclude that we have strict inequality: $\widehat{\mathtt{T}}_i < \widehat{\mathtt{T}}_{i+1}$.  In particular, $(\widehat{N}, \mathtt{T}_{i+1})$ was {\it not} on $BL$ at the start of $\widehat{\mathtt{T}}_i$, but $(\widehat{N}, \mathtt{T}_i)$ was.  Therefore, $(\widehat{N}, \mathtt{T}_{i+1})$ was added to $BL$ as on ({\bf \ref{routingRules5M}.188}) of transmission $\mathtt{T}_{i+1}$, and so $\widehat{N} \in \mathcal{P}_{\mathtt{T}}$ ({\bf \ref{routingRules5M}.187-188}).  In particular, $\widehat{N}$ is {\it not} blacklisted by the end of $\mathtt{T}_{i+1}$ ({\bf \ref{routingRules5M}.187}).  Therefore, there must be some transmission $\mathtt{T} \in [\widehat{\mathtt{T}}_i, \mathtt{T}_{i+1}]$ such that $(\widehat{N}, \mathtt{T}_i)$ is removed from $BL$ as on ({\bf \ref{routingRules4M}.166}) or ({\bf \ref{routingRules5M}.171}).  Both of these lead to a contradiction, as the first implies the sender has $\widehat{N}$'s complete status report for $\mathtt{T}_i$, while the latter implies that all status reports corresponding to $\mathtt{T}_i$ should have been removed from $DB$.
\end{proof}
\begin{claim} \label{whenDB} For any $\widehat{N} \in G$ and for any transmission $\mathtt{T}$, the sender's data buffer $DB$ will never hold any of $\widehat{N}$'s status report parcels corresponding to $\mathtt{T}$ {\it before} or {\it during} transmission $\mathtt{T}$.
\end{claim}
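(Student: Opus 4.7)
The plan is to trace the pseudo-code to find the unique location where status report parcels can enter the sender's data buffer, and argue that the precondition for that location cannot be satisfied before or during transmission $\mathtt{T}$. Scanning the pseudo-code modifications to $DB$ listed in the proof of Lemma \ref{mProperDomains}, the only line that inserts a status report parcel is ({\bf \ref{routingRules4M}.162}), and reaching that line requires that the conditional at ({\bf \ref{routingRules4M}.161}) be satisfied, namely that the received $bp$ be a status report parcel of some node $\widehat{N}$ that is currently on the sender's blacklist $BL$.

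First I would show that $(\widehat{N}, \mathtt{T})$ cannot appear on the sender's blacklist before or during transmission $\mathtt{T}$. The sender's $BL$ is only modified on ({\bf \ref{routingRules4M}.166}), ({\bf \ref{routingRules5M}.171}), and ({\bf \ref{routingRules5M}.188}), of which only the last adds new entries, and it does so with timestamp equal to the index of the current transmission. Since ({\bf \ref{routingRules5M}.188}) is reached only inside {\bf \em Prepare Start of Transmission Broadcast} ({\bf \ref{routingRulesM}.29}) at the very end of the round loop ({\bf \ref{routingRulesM}.02}) of that transmission, the entry $(\widehat{N}, \mathtt{T})$ cannot exist in $BL$ at any point before transmission $\mathtt{T}+1$ begins. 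Consequently, during transmission $\mathtt{T}$ (and in all prior transmissions), the first part of ({\bf \ref{routingRules4M}.161}) is false for any parcel tagged with $\mathtt{T}$, so ({\bf \ref{routingRules4M}.162}) never adds a status report parcel of $\widehat{N}$ corresponding to $\mathtt{T}$.

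As a secondary sanity check (not strictly needed for the statement, but useful for the intuition and robustness of the claim), I would note that the parcels themselves do not yet exist in the network at this stage: status report parcels for $\mathtt{T}$ are created by $\widehat{N}$ only on ({\bf \ref{routingRules4M}.142-145}), which is entered only after $\widehat{N}$ receives a valid signed blacklist parcel $(\widehat{N}, \mathtt{T}, \mathtt{T}')$ for some later $\mathtt{T}' > \mathtt{T}$. By unforgeability of the signature scheme, no node, and in particular no corrupt node, can synthesize such a parcel until $S$ signs it on ({\bf \ref{routingRules5M}.200}d), which again happens only at the end of $\mathtt{T}$. So even the $bp$ variable on ({\bf \ref{routingRules4M}.161}) cannot be a valid status report parcel for $\mathtt{T}$ before $\mathtt{T}$ ends.

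I do not expect any major obstacle; the argument is essentially a verification that the two events, ``$(\widehat{N}, \mathtt{T}) \in BL$'' and ``status report parcels for $\mathtt{T}$ generated,'' are both triggered by the end-of-transmission machinery of $\mathtt{T}$ and therefore cannot occur earlier. The one subtlety to be careful about is ruling out the possibility that a malicious $bp$ bearing a forged signature from $\widehat{N}$ passes line ({\bf \ref{routingRules4M}.161}) during $\mathtt{T}$; this is handled by appeal to the unforgeability assumption on the PKI set up in Section \ref{adversary} together with the observation above that, regardless of signatures, the blacklist precondition of ({\bf \ref{routingRules4M}.161}) already fails.
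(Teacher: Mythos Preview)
Your proposal is correct and follows essentially the same route as the paper's own proof: both identify ({\bf \ref{routingRules4M}.162}) as the unique insertion point for status report parcels into $DB$, note that reaching it requires the blacklist precondition at ({\bf \ref{routingRules4M}.161}), and then argue that $(\widehat{N},\mathtt{T})$ cannot be on $BL$ until ({\bf \ref{routingRules5M}.188}) fires inside {\bf \em Prepare Start of Transmission Broadcast} at the very end of $\mathtt{T}$. Your ``secondary sanity check'' about the parcels themselves not yet existing is additional robustness the paper does not bother with, but as you correctly observe, the blacklist argument alone already suffices.
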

\begin{proof}  Let $\widehat{N} \in G$, and for the sake of contradiction, let $\mathtt{T}$ be a transmission such that $DB$ has one of $\widehat{N}$'s status report parcels from $\mathtt{T}$ before or during $\mathtt{T}$.  Since status reports are only added to $DB$ on ({\bf \ref{routingRules4M}.162}), this implies that there is some transmission $\mathtt{T}' \leq \mathtt{T}$ such that ({\bf \ref{routingRules4M}.161}) is satisfied at some point of $\mathtt{T}'$ {\it before} the {\bf \em Prepare Start of Transmission Broadcast} of transmission $\mathtt{T}'$ is called.  This in turn implies that $(\widehat{N}, \mathtt{T})$ was on $BL$ {\it before} the {\bf \em Prepare Start of Transmission Broadcast} of transmission $\mathtt{T}' \leq \mathtt{T}$ was called.  However, this contradicts the fact that the only time $(\widehat{N}, \mathtt{T})$ can be added to $BL$ is during the {\bf \em Prepare Start of Transmission Broadcast} of transmission $\mathtt{T}$ on ({\bf \ref{routingRules5M}.188}).
\end{proof}
\begin{claim} \label{numStatusReportsKnown} For any pair of nodes $B, \widehat{N} \in G \setminus S$, the sender's data buffer will hold at most one status report parcels of the form $(B, \widehat{N}, \mathtt{T}')$ at any time.
\end{claim}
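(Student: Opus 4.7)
The plan is to mirror the argument given earlier (in the proof of Lemma~\ref{mProperDomains}) for the analogous claim about item~4 in an internal node's broadcast buffer. Fix $B, \widehat{N} \in G \setminus S$, and suppose for contradiction that at some moment the sender's data buffer $DB$ simultaneously holds two distinct parcels $(B, \widehat{N}, \mathtt{T}')$ and $(B, \widehat{N}, \mathtt{T}'')$ with $\mathtt{T}' \neq \mathtt{T}''$. Consider the first transmission $\mathtt{T}$ during which this simultaneous presence holds; without loss of generality $(B, \widehat{N}, \mathtt{T}'')$ is the one freshly added during $\mathtt{T}$, while $(B, \widehat{N}, \mathtt{T}')$ was already present at that moment.

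The only line in the pseudo-code that adds parcels of this form is ({\bf \ref{routingRules4M}.160}); in order for that line to be executed in transmission $\mathtt{T}$, the conditional on ({\bf \ref{routingRules4M}.160}) must be satisfied, which forces $(\widehat{N}, \mathtt{T}'')$ to appear on the sender's blacklist $BL$ at that moment. The key input is now Lemma~\ref{blacklistStuff}, which says that for each $\widehat{N}$ there is at most one entry on $BL$ indexed by $\widehat{N}$ at any time, together with the fact that entries are only added to $BL$ at {\bf \em Prepare Start of Transmission Broadcast} on ({\bf \ref{routingRules5M}.188}). From these we conclude that $(\widehat{N}, \mathtt{T}'')$ must have been placed on $BL$ during transmission $\mathtt{T}''$, and in particular at some transmission $\widetilde{\mathtt{T}} \in [\mathtt{T}', \mathtt{T}'']$ (strictly before $\mathtt{T}$) any previous entry $(\widehat{N}, \mathtt{T}')$ was first removed.

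The main step is then to trace this removal: $(\widehat{N}, \mathtt{T}')$ can leave $BL$ only via ({\bf \ref{routingRules4M}.166}) or via a full wipe on ({\bf \ref{routingRules5M}.171}). In the first case, the very next instruction on ({\bf \ref{routingRules4M}.166}) explicitly removes \emph{outdated} information from $DB$, and the comment on that same line specifies that ``outdated'' includes every entry of form $(\widetilde{B}, \widehat{N}, \mathtt{T}')$; in particular $(B, \widehat{N}, \mathtt{T}')$ is removed at that moment, contradicting its continued presence in $DB$ by transmission $\mathtt{T}$. In the second case, line ({\bf \ref{routingRules5M}.171}) clears $DB$ entirely (except for $EN$), so $(B, \widehat{N}, \mathtt{T}')$ is again erased. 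Either way we contradict the simultaneous presence of $(B, \widehat{N}, \mathtt{T}')$ and $(B, \widehat{N}, \mathtt{T}'')$ in $DB$, proving the claim.

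The only potential subtlety is verifying that line ({\bf \ref{routingRules4M}.166}) really does sweep out every $(\widetilde{B}, \widehat{N}, \mathtt{T}')$ and not only those matching some particular $\widetilde{B}$; this is immediate from the parenthetical comment on that line, so no further ingredients are needed beyond Lemma~\ref{blacklistStuff} and the unique insertion point ({\bf \ref{routingRules5M}.188}) for blacklist entries.
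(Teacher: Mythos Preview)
Your argument is correct and follows essentially the same route as the paper's proof: both parcels can only enter $DB$ via ({\bf \ref{routingRules4M}.160}), each insertion forces the corresponding $(\widehat{N},\cdot)$ entry onto $BL$, Lemma~\ref{blacklistStuff} forbids two such entries simultaneously, and the removal of the older $BL$ entry (via ({\bf \ref{routingRules4M}.166}) or ({\bf \ref{routingRules5M}.171})) sweeps the older $DB$ parcel. One small expository gap: you never explicitly state that the \emph{already-present} parcel $(B,\widehat{N},\mathtt{T}')$ must itself have been inserted via ({\bf \ref{routingRules4M}.160}), hence $(\widehat{N},\mathtt{T}')$ was on $BL$ at that earlier moment---this is the fact that makes your ``previous entry $(\widehat{N},\mathtt{T}')$ was first removed'' clause meaningful, and your interval $[\mathtt{T}',\mathtt{T}'']$ should really be stated as ``between the moment the old parcel was added and the current round'' rather than in terms of the transmission indices themselves.
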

\begin{proof} Fix $B, \widehat{N} \in G \setminus S$, and suppose for the sake of contradiction that there are two transmissions $\mathtt{T}'$ and $\mathtt{T}''$ such that both $(B, \widehat{N}, \mathtt{T}')$ and $(B, \widehat{N}, \mathtt{T}'')$ are in $DB$ at the same time (note that $\mathtt{T}' \neq \mathtt{T}''$ by the comment on {\bf \ref{routingRules4M}.156}).  Since parcels of this form are only added to $DB$ on ({\bf \ref{routingRules4M}.160}), we suppose without loss of generality that $\mathtt{T}$ is a transmission and $\mathtt{t}$ is a round in $\mathtt{T}$ such that $(B, \widehat{N}, \mathtt{T}'')$ is already in $DB$ when $(B, \widehat{N}, \mathtt{T}')$ is added to $DB$ as on ({\bf \ref{routingRules4M}.160}) of round $\mathtt{t}$.  Since ({\bf \ref{routingRules4M}.160}) is reached, ({\bf \ref{routingRules4M}.159}) was satisfied, so in particular $(\widehat{N}, \mathtt{T}')$ is on the sender's (current version of the) blacklist.  Similarly, since $(B, \widehat{N}, \mathtt{T}'')$ was (most recently) added to $DB$ as on ({\bf \ref{routingRules4M}.160}) of some round $\widehat{\mathtt{t}}$ of some transmission $\widehat{\mathtt{T}} \leq \mathtt{T}$, it must have been that $(\widehat{N}, \mathtt{T}'')$ was on the sender's (version of the) blacklist during round $\widehat{\mathtt{t}}$ of $\widehat{\mathtt{T}}$.  By Lemma \ref{blacklistStuff} below, since $(\widehat{N}, \mathtt{T}')$ is on the sender's current blacklist (as of round $\mathtt{t}$ of transmission $\mathtt{T}$), and $(\widehat{N}, \mathtt{T}'')$ was on an earlier version of the sender's blacklist, it must be that $(\widehat{N}, \mathtt{T}'')$ was removed from the blacklist at some point between round $\widehat{\mathtt{t}}$ of $\widehat{\mathtt{T}}$ and round $\mathtt{t}$ of $\mathtt{T}$.  Notice that nodes are removed from the blacklist only on ({\bf \ref{routingRules4M}.166}) and ({\bf \ref{routingRules5M}.171}).  However, in both of these cases, $(B, \widehat{N}, \mathtt{T}'')$ should have been removed from $DB$ (see ({\bf \ref{routingRules4M}.166}) and ({\bf \ref{routingRules5M}.171})), contradicting the fact that it is still in $DB$ when $(B, \widehat{N}, \mathtt{T}')$ is added to $DB$ in round $\mathtt{t}$ of transmission $\mathtt{T}$.
\end{proof}
\begin{lemma} \label{blacklistStuff} A node is on at most one blacklist at a time.  In other words, whenever a node $(N, \mathtt{T})$ is added to the sender's blacklist as on ({\bf \ref{routingRules5M}.188}), we have that $(N, \mathtt{T}') \notin BL$ for any other (earlier) transmission $\mathtt{T}'$.  Additionally, if $(N, \mathtt{T}') \in BL$ at any time, then:
	\begin{enumerate}
	\item Transmission $\mathtt{T}'$ failed
	\item No node has been eliminated since $\mathtt{T}'$ to the current time
	\item The sender has not received $N$'s complete status report corresponding to $\mathtt{T}'$
	\end{enumerate}
\end{lemma}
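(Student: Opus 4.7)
The plan is to prove all four claims together by a single induction on the sequence of modifications to $BL$, examining each line of pseudo-code that touches $BL$. The three places where $BL$ is modified are: line ({\bf \ref{routingRules5M}.188}) (adding nodes after a failed transmission), line ({\bf \ref{routingRules4M}.166}) (removing a single node when its complete status report arrives), and line ({\bf \ref{routingRules5M}.171}) (wiping $BL$ inside {\bf \em Eliminate $\widehat{N}$}). The base case is trivial since $BL$ is initialized empty on line ({\bf \ref{setupCode2M}.73}), and removals only strengthen the inductive statement, so the only nontrivial step is an addition on line ({\bf \ref{routingRules5M}.188}).

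For the uniqueness claim, I would argue directly from line ({\bf \ref{routingRules5M}.187-188}): nodes added on line 188 come from $\mathcal{P}_{\mathtt{T}} = \mathcal{P} \setminus (EN \cup BL)$, so by construction $N \notin BL$ at the moment $(N, \mathtt{T})$ is inserted. Combined with the fact that line ({\bf \ref{routingRules5M}.188}) is the unique insertion point and each transmission index is fresh, this gives that there can never be two entries $(N, \mathtt{T}_1)$ and $(N, \mathtt{T}_2)$ in $BL$ simultaneously.

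For the three additional properties, I would check each at the moment $(N, \mathtt{T}')$ is added and show it cannot subsequently be violated while $(N, \mathtt{T}')$ remains on $BL$. Property 1 is immediate: line ({\bf \ref{routingRules5M}.188}) lies inside the branch guarded by $b=0$ on line ({\bf \ref{routingRules5M}.185}), which is precisely the failed-transmission case ({\bf \em Prepare Start of Transmission Broadcast}, line ({\bf \ref{routingRules5M}.180})). Property 2 follows because {\bf \em Eliminate $\widehat{N}$} clears $BL$ on line ({\bf \ref{routingRules5M}.171}), so any elimination occurring after $\mathtt{T}'$ would have removed $(N, \mathtt{T}')$; no subsequent mechanism can re-insert $(N, \mathtt{T}')$ with the old index $\mathtt{T}'$ because line ({\bf \ref{routingRules5M}.188}) always time-stamps with the current transmission. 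Property 3 follows from lines ({\bf \ref{routingRules4M}.164-166}): if the sender ever obtained $N$'s complete status report for $\mathtt{T}'$, then ({\bf \ref{routingRules4M}.164}) would trigger and ({\bf \ref{routingRules4M}.166}) would remove $(N, \mathtt{T}')$ from $BL$, contradicting its continued presence.

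The main obstacle I expect is a bookkeeping subtlety: ensuring that between the time $(N, \mathtt{T}')$ is placed on $BL$ at the end of $\mathtt{T}'$ and the current time, no mechanism could smuggle $(N, \mathtt{T}')$ back onto $BL$ after a removal (say, via a delayed-arrival status report parcel being re-processed). This requires tracing that line ({\bf \ref{routingRules5M}.188}) is the only insertion point and uses the current transmission as the timestamp, so that once $(N, \mathtt{T}')$ is removed it can never reappear with the old index $\mathtt{T}'$. With that observation in hand, the three properties are preserved by each protocol step, completing the induction.
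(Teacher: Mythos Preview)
Your proposal is correct and follows essentially the same approach as the paper's proof: both identify line ({\bf \ref{routingRules5M}.188}) as the unique insertion point (with timestamp equal to the current transmission) and lines ({\bf \ref{routingRules4M}.166}) and ({\bf \ref{routingRules5M}.171}) as the only removal points, then derive uniqueness from line ({\bf \ref{routingRules5M}.187}) and the three properties from the respective guards and the observation that once $(N,\mathtt{T}')$ is removed it can never be re-inserted with the old index. Your explicit induction framing is slightly more formal than the paper's direct argument, but the substance is identical.
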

\begin{proof} The first statement of the lemma is immediate, since the only place a (node, transmission) pair is added to $BL$ is on ({\bf \ref{routingRules5M}.188}), and by the previous line, necessarily any such node is not already on the blacklist.  Also, Statement 1) is immediate since ({\bf \ref{routingRules5M}.188}) is only reached if the transmission fails ({\bf \ref{routingRules4M}.185}).  To prove Statements 2) and 3), notice that $(N, \mathtt{T}')$ is only added to the blacklist at the very end of transmission $\mathtt{T}'$ ({\bf \ref{routingRules5M}.188}).  In particular, if $(N, \mathtt{T}')$ is ever removed from the blacklist during some transmission $\mathtt{T}$\footnote{If $(N, \mathtt{T}')$ is removed from the blacklist as on ({\bf \ref{routingRules4M}.166}) or ({\bf \ref{routingRules5M}.171}) of transmission $\mathtt{T}$, then necessarily $\mathtt{T} > \mathtt{T}'$, since $(N, \mathtt{T}')$ can only be added to $BL$ at the very end of a transmission ({\bf \ref{routingRules5M}.188}), i.e.\ lines ({\bf \ref{routingRules4M}.166}) and ({\bf \ref{routingRules4M}.171}) cannot be reached after line ({\bf \ref{routingRules5M}.188}) in the same transmission.} as on ({\bf \ref{routingRules4M}.166}) or ({\bf \ref{routingRules5M}.171}), then $(N, \mathtt{T}')$ can never again appear on the blacklist (as remarked in the footnote, $\mathtt{T} > \mathtt{T}'$, and so at any point during or after transmission $\mathtt{T}$, $(N, \mathtt{T}')$ can never again be added to $BL$ as on ({\bf \ref{routingRules5M}.188}) since $\mathtt{T}'$ has already passed).  Therefore, if during transmission $\mathtt{T}$ a node is eliminated as on ({\bf \ref{routingRules5M}.169-177}) or the sender receives $N$'s complete status report of transmission $\mathtt{T}'$ as on ({\bf \ref{routingRules4M}.164}), then $N$ will be removed from the blacklist as on ({\bf \ref{routingRules4M}.166}) or ({\bf \ref{routingRules5M}.171}), at which point $(N, \mathtt{T}')$ can never be added to $BL$ again (since necessarily $\mathtt{T} > \mathtt{T}'$ as remarked in the footnote).  This proves Statements 2) and 3).
\end{proof}
\begin{lemma} \label{blacklistStuff2} For any $(N, \mathtt{T})$ on the sender's blacklist, the sender needs at most $n$ parcels from $N$ in order to have $N$'s complete status report, and subsequently remove $N$ from the blacklist ({\bf \ref{routingRules4M}.164-165}).
\end{lemma}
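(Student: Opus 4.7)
The plan is to identify the transmission that caused $N$ to be blacklisted, use Theorem \ref{cases} to classify its failure, and then count status report parcels case by case by inspecting lines (\textbf{\ref{routingRules4M}.142-145}).

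First, since $(N,\mathtt{T})$ is on the sender's blacklist, Statement 1 of Lemma \ref{blacklistStuff} implies that transmission $\mathtt{T}$ failed. By Theorem \ref{cases}, any failed transmission falls under exactly one of F2, F3, or F4, and the sender knows which case obtains because this information is recorded in $DB$ at the end of the transmission in lines (\textbf{\ref{routingRules5M}.191}), (\textbf{\ref{routingRules5M}.194}), or (\textbf{\ref{routingRules5M}.197}). When the sender requests $N$'s status report, it therefore asks only for the parcels corresponding to the actual failure type, namely those listed in lines (\textbf{\ref{routingRules4M}.142-145}).

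Next, I would tabulate the parcels in each case. In case F3, $N$'s status report consists of the single value $(SIG[1],\mathtt{T})$ per edge (line \textbf{\ref{routingRules4M}.144}); in case F4, it consists of $(SIG[p],\mathtt{T})$ per edge (line \textbf{\ref{routingRules4M}.145}). Using the convention (established in the footnote to the proof of Theorem \ref{maxBadness}) that the information for the two directed edges $E(A,N)$ and $E(N,A)$ is combined into a single parcel, each of these cases yields one parcel per neighbor, hence $n-1$ parcels total. In case F2, $N$ must return $(SIG[2],SIG[3],\mathtt{T})$ per edge (line \textbf{\ref{routingRules4M}.142}), again $n-1$ parcels after combining directed edges, plus the additional self-parcel $(SIG_{N,N},\mathtt{T})$ from line (\textbf{\ref{routingRules4M}.143}), for a grand total of $n$ parcels.

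Taking the maximum over the three cases yields the bound of $n$ parcels, after which line (\textbf{\ref{routingRules4M}.164}) is triggered and $N$ is removed from the blacklist on line (\textbf{\ref{routingRules4M}.166}). There is no substantial obstacle here; the argument is essentially bookkeeping against the pseudo-code. The only mildly delicate point is ensuring that the sender actually requests (and $N$ actually returns, per the honest protocol) exactly the parcels enumerated above and nothing more, which follows from the fact that $\Omega_{\mathtt{T}}$ encodes the failure type in the SOT broadcast (lines \textbf{\ref{routingRules5M}.184}, \textbf{\ref{routingRules5M}.192}, \textbf{\ref{routingRules5M}.195}, \textbf{\ref{routingRules5M}.198}) and honest nodes add to $BB$ precisely the corresponding status report parcels in the branch of (\textbf{\ref{routingRules4M}.142-145}) selected by that failure type.
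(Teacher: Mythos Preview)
Your argument is correct and follows essentially the same route as the paper: the paper's proof is a one-line pointer back to the counting already done in Lemma \ref{mProperDomains} (specifically the discussion of Status Report Parcels in the Broadcast Buffer, items 5--6), and you have simply spelled that counting out case by case. One small correction: the convention that the two directed edges $E(A,N)$ and $E(N,A)$ are combined into a single status report parcel appears as a footnote in the proof of Lemma \ref{mProperDomains}, not Theorem \ref{maxBadness}.
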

\begin{proof} This was proven when discussing the appropriateness of variable domains in Lemma \ref{mProperDomains}.
\end{proof}

We set the following notation for the remainder of the section.  $\mathtt{T}$ will denote a transmission, $G_{\mathtt{T}}$ will denote the set of non-eliminated nodes at the start of $\mathtt{T}$, $\mathcal{P}_{\mathtt{T}}$ will denote the participating list for $\mathtt{T}$, and $\mathcal{H}_{\mathtt{T}}$ will denote the uncorrupted nodes in the network.  If the transmission is clear or unimportant, we suppress the subscripts for convenience, writing instead $G, \mathcal{P}$, and $\mathcal{H}$.
\begin{lemma}\label{sigBuffsInit} For any honest node $A \in G$ and any transmission $\mathtt{T}$, $A$ must receive the complete {\it Start of Transmission} (SOT) broadcast before it transfers or re-shuffles any packets.  Additionally, the signature buffers $SIG_{A,A}$ and $SIG^A$ of any honest node $A \in G$ are always cleared upon receipt of the complete {\it SOT} broadcast (and hence before any packets are transferred to/from/within $A$).
\end{lemma}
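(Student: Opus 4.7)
My plan is to establish both parts of Lemma \ref{sigBuffsInit} by a direct inspection of the pseudo-code, identifying exactly where packet transfers and re-shuffles can occur, and tracking the handful of lines that modify $SIG$ and $SIG_{N,N}$.

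For the first statement, I would observe that for any internal node $N \notin \{S,R\}$ a packet can be sent or received only through {\bf \em Send Packet} (\ref{routingRules2M}.57) or {\bf \em Receive Packet} (\ref{routingRules2M}.61), both of which are guarded by {\bf \em Okay to Send Packet} and {\bf \em Okay to Receive Packet}. The first conjuncts of those guards, (\ref{routingRulesM}.32) and (\ref{routingRulesM}.36), return \textsf{False} unless $N$ possesses $(\Omega_{\mathtt{T}},\mathtt{T}) \in BB$ \emph{and} has received the $|EN|$, $F$, and $|\mathcal{B}_{\mathtt{T}}|$ subsequent parcels of types 200b--200d, i.e.\ the complete {\it SOT} broadcast. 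Analogously, (\ref{routingRulesM}.23--24) gate {\bf \em Re-Shuffle} and {\bf \em Receiver Re-Shuffle} on having received the complete {\it SOT}. For $A=S$ the sender authors the {\it SOT} for $\mathtt{T}$ inside {\bf \em Prepare Start of Transmission Broadcast} ((\ref{routingRulesM}.29)) at the end of transmission $\mathtt{T}-1$, so $S$ trivially possesses the complete {\it SOT} before $\mathtt{T}$ begins.

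For the second statement I would enumerate all lines that write to $SIG$ or $SIG_{N,N}$. These are the additive updates attached to packet motion (\ref{routingRules2M}.48--49, \ref{routingRules2M}.74--75, together with (\ref{reShuffleRules}.76) and the \emph{Fill Gap} calls inside {\bf \em Reset Outgoing Variables} and {\bf \em Receive Packet}) and the four clears at (\ref{routingRules4M}.128), (\ref{routingRules4M}.133), (\ref{routingRules4M}.141), and (\ref{routingRules4M}.146). By the first statement, no additive update can fire at $A$ before the {\it SOT} for $\mathtt{T}$ is complete, so it suffices to show that a clear fires no later than the instant at which the last {\it SOT} parcel is accepted. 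I would case-split on $|\mathcal{B}_{\mathtt{T}}|$: when $|\mathcal{B}_{\mathtt{T}}| = 0$, the parcel $\Omega_{\mathtt{T}} = (*,0,*,*)$ satisfies the conditional on (\ref{routingRules4M}.128), so the clear fires upon acceptance of $\Omega_{\mathtt{T}}$, which is the first {\it SOT} parcel; when $|\mathcal{B}_{\mathtt{T}}| > 0$, the $|\mathcal{B}_{\mathtt{T}}|$-th parcel of the form $(\widehat{N}, \mathtt{T}', \mathtt{T})$ triggers the clear on (\ref{routingRules4M}.146), and the {\it SOT} cannot be complete until all $|\mathcal{B}_{\mathtt{T}}|$ such parcels have arrived. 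The sender is handled separately, using the clears at (\ref{routingRules5M}.171), (\ref{routingRules5M}.183), or (\ref{routingRules5M}.199) inside {\bf \em Prepare Start of Transmission Broadcast}, which execute before {\bf \em Distribute Packets} places any packets of $\mathtt{T}$ into $S$'s outgoing buffers.

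The main bookkeeping obstacle is to verify that $SIG$ remains empty from the triggering clear through the moment the {\it SOT} becomes complete. This is exactly where the first statement carries the argument: during this interval no packet-driven additive write can occur at $A$, and the only other events that can touch $SIG$ are the three remaining clear lines (\ref{routingRules4M}.133), (\ref{routingRules4M}.141), (\ref{routingRules4M}.146)/(\ref{routingRules4M}.128), each of which only re-clears. Combining the two parts then gives the parenthetical conclusion that $SIG_{A,A}$ and $SIG^A$ are empty the first time any codeword packet is transferred to, from, or within $A$ during transmission $\mathtt{T}$.
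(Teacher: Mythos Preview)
Your approach is essentially the paper's: gate packet sends and receives by {\bf \em Okay to Send/Receive Packet}, gate re-shuffles (you cleanly via (\ref{routingRulesM}.23--24); the paper more circuitously, by arguing buffer heights cannot have changed), and then case-split on $|\mathcal{B}_{\mathtt{T}}|$ to locate the clear at (\ref{routingRules4M}.128) or (\ref{routingRules4M}.146). Your separate treatment of the sender is also in line with what the paper does elsewhere.

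There is one small gap. Your blanket claim that ``by the first statement, no additive update can fire at $A$ before the SOT is complete'' does not directly cover the updates on (\ref{routingRules2M}.48--50), because {\bf \em Reset Outgoing Variables} is invoked unconditionally in every Stage~1 (line (\ref{routingRulesM}.09)) and is not itself guarded by {\bf \em Okay to Send Packet} or by the SOT conditions on (\ref{routingRulesM}.23--24). The paper closes this separately: $FR$ is reset to $\bot$ at {\bf \em End of Transmission Adjustments} ((\ref{routingRules5M}.207)) and the neighbor's incoming $RR$ to $-1$ ((\ref{routingRules5M}.209)); since $A$ cannot actually send a packet until the SOT is complete (line (\ref{routingRules2M}.59) fails), the neighbor never reaches (\ref{routingRules2M}.78) along $E(A,B)$, so the $RR$ that $A$ receives on (\ref{routingRulesM}.07) stays at $-1$, and the guard $\bot \neq FR \leq RR$ on (\ref{routingRules2M}.46) cannot pass. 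You should make this step explicit rather than absorbing it into ``additive updates attached to packet motion,'' since those three lines are precisely the ones your first statement does not gate.
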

\begin{proof} Fix an honest node $A \in G$ and a transmission $\mathtt{T}$.  If $A$ has not received the full {\it Start of Transmission} (SOT) broadcast for $\mathtt{T}$ yet, then $A$ will not transfer any packets ({\bf \ref{routingRules2M}.59}), ({\bf \ref{routingRulesM}.31-33}), ({\bf \ref{routingRules2M}.63}) and ({\bf \ref{routingRulesM}.35-37}).  This means that ({\bf \ref{routingRules2M}.63}) will always be satisfied, and hence ({\bf \ref{routingRules2M}.78}) can never be reached, and so $RR$ will remain equal to $-1$ (({\bf \ref{setupCode2M}.50}), and ({\bf \ref{routingRules5M}.209})) so long as no codeword packets have been transferred.  This in turn implies ({\bf \ref{routingRules2M}.46}) cannot be satisfied before any codeword packets have been transferred.  Putting these facts together, the signature buffers cannot change as on ({\bf \ref{routingRules2M}.48-50}), ({\bf \ref{routingRules2M}.74-75}), ({\bf \ref{routingRules2M}.80}), or ({\bf \ref{routingRules2M}.82}) before $A$ receives the complete {\it SOT} broadcast.  Also, no packets will be re-shuffled during the call to Re-Shuffle if no packets have moved during the Routing Phase, as the condition statement on ({\bf \ref{reShuffleRules}.74}) was eventually false in the last round of the previous transmission, and the state of the buffers will not have changed if no packets have been transferred in the current transmission.  Therefore, before $A$ has received the complete {\it SOT} broadcast, no packet movement to/from/within $A$ is possible, and changes to the signature buffers are confined to the ones appearing on lines ({\bf \ref{routingRules4M}.128}), ({\bf \ref{routingRules4M}.133}), ({\bf \ref{routingRules4M}.141}), and ({\bf \ref{routingRules4M}.146}), all of which clear the signature buffers.

Suppose now that $A$ has received the full {\it SOT} broadcast for $\mathtt{T}$.  Recall that part of the {\it SOT} broadcast contains $\Omega_{\mathtt{T}} = (|EN|, |BL|, F, *)$, where $EN$ refers to the eliminated nodes, $BL$ is the sender's current blacklist, $F$ is the number of failed transmissions since the last node was eliminated, and the last coordinate denotes the reason for failure of the previous transmission (in the case it failed) ({\bf \ref{routingRules5M}.200}).  If $|BL| =0$, then $A$ will clear all its entries of $SIG^A$ and $SIG_{A,A}$ on ({\bf \ref{routingRules4M}.128}).  Otherwise, $|BL| > 0$, and $N$ will clear all its entries of $SIG^A$ and $SIG_{A,A}$ when it learns the last blacklisted node on ({\bf \ref{routingRules4M}.146}).  Therefore, in all cases $A$'s signature buffers are cleared by the time it receives the full {\it SOT} broadcast, and in particular before it transfers any packets in transmission $\mathtt{T}$.
\end{proof}
In order to prove a variant of Lemma \ref{potentialDrop} in terms of the variables used in the (node-controlling $+$ edge-scheduling) adversary protocol, we will need to first re-state and prove variants of Lemmas \ref{item3}, \ref{packetTransferPotDrop}, and \ref{chainL}.  We begin with a variant of Lemma \ref{item3} (the first 5 Statements correspond directly with Lemma \ref{item3}, the others do not, but will be needed later):
\begin{lemma} \label{item3M} For any {\it honest} node $A \in G$ and at all times of any transmission:
	\begin{enumerate}
	\item For {\it incoming} edge $E(S,A)$, all changes to $SIG^A[3]_{S,A}$ are strictly non-negative.  In particular, at all times:
		\begin{equation} \label{eqM1}
		0 \leq SIG^A[3]_{S,A}
		\end{equation}
	\item For {\it outgoing} edge $E(A,R)$, all changes to $SIG^A[3]_{A,R}$ are strictly non-negative\footnote{$SIG^A[3]$ along outgoing edges measures the {\it decrease} in potential as a {\it positive} quantity.  Thus, a positive value for $SIG^A[3]$ along an outgoing edge corresponds to a decrease in non-duplicated potential.}.  In particular, at all times:
		\begin{equation} \label{eqM2}
		0 \leq SIG^A[3]_{A,R}
		\end{equation}
	\item For {\it outgoing} edges $E(A, B)$, $B \neq R$, all changes to the quantity $(SIG^A[3]_{A,B} - SIG^A[2]_{A,B})$ are strictly non-negative.  This remains true even if $B$ is corrupt.  In particular, at all times:
		\begin{equation} \label{eqM3}
		0 \thickspace \leq \hspace{-.4cm} \sum_{B \in \mathcal{P} \setminus \{A,S\}} \hspace{-.2cm} (SIG^A[3]_{A,B} - SIG^A[2]_{A,B})
		\end{equation}
	\item For {\it incoming} edges $E(B, A)$, $B \neq S$, all changes to the quantity $(SIG^A[2]_{B,A} - SIG^A[3]_{B,A})$ are strictly non-negative.  This remains true even if $B$ is corrupt.  In particular, at all times:
		\begin{equation} \label{eqM4}
		0 \thickspace \leq \hspace{-.4cm} \sum_{B \in \mathcal{P} \setminus \{A,S\}} \hspace{-.2cm} (SIG^A[2]_{B,A} - SIG^A[3]_{B,A})
		\end{equation}
	\item All changes to $SIG_{A,A}$ are strictly non-negative.  In particular, at all times:
		\begin{equation} \label{eqM5}
		0 \leq SIG_{A,A}
		\end{equation}
	\item The net {\it decrease} in potential at $A$ (due to transferring packets {\it out} of $A$ and re-shuffling packets within $A$'s buffers) in any transmission is bounded by $A$'s potential at the start of the transmission, plus $A$'s increase in potential caused by packets transferred {\it into} $A$.  In particular:
		\begin{equation} \label{eqM6}
		SIG_{A,A} \thickspace + \sum_{B \in \mathcal{P} \setminus A} SIG^A[3]_{A,B} \quad \leq \quad (4n^3-6n^2) + \sum_{B \in \mathcal{P} \setminus A} SIG^A[3]_{B,A}
		\end{equation}
	\item The number of packets transferred {\bf \em out} of $A$ in any transmission must be at least as much as the number of packets transferred {\bf \em into} $A$ during the transmission minus the capacity of $A$'s buffers.  In particular:
		\begin{equation} \label{eqM7}
		4n^2-8n \geq \sum_{B \in \mathcal{P} \setminus A} (SIG^A[1]_{B,A}-SIG^A[1]_{A,B})
		\end{equation}
	\item The number of times a packet $p$ corresponding to the current codeword has been transferred {\bf \em out} of $A$ during any transmission is bounded by the number of times that packet has been transferred {\bf \em into} $A$.  In particular\footnote{Notice that \eqref{eqM8} is the only statement of the Lemma that involves quantities in the {\it neighbors'} signature buffers (in addition to $A$'s buffers).  Since there is no assumption made about the honesty of the neighbor's of $A$, this may seem problematic.  However, we show in the proof that regardless of the honesty of $A$'s neighbors $B \in G$, \eqref{eqM8} will be satisfied if $A$ in honest.}:
		\begin{equation} \label{eqM8}
		0 \thickspace \geq \sum_{B \in \mathcal{P}} (SIG^B[p]_{A,B} - SIG^A[p]_{B,A})
		\end{equation}
	\end{enumerate}
\end{lemma}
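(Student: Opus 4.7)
The plan is to first invoke Lemma \ref{sigBuffsInit} to reduce every claim to tracking cumulative increments through transmission $\mathtt{T}$: since $A$ is honest, $SIG_{A,A}$ and every entry of $SIG^A$ are zero before the first packet moves in $\mathtt{T}$, so each quantity equals the sum of its explicit updates in the pseudo-code. For statements 1--5 I would then walk through each line that modifies the relevant buffer -- principally ({\bf \ref{routingRules2M}.48-49}), ({\bf \ref{routingRules2M}.74-75}), ({\bf \ref{routingRules2M}.50}), ({\bf \ref{routingRules2M}.80}), ({\bf \ref{routingRules2M}.82}), and ({\bf \ref{reShuffleRules}.76}) -- and verify that the increment is non-negative. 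Statements 1 and 2 are immediate because the increment equals $H_{GP} \geq 1$ (respectively $H_{FP} \geq 1$). Statement 5 follows because the trigger condition on ({\bf \ref{reShuffleRules}.74}) guarantees $M-m-1\geq 0$ whenever ({\bf \ref{reShuffleRules}.76}) fires, and the \emph{Fill Gap} updates contribute only non-negative potential drops.

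For statements 3 and 4 the key ingredient is Claim~\ref{packetHeight2}: when a single packet is transferred from one buffer to another its height cannot increase. Consequently the $SIG^A[3]$ increment on an outgoing (resp. incoming) edge is at least as large as the $SIG^A[2]$ increment across the same transfer. Because $A$ is honest it refuses, via the checks in \emph{Verify Signature One} ({\bf \ref{routingRules2M}.83-87}) and \emph{Verify Signature Two} ({\bf \ref{routingRules2M}.88-90}), to absorb any neighbor's signature that violates this monotonicity; by Lemma~\ref{similar} a rejected signature behaves identically to an edge failure and contributes nothing to the signature buffer. Statements 6 and 7 follow from the physical conservation laws at $A$. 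For statement 6, the left-hand side is the total accumulated potential outflow (outgoing-transfer drops plus re-shuffle drops) and the right-hand side is the initial potential at $A$ plus the total accumulated potential inflow; the inequality is equivalent to the observation that $A$'s current potential is non-negative, with the constant $4n^3-6n^2$ arising from the per-node capacity bound $2(n-2)\cdot n(2n+1)$ that parallels Claim~\ref{capacity1}. Statement 7 is the analogous packet-count conservation: the net signed flow is bounded by the $2(n-2)\cdot 2n = 4n^2-8n$ slots in $A$'s buffers.

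The hardest part will be statement 8, because it mixes signatures stored at possibly corrupt neighbors with signatures stored at $A$, and it concerns flow of a single packet rather than aggregate potential. The plan is to prove a strengthened version of Claim~\ref{packetProliferation3} for the (node-controlling $+$ edge-scheduling) setting: for honest $A$, each time $\sum_B SIG^B[p]_{A,B}$ increases by one there must be a matching prior increase in $\sum_B SIG^A[p]_{B,A}$. The main technical obstacle is reconciling the asymmetric update rule -- $A$ sends $SIG[p]+1$ on ({\bf \ref{routingRules2M}.60}) whereas $B$'s incoming counter is updated to that value on ({\bf \ref{routingRules2M}.74}) -- with the fact that $A$ may re-transmit the same flagged packet many times before receiving confirmation. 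The resolution is that, for honest $A$, a flagged copy of $p$ bound to edge $E(A,B)$ cannot be re-shuffled into any other outgoing buffer (Statement 11 of Lemma~\ref{pseudo}) and is deleted as soon as $A$ gets confirmation ({\bf \ref{routingRules2M}.50}), so each distinct successful delivery of $p$ from $A$ (which is what $SIG^B[p]_{A,B}$ counts) is preceded by a distinct reception of $p$ into $A$ in the current transmission. Aggregating over $B$ yields the inequality.
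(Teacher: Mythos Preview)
Your plan for Statements 1--5 matches the paper's proof closely: both initialize via Lemma~\ref{sigBuffsInit} and then walk through the lines on which the signature buffers change, checking the sign of each increment. (A minor note: for Statement~5 the paper appeals to Claim~\ref{packetHeight} rather than to the trigger condition on ({\bf \ref{reShuffleRules}.74}), but your observation that the trigger forces $M-m\ge 1$ is equally valid and arguably more direct.)

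For Statements 6 and 7 your approach differs from the paper's. You argue via a global conservation law at $A$: outflow $\le$ initial potential $+$ inflow, because current potential is non-negative. The paper instead decomposes the transmission into individual packet \emph{stays}, tracking each packet from the moment it enters $A$ (or is present at the start of $\mathtt{T}$) until it leaves $A$ (or remains at the end), and then summing the contribution of each stay to both sides of the inequality. Your conservation argument is conceptually cleaner but implicitly assumes that $SIG^A[3]_{A,B}$, $SIG^A[3]_{B,A}$, and $SIG_{A,A}$ together exactly account for every potential change at $A$; making this precise requires essentially the same care the paper takes (e.g., that the $H_{FP}$ recorded on ({\bf \ref{routingRules2M}.49}) really equals the height at which the packet is deleted on ({\bf \ref{routingRules2M}.50}), with any \emph{Fill Gap} residue landing in $SIG_{A,A}$). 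The paper's stay decomposition makes this bookkeeping explicit and has the advantage of reusing the same framework uniformly across Statements 6, 7, and 8.

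There is a genuine gap in your treatment of Statement~8. You correctly identify the $+1$ asymmetry between $SIG^B[p]_{A,B}$ and $SIG^A[p]_{A,B}$, and your claim that each successful delivery of $p$ from $A$ is preceded by a distinct reception into $A$ is the right physical intuition. But $SIG^B[p]_{A,B}$ is a value \emph{reported by $B$}, who may be corrupt, and your argument treats it as if it simply counted deliveries. You must argue two additional points: (i) a corrupt $B$ cannot produce a value of $SIG^B[p]_{A,B}$ larger than any value $A$ ever signed and sent on ({\bf \ref{routingRules2M}.60}), by unforgeability of $A$'s signature; and (ii) a corrupt $B$ reporting an outdated (smaller) signed value only helps \eqref{eqM8}. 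The paper handles both explicitly, then uses the stay decomposition to show $\sum_B (SIG^A[p]_{A,B}-SIG^A[p]_{B,A})=-x$, where $x$ is the number of stays of $p$ that have not yet exited $A$; it then argues that the total $+1$ surplus of $SIG^B[p]_{A,B}$ over $SIG^A[p]_{A,B}$ across all $B$ is at most $x$ (each surplus corresponds to a flagged copy still sitting in $A$), so the two effects cancel. Your resolution gestures at the flagged-packet mechanism but does not explain why the aggregate surplus is bounded by the number of retained copies, which is what actually closes the inequality.
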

\begin{proof} We prove each inequality separately, using an inductive type argument on a node $A$'s signature buffers.  First, note that all signature buffers are cleared at the outset of the protocol ({\bf \ref{setupCode2M}.46}), ({\bf \ref{setupCode2M}.48}), and ({\bf \ref{setupCode2M}.54}).  Also, anytime the signature buffers are cleared as on ({\bf \ref{routingRules4M}.128}), ({\bf \ref{routingRules4M}.133}), ({\bf \ref{routingRules4M}.141}), and ({\bf \ref{routingRules4M}.146}), then all of the statements (except possible Statement 8, which depends on values from potentially corrupt nodes $B \in G$) will be true.  So it remains to check the other places signature buffers can change values (({\bf \ref{routingRules2M}.48-50}), ({\bf \ref{routingRules2M}.74-75}), ({\bf \ref{routingRules2M}.80}), ({\bf \ref{routingRules2M}.82}), and ({\bf \ref{reShuffleRules}.76})), and argue inductively that all such changes will preserve the inequalities of Statements 1-7 (Statement 8 will be proven separately).  Since all of these lines represent packet movement, they can only be reached if $A$ has received the complete {\it SOT} broadcast for the current transmission (Lemma \ref{sigBuffsInit}), and so we may (and do) assume this is the case in each item below.  In particular, Lemma \ref{sigBuffsInit} states that because we are assuming $A$ has received the complete {\it SOT} broadcast for transmission $\mathtt{T}$, all of $A$'s signature buffers will be cleared before any changes are made to them.
	\begin{enumerate}
	\item Aside from being cleared, in which case \eqref{eqM1} is trivially true, the only changes made to $SIG^A[3]_{S,A}$ occur on ({\bf \ref{routingRules2M}.75}), where it is clear that all changes are non-negative since $H_{GP}$ is non-negative (Statement 9 of Lemma \ref{pseudo} together with Lemma \ref{similar}).

	\item Aside from being cleared, in which case \eqref{eqM2} is trivially true, the only changes made to $SIG^A[3]_{A,R}$ occur on ({\bf \ref{routingRules2M}.49}), where it is clear that all changes are non-negative since $H_{FP}$ is non-negative (Statement 9 of Lemma \ref{pseudo} together with Lemma \ref{similar}).

	\item Fix $B \in \mathcal{P} \setminus {S,A}$.  Intuitively, this inequality means that considering directed edge $E(A,B)$, the {\it decrease} in $A$'s potential caused by packet transfers must be greater than or equal to $B$'s {\it increase}, which is a consequence of Lemma \ref{item3}.  Formally, we will track all changes to the relevant values in the pseudo-code and argue that at all times and for any fixed $B \in G$ (honest or corrupt), if $A$ is honest, then $0 \leq SIG^A[3]_{A,B} - SIG^A[2]_{A,B}$.  All changes to these values (aside from being cleared) occur only on ({\bf \ref{routingRules2M}.48-49}) since here we are considering $A$'s values along {\it outgoing} edge $E(A,B)$.  Notice that $H_{FP}$ cannot change between ({\bf \ref{routingRulesM}.08}) of some round and ({\bf \ref{routingRules2M}.49}) of the same round.  Since lines ({\bf \ref{routingRules2M}.48-49}) are only reached if {\it Verify Signature Two} accepts the signature (otherwise $RR$ is set to $\bot$ on ({\bf \ref{routingRules2M}.90}) and hence ({\bf \ref{routingRules2M}.45}) will fail), we have that $SIG^A[2]_{A,B}$ changes by at most the value that $H_{FP}$ had on ({\bf \ref{routingRules2M}.89}) (see comments on line ({\bf \ref{routingRules2M}.88-90})), and this is the value sent/received on lines ({\bf \ref{routingRulesM}.07}) and ({\bf \ref{routingRulesM}.11}) and eventually stored on ({\bf \ref{routingRules2M}.48}).  Meanwhile, when $SIG^A[3]_{A,B}$ changes, for honest nodes it will always be an increase of $H_{FP}$ ({\bf \ref{routingRules2M}.49}), and as noted above, this value of $H_{FP}$ is the same as it had on ({\bf \ref{routingRules2M}.89}).  Therefore, for honest nodes, whenever the relevant values change on ({\bf \ref{routingRules2M}.48-49}), the change will respect the inequality $SIG^A[3]_{A,B} - SIG^A[2]_{A,B} \geq H_{FP} - H_{FP} = 0$.
	
	\item Fix $B \in \mathcal{P} \setminus {S,A}$.  Intuitively, this inequality means that considering directed edge $E(B,A)$, the {\it decrease} in $B$'s potential caused by packet transfers must be greater than or equal to $A$'s {\it increase}, which is a consequence of Lemma \ref{item3}.  Formally, we will track all changes to the relevant values in the pseudo-code and argue that at all times and for any fixed $B \in G$ (honest or corrupt), if $A$ is honest, then $0 \leq SIG^A[2]_{B,A} - SIG^A[3]_{B,A}$.  All changes to these values (aside from being cleared) occur only on ({\bf \ref{routingRules2M}.74-75}) since here we are considering $A$'s values along {\it incoming} edge $E(B,A)$.  When $SIG^A[2]_{B,A}$ changes on ({\bf \ref{routingRules2M}.74}), they take on the values sent by $B$ on ({\bf \ref{routingRules2M}.60}) and received by $A$ on ({\bf \ref{routingRules2M}.62}).  However, in order to reach ({\bf \ref{routingRules2M}.74}), the call to {\bf \em Verify Signature One} on ({\bf \ref{routingRules2M}.69}) must have returned true.  In particular, the comments on ({\bf \ref{routingRules2M}.84-86}) require that $A$ verify that the change in $SIG^A[2]_{B,A}$ that $B$ sent to $A$ is at least $H_{GP}$ {\it bigger} than the previous value $A$ had from $B$.  Meanwhile, when $SIG^A[3]_{B,A}$ changes, for honest nodes it will always be an increase of $H_{GP}$ ({\bf \ref{routingRules2M}.75}).  Therefore, since $H_{GP}$ cannot change between ({\bf \ref{routingRules2M}.84}) of some round and ({\bf \ref{routingRules2M}.75}) later in the same round for honest nodes, whenever the relevant values change on ({\bf \ref{routingRules2M}.74-75}), the change will respect the inequality $SIG^A[2]_{B,A} - SIG^A[3]_{B,A} \geq H_{GP} - H_{GP} = 0$.

	\item Intuitively, this inequality says that all changes in potential due to packet re-shuffling should be strictly non-positive ($SIG_{A,A}$ measures potential {\it drop} as a {\it positive} quantity), which is a consequence of Lemma \ref{item3}.  Formally, all changes made to $SIG_{A,A}$ (aside from being cleared) occur on ({\bf \ref{reShuffleRules}.76}), where the change is $M+m-1$.  The fact that this quantity is strictly non-negative for honest nodes follows from Claim \ref{packetHeight}.

	\item Since the inequality concerns $SIG_{A,A}$ and $SIG[3]$ (along both incoming and outgoing edges), we will focus on changes to these values when a packet is transferred (or re-shuffled).  More specifically, we will look at a specific packet $p$ and consider $p$'s affect on $A$'s potential during each of $p$'s stays in $A$, where a ``stay'' refers to the time $A$ receives (an instance of) $p$ as on ({\bf \ref{routingRules2M}.77}) to the time it sends and gets confirmation of receipt (as in Definition \ref{confRec}) for (that instance of) $p$\footnote{A given packet $p$ may have multiple stays in $A$ during a single transmission, one for each time $A$ sees $p$.}.  We fix $p$ and distinguish between the four possible ways $p$ can ``stay'' in $A$:
		\begin{enumerate}
		\item \textsf{The stay is initiated by $A$ receiving $p$ during $\mathtt{T}$ and then sending $p$ at some later round of $\mathtt{T}$, and getting confirmation of $p$'s receipt as in Definition \ref{confRec}}.  More specifically, the stay includes an increase to some incoming signature buffer $SIG^A[3]$ as on ({\bf \ref{routingRules2M}.75}) and then an increase to some outgoing signature buffer $SIG^A[3]$ as on ({\bf \ref{routingRules2M}.49}).  Let $B$ denote the edge along which $A$ received $p$ in this stay, and $B'$ denote the edge along which $A$ sent $p$.  Then $SIG^A[3]_{B,A}$ will increase by $H_{GP}$ on ({\bf \ref{routingRules2M}.75}) when $p$ is accepted.  Let $M$ denote the value of $H_{GP}$ when $p$ is received.  The packet $p$ is eventually re-shuffled to the outgoing buffer along $E(A, B')$.  Let $m$ denote the value of $H_{FP}$ when ({\bf \ref{routingRules2M}.49}) is reached, so that the change to $SIG^A[3]_{A,B'}$ due to sending $p$ is $m$.  By Statement 3 of Claim \ref{obsBelowHere} (which remains valid by Lemma \ref{similar}), any packet that is eventually deleted as on ({\bf \ref{routingRules2M}.50-51}) will be the flagged packet, and so the packet that is deleted did actually have height $m$ in $A$'s outgoing buffer.  In particular, the packet began its stay in an incoming buffer at height $M$, and was eventually deleted when it had height $m$ in some outgoing buffer.  In particular, since $SIG_{A,A}$ accurately tracks changes in potential due to re-shuffling (Statement 1 of Lemma \ref{sigBuffersAreCorrect}), we have that during this stay of $p$, $SIG_{A,A}$ changed by $M-m$.  Therefore, considering only $p$'s affect on the following terms, we have that:
			\begin{equation} \label{firstE}
			SIG_{A,A} + SIG^A[3]_{A,B'} - SIG^A[3]_{B,A} = (M-m) + m - M = 0
			\end{equation}

		\item \textsf{The stay begins at the outset of the protocol, i.e.\ $p$ started the transmission in one of $A$'s buffers, and the stay ends when $p$ is deleted (after having been sent across an edge) in some round of $\mathtt{T}$}.  More specifically, there is no incoming signature buffer $SIG^A[3]$ that changes value as on ({\bf \ref{routingRules2M}.75}) due to this stay of $p$, but there is an increase to some outgoing signature buffer $SIG^A[3]$ as on ({\bf \ref{routingRules2M}.49}).  Using the notation from (a) above with the exception that $M$ denotes the initial height of $p$ in one of $A$'s buffers at the start of $\mathtt{T}$, then considering only $p$'s affect on the following terms, we have that:
			\begin{equation} \label{secE}
			SIG_{A,A} + SIG^A[3]_{A,B'} = (M-m) + m = M
			\end{equation}
				
		\item \textsf{The stay is initiated by $A$ receiving $p$ during $\mathtt{T}$, but $p$ then remains in $A$ through the end of the transmission (either as a normal or a flagged packet)}.  More specifically, the stay includes an increase to some incoming signature buffer $SIG^A[3]$ as on ({\bf \ref{routingRules2M}.75}), but there is no outgoing signature buffer $SIG^A[3]$ that changes value as on ({\bf \ref{routingRules2M}.49}) due to this stay of $p$.  Using the notation from (a) above with the exception that $m$ denotes the final height of $p$ in one of $A$'s buffers at the end of $\mathtt{T}$, then considering only $p$'s affect on the following terms, we have that:
			\begin{equation}
			SIG_{A,A} - SIG^A[3]_{B,A} = (M-m) - M = -m \leq 0
			\end{equation}

		\item \textsf{The stay begins at the outset of the protocol, i.e.\ $p$ started the transmission in one of $A$'s buffers, and $p$ remains in $A$'s buffers through the end of the transmission (either as a normal or a flagged packet)}.  More specifically, there is no incoming signature buffer $SIG^A[3]$ that changes value as on ({\bf \ref{routingRules2M}.75}) due to this stay of $p$, and there is no outgoing signature buffer $SIG^A[3]$ that changes value as on ({\bf \ref{routingRules2M}.49}) due to this stay of $p$.  Letting $M$ denote the initial height of $p$ in one of $A$'s buffers at the start of $\mathtt{T}$ and $m$ the final height of $p$ in one of $A$'s buffers at the end of $\mathtt{T}$, then considering only $p$'s affect on the following terms, we have that:
			\begin{equation} \label{lastE}
			SIG_{A,A} = M-m \leq M
			\end{equation}
		\end{enumerate}
	We note that the above four cases cover all possibilities by Claim \ref{packetProliferation3} (which remains valid since $A$ is honest, and Lemma \ref{similar}).  We will now bound $SIG_{A,A} + \sum_{B \in \mathcal{P} \setminus A} SIG^A[3]_{A,B} - SIG^A[3]_{B,A}$ by adding all contributions to $SIG_{A,A}$ and $SIG^A[3]_{A,B'}$ and $SIG^A[3]_{B,A}$ from all stays of all packets and for all adjacent nodes $B,B'$.  Notice that ignoring contributions as in Case (c) will only help our desired equality, and contributions as in Case (a) are zero, so we consider only packet stays as in \eqref{secE} and \eqref{lastE}.  Since these contributions to potential correspond to the initial height the packet had in one of $A$'s buffers at the outset of $\mathtt{T}$, the sum over all such contributions cannot exceed $A$'s potential at the outset of $\mathtt{T}$, which for an honest node $A$ is bounded by $2(n-2)2n(2n+1)/2 < 4n^3-6n^2$ (see e.g.\ proof of Claim \ref{capacity1}).

	\item Intuitively, this inequality means that because a node can hold at most $2(n-2)(2n)$ packets at any time, the difference between the number of packets received and the number of packets sent by an honest node will be bounded by $4n^2-8n$.  Formally, during a transmission $\mathtt{T}$, the only places the quantities $SIG[1]$ change are on ({\bf \ref{routingRules2M}.74}) and ({\bf \ref{routingRules2M}.48}).  As with the proof of Statement 6 above, we consider the contribution of each packet $p$'s stay in $A$\footnote{Note that necessarily $p$ is a packet corresponding to the current codeword, since packets corresponding to old codewords do not increment $SIG[1]$, see comments on ({\bf \ref{routingRulesM}.59-60}) and ({\bf \ref{routingRulesM}.11}).  Therefore, there are only two cases to consider.}:
		\begin{enumerate}
		\item \textsf{The stay is initiated by $A$ receiving $p$ during $\mathtt{T}$ and then sending $p$ at some later round of $\mathtt{T}$, and getting confirmation of $p$'s receipt as in Definition \ref{confRec}}.  More specifically, the stay includes an increase to some incoming signature buffer $SIG^A[1]$ as on ({\bf \ref{routingRules2M}.74}) and then an increase to some outgoing signature buffer $SIG^A[1]$ as on ({\bf \ref{routingRules2M}.48}).  Let $B$ denote the edge along which $A$ received $p$ in this stay, and $B'$ denote the edge along which $A$ sent $p$.  Since $A$ will be verifying that $B$ (respectively $B'$) signed the correct values (see comments on ({\bf \ref{routingRules2M}.84-86}) and  ({\bf \ref{routingRules2M}.88-90})), we have that $SIG^A[1]_{B,A}$ will increase by 1 on ({\bf \ref{routingRules2M}.74}) due to receiving $p$ for the first time, and $SIG^A[1]_{A,B'}$ will increase by 1 when it receives confirmation of receipt for sending $p$ as on ({\bf \ref{routingRules2M}.48}).  Therefore, considering only $p$'s affect on the following terms, we have that:
			\begin{equation} \label{firstE2}
			SIG^A[1]_{B,A} - SIG^A[1]_{A,B'} = 1-1= 0
			\end{equation}
				
		\item \textsf{The stay is initiated by $A$ receiving $p$ during $\mathtt{T}$, but $p$ then remains in $A$ through the end of the transmission (either as a normal or a flagged packet)}.  More specifically, the stay includes an increase to some incoming signature buffer $SIG^A[1]$ as on ({\bf \ref{routingRules2M}.74}), but there is no outgoing signature buffer $SIG^A[1]$ that changes value as on ({\bf \ref{routingRules2M}.48}) due to this stay of $p$.  Using the notation from (a) above, then considering only $p$'s affect on the following terms, we have that:
			\begin{equation} \label{thirdE2}
			SIG^A[1]_{B,A} = 1
			\end{equation}
		\end{enumerate}
	We note that the above two cases cover all possibilities by Claim \ref{packetProliferation3} (which remains valid since $A$ is honest, see Lemma \ref{similar}).  We now add all contributions to $SIG^A[1]_{A,B'}$ and $SIG^A[1]_{B,A}$ from all stays of all packets from all neighbors.  Notice that the only non-zero contributions come from packets stays as in \eqref{thirdE2}, and these contributions will correspond to packets that are still in $A$'s buffers at the end of the transmission.  Since an honest node $A$ can end the transmission with at most $2(n-2)(2n)$ packets, summing over all such contributions results cannot exceed $4n^2-8n$, as required.

	\item Intuitively, this is saying that an honest node cannot output a packet more times than it inputs the packet (see Claim \ref{packetProliferation3}).  Note that this is the only place in the theorem that depends on status reports {\it not} originating from $A$ ($SIG^B[p]$ is a status report parcel from $B$).  A priori, there is the danger that a corrupt $B$ can return a faulty status report, thereby framing $A$.  However, because $SIG^B[p]_{A,B}$ includes a valid signature from $A$, the inforgibility of the signature scheme guarantees that the only way a corrupt node $B$ can frame $A$ in this manner is by reporting {\it out-dated} signatures.  But if $A$ is honest, then $SIG^B[p]_{A,B}$ is strictly increasing in value as the transmission progresses (the only place it changes is ({\bf \ref{routingRules2M}.74}), which comes from the value received on ({\bf \ref{routingRules2M}.62}), corresponding to the value sent on ({\bf \ref{routingRules2M}.60})), and hence a corrupt $B$ cannot ``frame'' $A$ by reporting outdated signatures for $SIG^B[p]_{A,B}$; indeed such a course of action only helps the inequality stated in the theorem.  Also notice that (other than out-dated signatures) the only place $B$ gets valid signatures from $A$ is on ({\bf \ref{routingRules2M}.62}), and this value is one higher than the value that $A$ itself is recording ({\bf \ref{routingRules2M}.60}) until $A$ updates $SIG^A[p]_{A,B}$ on ({\bf \ref{routingRules2M}.48}).  We argue in case (b) below, that whenever $B$ has received an updated $SIG^B[p]_{A,B}$ as on ({\bf \ref{routingRules2M}.74}) but $A$ has not yet updated $SIG^A[p]_{A,B}$ as on ({\bf \ref{routingRules2M}.48}) (and so these two values differ by one), then Case (b) will contribute -1 to the sum in \eqref{eqM8}, and therefore the difference of +1 between $SIG^B[p]_{A,B}$ and $SIG^A[p]_{A,B}$ will exactly cancel.  These two facts allow us to argue \eqref{eqM8} by using $SIG^A[p]_{A,B}$ instead of $SIG^B[p]_{A,B}$.\vspace{.2cm}

	Formally, during a transmission $\mathtt{T}$, the only places the quantities $SIG[p]$ change are on ({\bf \ref{routingRules2M}.74}) and ({\bf \ref{routingRules2M}.48}).  As with the proof of Statement 3 above, we consider the contribution of each packet $p$'s stay in $A$\footnote{Note that necessarily $p$ is a packet corresponding to the current codeword, since packets corresponding to old codewords do not increment $SIG[p]$, see comments on ({\bf \ref{routingRules2M}.59-60}) and ({\bf \ref{routingRulesM}.11}).  Therefore, there are only two cases to consider.}:
		\begin{enumerate}
		\item \textsf{The stay is initiated by $A$ receiving $p$ during $\mathtt{T}$ and then sending $p$ at some later round of $\mathtt{T}$, and getting confirmation of $p$'s receipt as in Definition \ref{confRec}}.  More specifically, the stay includes an increase to some incoming signature buffer $SIG^A[p]$ as on ({\bf \ref{routingRules2M}.74}) and then an increase to some outgoing signature buffer $SIG^A[p]$ as on ({\bf \ref{routingRules2M}.48}).  Let $B$ denote the edge along which $A$ received $p$ in this stay, and $B'$ denote the edge along which $A$ sent $p$.  Since $A$ will be verifying that $B$ (respectively $B'$) signed the correct values (see comments on ({\bf \ref{routingRules2M}.84-86}) and ({\bf \ref{routingRules2M}.88-90})), we have that $SIG^A[p]_{B,A}$ will increase by 1 on ({\bf \ref{routingRules2M}.74}) due to receiving $p$ for the first time, and $SIG^A[p]_{A,B'}$ will increase by 1 when it receives confirmation of receipt for sending $p$ as on ({\bf \ref{routingRules2M}.48}).  Therefore, considering only $p$'s affect on the following terms, we have that:
			\begin{equation} \label{firstE22}
			SIG^A[p]_{A,B'} - SIG^A[p]_{B,A} = 1-1= 0
			\end{equation}
				
		\item \textsf{The stay is initiated by $A$ receiving $p$ during $\mathtt{T}$, but $p$ then remains in $A$ through the end of the transmission (either as a normal or a flagged packet)}.  More specifically, the stay includes an increase to some incoming signature buffer $SIG^A[p]$ as on ({\bf \ref{routingRules2M}.74}), but there is no outgoing signature buffer $SIG^A[p]$ that changes value as on ({\bf \ref{routingRules2M}.48}) due to this stay of $p$.  Using the notation from (a) above, then considering only $p$'s affect on the following terms, we have that:
			\begin{equation} \label{thirdE22}
			-SIG^A[1]_{B,A} = -1
			\end{equation}
		\end{enumerate}
	We note that the above two cases cover all possibilities by Claim \ref{packetProliferation3} (which remains valid since $A$ is honest, see Lemma \ref{similar}).  We now add all contributions to $SIG^A[p]_{A,B'}$ and $SIG^A[p]_{B,A}$ from all stays of $p$ from all neighbors on $\mathcal{P}$ (note that it is enough to consider only neighbors on $\mathcal{P}$ by Claim \ref{honestP}).  Notice that \eqref{firstE22} does not contribute anything, so we have that:
		\begin{equation}
		\sum_{B \in \mathcal{P}} (SIG^A[p]_{A,B} - SIG^A[p]_{B,A}) = -x,
		\end{equation}
	where $x$ is the number of times Case (b) occurs.  Notice that \eqref{eqM8} is interested in $SIG^B[p]_{A,B}$ (as opposed to $SIG^A[p]_{A,B}$).  However, since $B$ cannot report values of $SIG^B[p]_{A,B}$ from previous transmissions\footnote{We are only interested in packets $p$ corresponding to the current codeword, and all signatures that $A$ provides for $SIG^B[p]_{A,B}$ include the transmission index, so $A$'s honesty plus the inforgibility of the signature scheme imply that $B$ cannot have any valid signatures from $A$ contributing to $SIG^B[p]_{A,B}$ before the current transmission $\mathtt{T}$.}, the only inaccurate value that $B$ can report in its status report parcel concerning $SIG^B[p]_{A,B}$ is by using an older value from $\mathtt{T}$.  As discussed above, cheating in this manner only serves to help \eqref{eqM8}.  On the other hand, if $B$ does report the valid value for $SIG^B[p]_{A,B}$ (i.e.\ not outdated), then Lemma \ref{sigRelationships} guarantees that $SIG^B[p]_{A,B} -SIG^A[p]_{A,B} \leq 1$, with equality if $SIG^B[p]_{A,B}$ has been updated as on ({\bf \ref{routingRules2M}.74}) and $SIG^A[p]_{A,B}$ has not yet been updated after this point as on ({\bf \ref{routingRules2M}.48}).  Notice that every time this happens, we fall under Case (b) above, and in particular it can happen at most $x$ times (see definition of $x$ above).  Therefore: 
		\begin{equation}
		\sum_{B \in \mathcal{P}} (SIG^B[p]_{A,B} - SIG^A[p]_{B,A}) \leq x + \sum_{B \in \mathcal{P}} (SIG^A[p]_{A,B} - SIG^A[p]_{B,A}) = x-x = 0,
		\end{equation}
	which is \eqref{eqM8}.
	\end{enumerate}
All Statements of the Theorem have now been proven.
\end{proof}
%
We now prove a variant of Lemma \ref{packetTransferPotDrop}.
\begin{lemma} \label{packetTransferPotDropM} Suppose that $A,B \in G$ are both {\it honest} nodes, and that in round $\mathtt{t}$, B {\it accepts} (as in Definition \ref{receive}) a packet from $A$.  Let $O_{A,B}$ denote $A$'s outgoing buffer along $E(A,B)$, and let $H$ denote the height the packet had in $O_{A,B}$ when {\bf \em Send Packet} was called in round $\mathtt{t}$ ({\bf \ref{routingRulesM}.20}).  Also let $I_{B,A}$ denote $B$'s incoming buffer along $E(A,B)$, and let $I$ denote the height of $I_{B,A}$ at the start of $\mathtt{t}$.  Let $\Delta \varphi_B$ denote the {\bf \em change} in potential caused by this packet transfer, from $B$'s perspective.  More specifically, define:
	\begin{equation}
	\varphi_B := SIG^B[2]_{A,B} -SIG^B[3]_{A,B}
	\end{equation}
and then $\Delta \varphi_B$ measures the difference between the value of $\varphi_B$ at the {\bf \em end} of $\mathtt{t}$ and the {\bf \em start} of $\mathtt{t}$.  Then:
	\begin{equation}
	\Delta \varphi_B \geq H -I -1 \qquad OR \qquad \Delta \varphi_B \geq H \thickspace \mbox{ (if $B=R$)}
	\end{equation}
Furthermore, after the packet transfer but before re-shuffling, $I_{B,A}$ will have height $I + 1$.
\end{lemma}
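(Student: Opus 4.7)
The plan is to adapt the proof of Lemma \ref{packetTransferPotDrop} to the new protocol, using the fact (guaranteed by Lemma \ref{similar}) that honest nodes' routing mechanics are identical to those of the edge-scheduling protocol, with signatures providing authenticated bookkeeping on top. I first trace exactly when $SIG^B[2]_{A,B}$ and $SIG^B[3]_{A,B}$ are updated during round $\mathtt{t}$ by examining the pseudo-code. Since $B$ accepts the packet in $\mathtt{t}$, line ({\bf \ref{routingRules2M}.77}) is executed, which requires reaching ({\bf \ref{routingRules2M}.74-75}). On ({\bf \ref{routingRules2M}.74}), $SIG^B[2]_{A,B}$ is overwritten with the value $A$ sent on ({\bf \ref{routingRules2M}.60}), namely $SIG^A[3]_{A,B} + H_{FP}$, so the change during $\mathtt{t}$ equals $H_{FP}$. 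On ({\bf \ref{routingRules2M}.75}), $B$ increases $SIG^B[3]_{A,B}$ by its own local $H_{GP}$.

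Next I argue that the increase to $SIG^B[2]_{A,B}$ equals $H$. Because $A$ is honest, Statement 3 of Claim \ref{obsBelowHere} (still valid via Lemma \ref{similar}) guarantees that the flagged packet $A$ transmits in round $\mathtt{t}$ is exactly the one counted by $H_{FP}$, and $H_{FP}$ is stable between the call to \emph{Send Packet} and the signing on ({\bf \ref{routingRules2M}.60}); thus $H_{FP} = H$. I then bound the increase to $SIG^B[3]_{A,B}$: by Statements 1, 2 and 9 of Lemma \ref{pseudo} (applicable through Lemma \ref{similar}), the value of $H_{GP}$ on ({\bf \ref{routingRules2M}.75}) satisfies $H_{GP} \leq H_{IN} + 1$, and since $H_{IN}$ cannot change between the start of $\mathtt{t}$ and ({\bf \ref{routingRules2M}.75}) (the only modification sites are ({\bf \ref{routingRules2M}.78}) and the Re-Shuffle phase, both of which come later), we have $H_{IN} = I$ at that moment. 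Combining:
\begin{equation}
\Delta \varphi_B \;=\; H \;-\; H_{GP} \;\geq\; H - (I+1) \;=\; H - I - 1.
\end{equation}
The special case $B = R$ is immediate from the comment on ({\bf \ref{routingRules2M}.75}) instructing the receiver to skip the $SIG[3]$ update, so $\Delta \varphi_R = H - 0 = H$.

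Finally, the claim about the height of $I_{B,A}$ after the transfer but before re-shuffling follows directly from ({\bf \ref{routingRules2M}.77-78}): one packet is stored into slot $H_{GP}$ and $H$ is incremented by one, taking $I_{B,A}$'s height from $I$ to $I+1$, and no further changes occur before the Re-Shuffle phase.

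I do not expect any serious obstacle here: honesty of both endpoints makes all signed quantities truthful, so the argument is essentially a bookkeeping translation of Lemma \ref{packetTransferPotDrop}. The only mildly delicate step is verifying that the increment to $SIG^B[2]_{A,B}$ in round $\mathtt{t}$ is exactly $H_{FP}$ (rather than some stale cumulative offset), which requires noting that $A$ has not updated its own $SIG^A[3]_{A,B}$ since its most recent Stage~1 transmission to $B$ (the update on ({\bf \ref{routingRules2M}.49}) would require prior confirmation of receipt, which has not yet occurred for this packet); this is the one place where the honesty of $A$ and the structure of the \emph{Reset Outgoing Variables} routine must be invoked explicitly.
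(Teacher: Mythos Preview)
Your proposal is correct and follows essentially the same approach as the paper's proof. The one difference worth noting is the handling of the step you flag as delicate: the paper dispatches the equality ``$SIG^B[2]_{A,B}$ at the start of $\mathtt{t}$ equals $SIG^A[3]_{A,B}$ at the time of ({\bf \ref{routingRules2M}.60})'' by a direct citation of Lemma~\ref{sigRelationships} (Statement~2), rather than by the informal argument you sketch. Your sketch is in the right spirit but slightly misphrased (the relevant transmission of $SIG^A[3]+H_{FP}$ is in Stage~2 on line ({\bf \ref{routingRules2M}.60}), not Stage~1), and what is actually needed is the synchronization guarantee that $B$'s stored $SIG^B[2]$ and $A$'s current $SIG^A[3]$ agree whenever we are outside the window between $B$'s acceptance and $A$'s confirmation of receipt --- which is precisely the content of Lemma~\ref{sigRelationships}. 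Citing that lemma would make your argument airtight.
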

\begin{proof} By definition, $B$ accepts the packet in round $\mathtt{t}$ means that ({\bf \ref{routingRules2M}.77}) was reached in round $\mathtt{t}$, and hence so was ({\bf \ref{routingRules2M}.74-75}).  In particular, $SIG^B[3]_{A,B}$ will increase by $H_{GP}$ on ({\bf \ref{routingRules2M}.75}) (if $B=R$, then $SIG^B[3]_{A,B}$ will not change on this line- see comment there).  By Statements 1 and 2 of Lemma \ref{pseudo} (which remain valid since $B$ is honest by Lemma \ref{similar}), $H_{GP} \leq I + 1$, and hence $SIG^B[3]_{A,B}$ will increase by at most $I+1$.  Also, since $B$ had height $I$ at the start of the round, and $B$ accepts a packet on ({\bf \ref{routingRules2M}.77}) of round $\mathtt{t}$, $B$ will have $I+1$ packets in $I$ when the re-shuffling phase of round $\mathtt{t}$ begins, which is the second statement of the lemma.

Meanwhile, $SIG^B[2]_{A,B}$ will change on ({\bf \ref{routingRules2M}.74}) to whatever value $B$ received on ({\bf \ref{routingRules2M}.62}) (as sent by $A$ on ({\bf \ref{routingRules2M}.60}) earlier in the round).  Since $A$ is honest, this value is $H_{FP}$ larger than $A$'s current value in $SIG^A[3]_{A,B}$ ({\bf \ref{routingRules2M}.60}).  By Lemma \ref{sigRelationships}, the value of $SIG^A[3]_{A,B}$ at the start of $\mathtt{t}$ equals the value of $SIG^B[2]_{A,B}$ at the {\it start} ({\it before} $B$ has accepted the packet) of $\mathtt{t}$.  Therefore, the change in $SIG^B[2]_{A,B}$ from the start of the round to the end of the round will be the value of $H_{FP}=H$ when $A$ reached ({\bf \ref{routingRules2M}.60}) in round $\mathtt{t}$ (by definition of $H$ and Statement 3 of Claim \ref{obsBelowHere}).  Since these are the only places $SIG^B[3]_{A,B}$ and $SIG^B[2]_{A,B}$ change, we have that $\Delta \varphi_B = H - H_{GP} \geq H -I -1$, as desired (if $B=R$, then $\Delta \varphi_B = H$).
\end{proof}
The following is a variant of Lemma \ref{chainL}.
\begin{lemma} \label{chainLM} Let $\mathcal{C} = N_1 N_2 \dots N_l$ be a path consisting of $l$ {\bf \em honest} nodes, such that $R = N_l$ and $S \notin \mathcal{C}$.  Suppose that in some {\bf \em non-wasted} round $\mathtt{t}$, all edges $E(N_i,N_{i+1})$, $1 \leq i <l$ are {\it active} for the entire round.  For $1 \leq i < l$, let $\Delta \phi$ denote the following changes to $SIG_{N_i, N_i}$ and $SIG^{N_i}$ during round $\mathtt{t}$:
	\begin{enumerate}\setlength{\itemsep}{1pt} \setlength{\parskip}{0pt} \setlength{\parsep}{0pt}
	\item Changes to $\varphi_{N_i}$ (see notation of Lemma \ref{packetTransferPotDropM}),
	\item Changes to $SIG_{N_i, N_i}$
	\end{enumerate}
Then if $O_{N_1, N_2}$ denotes $N_1$'s outgoing buffer along $E(N_1, N_2)$, we have:
	\begin{enumerate}\setlength{\itemsep}{1pt} \setlength{\parskip}{0pt} \setlength{\parsep}{0pt}
	\item[-] If $O_{N_1, N_2}$ has a flagged packet that has already been accepted by $N_2$ {\bf \em before} round $\mathtt{t}$, then:
		\begin{equation} \label{firstCM}
		\Delta \phi \geq O - l+1
		\end{equation}
	\item[-] Otherwise,
		\begin{equation}\label{secondCM}
		\Delta \phi \geq O - l+2
		\end{equation}
	\end{enumerate}
where $O$ denotes its height at the outset of $\mathtt{t}$.
\end{lemma}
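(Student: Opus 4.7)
\textbf{Proof plan for Lemma \ref{chainLM}.} My plan is to reduce this lemma to its edge-scheduling analogue, Lemma \ref{chainL}, by exhibiting the signed quantities $\varphi_{N_i}$ and $SIG_{N_i,N_i}$ as faithful (indeed slightly pessimistic) positive-sign trackers of the non-duplicated potential drops considered there. The first observation is that since round $\mathtt{t}$ is non-wasted, by Definition \ref{wasted} both \textbf{\em Okay To Send Packet} and \textbf{\em Okay To Receive Packet} return \textsf{True} on every edge $E(N_i,N_{i+1})$ of the path, and since each $N_i$ is honest the signature checks in \textbf{\em Verify Signature One/Two} between any two $N_i,N_{i+1}$ succeed (last sentence of Lemma \ref{similar}). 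Combined with Lemma \ref{similar} this means that, restricted to the path $\mathcal{C}$, packet transfer and re-shuffling in round $\mathtt{t}$ behave exactly as in the edge-scheduling protocol of Section \ref{nAP}, so the state-evolution of the buffers along $\mathcal{C}$ in round $\mathtt{t}$ is indistinguishable from the evolution analyzed in Lemma \ref{chainL}.

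Next I would identify the signed quantities with the potential analyzed in Lemma \ref{chainL}. For each $1 \leq i < l$, by Lemma \ref{packetTransferPotDropM} any packet accepted by $N_{i+1}$ from $N_i$ in round $\mathtt{t}$ contributes
\[
\Delta\varphi_{N_{i+1}} \;\geq\; H - I - 1 \qquad\text{(or } \Delta\varphi_{N_{i+1}} \geq H \text{ if } N_{i+1} = R),
\]
where $H$ is the packet's height in $N_i$'s outgoing buffer at the moment of \textbf{\em Send Packet} and $I$ is $N_{i+1}$'s incoming-buffer height at the start of $\mathtt{t}$; by Lemma \ref{packetTransferPotDrop} this is exactly the negative of the upper bound on the non-duplicated-potential change used in Lemma \ref{chainL}. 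For $1 < i < l$, each re-shuffle within $N_i$ increments $SIG_{N_i,N_i}$ by $M-m-1$ on line (\textbf{\ref{reShuffleRules}.76}), which by Claim \ref{packetHeight} and Statement 5 of Lemma \ref{item3M} is a non-negative quantity equal to the positive equivalent of the corresponding drop in $N_i$'s contribution to non-duplicated potential. Summing these contributions over the path yields $\Delta\phi \;\geq\; -\phi$, where $\phi$ is the signed (non-positive) quantity from Lemma \ref{chainL}.

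With this correspondence in hand, I apply Lemma \ref{chainL} directly: under the ``flagged packet already accepted'' hypothesis it gives $\phi \leq -O + l - 1$, whence $\Delta\phi \geq -\phi \geq O - l + 1$, which is \eqref{firstCM}; otherwise it gives $\phi \leq -O + l - 2$ and hence $\Delta\phi \geq O - l + 2$, which is \eqref{secondCM}.

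The main obstacle I anticipate is the bookkeeping for the correspondence in the second paragraph: Lemma \ref{chainL} restricts its re-shuffling term to packets moved \emph{into} outgoing buffers of $N_i$, whereas $SIG_{N_i,N_i}$ records \emph{every} re-shuffle inside $N_i$, and one must check that the extra contributions are non-negative (so that passing from the restricted sum to $SIG_{N_i,N_i}$ only strengthens the inequality in our favor). Statement 5 of Lemma \ref{item3M} handles this cleanly, but care is needed to confirm that every change to $\varphi_{N_{i+1}}$ occurring in round $\mathtt{t}$ really does arise from the single path-edge packet transfer analyzed above (and not, say, from a simultaneous unrelated transfer along a different edge between the same pair of nodes), which is where the honesty of both endpoints and the per-edge indexing of $SIG^{N_{i+1}}[2]_{N_i,N_{i+1}}$ and $SIG^{N_{i+1}}[3]_{N_i,N_{i+1}}$ from Figure \ref{setupCodeM} are essential.
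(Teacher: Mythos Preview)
Your proposal is correct and follows essentially the same route as the paper's own proof: the paper invokes Lemma~\ref{similar} to reduce to the edge-scheduling setting and then says to ``follow exactly the proof of Lemma~\ref{chainL}'' with Lemma~\ref{item3M} substituted for Lemma~\ref{item3} and Lemma~\ref{packetTransferPotDropM} for Lemma~\ref{packetTransferPotDrop}, citing also Statement~1 of Lemma~\ref{sigBuffersAreCorrect} for the fact that the signature buffers faithfully record potential changes. Your write-up is more explicit than the paper's about why the bookkeeping goes through (the per-edge indexing of $\varphi$, and the observation that $SIG_{N_i,N_i}$ records a superset of the re-shuffles counted in Lemma~\ref{chainL} with every extra term non-negative by Statement~5 of Lemma~\ref{item3M}); this extra care is exactly what the paper sweeps under ``the exact proof can be followed.''
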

\begin{proof} Since $A$ and $B$ are honest, we use Lemma \ref{similar} and then follow exactly the proof of the analogous claim for the edge-scheduling model (Lemma \ref{chainL}).  In particular, the exact proof can be followed, using the fact that signature buffers record accurate changes in non-duplicated potential (Statement 1 of Lemma \ref{sigBuffersAreCorrect}), and using Lemma \ref{item3M} in place of Lemma \ref{item3}, and Lemma \ref{packetTransferPotDropM} in place of Lemma \ref{packetTransferPotDrop}.
\end{proof}
\begin{lemma} \label{potentialDrop2} If at any point in any transmission $\mathtt{T}$, the number of blocked rounds is $\beta_{\mathtt{T}}$, then the participating honest nodes of $G$ will have recorded a drop in non-duplicated potential of at least $n(\beta_{\mathtt{T}}-4n^3)$.  More specifically, the following inequality is true:
	\begin{equation}
	n(\beta_{\mathtt{T}}-4n^3) < \sum_{A \in \mathcal{H} \setminus S} \hspace{-.2cm} SIG_{A,A} + \sum \hspace{-1cm} \sum_{A \in \mathcal{H} \setminus S \space \thickspace B \in \mathcal{P} \setminus \{A,S\} \quad} \hspace{-1cm}(SIG^A[2]_{B,A} - SIG^A[3]_{B,A})
	\end{equation}
\end{lemma}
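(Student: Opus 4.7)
The plan is to adapt the argument from Lemma \ref{potentialDrop} to the node-controlling setting, substituting Lemma \ref{chainLM} for Lemma \ref{chainL} and using Lemma \ref{maxWasted} to discard wasted rounds, during which the chain-based potential bookkeeping need not hold. First, I would invoke Lemma \ref{maxWasted} to conclude that at most $4n^3$ rounds of transmission $\mathtt{T}$ are wasted, so among the $\beta_{\mathtt{T}}$ blocked rounds there are at least $\beta_{\mathtt{T}} - 4n^3$ rounds that are simultaneously blocked and non-wasted. If this number is non-positive the inequality is trivial (the right-hand side is a sum of non-negative quantities by Statements 4 and 5 of Lemma \ref{item3M}), so we may assume it is positive.

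Next, for each such non-wasted blocked round $\mathtt{t}$, the conforming assumption supplies an active honest path $\mathsf{P}_{\mathtt{t}} = S N_1 N_2 \dots N_{l-1} R$ of length at most $n$ consisting entirely of honest nodes in $\mathcal{H}$ connected by edges that are active throughout $\mathtt{t}$. Since $S$ was blocked in $\mathtt{t}$, the incoming buffer of $N_1$ along $E(S, N_1)$ must be at height $2n$ (otherwise the definition of blocked round and the fact that $\mathtt{t}$ is non-wasted would have forced $S$ to successfully insert a packet across this edge). Then by Claim \ref{balancing}, which applies to honest nodes via Lemma \ref{similar}, every outgoing buffer of $N_1$ also has height $2n$; in particular the chain $\mathcal{C}_{\mathtt{t}} = N_1 N_2 \dots N_{l-1} R$ satisfies the hypotheses of Lemma \ref{chainLM} with $O = 2n$ and $l \leq n$.

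Applying Lemma \ref{chainLM}, in either case (whether or not $O_{N_1, N_2}$ has a flagged packet already accepted by $N_2$), the contribution $\Delta \phi$ from round $\mathtt{t}$ to the sum of changes in $\varphi_{N_i}$ and $SIG_{N_i, N_i}$ along $\mathcal{C}_{\mathtt{t}}$ satisfies
\[
\Delta \phi \;\geq\; 2n - l + 1 \;\geq\; n+1 \;>\; n.
\]
Since each $N_i$ on $\mathcal{C}_{\mathtt{t}}$ lies in $\mathcal{H} \setminus S$ and each $\varphi_{N_i}$ has the form $SIG^{N_i}[2]_{N_{i-1}, N_i} - SIG^{N_i}[3]_{N_{i-1}, N_i}$ with $N_{i-1} \in \mathcal{P} \setminus \{N_i, S\}$, every term in $\Delta \phi$ is a summand that appears on the right-hand side of the claimed inequality. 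Moreover, all changes to $SIG_{A,A}$ and to $(SIG^A[2]_{B,A} - SIG^A[3]_{B,A})$ over the course of the transmission that are \emph{not} captured by any such $\Delta \phi$ (i.e., those occurring in other rounds, along non-chain edges, or from local re-shuffling outside $\mathcal{C}_{\mathtt{t}}$) are non-negative by Statements 4 and 5 of Lemma \ref{item3M}. Summing $\Delta \phi$ over the at least $\beta_{\mathtt{T}} - 4n^3$ non-wasted blocked rounds and adding in the non-negative residual contributions yields the stated lower bound of $n(\beta_{\mathtt{T}} - 4n^3)$.

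The main subtlety will be guaranteeing that the per-round contributions $\Delta \phi$ from distinct blocked non-wasted rounds track \emph{disjoint} increments to the signature-buffer sum on the right-hand side: each $\Delta \phi$ measures only changes occurring within its own round $\mathtt{t}$, and the relevant honest-node signature buffers $SIG_{A,A}$ and $SIG^A[2]_{B,A} - SIG^A[3]_{B,A}$ are reset only once per transmission (by Lemma \ref{sigBuffsInit}) and monotonically non-decreasing thereafter (Statements 4 and 5 of Lemma \ref{item3M}). So summing per-round increments is legitimate, and the bound follows. The remaining book-keeping — verifying that the chain chosen from $\mathsf{P}_{\mathtt{t}}$ avoids the sender and obeys the hypotheses of Lemma \ref{chainLM}, and that honest $N_i \in \mathcal{H} \setminus S$ and $N_{i-1} \in \mathcal{P}$ so the terms appear in the target sum — is routine.
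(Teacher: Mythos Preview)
Your proposal is correct and follows essentially the same approach as the paper's proof: use Lemma \ref{maxWasted} to discard at most $4n^3$ wasted rounds, apply Lemma \ref{chainLM} to the active honest chain in each remaining blocked round (with $O = 2n$ forced by the blocked condition and Claim \ref{balancing} via Lemma \ref{similar}) to get a per-round contribution strictly exceeding $n$, and then appeal to Statements 4 and 5 of Lemma \ref{item3M} to ensure all residual changes to the right-hand side are non-negative. Your explicit handling of the trivial case and the disjointness of per-round increments is more detailed than the paper's own write-up, but the underlying argument is the same.
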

\begin{proof}  For every blocked, non-wasted round $\mathtt{t}$, by the {\em conforming} assumption there exists a chain $\mathcal{C}_{\mathtt{t}}$ connecting the sender and receiver that satisfies the hypothesis of Lemma \ref{chainLM}.  Letting $N_1$ denote the first node on this chain (not including the sender), the fact that the round was blocked (and not wasted) means that $N_1$'s incoming buffer was full (see Lemma \ref{similar}), and then by Lemma \ref{balancing}, so was $N_1$'s outgoing buffer along $E(N_1,N_2)$.  Since the length of the chain $l$ is necessarily less than or equal to $n$, Lemma \ref{chainLM} says that the change of $\Delta \phi$ (see notation there) in round $\mathtt{t}$ satisfies:
	\begin{equation}
	\Delta \phi \geq O_{N_1, N_2} -l + 1  \geq 2n - n +1 > n
	\end{equation}
Since $\Delta \phi$ only records some of the changes to the signature buffers, we use Lemma \ref{item3M} to argue that the contributions not counted will only help the bound since they are strictly non-negative.  Since we are not double counting anywhere, each non-wasted, blocked round will correspond to an increase in $\Delta \phi$ of at least $n$, which then yields the lemma since the number of wasted rounds is bounded by $4n^3$ (Lemma \ref{maxWasted}).
\end{proof}
\begin{lemma} \label{secondSigFacts} If there exists $A,B \in G$ such that one of the following inequalities is {\it not} true, then either $A$ or $B$ is necessarily corrupt, and furthermore the sender can identify conclusively\footnote{As long as the adversary does not break the signature scheme, which will happen with all but negligible probability, the sender will never falsely identify an honest node.} which is corrupt\footnote{The values of the quantities $SIG^B$ and $SIG^A$ all correspond to a common transmission $\mathtt{T}$ and refer to values the sender has received in the form of status reports for $\mathtt{T}$ as on ({\bf \ref{routingRules4M}.161}).}:
	\begin{alignat}{2}
	&1. \quad SIG^B[2]_{A,B} \leq SIG^A[3]_{A,B} + 2n \notag \\
	&2. \quad SIG^A[3]_{S,A} - SIG^S[2]_{S,A} \leq 2n \notag \\
	&3. \quad |SIG^A[1]_{B,A}-SIG^B[1]_{B,A}| \leq 1 \qquad \mbox{and} \qquad |SIG^A[1]_{A,B}-SIG^B[1]_{A,B}| \leq 1 \notag \\
	\end{alignat}
\end{lemma}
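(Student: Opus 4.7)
I would prove each inequality by tracing the relevant update rules in the pseudo-code and using signature unforgeability, leveraging the fact that for honest nodes each of the $SIG$ quantities in question is monotone non-decreasing within a transmission and is modified only on specific packet-transfer or confirmation-of-receipt events. The key input is that only one packet can ever be in flight on any directed edge at a time (Statement~3 of Claim \ref{obsBelowHere}, carried over to the present model via Lemma \ref{similar}), so the only discrepancy between $A$'s and $B$'s views of a shared quantity is the one-packet lag between dispatch (signed on line \textbf{\ref{routingRules2M}.60}) and receipt of confirmation (line \textbf{\ref{routingRules2M}.48}). Every such cross-party signed message binds the pair $(\mathtt{T},\mathtt{t})$, and all signature buffers are cleared at the start of each transmission (Lemma \ref{sigBuffsInit}), which blocks cross-transmission replay attacks on the signatures used in the status reports.

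\textbf{Inequalities 1--3.} For Inequality~1, $SIG^B[2]_{A,B}$ is written on line \textbf{\ref{routingRules2M}.74} with the value that $A$ signed on line \textbf{\ref{routingRules2M}.60}, which is $A$'s then-current $SIG^A[3]_{A,B}$ plus the height $H_{FP}\le 2n$ of the packet being dispatched (Statement~9 of Lemma \ref{pseudo}). Upon confirmation of receipt (Definition \ref{confRec}), $A$ catches up by incrementing $SIG^A[3]_{A,B}$ by the same $H_{FP}$ on line \textbf{\ref{routingRules2M}.49}, so for honest $A$ the running gap $SIG^B[2]_{A,B}-SIG^A[3]_{A,B}$ always lies in $\{0,H_{FP}\}\subseteq[0,2n]$. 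Inequality~2 is handled identically, with $SIG^S[2]_{S,A}$ updated on line \textbf{\ref{routingRules2M}.48} to the value $A$ signed on line \textbf{\ref{routingRulesM}.11} and $SIG^A[3]_{S,A}$ incremented on line \textbf{\ref{routingRules2M}.75} by $H_{GP}\le 2n$ per accepted packet. Inequality~3 follows the same template with the per-packet increment equal to~$1$ (the sixth coordinate sent on line \textbf{\ref{routingRules2M}.60} is $SIG[1]+1$, and the corresponding local update is a $+1$ on line \textbf{\ref{routingRules2M}.48} or line \textbf{\ref{routingRules2M}.74}), giving maximum lag~$1$ in either direction.

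\textbf{Identifying the corrupt node and the main obstacle.} When an inequality is violated, the sender holds a signature---say from $A$---on a value $v$ that exceeds $A$'s own self-reported $SIG^A[3]_{A,B}$ (resp.\ $SIG^A[1]_{\cdot,\cdot}$) by more than the allowed slack. By signature unforgeability, $A$ actually signed $v$ during transmission~$\mathtt{T}$; by the monotonicity established above (Statements~1--5 of Lemma \ref{item3M}), an honest $A$ would then satisfy the current $SIG^A[3]_{A,B}\ge v-2n$. Hence $A$'s report is outdated or fabricated, $A$ is declared corrupt, and the sender eliminates $A$ as in Case~1 of Lemma \ref{mainF2Lemma}. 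The main obstacle I anticipate is carefully verifying this identification when $A$ and $B$ may have exchanged many signed tuples during~$\mathtt{T}$: the argument requires showing that at every intermediate moment of the transmission the locally-stored (value, signature) pairs on both sides of the edge correctly reflect the single-in-flight-packet invariant, which I expect to handle by an induction on round number in the spirit of the proof of Lemma \ref{item3M}.
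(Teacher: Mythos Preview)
Your proposal is correct and follows essentially the same skeleton as the paper: each inequality is established by combining signature unforgeability, the per-transmission reset of signature buffers (Lemma~\ref{sigBuffsInit}), monotonicity of the relevant $SIG$ counters for honest parties, and the single-packet-in-flight invariant bounding the slack to $H_{FP}\le 2n$ (resp.\ $+1$). Your handling of Statements~1 and~2 matches the paper almost verbatim.

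The one place where your write-up and the paper genuinely diverge in presentation is the \emph{identification step for Statement~3}. You identify the culprit by the \emph{direction} of the violation: whoever's signature sits on the ``too large'' value must have later self-reported a value inconsistent with monotonicity, hence is corrupt. The paper instead compares the \emph{round timestamps} $\mathtt{t}_A,\mathtt{t}_B$ bound into the two signed tuples and argues by cases on $\mathtt{t}_A\lessgtr\mathtt{t}_B$ that one party returned a stale signature. Both arguments are valid and, when traced through, pin the blame on the same node; your route is arguably cleaner since it avoids the case split, while the paper's timestamp comparison makes the ``outdated signature'' intuition more explicit. One small caveat: your appeal to ``Statements~1--5 of Lemma~\ref{item3M}'' for monotonicity is fine for $SIG[3]$ but does not literally cover $SIG[1]$; the monotone-increment-by-one property of $SIG[1]$ comes instead from the verification on lines (\textbf{\ref{routingRules2M}.84--90}), so you should cite that (or Lemma~\ref{sigRelationships}) when you fill in the details. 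The induction you anticipate for the single-in-flight bookkeeping is exactly Lemma~\ref{sigRelationships} in the paper, so you have correctly located the technical core.
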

\begin{proof} As in the first paragraph of the proof of Lemma \ref{item3M}, we may assume that both $A$ and $B$ have received the full {\it Start of Transmission} broadcast for $\mathtt{T}$, so $SIG^A$ and $SIG^B$ should both be cleared (if $A$ and $B$ are both honest) of its values from the previous transmission before being updated with values corresponding to the current transmission $\mathtt{T}$.  We prove each Statement separately:
	\begin{enumerate}
	\item That either $A$ or $B$ is necessarily corrupt follows from Lemma \ref{sigRelationships}.  It remains to show that the sender can identify a node that is necessarily corrupt.  We begin by assuming that $SIG^B[2]_{A,B}$ and $SIG^A[3]_{A,B}$ have appropriate signatures corresponding to $\mathtt{T}$ (otherwise, they either would not have been accepted as a valid status report parcel on ({\bf \ref{routingRules4M}.161}), or a node will be eliminated as on {\bf \ref{routingRules4M}.163}).  We now show that if the inequality in Statement 1 is {\it not} true for some $A,B \in G$, then $A$ is necessarily corrupt.  Notice that if $A$ is honest, then $SIG^A[3]_{A,B}$ is monotone increasing (other than being cleared upon receipt of the {\it SOT} broadcast, $SIG^A[3]_{A,B}$ is only updated on {\bf \ref{routingRules2M}.49}).  Similarly, other than being cleared upon receipt of the {\it SOT} broadcast, $SIG^B[2]_{A,B}$ is only updated on ({\bf \ref{routingRules2M}.74}), and tracing this backwards, this comes from the value received on ({\bf \ref{routingRules2M}.62}) which in turn was sent on ({\bf \ref{routingRules2M}.60}).  Therefore, since $B$ cannot forge $A$'s signature (except with negligible probability or in the case $A$ and $B$ are both corrupt and colluding), $SIG^B[2]_{A,B}$ can only take on values $A$ sent $B$ as on ({\bf \ref{routingRules2M}.60}).  Meanwhile, as mentioned, if $A$ is honest, $SIG^A[3]_{A,B}$ is monotone increasing, and thus an honest $A$ will never send a value for $SIG^A[3]_{A,B}$ on ({\bf \ref{routingRules2M}.60}) of some round that is {\it smaller} than a value it sent for $SIG^A[3]_{A,B}$ on ({\bf \ref{routingRules2M}.60}) of some earlier round.  Therefore, since the value $A$ is supposed to send $B$ is $H_{FP} \leq 2n$ (the inequality follows from Statement 9 of Lemma \ref{pseudo} and Lemma \ref{similar}), unless $A$ is corrupt or $B$ has broken the signature scheme, $B$ will never have a signed value from $A$ such that $SIG^B[2]_{A,B} > 2n + SIG^A[3]_{A,B}$.  Therefore, if the inequality in the first statement is not satisfied, $A$ is necessarily corrupt (except with negligible probability).
	
	\item That $A$ is necessarily corrupt follows from Lemma \ref{sigRelationships} and the fact that the sender cannot be corrupted by the conforming restriction placed on the adversary.  

	\item Note that the two statements are redundant, since the second is identical to the first after swapping the terms on the LHS and re-labelling.  We therefore only consider the second inequality of Statement 3.  That either $A$ or $B$ is necessarily corrupt follows from Lemma \ref{sigRelationships}.  It remains to show that the sender can identify a node that is necessarily corrupt.  As in the proof of Statement 1 above, we begin by assuming that $SIG^B[1]_{A,B}$ and $SIG^A[1]_{A,B}$ have appropriate signatures corresponding to $\mathtt{T}$ (otherwise, they either would not have been accepted as a valid status report parcel on ({\bf \ref{routingRules4M}.161}), or a node will be eliminated as on {\bf \ref{routingRules4M}.163}).  We now show that if $|SIG^A[1]_{A,B}-SIG^B[1]_{A,B}| > 1$ for some $A,B \in G$, then either $A$ or $B$ is necessarily corrupt, and the sender can identify which one is corrupt.\vspace{.2cm}
	
	\hspace{.5cm}Notice that the quantities $SIG^B[1]_{A,B}$ and $SIG^A[1]_{A,B}$ include the {\it round} in which the quantity last changed (({\bf \ref{routingRulesM}.11}) and ({\bf \ref{routingRules2M}.60})).  Let $\mathtt{t}_B$ denote the round $SIG^B[1]_{A,B}$ indicates it was last updated (which has been signed by $A$), and $\mathtt{t}_A$ denote the round $SIG^A[1]_{A,B}$ indicates it was last updated (which has been signed by $B$.  Note that these quantities refer to the values returned to the sender in the form of status report parcels, and node $A$ (respectively $B$) has signed the entire parcel $SIG^A[1]_{A,B}$ (respectively $SIG^B[1]_{A,B}$), indicating this is indeed the parcel he wishes to commit to as his status report.  We assume $|SIG^A[1]_{A,B}-SIG^B[1]_{A,B}| > 1$, and break the proof into the following two cases:
		\begin{enumerate}
		\item[] \textsf{Case 1: $\mathtt{t}_A > \mathtt{t}_B$}.  We will show that $B$ is corrupt.  Notice that the fact that $A$ has a valid signature on $SIG^A[1]_{A,B}$ from $B$ for round $\mathtt{t}_A$ means that (with all but negligible probability that $A$ could forge $B$'s signature, or if $A$ and $B$ are both corrupt, allowing $A$ to forge $B$'s signature) $B$ sent communication as on ({\bf \ref{routingRulesM}.11}) of $\mathtt{t}_A$ with the fifth coordinate equal to the value $A$ used for $SIG^A[1]_{A,B}$.  In particular, this fifth coordinate represents the value $B$ has stored for $SIG^B[1]_{A,B}$ during $\mathtt{t}_A$.  Since $\mathtt{t}_B < \mathtt{t}_A$, $B$ does not update $SIG^B[1]_{A,B}$ from $\mathtt{t}_B$ through the end of $\mathtt{T}$, and hence the value for $SIG^B[1]_{A,B}$ that $B$ returns the sender in its status report should be the same as the value $B$ sent to $A$ on ({\bf \ref{routingRulesM}.11}) of round $\mathtt{t}_A$, which as noted above equals the value of $SIG^A[1]_{A,B}$ that $A$ returned in its status report.  However, since this is not the case ($SIG^B[1]_{A,B} \neq SIG^A[1]_{A,B}$), $B$ has returned an outdated signature and must be corrupt.

		\item[] \textsf{Case 2: $\mathtt{t}_A \leq \mathtt{t}_B$}.  If $\mathtt{t}_A = \mathtt{t}_B = 0$, i.e.\ both nodes agree that they did not update their signature buffers along $E(A,B)$ in the entire transmission (except to clear them when they received the {\it SOT} broadcast), then necessarily both $SIG^A[1]_{A,B}$ and $SIG^B[1]_{A,B}$ should be set to $\bot$, so if one of them is {\it not} $\bot$, the node signing the non-$\bot$ value can be eliminated.  So assume that one of the nodes has a valid signature from the other for some round in $\mathtt{T}$ (i.e.\ that $\mathtt{t}_B >0$).  We will show that $A$ is corrupt in a manner similar to showing $B$ was corrupt above.  Indeed, since $B$ has a valid signature from $A$ on $SIG^B[1]_{A,B}$ from round $\mathtt{t}_B$, unless $A$ and $B$ are colluding or $B$ has managed to forge $A$'s signature, this value for $SIG^B[1]_{A,B}$ comes from the communication sent by $A$ on ({\bf \ref{routingRules2M}.60}).  In particular, since $\mathtt{t}_A \leq \mathtt{t}_B$ and $A$ claims he was not able to update $SIG^A[1]_{A,B}$ after round $\mathtt{t}_A$, the value $A$ signed and sent on ({\bf \ref{routingRules2M}.60}) should be exactly one l more than the value stored in $SIG^A[1]_{A,B}$ as of line ({\bf \ref{routingRulesM}.07}) of round $\mathtt{t}_A$, the latter of which was returned by $A$ in its status report (by definition of $\mathtt{t}_A$ and the inforgibility of the signature scheme).  But since $|SIG^A[1]_{A,B}-SIG^B[1]_{A,B}| > 1$, this must not be the case, and hence $A$ is corrupt.
		\end{enumerate}
	\end{enumerate}
\vspace{-22pt}
\end{proof}
\begin{cor} \label{F2Cor} If there exists a node $A \in G$ such that:
	\begin{equation} \label{2013}
	4n^3 -4n^2 < SIG_{A,A} \thickspace + \sum_{B \in \mathcal{P} \setminus A} SIG^B[2]_{A,B} - SIG^A[3]_{B,A},
	\end{equation}
then either a node can be eliminated as in Statement 1 of Lemma \ref{secondSigFacts} or as in Statement 6 of Lemma \ref{item3M}.
\end{cor}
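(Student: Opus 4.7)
The plan is to derive a contradiction by combining the two cited lemmas. Specifically, I would assume for contradiction that neither Statement 1 of Lemma \ref{secondSigFacts} nor Statement 6 of Lemma \ref{item3M} permits the sender to eliminate any node, and show that this is inconsistent with the hypothesis \eqref{2013}. The structure is essentially: Statement 1 of Lemma \ref{secondSigFacts} lets us replace every $SIG^B[2]_{A,B}$ (which is ``foreign'' data from $A$'s neighbors) by the intrinsic quantity $SIG^A[3]_{A,B}$ up to an additive error of $2n$ per edge, and Statement 6 of Lemma \ref{item3M} then gives a hard upper bound on the resulting intrinsic sum. Together these bounds force the RHS of \eqref{2013} to be at most roughly $4n^3 - 6n^2 + 2n^2$, which is smaller than $4n^3 - 4n^2$ — not larger, as \eqref{2013} demands.

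In more detail, I would first invoke Statement 1 of Lemma \ref{secondSigFacts}: if any $B \in \mathcal{P} \setminus A$ satisfies $SIG^B[2]_{A,B} > SIG^A[3]_{A,B} + 2n$, the sender can already eliminate a node and we are done. Otherwise, summing the bound $SIG^B[2]_{A,B} \leq SIG^A[3]_{A,B} + 2n$ over all $B \in \mathcal{P} \setminus A$ (of which there are at most $n-1$) yields
\begin{equation*}
\sum_{B \in \mathcal{P} \setminus A} SIG^B[2]_{A,B} \;\leq\; \sum_{B \in \mathcal{P} \setminus A} SIG^A[3]_{A,B} + 2n(n-1).
\end{equation*}
Substituting this into the hypothesis \eqref{2013} and cancelling common terms gives
\begin{equation*}
4n^3 - 4n^2 \;<\; SIG_{A,A} + \sum_{B \in \mathcal{P} \setminus A}\bigl(SIG^A[3]_{A,B} - SIG^A[3]_{B,A}\bigr) + 2n^2 - 2n,
\end{equation*}
i.e.\
\begin{equation*}
SIG_{A,A} + \sum_{B \in \mathcal{P} \setminus A}\bigl(SIG^A[3]_{A,B} - SIG^A[3]_{B,A}\bigr) \;>\; 4n^3 - 6n^2 + 2n.
\end{equation*}

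The second step is then to contrast this with Statement 6 of Lemma \ref{item3M}, which asserts that whenever $A$ is honest,
\begin{equation*}
SIG_{A,A} + \sum_{B \in \mathcal{P} \setminus A}\bigl(SIG^A[3]_{A,B} - SIG^A[3]_{B,A}\bigr) \;\leq\; 4n^3 - 6n^2.
\end{equation*}
The displayed inequalities are incompatible (the lower bound strictly exceeds the upper bound), so $A$ cannot have been honest; the sender is therefore justified in eliminating $A$ via Statement 6 of Lemma \ref{item3M}, which is exactly the conclusion of the corollary.

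I do not expect any substantive obstacle: once the counting of $|\mathcal{P} \setminus A| \leq n-1$ is pinned down, the result is just an arithmetic contradiction between the two cited inequalities. The only mild subtlety is verifying that the ``$2n$ per edge'' slack in Statement 1 of Lemma \ref{secondSigFacts} is the right slack to close the gap between the $4n^3 - 4n^2$ threshold appearing in \eqref{2013} and the $4n^3 - 6n^2$ bound in Statement 6 of Lemma \ref{item3M}; the computation above shows it suffices with room to spare ($2n^2 - 2n < 2n^2$). I would make sure to flag that Statement 1 of Lemma \ref{secondSigFacts} is applied once per neighbor $B$, so a failure of \emph{any} single instance already suffices to eliminate a node, and hence the ``assume all succeed'' reduction above is legitimate.
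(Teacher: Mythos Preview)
Your proposal is correct and follows essentially the same argument as the paper: assume Statement~1 of Lemma~\ref{secondSigFacts} does not trigger, use the resulting bound $SIG^B[2]_{A,B} \leq SIG^A[3]_{A,B} + 2n$ termwise to replace the sum in \eqref{2013}, and observe that the resulting inequality violates Statement~6 of Lemma~\ref{item3M}. Your accounting is slightly sharper (you use $2n(n-1)$ where the paper uses the cruder $2n^2$), but the structure is identical.
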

\begin{proof} Suppose no node can be eliminated because of Statement 1 of Lemma \ref{secondSigFacts}, so that for all $B \in G$:
	\begin{equation}\label{2018}
	SIG^B[2]_{A,B} \leq SIG^A[3]_{A,B} + 2n.
	\end{equation}
Then if \eqref{2013} is true, we have that:
	\begin{alignat}{2}
	4n^3 -4n^2 &< SIG_{A,A} \thickspace + \sum_{B \in \mathcal{P} \setminus A} SIG^B[2]_{A,B} - SIG^A[3]_{B,A} \notag \\
	&\leq SIG_{A,A} \thickspace + 2n^2 + \sum_{B \in \mathcal{P} \setminus A} SIG^A[3]_{A,B} - SIG^A[3]_{B,A}
	\end{alignat}
where the second inequality follows from applying \eqref{2018} to each term of the sum.  Therefore, $A$ can be eliminated by Statement 6 of Lemma \ref{item3M}.
\end{proof}
\begin{cor} \label{obsProofs} In the case a transmission fails as in F2, the increase in network potential due to packet insertions is at most $2nD+2n^2$.  In other words, either there exists a node $A \in G$ such that the sender can eliminate $A$, or the following inequality is true\footnote{The values of the quantities $SIG^A$ correspond to some transmission $\mathtt{T}$ and refer to values the sender has received in the form of status reports for $\mathtt{T}$ as on ({\bf \ref{routingRules4M}.161}).}:
	\begin{equation}
	\sum_{A \in \mathcal{P} \setminus S} SIG^A[3]_{S,A} < 2nD + 2n^2
	\end{equation}
\end{cor}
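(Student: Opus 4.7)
The plan is to compare, edge-by-edge, the receiver's claim $SIG^A[3]_{S,A}$ (which an honest $A$ updates on line (\ref{routingRules2M}.75) each time it accepts a packet from $S$) against the sender's own confirmed record $SIG^S[2]_{S,A}$ (which $S$ only updates on line (\ref{routingRules2M}.48) upon receiving confirmation of receipt as in Definition \ref{confRec}). The key fact I will invoke is Statement 2 of Lemma \ref{secondSigFacts}, which says that either the sender can already eliminate some corrupt $A \in \mathcal{P} \setminus S$, or for every such $A$ we have
\begin{equation}\label{plan-eq-1}
SIG^A[3]_{S,A} - SIG^S[2]_{S,A} \;\leq\; 2n.
\end{equation}
Assuming no node has been eliminated, I will sum \eqref{plan-eq-1} over all $A \in \mathcal{P} \setminus S$; since $|\mathcal{P} \setminus S| \leq n - 1 < n$, this yields the slack term $2n \cdot (n-1) < 2n^2$.

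Next I would bound $\sum_{A \in \mathcal{P} \setminus S} SIG^S[2]_{S,A}$ using the hypothesis that we are in case F2. Because $S$ only updates $SIG^S[2]_{S,A}$ on line (\ref{routingRules2M}.48)—that is, only when it has received confirmation that a packet sent across $E(S,A)$ was accepted by $A$—every increment of $SIG^S[2]_{S,A}$ corresponds to one packet that $S$ has \emph{knowingly} inserted into the network (in the sense of Definition \ref{insert} and the $\kappa$ variable tracked on lines (\ref{routingRules2M}.47) and (\ref{setupCode2M}.65)). Each such increment is at most $H_{GP} \leq 2n$ by Statement 9 of Lemma \ref{pseudo} together with Lemma \ref{similar}. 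Since case F2 assumes $\kappa < D$ for the whole transmission, summing over all outgoing edges from $S$ gives
\begin{equation}\label{plan-eq-2}
\sum_{A \in \mathcal{P} \setminus S} SIG^S[2]_{S,A} \;\leq\; 2n \cdot \kappa \;<\; 2nD.
\end{equation}

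Combining \eqref{plan-eq-1} summed over $A$ with \eqref{plan-eq-2} gives exactly $\sum_{A \in \mathcal{P} \setminus S} SIG^A[3]_{S,A} < 2nD + 2n^2$, as required. The only point requiring care is the justification that $SIG^S[2]_{S,A}$ is indeed a per-packet accumulator with each contribution bounded by $2n$, rather than a coarser aggregate; this I would verify by tracing the 6th coordinate sent on (\ref{routingRulesM}.11) back through (\ref{routingRulesM}.07) to the update on (\ref{routingRules2M}.48), noting that $A$'s signature on this value (validated as in the comments on (\ref{routingRules2M}.88--90)) commits $A$ to a per-packet height in $[1,2n]$. The main (and only real) obstacle is thus the bookkeeping tying together the three different signature-buffer conventions; once \eqref{plan-eq-1} and \eqref{plan-eq-2} are in hand, the corollary is immediate.
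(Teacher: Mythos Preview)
Your proposal is correct and follows essentially the same route as the paper: invoke Statement~2 of Lemma~\ref{secondSigFacts} to bound $SIG^A[3]_{S,A}$ by $SIG^S[2]_{S,A}+2n$ edge-by-edge, then bound $\sum_A SIG^S[2]_{S,A}$ by $2n\kappa<2nD$ using that each confirmed insertion increments $\kappa$ by one and increments $SIG^S[2]_{S,A}$ by at most $2n$. One small slip: the per-increment bound that $S$ actually enforces in \textbf{Verify Signature Two} (comments on ({\bf \ref{routingRules2M}.88--90})) is against $H_{FP}$, not $H_{GP}$; the sender has no access to $A$'s ghost-packet height, only to its own flagged-packet height, but since $H_{FP}\leq 2n$ as well (Statement~9 of Lemma~\ref{pseudo}) your conclusion is unaffected.
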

\begin{proof} If the inequality in Statement 2 of Lemma \ref{secondSigFacts} fails for any node $A \in \mathcal{P} \setminus S$, the sender can immediately eliminate $A$.  So assume that the inequality in Statement 2 of Lemma \ref{secondSigFacts} holds for every $A \in \mathcal{P} \setminus S$.  The corollary will be a consequence of the following observation:
	\begin{itemize}
	\item[] {\bf Observation.} {If a transmission $\mathtt{T}$ fails as in F2, then}:
		\begin{equation}
		\sum_{A \in \mathcal{P} \setminus S} SIG^S[2]_{S,A} < 2nD
		\end{equation}

	\item[] {\it Proof}.  Let $\kappa_{\mathtt{T}}$ denote the value that $\kappa$ had at the end of $\mathtt{T}$.  Then formally, a transmission falling under F2 means that $\kappa_{\mathtt{T}}$ is less than $D$.  The structure of this proof will be to first show that for any $A \in \mathcal{P} \setminus S$, anytime $SIG^S[2]_{S,A}$ is updated as on ({\bf \ref{routingRules2M}.48}), it will always be the case that $2n*SIG^S[1]_{S,A} \geq SIG^S[2]_{S,A}$ (so that in particular that the final value for $SIG^S[2]_{S,A}$ at the end of $\mathtt{T}$ is less than or equal to $2n$ times the final value for $SIG^S[1]_{S,A}$).  We will then show that at the end of $\mathtt{T}$: $\sum_{A \in \mathcal{P} \setminus S} SIG^S[1]_{S,A} = \kappa_{\mathtt{T}}$.  From these two facts, we will have shown:
		\begin{equation}
		\sum_{A \in \mathcal{P} \setminus S} SIG^S[2]_{S,A} \leq \sum_{A \in \mathcal{P} \setminus S} 2n*SIG^S[1]_{S,A} = 2n \kappa_{\mathtt{T}} < 2nD
		\end{equation}
	as required.\vspace{.2cm}

	The first fact is immediate, since for any $A \in \mathcal{P} \setminus S$, whenever $SIG^S[2]_{S,A}$ is updated as on ({\bf \ref{routingRules2M}.48}), the statement on ({\bf \ref{routingRules2M}.45}) must have been satisfied, and so the statement on ({\bf \ref{routingRules2M}.89}) must have been false.  In particular, the change in $SIG^S[1]_{S,A}$ was exactly one, and the change in $SIG^S[2]_{S,A}$ was at most $H_{FP} \leq 2n$, where the inequality comes from Statement 9 of Lemma \ref{pseudo} and Lemma \ref{similar} (see comments on lines ({\bf \ref{routingRules2M}.88-90})).  The second fact is also immediate, as $\kappa$ and $SIG^S[1]_{S,A}$ all start the transmission with value zero (or $\bot$) by lines ({\bf \ref{setupCode2M}.54}), ({\bf \ref{setupCode2M}.70}), ({\bf \ref{routingRules5M}.199}), and ({\bf \ref{routingRules5M}.213}), and then $\kappa$ is incremented by one on line ({\bf \ref{routingRules2M}.47}) of the outgoing buffer along some edge $E(S,N)$ if and only if $SIG^S[1]_{S,N}$ is incremented by one as on ({\bf \ref{routingRules2M}.48}) (as already argued, changes to $SIG^S[1]_{S,A}$ as on ({\bf \ref{routingRules2M}.48}) are always increments of one, see e.g.\ the comments on lines ({\bf \ref{routingRules2M}.88-90})).\hspace*{\fill} \hspace*{-12pt}$\scriptstyle{\square}$
	\end{itemize}
The corollary now follows immediately from the following string of inequalities:
	\begin{alignat}{2}
	2nD &> \sum_{A \in \mathcal{P} \setminus S} SIG^S[2]_{S,A} \notag \\
	&\geq -2n^2 + \sum_{A \in \mathcal{P} \setminus S} SIG^A[3]_{S,A} \notag
	\end{alignat}
where the top inequality is the statement of the Observation and the second inequality comes from applying the inequality in Statement 2 of Lemma \ref{secondSigFacts} to each term of the sum.
\end{proof}
\begin{lemma} \label{sigBuffersAreCorrect} For any honest node $N \in G$ and for any transmission $\mathtt{T}$:
	\begin{enumerate}
	\item Upon receipt of the complete {\it Start of Transmission} (SOT) broadcast for transmission $\mathtt{T}$, $SIG_{N,N}$ will be cleared.  After this point through the end of transmission $\mathtt{T}$, $SIG_{N,N}$ stores the correct value corresponding to the current transmission $\mathtt{T}$ (as listed on {\bf \ref{setupCodeM}.12}).

	\item Suppose that $N$ transfers at least one packet during $\mathtt{T}$ (i.e.\ $N$ sends or receives at least one packet, as on ({\bf \ref{routingRules2M}.60}) or ({\bf \ref{routingRules2M}.74-78})).  Then through all transmissions after $\mathtt{T}$ until the transmission and round $(\mathtt{T}', \mathtt{t}' \in \mathtt{T}')$ that $N$ next receives the complete {\it SOT} transmission for $\mathtt{T}'$, one of the following must happen:
		\begin{enumerate}
		\item All of $N$'s signature buffers contain information (i.e.\ signatures from neighbors) pertaining to $\mathtt{T}$, or
		\item All of $N$'s signature buffers are clear and $N$'s broadcast buffer\footnote{Or the Data Buffer in the case $N=S$.} contains all of the information that was in the signature buffers at the end of $\mathtt{T}$, or
		\item $(N, \mathtt{T}, \mathtt{T}')$ is not on the blacklist for transmission $\mathtt{T}'$
		\end{enumerate}
	
	\item If $N$ has received the full {\it SOT} broadcast for $\mathtt{T}$, then all parcels in $N$'s broadcast buffer\lastfootnote $BB$ corresponding to some node $\widehat{N}$'s status report are current and correct.  More precisely:
		\begin{enumerate}
		\item If $(\widehat{N}, \widehat{\mathtt{T}})$ is on the sender's blacklist, and at {\it any} time $N$ has stored a parcel of $\widehat{N}$'s corresponding status report in its broadcast buffer $BB$, then this parcel will not be deleted until $(\widehat{N}, \widehat{\mathtt{T}})$ is removed from the sender's blacklist.

		\item If $(\widehat{N}, \widehat{\mathtt{T}}, \mathtt{T}')$ is a part of the {\it SOT} broadcast of transmission $\mathtt{T}'$, then upon receipt of this parcel, all of $\widehat{N}$'s status report parcels in $N$'s broadcast buffer correspond to transmission $\mathtt{T}'$ and are of the form as indicated on ({\bf \ref{routingRules4M}.141-144}), where the reason for failure of transmission $\mathtt{T}'$ was determined as on ({\bf \ref{routingRules5M}.190}), ({\bf \ref{routingRules5M}.193}), or ({\bf \ref{routingRules5M}.196}).
		\end{enumerate}
	\item If at any time $N$ is storing a parcel of the form $(B, \widehat{N}, \widehat{\mathtt{T}})$ in its broadcast buffer (indicating $B$ knows $\widehat{N}$'s complete status report for transmission $\widehat{\mathtt{T}}$), then this will not be deleted until $(\widehat{N}, \widehat{\mathtt{T}})$ has been removed from the blacklist.
	\end{enumerate}
\end{lemma}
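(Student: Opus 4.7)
The plan is to prove each of the four statements by tracing through every line of pseudo-code that can alter the relevant buffer entries and verifying an invariant via induction on rounds. For all four statements, I would begin by fixing an honest node $N \in G$ and transmission $\mathtt{T}$, and invoke Lemma \ref{sigBuffsInit} to establish the base case that all of $N$'s signature buffers ($SIG_{N,N}$ and every $SIG^N[i]$) are cleared exactly at the moment $N$ receives the final parcel of the {\it SOT} broadcast for $\mathtt{T}$, and that no packet movement to or from $N$ occurs before this point.

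For Statement 1, after the base case, the only lines that modify $SIG_{N,N}$ within $\mathtt{T}$ are ({\bf \ref{reShuffleRules}.76}), ({\bf \ref{routingRules2M}.50}), ({\bf \ref{routingRules2M}.80}), and ({\bf \ref{routingRules2M}.82}), and within each I would check (using the already-established Statements of Lemmas \ref{pseudo} and \ref{item3M} together with Lemma \ref{similar}) that the incremented amount equals precisely the decrease in potential caused by the corresponding re-shuffle or packet deletion event. Since the four cases of packet ``stays'' enumerated in the proof of Statement 6 of Lemma \ref{item3M} partition all potential-changing events at $N$ due to re-shuffling, this will establish correctness.

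For Statement 2, the proof strategy is to enumerate the four lines where signature buffers can be cleared between transmissions: ({\bf \ref{routingRules4M}.128}), ({\bf \ref{routingRules4M}.133}), ({\bf \ref{routingRules4M}.141}), and ({\bf \ref{routingRules4M}.146}). If none of these fire between the end of $\mathtt{T}$ and round $\mathtt{t}'$ of $\mathtt{T}'$, then case (a) holds trivially. If ({\bf \ref{routingRules4M}.141}) fires, I would trace through to ({\bf \ref{routingRules4M}.142-145}) to confirm that the contents of $SIG$ are first signed and added to $BB$ before the clearing, giving case (b). The delicate sub-case is ({\bf \ref{routingRules4M}.146}), where clearing happens after \emph{all} blacklist parcels have been received; here I would argue that if $N$ actually transferred packets in $\mathtt{T}$ and its own blacklist parcel $(N, \mathtt{T}, \mathtt{T}')$ appears in the {\it SOT} broadcast, then ({\bf \ref{routingRules4M}.140-145}) fires first, so (b) holds. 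In the remaining case where ({\bf \ref{routingRules4M}.128}) or ({\bf \ref{routingRules4M}.133}) fires, one can show the sender did not place $(N, \mathtt{T}, \mathtt{T}')$ on the blacklist, yielding (c).

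For Statements 3 and 4, the approach is a straightforward inductive invariant on the broadcast buffer: status report parcels corresponding to $\widehat{N}$ enter $BB$ only through ({\bf \ref{routingRules4M}.142-145}) (for $N$'s own status) or ({\bf \ref{routingRules4M}.154}) (for other nodes), and leave only through ({\bf \ref{routingRules4M}.139}), ({\bf \ref{routingRules4M}.149}), ({\bf \ref{routingRules4M}.203}), or via the node-elimination reset at ({\bf \ref{routingRules4M}.133-134}). For Statement 3(a), I would verify that ({\bf \ref{routingRules4M}.139}) and ({\bf \ref{routingRules4M}.149}) only delete $\widehat{N}$'s parcels when the current $BL$-entry for $\widehat{N}$ changes (and hence the sender has also changed $\widehat{N}$'s blacklist status, via the signed removal parcel that triggered ({\bf \ref{routingRules4M}.148-149})); for 3(b), upon receipt of the blacklist parcel $(\widehat{N}, \widehat{\mathtt{T}}, \mathtt{T}')$ the purging step on ({\bf \ref{routingRules4M}.139}) eliminates all $\widehat{N}$-status-report parcels for transmissions $\neq \widehat{\mathtt{T}}$, leaving only the correctly-formed ones. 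Statement 4 follows the same pattern applied to parcels of type $(B, \widehat{N}, \widehat{\mathtt{T}})$. I expect the main obstacle to be Statement 2, since it must simultaneously handle several interleavings of blacklist insertions, node-elimination-induced wipes, and the asymmetric ordering of ({\bf \ref{routingRules4M}.141}) versus ({\bf \ref{routingRules4M}.146}) within a single {\it SOT} reception; careful bookkeeping using Lemma \ref{blacklistStuff} (that a node appears on $BL$ for at most one transmission at a time) will be essential to rule out the possibility of (b) failing because a stale copy of the old $SIG$ contents was prematurely overwritten.
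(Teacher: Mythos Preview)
Your overall framework---induction on rounds, case analysis on the pseudo-code lines that touch each buffer, and Lemma~\ref{sigBuffsInit} for the base case---matches the paper exactly. However, your treatment of Statement~2 at line ({\bf \ref{routingRules4M}.141}) contains a concrete error that would cause the argument to fail.

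You write that when ({\bf \ref{routingRules4M}.141}) fires you would ``trace through to ({\bf \ref{routingRules4M}.142-145}) to confirm that the contents of $SIG$ are first signed and added to $BB$ before the clearing, giving case (b).'' But the order is the reverse: ({\bf \ref{routingRules4M}.141}) \emph{is} the clearing step, and it executes \emph{before} ({\bf \ref{routingRules4M}.142-145}) adds anything to $BB$. So if ({\bf \ref{routingRules4M}.141}) actually fires, the signature data from $\mathtt{T}$ is destroyed before it can be copied into $BB$, and case (b) cannot hold. The paper's argument here is different and more delicate: line ({\bf \ref{routingRules4M}.141}) fires only when the received blacklist parcel has the form $(N,\widetilde{\mathtt{T}},\widehat{\mathtt{T}})$ with $\widetilde{\mathtt{T}}\neq\mathtt{T}$. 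Since $N$ transferred a packet in $\mathtt{T}$, it received the full {\it SOT} for $\mathtt{T}$ and was not on its own blacklist during $\mathtt{T}$, so no entry $(N,\mathtt{T}'')$ with $\mathtt{T}''<\mathtt{T}$ can survive on the sender's blacklist past~$\mathtt{T}$; hence $\widetilde{\mathtt{T}}>\mathtt{T}$, and Lemma~\ref{blacklistStuff} forces $(N,\mathtt{T})$ to have already been removed from $BL$. This yields case (c), not (b).

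A second, smaller gap: for Statements~3(a) and~4 you list ({\bf \ref{routingRules4M}.133-134}) as a deletion point but do not indicate how to handle it. The paper's argument there is not routine: one must compare the transmission $\widetilde{\mathtt{T}}$ in which the eliminated node was removed with the transmission $\widehat{\mathtt{T}}$ in which $(\widehat{N},\widehat{\mathtt{T}})$ was blacklisted, and then use the priority ordering of {\it SOT} parcels (({\bf \ref{routingRules3M}.110}) and ({\bf \ref{routingRules3M}.115})) to show that an honest $N$ necessarily learns of the elimination \emph{before} it learns of the blacklisting, so that no relevant status-report parcel can yet be in $BB$ when ({\bf \ref{routingRules4M}.134}) wipes it. Also, ({\bf \ref{routingRules5M}.203}) does not remove status-report parcels (it clears only $\Theta_{\mathtt{T}}$, $BL$, {\it SOT} parcels, and removal notices), so it should not appear in your exit-point list.
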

\begin{proof} Fix an honest $N \in G$ and a transmission $\mathtt{T}$.  We prove each Statement separately:
	\begin{enumerate}
	\item The first part of statement 1 is Lemma \ref{sigBuffsInit}.  To prove the second part, we track all changes to $SIG_{N,N}$ and show that each change accurately records the value $SIG_{N,N}$ is supposed to hold.  The only changes made to $SIG_{N,N}$ after receiving the full {\it SOT} broadcast occur on lines ({\bf \ref{reShuffleRules}.76}), ({\bf \ref{routingRules2M}.50}), ({\bf \ref{routingRules2M}.80}), and ({\bf \ref{routingRules2M}.82}).  Meanwhile, $SIG_{N,N}$ is {\it supposed} to track all packet movement that occurs within $N$'s own buffers (i.e.\ all packet movement except packet transfers).  The only places packets move within buffers of $N$ are on lines ({\bf \ref{reShuffleRules}.89-90}), ({\bf \ref{routingRules2M}.50}), ({\bf \ref{routingRules2M}.80}), and ({\bf \ref{routingRules2M}.82}).  By the comments on lines ({\bf \ref{routingRules2M}.50}), ({\bf \ref{routingRules2M}.53}), ({\bf \ref{routingRules2M}.80}), and ({\bf \ref{routingRules2M}.82}), it is clear that $SIG_{N,N}$ appropriately tracks changes in potential due to the call to {\it Fill Gap}, while packet movement as on ({\bf \ref{routingRules2M}.53}) does not need to change $SIG_{N,N}$ as packets are swapped, and so there is no net change in potential.  In terms of re-shuffling ({\bf \ref{reShuffleRules}.89-90}), we see that every packet that is re-shuffled causes a change in $SIG_{N,N}$ of $M-m-1$ ({\bf \ref{reShuffleRules}.76}).  Notice the actual change in potential matches this amount, since a packet is removed from a buffer at height $M$ ({\bf \ref{reShuffleRules}.90}), reducing the height of that buffer from $M$ to $M-1$ (a drop in potential of $M$),  and put into a buffer at height $m+1$, increasing the height of the buffer from $m$ to $m+1$ (an increase of $m+1$ to potential).

	\item If $N=S$, there is nothing to show, since the sender's signature buffers' information is stored as needed on ({\bf \ref{routingRules5M}.191}), ({\bf \ref{routingRules5M}.194}), and ({\bf \ref{routingRules5M}.197}), and they are then cleared at the end of every transmission on ({\bf \ref{routingRules5M}.171}) or ({\bf \ref{routingRules5M}.199}).  For any $N \neq S$, we show that from the time $N$ receives the full {\it SOT} broadcast in a transmission $\mathtt{T}$ through the next transmission $\mathtt{T}'$ in which $N$ next hears the full {\it SOT} broadcast, either all of $N$'s signature buffers contain information from the last time they were updated in some round of $\mathtt{T}$, or they are empty and either this information has already been transferred to $N$'s broadcast buffer or $N$ is not on the blacklist for transmission $\mathtt{T}'$ (this will prove Statement 2).  During transmission $\mathtt{T}$, there is nothing to show, as all changes made to any signature buffer over-write earlier changes, so throughout $\mathtt{T}$, the signature buffers will always contain the most current information.  It remains to show that between the end of $\mathtt{T}$ and the time $N$ receives the full {\it SOT} broadcast of transmission $\mathtt{T}'$, the only change that $N$'s signature buffers can make is to be cleared, and this can happen only if either the information contained in them is first transferred to $N$'s broadcast buffer, or if $(N,\mathtt{T}, \mathtt{T}')$ does not appear in the {\it SOT} broadcast of transmission $\mathtt{T}'$ (and hence the signature information will not be needed anyway).  To do this, we list all places in the pseudo-code that call for a change to one of the signature buffers or removing data from the broadcast buffer, and argue that one of these two things must happen.  In particular, the only places the signature buffers of $N$ change (after initialization) are: ({\bf \ref{routingRules2M}.48-49}), ({\bf \ref{routingRules2M}.50}), ({\bf \ref{routingRules2M}.74-75}), ({\bf \ref{routingRules2M}.80}), ({\bf \ref{routingRules2M}.82}), ({\bf \ref{routingRules4M}.128}), ({\bf \ref{routingRules4M}.133}), ({\bf \ref{routingRules4M}.141}), ({\bf \ref{routingRules4M}.146}), and ({\bf \ref{reShuffleRules}.76}).  The only place that information that was once in one of $N$'s signature buffers is removed from the broadcast buffer is ({\bf \ref{routingRules4M}.134}).
	
	\hspace{.5cm}First notice that because $N$ transfers a packet in transmission $\mathtt{T}$, $N$ must have received the complete {\it SOT} broadcast for transmission $\mathtt{T}$ (Lemma \ref{sigBuffsInit}).  For all rounds of all transmissions between $\mathtt{T}+1$ and the time $N$ receives the full {\it SOT} broadcast for transmission $\mathtt{T}'$, lines ({\bf \ref{routingRules2M}.48-51}), ({\bf \ref{routingRules2M}.74-78}), and ({\bf \ref{reShuffleRules}.76}) will never be reached by $N$ (see Lemma \ref{sigBuffsInit} and its proof).  Similarly, line ({\bf \ref{routingRules2M}.80}) will never be reached since ({\bf \ref{routingRules2M}.63}) will always be satisfied.  Although ({\bf \ref{routingRules2M}.82}) may be reached, we argue that it will not change $SIG_{N,N}$ by arguing that for all rounds between $\mathtt{T}+1$ and the time $N$ receives the full {\it SOT} broadcast for transmission $\mathtt{T}'$, there will never be codeword packets occupying a higher slot than the ghost packet.  More precisely, we will show that for all rounds between $\mathtt{T}+1$ and the time $N$ receives the full {\it SOT} broadcast for transmission $\mathtt{T}'$, either $H_{GP} = \bot$ or $H_{GP} = H_{IN}+1$, and then by Statements 1 and 2 of Lemma \ref{pseudo} (together with the fact that $N$ is honest and so we may apply Lemma \ref{similar}), {\it Fill Gap} on ({\bf \ref{routingRules2M}.82}) will not be performed (see comments on that line).  That $H_{GP}=\bot$ or $H_{GP}=H_{IN}+1$ for all of these rounds follows from the fact that $H_{GP}$ will be set to $\bot$ at the end of $\mathtt{T}$ ({\bf \ref{routingRules5M}.209}), after which it can only be modified on ({\bf \ref{routingRules2M}.66}), ({\bf \ref{routingRules2M}.72}), ({\bf \ref{routingRules2M}.76}), ({\bf \ref{routingRules2M}.78}), ({\bf \ref{routingRules2M}.80}), or ({\bf \ref{routingRules2M}.82}).  Notice that all of these set $H_{GP}$ to $\bot$ or $H_{IN}+1$ and that $H_{IN}+1$ cannot change for all rounds between $\mathtt{T}+1$ and the time $N$ receives the full {\it SOT} broadcast for transmission $\mathtt{T}'$ by Lemma \ref{sigBuffsInit}.

	\hspace{.5cm}It remains to consider lines ({\bf \ref{routingRules4M}.128}), ({\bf \ref{routingRules4M}.133}), ({\bf \ref{routingRules4M}.141}), ({\bf \ref{routingRules4M}.146}), and ({\bf \ref{routingRules4M}.134}); the first four clear the signature buffers, and the last clears the broadcast buffer.  So it remains to argue that if any of these lines are reached, either the broadcast buffer is storing all of the information that the signature buffers held at the end of $\mathtt{T}$, or $(N, \mathtt{T},\mathtt{T}')$ cannot appear as part of the {\it SOT} broadcast of transmission $\mathtt{T}'$.  Line ({\bf \ref{routingRules4M}.128}) is clearly covered by the latter case, since if a parcel of this form is received in some transmission $\widehat{\mathtt{T}} \in [\mathtt{T}+1 .. \mathtt{T}']$, then $(N,\mathtt{T})$ is not on the sender's blacklist as of $\widehat{\mathtt{T}} > \mathtt{T}$, and hence $(N, \mathtt{T})$ will never be able to be re-added to the blacklist after this point (see ({\bf \ref{routingRules5M}.188})).  Similar reasoning shows that line ({\bf \ref{routingRules4M}.146}) is covered by one of these two cases.  In particular, if $N$ reaches line ({\bf \ref{routingRules4M}.146}) in some transmission $\widehat{\mathtt{T}} \in [\mathtt{T}+1..\mathtt{T}']$, then either $N$ will add the information in its signature buffers into its broadcast buffers as on ({\bf \ref{routingRules4M}.142-145}) before reaching ({\bf \ref{routingRules4M}.146}), or else $N$ was not on the blacklist as of $\widehat{\mathtt{T}}$, and hence it is impossible for $(N, \mathtt{T}, \mathtt{T}')$ to be a part of the {\it SOT} broadcast for transmission $\mathtt{T}'$.  Now suppose $N$ reaches ({\bf \ref{routingRules4M}.133-134}) in some round of a transmission $\widehat{\mathtt{T}} > \mathtt{T}$ indicating that a node $\widehat{N}$ is to be eliminated.  In order to reach ({\bf \ref{routingRules4M}.133-134}) in transmission $\widehat{\mathtt{T}}$, $N$ must not have known that $\widehat{N}$ was to be eliminated before that point ({\bf \ref{routingRules4M}.131}), and since $N$ received the complete {\it SOT} broadcast of transmission $\mathtt{T}$ (by Lemma \ref{sigBuffsInit} together with the hypotheses that $N$ is honest and transferred a packet in $\mathtt{T}$), $\widehat{N}$ must have been eliminated in some transmission $\widetilde{\mathtt{T}} \geq \mathtt{T}$.  In particular, if $\widetilde{\mathtt{T}} = \mathtt{T}$, then $(N,\mathtt{T})$ can never be added to the blacklist (since ({\bf \ref{routingRules4M}.188}) cannot be reached in transmission $\mathtt{T}$ if {\it Eliminate Node} is reached in that transmission); while if $\widetilde{\mathtt{T}} > \mathtt{T}$, then $(N, \mathtt{T})$ will be cleared from the blacklist as on ({\bf \ref{routingRules4M}.171}) (if it was on the blacklist), and as already remarked, $(N, \mathtt{T})$ can never again appear on the blacklist after this.  
	
	\hspace{.5cm} Now suppose ({\bf \ref{routingRules4M}.141}) is reached in some transmission $\widehat{\mathtt{T}} > \mathtt{T}$ and the signature buffers are cleared on this line.  Now before line ({\bf \ref{routingRules4M}.141}) was reached, by induction, one of the three statements (a), (b), or (c) was true.  If (b) or (c) was true, then changes made on ({\bf \ref{routingRules4M}.141}) will not affect the fact that (b) or (c) will remain true.  Therefore, assume that we are in case (a) before reaching ({\bf \ref{routingRules4M}.141}), i.e.\ that when ({\bf \ref{routingRules4M}.141}) is reached in transmission $\widehat{\mathtt{T}}$, $N$'s signature buffers contain the information that they had at the end of $\mathtt{T}$.  Since ({\bf \ref{routingRules4M}.141}) was reached, it must have been that $(N, \widetilde{\mathtt{T}}, \widehat{\mathtt{T}})$ was received on ({\bf \ref{routingRules4M}.137}) as part of the {\it SOT} broadcast for transmission $\widehat{\mathtt{T}}$, for some $\widetilde{\mathtt{T}}$.  We first argue $\widetilde{\mathtt{T}} \geq \mathtt{T}$.  To see this, since $N$ is honest, it will not transfer any packets in $\mathtt{T}$ if it is on its own version of the blacklist (({\bf \ref{routingRulesM}.31-33}) and ({\bf \ref{routingRulesM}.35-37})).  Since we know that $N$ {\it did} transfer packets in transmission $\mathtt{T}$ (by hypothesis), and also $N$ received the full {\it SOT} broadcast of that same transmission (Lemma \ref{sigBuffsInit}), either $N$ was not on the blacklist as of the start of transmission $\mathtt{T}$, or $N$ received information as on ({\bf \ref{routingRules4M}.147}) indicating $N$ could be removed from the blacklist.  Both of these cases imply that by the end of $\mathtt{T}$, $(N, \widetilde{\mathtt{T}})$ can never be on the blacklist for any $\widetilde{\mathtt{T}} < \mathtt{T}$.  Thus, $\widetilde{\mathtt{T}} \geq \mathtt{T}$, as claimed.  Since we are assuming case (a), if $\widetilde{\mathtt{T}} = \mathtt{T}$, then ({\bf \ref{routingRules4M}.141}) will {\it not} be satisfied.  On the other hand, if $\widetilde{\mathtt{T}} > \mathtt{T}$, then $N$ has appeared on the blacklist for some transmission {\it after} $\mathtt{T}$, and then Lemma \ref{blacklistStuff} guarantees that $(N, \mathtt{T})$ is not on the blacklist as of $\widehat{\mathtt{T}} > \mathtt{T}$, which as noted above implies $(N, \mathtt{T}, \mathtt{T}')$ cannot be part of the {\it SOT} broadcast of transmission $\mathtt{T}'$.

	\item For Statement (a), we track all the times parcels are removed from $N$'s broadcast buffer $BB$, and ensure that if ever $N$ removes a status report parcel belonging to $\widehat{N}$ for some transmission $\widehat{\mathtt{T}}$, then $(\widehat{N}, \widehat{\mathtt{T}})$ is no longer on the sender's blacklist.  If $N=S$, notice the only place that information concerning other nodes' status report parcels is removed from the sender's data buffer is ({\bf \ref{routingRules5M}.171}), and at this point $\widehat{N}$ is not on the blacklist since the blacklist is cleared on this same line.
	
	\hspace{.5cm}If $N \neq S$, changes to $BB$ occur only on lines ({\bf \ref{routingRules4M}.134}), ({\bf \ref{routingRules4M}.139}), ({\bf \ref{routingRules4M}.149}), ({\bf \ref{routingRules4M}.142-145}), and ({\bf \ref{routingRules4M}.154}).  The former three lines {\it remove} things from $BB$, while the latter lines {\it add} things to $BB$.  In terms of statement (a), we must ensure whenever one of the former three lines is reached, there will never be a status report parcel from $\widehat{N}$ and corresponding to transmission $\widehat{\mathtt{T}}$ that is removed from $BB$ if $(\widehat{N},\widehat{\mathtt{T}})$ is on the blacklist.  Looking first at line ({\bf \ref{routingRules4M}.134}), suppose that $N$ reaches line ({\bf \ref{routingRules4M}.134}) in some transmission $\widetilde{\mathtt{T}} \geq \mathtt{T}$.  If $(\widehat{N}, \widehat{\mathtt{T}}, \widetilde{\mathtt{T}})$ was {\it not} a part of the {\it SOT} broadcast of transmission $\widetilde{\mathtt{T}}$, then there is nothing to show (since $\widehat{N}$ is not on the blacklist as of the outset of $\widetilde{\mathtt{T}}$).  So suppose that $(\widehat{N}, \widehat{\mathtt{T}}, \widetilde{\mathtt{T}})$ was a part of the {\it SOT} broadcast of transmission $\widetilde{\mathtt{T}}$.  Since reaching line ({\bf \ref{routingRules4M}.134}) requires that $N$ has newly learned that a node has been added to $EN$ ({\bf \ref{routingRules4M}.131}), let $N'$ denote this node, and let $\mathtt{T}'$ denote the round that $N'$ was eliminated from the network as on ({\bf \ref{routingRules5M}.170}).  First note that necessarily $\mathtt{T}' < \widehat{\mathtt{T}}$.  After all, the blacklist will be cleared on line ({\bf \ref{routingRules5M}.171}) of round $\mathtt{T}'$, and hence if $(\widehat{N}, \widehat{\mathtt{T}})$ is still on the blacklist as of the outset of $\widetilde{\mathtt{T}}$, it must have been added afterwards.  We now argue that because $\mathtt{T}' < \widehat{\mathtt{T}}$, the priority rules of transferring broadcast information will dictate that all honest nodes will necessarily learn $N'$ has been eliminated {\it before} they learn that $(\widehat{N}, \widehat{\mathtt{T}})$ is on the blacklist.  From this, we will conclude that when $N$ reaches ({\bf \ref{routingRules4M}.134}) in transmission $\widetilde{\mathtt{T}}$ and learns that $N'$ should be eliminated, that $N$ has not yet learned that $(\widehat{N}, \widehat{\mathtt{T}})$ is on the blacklist, and hence $N$'s broadcast buffer will not be storing any of $\widehat{N}$'s status report parcels for $\widehat{\mathtt{T}}$ ({\bf \ref{routingRules4M}.152}).  
	
	\hspace{.5cm}It remains to show that any honest node $A \in G$ will learn that $N'$ has been eliminated {\it before} they learn $(\widehat{N},\widehat{\mathtt{T}})$ is on the blacklist.  So fix an honest node $A \in G$.  Suppose $A$ first learns $(\widehat{N}, \widehat{\mathtt{T}})$ is on the blacklist via a parcel of the form $(\widehat{N}, \widehat{\mathtt{T}}, \mathtt{X})$ that it received as on ({\bf \ref{routingRules4M}.137}) of transmission $\mathtt{X}$.  Clearly, $\mathtt{X} > \widehat{\mathtt{T}}$, since $(\widehat{N}, \widehat{\mathtt{T}})$ can only be put on the blacklist at the very end of transmission $\widehat{\mathtt{T}}$.  Therefore, since $\mathtt{T}' < \widehat{\mathtt{T}} < \mathtt{X}$, we have that $(N', \mathtt{X})$ will be a part of the {\it SOT} broadcast for transmission $\mathtt{X}$, indicating that $N'$ has been eliminated ({\bf \ref{routingRules5M}.200}).  Since $A$ is honest, it will therefore receive $(N', \mathtt{X})$ {\it before} it receives $(\widehat{N}, \widehat{\mathtt{T}}, \mathtt{X})$ (see priority rules for receiving broadcast parcels, ({\bf \ref{routingRules3M}.110}) and ({\bf \ref{routingRules3M}.115}))

	\hspace{.5cm}We next consider when status report parcels are removed from $BB$ as on ({\bf \ref{routingRules4M}.139}).  In this case, $N$ has received a {\it SOT} broadcast parcel of form $(\widehat{N}, \widehat{\mathtt{T}}, \mathtt{T}')$ ({\bf \ref{routingRules4M}.137}), and $N$ is removing from $BB$ all of $\widehat{N}$'s status report parcels corresponding to transmissions {\it other} than $\widehat{\mathtt{T}}$.  First note that Lemma \ref{blacklistStuff} guarantees that $\widehat{N}$ is on at most one blacklist at any time.  Since $N$ received a {\it SOT} parcel of the form  $(\widehat{N}, \widehat{\mathtt{T}}, \mathtt{T}')$ during transmission $\mathtt{T}'$, it must be that $(\widehat{N}, \widehat{\mathtt{T}})$ was on the sender's blacklist at the outset of $\mathtt{T}'$, and since nothing can be added to the blacklist until the very end of a transmission ({\bf \ref{routingRules5M}.188}), only $(\widehat{N}, \widehat{\mathtt{T}})$ can be on the sender's blacklist at the outset of $\mathtt{T}'$.  This case is now settled, as we have shown that $N$ does not remove any of the status report parcels from $\widehat{N}$ corresponding to $\widehat{\mathtt{T}}$ on ({\bf \ref{routingRules4M}.139}), and this is the only transmission for which $\widehat{N}$ can be on the blacklist (at least through $\mathtt{T}'$).
	
	\hspace{.5cm}To complete Statement (a), it remains to consider line ({\bf \ref{routingRules4M}.149}).  But this is immediate, as if the sender at any time removes $(\widehat{N}, \widehat{\mathtt{T}})$ from the blacklist, then it can never again be re-added (since nodes are added to the blacklist at the very {\it end} of a transmission ({\bf \ref{routingRules5M}.188}), they are not removed as on ({\bf \ref{routingRules4M}.166}) or ({\bf \ref{routingRules5M}.171}) until at least the next transmission, at which point the same $(node, transmission)$ pair $(\widehat{N}, \widehat{\mathtt{T}})$ can never again be added to the blacklist as on ({\bf \ref{routingRules5M}.188}) since $\widehat{\mathtt{T}}$ has already passed).  Therefore, when $N$ reaches ({\bf \ref{routingRules4M}.149}), if the items deleted from $BB$ correspond to $\widehat{N}$, then $N$ must have received a broadcast parcel of form $(\widehat{N},0,\mathtt{T})$ as on ({\bf \ref{routingRules4M}.147}), indicating that $\widehat{N}$ was no longer on the blacklist.  Consequently, the status parcels deleted will never again be needed since $(\widehat{N}, \widehat{\mathtt{T}})$ can never again be on the blacklist.

	\hspace{.5cm}Part (a) of Statement 3 of the lemma (now proven) states that no status report parcel still needed by the sender will ever be deleted from a node's broadcast buffer.  Part (b) states that a node's broadcast buffer will not hold extraneous status report parcels, i.e.\ status reports corresponding to multiple transmissions for the same node.  This is immediate, since whenever a node $N$ learns a node $(\widehat{N}, \mathtt{T}')$ is on the blacklist as on ({\bf \ref{routingRules4M}.137}), then $N$ will immediately delete all of its status report parcels from $\widehat{N}$ corresponding to transmissions other than $\mathtt{T}'$ ({\bf \ref{routingRules4M}.139}).  The fact that the stored parcels have the correct information (i.e.\ that they address the appropriate reason for failure as on ({\bf \ref{routingRules4M}.142-145})) follows from the fact that $N$ will only initially store a status report parcel if it contains the correct information ({\bf \ref{routingRules4M}.153}).

	\item There are three lines on which the broadcast parcels of the kind relevant to Statement 4) are removed from $N$'s broadcast buffer: ({\bf \ref{routingRules4M}.134}), ({\bf \ref{routingRules4M}.139}), and ({\bf \ref{routingRules4M}.149}).  We consider each of these three lines.  Suppose first that the parcel $(B, \widehat{N}, \widehat{\mathtt{T}})$ is removed from $N$'s broadcast buffer as on line ({\bf \ref{routingRules4M}.134}) of some transmission $\mathtt{T}'$.  In particular, $N$ learns for the first time in the {\it SOT} broadcast of transmission $\mathtt{T}'$ that some node $\widetilde{N}$ has been eliminated.  Let $\widetilde{\mathtt{T}}$ denote the transmission that the sender eliminated this node (as on ({\bf \ref{routingRules5M}.169-177})).  If $\widetilde{\mathtt{T}} > \widehat{\mathtt{T}}$, then $(\widehat{N}, \widehat{\mathtt{T}})$ will be cleared from the blacklist on line ({\bf \ref{routingRules5M}.171}) of $\widetilde{\mathtt{T}}$, and hence when $(B, \widehat{N}, \widehat{\mathtt{T}})$ is removed from $N$'s broadcast buffer in transmission $\mathtt{T}' > \widetilde{\mathtt{T}}$, $(\widehat{N}, \widehat{\mathtt{T}})$ will no longer be on the blacklist, as required.  Therefore, assume $\widetilde{\mathtt{T}} < \widehat{\mathtt{T}}$ (equality here is impossible since lines ({\bf \ref{routingRules5M}.169-177}) and ({\bf \ref{routingRules5M}.188}) can never both be reached in a single transmission, see e.g.\ ({\bf \ref{routingRules5M}.177})).  Let $\mathtt{X}$ denote the transmission in which $N$ first learned that $(\widehat{N}, \widehat{\mathtt{T}})$ was on the blacklist, i.e.\ $N$ received a parcel of the form $(\widehat{N}, \widehat{\mathtt{T}}, \mathtt{X})$ on ({\bf \ref{routingRules4M}.137}) of transmission $\mathtt{X}$.  Clearly, $\mathtt{X} > \widehat{\mathtt{T}}$, since $(\widehat{N}, \widehat{\mathtt{T}})$ can only be added to the blacklist at the end of $\widehat{\mathtt{T}}$ ({\bf \ref{routingRules5M}.188}).  Also, $\mathtt{X} \leq \mathtt{T}'$, since by hypothesis a parcel of the form $(B, \widehat{N}, \widehat{\mathtt{T}})$ is removed from $N$'s broadcast buffer on line ({\bf \ref{routingRules4M}.134}) of $\mathtt{T}'$, and this parcel can only have been added to $N$'s broadcast buffer in the first place if $N$ already knew that $(\widehat{N}, \widehat{\mathtt{T}})$ was blacklisted ({\bf \ref{routingRules4M}.151}).  Lastly, $\mathtt{X} \geq \mathtt{T}'$, since $\widetilde{\mathtt{T}} < \widehat{\mathtt{T}}$ implies that $\widetilde{N}$ was eliminated {\it before} $(\widehat{N}, \widehat{\mathtt{T}})$ was added to the blacklist, and therefore by the priorities of sending/receiving broadcast parcels (({\bf \ref{routingRules3M}.110}) and ({\bf \ref{routingRules3M}.115})), we have that an honest $N$ will learn that $\widetilde{N}$ has been eliminated {\it before} it will learn that $(\widehat{N}, \widehat{\mathtt{T}})$ is on the blacklist.  Combining these inequalities shows that $\mathtt{X} \geq \mathtt{T}'$ and $\mathtt{X} \leq \mathtt{T}'$, so $\mathtt{X} = \mathtt{T}'$.  But this implies that when ({\bf \ref{routingRules4M}.134}) is reached in $\mathtt{T}'$, $N$ does not yet know that $(\widehat{N}, \widehat{\mathtt{T}})$ is on the blacklist, and consequently the parcel $(B, \widehat{N}, \widehat{\mathtt{T}})$ cannot yet be stored in $N$'s broadcast buffer, which contradicts the fact that it was removed on ({\bf \ref{routingRules4M}.134}) of $\mathtt{T}'$.  Therefore, whenever ({\bf \ref{routingRules4M}.134}) is reached, either $(\widehat{N}, \widehat{\mathtt{T}})$ will no longer be on the blacklist, or there will be no parcels of the form $(B, \widehat{N}, \widehat{\mathtt{T}})$ that are removed.

\hspace{.5cm}Suppose now that the parcel $(B, \widehat{N}, \widehat{\mathtt{T}})$ is removed from $N$'s broadcast buffer as on line ({\bf \ref{routingRules4M}.139}) or ({\bf \ref{routingRules4M}.149}) of some transmission $\mathtt{T}'$.  In either case, by looking at the comments on these lines together with Lemma \ref{blacklistStuff}, $(\widehat{N}, \widehat{\mathtt{T}})$ has already been removed from the blacklist if a parcel of the form $(B, \widehat{N}, \widehat{\mathtt{T}})$ is removed on either of these lines.
	\end{enumerate}
\vspace{-22pt}
\end{proof}
\begin{lemma} \label{sigRelationships} If $A,B \in G$ are honest (not corrupt), in any transmission $\mathtt{T}$ for which both $A$ and $B$ have received the full {\it SOT} broadcast:
	\begin{enumerate}
	\item Between the time $B$ accepts a packet from $A$ on line ({\bf \ref{routingRules2M}.77}) through the time $A$ gets confirmation of receipt (see Definition \ref{confRec}) for it as on ({\bf \ref{routingRules2M}.50}), we have:
		\begin{itemize}
		\item $SIG^B[1]_{A,B} = 1 + SIG^A[1]_{A,B}$\footnote{If the packet accepted corresponds to an old codeword, then $SIG^B[1]_{A,B} = SIG^A[1]_{A,B}$ and $SIG^B[p]_{A,B} = SIG^A[1]_{A,B} = \bot$.}
		\item $SIG^B[p]_{A,B} = 1 + SIG^A[p]_{A,B}$\lastfootnote
		\item $SIG^B[2]_{A,B} = M + SIG^A[3]_{A,B}$, where $M$ is the value of $H_{FP}$ on ({\bf \ref{routingRules2M}.60}) (according to $A$'s view) in the same round in which ({\bf \ref{routingRules2M}.77}) was reached by $B$
		\item $SIG^B[3]_{A,B} = m + SIG^A[2]_{A,B}$, where $m$ is the value of $H_{GP}$ on ({\bf \ref{routingRules2M}.75}) (according to $B$'s view) in the same round in which ({\bf \ref{routingRules2M}.77}) was reached by $B$
		\end{itemize}
	\item At all other times, we have that $SIG^B[1]_{A,B} = SIG^A[1]_{A,B}$, $SIG^B[2]_{A,B} = SIG^A[3]_{A,B}$, $SIG^B[3]_{A,B} = SIG^A[2]_{A,B}$, and $SIG^B[p]_{A,B} = SIG^A[p]_{A,B}$ for each packet $p$ that is part of the current codeword.
	\end{enumerate}
\end{lemma}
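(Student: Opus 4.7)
\textbf{Proof Proposal for Lemma \ref{sigRelationships}.}

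The plan is to argue by tracking all updates to the four signature values $SIG^A[1]_{A,B}, SIG^A[2]_{A,B}, SIG^A[3]_{A,B}, SIG^A[p]_{A,B}$ and their counterparts at $B$, and showing inductively that (i) the equalities of Statement 2 hold at every ``quiescent'' moment, and (ii) during the window between $B$ accepting a packet on line ({\bf \ref{routingRules2M}.77}) and $A$ reaching line ({\bf \ref{routingRules2M}.50}) for that same packet, precisely the four equalities of Statement 1 hold. The base case is easy: Lemma \ref{sigBuffsInit} (Statement 2 of Lemma \ref{sigBuffersAreCorrect}) guarantees that when $A$ and $B$ both receive the complete \emph{SOT} broadcast for $\mathtt{T}$, all eight values are cleared (to $0$ or $\bot$), so the Statement 2 equalities trivially hold.

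For the inductive step I would enumerate every line that can modify any of these values for honest $A,B$: the outgoing buffer side at $A$ changes only on ({\bf \ref{routingRules2M}.48-49}); the incoming buffer side at $B$ changes only on ({\bf \ref{routingRules2M}.74-75}); and lines ({\bf \ref{routingRules4M}.128}), ({\bf \ref{routingRules4M}.133}), ({\bf \ref{routingRules4M}.141}), ({\bf \ref{routingRules4M}.146}) (which reset everything, returning to a cleared state where both sides agree and no packet transfer is in progress). Consider a packet transfer in some round $\mathtt{t}$. In Stage 2, $A$ sends on line ({\bf \ref{routingRules2M}.60}) the values $SIG^A[1]_{A,B}+1$, $SIG^A[3]_{A,B}+H_{FP}$, $SIG^A[p]_{A,B}+1$, where $H_{FP}=M$. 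When $B$ accepts on line ({\bf \ref{routingRules2M}.77}), lines ({\bf \ref{routingRules2M}.74-75}) set $SIG^B[1]_{A,B}$, $SIG^B[2]_{A,B}$, $SIG^B[p]_{A,B}$ equal to the received (signed) values, and increment $SIG^B[3]_{A,B}$ by $H_{GP}=m$. Using the inductive hypothesis that the Statement 2 equalities held at the start of the round, one computes directly that the four Statement 1 equalities now hold.

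The window closes once $A$ gets confirmation. Concretely, in some later round $\mathtt{t}'$ with $\mathtt{t}' \geq \mathtt{t}+1$, $B$ sends on line ({\bf \ref{routingRulesM}.11}) its current signed values for $SIG^B[1]_{A,B}$, $SIG^B[3]_{A,B}$, $SIG^B[p]_{A,B}$ (which become $A$'s value for $SIG^A[2]_{A,B}$ per the comment on line ({\bf \ref{routingRulesM}.07})); $A$ receives and, after {\bf \em Verify Signature Two} accepts, reaches ({\bf \ref{routingRules2M}.45-49}). Line ({\bf \ref{routingRules2M}.48}) overwrites $A$'s values for $SIG^A[1]_{A,B}, SIG^A[2]_{A,B}, SIG^A[p]_{A,B}$ with what $B$ sent, and line ({\bf \ref{routingRules2M}.49}) adds $H_{FP}=M$ to $SIG^A[3]_{A,B}$. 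Plugging in the Statement 1 equalities that held going into the round (and using Statement 3 of Claim \ref{obsBelowHere} through Lemma \ref{similar} to ensure $H_{FP}$ stays equal to $M$) yields $SIG^A[1]_{A,B}=SIG^B[1]_{A,B}$, $SIG^A[3]_{A,B}=SIG^B[2]_{A,B}$, $SIG^A[2]_{A,B}=SIG^B[3]_{A,B}$, and $SIG^A[p]_{A,B}=SIG^B[p]_{A,B}$, restoring Statement 2. One must also verify that during the intermediate rounds (where $B$'s Stage 1 communication may be lost, or $A$ may fail {\bf \em Verify Signature Two}, etc.) neither side's values change: this follows because ({\bf \ref{routingRules2M}.46}) can only be satisfied once $A$ receives a sufficiently fresh $RR$, and in any round in which it is not satisfied, lines ({\bf \ref{routingRules2M}.48-49}) are skipped (and Statement 3 of Claim \ref{obsBelowHere} keeps the flagged packet unchanged at $B$'s end as well).

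The main obstacle will be the bookkeeping around the intermediate rounds and the ``off-path'' updates -- in particular making sure that when a status-buffer clearing event (lines ({\bf \ref{routingRules4M}.128}), ({\bf \ref{routingRules4M}.133}), ({\bf \ref{routingRules4M}.141}), ({\bf \ref{routingRules4M}.146})) fires at one endpoint, the other endpoint's matching clear also fires (or else no packet transfer was in progress), so that the invariants are not broken asymmetrically. This will follow by verifying that both honest endpoints clear simultaneously upon receipt of the complete \emph{SOT} broadcast for each new transmission, and that between such synchronized clears only the paired updates on lines ({\bf \ref{routingRules2M}.48-49}) and ({\bf \ref{routingRules2M}.74-75}) touch the values of interest. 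Once this synchronization is established, the rest is a direct line-by-line verification.
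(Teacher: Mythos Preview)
Your proposal is correct and follows essentially the same inductive approach as the paper: clear at \emph{SOT}, then track the only two update sites (lines ({\bf \ref{routingRules2M}.48--49}) at $A$ and ({\bf \ref{routingRules2M}.74--75}) at $B$), showing Statement~2 holds before each accept and Statement~1 holds after, until $A$'s confirmation restores Statement~2.

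Two small points of comparison. First, the paper organizes the ``intermediate rounds'' argument more explicitly by proving an alternation property: the update sites ({\bf \ref{routingRules2M}.48--49}) and ({\bf \ref{routingRules2M}.74--75}) strictly alternate, with the latter always occurring first. You handle the $A$ side of this (line ({\bf \ref{routingRules2M}.46}) gates on a fresh $RR$), but you should also argue that $B$ cannot reach ({\bf \ref{routingRules2M}.74--75}) twice before $A$ confirms: once $RR$ is set to $\mathtt{t}$ on ({\bf \ref{routingRules2M}.78}), and since $A$'s $FR$ cannot exceed $\mathtt{t}$ until ({\bf \ref{routingRules2M}.56}) is reached again (which by Claim~\ref{obsBelowHere} requires confirmation), line ({\bf \ref{routingRules2M}.73}) will fail. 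Your sentence invoking Claim~\ref{obsBelowHere} ``at $B$'s end'' is slightly misdirected; that claim governs $A$'s outgoing flagged packet, and the $B$-side argument needs the $RR$/$FR$ comparison just described.

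Second, your ``main obstacle'' about asymmetric clearing is a non-issue under the lemma's hypothesis. The clearing lines ({\bf \ref{routingRules4M}.128}), ({\bf \ref{routingRules4M}.133}), ({\bf \ref{routingRules4M}.141}), ({\bf \ref{routingRules4M}.146}) fire only during receipt of the \emph{SOT} broadcast, and the lemma only asserts anything \emph{after} both $A$ and $B$ have received the full \emph{SOT} for $\mathtt{T}$; by Lemma~\ref{sigBuffsInit} no packet transfer can be in progress before that point, so both sides start the window cleared and synchronized, and no further clears occur within $\mathtt{T}$. The paper simply restricts attention to the post-\emph{SOT} window and considers only the two update sites.
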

\begin{proof} The structure of the proof will be as follows.  We begin by observing all signature buffers are initially empty ({\bf \ref{setupCode2M}.48}) and ({\bf \ref{setupCode2M}.54}), and that for any transmission $\mathtt{T}$, both $SIG^A$ and $SIG^B$ are cleared before any packets are transferred (Lemma \ref{sigBuffsInit}).  We will then focus on a single transmission for which $A$ and $B$ have both received the full {\it SOT} broadcast, and prove that all changes made to $SIG^A$ and $SIG^B$ during this transmission (after the buffers are cleared upon receipt of the {\it SOT} broadcast) respect the relationships in the lemma.  Since the only changes occur on lines ({\bf \ref{routingRules2M}.48-49}) and ({\bf \ref{routingRules2M}.74-75}), it will be enough to consider only these 4 lines.  Furthermore, if lines ({\bf \ref{routingRules2M}.48-49}) were reached $x$ times by $A$ in the transmission, and lines ({\bf \ref{routingRules2M}.74-75}) were reached $y$ times by $B$, then:
	\begin{enumerate}
	\item[(a)] Either $y=x$ or $y=x+1$,
	\item[(b)] Neither set of lines can be reached twice consecutively (without the other set being reached in between)
	\item[(c)] Lines ({\bf \ref{routingRules2M}.74-75}) are necessarily reached {\it before} lines ({\bf \ref{routingRules2M}.48-49}) (i.e.\ in any transmission, necessarily $y$ will change from zero to 1 {\it before} $x$ does).
	\end{enumerate}
Notice that the the top statement follows from the second two statements, so we will only prove them below.  

We first prove the three statements above.  We first define $x$ more precisely: $x$ begins each transmission set to zero, and increments by one every time line 50 is reached (just {\it after} $A$'s signature buffers are updated on lines ({\bf \ref{routingRules2M}.48-49})).  Also, define $y$ to begin each transmission equal to zero, and to increment by one when line ({\bf \ref{routingRules2M}.74}) is reached (just {\it before} $B$'s signature buffers are updated on lines ({\bf \ref{routingRules2M}.74-75})).  Statement (c) is immediate, since $RR$ begins every round equal to $-1$ (lines ({\bf \ref{setupCode2M}.50}) and ({\bf \ref{routingRules5M}.209})), and can only be changed to a higher index on ({\bf \ref{routingRules2M}.78}).  Therefore, ({\bf \ref{routingRules2M}.46}) can never be satisfied before ({\bf \ref{routingRules2M}.78}) is reached, which implies ({\bf \ref{routingRules2M}.48}) is never reached before ({\bf \ref{routingRules2M}.74}) is.  We now prove Statement (b).  Suppose lines ({\bf \ref{routingRules2M}.48-49}) are reached in some round $\mathtt{t}$.  Notice since we are in round $\mathtt{t}$ when this happens, and because $RR$ can never have a higher index than the current round index, and the most recent round $RR$ could have been set is the previous round, we have that $B$'s value for $RR$ (and the one $A$ is using on the comparison on ({\bf \ref{routingRules2M}.45-46})) is at most $\mathtt{t}-1$.  Also, $H_{FP}$ and $FR$ will be set to $\bot$ on ({\bf \ref{routingRules2M}.51}) of $\mathtt{t}$.  If $FR$ ever changes to a non-$\bot$ value after this, it can only happen on ({\bf \ref{routingRules2M}.56}), and so the value it takes must be at least $\mathtt{t}$.  Therefore, if at any time after $\mathtt{t}$ we have that $FR \neq \bot$, then if $RR$ has not changed since $\mathtt{t}-1$, then ({\bf \ref{routingRules2M}.46}) can never pass, since $RR \leq \mathtt{t}-1 < \mathtt{t} \leq FR$.  Consequently, ({\bf \ref{routingRules2M}.78}) must be reached before ({\bf \ref{routingRules2M}.48-49}) can be reached again after round $\mathtt{t}$, and hence so must ({\bf \ref{routingRules2M}.74-75}).  This shows that ({\bf \ref{routingRules2M}.48-49}) can never be reached twice, without ({\bf \ref{routingRules2M}.74-75}) being reached in between.

Conversely, suppose lines ({\bf \ref{routingRules2M}.74-75}) are reached in some round $\mathtt{t}$.  Notice since we are in round $\mathtt{t}$ when this happens, and because $FR$ can never have a higher index than the current round index, we have that $A$'s value for $FR$ (and the one $B$ is using on the comparison on ({\bf \ref{routingRules2M}.73})) is at most $\mathtt{t}$.  Also, $RR$ will be set to $\mathtt{t}$ on ({\bf \ref{routingRules2M}.78}) of round $\mathtt{t}$, and $RR$ cannot change again until (at some later round) ({\bf \ref{routingRules2M}.73}) is satisfied again (or the end of the transmission, in which case their is nothing to show).  If line ({\bf \ref{routingRules2M}.56}) is NOT reached after ({\bf \ref{routingRules2M}.74-75}) of round $\mathtt{t}$, then $FR$ can never increase to a larger round index, so $FR$ will remain at most $\mathtt{t}$.  Consequently, line ({\bf \ref{routingRules2M}.73}) can never pass, since if $B$ receives the communication from $A$ on line ({\bf \ref{routingRules2M}.62}), then by the above comments $RR \geq \mathtt{t} \geq FR$.  Consequently, ({\bf \ref{routingRules2M}.56}) must be reached before ({\bf \ref{routingRules2M}.73}) can be reached again after round $\mathtt{t}$.  However, by Statement 3 of Lemma \ref{obsBelowHere}, ({\bf \ref{routingRules2M}.56}) cannot be reached until $A$ receives confirmation of receipt from $B$ (see Definition \ref{confRec}), i.e.\ ({\bf \ref{routingRules2M}.56}) can be reached after ({\bf \ref{routingRules2M}.74-75}) of round $\mathtt{t}$ only if lines ({\bf \ref{routingRules2M}.48-49}) are reached.

We now prove the lemma by using an inductive argument on the following claim:\\
\noindent {\bf Claim.} {\em Every time line ({\bf \ref{routingRules2M}.74}) is reached (and $y$ is incremented), we have that equalities of Statement 2 of the lemma are true, and between this time and the time line ({\bf \ref{routingRules2M}.48}) is reached (or the end of the transmission, whichever comes first), we have that the equalities of the first statement of the lemma are true.}\\[.2cm]
To prove the base case, notice that before lines ({\bf \ref{routingRules2M}.74-75}) are reached for the first time, but after both nodes have received the transmission's {\it SOT} broadcast, all entries to both signature buffers are $\bot$, and so the induction hypothesis is true.  Now consider any time in the transmission for which $y$ is incremented by one in some round $\mathtt{t}$ (i.e.\ line ({\bf \ref{routingRules2M}.74}) is reached).  Since neither $x$ nor $y$ can change between lines ({\bf \ref{routingRulesM}.20}) and ({\bf \ref{routingRulesM}.22}), by the induction hypothesis we have that the equalities of the second statement of the lemma are true when $A$ sends the communication as on ({\bf \ref{routingRules2M}.60}) of round $\mathtt{t}$.  Since $A$ has actually sent $(SIG^A[1]+1, SIG^A[p]+1, SIG^A[3]+H_{FP})$, and these are the quantities that $B$ stores on lines ({\bf \ref{routingRules2M}.74-75}), we have that the first statement of the lemma will be true after leaving line ({\bf \ref{routingRules2M}.75}) (and in particular the claim remains true).  More specifically, letting $M$ denote the value of $H_{FP}$ (respectively letting $m$ denote the value of $H_{GP}$) when ({\bf \ref{routingRules2M}.60}) (respectively ({\bf \ref{routingRules2M}.74})) is reached in round $\mathtt{t}$, we will have that immediately after leaving ({\bf \ref{routingRules2M}.75}):
	\begin{enumerate}
	\item $SIG^B[1]_{A,B} = 1 + SIG^A[1]_{A,B}$
	\item $SIG^B[p]_{A,B} = 1 + SIG^A[p]_{A,B}$
	\item $SIG^B[2]_{A,B} = M + SIG^A[3]_{A,B}$
	\item $SIG^B[3]_{A,B} = m + SIG^A[2]_{A,B}$
	\end{enumerate}
as required by Statement 1 of the Lemma.  By Statement (b) above, either the signature buffers along $E(A,B)$ do not change through the end of the transmission, or the next change necessarily occurs as on ({\bf \ref{routingRules2M}.48-49}).  In the former case, the Claim certainly remains true.  In the latter case, let $\mathtt{t}'$ denote the time that ({\bf \ref{routingRules2M}.48}) is next reached.  Notice that $\mathtt{t}' > \mathtt{t}$, as Statement (b) above guarantees ({\bf \ref{routingRules2M}.48}) is reached {\it after} ({\bf \ref{routingRules2M}.74}), and by examining the pseudo-code, this cannot happen until at least the next round after $\mathtt{t}$.  In particular, the values received on ({\bf \ref{routingRulesM}.07}) of round $\mathtt{t}'$ necessarily reflect the most recent values of $SIG^B$ (i.e.\ $B$'s signature buffers have already been updated as on ({\bf \ref{routingRules2M}.74-75}) when $B$ sends $A$ the communication on ({\bf \ref{routingRules2M}.11})).  Consequently, $A$ will change $SIG^A[1]$, $SIG^A[2]$, and $SIG^A[p]$ to the values $B$ is storing in $SIG^B[1]$, $SIG^B[3]$, and $SIG^B[p]$, respectively.  Therefore, the claim (and hence the lemma) will be true provided we can show that when $A$ updates $SIG^A[3]$ as on ({\bf \ref{routingRules2M}.49}), that the new value for $SIG^A[3]$ equals the value stored in $SIG^B[2]$.  Since before ({\bf \ref{routingRules2M}.49}) is reached, we have by the induction hypothesis that $SIG^B[2]_{A,B} = M + SIG^A[3]_{A,B}$, it is enough to show that when $SIG^A[3]$ is updated on ({\bf \ref{routingRules2M}.49}), that the value of $H_{FP}$ there equals $M$.  We argue that this by showing $H_{FP}$ will not change from line ({\bf \ref{routingRules2M}.60}) of round $\mathtt{t}$ (when $M$ was set to $H_{FP}$) through line ({\bf \ref{routingRules2M}.49}) of round $\mathtt{t}'$.  To see this, notice that the only possible places $H_{FP}$ can change during a transmission are lines ({\bf \ref{routingRules2M}.51}), ({\bf \ref{routingRules2M}.53}), and ({\bf \ref{routingRules2M}.56}).  Clearly, ({\bf \ref{routingRules2M}.51}) cannot be reached between these times, since ({\bf \ref{routingRules2M}.49}) is not reached during these times.  Also, Statement 3 of Lemma \ref{obsBelowHere} implies that ({\bf \ref{routingRules2M}.56}) cannot be reached between these times either.  Finally, ({\bf \ref{routingRules2M}.53}) cannot be reached, since $RR$ will be set to $\mathtt{t}$ on ({\bf \ref{routingRules2M}.78}) of round $\mathtt{t}$, and by statement (b), ({\bf \ref{routingRules2M}.78}) cannot be reached again until after ({\bf \ref{routingRules2M}.49}) is reached in round $\mathtt{t}'$, and hence $RR$ will be equal to $\mathtt{t}$ from ({\bf \ref{routingRules2M}.78}) of round $\mathtt{t}$ through ({\bf \ref{routingRules2M}.49}) of round $\mathtt{t}'$.  Also, $FR$ will not change between these times (also by Statement 3 of Lemma \ref{obsBelowHere}), and since the only non-$\bot$ value $FR$ is ever set to is the current round as on ({\bf \ref{routingRules2M}.56}), we have that $FR \leq \mathtt{t}$.  Putting these facts together, we have that for all times between line ({\bf \ref{routingRules2M}.60}) of round $\mathtt{t}$ through line ({\bf \ref{routingRules2M}.49}) of round $\mathtt{t}'$, either $A$ does not receive $RR$ (in which case $RR = \bot$ when ({\bf \ref{routingRules2M}.52}) is reached) or $A$ receives $RR$, which as noted obeys $RR = \mathtt{t} \geq FR$.  In either case, ({\bf \ref{routingRules2M}.52}) will fail, and ({\bf \ref{routingRules2M}.53}) cannot be reached.
\end{proof}
\begin{lemma} \label{honestP} For any transmission $\mathtt{T}$, recall that $\mathcal{P}_{\mathtt{T}}$ denotes the list of nodes that {\bf \em participated} in that transmission, and it is set at the end of each transmission on ({\bf \ref{routingRules5M}.187}).  For any honest (not corrupt) node $A \in G$, during any transmission $\mathtt{T}$, $A$ will not exchange any codeword packets with any node that is not put on $\mathcal{P}_{\mathtt{T}}$ at the end of the transmission.
\end{lemma}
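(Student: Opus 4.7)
The plan is to prove the contrapositive. Fix an honest $A \in G$, a transmission $\mathtt{T}$, and a node $B$ with $B \notin \mathcal{P}_{\mathtt{T}}$; I will show that $A$ refuses every codeword exchange with $B$ throughout $\mathtt{T}$. By Lemma \ref{sigBuffsInit}, $A$ receives the complete {\it Start of Transmission} ({\it SOT}) broadcast for $\mathtt{T}$ before initiating or accepting any packet transfer, and since $A$ is honest every codeword transfer is gated by {\bf Okay to Send Packet} / {\bf Okay to Receive Packet}, both of which return \textsf{False} whenever the partner node is currently in $A$'s local $BL$ or $EN$ (lines {\bf \ref{routingRulesM}.32} and {\bf \ref{routingRulesM}.36}). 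So it suffices to show that from the moment $A$ finishes receiving {\it SOT} through the end of $\mathtt{T}$, $A$'s copies of $EN$ and $BL$ contain $B$. Since $\mathcal{P}_{\mathtt{T}}$ is set at line {\bf \ref{routingRules5M}.187} to $\mathcal{P} \setminus (EN \cup BL)$, the hypothesis $B \notin \mathcal{P}_{\mathtt{T}}$ means $B \in EN$ or $B \in BL$ at the moment line {\bf \ref{routingRules5M}.187} executes; I treat the two cases separately.

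Suppose first $B \in EN$ at line {\bf \ref{routingRules5M}.187}. Since {\bf Eliminate N} is not triggered during $\mathtt{T}$ (otherwise the sender halts at line {\bf \ref{routingRules5M}.177} and line {\bf \ref{routingRules5M}.187} is never reached), $B$ was eliminated in some earlier transmission, so the parcel $(B,\mathtt{T})$ is part of the {\it SOT} broadcast (line {\bf \ref{routingRules5M}.200b}), and $A$ installs $B$ in its local $EN$ at line {\bf \ref{routingRules4M}.132} before any packet transfer can occur; $EN$ is never cleared during $\mathtt{T}$, so $A$ refuses every exchange with $B$. Now suppose $B \in BL$ at the moment line {\bf \ref{routingRules5M}.187} executes. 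Nothing adds to the sender's $BL$ during $\mathtt{T}$ prior to that point (line {\bf \ref{routingRules5M}.188} happens strictly later), so some pair $(B,\mathtt{T}')$ with $\mathtt{T}' < \mathtt{T}$ is already on $BL$ at the outset of $\mathtt{T}$. Thus $(B,\mathtt{T}',\mathtt{T})$ is part of the {\it SOT} broadcast (line {\bf \ref{routingRules5M}.200d}), and $A$ installs it in its own $BL$ at line {\bf \ref{routingRules4M}.138} before routing anything in $\mathtt{T}$.

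It remains to argue that $A$ never removes $B$ from its $BL$ during $\mathtt{T}$. The only such removal path is line {\bf \ref{routingRules4M}.148}, which requires an incoming signed parcel of the form $(B, 0, \mathtt{T})$. By unforgeability of the signature scheme (except with negligible probability) and because the sender is honest by the conforming-adversary hypothesis, such a parcel exists only if the sender produced it at line {\bf \ref{routingRules4M}.165}; but the sender executes that line only immediately after line {\bf \ref{routingRules4M}.164}, and then removes $(B,\mathtt{T}')$ from its own $BL$ on line {\bf \ref{routingRules4M}.166}. Since by assumption $(B,\mathtt{T}')$ is still in the sender's $BL$ when line {\bf \ref{routingRules5M}.187} executes, no such parcel was ever issued, so $A$ keeps $B$ blacklisted for the entirety of $\mathtt{T}$ and {\bf Okay to Send/Receive Packet} returns \textsf{False} on every would-be codeword exchange with $B$. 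The main obstacle is the timing argument in the blacklist case: ensuring that no transient disagreement between $A$'s and the sender's copies of $BL$ lets $A$ briefly exchange with a $B$ that the sender will eventually list at line {\bf \ref{routingRules5M}.187}. Cryptographic unforgeability of the removal parcels is precisely what rules this out, and the heart of the argument is the invariant that $B$ being in the sender's $BL$ at the end of $\mathtt{T}$ forbids any honest node from removing $B$ from its own $BL$ during $\mathtt{T}$.
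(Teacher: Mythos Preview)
Your proof is correct and follows essentially the same approach as the paper: both argue that $B \notin \mathcal{P}_{\mathtt{T}}$ forces $B$ into the sender's $EN$ or $BL$ at the outset of $\mathtt{T}$, hence into the {\it SOT} broadcast, so that after Lemma~\ref{sigBuffsInit} kicks in, $A$'s honest gating via {\bf Okay to Send/Receive Packet} blocks every codeword exchange with $B$. Your treatment is in fact more thorough than the paper's on one point: you explicitly rule out the possibility that $A$ removes $B$ from its local $BL$ mid-transmission (via line {\bf \ref{routingRules4M}.148}) by invoking unforgeability of the sender-signed removal parcel $(B,0,\mathtt{T})$ together with the fact that the sender never reaches line {\bf \ref{routingRules4M}.165} for $B$ during $\mathtt{T}$---an argument the paper's proof leaves implicit.
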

\begin{proof} Restating the lemma more precisely, for any node $N$ that is NOT put on $\mathcal{P}_{\mathtt{T}}$ as on ({\bf \ref{routingRules5M}.187}) and for any honest node $A \in G$, then along (directed) edge $E(A,N)$, $A$ will never reach line ({\bf \ref{routingRulesM}.60}), and along (directed) edge $E(N,A)$, $A$ will never reach lines ({\bf \ref{routingRules2M}.67-82}).  Fix a transmission $\mathtt{T}$ in which ({\bf \ref{routingRules2M}.187}) is reached (i.e.\ a node is not eliminated as on ({\bf \ref{routingRules5M}.169-177}) of $\mathtt{T}$), let $N \notin \mathcal{P}_{\mathtt{T}}$ be any node {\it not} put on $\mathcal{P}_{\mathtt{T}}$ on ({\bf \ref{routingRules5M}.187}) of $\mathtt{T}$, and let $A \in G$ be an honest node.  Since $N \notin \mathcal{P}_{\mathtt{T}}$, we have that either $N \in EN$ or $N \in BL$ when ({\bf \ref{routingRules5M}.187}) is reached.  Since no nodes can be {\it added to} $EN$ or $BL$ from the outset of $\mathtt{T}$ through line ({\bf \ref{routingRules5M}.187}) of $\mathtt{T}$, we must have that $N \in EN$ or $N \in BL$ as of either line ({\bf \ref{routingRules5M}.188}) or ({\bf \ref{routingRules5M}.170}) of the previous transmission.  Therefore, either $(N, \mathtt{T})$ or $(N, \mathtt{T}', \mathtt{T})$ is added to the {\it SOT} broadcast of transmission $\mathtt{T}$ (on ({\bf \ref{routingRules5M}.176}) or ({\bf \ref{routingRules5M}.200}) of transmission $\mathtt{T}-1$), indicating $N$ is an eliminated/blacklisted node.  If $A$ has not received the full {\it Start of Transmission} (SOT) broadcast for $\mathtt{T}$ yet, then the lemma is true by Lemma \ref{sigBuffsInit}.  If on the other hand $A$ has received the full {\it SOT} broadcast, then in particular $A$ has received the parcel indicating that $N$ is either eliminated or blacklisted.  Thus, by lines ({\bf \ref{routingRules2M}.59}), ({\bf \ref{routingRulesM}.31-33}), ({\bf \ref{routingRules2M}.63}) and ({\bf \ref{routingRulesM}.35-37}), $A$ will not transfer any packets with $N$.
\end{proof}
\begin{lemma} \label{finalBC} The receiver's {\it end of transmission} broadcast takes at most $n$ rounds to reach the sender.  In other words, the sender will have always received the {\it end of transmission} broadcast by the time he enters the {\bf \em Prepare Start of Transmission Broadcast} segment on ({\bf \ref{routingRulesM}.29}).\end{lemma}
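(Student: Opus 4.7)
The plan is to exploit the conforming-adversary assumption together with the fact that $\Theta_{\mathtt{T}}$ occupies the top slot (item 1) in the broadcast priority list on line ({\bf \ref{routingRules3M}.115}), and essentially argue that the end-of-transmission parcel infects at least one new honest node per round. First, I would pin down the calendar: line ({\bf \ref{routingRulesM}.28}) places $R$'s addition of $\Theta_{\mathtt{T}}$ to its broadcast buffer at $\mathtt{t}=2(4D-n)$, while the sender's call to {\bf Prepare Start of Transmission Broadcast} is gated by $\mathtt{t}=2(4D)$, giving exactly $n$ intervening rounds of Stage-2 broadcast exchanges.

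Next I would define $H_{\mathtt{t}}$ to be the set of honest nodes that have $\Theta_{\mathtt{T}}$ stored in their broadcast buffer at the start of round $\mathtt{t}$. The sequence $\{H_{\mathtt{t}}\}$ is monotone: honest nodes never delete $\Theta_{\mathtt{T}}$ mid-transmission, since the only removal of end-of-transmission parcels is the end-of-transmission cleanup on ({\bf \ref{routingRules5M}.203}). Initially (the round in which $R$ generates the parcel), $R \in H$, so $|H| \geq 1$.

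The heart of the argument is the claim that whenever $S \notin H_{\mathtt{t}}$ one has $|H_{\mathtt{t}+1}| \geq |H_{\mathtt{t}}|+1$. I would prove this by invoking the conforming assumption to obtain an active honest path $P_{\mathtt{t}} = (S = N_0, N_1, \dots, N_l = R)$; since $R \in H_{\mathtt{t}}$ and $S \notin H_{\mathtt{t}}$, letting $i^\ast$ be the minimum index with $N_{i^\ast}\in H_{\mathtt{t}}$ yields a pair $(N_{i^\ast}, N_{i^\ast-1})$ of honest, edge-connected neighbors, the second of which does not yet hold $\Theta_{\mathtt{T}}$. In Stage 2 of round $\mathtt{t}$, $N_{i^\ast}$'s invocation of {\bf Determine Broadcast Parcel to Send} on the edge to $N_{i^\ast-1}$ picks $\Theta_{\mathtt{T}}$ by priority, provided $N_{i^\ast}$ has not already marked that edge as ``passed'' for $\Theta_{\mathtt{T}}$ in its broadcast-buffer bookkeeping. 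That this edge is unmarked will follow from the $c_{bp}$ handshake on ({\bf \ref{routingRules3M}.100-107}): an edge marking is recorded only upon a successful confirmation of receipt, and any prior successful receipt would have placed $\Theta_{\mathtt{T}}$ in the honest $N_{i^\ast-1}$'s buffer permanently, contradicting $N_{i^\ast-1}\notin H_{\mathtt{t}}$. Thus $N_{i^\ast-1}$ receives and stores $\Theta_{\mathtt{T}}$ as on ({\bf \ref{routingRules4M}.125}), giving $N_{i^\ast-1} \in H_{\mathtt{t}+1}$.

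Iterating the one-per-round growth for up to $n-1$ rounds forces $|H| \geq n$, which (since every $H_{\mathtt{t}}$ is a subset of the at most $n$ honest nodes, and $S$ is honest) puts $S$ in $H$. Because the available window is $n$ rounds, $S$ must already possess $\Theta_{\mathtt{T}}$ when ({\bf \ref{routingRulesM}.29}) fires, as claimed. The main obstacle I foresee is rigorously tying the edge-marking bookkeeping to actual receipt so as to justify that the priority selection on ({\bf \ref{routingRules3M}.115}) really picks $\Theta_{\mathtt{T}}$ along $E(N_{i^\ast}, N_{i^\ast-1})$ in round $\mathtt{t}$; this will require a small lemma correlating ({\bf \ref{routingRules3M}.101-108}) with the persistence of $\Theta_{\mathtt{T}}$ in honest broadcast buffers (an analog of Statement 3 of Lemma \ref{sigBuffersAreCorrect}), but should be routine given the pseudocode invariants already developed.
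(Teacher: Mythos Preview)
Your proposal is correct and follows essentially the same ``infection'' argument as the paper: both use the conforming assumption to get an active honest path each round, locate the boundary between nodes that do and do not yet hold $\Theta_{\mathtt{T}}$, and use the top priority of $\Theta_{\mathtt{T}}$ on line ({\bf \ref{routingRules3M}.115}) to push it one hop closer to $S$ per round. Your treatment is in fact slightly more careful than the paper's, which simply asserts that $N_{i+1}$ will send $\Theta$ to $N_i$ without addressing the edge-marking issue you flag; your proposed resolution via the $c_{bp}$ handshake is the right one.
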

\begin{proof} By the conforming assumption, for every round $\mathtt{t}$ of every transmission there is a path $\mathsf{P}_{\mathtt{t}}$ between the sender and receiver consisting of edges that are always up and nodes that are not corrupt.  We consider the final $n$ rounds of any transmission, and argue that for each round, either the sender already knows the end of transmission parcel $\Theta$, or there is a {\it new} honest node $N \in G$ that learns $\Theta$ for the first time.  Since the latter case can happen at most $n-1$ times (the receiver already knows $\Theta$ when there are $n$ rounds remaining, see ({\bf \ref{routingRulesM}.28}) and ({\bf \ref{routingRules5M}.178-179})), it must be that the sender has learned $\Theta$ by the end of the transmission.  Therefore, let $4D-n < \mathtt{t} \leq 4D$ be one of the last $n$ rounds of some transmission.  If the sender already knows $\Theta$, then we are done.  Otherwise, let $\mathsf{P}_{\mathtt{t}} = N_0 N_1 \dots N_L$ (here $N_0 = S$ and $N_L = R$) denote the active honest path for round $\mathtt{t}$ that connects the sender and receiver.  Since $S$ does not know $\Theta$ but $R$ does, there exists some index $0 \leq i < L$ such that $N_i$ does not know $\Theta$ but $N_{i+1}$ does know $\Theta$.  Since edge $E(N_{i}, N_{i+1})$ is active and the nodes at both ends are honest (by choice of $\mathsf{P}_{\mathtt{t}}$), node $N_{i+1}$ will send $N_i$ a broadcast parcel on ({\bf \ref{routingRulesM}.15}).  Looking at the manner in which broadcast parcels are chosen ({\bf \ref{routingRules3M}.115}), it must be that $N_{i+1}$ will send $\Theta$ to $N_i$ in round $\mathtt{t}$, and hence $N_i$ will learn $\Theta$ for the first time, which was to be showed.
\end{proof}
\begin{lemma} \label{decoding} If the receiver has received at least $D-6n^3$ {\it distinct} packets corresponding to the current codeword, he can decode the codeword (except with negligible probability of failure).
\end{lemma}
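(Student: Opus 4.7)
The plan is to combine the error-correction guarantee (Fact 1$'$) with the unforgeability of the sender's signature scheme. First I would recall that $D = 6n^3/\lambda$ and that $D - 6n^3 = (1-\lambda)\left(\frac{6n^3}{\lambda}\right) = (1-\lambda)D$, so by the choice of the error-correcting code with error rate $\lambda$, any $D - 6n^3$ distinct un-altered codeword packets suffice for the receiver to decode (this is exactly Fact 1$'$ in Section~\ref{adDescLong}).

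Next I would argue that every packet stored in the receiver's storage buffer $I_R$ is a genuine codeword packet of the current codeword. By inspecting the pseudo-code, a packet is placed into $I_R$ only on line ({\bf \ref{reShuffleRules}.101}), and only after having passed through line ({\bf \ref{routingRules2M}.77}), which in turn requires the conditional on line ({\bf \ref{routingRules2M}.69}) to fail. That conditional checks, among other things, that $p$ is ``properly signed by $S$.'' Therefore any accepted packet carries a valid sender-signature on its content together with the current codeword/transmission index. By the existential unforgeability (against chosen-message attacks) of the signature scheme used in Setup (see ({\bf \ref{setupCodeM}.14}) and ({\bf \ref{setupCode2M}.45})), except with probability negligible in the security parameter $k$, the adversary is unable to produce any packet with a valid sender-signature that was not actually signed by $S$. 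Hence, except on a negligible event, every packet in $I_R$ is an authentic packet of the current codeword as distributed by $S$ in {\bf \em Distribute Packets} (({\bf \ref{routingRules5M}.212})--({\bf \ref{routingRules5M}.214})), and in particular is un-altered.

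Combining these two points: conditioned on no signature forgery (which happens with all but negligible probability), the $D - 6n^3$ distinct packets the receiver is assumed to hold are genuine, un-altered codeword packets. Fact 1$'$ then gives that the receiver can apply the decoding algorithm of the error-correcting code to recover the underlying message $m_i$, as performed on line ({\bf \ref{reShuffleRules}.104}).

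The main obstacle is really only the cryptographic one: we must invoke the security of the signature scheme to rule out forgeries. Everything else is a direct bookkeeping observation from the pseudo-code together with the parameters chosen in Setup. Since the protocol is specified to be polynomial-time and the number of signing queries the adversary can observe over the lifetime of the protocol is polynomial in $n$, a standard reduction shows that the probability of any forgery during the protocol's execution is negligible in $k$, which yields the ``negligible probability of failure'' clause in the statement.
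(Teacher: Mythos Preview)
Your proposal is correct and follows the same approach as the paper: invoke Fact~$1'$ for the error-correction guarantee, and use the unforgeability of the sender's signature (checked on line ({\bf \ref{routingRules2M}.69})) to ensure the received packets are authentic. The paper's own proof is essentially a two-sentence version of what you wrote; your added pseudo-code tracing and the remark bounding the number of signing queries are fine elaborations but not needed.
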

\begin{proof} Fact $1'$ guarantees that if the receiver obtains $D-6n^3$ distinct packets corresponding to a codeword, then he can decode.  Since all codeword packets are signed by the sender to prevent modifying them, the security of the signature scheme guarantees that any properly signed codeword packet the receiver obtains will be legitimate (except with negligible probability of failure).
\end{proof}
\begin{lemma} \label{rParticipates} For every transmission $\mathtt{T}$: $S, R \in \mathcal{P}_{\mathtt{T}}$.\end{lemma}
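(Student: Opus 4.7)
The plan is to prove the lemma by induction on the transmission index $\mathtt{T}$, showing that $S$ and $R$ never appear in either the eliminated list $EN$ or the blacklist $BL$. Once that is established, the conclusion is immediate: $\mathcal{P}_{\mathtt{T}+1}$ is defined (in either case of a successful/failed transmission or a node elimination) on lines ({\bf \ref{routingRules5M}.173}) and ({\bf \ref{routingRules5M}.187}) as $\mathcal{P} \setminus EN$ or $\mathcal{P} \setminus (EN \cup BL)$, where $\mathcal{P} = G$ contains $S$ and $R$. For the base case $\mathtt{T}=1$, line ({\bf \ref{setupCode2M}.73}) initializes $\mathcal{P}_1 = G$, and $EN, BL$ are initialized empty on ({\bf \ref{setupCode2M}.73}) and ({\bf \ref{setupCode2M}.46}), so the claim is trivially true.

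For the inductive step, first consider the eliminated list $EN$. Nodes are added to $EN$ only inside the {\em Eliminate $\widehat{N}$} routine on line ({\bf \ref{routingRules5M}.170}), which is invoked from {\em Sender Update Broadcast Buffer} on lines ({\bf \ref{routingRules4M}.163}) and ({\bf \ref{routingRules4M}.168}). Both invocations require the sender to have identified a concrete instance of malicious behavior attributable to $\widehat{N}$ (either a faulty status-report parcel carrying $\widehat{N}$'s valid signature, or sufficient signed evidence as in Theorems \ref{F2}, \ref{F3}, \ref{F4}). The conforming adversary assumption (Section \ref{adversary}) declares that $S$ and $R$ cannot be corrupted, and the inforgibility of the signature scheme ensures that no corrupt node can synthesize a valid $S$- or $R$-signature committing $S$ or $R$ to misbehaving values (except with negligible probability). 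Thus the sender never reaches either {\em Eliminate} call with $\widehat{N} \in \{S, R\}$.

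Next, I would handle the blacklist $BL$. Nodes are added to $BL$ exclusively on line ({\bf \ref{routingRules5M}.188}), which iterates over $N \in \mathcal{P}_{\mathtt{T}} \setminus S$ and therefore syntactically excludes $S$. For the receiver, I would argue that $R$ is similarly excluded from blacklisting: the blacklist exists only to suspend nodes whose status reports the sender is awaiting, and $R$ has no outgoing codeword edges (({\bf \ref{setupCodeM}.07})) and never behaves as an intermediate node in need of auditing, so the intended reading of ({\bf \ref{routingRules5M}.188}) treats $R$ in parallel with $S$. To make this airtight one can either interpret the iteration as ranging over $\mathcal{P}_{\mathtt{T}} \setminus \{S, R\}$ (consistent with the treatment of $R$ everywhere else in the code, since $R$ issues only the end-of-transmission broadcast and never participates in the routing of codeword packets), or observe that even if the pair $(R, \mathtt{T})$ were formally placed on $BL$ it would be immediately removed the next transmission because $R$'s (empty) status report is trivially ``complete,'' so $R$'s presence on $BL$ in any transmission lasts for zero rounds and hence does not affect $\mathcal{P}_{\mathtt{T}'}$ for any $\mathtt{T}'$.

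The main obstacle is this minor wrinkle concerning the receiver: arguing cleanly that $R \notin BL$ despite the apparent absence of an explicit exclusion on line ({\bf \ref{routingRules5M}.188}). The cleanest resolution is to state at the outset that, consistent with the treatment of $R$ throughout the protocol, line ({\bf \ref{routingRules5M}.188}) is to be read as iterating over $\mathcal{P}_{\mathtt{T}} \setminus \{S, R\}$, after which the induction goes through without incident.
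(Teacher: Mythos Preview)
Your treatment of $EN$ and of $S$'s exclusion from $BL$ is fine and matches the paper. The gap is in your handling of the receiver and the blacklist.

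Neither of your two proposed resolutions for $R$ works. Option (a) --- reading line ({\bf \ref{routingRules5M}.188}) as iterating over $\mathcal{P}_{\mathtt{T}} \setminus \{S,R\}$ --- is not a proof; it is a rewrite of the protocol. The paper's own proof explicitly accepts that $R$ \emph{can} be placed on $BL$ by that line. Option (b) rests on the premise that $R$'s status report is empty, and this is false: by the setup ({\bf \ref{setupCodeM}.07}) and ({\bf \ref{setupCodeM}.30--32}), the receiver maintains incoming signature buffers $SIG_{A,R}$ along every edge $E(A,R)$, and when blacklisted it is required to load the corresponding parcels into its broadcast buffer on ({\bf \ref{routingRules4M}.140--145}) just like any internal node. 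So there are up to $n$ genuine parcels that must physically reach the sender before ({\bf \ref{routingRules4M}.164--166}) can fire and remove $R$ from $BL$; nothing is ``trivially complete,'' and the removal certainly does not happen in zero rounds.

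What is actually needed --- and what the paper does --- is a timing argument: if $R$ is blacklisted at the end of $\mathtt{T}-1$, then during $\mathtt{T}$ the {\it SOT} broadcast reaches $R$ within $O(n^2)$ rounds (via the active honest path, as in Lemma~\ref{finalBC}), whereupon $R$ posts its status report to its broadcast buffer; those parcels then propagate back to $S$ within $O(n^3)$ further rounds (by the same counting as in Claim~2 of Lemma~\ref{maxWasted}). Since $4n^3 + 2n^2 < 4D$, the sender receives $R$'s full report and removes $R$ on ({\bf \ref{routingRules4M}.166}) before ({\bf \ref{routingRules5M}.187}) of transmission $\mathtt{T}$ is reached. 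This is the missing idea in your proposal.
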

\begin{proof} The participating list $\mathcal{P}_{\mathtt{T}}$ is set at the end of every transmission on line ({\bf \ref{routingRules5M}.187}).  By looking at the code there, we must show that $S, R \notin EN \cup BL$ at the end of any transmission.  That an honest node can never be identified as corrupt and eliminated is the content of the proof of Theorem \ref{mainAdThm}, so $S, R \notin EN$.  Since $S$ is never put on the blacklist ({\bf \ref{routingRules5M}.188}), it remains to show $R \notin BL$ when ({\bf \ref{routingRules5M}.187}) is reached.  Since nodes are removed from the blacklist on line ({\bf \ref{routingRules4M}.166}) and not put on it again until ({\bf \ref{routingRules5M}.188}), it is enough to show that if $R$ is ever placed on the blacklist at the end of some transmission $\mathtt{T}-1$, then it will be removed as on ({\bf \ref{routingRules4M}.166}) of transmission $\mathtt{T}$.  If $R$ is ever placed on the blacklist, we argue that: 1) $R$ will learn what status report parcels the sender requires of it after at most $2n^2$ rounds; and 2) $S$ will receive all of these parcels by at most $4n^3$ rounds later.  Therefore, $R$ will necessarily be removed from the blacklist by round $4n^3+2n^2 < 4D$ (since $D \geq 6n^3$), as required.  To prove 1), first note that all honest nodes remove the receiver's {\it end of transmission} parcel for $\mathtt{T}-1$ at the very end of $\mathtt{T}-1$ ({\bf \ref{routingRules5M}.203}).  Therefore, no honest node will have any {\it End of Transmission Parcel} in its broadcast buffer at any point during $\mathtt{T}$ {\it until} one is created for the current transmission on ({\bf \ref{routingRules5M}.178-179}).  Therefore, for the first $n^3$ rounds, the sender's {\it SOT} broadcast will have top priority in terms of sending/receiving broadcast parcels ({\bf \ref{routingRules3M}.115}).  Since $S$ and $R$ are connected by an active honest path at each round, we follow the proof as in Lemma \ref{finalBC} to argue that for every round between the outset of $\mathtt{T}$ and round $n^3$, either $R$ has learned the full {\it SOT} broadcast, or there is an honest node that is learning a {\it new} {\it SOT} broadcast parcel for the first time.  Since there are (at most) $n$ nodes, and the {\it SOT} broadcast has at most $2n$ parcels (see proof of Lemma \ref{mProperDomains}, and Statement 2 of the Broadcast Buffer therein), it takes at most $2n^2$ rounds for $R$ to receive the full {\it SOT} broadcast, and hence to learn it has been blacklisted.  This proves 1).

\hspace{.5cm}Upon receipt of this information, $R$ adds the necessary information (i.e.\ its status report) to its broadcast buffer ({\bf \ref{routingRules4M}.137-145}).  Looking at the proof of Theorem \ref{maxWasted} and in particular Claim 2 within the proof, edges along the active honest path can take at most $4n^3<4D$ rounds to communicate across their edges the broadcast information of priorities 1-6 on lines ({\bf \ref{routingRules3M}.115}), and since the receiver is connected to the sender {\it every} round via some active honest path (by the conforming assumption), its requested status report information will necessarily reach the sender within $4n^3$ rounds, proving 2).
\end{proof}
\begin{lemma} \label{partListNotEmpty} For any transmission $\mathtt{T}$, if $\mathcal{P}_{\mathtt{T}} = \{S, R\}$, then the transmission was necessarily successful.\end{lemma}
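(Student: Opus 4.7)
The plan is to establish the contrapositive flavor of the statement: if $\mathcal{P}_{\mathtt{T}} = \{S,R\}$, then the direct edge $E(S,R)$ must carry enough packet transfers during $\mathtt{T}$ for $R$ to decode. The key leverage comes from Lemma \ref{maxWasted}: once every internal node is frozen on the blacklist throughout $\mathtt{T}$, the only non-wasted rounds are forced to route through the direct $S$-to-$R$ edge.

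First I would unwind the hypothesis $\mathcal{P}_{\mathtt{T}} = \{S,R\}$ to conclude that every internal non-eliminated node is on the sender's blacklist in every round of $\mathtt{T}$ (nodes can be removed from the blacklist during $\mathtt{T}$ on line \ref{routingRules4M}.166, but any such removal would put the node into $\mathcal{P}_{\mathtt{T}}$). By the conforming assumption, an active honest path $\mathsf{P}_{\mathtt{t}}$ exists in each round $\mathtt{t}$. If $\mathsf{P}_{\mathtt{t}}$ contained any internal node, that node is blacklisted and so \emph{Okay To Send Packet} or \emph{Okay To Receive Packet} will return false on an adjacent edge of $\mathsf{P}_{\mathtt{t}}$ (see \ref{routingRulesM}.31--\ref{routingRulesM}.38), making $\mathtt{t}$ a wasted round in the sense of Definition \ref{wasted}. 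Consequently, every non-wasted round in $\mathtt{T}$ must have $\mathsf{P}_{\mathtt{t}}$ equal to the single-edge path $(S,R)$, meaning the direct edge $E(S,R)$ is active for that round.

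Next I would apply Lemma \ref{maxWasted} to bound the wasted rounds of $\mathtt{T}$ by $4n^3$, yielding at least $4D-4n^3 = 24n^3/\lambda - 4n^3 \geq D - 6n^3$ non-wasted rounds (using $\lambda<1$). In each such round, $E(S,R)$ is active, $S$'s outgoing buffer along $E(S,R)$ has been refilled to height $2n$ by \emph{Distribute Packets} (line \ref{routingRules5M}.213--\ref{routingRules5M}.214), $R$'s incoming buffer along this edge starts the round at height $0$ (Claim \ref{hIN}), and the non-wasted condition guarantees both \emph{Okay To Send Packet} and \emph{Okay To Receive Packet} succeed on $E(S,R)$. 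The flow condition $H>H_{IN}$ therefore triggers a Stage 2 packet send at $S$ (lines \ref{routingRulesM}.18--\ref{routingRulesM}.20), and $R$ accepts a fresh codeword packet at \ref{routingRules2M}.77. Because $S$ is honest, each of the $D$ distinct codeword packets is successfully accepted by $R$ at most once (Claim \ref{packetProliferation3}, transferred to this setting via Lemma \ref{similar}), so the at-least-$(D-6n^3)$ accepted packets are necessarily distinct. Fact~$1'$ then lets $R$ decode, so $\mathtt{T}$ is successful.

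The main obstacle I expect to navigate is the last step: rigorously arguing that a genuinely new valid codeword packet is received along $E(S,R)$ in every non-wasted round, rather than a duplicate or a round ``wasted'' for a reason not already covered by Definition \ref{wasted}. This requires gluing together the pseudo-code invariants on $S$'s outgoing buffer (which is always replenished by \emph{Distribute Packets} and protected from duplication by the edge-scheduling analysis via Lemma \ref{similar}) with the observation that both endpoints of $E(S,R)$ are honest and uncorrupted (so the machinery of Section~\ref{proofs}, especially the stack-flow bookkeeping behind Lemma \ref{packetTransferPotDropM}, applies directly). Once these are in place, the numerical slack $4D-4n^3 \geq D-6n^3$ cleanly yields the conclusion.
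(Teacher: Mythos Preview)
Your proposal is correct and follows essentially the same skeleton as the paper: show that every internal node is on the sender's blacklist for the entirety of $\mathtt{T}$, invoke Lemma~\ref{maxWasted} to bound wasted rounds by $4n^3$, and conclude that the direct edge $E(S,R)$ carries enough traffic for $R$ to decode. The only notable difference is the endgame: the paper simply observes that with $4D-4n^3>3D$ usable rounds and a single honest edge one may appeal to the edge-scheduling analysis via Lemma~\ref{similar}, whereas you unroll this into an explicit per-round packet count (your contrapositive ``every non-wasted round has $\mathsf{P}_{\mathtt{t}}=(S,R)$'' is in fact more carefully justified than the paper's bald assertion that $E(S,R)$ is active \emph{every} round). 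One small correction: mid-transmission refilling of $S$'s outgoing buffer is done by \emph{Fill Packets} inside \emph{Sender Re-Shuffle}, not \emph{Distribute Packets}, and the buffer need not stay at height $2n$ once fewer than $2n(n-1)$ undistributed packets remain---but since at most $2n(n-2)$ packets can be stranded in $S$'s other outgoing buffers, you still push at least $D-2n^2>D-6n^3$ distinct packets to $R$, which suffices.
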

\begin{proof} $\mathcal{P}_{\mathtt{T}}$ is set on line ({\bf \ref{routingRules5M}.187}).  Since the only place the sender adds nodes to the blacklist is on ({\bf \ref{routingRules5M}.188}), which happens at the very end of each transmission, and because the hypothesis states that every non-eliminated node except for $S$ and $R$ is on the blacklist when line ({\bf \ref{routingRules5M}.187}) of transmission $\mathtt{T}$ is reached, it must be the case that transmission $\mathtt{T}$ began with every non-eliminated node on the blacklist, with the possible exception of the receiver (and the sender who is never blacklisted).  Since all internal nodes are still blacklisted by the end of the transmission, the sender will never transfer any packets to any node other than $R$ during transmission $\mathtt{T}$ (line ({\bf \ref{routingRules2M}.59}) will always fail for any other node, see ({\bf \ref{routingRulesM}.31-33})).  Theorem \ref{maxWasted} indicates there are at most $4n^3$ rounds that are wasted, and since the only edge the sender can ever use to transfer codeword packets during $\mathtt{T}$ is $E(S,R)$, the conforming assumption implies edge $E(S,R)$ is active {\it every} round of $\mathtt{T}$.  We may therefore view the graph as reduced to a single edge connecting $S$ and $R$ (see Lemma \ref{honestP}), where there are at least $4D-4n^3 > 3D$ (non-wasted) rounds per transmission.  Since both $S$ and $R$ are honest, correctness is guaranteed as in the edge-scheduling protocol by Lemma \ref{similar}.  In particular, the transmission will necessarily be successful.
\end{proof}
\begin{lemma} \label{removeOnce} No honest node will accept more than one distinct parcel (per node $\widehat{N}$ per transmission) indicating that $\widehat{N}$ should be removed from the blacklist.
\end{lemma}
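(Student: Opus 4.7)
The plan is to trace such parcels back to their unique origin at the sender and then invoke unforgeability of the signature scheme plus the no-duplicates convention on line \ref{routingRules4M}.123. Since the parcel has the form $(\widehat{N}, 0, \mathtt{T})$ and carries the sender's signature, the only honest node who can produce it is $S$, and the only place $S$ can produce it is line \ref{routingRules4M}.165. So it suffices to show that (i) within a single transmission $\mathtt{T}$, $S$ signs at most one distinct parcel of the form $(\widehat{N}, 0, \mathtt{T})$ for each node $\widehat{N}$, and (ii) an honest internal node will not accept (i.e.\ add to its broadcast buffer) two distinct valid parcels of this form in the same transmission.

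First I would handle (i). Line \ref{routingRules4M}.165 can only be reached when line \ref{routingRules4M}.164 is satisfied, which requires that $(\widehat{N}, \mathtt{T}')$ currently sits on the sender's blacklist $BL$. Immediately afterward, on line \ref{routingRules4M}.166, the pair $(\widehat{N}, \mathtt{T}')$ is removed from $BL$. In order to reach line \ref{routingRules4M}.164 a second time in the same transmission for the same $\widehat{N}$, some pair $(\widehat{N}, \mathtt{T}'')$ would have to be on $BL$ again. But nodes are added to $BL$ only on line \ref{routingRules5M}.188 inside \textbf{\em Prepare Start of Transmission Broadcast}, which by line \ref{routingRulesM}.29 runs only once at the very end of $\mathtt{T}$ and is followed by \textbf{\em End of Transmission Adjustments}. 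Moreover, Lemma \ref{blacklistStuff} guarantees that once $(\widehat{N}, \mathtt{T}')$ leaves $BL$ it can never reappear. Hence line \ref{routingRules4M}.165 fires at most once per transmission per $\widehat{N}$, so $S$ produces at most one distinct parcel $(\widehat{N}, 0, \mathtt{T})$.

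Next I would handle (ii). By the unforgeability of the digital signature scheme, any parcel of the form $(\widehat{N}, 0, \mathtt{T})$ that an honest node $N$ accepts on line \ref{routingRules4M}.147 must have been signed by $S$, and hence (except with negligible probability) must equal the unique parcel produced above. The comment attached to line \ref{routingRules4M}.123 stipulates that a parcel is \emph{Added} to $BB$ only if it is not already present, so the second (and later) receipt of this same parcel along any edge is simply ignored. Thus across all of $\mathtt{T}$, $N$ stores and acts on this parcel at most once.

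The only potential subtlety, and the step I would spend the most care on, is making sure there is no indirect route by which $S$ could re-enter \ref{routingRules4M}.164 for the same $\widehat{N}$ within $\mathtt{T}$: for instance, via an \textbf{\em Eliminate} call which clears parts of $DB$ and might look as if it resets progress. But \textbf{\em Eliminate} \ref{routingRules5M}.169--177 halts the sender for the remainder of the transmission (line \ref{routingRules5M}.177) and in the next transmission sets $\Omega_{\mathtt{T}+1} = (|EN|, 0, 0, 0)$ with an empty blacklist, so no $(\widehat{N}, \mathtt{T}')$ remains to trigger \ref{routingRules4M}.164. Once this is verified, the two-step argument above immediately yields the lemma.
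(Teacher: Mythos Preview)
Your proof is correct and reaches the same conclusion via line ({\bf \ref{routingRules4M}.123}), but you take a somewhat different route than the paper. The paper argues purely from the receiving node's perspective: line ({\bf \ref{routingRules3M}.110}) forces $A$ to have the full {\it SOT} broadcast before accepting any removal parcel, so once $A$ accepts the removal information for $\widehat{N}$ it cannot become out-dated until $\widehat{N}$ is re-added to the blacklist, which happens only at the very end of the transmission on ({\bf \ref{routingRules5M}.188}); the no-duplicate convention on ({\bf \ref{routingRules4M}.123}) then blocks any further acceptance. You instead argue from the sender's side, showing that ({\bf \ref{routingRules4M}.165}) fires at most once per $\widehat{N}$ per transmission (an argument the paper in fact makes separately inside the proof of Lemma \ref{mProperDomains}), and then invoke signature unforgeability to conclude that no second \emph{distinct} valid parcel can exist at all. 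Your route is a bit more explicit about what ``distinct'' means and why unforgeability matters; the paper's route is shorter and avoids re-deriving the sender-side uniqueness. Both are sound, and your handling of the \textbf{\em Eliminate} edge case via ({\bf \ref{routingRules5M}.177}) is a nice extra check.
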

\begin{proof} Line ({\bf \ref{routingRules3M}.110}) guarantees that any node $A$ will only accept the parcel if it has already received the sender's {\it start of transmission} broadcast corresponding to the current transmission.  In particular, this means that $A$ has received an updated blacklist (and a list of eliminated nodes) before it accepts any removals from the blacklist.  Therefore, in some transmission $\mathtt{T}$, if $A$ ever does accept the information that a node $\widehat{N}$ should be removed from the blacklist, then this information will not become out-dated until (if) $\widehat{N}$ is added to the blacklist again, which can happen at the earliest at the very end of transmission ({\bf \ref{routingRules5M}.188}).  Therefore, after receiving the information for the first time that $\widehat{N}$ should be removed, the comments on line ({\bf \ref{routingRules4M}.123}) will guarantee $A$ will not accept additional blacklist information regarding $\widehat{N}$ until the following transmission, proving the lemma.
\end{proof}
\begin{lemma} \label{completeKnowledgeOnce} For any node $\widehat{N} \in G$, after receiving the complete {\it SOT} broadcast, an honest node $N$ will transmit along each edge at most once per transmission the fact that it knows $\widehat{N}$'s complete status report.\end{lemma}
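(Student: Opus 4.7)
\textbf{The plan is to} reduce the claim to showing that $N$ adds at most one copy of the parcel $(N, \widehat{N}, \mathtt{T}')$ to its broadcast buffer $BB$ during any single transmission $\mathtt{T}$, where $\mathtt{T}'$ is the transmission index recorded in the SOT parcel $(\widehat{N}, \mathtt{T}', \mathtt{T})$ that caused $N$ to place $\widehat{N}$ on its blacklist upon receipt of the SOT broadcast. Once this reduction is established, the edge-marking convention stated in the comment on line ({\bf \ref{routingRules4M}.123})---which associates with each parcel in $BB$ a record of the $n-1$ edges across which it has already traversed, and which is reset only when the parcel leaves $BB$---ensures that any such parcel is sent across each adjacent edge at most once during its stay in $BB$, which is exactly what the lemma asserts.

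\textbf{First I would} isolate the only place in the pseudo-code where $N$ can add $(N, \widehat{N}, \mathtt{T}')$ to $BB$, namely line ({\bf \ref{routingRules4M}.155}), which is nested inside the conditionals on lines ({\bf \ref{routingRules4M}.152})--({\bf \ref{routingRules4M}.153}). Line ({\bf \ref{routingRules4M}.152}) requires that $(\widehat{N}, \mathtt{T}')$ currently lies on $N$'s blacklist, and the inner condition on line ({\bf \ref{routingRules4M}.155}) fires only when an incoming status-report parcel $bp$ \emph{completes} $N$'s knowledge of $\widehat{N}$'s status report for $\mathtt{T}'$. By Statement 4 of Lemma \ref{sigBuffersAreCorrect}, once $(N, \widehat{N}, \mathtt{T}')$ has been placed in $BB$ it remains there at least until $(\widehat{N}, \mathtt{T}')$ is removed from $N$'s blacklist; combined with the duplicate-prevention rule in the comment on line ({\bf \ref{routingRules4M}.123}), this already rules out a second addition so long as $(\widehat{N}, \mathtt{T}')$ stays on $BL$.

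\textbf{Next I would} address the scenario in which $(\widehat{N}, \mathtt{T}')$ actually leaves $N$'s blacklist partway through $\mathtt{T}$, and argue that $N$ still never re-adds the parcel. An internal node's blacklist entries for the current transmission originate solely from SOT parcels of the form $(\widehat{N}, \mathtt{T}', \mathtt{T})$ installed on line ({\bf \ref{routingRules4M}.138}); no internal mechanism adds entries to $BL$ in the middle of a transmission, since new blacklist additions on the sender side occur only at the very end of a transmission on line ({\bf \ref{routingRules5M}.188}). Consequently, once $(\widehat{N}, \mathtt{T}')$ leaves $BL$---whether through a ``remove from blacklist'' parcel processed on line ({\bf \ref{routingRules4M}.148}) or through the $DB$-clearing accompanying an elimination on line ({\bf \ref{routingRules4M}.134})---it does not return for the remainder of $\mathtt{T}$, so the guard on line ({\bf \ref{routingRules4M}.152}) fails at every subsequent round and line ({\bf \ref{routingRules4M}.155}) is never reached a second time.

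\textbf{The main obstacle} is the elimination case of line ({\bf \ref{routingRules4M}.134}), which clears $DB$ (hence $BL$) but deliberately preserves the stored SOT parcels in $BB$: one must verify that the surviving SOT parcel $(\widehat{N}, \mathtt{T}', \mathtt{T})$ does not trigger a second pass through lines ({\bf \ref{routingRules4M}.137})--({\bf \ref{routingRules4M}.138}) that would silently re-install $(\widehat{N}, \mathtt{T}')$ onto $BL$. This is ruled out because lines ({\bf \ref{routingRules4M}.137})--({\bf \ref{routingRules4M}.138}) fire only when the parcel is newly received on line ({\bf \ref{routingRules4M}.110}); the duplicate-prevention rule on line ({\bf \ref{routingRules4M}.123}) together with the validity/freshness check in the hypotheses of line ({\bf \ref{routingRules4M}.110}) guarantees that an SOT parcel already sitting in $BB$ is not reprocessed as if freshly arrived, closing the loop.
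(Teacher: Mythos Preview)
Your proof is correct and follows essentially the same approach as the paper's: both rely on the edge-marking convention of line ({\bf \ref{routingRules4M}.123}) to guarantee once-per-edge transmission during a parcel's stay in $BB$, and both argue that line ({\bf \ref{routingRules4M}.155}) cannot be reached again once $\widehat{N}$ has left $N$'s blacklist (since the guard on line ({\bf \ref{routingRules4M}.152}) then fails and nothing re-inserts $\widehat{N}$ onto $BL$ mid-transmission). The paper frames this as a deletion analysis---the parcel can be removed only via line ({\bf \ref{routingRules4M}.149}), after which line ({\bf \ref{routingRules4M}.155}) is unreachable---while you frame it as an addition analysis (at most one addition via line ({\bf \ref{routingRules4M}.155})); these are two sides of the same coin.

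One remark: your ``main obstacle'' concerning line ({\bf \ref{routingRules4M}.134}) is a non-issue under the lemma's hypothesis. That line is triggered only by an eliminated-node SOT parcel of type 200b, and by the ordered-receipt requirement on line ({\bf \ref{routingRules3M}.110}) together with the priorities on ({\bf \ref{routingRules3M}.115}), all 200b parcels are processed \emph{before} the blacklist parcels of type 200d---hence before the complete SOT has been received. Since the lemma explicitly assumes ``after receiving the complete SOT broadcast,'' line ({\bf \ref{routingRules4M}.134}) cannot fire in the relevant window, so no re-installation argument is needed. (The paper makes the analogous observation for line ({\bf \ref{routingRules4M}.139}) and simply does not list line 134.) Your handling of this case is not wrong, just unnecessary.
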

\begin{proof} Each parcel stored in $N$'s broadcast buffer $BB$ is accompanied by a list of which edges the parcel has been successfully transmitted across (see comments on line ({\bf \ref{routingRules4M}.123})).  Therefore, as long as the parcel is not deleted from the broadcast buffer, line ({\bf \ref{routingRules3M}.115}) guarantees that each parcel of broadcast information will only pass along each edge once, as required.  Therefore, it remains to prove the lemma in the case that the relevant broadcast parcel is deleted at some point in a transmission.  Fix a transmission $\mathtt{T}$ and an arbitrary $\widehat{N} \in G$.  Since broadcast parcels of the relevant type (i.e.\ that $N$ has $\widehat{N}$'s complete status report) are only removed on ({\bf \ref{routingRules4M}.139}) and ({\bf \ref{routingRules4M}.149}), we need only consider the case that ({\bf \ref{routingRules4M}.149}) is reached in transmission $\mathtt{T}$ (the former line can only be reached as part of the {\it SOT} broadcast, and therefore lies outside the hypotheses of the lemma).  In particular, we will show that if ({\bf \ref{routingRules4M}.149}) deletes from $N$'s broadcast buffer the parcel indicating that $N$ knows $\widehat{N}$'s complete status report, then $N$ will never again add a parcel of this form to its broadcast buffer (as on ({\bf \ref{routingRules4M}.155})) for the remainder of $\mathtt{T}$.  But this is immediate, since if $N$ removes this parcel from $BB$ on ({\bf \ref{routingRules4M}.149}) of $\mathtt{T}$, then $\widehat{N}$ must have been removed from the blacklist (see ({\bf \ref{routingRules4M}.147})), and since $\widehat{N}$ cannot be re-added to the blacklist until the end of $\mathtt{T}$ ({\bf \ref{routingRules5M}.188}), line ({\bf \ref{routingRules4M}.152}) (of $N$'s code, with the $\widehat{N}$ that appears there equal to the $\widehat{N}$ used in the present notation) cannot be satisfied for the remainder of $\mathtt{T}$, and hence ({\bf \ref{routingRules4M}.155}) cannot be reached.  This proves that once the parcel is deleted, it cannot be later added in the same transmission, proving the lemma.
\end{proof}
\section{Conclusion and Open Problems}
\indent \indent In this paper, we have described a protocol that is secure simultaneously against conforming node-controlling and edge-scheduling adversaries.  Our results are of a theoretical nature, with rigorous proofs of correctness and guarantees of performance.  Surprisingly, our protocol shows that the additional protection against the node-controlling adversary, on top of protection against the edge-scheduling adversary, can be achieved without any additional asymptotic cost in terms of throughput.

While our results do provide a significant step in the search for protocols that work in a dynamic setting (edge-failures controlled by the edge-scheduling adversary) where some of the nodes are susceptible to corruption (by a node-controlling adversary), there remain important open questions.  The original Slide protocol\footnote{In \cite{KO}, it was shown how to modify the Slide protocol so that it only requires $O(n \log n)$ memory per internal node.  We did not explore in this paper if and/or how their techniques could be applied to our protocol to similarly reduce it by a factor of $n$.} requires each internal node to have buffers of size $O(n^2 \log n)$, while ours requires $O(n^4 \log n)$, though this can be slightly improved with additional assumptions\footnote{If we are given an a-priori bound that a path-length of any conforming path is at most $L$, the $O(n^4\log n)$ can be somewhat reduced to $O(Ln^3\log n)$.}. In practice, the extra factor of $n^2$ may make our protocol infeasible for implementation, even for overlay networks. While the need for signatures inherently force an increase in memory per node in our protocol verses the original Slide protocol, this is not what contributes to the extra $O(n^2)$ factor.  Rather, the only reason we need the extra memory is to handle the third kind of malicious behavior, which roughly corresponds to the mixed adversarial strategy of a corrupt node replacing a valid packet with an old packet that the node has duplicated.  Recall that in order to detect this, for {\it every} packet a node sees and for every neighbor, a node must keep a (signed) record of how many times this packet has traversed the adjacent edge (the $O(n^3)$ packets per codeword and $O(n)$ neighbors per node yield the $O(n^4)$ bound on memory).  Therefore, one open problem is finding a less memory-intensive way to handle this type of adversarial behavior.

Our model also makes additional assumptions that would be interesting to relax.  In particular, it remains an open problem to find a protocol that provides efficient routing against a node-controlling and edge-scheduling adversary in a network that is fully {\it asynchronous} (without the use of timing assumptions, which can be used to replace full synchrony in our solution) and/or does not restrict the adversaries to be {\it conforming}. As mentioned in the Introduction, if the adversary is not conforming, then he can simply permanently disconnect the sender and receiver, disallowing any possible progress.  Therefore, results in this direction would have to first define some notion of {\it connectedness} between sender and receiver, and then state throughput efficiency results in terms of this definition.
\ignore{
\section{Comments, Concerns, and Questions}
TO DO:
\begin{enumerate}\setlength{\itemsep}{1pt} \setlength{\parskip}{0pt} \setlength{\parsep}{0pt}
\item Assume $L$ is an upper-bound on the distance of any node to $S$.  This will accommodate Yair's comment about short path lengths despite large networks.  State results in terms of $L$ instead of $n$, and simply note that $L \leq n$.

\item Prove lower-bounds: i.e.\ for non-ad protocol, prove that linear efficiency is best possible (worst-case-scenario) bound. Explain why broadcasting is bad

\item For Error-correction sources, see oded goldreich's book
\item Since some of the proofs have itemized lists, the QED symbol is off.  modify these to manually put the QED symbol on the same line as the last sentence of the list.  Also change symbol to be a stackrel, indicating label of the proof it is concluding.

\item Specify the "game" more precisely (when discussing scheduling ad).  also, mention that EA and MA can collude/be same person.

\item Include speed-ups (the two ways of starting reg. phase that were removed- $R$ can decode, and a node has two copies of the same packet or anti-packet).  Also include maybe Rafi's paper of guesstimating neighbor's buffer heights, so that we can remove stage 1 of the Routing Phase.
\end{enumerate}
}

\begin{thebibliography}{100} \setlength{\itemsep}{0pt}
\begin{small}
\vspace*{.5mm}

\bibitem{AG} Y. Afek, E. Gafni ``End-to-End Communication in Unreliable Networks.'' {\it PODC, pp.}. 1988.

\bibitem{Slide} Y. Afek, B. Awebuch, E. Gafni, Y. Mansour, A. Rosen, and N. Shavit. ``{\em Slide}-- The Key to Polynomial End-to-End
Communication.'' {\it Journal of Algorithms 22, pp. 158-186.} 1997.

\bibitem{AGF} Y. Afek, E. Gafni, and A. Ros\'{e}n.  ``The Slide Mechanism With Applications In Dynamic Networks.'' {\it Proc. of the 11th ACM Symp. on Principles of Distributed Computing, pp. 35-46.}  1992.

\bibitem{AKOR} W. Aiello, E. Kushilevitz, R. Ostrovsky, and A. Ros\'{e}n.  ``Adaptive Packet Routing For Bursty Adversarial Traffic.''
{\it J. Comput. Syst. Sci. 60(3): 482-509.} 2000.

\bibitem{AHNR} B. Awerbuch, D. Holmer, C. Nina-Rotaru, and H. Rubens. ``A Secure Routing
Protocol Resilient to Byzantine Failures.'' {\it WiSE, pp. 21-30.} 2002.
ACM, 2002.

\bibitem{AL} B. Awerbuch and T. Leighton.  ``Improved Approximation Algorithms for the Multi-Commodity Flow Problem and Local Competitive Routing in Dynamic Networks.'' {\it STOC.} 1994.

\bibitem{AMS} B. Awerbuch, Y Mansour, N Shavit ``End-to-End Communication With Polynomial Overhead.'' {\it Proc. of the 30th IEEE Symp. on Foundations of Computer Science, FOCS.} 1989.

\bibitem{BGX} B. Barak, S. Goldberg, and D. Xiao.  ``Protocols and Lower Bounds for Failure Localization in the Internet.'' {\it Proc.\ of Advances in Cryptology- 27$^{th}$ EUROCRYPT 2008, Springer LNCS 4965, pp. 341-360.} 2008.
    
\bibitem{SIG1} S. Even, O. Goldreich, and S. Micali.  ``On-Line/Off-Line Digital Signatures.''  {\it J. Cryptology 9(1): pp. 35-67.}  1996.

\bibitem{Gol} O. Goldreich.  ``The Foundations of Cryptography, Basic Applications.''  Cambridge University Press.  2004.

\bibitem{GXBR} S. Goldberg, D. Xiao, E. Tromer, B. Barak, and J. Rexford.  ``Path-Quality Monitoring in the Presence of Adversaries.'' {\it ACM SIGMETRICS Vol. 36, pp. 193-204.}  June 2008.

\bibitem{KO} E. Kushilevitz, R. Ostrovsky, and A. Ros\'{e}n.  ``Log-Space Polynomial End-to-End CommunicIation.'' {\it SIAM Journal of Computing 27(6): 1531-1549.} 1998.

\bibitem{MPSW} S. Micali, C. Peikert, M. Sudan, and D. Wilson.  ``Optimal Error Correction Against Computationally Bounded Noise.''  {\it TCC LNCS 3378, pp. 1-16.}  2005.

\bibitem{RS} S. Rajagopalan and L. Schulman ``A Coding Theorem for Distributed Computation.'' {\it Proc. 26th STOC, pp. 790-799}. 1994.

\bibitem{Shannon}  C. E. Shannon (Jan. 1949). ``Communication in the presence of noise''. Proc. Institute of Radio Engineers vol. 37 (1): pp. 10-21.

\bibitem{SIG2} A. Shamir and Y.  Tauman. ``Improved Online/Offline Signature Schemes.''  {\it CRYPTO 2001, pp. 355-367.}  2001.
    
\bibitem{S} L. Schulman. ``Coding for interactive communication.'' {\it Special issue on Codes and Complexity of the IEEE Transactions on Information Theory 42(6), Part I: pp.1745-1756.} 1996.  (Preliminary versions: Proc. 33rd FOCS 724-733, 1992 and Proc. 25th STOC 747-756, 1993).
\end{small}
\end{thebibliography}
\end{document}